\begin{document}

\title{\bfseries {\sc{The graph alignment problem: fundamental limits and efficient algorithms}}}

\author{Luca Ganassali \\[6pt]
		\emph{Inria, Département d'Informatique de l'ENS}\\
		\emph{PSL Research University, Paris, France}\\[2.5cm]
		PhD thesis under the supervision of Laurent Massoulié and Marc Lelarge \\[2.5cm]}
		\date{September 20, 2022}
\maketitle

\thispagestyle{empty}
\paragraph{\sc Abstract}
This thesis focuses on statistical inference in graphs (or matrices) in high dimension and studies the graph alignment problem which aims to recover a hidden underlying matching between the nodes of two correlated random graphs. 

Similarly to many other inference problems in planted models, we are interested in understanding the fundamental information-theoretical limits as well as the computational  hardness of graph alignment.

First, we study the Gaussian setting, when the graphs are complete and the signal lies on correlated Gaussian edges weights. We prove that the exact recovery task exhibits a sharp information-theoretic threshold (and characterize it), and study a simple and natural spectral method for recovery, \alg{EIG1}, which consists in aligning the leading eigenvectors of the adjacency matrices of the two graphs. 

While most of the recent work on the subject was dedicated to recovering the hidden signal in dense graphs, we next explore graph alignment in the sparse regime, where the mean degree of the nodes are constant, not scaling with the graph size. In this particularly challenging setting, for sparse Erd\H{o}s-R\'enyi graphs, only a fraction of the nodes can be correctly matched by any algorithm. Our second contribution is an information-theoretical result which characterizes a regime where even this partial alignment is impossible, and gives upper bounds on the reachable overlap between any estimator and the true planted matching.

We next propose an algorithm that performs partial alignment, \alg{NTMA}, which is based on a measure of similarity -- called the tree matching weight -- between tree-like neighborhoods of the nodes in the graphs.

Under this local approach in the sparse regime, we are brought to study a related problem: correlation detection in random unlabeled trees. This hypothesis testing problem consists in testing whether two trees are correlated or independent. The tree matching weight yields a first method for this question as well; another contribution is to study an optimal test based on the likelihood ratio. In a correlated Galton-Watson model, which is well-known to be the local approximation of the sparse Erd\H{o}s-R\'enyi model, we characterize the regimes of performance of this test.

Finally, we come back to graph alignment and propose a message-passing algorithm, \alg{MPAlign}, naturally inspired by the study of the related problem on trees. This message-passing algorithm is analyzed and provably recovers a fraction of the planted signal in some regimes of parameters. 

\vspace{1.5cm}

\textbf{Keywords:} statistical inference, random graphs, graph alignment, correlation detection in trees, message-passing algorithms, machine learning, probability.

\newpage
\thispagestyle{empty}
\section*{Acknowledgments}
This work was partially supported by the French government under management of Agence Nationale de la Recherche as part of the “Investissements d’avenir” program, reference ANR19-P3IA-0001 (PRAIRIE 3IA Institute).

\newpage
%

\thispagestyle{plain}
\section*{Contributions and outline}
\subsubsection{Chapter 1.} This open chapter is a general introduction to the dissertation. We start with stating general frameworks for inference on graphs, and we then go over basic concepts of random graph theory, with general results that will be useful throughout the thesis. We give a general overview on inference problems in random graphs, with several iconic examples, and introduce the phase transition phenomena arising in the high-dimensional regime. We next introduce and motivate the graph alignment problem, discuss general elementary results and give a short survey of prior techniques, methods and theoretical work on the subject. We also introduce the problem of detecting correlation in trees.

\subsubsection{Chapter 2.} This chapter, based on the paper \cite{Ganassali21MSML} published at \emph{MSML 2021}, investigates information-theoretic limits for exact alignment in the Gaussian setting, when the graphs are complete and the signal lies on correlated Gaussian edges weights. This model is often viewed as an interesting playground for graph alignment. We prove that the exact recovery task exhibits a sharp fundamental threshold (and characterize it).

\subsubsection{Chapter 3.} We continue the exploration of the Gaussian setting in this chapter, studying a simple and natural spectral method for recovery which consists in aligning the leading eigenvectors of the adjacency matrices of the two graphs. We give theoretical guarantees for this algorithm, showing a zero-one law property for this method to work, in terms of the signal-to-noise ratio. This chapter, based on the paper \cite{GLM19} published in \emph{Advances in Probability}.

\subsubsection{Chapter 4.} We focus in this chapter on the study of \ER graph alignment in the sparse regime, where the mean degree of the graphs are constant, not scaling with the number of nodes. Based on the paper \cite{ganassali2021impossibility} published at \emph{COLT 2021}, we prove an information-theoretical result characterizing a regime where even partial alignment is impossible, and giving upper bounds on the reachable overlap between any estimator and the planted matching.

\subsubsection{Chapter 5.} This chapter investigates an algorithm for sparse graph alignment, which relies on a measure of similarity -- called the tree matching weight -- between tree-like neighborhoods of the nodes in the graphs. We give theoretical guarantees for this method to work in the \ER model, and propose along the way a test to decide whether two trees are correlated or independent. This chapter is based on the paper \cite{Ganassali20a}, published at \emph{COLT 2020}.

\subsubsection{Chapter 6.} We follow the previous local approach in the sparse regime, and get interested in a related problem: correlation detection in random unlabeled trees. For this hypothesis testing problem, we study an optimal test based on the likelihood ratio. In a correlated Galton-Watson model, which is well-known to be the local approximation of the sparse Erd\H{o}s-R\'enyi model, we characterize regimes of performance of this test. Then, we come back to graph alignment and propose a message-passing algorithm naturally inspired by the study of the related problem on trees. This message-passing algorithm is analyzed and provably recovers a fraction of the planted signal in some regimes of parameters.  The chapter is based on the paper \cite{GMLTrees2021journal}, which short version is published at \emph{ITCS 2021}.

\subsubsection{Chapter 7 (Addendum).} We added a last chapter at the end of the dissertation, presenting recent results for correlation detection in trees. These results are significantly improving on previous work and give a general understanding of the fundamental limits of the problem, as well as some interesting perspectives discussed afterwards in the conclusion.
 
\newpage

\thispagestyle{empty}
\tableofcontents

\newpage

\thispagestyle{plain}
\section*{Notations}
\addcontentsline{toc}{chapter}{Notations}
{\small 
\begin{flushleft}
	
\begin{tabular}{r l} 
& \textit{Basics} \vspace*{0.2cm}\\
$i,j,k,\ell,m...$ & non negative integers, most of the time\\
$[m]$ & set $\left\{1, \ldots, m \right\}$ of integers from $1$ to $m$\\
$\card{\cX}$ & cardinal of a finite set $\cX$ \\
$\cS_{m}$ & set of permutations on $[m]$ (we often identify $\cS_{k}$ to $\cS_{\cX}$ whenever $\card{\cX}=k$)\\ \\
$\cS(A,B)$ & set of injective mappings between finite sets $A$ and $B$ \\
$\cS(k,\ell)$ & set of injective mappings from $[k]$ to $[\ell]$ \\
$\pi,\sigma$ & permutations, most of the time\\
$\Pi,\Sigma$ & permutation matrices, most of the time\\
& \\
$O,o,\Omega, \omega, \Theta, \sim$ & standard Landau notations\\
$\one_C$ & indicator function at $C$; $\one_C=1$ if $C$ is satisfied, $0$ otherwise\\

& \vspace*{0.1cm}\\
& \textit{Graphs} \vspace*{0.2cm}\\
$G=(V,E)$ & a graph $G$ with vertex set $V$ and edge set $E$ \\
$\conn$ or $\connin{G}$ & connectivity in undirected graph $G$ (eluded if no ambiguity)\\
$A(G)$ & adjacency matrix of graph $G$, sometimes $A$ if no amibiguity\\
$n$ & number of nodes of a graph, most of the time \\
$u,v$ & nodes of a graph, most of the time \\
& \\
$T$ & a tree $T$ \\
$d$ & depth of a tree, most of the time \\
$\cX_d$ & set of unlabeled finite trees of depth at most $d$ \\
$\dT_k$ & set of unlabeled trees of size $k$\\

& \vspace*{0.1cm}\\
& \textit{Probability} \vspace*{0.2cm}\\
& \emph{General convention: sometimes lowercase characters are used to distinguish} \\
& \emph{deterministic objects from random variables (uppercase).} \\
& \\
$\sim$ & is distributed according to (sometimes also denotes asymptotic equivalence) \\
$\eqd$ & equality in distribution \\
$\Ber(p)$ & Bernoulli distribution with parameter $p \in [0,1]$\\
$\Bin(n,p)$ & Binomial distribution with parameters $n \geq 0$, $p \in [0,1]$\\
$\Poi(\lambda)$ & Poisson distribution with parameter $\lambda >0$\\
$\Exp(\mu)$ & exponential distribution with parameter $\mu >0$\\
$\cN(\mu,v)$ & Gaussian distribution with mean $\mu$ and variance $v$\\
& \\
$\GOE$ & Gaussian Orthogonal Ensemble \\
$\Wig(n,\xi)$, $\Wig'(n,\rho)$ & Correlated Gaussian Wigner model (or a variant thereof) with size $n \times n$, \\
& noise parameter $\xi>0$ or correlation $\rho \in [0,1]$.\\
$\G(n,p)$ & \ER model with $n$ nodes and edge probability $p$\\
$\G(n,q,s)$ & correlated \ER model with $n$ nodes, edge probability $q$ and correlation $s$\\
$\SBM(n,\alpha,P)$ & stochastic block model with $n$ nodes, community distribution $\alpha$\\
&  and edge probabilities $P$\\
$\GWmu_d$ & Galton-Watson model with offspring distribution $\Poi(\mu)$ up to depth $d$\\
$\dPls_d$ & model of correlated Galton-Watson trees \\
$\dPl_d$ &  model of independent Galton-Watson trees 
\end{tabular}

\end{flushleft}
}

\newpage

\chapter{Introduction}\label{chapter:intro}
\section{Context}
A myriad of datasets found in real life can be represented as graphs, which are nowadays becoming more and more useful to model complex systems. The network of Facebook users can be visualized in a graph where each edge encodes a friendship relationship, Netflix can be viewed as a graph between users and movies, where each edge carries a rating or browsing statistics. Many examples of the overwhelming presence of graphs can be found across applications in a variety of different fields: modelling interaction between proteins in an organism, representing cities or destinations for route optimization, extracting a mesh from a 3D object, analyzing the spread of epidemics or fake news on Twitter, finding similar patterns in data, etc. 

\begin{figure}[h]
	\centering
	\medskip
	\begin{subfigure}[t]{0.49\linewidth}
		\centering
		\includegraphics[scale=0.3]{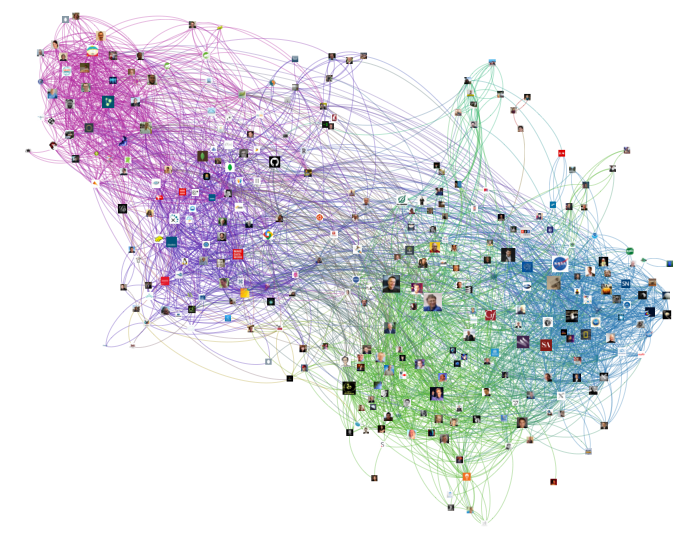}
		\caption{{\footnotesize a Twitter network with four communities}}
	\end{subfigure}
	\hfill
	\begin{subfigure}[t]{0.49\linewidth}
		\centering
		\includegraphics[scale=0.2]{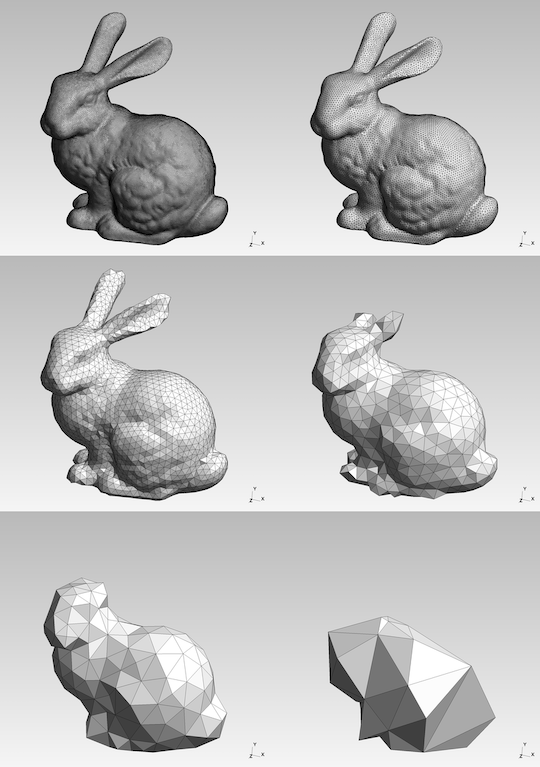}
		\caption{{\footnotesize some 3D finite element meshes}}
	\end{subfigure}
	
	\begin{subfigure}[t]{0.49\linewidth}
		\centering
		\vspace{0.5cm}
		\includegraphics[scale=0.24]{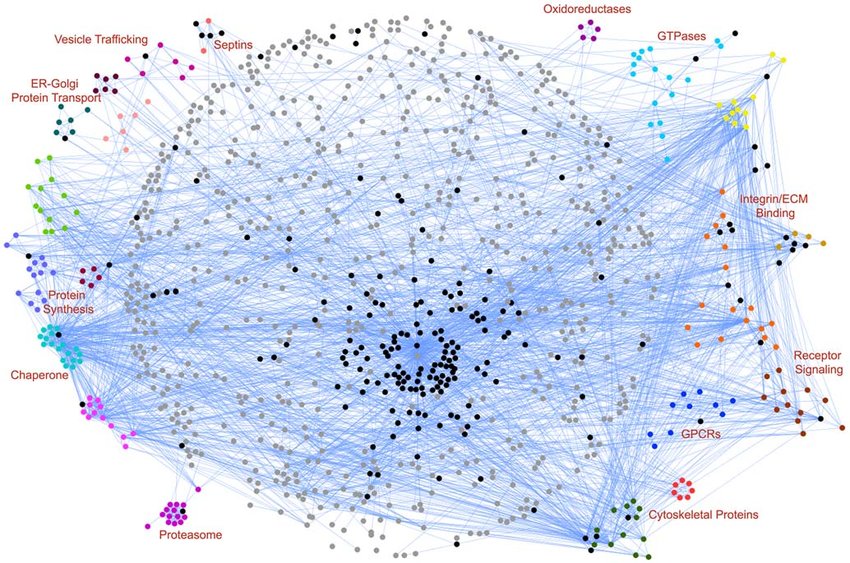}
		\caption{{\footnotesize a graph showing the platelet Protein-Protein Interaction network}}
	\end{subfigure}
	\caption{Some graphs\protect\footnotemark\, in real life.}
	\label{fig:intro:real_life_graphs} 
\end{figure} 
\footnotetext{sources: http://allthingsgraphed.com/2014/11/02/twitter-friends-network/ for $(a)$, https://gmsh.info/ for $(b)$, \cite{OureshiPPI09} for $(c)$.}

This thesis focuses on statistical inference in graphs, which aims to extract relevant information from the observation of graph-shaped data. We are interested in understanding the fundamental aspects of these problems, as well as designing and analyzing algorithms for the considered tasks, seeking to characterize the regimes in which they may suceed. 

\subsubsection{Statistical inference}
In its broadest sense, statistical inference aims to draw meaningful conclusions based upon the observation of data. Suppose that we are given samples (in the form of measurements in $\dR^d$, graphs, matrices, etc.) assumed to be drawn according to some probability distribution. The statistician designs methods to recover some information about this probability distribution, e.g. testing hypotheses or deriving estimates for some parameter $\theta \in \Theta$ of the distribution. The usual framework is as follows:

$$ \mbox{parameter } \theta \in \Theta \longrightarrow \mbox{data } Y \longrightarrow \mbox{estimator } \hat{\theta}$$ 

The main (informal) questions that arise in this setting are: ‘how well can we discriminate between different models/hypotheses?', ‘can we efficiently estimate the parameter $\theta$?'

This thesis focuses on statistical inference in random graphs -- or random matrices -- in \emph{high dimension}, when both the intrinsic dimension of the data and that of the parameter are large. This common assumption is particularly relevant for two main reasons: first, it fits well with real data, datasets being nowadays larger and larger; second, results in this asymptotic regime exhibit interesting and unexpected phenomena, see Section \ref{intro:section:phases}. 

\subsubsection{Planted models}
Other inference problems fall into the intuitive, conventional, and slightly different \emph{planted framework}, where the observation is the result of a perturbation of some underlying \emph{signal} of interest.
 
In this planted framework, the random model -- sometimes also referred to as  \emph{teacher-student model} in the statistical physics community -- is as follows: some signal $X$ is drawn according to some prior (we hence work in a \emph{bayesian setting}), and 'planted' in the data. Given the signal $X$, the observation $Y$ is drawn according to some conditional distribution $p(\cdot \cond X)$. The framework is as follows:

$$ \mbox{signal } X \sim p_X \longrightarrow \mbox{observation } Y \sim p(\cdot | X)  \longrightarrow \mbox{estimator } \hat{X}$$ 

We refer to Section \ref{intro:subsection:zoo} for a closer look at planted models in the context of inference in random graphs. 

\subsubsection{Detection and reconstruction tasks}
The two main (informal) questions stated earlier (testing hypotheses and estimating some parameter) can now be reformulated, and their counterparts specifically belong to the planted framework.
\begin{itemize}
	\setlength\itemsep{0cm}
 	\item[$(i)$] Can we detect the presence of a planted signal in the data?
 	\item [$(ii)$] If yes, are we also able to recover the signal?
\end{itemize} 
Let us give a succinct mathematical formulation of questions $(i)$ and $(ii)$ more formally. Let us define $n$ to be a generic dimension parameter; working in the high-dimensional regime here corresponds to making $n$ tend to infinity.

Question $(i)$ defines the \emph{detection task}. Detecting the presence of signal exactly consists in discriminating a model with no planted signal -- the \emph{null model} -- from the planted model, based on the observation of $Y$. In other words, detection task corresponds to the following hypothesis testing 
\begin{equation*}
	\cH_0 := `` \mbox{$Y$ is drawn from the null model} " \; \mbox{versus} \; \cH_1 := `` \mbox{$Y$ is drawn from the planted model} " 
\end{equation*}
Given a detection task, we are thus interested in designing a test $\cT$ (i.e., a measurable function of $Y$ taking values in $\{0,1\}$) for which we are able to give guarantees with probability tending to $1$ asymptotically in $n$.

Question $(ii)$ defines the \emph{reconstruction} -- or \emph{recovery} -- \emph{task}. Recovering the signal now corresponds to designing an estimator $\hat{X}$ (i.e., a measurable function of $Y$) for which we are able to give guarantees (e.g. prove that $\hat{X}$ is somehow close to $X$) with probability tending to $1$ asymptotically in $n$.\\

Let us pause for a moment after these broad definitions. Intuitively, the detection task is in general easier than the reconstruction task -- even though some counter-examples can be found in \cite{Banks17} -- and impossibility of detection almost always implies impossibility of reconstruction. Conversely, non-equivalence between detection and reconstruction may also seem rather counter-intuitive, when considering that the usual strategy for detecting some signal consists precisely in exhibiting the latter. Let us give hereafter an easy example in which these two problems are indeed definitely different.

\subsubsection{A toy example (1/2): finding hay in a haystack}


Consider a bit sequence $(\xi_1, \ldots, \xi_n)$ of length $n$ made of i.i.d. entries taking the value $0$ or $1$ with equal probability. Let $k>0$ that may depend on $n$. 

Under the planted model, the observation $Y = (Y_1, \ldots, Y_n)$ is generated as follows: we choose $k$ positions $1 \leq i_1 < \ldots < i_k \leq n$ uniformly at random, and set for all $i \in \{1,\ldots,n\}$, $Y_i = 1$ if $i \in \{i_1, \ldots, i_k\}$, and $Y_i = \xi_i$ otherwise. We denote $Y \sim \dP_{1,k}$.

Under the null model, we simply set  $Y_i = \xi_i$ for all $i \in \{1,\ldots,n\}$, and denote $Y \sim \dP_{0}$.

\begin{figure}[H]\label{fig:intro:haystack}
\vspace*{0.4em}
\centering
\begin{tabular}{c||cccccccccccccccccc}
%
		\hline
		$\xi$ & 1 & 0 & 0 & 0 & 1 & 1 & 0 & 1 & 0 & 1 & 1 & 1 & 1 & 0 & 1 & 0 & 0 & 0  \\ \hline
		$k$ positions & \textcolor{red}{$\times$} &  & \textcolor{red}{$\times$} & \textcolor{red}{$\times$} &  &  &  &  &  & \textcolor{red}{$\times$} & &  & \textcolor{red}{$\times$} & \textcolor{red}{$\times$} &  &  & \textcolor{red}{$\times$} &    \\ \hline
		$Y$ & {$1$} & 0 & {$1$} & {$1$} & 1 & 1 & 0 & 1 & 0 & {$1$} & 1 & 1 & {$1$} & {$1$} & 1 & 0 & {$1$} & 0   \\ 
		\hline
\end{tabular}

\caption{A realization with $n=18$, $k=7$. After transformation, in the sequence $Y$, the presence of a planted signal is highly probable; but where are the $k$ extra ones?}
\end{figure}

\textit{Detection.} In this simple example the planted signal consists in extra ones somewhere in the data. It then easy to see check that an optimal test for detection is simply based on counting occurrences of ones. Define
$N_1(Y) := \card{\set{i: Y_i=1}}.$ Standard concentration inequalities (e.g. Hoeffding's inequality) straightaway give that with high probability -- that is, with probability tending to $1$ when $n \to \infty$:

$$
N_1(Y) = \left\{
\begin{array}{ll}
n/2 + \Theta(\sqrt{n}) & \mbox{under the null model } \dP_{0}, \\
(n+k)/2 + \Theta(\sqrt{n-k}) & \mbox{under the planted model } \dP_{1,k}.
\end{array}
\right.
$$

Comparing these typical values shows that as soon as $k = \omega (\sqrt{n})$, extra ones can be detected, e.g. with a test $\cT_n$ outputting $1$ if and only if $N_1(Y)$ is greater than $n/2 + k/4$. Indeed, if $k = \omega (\sqrt{n})$, we will have $\dP_0(\cT_n=0) \to 1$, $\dP_{1,k}(\cT_n=1) \to 1$. Such a test is said to achieve \emph{strong detection} (see Section \ref{intro:subsection:hypothesis_testing_trees}). On the contrary, unreachability of strong detection when $k = O(\sqrt{n})$ is established by applying the central limit theorem -- details are left to the reader.\\

\textit{Reconstruction.} We are now in a position to understand why the two tasks are very different here. Though it is rather simple to detect extra ones when $k = \omega (\sqrt{n})$, the reconstruction task would consist in recovering the exact positions of the extra ones. If one had no idea about the data, a naive -- and somehow the worst -- method would consist in choosing these $k$ positions uniformly at random among the $N_1(Y)$ possibilities. It is easy to check that the number of positions that are correctly recovered -- or, the \emph{overlap} -- with this method is of order $k^2/n$ which is almost always very small compared to $k$ even when $k = \omega (\sqrt{n})$.

A moment of thought shows that this naive method can never be outperformed. Indeed, the posterior distribution of the positions of extra ones is given by 
\begin{flalign*}
	\dP_{1,k}(i_1, \ldots, i_k \cond Y) & = \frac{1}{\dP_{1,k}(Y)} \one_{i_1 < \ldots < i_k} \one_{Y_{i_1} = \ldots = Y_{i_k} = 1} \binom{n}{k}^{-1} \left( \frac{1}{2}\right)^{n-k}.
\end{flalign*} The dependence on $i_1, \ldots, i_k$ lying only in the terms $ \one_{i_1 < \ldots < i_k}$ and $\one_{Y_{i_1} = \ldots = Y_{i_k} = 1}$, it is therefore the uniform distribution on the set of ordered lists of length $k$ among the $N_1(Y)$ positions of ones. 

In particular, if $ k = \omega(\sqrt{n})$ and $k = o(n)$, then detection is easy but reconstruction is impossible, in the sense that no method can recover more than $o(k)$ of the planted extra ones. See Section \ref{intro:section:phases} for the definition of a formalized context for these observations.

\section*{Organization of rest of the introduction}

We start in Section \ref{intro:section:inference_on_rg} with some basics of random graph theory as well as general results and famous techniques that will be useful throughout this work. We then give a general overview on inference problems in random graphs through several widely studied examples, as well as the definition of the phase transition phenomena that crop up in these problems.

We introduce in Section \ref{intro:section:ga} the graph alignment problem, which lies at the very heart of this thesis and which various aspects will be discussed in the next chapters. We give insights on the motivations, discuss general related topics and give an overview of prior techniques and methods for this problem as well as theoretical guarantees, aside from our work.

We finally describe in Section \ref{intro:section:cdt} a related problem which will be the focus of Chapters \ref{chapter:NTMA} and \ref{chapter:MPAlign}, correlation detection in random trees, which is interesting for its own sake, and has a strong connection between tree correlation detection and graph alignment.

\section{Inference on random graphs: a short tour}\label{intro:section:inference_on_rg}

In this section, we will present the general framework of inference on randoms graphs, introducing some basic concepts and notations, and describing several -- fundamental -- examples for problems of this sort.

\subsection{Basics of random graph theory}\label{intro:subsection:basics_rg}

\subsubsection{Graphs} 
A (simple) \emph{graph} $G=(V,E)$ is a discrete structure consisting in a vertex set $V$ and an edge set $E$. Elements of $V$ are called \emph{vertices}, sometimes \emph{nodes}.

In an \emph{undirected} graph, $E$ is a subset of $\binom{V}{2}$, the set of unordered pairs (of $2-$sets) of distinct elements of $V$, and an edge $e$ between nodes $u$ and $v$ is denoted by $\{u,v\}$. If the graph is \emph{oriented}, then $E$ contains ordered pairs (or $2-$tuples) of elements of $V$, and they are denoted by $(u,v)$.

All graphs considered throughout along this manuscript are finite (namely $V$ and $E$ are finite sets), and undirected, unless stated otherwise. If $u,v \in V$ are such that $\{u,v\} \in E$, we denote $u \connin{G} v$ and $u \conn v$ when there is no ambiguity on the graph, and the vertices $u$ and $v$ are said to be \emph{connected}, or \emph{neighbors} in $G$.

\subsubsection{Adjacency matrix, weighted graphs}
A graph $G=(V,E)$ with node set $V=[n]$ is often represented through its \emph{adjacency matrix} $A = A(G) \in \dR^{n \times n}$ defined as follows:

\begin{equation*}
\forall \, u,v \in [n], \; A_{u,v} = \one_{\{u,v\} \in E} \,.
\end{equation*}

An undirected \emph{weighted graph} $G=(V,E)$ is a graph with additional information on edges, namely
$$ \forall \, u,v \in [n], \; A_{u,v} = \one_{\{u,v\} \in E} W_{u,v} \, ,$$ where  variables $W_{u,v} \in \dR$ are \emph{edge weights}.

\subsubsection{The \ER model} A simple, greatly celebrated, and widely used model of random graphs is the \ER model, introduced by Paul Erd\H{o}s and Alfr\'ed R\'enyi in 1959 \cite{Erdos59}. In this model, denoted by $\G(n,p)$, the graph $G$ has node set $V=[n]$ and each pair $\{u,v\}$ for $u \neq v \in [n]$ is present in $E$ independently with probability $p$.

\begin{figure}[h]
	\centering
	\medskip
	\begin{subfigure}[t]{0.49\linewidth}
		\centering
		\includegraphics[scale=0.23]{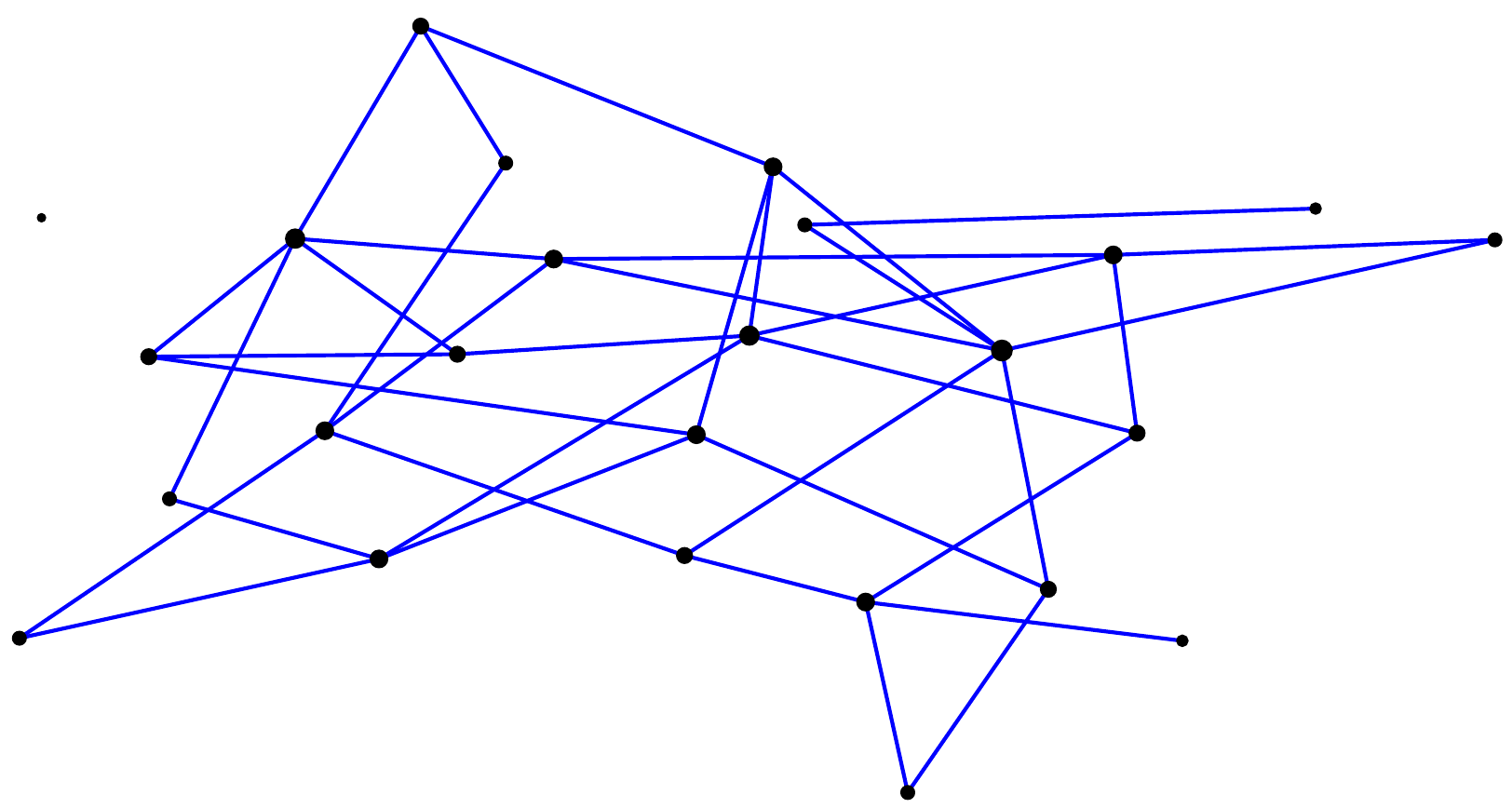}
		\caption{{\footnotesize $n=25, p = 0.14$}}
	\end{subfigure}
	\hfill
	\begin{subfigure}[t]{0.49\linewidth}
		\centering
		\includegraphics[scale=0.3]{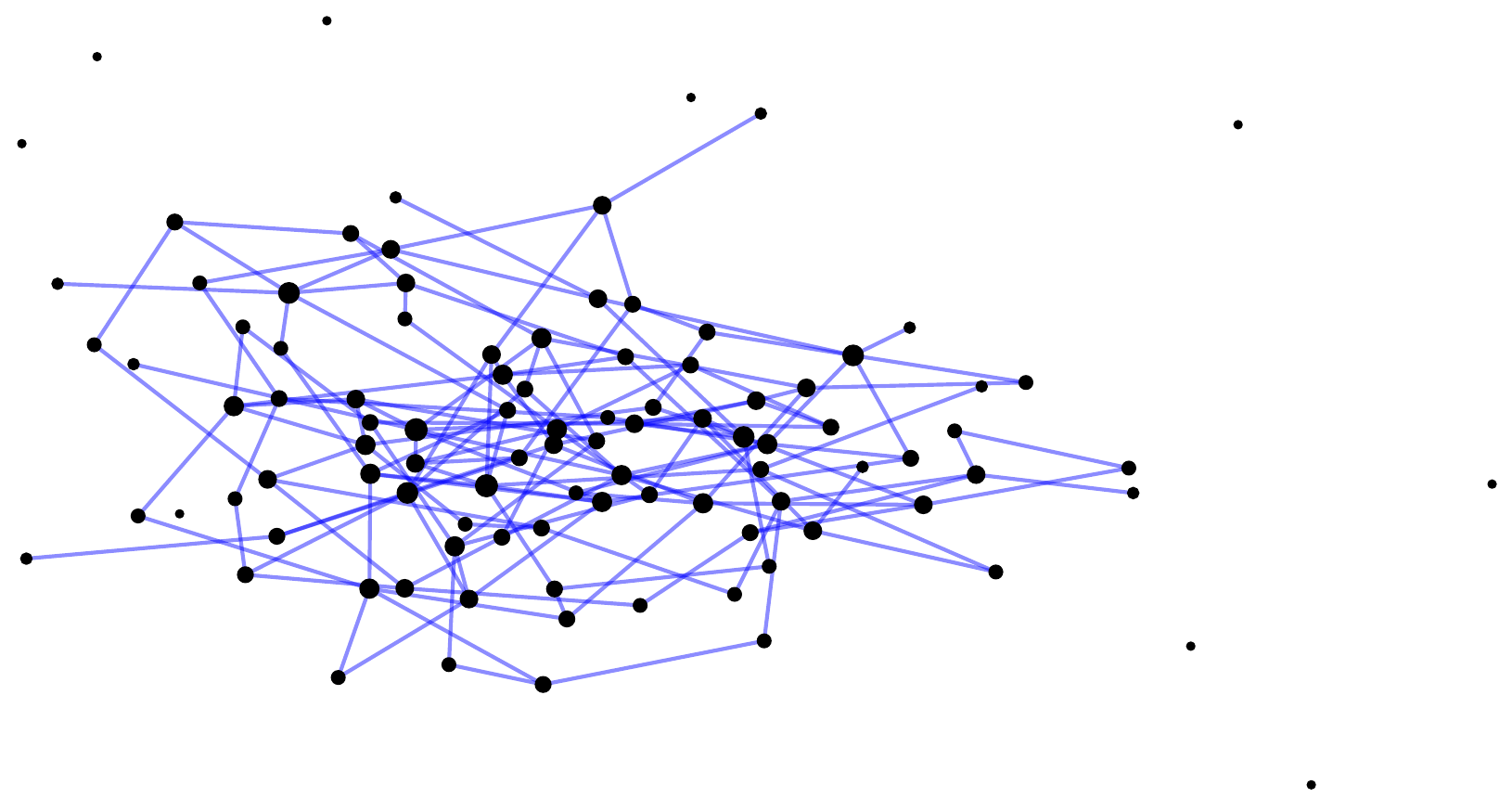}
		\caption{{\footnotesize $n = 100, p = 0.03$}}
	\end{subfigure}
	
	\caption{Some realizations of $\G(n,p)$.}
	\label{fig:intro:Gnp_samples} 
\end{figure} 

Note that the \ER model is in some sense the simplest model of random graphs one can ever think of: edges are drawn independently with the same probability, there is no particular geometry in the graph. An immediate result illustrating this absence of geometry is the following

\begin{lemma}
	Fix $0 \leq m \leq \binom{n}{2}$. Let $G \sim \G(n,p)$, conditioned to have $m$ edges. Then $G$ is uniform among all graphs with node set $[n]$ and $m$ edges.
\end{lemma}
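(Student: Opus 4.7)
The plan is to compute the conditional distribution directly, by first observing that every graph on $[n]$ with exactly $m$ edges has the same unconditional probability under $\G(n,p)$. This uses nothing more than the independence of the edge indicators.

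Concretely, let $\Omega_m$ denote the set of (labeled, simple, undirected) graphs with vertex set $[n]$ and exactly $m$ edges. Fix an arbitrary $H \in \Omega_m$. Since the $\binom{n}{2}$ potential edges of $G \sim \G(n,p)$ are independent Bernoulli$(p)$ random variables, I would write
\begin{equation*}
\P(G = H) \;=\; \prod_{\{u,v\} \in E(H)} p \,\cdot\, \prod_{\{u,v\} \notin E(H)} (1-p) \;=\; p^{m}(1-p)^{\binom{n}{2}-m}.
\end{equation*}
The right-hand side depends only on $m$, not on the particular choice of edges in $H$.

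Next, summing over $\Omega_m$ yields
\begin{equation*}
\P\bigl(|E(G)| = m\bigr) \;=\; \binom{\binom{n}{2}}{m} p^{m}(1-p)^{\binom{n}{2}-m},
\end{equation*}
which is positive under the implicit assumption that this conditioning event has positive probability (true whenever $0<p<1$; the degenerate cases $p\in\{0,1\}$ force $m$ to be $0$ or $\binom{n}{2}$, where the statement is trivial). Then, for any $H \in \Omega_m$,
\begin{equation*}
\P\bigl(G = H \,\bigm|\, |E(G)| = m\bigr) \;=\; \frac{\P(G=H)}{\P(|E(G)|=m)} \;=\; \binom{\binom{n}{2}}{m}^{-1},
\end{equation*}
which is independent of $H$. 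This is exactly the claim that the conditional distribution is uniform on $\Omega_m$.

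There is no real obstacle: the whole proof is the observation that $\P(G=H)$ factorizes into a quantity depending only on $|E(H)|$, which is the distinctive feature of the \ER model (i.e.\ the absence of geometry alluded to just before the lemma). I would just be careful to mention the positive-probability caveat so that the conditional probability is well-defined.
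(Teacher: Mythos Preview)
Your proof is correct; this is exactly the standard direct computation. The paper in fact states this lemma without proof, so there is nothing to compare against, but your argument is the natural one and your caveat about $p\in\{0,1\}$ is appropriate.
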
 

Many interesting results with high probability are known for \ER graphs, and literature investigating this model is abundant. For a general and thorough view on this very rich topic, the reader can refer to the books of Bollob\'as \cite{Bollobas2001}, Janson, Luczak and Rucinski \cite{Janson00} and Hofstadt \cite{hofstad2016}.

\subsubsection{High probability properties, first and second moment methods}
Some event $A$ depending on a size (or dimension) parameter $n$ is said to be verified \emph{with high probability (w.h.p.)} if the probability of $A$ tends to $1$ when $n \to \infty$.

We will start with merely giving one of the most elementary -- and famous -- results for the \ER model, which proof will be the occasion to introduce a basic technique, instrumental for solving many probabilistic questions in random graphs: the so-called \emph{first} and \emph{second-moment} methods (see e.g. \cite{ProbaMethod}).

\begin{lemma}[First moment method]\label{intro:lemma:first_moment_method}
Let $X$ be a non-negative, integer-valued random variable. Then
$$ \dP(X>0) \leq \dE[X]. $$
\end{lemma}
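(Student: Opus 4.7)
The plan is to observe the almost trivial pointwise inequality $\mathbf{1}_{X>0} \leq X$, which holds for every non-negative integer-valued random variable: indeed, if $X(\omega) = 0$, then both sides are $0$, while if $X(\omega) \geq 1$, the left-hand side equals $1$ and the right-hand side is at least $1$. Taking expectations of both sides and using monotonicity of the expectation immediately yields $\mathbb{P}(X > 0) = \mathbb{E}[\mathbf{1}_{X>0}] \leq \mathbb{E}[X]$.

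An equivalent formulation I would mention is that this is just Markov's inequality applied at threshold $1$: since $X$ takes values in $\mathbb{Z}_{\geq 0}$, the event $\{X > 0\}$ coincides with $\{X \geq 1\}$, and Markov gives $\mathbb{P}(X \geq 1) \leq \mathbb{E}[X]/1$. Either formulation is a one-line argument.

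There is really no obstacle here; the only subtlety worth flagging is that the integer-valued assumption is what makes $\{X > 0\} = \{X \geq 1\}$ and hence allows the threshold $1$ to be attained. For a general non-negative real-valued $X$, the inequality $\mathbb{P}(X > 0) \leq \mathbb{E}[X]$ can fail (e.g.\ $X$ constant equal to $1/2$), so this hypothesis is not cosmetic. Beyond that, the proof is immediate and the interest of the lemma lies entirely in its subsequent use to rule out the existence of certain substructures in random graphs by showing that their expected count tends to $0$.
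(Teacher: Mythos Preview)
Your proof is correct and essentially identical to the paper's: the paper simply invokes Markov's inequality at threshold $b=1$, which is exactly your second formulation, and your pointwise argument $\mathbf{1}_{X>0}\leq X$ is just the one-line proof of that special case. Your remark that the integer-valued hypothesis is what makes $\{X>0\}=\{X\geq 1\}$ is a nice observation the paper leaves implicit.
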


\begin{proof}
This is a consequence of Markov's inequality : for all $b>0$, $\dP(X \geq b) \leq \frac{\dE[X]}{b}$. Taking $b=1$ gives the desired result.
\end{proof}
In particular, in the case where $X$ depends on $n$, and $\dE\left[X\right] \to 0$ when $n \to \infty$, then Lemma \ref{intro:lemma:first_moment_method} implies that $X = 0$ with high probability.

\begin{lemma}[Second moment method\footnote{In this form, this result is known as the \emph{Paley–Zygmund inequality.}}]\label{intro:lemma:second_moment_method}
	Let $X$ be a real random variable with positive mean and finite variance. Then for all $0 \leq c \leq 1$,
	\begin{equation*}
	\dP\left(X \geq c \,{\dE\left[X\right]}\right) \geq \left(1 - c\right)^2 \frac{\dE\left[X\right]^2}{\dE\left[X^2\right]}.
	\end{equation*}
\end{lemma}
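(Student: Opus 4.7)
The plan is to bound $\dE[X]$ from above by splitting according to whether $X$ is small or large, in the spirit of what is sometimes called a ``reverse Markov'' argument. Intuitively, if $\dP(X \geq c\dE[X])$ were too small, then $X$ would have to be typically below $c\dE[X]$, making it hard to maintain the mean $\dE[X]$ unless the variance is huge—which is precisely what the inequality quantifies.

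Concretely, I would write
$$\dE[X] = \dE\bigl[X \,\one_{X < c\dE[X]}\bigr] + \dE\bigl[X \,\one_{X \geq c\dE[X]}\bigr].$$
The first summand is trivially bounded above by $c\,\dE[X]$, since on the event $\{X < c\dE[X]\}$ we have $X < c\dE[X]$ pointwise (here I would note that the condition $c \leq 1$ and $\dE[X] > 0$ allow one to handle the case of possibly negative values of $X$; more cleanly, one can first observe that replacing $X$ by $X \vee 0$ only increases the left-hand side while not increasing $\dE[X^2]$, so WLOG $X \geq 0$). Rearranging yields
$$(1-c)\,\dE[X] \leq \dE\bigl[X \,\one_{X \geq c\dE[X]}\bigr].$$

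The second step is to apply the Cauchy--Schwarz inequality to the right-hand side, which gives
$$\dE\bigl[X \,\one_{X \geq c\dE[X]}\bigr] \leq \sqrt{\dE[X^2]}\,\sqrt{\dP\bigl(X \geq c\dE[X]\bigr)}.$$
Combining these two bounds and squaring produces
$$(1-c)^2\,\dE[X]^2 \leq \dE[X^2]\,\dP\bigl(X \geq c\dE[X]\bigr),$$
which rearranges to the claimed inequality since $\dE[X^2] > 0$ (the case $\dE[X^2] = 0$ is vacuous as it would force $\dE[X] = 0$, contradicting positivity of the mean).

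There is no real obstacle here: the only mild subtlety is the sign issue in the splitting step if one allows $X$ to take negative values, which is handled by the preliminary reduction to $X \geq 0$. The rest is a two-line application of Cauchy--Schwarz.
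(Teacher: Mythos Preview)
Your proof is correct and follows exactly the same approach as the paper: split $\dE[X]$ according to $\{X < c\,\dE[X]\}$ and its complement, bound the first piece by $c\,\dE[X]$, and apply Cauchy--Schwarz to the second. The paper's proof is slightly terser and does not bother with the reduction to $X \geq 0$, since the bound $\dE[X\one_{X < c\dE[X]}] \leq c\,\dE[X]\,\dP(X < c\dE[X]) \leq c\,\dE[X]$ already holds for signed $X$ once $c\dE[X] \geq 0$.
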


\begin{proof}
	Using Cauchy-Schwarz inequality,
	\begin{flalign*}
	\dE[X]  = \dE\left[X\one_{X < c \, {\dE\left[X\right]}} \right] + \dE\left[X\one_{X \geq c \, {\dE\left[X\right]}} \right]
	\leq c \, {\dE\left[X\right]} + \dE[X^2]^{1/2} \, \dP(X \geq c \, \dE\left[X\right])^{1/2},
	\end{flalign*} which gives $\dE[X]^2 \left(1 - c \right)^2 \leq \dE[X^2] \, \dP(X \geq c \, {\dE\left[X\right]})$.
\end{proof} 

In particular, in the case where $X$ depends on $n$, $\dE\left[X\right] \to \infty$ and $\dE\left[X^2\right] \sim \dE\left[X\right]^2$ when $n \to \infty$, taking $c\to0$ in Lemma \ref{intro:lemma:second_moment_method} implies that $X\geq o(\dE\left[X\right])$ with high probability and hence that $X \to \infty$ w.h.p.\\

Let us now state an elementary result that will be proven by appealing to these standard methods. A graph $G$ is \emph{connected} if for any $u \neq v \in G$, there exists a path from $u$ to $v$ made of edges of $G$. A node $u \in V$ is \emph{isolated} if it has no neighbors in $G$. 

\begin{theorem}[Connectivity of \ER graphs]\label{intro:theorem:connectivity_ER}
Let $G \sim \G(n,p)$ with $p$ depending on $n$. Then, with high probability,
\begin{itemize}
	\item[$(i)$] if $np \leq (1-\eps) \log n$ for some $\eps>0$, then $G$ contains isolated vertices and hence is not connected.
	\item[$(ii)$] if $np \geq (1+\eps) \log n$ for some $\eps>0$, then $G$ is connected.
\end{itemize}
\end{theorem}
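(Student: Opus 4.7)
The plan is to treat the two parts via complementary moment methods. Part $(i)$ naturally calls for the second moment method (Lemma \ref{intro:lemma:second_moment_method}) applied to the number of isolated vertices, while part $(ii)$ rests on a first moment union bound (Lemma \ref{intro:lemma:first_moment_method}) over the possible sizes of a small connected component.

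For $(i)$, I set $X := \sum_{v \in [n]} \one_{v \text{ is isolated in } G}$ and compute $\dE[X] = n(1-p)^{n-1}$. The hypothesis $np \leq (1-\eps)\log n$, combined with the elementary bound $\log(1-p) \geq -p/(1-p)$, gives $\dE[X] \geq n^{\eps + o(1)} \to \infty$. For the second moment, splitting over ordered pairs yields $\dE[X^2] = \dE[X] + n(n-1)(1-p)^{2n-3}$, so that
\begin{equation*}
\frac{\dE[X^2]}{\dE[X]^2} = \frac{1}{\dE[X]} + \frac{n-1}{n(1-p)} \longrightarrow 1 \quad \text{as } n \to \infty.
\end{equation*}
Hence $\mathrm{Var}(X) = o(\dE[X]^2)$, and Chebyshev's inequality (equivalently, Lemma \ref{intro:lemma:second_moment_method} with $c \to 0$) gives $\dP(X \geq 1) \to 1$. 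The presence of any isolated vertex precludes connectivity, proving $(i)$.

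For $(ii)$, if $G$ is disconnected then there exists a connected component $S \subseteq [n]$ of size $k$ with $1 \leq k \leq \lfloor n/2 \rfloor$. For such an $S$, the induced subgraph $G[S]$ must contain a spanning tree, and all $k(n-k)$ edges between $S$ and its complement must be absent. A union bound over $S$ and over choices of spanning tree (Cayley's formula gives $k^{k-2}$ labeled trees on $k$ vertices) leads to
\begin{equation*}
\dP(G \text{ disconnected}) \;\leq\; \sum_{k=1}^{\lfloor n/2 \rfloor} \binom{n}{k}\, k^{k-2}\, p^{k-1}\, (1-p)^{k(n-k)}.
\end{equation*}
The $k=1$ term equals $n(1-p)^{n-1} \leq n e^{-p(n-1)} = O(n^{-\eps}) \to 0$ under $np \geq (1+\eps)\log n$. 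For $k \geq 2$, using $\binom{n}{k} \leq (ne/k)^k$ bounds each summand by a polynomial multiple of $\bigl(nep \cdot e^{-p(n-k)}\bigr)^k$.

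The main obstacle will be controlling this sum uniformly in $k$: the combinatorial factor $\binom{n}{k}$ and the exponential factor $(1-p)^{k(n-k)}$ compete in the intermediate range. A logarithmic analysis shows the summand is convex in $k$, hence maximized at the endpoints $k = 2$ and $k = \lfloor n/2 \rfloor$. The endpoint $k = 2$ contributes at most $O\bigl((\log n)^2 n^{-2\eps}\bigr)$, while the endpoint $k = \lfloor n/2 \rfloor$ contributes at most $\exp\bigl(-\Omega(n \log n)\bigr)$ by direct computation. Splitting the range at an intermediate threshold (say $k \asymp n/\log n$) and bounding each piece by a geometric series anchored at its endpoint then yields $o(1)$ overall. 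The slack $\eps > 0$ is precisely what supplies the $n^{-\eps}$ geometric ratio that makes this work.
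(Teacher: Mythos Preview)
Your proposal is correct. Part $(i)$ is identical to the paper's argument: second moment method on the count of isolated vertices.

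For part $(ii)$ you take a genuinely different route. The paper bounds the expected number of nontrivial \emph{zero-cuts} (bipartitions $\{S,V\setminus S\}$ with no crossing edge), obtaining
\[
\dE[Y]=\sum_{k=1}^{\lfloor n/2\rfloor}\binom{n}{k}(1-p)^{k(n-k)} \, ,
\]
and then splits the sum at $k=\alpha n$ with $\alpha\in(0,\eps/(1+\eps))$, using $\binom{n}{k}\le (en/k)^k$ on the small-$k$ range and the crude bound $\binom{n}{k}\le 2^n$ on the large-$k$ range. Your version instead insists that the small side $S$ be a connected component and enumerates its spanning trees via Cayley's formula, producing the extra factor $k^{k-2}p^{k-1}$. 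This extra $p^{k-1}$ actually helps (it supplies an additional $n^{-(k-1)}$), at the cost of the combinatorial factor $k^{k-2}$ and a slightly more involved endpoint analysis. Both are textbook arguments; the paper's zero-cut bound is combinatorially cleaner, while yours ties the disconnectedness more explicitly to component structure. Your convexity observation on $k\mapsto k\log(enp)-pk(n-k)$ is correct and, together with the endpoint estimates and the split you describe, suffices to conclude; the paper achieves the same by its single split at $\alpha n$.
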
 

\begin{figure}[h]
	\centering
	\medskip
	\begin{subfigure}[t]{0.49\linewidth}
		\centering
		\includegraphics[scale=0.25]{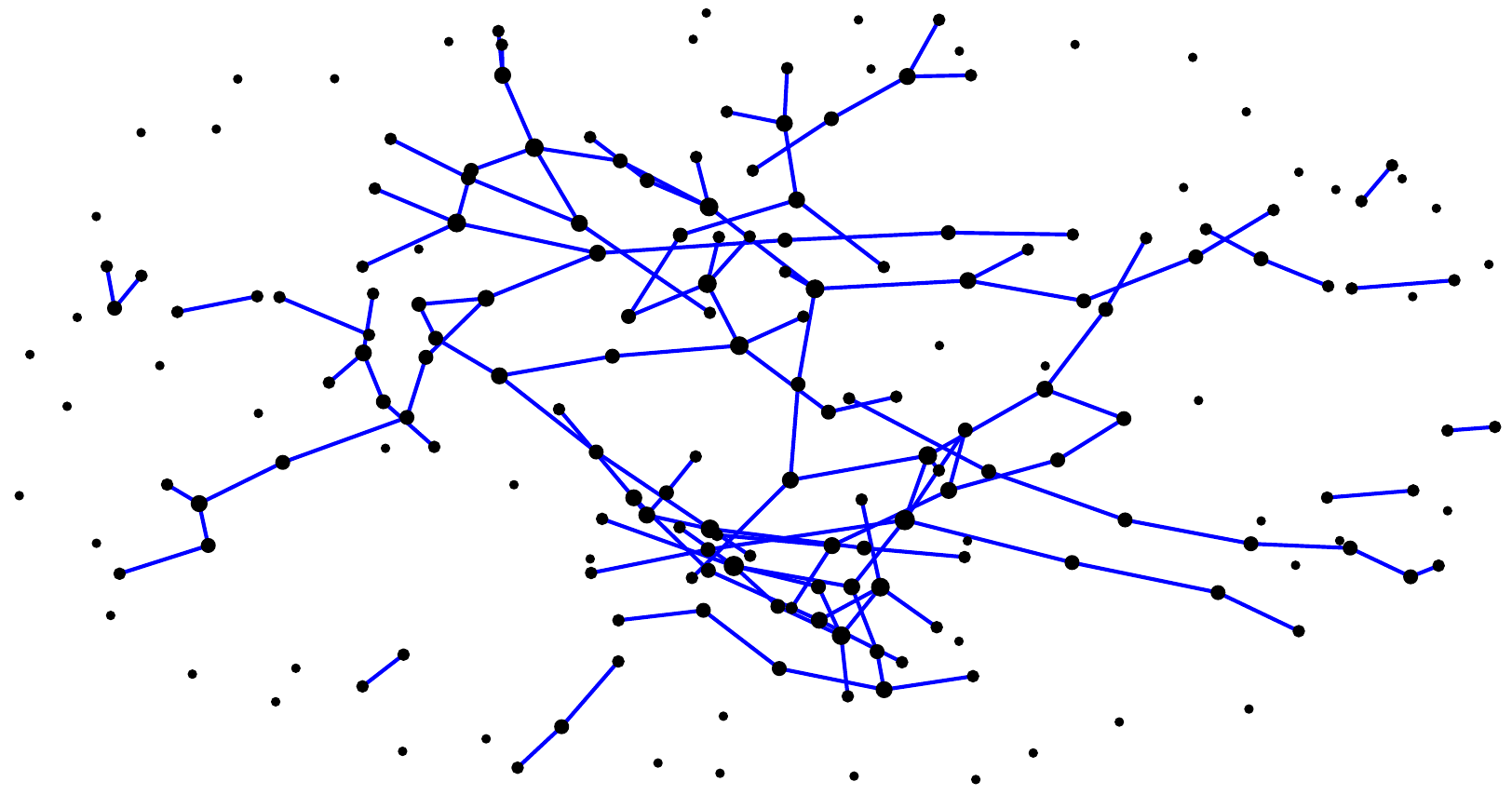}
		\caption{{\footnotesize $p = 1.3/n$}}
	\end{subfigure}
	\hfill
	\begin{subfigure}[t]{0.49\linewidth}
		\centering
		\includegraphics[scale=0.25]{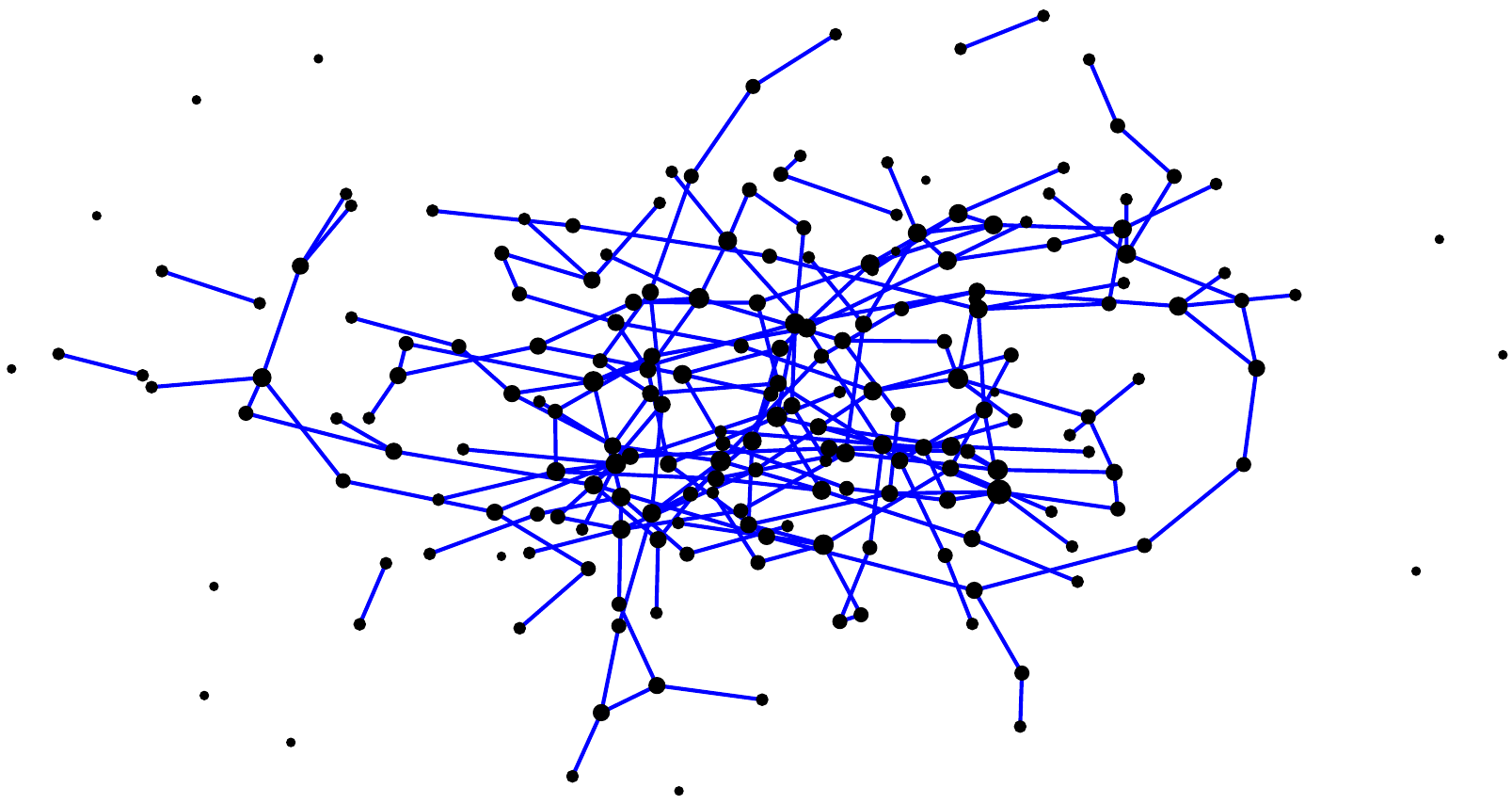}
		\caption{{\footnotesize $p = 2/n$}}
	\end{subfigure}

	\begin{subfigure}[t]{0.49\linewidth}
	\centering
	\includegraphics[scale=0.25]{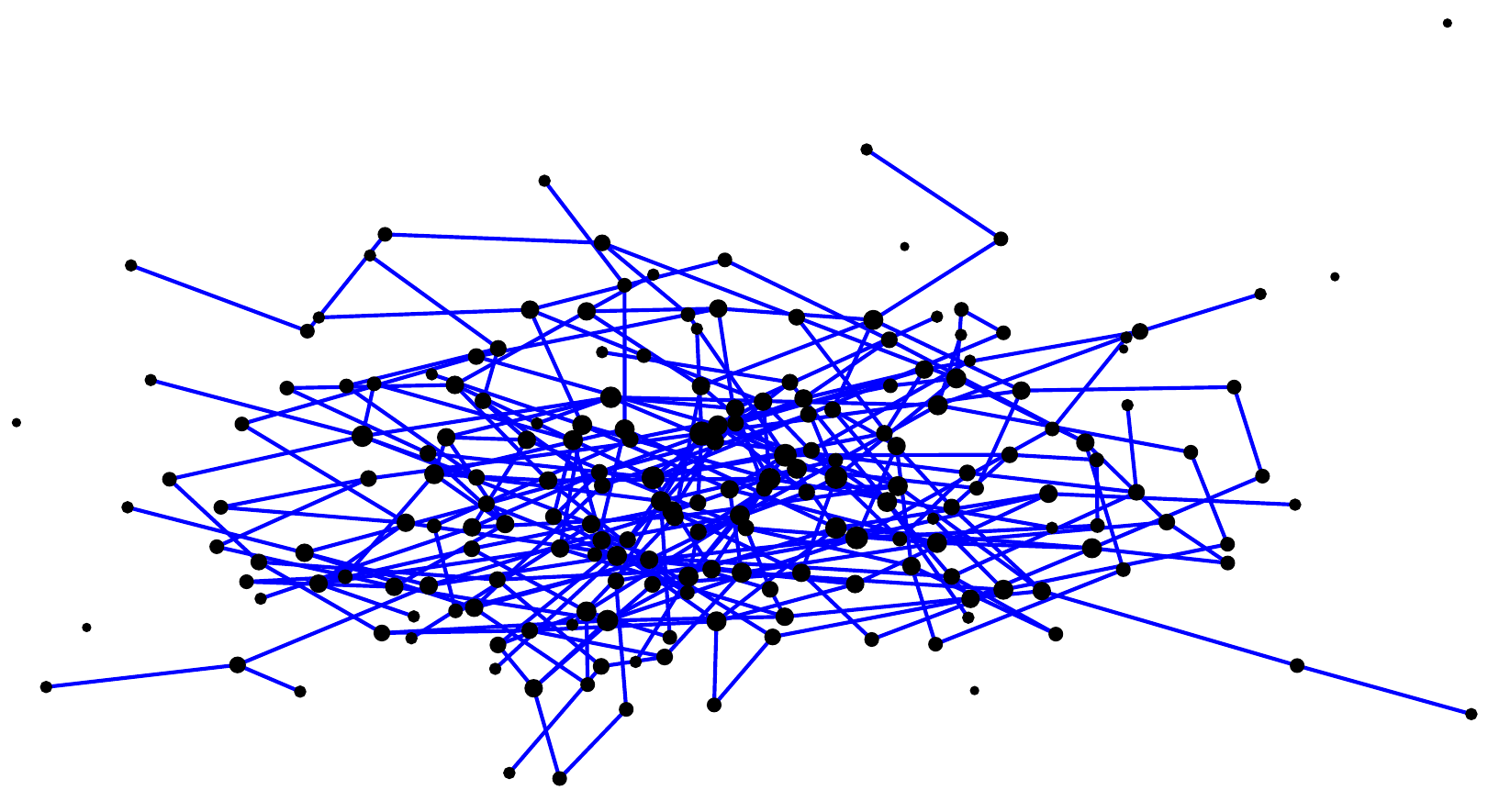}
	\caption{{\footnotesize $p = 3.3/n$}}
	\end{subfigure}
	\hfill
	\begin{subfigure}[t]{0.49\linewidth}
	\centering
	\includegraphics[scale=0.25]{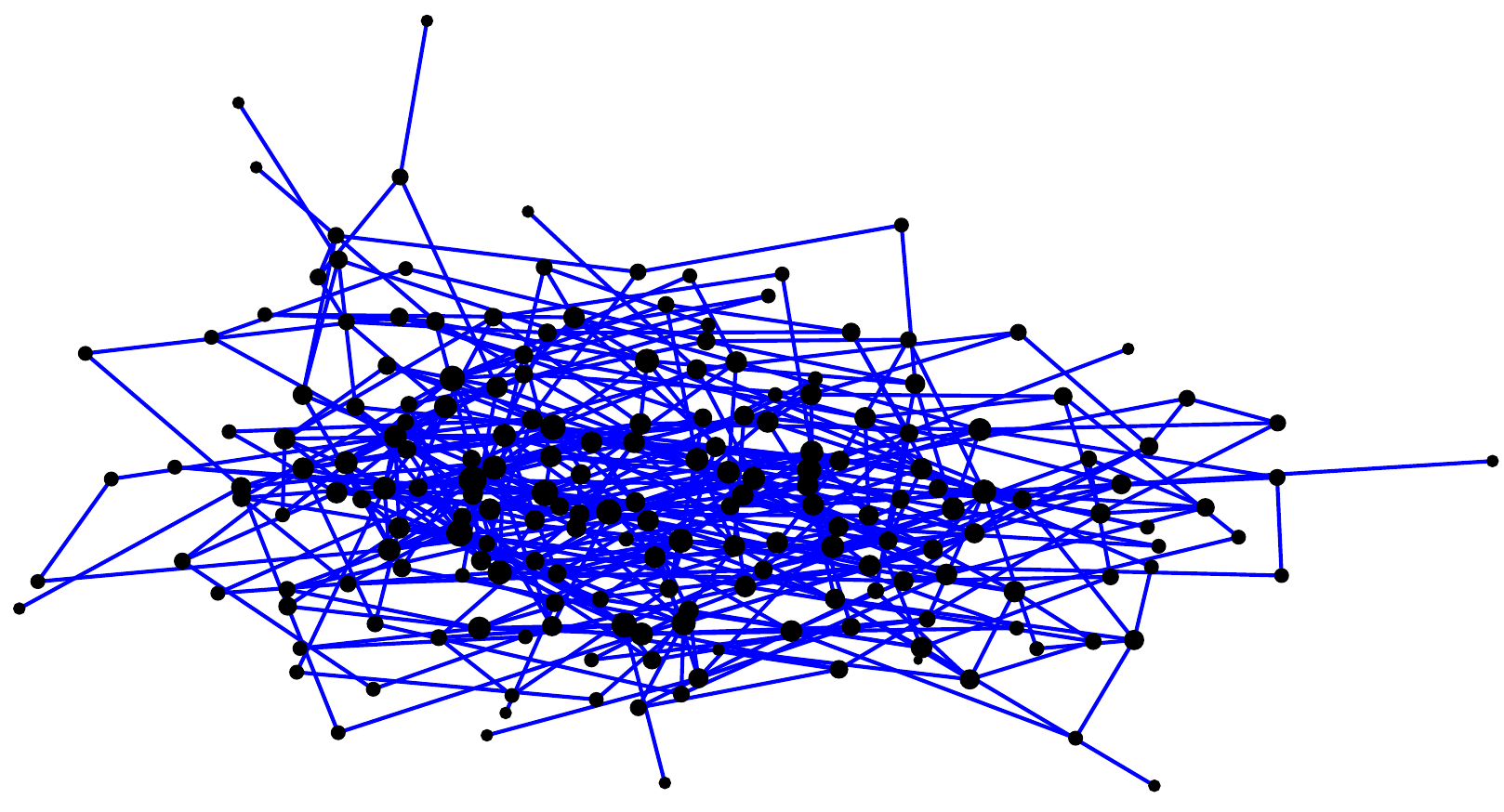}
	\caption{{\footnotesize $p = 3.9/n$}}
	\end{subfigure}
	
	\caption{Some realizations of $\G(n,p)$ with $n=200$, connected and disconnected.}
	\label{fig:intro:Gnp_connectivity} 
\end{figure} 

\begin{proof}
\proofstep{Proof of $(i)$.} We will use the second moment method for the proof of point $(i)$. Let us denote 
$$ X:= \card{\set{u \in V, u \mbox{ is isolated in } G}} = \sum_{u \in V} \one_{u \mbox{ {\footnotesize is isolated in} } G} \,. $$
For any $u \in V$, $\dP(u \mbox{ is isolated in } G) = (1-p)^{n-1}$, hence $\dE[X] = n(1-p)^{n-1}$ which equals $\exp((1+o(1))[\log n - np]) \geq \exp((1+o(1))\eps\log n )\to \infty$ under the assumption $np \leq (1-~\eps) \log n$. 

Let us now check that we indeed have $\dE\left[X^2\right] \sim \dE\left[X\right]^2$ when $n \to \infty$. 
\begin{flalign*}
	\dE[X^2] & = \sum_{u,v \in V} \dE\left[\one_{u, v \mbox{ {\footnotesize are isolated in} } G}\right]\\
	& = \dE[X] + \sum_{u,v \in V, u \neq v} \dP\left( u, v \mbox{ are isolated in } G \right)\\
	& = o(\dE[X]^2) + n(n-1)(1-p)^{n-1 + n-2} = (1+o(1))\dE\left[X\right]^2.
\end{flalign*}

\proofstep{Proof of $(ii)$.} Point $(ii)$ proved with the first moment method. Let us assume that $np \geq (1+\eps) \log n$. Define a \emph{zero-cut} of $G$ to be a partition of $V$ into two sets which are crossed by no edges. It is clear that $G$ is disconnected if and only if $G$ admits a non trivial zero-cut, the trivial one being the partition $\{V,\varnothing\}$. Let $Y$ de defined as the number of non trivial zero-cuts of $G$.
For a given partition $\{S,V \setminus S\}$ of $V$ into two sets of size $k$ and $n-k$, $\{S,V \setminus S\}$ is a zero-cut with probability $(1-p)^{k(n-k)}$, hence 
$$ \dE[Y] = \sum_{k=1}^{\lfloor n/2 \rfloor} \binom{n}{k} (1-p)^{k(n-k)}.$$
The right hand term being decreasing with $p$, we can assume without loss of generality that $p = (1+\eps) \frac{\log n}{n}$. Using $\binom{n}{k} \leq (en/k)^k$, and splitting the last sum at $k = \alpha n $, where $\alpha$ is to be specified later, we get the following upper bound:
\begin{flalign*}
\dE[Y] & \leq \sum_{k=1}^{\alpha n} \exp\left( k \left[\log(en/k) - (1-\alpha)(1+\eps)\log n \right] \right) + \sum_{k=\alpha n + 1}^{\lfloor n/2 \rfloor} \binom{n}{k} (1-p)^{\alpha n^2/2}\\
& \overset{(a)}{\leq} \sum_{k=1}^{\alpha n} \exp\left( - (c+o(1)) k  \log n \right) + 2^n e^{-\alpha(1+\eps) n \log n} = o(1),
 \end{flalign*} where $(a)$ is true as soon as $\alpha \in (0,1)$ and $c$ are such that $(1-\alpha)(1+\eps)-1>c >0$, which is true whenever $\alpha \in (0,\eps/(1+\eps))$. 
\end{proof}

\begin{remark}\label{intro:remark:disconnected_automorphisms}
Note that result of Theorem \ref{intro:theorem:connectivity_ER} shows that the asymptotic probability of connectivity in an \ER graph abruptly jumps from $0$ to $1$ when $np/(\log n)$ begins to exceed $1$: in this case it is agreed to say that the connectivity property exhibits a \emph{(sharp) threshold}. This remarkable fact is rather usual and occurs for a large range of properties in random graphs (see e.g. \cite{Bollobas2001, Janson00}). In inference problems, such underlying threshold phenomena involving parameters of the random models are often the cause of the emergence of so-called \emph{phases} (impossible, hard or easy), see Section \ref{intro:section:phases}.
\end{remark}

\subsection{The zoo of inference problems on graphs}\label{intro:subsection:zoo}

\subsubsection{Why planting signal?}
At first sight, the planted framework described earlier may leave the reader somewhat bemused; the question of finding some interesting information in data -- in the broadest meaning -- is very different from assuming that the data is \emph{literally} constructed out of some underlying signal, and aiming to recover it. We would like to start by emphasizing and elaborating on this important point.

Here are few words to clarify the above statement and release this apparent tension: for the overwhelming majority of inference problems in random graphs, the planted formulation is in fact a probabilistic rephrasing of an original deterministic combinatorial optimization problem, which we often refer to as the \emph{worst-case} version. The planted approach differs from the initial problem, but has the advantage of carrying it its very essence a notion of \emph{ground truth}, offering a comfortable framework for the evaluation of the performance of algorithms as well as a direct control on the signal-to-noise ratio. Also, theoretical guarantees can be obtained with high probability in planted models under less stringent constraints, taking into account the typical properties of data sampled from the model. Cris Moore echoes these statements in \cite{Moore17}, justifying this approach in the context of community detection in the following words:

\begin{flushright}
	\begin{quotation}
	\emph{``For the most part we are used to thinking about worst-case instances rather than random ones, since we want algorithms that are guaranteed to work on any instance. But why should we expect a community detection algorithm to work, or care about its results, unless there really are communities in the first place? And when Nature adds noise to a data set, isn’t it fair to assume that this noise is random, rather than diabolically designed by an adversary?''}
	\end{quotation}
\end{flushright}

This duality is believed to be fundamental and should be kept in mind when facing inference problems (on graphs). We will endeavour to shed light on the two flavours of the problems given as examples hereafter: a worst-case -- deterministic -- formulation, as well as a planted -- probabilistic -- counterpart, leading to different objectives and results.  

\subsubsection{A non-exhaustive bestiary} 
Without further ado, we will now give three representative examples of inference problems on graphs.

\textit{$(a)$ Max-clique, planted clique.} A \emph{clique} of a graph is a subset of vertices all adjacent to each other, i.e. a complete subgraph. The \emph{max-clique} problem consists in finding the maximum clique in a graph $G=(V,E)$, that is solving the following
\begin{equation}\label{eq:max_clique}
	\argmax_{\substack{S \subset V \\ S \mboxs{ is a clique}}} \; \card{S} \, .
\end{equation}

The max-clique problem, as well as the problem of deciding whether the graph contains a clique of given size is NP-hard \cite{Karp72}, as well as some of its approximations \cite{Hastad99}, unless $\mathsf{P = NP}$.

The planted version of max-clique, namely the \emph{planted clique} problem, is defined as follows. Consider two integers $n$ and $k \leq n$ possibly scaling with $n$. Let us generate a graph $G=(V,E)$ with vertex set $V=[n]$ as follows. First, a subset $K^\star \subset V$ of size $k$ is chosen uniformly among the $k-$subsets of $V$. $K^\star$ will form a clique: all possible edges between vertices of $K^\star$ are added to $E$. Then, all possible remaining edges are drawn independently with probability $1/2$. We denote this model by $\G_k(n,1/2)$. 

\begin{figure}[h]
	\centering
	\medskip
	\begin{subfigure}[t]{0.49\linewidth}
		\centering
		\includegraphics[scale=0.25]{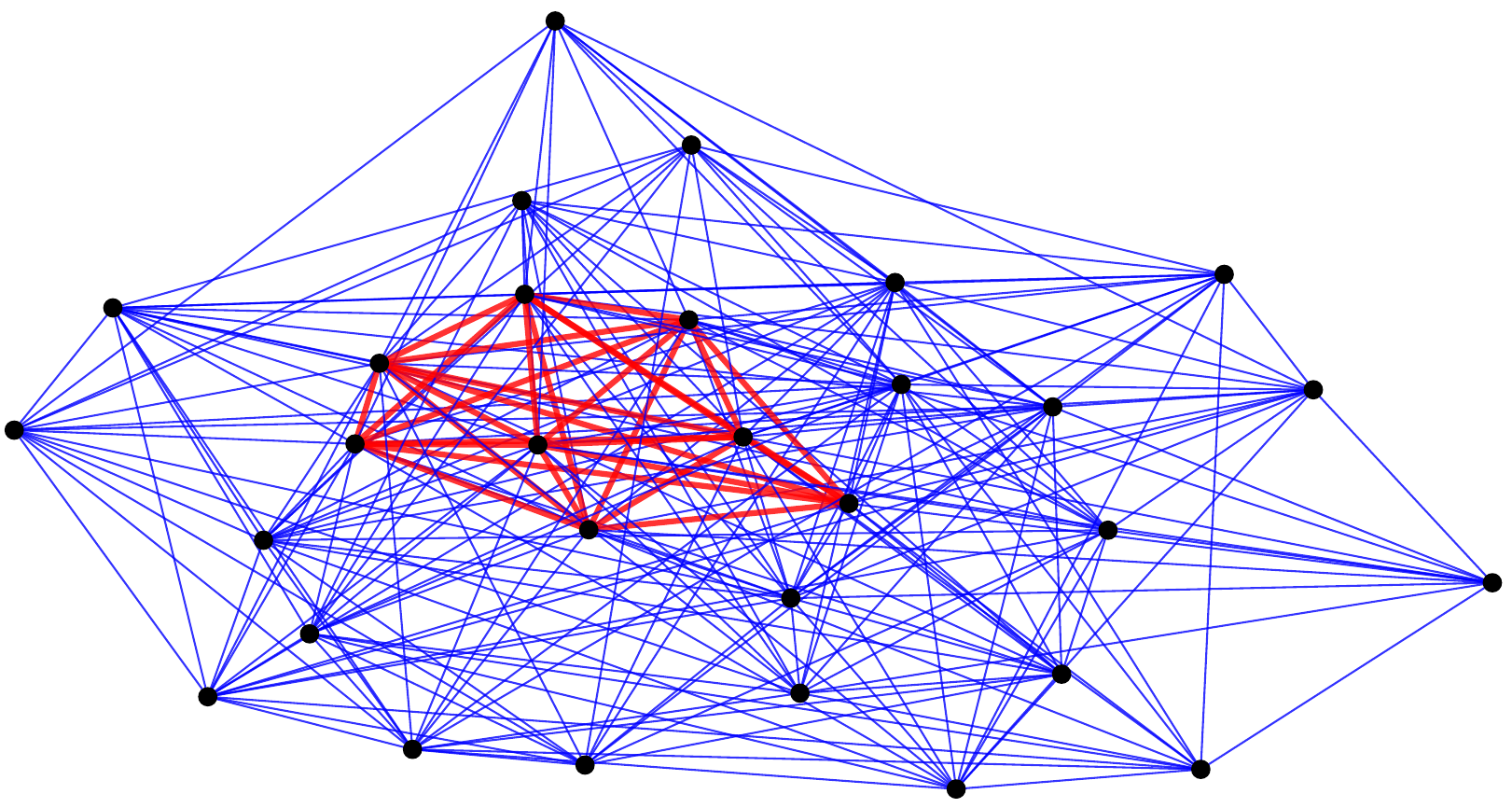}
		\caption{{\footnotesize $n = 30, k = 8$}}
	\end{subfigure}
	\hfill
	\begin{subfigure}[t]{0.49\linewidth}
		\centering
		\includegraphics[scale=0.25]{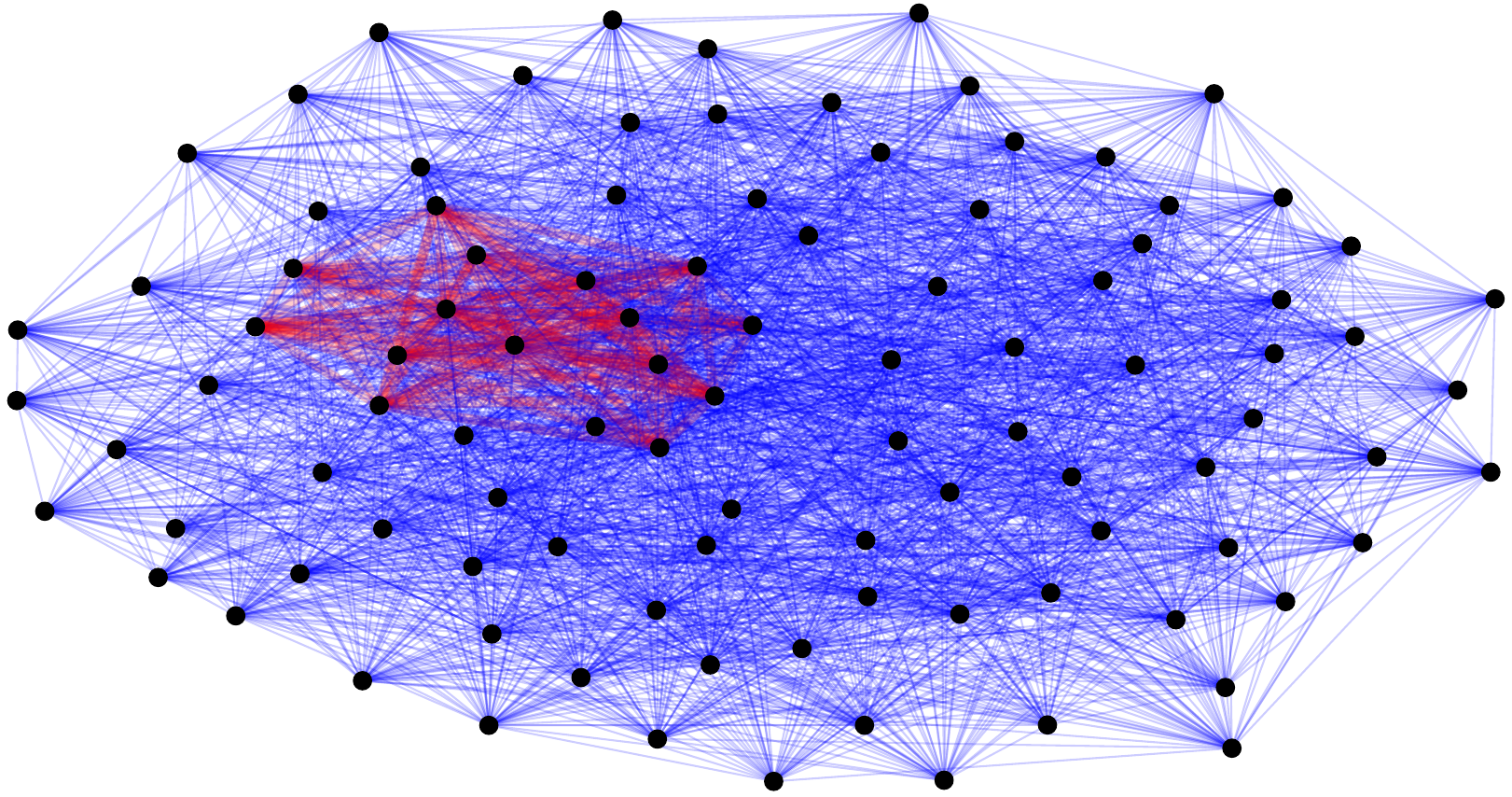}
		\caption{{\footnotesize $n = 100, k = 15$}}
	\end{subfigure}
	
	\caption{Some realizations of $\G_k(n,1/2)$, where the planted clique is highlighted in red}
	\label{fig:intro:planted_clique} 
\end{figure} 

Note that when $k=0$, that is in the absence of a planted clique, $G$ is simply distributed according to $\G(n,1/2)$. Given $k$, the statistical test for the detection task is thus
\begin{equation*}
\cH_0 := `` G \sim \G(n,1/2) " \; \mbox{versus} \; `` \cH_1 := G \sim \G_k(n,1/2)"  \,.
\end{equation*}

For reconstruction, the goal is to find an estimator $\hat{K}= \hat{K}(G)$ that recovers the whole true clique $K^\star$ with high probability, that is such that $\dP(\hat{K}(G)=K^\star) \to 1$ when $n \to \infty$ and $G \sim \G_k(n,1/2)$.

Note that under the planted model, the posterior distribution of $K^\star$ (conditionally on $G$) is given by 
$$ \dP \left(K^\star = K \cond G \right) = \frac{1}{\dP(G)} \left(\frac{1}{2}\right)^{n-k} \one_{K \mboxs{ is a clique in $G$}} \, ,$$ which shows that, unsurprisingly, the posterior distribution of $K^\star$ is the uniform distribution among all cliques of size $k$ in $G$. In the case where $k \geq (2+\eps)\log_2(n)$ it can be shown \cite{matula1972} that the only clique of size $k$ in $G$ is the planted one, and that it is the maximal clique, with high probability. Hence w.h.p. in this regime, the maximum a posteriori estimator of $K^\star$ is \emph{precisely} the solution of the max-clique problem \eqref{eq:max_clique} in $G$. This last statement makes the worst-case/planted duality even more explicit. For more insights on the performance of simple algorithms for this problem, we refer to the lecture notes by Wu and Xu \cite{WuXuLecturenotes}. \\ 

\textit{$(b)$ Min-bisection, community detection.} A \emph{bisection} of a graph $G$ is a partition of the vertex set $V$ into two sets of equal size (we assume that $\card{V}$ is even). In a weighted graph $G=(V,E)$ with adjacency matrix $A$, the \emph{min-bisection} problem  amounts to find a bisection with minimal crossing edge weights, that is solving
\begin{equation}\label{eq:min_bisection}
\argmin_{(V_1,V_2) \mboxs{ bisection of } G} \; \sum_{u \in V_1, \, v \in V_2} A_{u,v} \, .
\end{equation}

Finding the min-bisection of a graph is known to be NP-hard \cite{Garey74}. In the planted version of min-bisection, the random graph $G=(V,E)$ has to satisfy the following property: there is an underlying optimal partition of $V$ consisting in two subsets that are referred to as \emph{communities}. Therefore, the problem amounts to recovering these communities, which can very well be more than two in a general setting. In order for the graph to satisfy this property, it is sampled according to the celebrated \emph{stochastic block model}, originally introduced in \cite{Holland1983}, widely studied in recent threads of research \cite{Decelle11,Massoulie14,Mossel15,Mossel2018,Abbe18}. 

For a number of nodes $n \geq 0$, a number of blocks $r \geq 1$, a distribution $\alpha = (\alpha_i)_{i \in [r]}$ on $[r]$ and a $r \times r$ symmetric matrix $P$ with non-negative entries, the stochastic block model $\SBM(n,\alpha,P)$ is defined as follows.
First, draw independently for every node $u \in V=[n]$ a community (or type) $\chi^\star(u) \sim \alpha$. Then, every edge $\{u,v\}$ for $u \neq v \in V$ is present independently with probability $P_{\chi^\star(u), \chi^\star(v)}$.

\begin{figure}[h]
	\centering
	\medskip
	\begin{subfigure}[t]{\linewidth}
		\centering
		\includegraphics[scale=0.42]{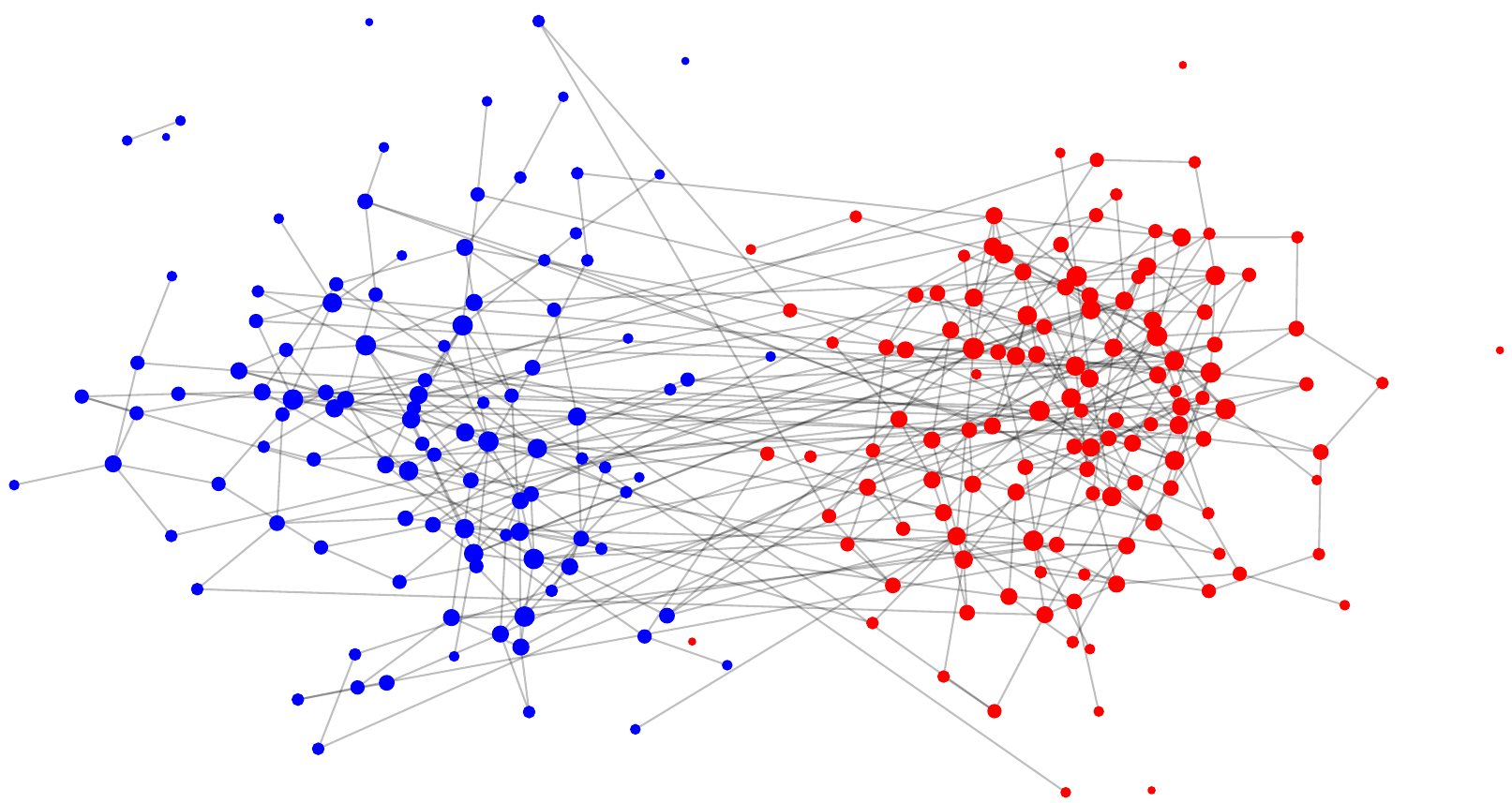}
		\caption{{\footnotesize $n = 230, r = 2, P_{c,c'} = 3.2 \cdot \one_{c=c'} + 0.7 \cdot \one_{c \neq c'}$}}
	\end{subfigure}
	\begin{subfigure}[t]{\linewidth}
		\centering
		\vspace{0.3cm}
		\includegraphics[scale=0.42]{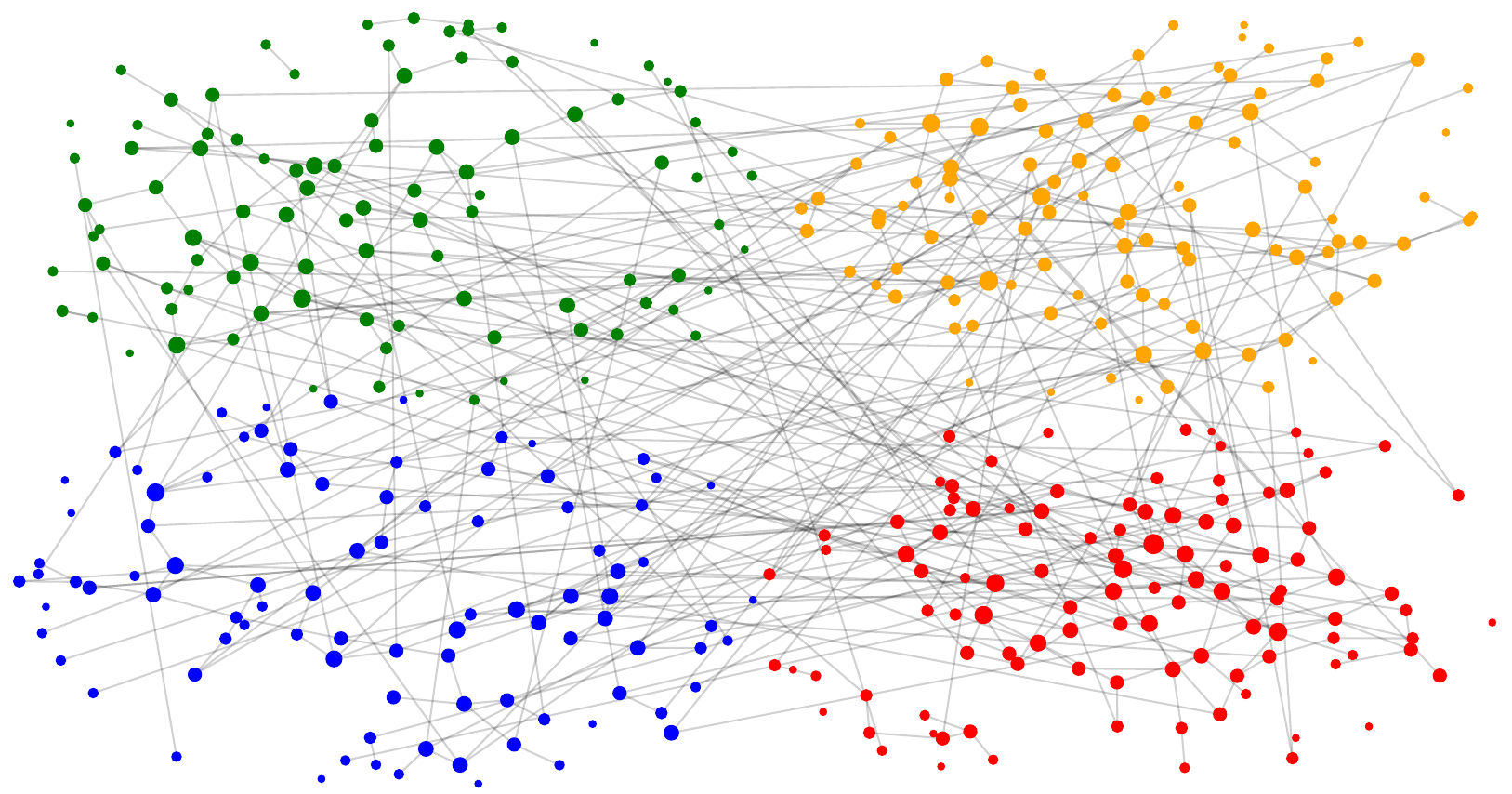}
		\caption{{\footnotesize $n = 400, r = 4, P_{c,c'} = 3.3 \cdot \one_{c=c'} + 0.5 \cdot \one_{c \neq c'}$}}
	\end{subfigure}
	
	\caption{Some realizations of $\SBM(n,\alpha,P)$. For both cases, $\alpha$ is the uniform distribution among the $r$ communities, and nodes are colored and placed accordingly.}
	\label{fig:intro:SBM} 
\end{figure} 

Note that when $r=1$ (single community), $G \sim \G(n,p)$ with $p := P_{11}$. Given $n$ and $P$, detection of planted communities consists in testing
\begin{equation*}
\cH_0 := `` G \sim \G(n,p) \mbox{ for some } p \in (0,1) " \; \mbox{versus} \; \cH_1 := `` G \sim \SBM(n,\alpha,P) " \,.
\end{equation*}

For reconstruction, we assert the performance of an estimator $\hat{\chi}= \hat{\chi}(G) : V \to [r]$ of the communities through its rescaled \emph{overlap} with the ground truth $\chi^\star$, defined by

\begin{equation*}
\ov(\hat{\chi},\chi^\star) := \frac{1}{n} \max_{\sigma \in \cS_r} \sum_{u \in V} \one_{\sigma \circ \hat{\chi}(u)=\chi^\star(u)} - \sum_{i \in [r]} \alpha_i^2 \,.
\end{equation*}

The second term in the right-hand side in the above ensures that  $\ov(\hat{\chi},\chi^\star)>0$ implies that the estimator $\hat{\chi}$ strictly outperforms random guess. Indeed, $\sum_{i \in [r]} \alpha_i^2$ is the expected fraction of good predictions achieved by the random guess estimator outputting communities drawn under the prior distribution $\alpha$.

The connection between the maximum a posteriori (MAP) estimator and the min-bisection problem can be illustrated in the standard case of two symmetric communities, in the sparse regime  with $\alpha = (1/2, 1/2)$ and $P = \begin{pmatrix} a/n & b/n \\ b/n & a/n \end{pmatrix}$ in the \emph{assortative} setting where $0 < b< a <1$. In this case, denoting by $S^\star := \set{u \in V, \chi^\star(u)=1}$, the posterior distribution of $S^\star$ under the stochastic block model writes
\begin{flalign*}
\dP \left(S^\star = S \cond G \right) & \propto \exp\left( \log\left(\frac{b/n}{a/n}\right) \sum_{u \in S, v \in V \setminus S} A_{u,v} + \log\left(\frac{1-b/n}{1-a/n}\right) \sum_{u \in S, v \in V \setminus S} (1-A_{u,v}) \right).
\end{flalign*} where $\propto$ stands for proportionality up to terms that do not depend on $S$. Neglecting the effect of non-edges, which is fair in the sparse regime (see \cite{Moore17}), since $0<b/a<1$ by assumption, and since $(S^\star, V \setminus S^\star)$ is w.h.p. close to a bisection of $G$, heuristically the MAP estimator of the two communities $(S^\star, V \setminus S^\star)$ is well approximated by the solution to the min-bisection problem \eqref{eq:min_bisection} in graph $G$. For an excellent survey on the subject with a statistical physics approach, we refer to \cite{Moore17}.\\

\textit{$(c)$ Min-weight perfect matching, planted matching.} 
Let $G=(V,E)$ be a graph with $\card{V} =2 n$ and adjacency matrix $A$. Assume that there is a partition $\set{V_0, V_1}$ of $V$ with $\card{V_0}=\card{V_1}=n$ such that every edge $\{u,v\} \in E$ satisfies $u \in V_0$ and $v \in V_1$ (we say that $G$ is \emph{bipartite}). A \emph{perfect matching} (p.m. hereafter) of $G$ is a set $M :=\set{e_1, \ldots, e_n}$ of $n$ edges of $E$ such that each node $u \in V$ appears exactly once in $M$. The \emph{weight} of a matching $M$ is defined by 
\begin{equation*}
\weight(M) := \sum_{e = \set{u,v} \in M} A_{u,v}
\end{equation*}
The \emph{min-weight perfect matching} problem writes
\begin{equation}\label{eq:min_weight_matching}
\argmin_{M \mboxs{ p.m. of } G} \; \weight(M) \, .
\end{equation}

Unlike the first two examples $(a)$ and $(b)$,  the min-weight perfect matching problem can be solved in polynomial-time, e.g. but the Hungarian algorithm \cite{Kuhn55} which runs in $O(n^3)$ time.
In the \emph{planted matching} problem   \cite{Semerjian20,Ding21_matching,Moharrami21}, the graph $G$ is taken to be a subgraph of a complete bipartite graph $K_{n,n}$, namely $V=[2n]$ and $E \subseteq \set{\set{u,v}, \, 1 \leq u \leq n, \, n+1 \leq v \leq 2n}$. A planted matching $M^\star$ is first picked uniformly at random from the set of perfect matchings of $K_{n,n}$. The remaining possible $n^2-n = n(n-1)$ edges are then sampled independently with probability $p$. Then, for all edge $e \in E$, edge weights are drawn independently from some distribution $\cP$ if $e \in M^\star$ and from another distribution $\cQ$ otherwise.

\begin{figure}[h]
	\centering
	\medskip
	\begin{subfigure}[t]{0.49\linewidth}
		\centering
		\includegraphics[scale=0.28]{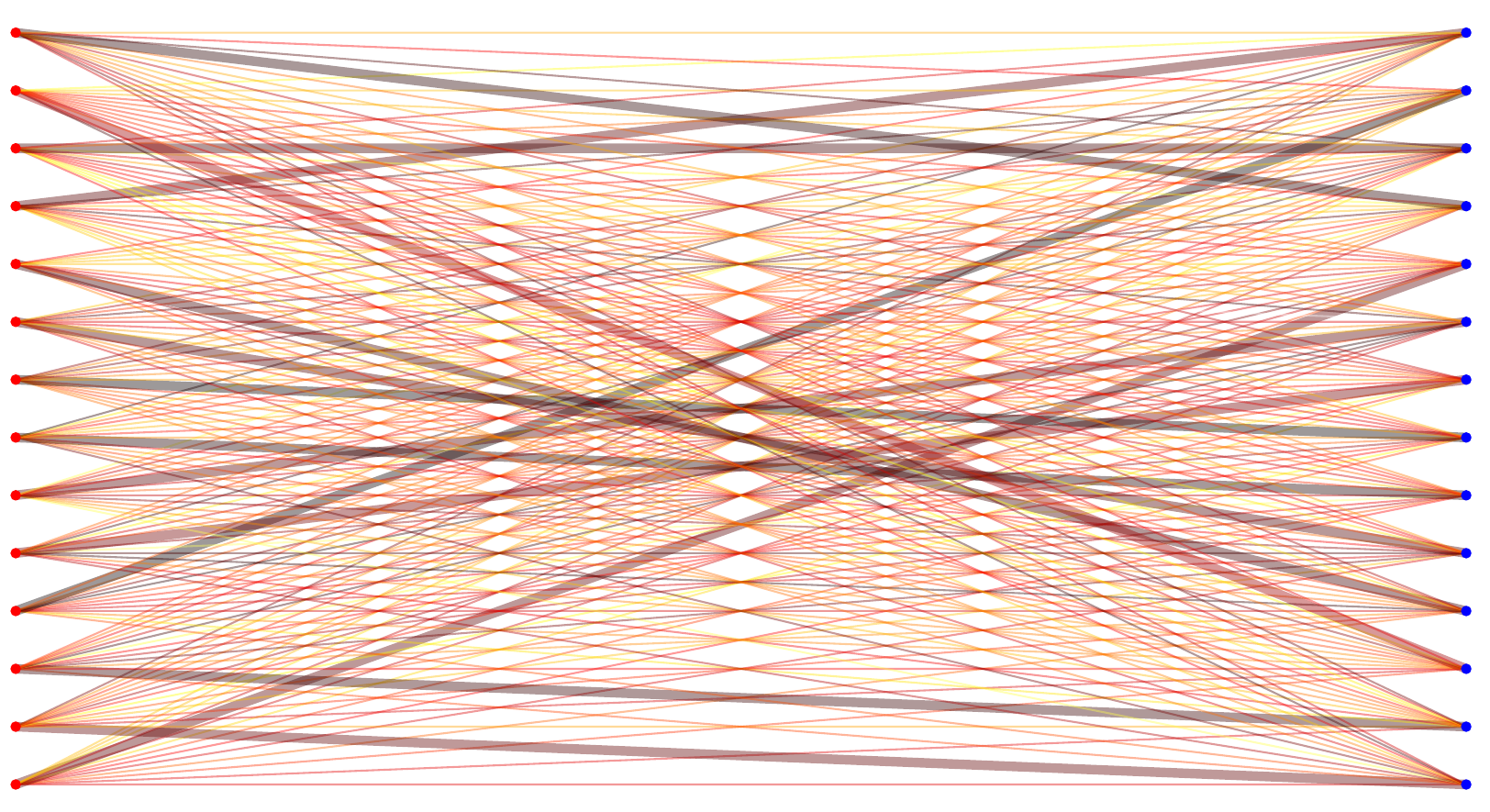}
		\caption{{\footnotesize $n = 14$, highlighted $M^*$}}
	\end{subfigure}
	\begin{subfigure}[t]{0.49\linewidth}
		\centering
		\includegraphics[scale=0.28]{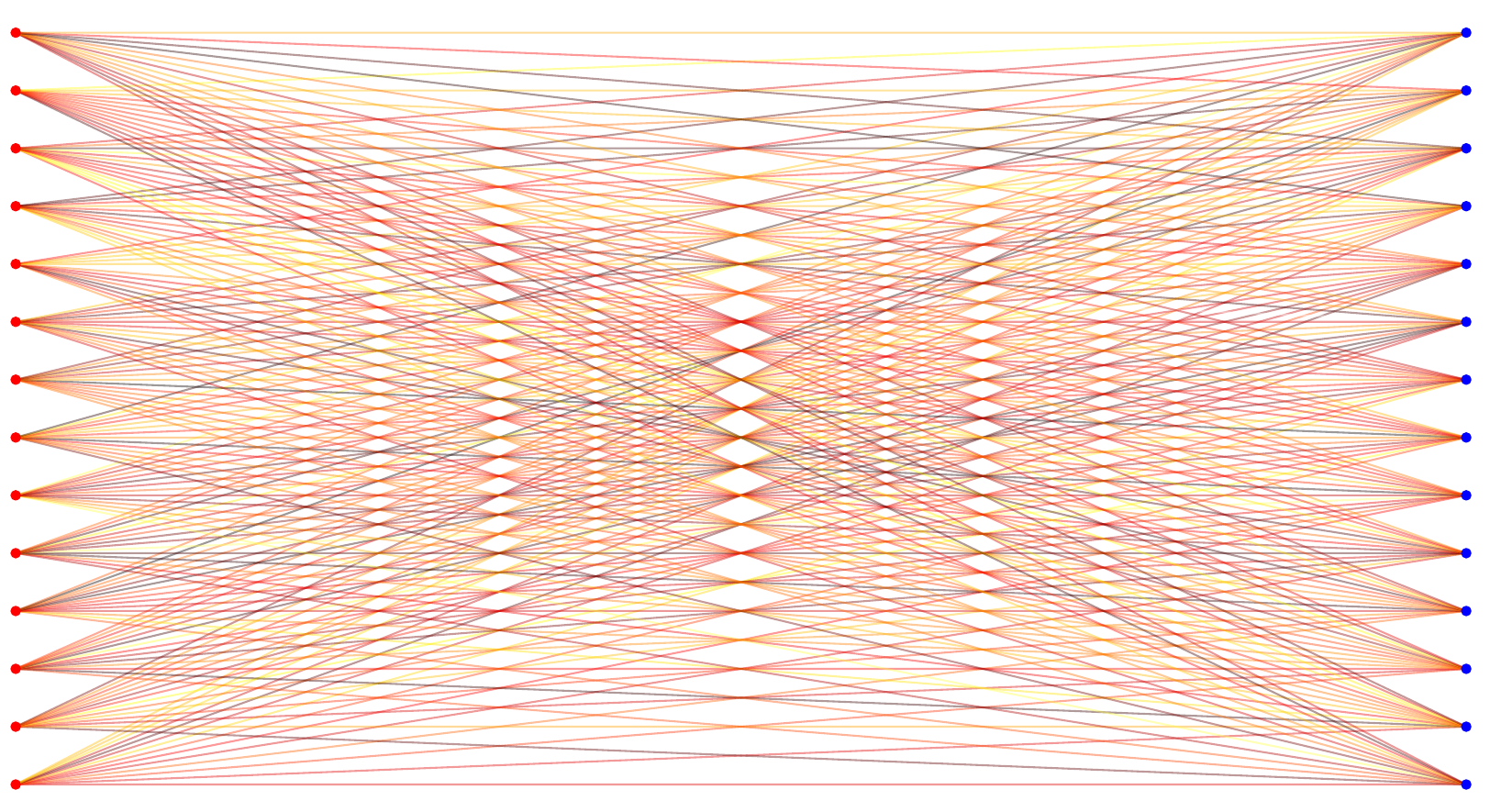}
		\caption{{\footnotesize $n = 14$, hidden $M^*$}}
	\end{subfigure}

	\begin{subfigure}[t]{0.49\linewidth}
	\centering
	\vspace{0.3cm}
	\includegraphics[scale=0.28]{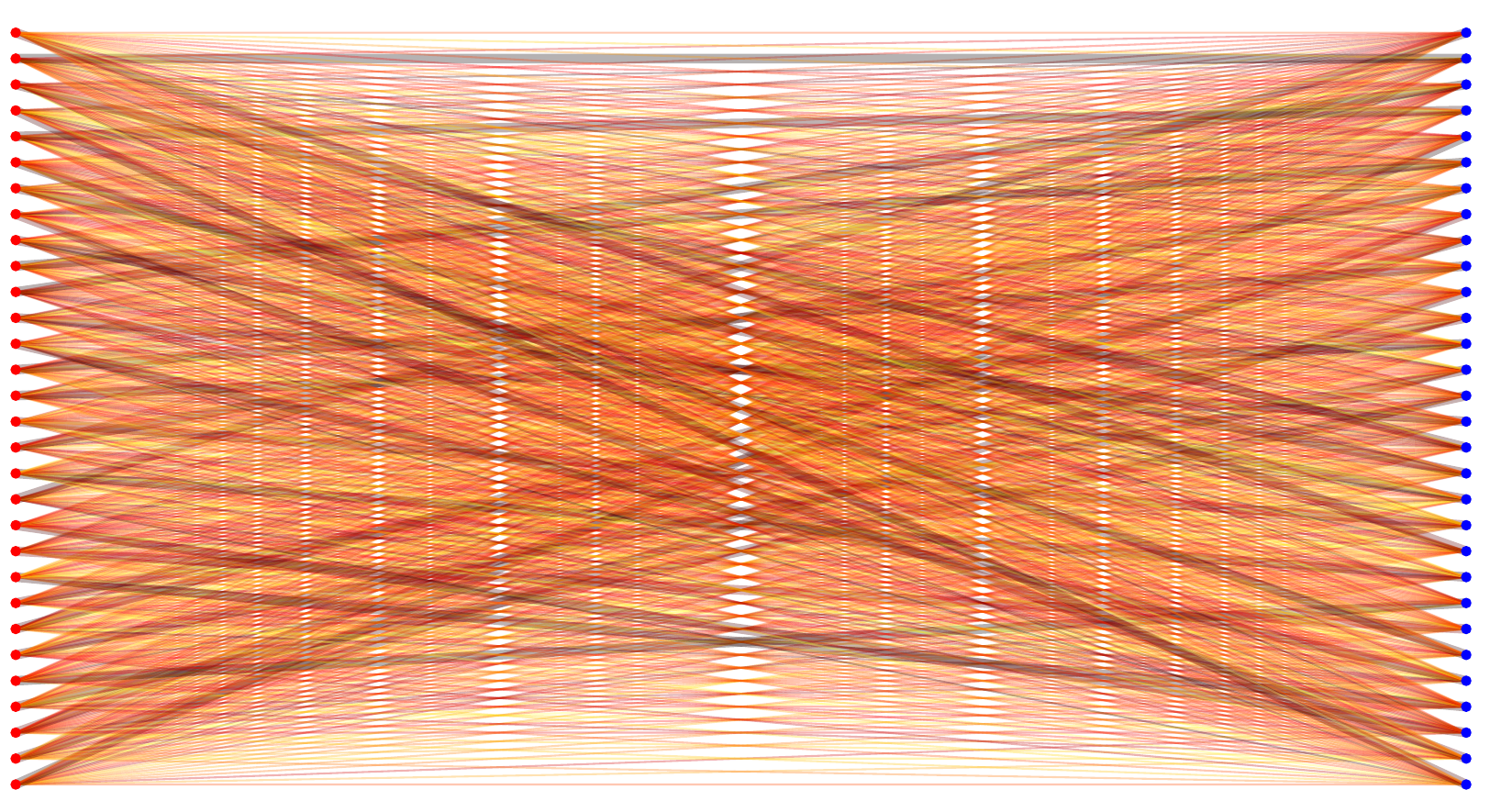}
	\caption{{\footnotesize $n = 30$, highlighted $M^*$}}
	\end{subfigure}
	\begin{subfigure}[t]{0.49\linewidth}
	\centering
	\vspace{0.3cm}
	\includegraphics[scale=0.28]{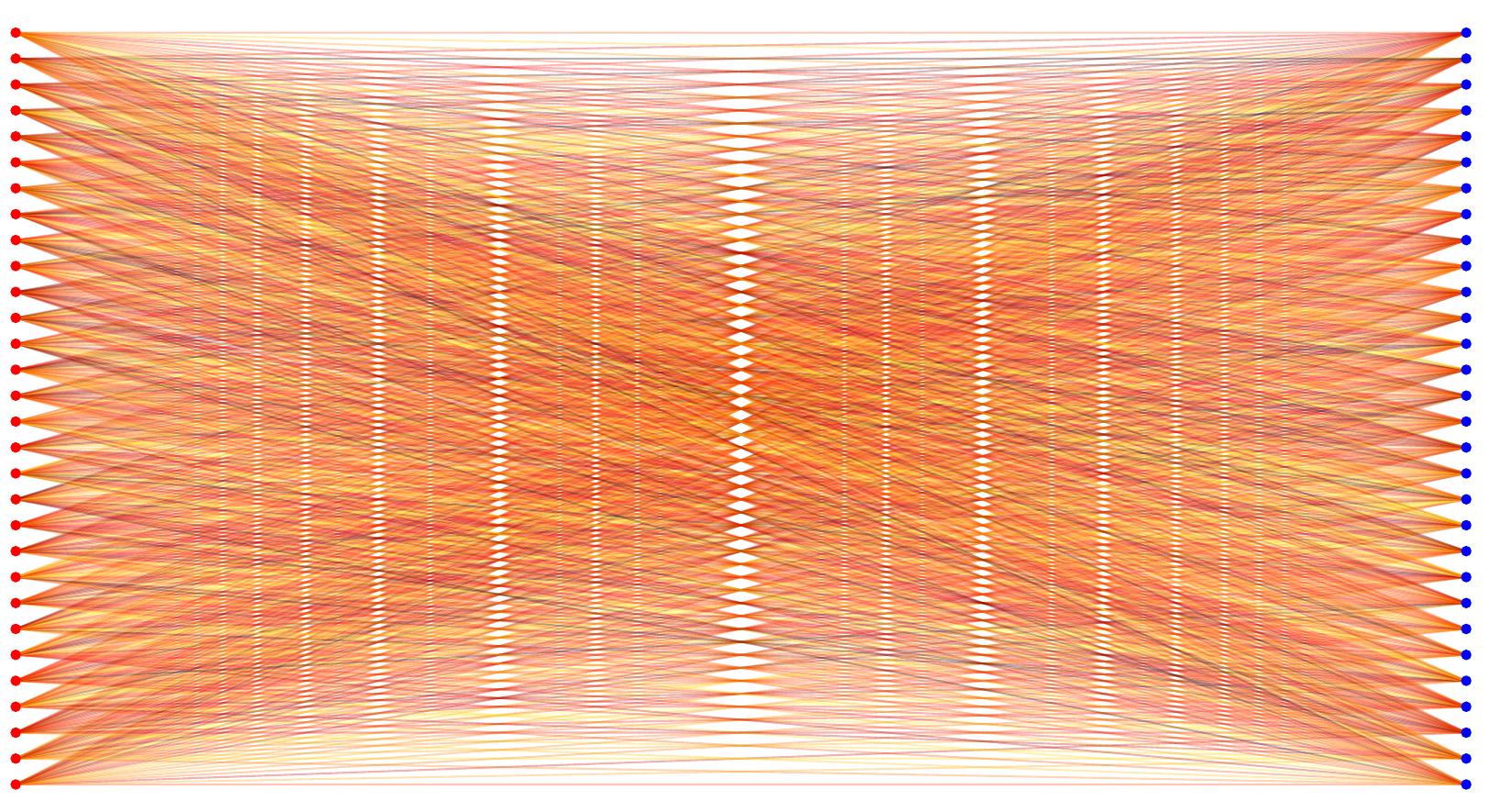}
	\caption{{\footnotesize $n = 30$, hidden $M^*$}}
	\end{subfigure}
	
	\caption{Some realizations of planted matchings on the complete bipartite graph $K_{n,n}$. For both cases, $\cP = \Exp(\mu_P)$, $\cQ = \Exp(\mu_Q)$ with $\mu_P = 4.2$ and $\mu_Q = 1/n$, and edges are colored according to their weights.}
	\label{fig:intro:planted_matching} 
\end{figure} 

Reconstructing the planted matching refers to finding an estimator $\widehat{M}= \widehat{M}(G)$ such that the overlap $\frac{1}{n} \card{\widehat{M} \cap M^\star}$ is as large as possible. Let us here again derive the posterior distribution of the signal in the planted matching model. Denoting by $\propto$ proportionality up to terms that do not depend on $M$, we have 
\begin{flalign*}
 \dP \left(M^\star = M \cond G \right) & \propto \exp\left( \sum_{e = \set{u,v} \in M} \log \cP(A_{u,v}) + \sum_{e = \set{u,v} \in E \setminus M} \log \cQ(A_{u,v}) \right)  \one_{M \mboxs{ is a p.m. of $G$}} \\
 & \propto \exp\left( - \sum_{e = \set{u,v} \in M} \log \frac{\cQ}{\cP}(A_{u,v})\right)  \one_{M \mboxs{ is a p.m. of $G$}}.
\end{flalign*} This gives immediately that once again, the MAP estimator of $M^\star$ is precisely the solution to the worst-case instance \eqref{eq:min_weight_matching} on the reweighted graph $\widetilde{G}$ such that for each edge $e = \set{u,v}$, $$ A(\widetilde{G})_{u,v} := \log \frac{\cQ}{\cP}(A(G)_{u,v}). $$ Note that in particular, if $\cP = \Exp(\mu_P)$ and $\cQ = \Exp(\mu_Q)$ with $\mu_Q < \mu_P$ -- which is the model considered in \cite{Semerjian20,Ding21_matching,Moharrami21}, see Figure \ref{fig:intro:planted_matching} -- then we exactly have $\log \frac{\cQ}{\cP}(A_{u,v}) = c + (\mu_P-\mu_Q) A_{u,v}$ with some constant $c$, and hence the MAP estimator is exactly the solution to the min-weight perfect matching problem \eqref{eq:min_weight_matching} directly on $G$.\\

We close this short glimpse on the bestiary by mentioning other planted structures in graphs which have been recently studied, such as trees \cite{stephan19}, colorings \cite{Roee16}, or hamiltonian paths \cite{Bagaria20}.

After having given these classical examples, we are now ready to elaborate about some asymptotic (high-dimensional) phenomena that arise in these inference problems, namely the emergence of some regimes of model parameters scaling with the dimension $n$, where the task -- reconstruction or detection -- turns out to be impossible, hard or easy. 

\subsection{Impossible, hard and easy phases}\label{intro:section:phases}

\subsubsection{Definitions} 

Let us consider an inference task (e.g. detection, reconstruction) in a planted model where the data -- not necessarily graphs -- is sampled from a parametric distribution with parameters $\theta \in \Theta$.

\begin{itemize}
	\item The \emph{impossible phase} (or \emph{impossible regime}) is defined as a subset $\Theta_{\mathrm{impossible}}$ of the set of parameters $\Theta$ such that for all $\theta \in \Theta_{\mathrm{impossible}}$, provably no algorithm can perform the task with high probability.
	
	\item The \emph{easy phase} (or \emph{easy regime}) is the regime of parameters $\Theta_{\mathrm{easy}}$ where the task can provably be solved \emph{with a polynomial-time algorithm}, with high probability.
	
	\item The \emph{hard phase} (or \emph{hard regime}) is the regime $\Theta_{\mathrm{hard}}$ where some exhaustive,  non-polynomial search probably works -- namely $\Theta_{\mathrm{hard}} \cap \Theta_{\mathrm{impossible}} = \varnothing$ --  but where no polynomial-time is known to succeed with high probability.
\end{itemize} 
The understanding -- and the pinning down -- of these three regimes, gathered in a so-called \emph{phase diagram}, is of course of paramount importance for the understanding of inference problems, their related algorithms, and has thus been the subject of many recent threads of research (see e.g. \cite{Bubeck18} for a unified view).

\subsubsection{A toy example (2/2): finding hay in a haystack}

The remarks made earlier in our simple toy example (see 'A toy example (1/2)') can be specified to fit this context. For the (strong) detection task, the impossible phase covers the regime where $k = O(\sqrt{n})$. When $k = \omega(\sqrt{n})$, counting the occurrences of ones takes $O(n)$ time and enables to detect the presence of signal with high probability, and the task is easy.

The (partial) reconstruction tasks has however a much larger impossible phase ($k = o(n)$), and when $k = \Theta(n)$, the optimal method still consists in choosing $k$ positions at random among the positions of ones, which is done in $O(n)$ time to achieve partial recovery, which is never interesting in this case since $k$ is already of order $n$.


%
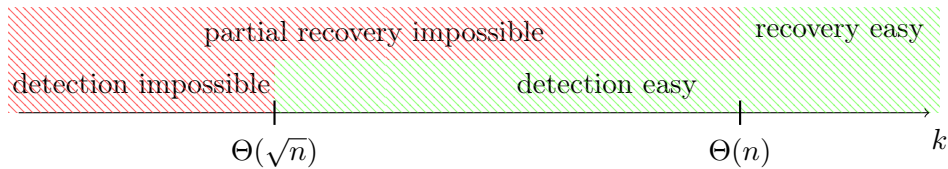
\begin{figure}[H]\label{fig:intro:haystack_phase_diagram}
\vspace*{0.4em}
\centering
\begin{tikzpicture}[scale=1.75]
\node[fill=white] (L) at (0,0) {};
\node[fill=white] (R) at (7,0) {};
\draw[black,->] (L) -- (R)  ;
\node[fill=white] at (7,-0.2) {$k$};

\node[fill=white] at (1,0.2) {detection impossible};
\fill[pattern=north west lines,pattern color=red!80!white,opacity=0.85] (0,0)--(0,0.4)--(2,0.4)--(2,0);

\node[fill=white] at (4.5,0.2) {detection easy};
\fill[pattern=north west lines,pattern color=green!80!yellow,opacity=0.7] (2,0)--(2,0.4)--(7,0.4)--(7,0);

\node[fill=white] at (2.75,0.6) {partial recovery impossible};
\fill[pattern=north west lines,pattern color=red!80!white,opacity=0.85] (0,0.4)--(0,0.8)--(5.5,0.8)--(5.5,0.4);

\node[fill=white] at (6.25,0.6) {recovery easy};
\fill[pattern=north west lines,pattern color=green!80!yellow,opacity=0.7] (5.5,0.4)--(5.5,0.8)--(7,0.8)--(7,0.4);

\node[scale=0.1,fill=white] (DU) at (2,0.1) {};
\node[scale=0.1,fill=white] (DD) at (2,-0.1) {};
\draw[thick,black] (DU) -- (DD)  ;
\node[fill=white] at (2,-0.3) {$\Theta(\sqrt{n})$};

\node[scale=0.1,fill=white] (RU) at (5.5,0.1) {};
\node[scale=0.1,fill=white] (RD) at (5.5,-0.1) {};
\draw[thick,black] (RU) -- (RD)  ;
\node[fill=white] at (5.5,-0.3) {$\Theta(n)$};

\end{tikzpicture}
\caption{Phase diagram for detection and reconstruction in the 'find hay in a haystack' problem}
\end{figure}

Note that in this (very simple) case, there is no hard phase neither for detection nor reconstruction. However, a considerable variety of inference problems are conjectured to exhibit a hard phase in their phase diagram. Planted clique may be the most appealing example. A significant amount of recent contributions \cite{Gamarnik2019TheLO,Deshpande13,Feige10,Jerrum92,Barak16} has agreed on the fact that no polynomial-time algorithm is known to recover a planted clique smaller than $\Theta(\sqrt{n})$, even though as discussed in $(a)$, an exhaustive -- non polynomial -- search recovers a planted clique of size $k$ as soon as $k \geq (2+\eps) \log_2(n)$. The phase diagram for reconstruction in the planted clique problem is hence as follows:


%
\begin{figure}[H]\label{fig:intro:clique_phase_diagram}
\vspace*{0.4em}
\centering
\begin{tikzpicture}[scale=1.75]
\node[fill=white] (L) at (0,0) {};
\node[fill=white] (R) at (7,0) {};
\draw[black,->] (L) -- (R)  ;
\node[fill=white] at (7,-0.2) {$k$};

\node[scale=0.1,fill=white] (DU) at (1.5,0.1) {};
\node[scale=0.1,fill=white] (DD) at (1.5,-0.1) {};
\draw[thick,black] (DU) -- (DD)  ;
\node[fill=white] at (1.5,-0.3) {$2 \log_2(n)$};

\node[scale=0.1,fill=white] (RU) at (4.5,0.1) {};
\node[scale=0.1,fill=white] (RD) at (4.5,-0.1) {};
\draw[thick,black] (RU) -- (RD)  ;
\node[fill=white] at (4.5,-0.3) {$\Theta(\sqrt{n})$};

\node[fill=white] at (0.75,0.2) {impossible};
\fill[pattern=north west lines,pattern color=red!80!white,opacity=0.85] (0,0)--(0,0.4)--(1.5,0.4)--(1.5,0);

\node[fill=white] at (3,0.2) {hard};
\fill[pattern=north west lines,pattern color=orange,opacity=0.85] (1.5,0)--(1.5,0.4)--(4.5,0.4)--(4.5,0);

\node[fill=white] at (5.75,0.2) {easy};
\fill[pattern=north west lines,pattern color=green!80!yellow,opacity=0.85] (4.5,0)--(4.5,0.4)--(7,0.4)--(7,0);
\end{tikzpicture}
\caption{Phase diagram for reconstruction in planted clique -- see $(a)$}
\end{figure}
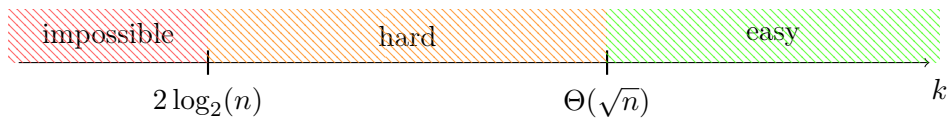

For another example where this phase transition of this type also appears, we can mention community detection for $3$ or more communities (see e.g. Abbé's survey \cite{Abbe18} on the subject). \\

Existence of a hard phase is also a fundamental and central question for the graph alignment problem, which we are now ready to introduce.

\section{Graph alignment}\label{intro:section:ga}

After this short introduction to the general topic of inference in random graphs, let us now dive into the core of this thesis, namely the graph alignment problem, which will be the subject of interest in the upcoming chapters.

\subsection{Motivations}\label{intro:subsection:motivations}
Graph alignment\footnote{The same problem is sometimes found under the name \emph{graph matching}. However, for the sake of clarity, we will only refer to graph alignment throughout the manuscript, in order not to confuse the reader with the different problem of planted matching evoked earlier (see Section \ref{intro:subsection:zoo}, example $(b)$).} (or network alignment) aims to answer the following (informal) question: \emph{‘what is the best way to match the nodes of two graphs?'}
\begin{figure}[h]\label{fig:intro:two_graphs}
	\centering
	\medskip
	\begin{subfigure}[t]{0.49\linewidth}
		\centering
		\hspace{-0.6cm}
		\includegraphics[scale=0.29]{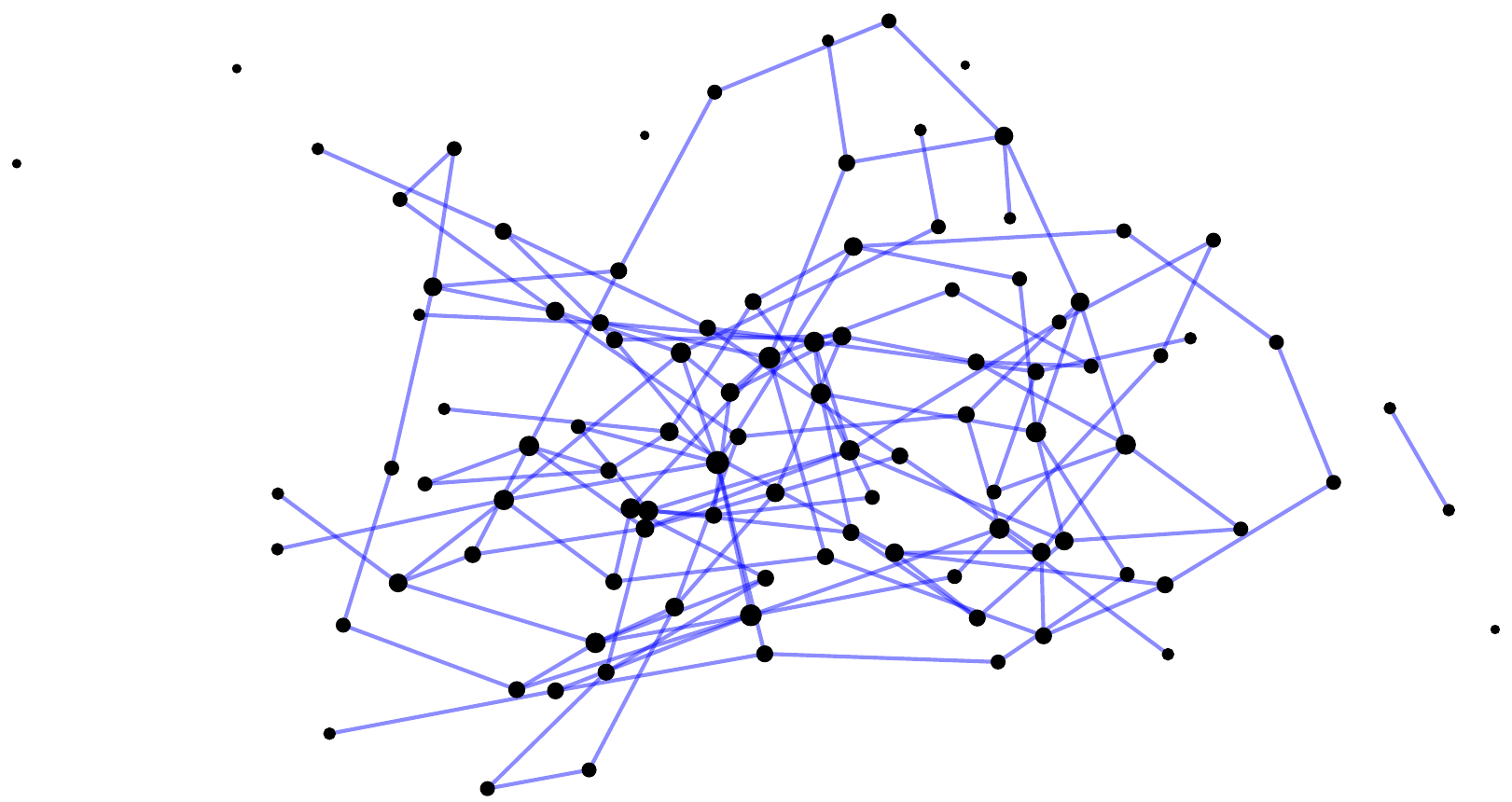}
	\end{subfigure}
	\begin{subfigure}[t]{0.49\linewidth}
		\centering
		\includegraphics[scale=0.29]{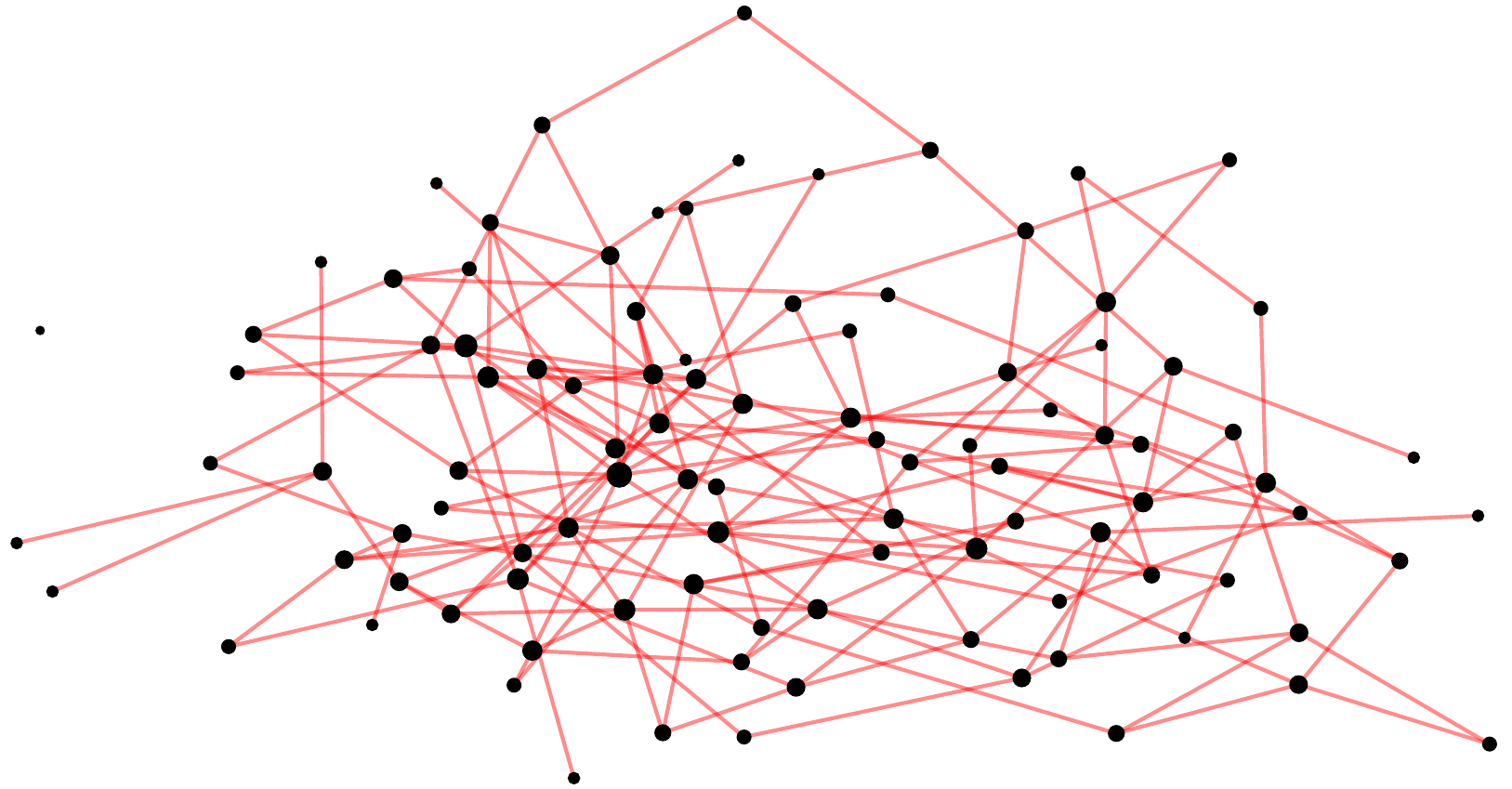}
	\end{subfigure}
	\caption{Graph alignment consists in the following informal question:  ``what is the best way to match the nodes of the two graphs?''}
\end{figure}
Providing an answer to this very general query amounts to exhibiting a vertex correspondence, or \emph{alignment}, between two -- labeled or unlabeled -- graphs so that the aligned -- labeled -- versions of the graphs are resembling or close enough, for some well-defined distance.

Motivations for the investigation of this problem are numerous, since many questions from various fields can be phrased as graph alignment problems. Let us now give a short overview of several applications for the graph alignment framework.

\begin{itemize}
	\item \emph{(De-)anonymization.} De-anonymization problems in networks aroused great interest when Narayanan and Shmatikov \cite{Narayanan08} were
	able to de-anonymize an unlabeled dataset of film ratings (subsampled from the Netflix dataset \cite{NET06}) with the help of auxiliary information given by the observation of a publicly available database (namely IMDb, the Internet Movie Database \cite{IMBD07}). The authors proposed a simple method relying on the correspondence (or, the correlation) between movies and ratings across the databases, that were able to match some pairs of records and thus to recover the entire movie viewing history of a given subscriber, which may be in turn used as input to uncover their political preferences or other sensitive information. 
	
	Since then, de-anonymization problems have been studied in recent literature in several versions and reformulations: related topics such as quantifying privacy
	issues related to databases \cite{Dwork08} or social networks \cite{Narayanan09} have been investigated.
	
	\item \emph{Image processing and pattern recognition.} Some recognition tasks in image processing such as shape matching and object recognition \cite{Berg05} can be achieved by finding correspondences between feature points across two (or several) images. The similarities are based on correspondence between vertex features as well as the cost of geometric transformation between pairs of nodes. 
	 
	Some popular algorithms for graph matching have been widely proposed for pattern recognition (see \cite{CFVS04} for a broad survey) in many areas since the late seventies: 2D/3D image analysis \cite{madi2016}, document processing, video analysis, biometric identification as well as biomedical/biological applications. All these fields have in common that some structured information is represented by graphs, and the goal is to find a correspondence that somehow ensures that substructures in the first graph are mapped to similar substructures in the other.
	
	\item \emph{Protein interaction networks in computational biology.} Authors of \cite{Singh07, Singh08} study protein-protein interaction represented as labeled networks, namely PPIs networks, where the nodes are proteins and edges represent interaction. These networks are observed across different species. They provide an algorithm, IsoRank, encoded as an eigenvalue problem, which performs a global alignment of two or more PPIs networks, using both the network structure of the data and sequence similarity. 
	
	Aligning these networks proves to be a very valuable tool: first, it provides a phylogenetic function-oriented comparison of proteins across different species, identifying those that may play the same role, thus transferring knowledge and insights across species; second, it can be used to perform \emph{ortholog prediction}, that is being able to spot genes that derive from the same ancestor.   
	
	Some following works elaborated on approximations \cite{Kazemi16} or refined versions \cite{Liao2009} of IsoRank, and also developed further competitive methods for this problem \cite{Kollias2013,elkebir2015}.
	
	\item \emph{Natural language processing and semantic entailment.} A fundamental task in natural language processing (NLP) is the recognition of \emph{semantic entailment}, that is, given a piece of text, whether an hypothesis can be concluded by logical implication, or simply by general world-knowledge. 
	
	In \cite{Haghighi05}, the authors use a representation of sentences as directed, labeled graphs between words and phrases, originally introduced in \cite{lin2001dirt} (for a recent general survey on graph representations in NLP, see \cite{osman20}). In these small networks, edges encode underlying dependency relationships in the sentence. Given a sentence and an hypothesis, the proposed strategy in order to identify entailment is to represent both the sentence and the hypothesis as graphs, and then to measure similarity between them, that is to find a mapping in the two graphs minimizing a score built from both the semantic resemblance of the matched vertices and how well the edges (namely, the relationships) are preserved by the mapping.
\end{itemize} 
Graph alignment recently grew some new interest in other applied fields, including computational neurosciences \cite{Frigo2021}, analysis of data from diffusion magnetic resonance imaging \cite{Olivetti16}, and cross-lingual knowledge alignment \cite{Muhao17}.

We refer to Section \ref{intro:subsection:short_survey} for a brief history of theoretical aspects and results in graph alignment.

\subsection{The quadratic assignment problem} 

Given two graphs $G=(V,E)$, $G'=(V',E')$ with same number of vertices $n=|V|=|V'|$, the problem of graph alignment consists in identifying a bijective mapping, or alignment $\pi: V\to V'$ that minimizes 
\begin{equation}\label{eq:QAP_edge_disagreements}
\sum_{i,j \in V} \left(\one_{\set{i,j} \in E} -  \one_{\set{\pi(i),\pi(j)} \in E'}\right)^2 \, , 
\end{equation} that is the number of disagreements between adjacencies in the two graphs under the alignment $\pi$. In the case where the two graphs are isomorphic, the two node sets $V$ and $V'$ can be matched perfectly: an isomorphism between $G$ and $G'$ achieves zero cost in \eqref{eq:QAP_edge_disagreements}.

However, we are interested in graph alignment for general, non necessarily isomorphic graphs: the problem can hence be viewed as the noisy version of the isomorphism problem. 

Given the adjacency matrices $A$ and $B$ of the two graphs $G$ and $G'$, the graph matching problem can be phrased as an instance of the quadratic assignment problem (QAP) \cite{Pardalos94} which is the following
\begin{equation}\label{eq:QAP}
\argmin_{\Pi \in \cS_n} \| A - \Pi B \Pi^\top \|^2  = \argmax_{\Pi \in \cS_n} \langle A, \Pi B \Pi^\top \rangle \, ,
\end{equation}
where $\Pi$ ranges over all $n\times n$ permutation matrices, and $\langle \cdot, \cdot \rangle$ denotes the matrix Frobenius inner product, i.e. $\langle C,D \rangle := \Tr(C^\top D)$. 

In a more general setting, including that of applications discussed in Section \ref{intro:subsection:motivations}, the loss function can also take into account a matching cost for pairs of vertices, namely the problem becomes 

\begin{equation}\label{eq:QAP_with_cost}
\argmax_{\Pi \in \cS_n} \langle A, \Pi B \Pi^\top \rangle + \langle C, \Pi  \rangle \, ,
\end{equation} where $C$ is a $n \times n$ matrix such that the cost for matching vertex $u \in V$ and $u' \in V'$ is given by $-C_{u,u'}$.

Under its general formulation, QAP is known to be a NP-hard problem, as well as some of its approximations \cite{Pardalos94,Makarychev14}. These hardness results are applicable in the worst case, where the observed graphs are designed by an adversary. In line with the worst-case/planted duality detailed earlier in Section \ref{intro:subsection:zoo}, a natural idea is then to study the average-case formulation, when $A$ and $B$ are random instances. 

\subsection{Planted graph alignment}\label{intro:subsection:planted_GA}
We now study the planted version of graph alignment, namely the \emph{planted graph alignment problem}, where the pair of graphs $(G,H)$ is sampled with the following general procedure. We generate a pair $(G,G')$ of graphs, (or adjacency matrices $(A,A')$) with same node set such that $G$ and $G'$ are edge correlated, and relabel the nodes of $G'$ with some uniform random permutation $\pi^\star \in \cS_n$, independent from everything else, to form $H$.

Henceforward, we will always refer to graph alignment for planted graph alignment.

Let us describe several models of random correlated graphs that will be studied in the sequel: the Gaussian model, where the graph is complete and the signal lies on the edge weights, and the correlated \ER model, where the correlated graphs are both \ER marginally distributed.

\subsubsection{Correlated Gaussian Wigner model} 
The \emph{correlated Gaussian Wigner model} was first introduced by Ding et al. \cite{Ding18} as a simple playground for graph alignment, and 
has been further investigated for its own sake in some recent works (see Section \ref{intro:subsection:short_survey}). 

Under this model, the graphs are complete and the signal lies in the weights of edges between all pairs of nodes. The correlated weighted adjacency matrices $A$ and $A'$ are simply sampled as follows: first, $A$ is drawn from the Gaussian Orthogonal Ensemble ($\GOE$), namely, independently for all $1 \leq u \leq v \leq n$, 
\begin{equation}\label{eq:GOE_model}
A_{u,v} = A_{v,u} \sim 
\begin{cases}
\mathcal{N}(0,1/n) & \text{if $u \neq v$} \, , \\
\mathcal{N}(0,2/n) & \text{if $u = v$}  \, .
\end{cases}
\end{equation}
Given $H$ an independent copy of $A$, we define 
\begin{equation}\label{eq:GOE_model2}
A'=A+\xi H \, ,
\end{equation} where $\xi >0$ is the noise parameter. We denote $(A,A') \sim \Wig(n,\xi)$. Under this model, coefficients of $A$ and $A'$ are pairwise correlated with correlation parameter
$\frac{1}{\sqrt{1+\xi^2}}$. This model is the subject of Chapter \ref{chapter:EIG1}.
 
A natural variant of the model is to ensure that the two marginals are the same, and to remove self-loops in the graphs (i.e. diagonal coefficients). All pairs of edge weights $(A_{u,v}, A'_{u,v})_{1 \leq u<v \leq n}$ can be taken to be i.i.d. couples of normal variables with zero mean, unit variance and correlation parameter $\rho \in [0,1]$. An equivalent sampling procedure is to generate matrix $A'$ from $A$ as follows: 
\begin{equation}\label{eq:CGW_model}
A' = \rho \cdot  A  + \sqrt{1-\rho^2}  \cdot H,
\end{equation} where $H$ is an independent copy of $A$. We denote 
$(A,A') \sim \Wig'(n,\rho)$. This model, very close to $\Wig(n,\xi)$ is the subject of Chapter \ref{chapter:gaussian_alignment_IT}.

\subsubsection{Correlated \ER model} 
The \emph{correlated \ER model} is the simplest, most natural model of correlated random graphs, which explains why it has been the focus of recent threads of research (see Section \ref{intro:subsection:short_survey}) for the study of graph alignment. We refer to Section \ref{intro:subsection:basics_rg} for the definition of the (non-correlated) \ER model. This model will be studied in detail in Chapters \ref{chapter:impossibility}, \ref{chapter:NTMA} and \ref{chapter:MPAlign}.

For a number of nodes $n$, edge probability $q \in [0,1]$ and correlation parameter $s \in [0,1]$ such that $s \geq q$, the correlated \ER model, denoted $\G(n,q,s)$, consists of two random graphs $G,G'$ with symmetric adjacency matrices $A,A'$, with same node set $V=[n]$, such that $\set{( A_{u,v},A'_{u,v})}_{u<v\in [n]}$ are i.i.d. pairs of correlated Bernoulli random variables such that
\begin{equation}\label{eq:CER_model}
(A_{u,v},A'_{u,v}) =
\begin{cases}
(1,1) & \text{with probability $qs$}\\
(1,0) & \text{with probability $q(1-s)$}\\
(0,1) & \text{with probability $q(1-s)$}\\
(0,0) & \text{with probability $1-q(2-s)$}.
\end{cases}       
\end{equation}

Note that in this setting, $G$ and $G'$ both have $\G(n,q)$ marginal distributions.

\begin{figure}[h]
	\centering
	\medskip
	\begin{subfigure}[t]{0.49\linewidth}
		\centering
		\hspace{-0.6cm}
		\includegraphics[scale=0.29]{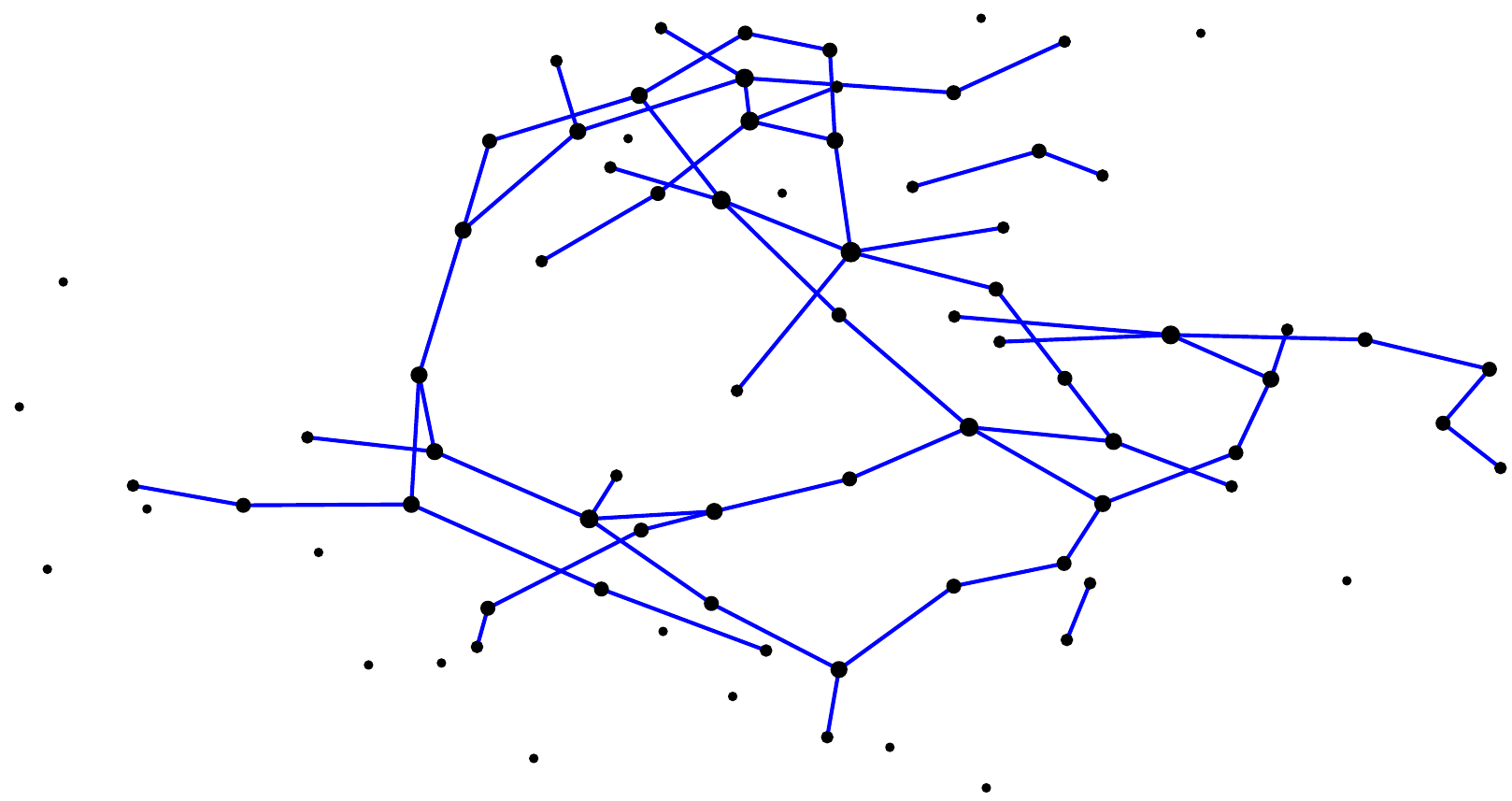}
	\end{subfigure}
	\begin{subfigure}[t]{0.49\linewidth}
		\centering
		\includegraphics[scale=0.29]{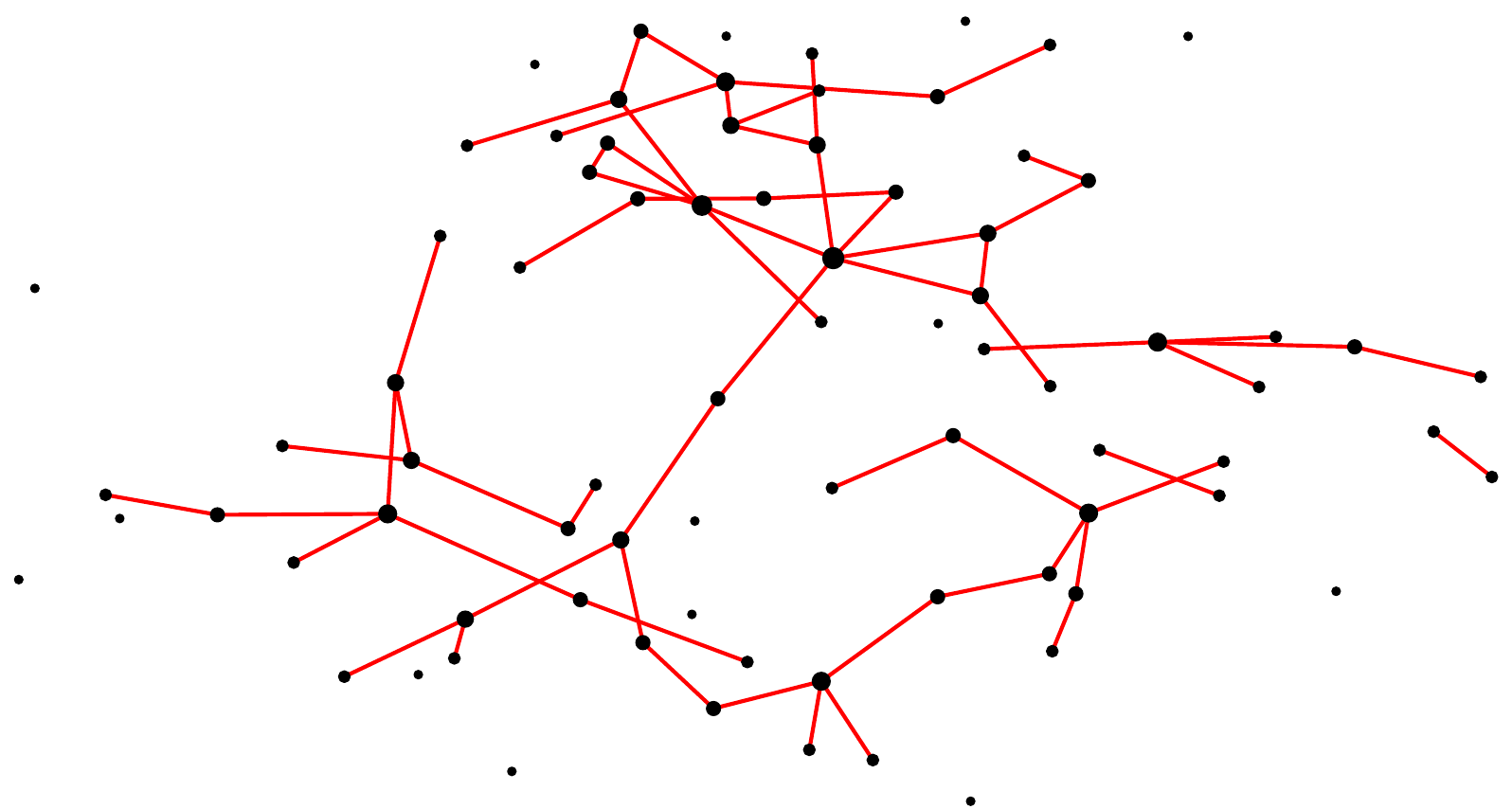}
	\end{subfigure}
	\caption{Two samples from $\G(n,q,s)$ with $n=80$, $p=1.9/n$ and $s=0.8$.}
	\label{fig:intro:Gnqs}
\end{figure}

\begin{remark}
Note that if $s=1$, the graphs $G$ and $G'$ are identical, and in the case $s=q$, the two graphs are independent, that is $\G(n,q,q) \eqd \G(n,q) \otimes \G(n,q) $.
\end{remark}

\begin{remark}
Another equivalent sampling procedure for the correlated \ER model is as follows. Starting from a parent graph $F \sim \G(n,q/s)$, $G$ and $G'$ are obtained by two independent $s-$subsamplings of $F$ (a $s-$subsampling of $F$ simply consists in keeping each edge of $F$ independently with probability $s$).
\end{remark}

\subsubsection{Planting the alignment}

As mentioned earlier, in the planted model for graph alignment, after having generated two labeled correlated graphs $G$ ans $G'$, the last step is to draw the planted permutation $\pi^\star$ uniformly at random in $\cS_n$. and $\Pi^\star$ is the $n \times n$ matrix representation of permutation $\pi^\star$, that is $\Pi^\star_{u,v} = \one_{v = \pi^\star(u)}$.

We then relabel the second graph $G'$ according to this permutation $\pi^\star$, forming the graph $H$ with adjacency matrix $B$ such that for all $1 \leq u,v \leq n$, 
\begin{equation}\label{eq:relabeling_G'_H}
	B_{\pi^\star(u),\pi^\star(v)} = A'_{u,v} \, ,
\end{equation} or, equivalently, $B = (\Pi^{\star})^\top A' \Pi^\star$. 

\begin{figure}[h]
	\centering
	\medskip
	\begin{subfigure}[t]{0.49\linewidth}
		\centering
		\hspace{-0.5cm}
		\includegraphics[scale=0.29]{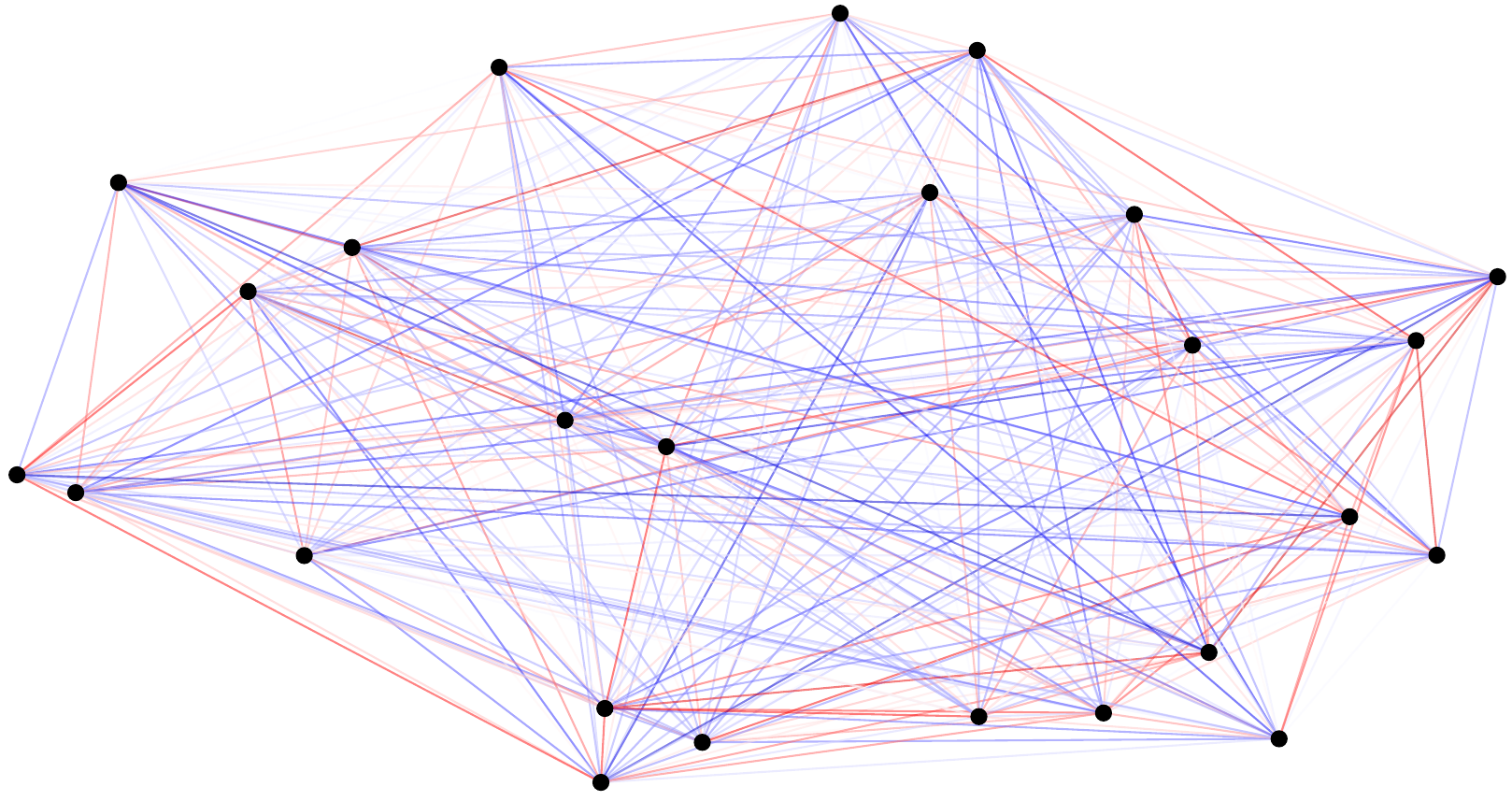}
	\end{subfigure}
	\begin{subfigure}[t]{0.49\linewidth}
		\centering
		\includegraphics[scale=0.29]{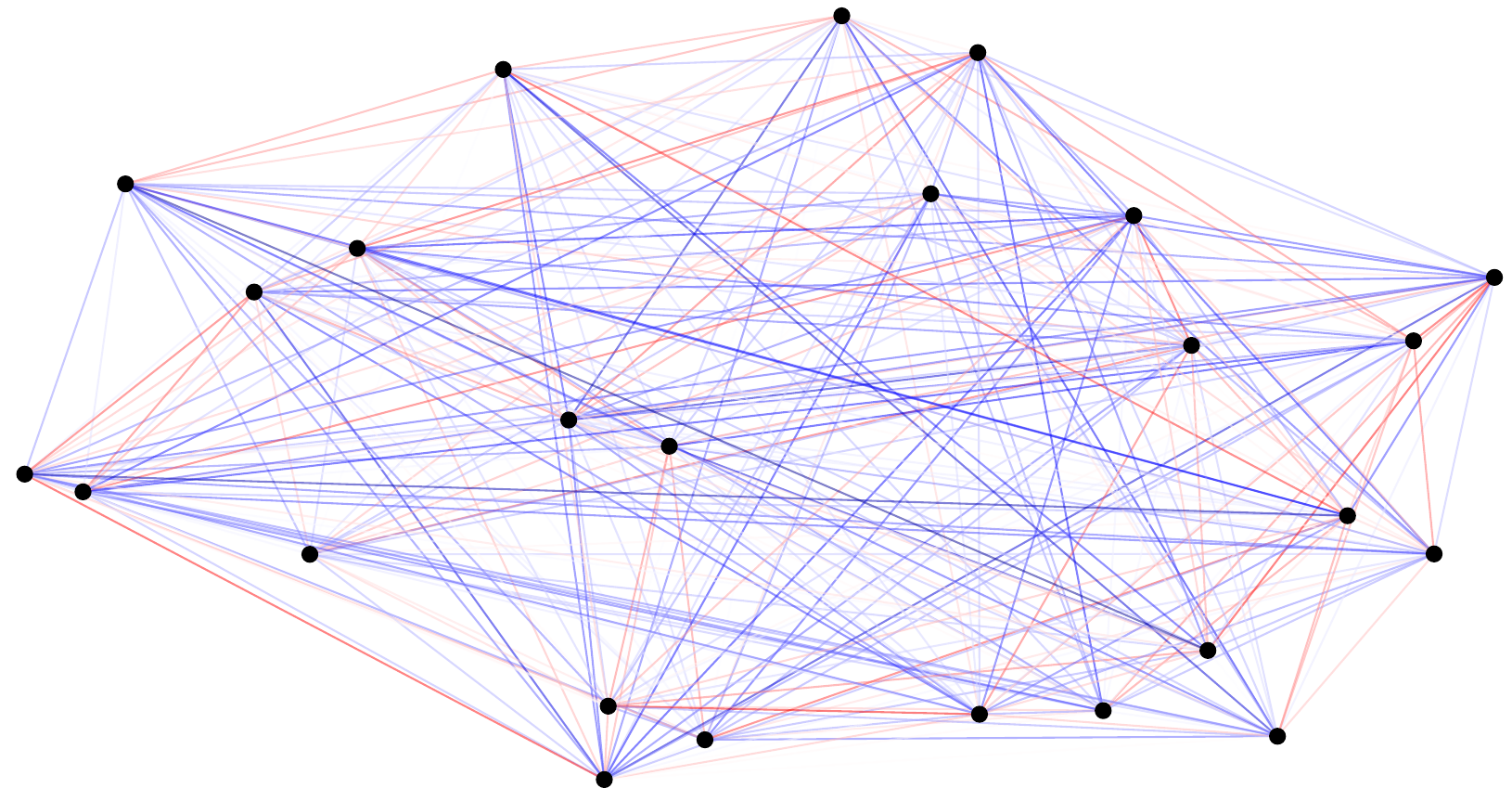}
	\end{subfigure}
	\caption{A sample from model $\Wig'(n,\rho)$ \eqref{eq:CGW_model} with $n=25$ and $\rho = 0.8$. Edges are colored according to their weights. }
	\label{fig:intro:corr_gaussian_wigner}
\end{figure}

\subsubsection{Reconstruction} 
Given the two graphs $(G,H)$ generated from the planted model described above, the reconstruction task consists in finding an estimator $\hat{\pi}$ of the planted solution $\pi^\star$ upon observing $G$ and $H$ (or equivalently $A$ and $B$). 

The performance of any estimator $\hat{\pi} = \hat{\pi}(G,H): [n] \to [n]$ will be assessed through its \emph{overlap} with the unknown planted permutation $\pi^\star$, defined as
\begin{equation}\label{eq:intro:def_overlap}
\ov(\hat{\pi},\pi^\star):=\frac{1}{n}\sum_{u \in [n]}\one_{\hat{\pi}(u) = \pi^\star(u)} \, .
\end{equation} 
The overlap \eqref{eq:intro:def_overlap} is now the measure of performance which we seek to optimize when performing graph alignment in this planted setting, and differs from that of the non-planted case \eqref{eq:QAP}. 

We can however straightaway note that in the \ER setting, the posterior distribution of the planted alignment $\Pi^\star$ is given by
\begin{flalign}\label{eq:posterior_erdos_renyi}
	 \dP(\Pi^* & =  \Pi | A,B)  \nonumber \\
	 & \propto \dP\left(\Pi^* = \Pi , A,B\right) \nonumber \\
	 &= \frac{1}{n!} (qs)^{\frac{1}{2}\langle A, \Pi B \Pi^\top \rangle}(q(1-s))^{\frac{1}{2}\langle A, \one \rangle + \frac{1}{2}\langle \one, B \rangle - \langle A, \Pi B \Pi^\top \rangle} (1-q(2-s))^{\frac{1}{2} \langle \one - A, \one - \Pi B \Pi^\top \rangle} \nonumber \\
	 & \propto \left( \frac{s(1-q(2-s))}{q(1-s)^2}\right)^{\frac{1}{2} \langle A, \Pi B \Pi^\top \rangle} \, ,
\end{flalign} where $\one$ denotes the all-ones matrix. Since $s \geq q$, $\frac{s(1-q(2-s))}{q(1-s)^2}\geq 1$, and the maximum-a-posteriori estimator of $\pi^\star$ given $G,H$ is thus exactly the solution of the QAP \eqref{eq:QAP}. This is again unsurprisingly in accordance with the worst-case/planted duality (see Section \ref{intro:subsection:zoo}). The same result holds in the Gaussian Wigner models $\Wig(n,\xi)$ \eqref{eq:GOE_model} and $\Wig'(n,\rho)$ \eqref{eq:GOE_model2}.

We now specify different types of reconstruction tasks that will be referred to in the rest of the thesis. A sequence of estimators $\{\hat{\pi}_n\}_n$ (i.e. measurable functions of $G,H$) -- omitting the dependence in $n$ -- is said to achieve
\begin{itemize}
	\item \emph{Exact recovery} if $\, \dP(\hat\pi=\pi^\star) \underset{n \to \infty}{\longrightarrow} 1$,
	\item \emph{Almost exact recovery} if $\; \dP(\ov(\hat\pi,\pi^\star)= 1-o(1)) \underset{n \to \infty}{\longrightarrow} 1$,
	\item \emph{Partial recovery} if there exists some $\eps>0$ such that $\; \dP(\ov(\hat\pi,\pi^\star) > \eps) \underset{n \to \infty}{\longrightarrow} 1$ \,.
\end{itemize}

\begin{remark}\label{remark:limits_of_partial_recov}
	Partial recovery consists in ensuring that the estimator $\hat\pi$ matches a non-vanishing fraction of nodes -- we hope this fraction to be as large as possible. 
	Though simple to formulate, from an application standpoint it may however be of little use to know that one has a permutation with 30\% of correctly matched nodes if one does not have a clue about which pairs are correctly matched. 
	Motivated in part by the following remark, we will introduce in Section \ref{intro:subsection:heuristics_tree_graph}, Chapters \ref{chapter:impossibility} and \ref{chapter:MPAlign} another slightly different recovery task, namely one-sided partial recovery, which is believed to be more relevant both for theory and practice.
\end{remark} 

\subsection{A summary of related work}\label{intro:subsection:short_survey}

We give in the following a brief overview of related algorithmic and theoretical contributions, aside from our work.

\subsubsection{Seeded graph alignment} 
An interesting and widely studied setting is graph alignment with presence of side information, namely \emph{seeds}, that are correct pre-mapped vertex pairs. The idea of seeded alignment methods is that a vertex pair $(u,u')$ will have more witnesses -- that is, correct pairs $(w,w')$ such that $u \conn w$ and $u' \conn w'$ -- if they are matched than if they are not. 

It is proved in \cite{lyzinski14_seeded} that when the graphs are dense enough, a logarithmic number of correct seeds is sufficient to recover the whole alignment. To perform this task, several methods are proposed \cite{Pedarsani11,Fishkind19, Shirani2017SeededGM,Yartseva13,Mossel19,Araya22}, some of them relying on percolation techniques \cite{Janson12,Yartseva13}, large neighborhoods statistics \cite{Mossel19}, or projected power method \cite{Araya22}.

An interesting line of work extends the problem in the case where some of them are likely to be incorrect, e.g. since they may be provided by seedless methods. The NoisySeeds algorithm, also built on a percolation procedure, is proposed in \cite{kasemi15}, and \cite{lubars18} uses 1-hop witnesses to recover the full alignment. 
The recent paper \cite{Yu21} establishes information-theoretic results for seeded alignment in the noisy case, and proposes a method which considers both 1-hop and 2-hop witnesses, improving previous theoretical guarantees. 

\subsubsection{Information-theoretic results}

First fundamental results for \ER graph alignment are due to Pedarsani and Grossglauser \cite{Pedarsani11}, followed by Cullina and Kiyavash \cite{Cullina2017} who prove that under some mild sparsity constraints, feasibility of exact alignment exhibits a sharp threshold at $nqs \simeq \log n$. Their approach is based on the analysis of the maximum a posteriori estimator for the positive side, and the impossibility result is the consequence of the large number\footnote{Note that the illustrative result in Theorem \ref{intro:theorem:connectivity_ER} proved earlier on gives a short proof of this fact.} of automorphisms of an \ER graph with mean degree less than $\log n$ \cite{Bollobas2001}.

Results for almost-exact recovery proved in \cite{Cullina18} establish that almost-exact recovery is feasible if and only if $nqs \to +\infty$, under some mild sparsity assumptions.

These reconstruction thresholds are sharpened by the recent work \cite{Wu2021SettlingTS}. This paper shows, among other results, a sharp all-or-nothing phenomenon in the Gaussian setting at $n \rho^2= 4\log n$. If $n \rho^2 > (4+\eps)\log n$, exact alignment is feasible, whereas below the threshold even partial recovery is infeasible. 

For dense \ER graphs with $q/s = n^{-o(1)}$, another phase transition arise between infeasibility of partial alignment and possibility of almost-exact alignment, at 

\begin{equation}\label{eq:sharp_IT_partial_almost_exact}
	\frac{nqs(\log(s/q)-1+q/s)}{\log n} = 2 \, .
\end{equation}


Partial recovery was first studied by Hall and Massoulié \cite{Hall20}, who showed that $nqs \to 0$ is an impossibility condition, whereas $nqs > C$ (with a large, non-explicit constant $C$), together with some additional sparsity constraints, ensures feasibility. These results are improved in \cite{Wu2021SettlingTS}, where the authors show that if $q = \lambda/n$, and $s$ is an constant, then partial recovery is shown to be feasible when $nqs > 4+\eps$. The impossibility result requires $nq/s = \omega(\log^2 n)$ and $nqs < 1- \eps$. 

At the time this manuscript is being completed, a very recent contribution \cite{Ding22} sharpens this last result in the case where $q/s = n^{-\alpha+o(1)}$, showing a sharp threshold for partial recovery at $nqs = \lambda^\star(\alpha)$, where $\lambda^\star(\alpha)$ is given as the asymptotic maximal edge-vertex ratio over all nonempty subgraphs of an \ER graph $\G(n,\frac{1}{\alpha n})$.

\subsubsection{Algorithms for exact recovery in the dense case}
As mentioned earlier, the vast majority of prior work focused on exact recovery with no side information, seeking to provide polynomial-time (or quasi-polynomial time) algorithms that (sometimes provably) recover the entire permutation $\pi^\star$ under some conditions on the parameters $n,q,s$ (or $n,\rho,\xi$ is the Gaussian setting).\\

\textit{Spectral methods.}
A first spectral method for recovery is due to \cite{Umeyama88}, and uses spectral decompositions and relaxation of the QAP on the orthogonal group. Another spectral, rank-reduction method is proposed by Feizi et al. \cite{Feizi16}, and an another algorithm, \alg{GRAMPA}, is proposed and analyzed  in \cite{Fan2019Wigner, fan2019ERC}, both for the Wigner and the \ER model. \alg{GRAMPA} builds a similarity matrix based on outer products between pairs of eigenvectors of the two graphs, and outputs a matching via a rounding procedure. \\

\textit{QAP relaxations.}
A class of algorithms designed for recovery follows a Frank-Wolfe approach (see \cite{Anstreicher2002,Vogelstein2011,Bach09}) , which consists in relaxing the integer programming formulation of the QAP \eqref{eq:QAP} to a continuous optimization problem on which iterative linearized procedures are used, and then projecting the final iterate on the space of solutions. Every linear optimization step at each iteration is a \emph{linear assignment problem (LAP)} of the form
\begin{equation}\label{eq:LAP}
\argmax_{\Pi \in \cS_n} \langle \Pi u, v \rangle  \, ,
\end{equation} with $u,v \in \dR^n$, which can be solved using the Hungarian algorithm \cite{Kuhn55} in $O(n^3)$ time complexity. In particular, authors of \cite{Bach09} study a path following algorithm on a concave relaxation of the QAP. 

The question of giving theoretical guarantees on the performance of such relaxations of the QAP is interestingly discussed in \cite{Lyzinski14}. In this paper, the common convex relaxation of the QAP which consists in minimizing $\| AD - DB\|^2$ over all doubly-stochastic matrices $D$ is proved to almost always fail, whereas the indefinite relaxed
graph alignment problem\footnote{Note that since $\|DA\|^2 \neq \|A\|^2$ in general, these two relaxations have now different solutions. Moreover, the objective in this second relaxation is no more convex in $D$, the Hessian being indefinite.} which minimizes $-\langle AD , DB\rangle$ over all doubly-stochastic matrices $D$ almost always discovers the true permutation, if solved exactly. Though non-convex quadratic programming is NP-hard in general, this indefinite relaxation can still be efficiently approximately solved with the Frank-Wolfe methodology evoked here above. 

Relevant to QAP relaxation methods is the recent contribution \cite{dym2017ds++} which proposes a convex quadratic programming relaxation, proved to be more accurate than the classical double-stochastic and spectral relaxations, although with same time complexity as the former.\\

\textit{Methods using network topology.} Another class of methods are based on the exploration of the network topology, in order to design vertex signatures that can efficiently recover the matched pairs. Ding et al. \cite{Ding18} introduced a matching procedure based on degree profiles, that is the empirical distribution of the degrees of neighbors. A method proposed in \cite{Barak2019} relies on counting copies of subgraphs adjacent to a given node, for a well-chosen family of graphs. In recent contributions, Mao, Rudelson and Thikhomirov design an method involving a two-generation partitioning procedure \cite{Mao21} and another algorithm \cite{Mao21_constant_corr} based on comparison of partition trees associated with the graph vertices. These methods are shown to improve the previously state-of-the art performances in terms of noise robustness (see below). \\

\textit{Theoretical guarantees.} From the methods for exact recovery cited here above, those giving rigorous theoretical guarantees all require a mean degree at least $nq \geq \polylog n$, and $1-s \leq 1/(\polylog n)$ in the \ER setting or $1-\rho^2 \leq 1/(\polylog n)$ in the Gaussian setting, that is a correlation close enough to $1$. The only exceptions for exact recovery are the recent works of Mao, Rudelson and Thikhomirov: in the \ER model, \cite{Mao21} tolerates a noise $1-s$ up to $(\log\log n)^{-c}$, and \cite{Mao21_constant_corr} can tolerate up to constant noise -- the constant being unspecified. The recent algorithm proposed in \cite{Araya22} can also tolerate constant noise for exact recovery in the seeded setting.

Note that these methods do not work in the sparse setting where both the correlation and the mean degree are constant, setting on which we will focus in Chapters \ref{chapter:impossibility}, \ref{chapter:NTMA} and \ref{chapter:MPAlign}.

\subsubsection{Detection problem}
Aside from the reconstruction tasks, the detection has less been studied until very recently. Given $n,q,s$, the associated hypothesis problem is as follows: $$\cH_0 := `` (G,H) \mbox{ are two independent $\G(n,q)$ graphs}"$$ versus $$\cH_1 := `` (G,H) \mbox{ are drawn under the \ER planted model}" \,.$$ 

Wu, Xu and Yu \cite{Wu20} give fundamental results for the detection problem. They establish a sharp threshold for detection in the Gaussian model at $n\rho^2 / \log n = 4$, and show that in the dense case $q/s = n^{-o(1)}$, the sharp threshold for partial/almost-exact alignment given in \eqref{eq:sharp_IT_partial_almost_exact} also holds for detection. The picture in the sparser case $q/s=n^{-\Omega(1)}$ is however less clear, but in the case where $q=\lambda/n$ and $s$ is a constant, their result implies that strong detection is feasible if $\lambda s >2$ and infeasible if $\lambda s<1$ and $s<0.01$.

Improving on a previous work \cite{Barak2019}, the state-of-the art algorithm for this detection task is proposed in \cite{Mao21_counting} which consider a test based on counting trees in the two graphs. This algorithm runs in $O(n^{2+o(1)})$ time and is proved to succeed with high probability if $n\min(q,1-q) \geq n^{-o(1)}$ (this assumption is very mild) and the correlation coefficient (which is asymptotically $s$ if $q \to 0$) is greater that $\sqrt{\alpha} \sim 0.58$, where $\alpha$ is Otter's constant \cite{Otter48}, defined as the inverse of the exponential growing rate of the number of unlabeled trees with $K$ edges. This algorithm also improves on previous IT results and will be the object of further discussion in Chapter \ref{chapter:MPAlign}.

\section{Correlation detection in random trees}\label{intro:section:cdt}
In this Section, we introduce a problem which will be at the heart of Chapters \ref{chapter:NTMA} and \ref{chapter:MPAlign}: correlation detection in random trees. 

\subsection{Problem statement}

The problem of detecting correlation in random trees is a fundamental statistical task, consisting in deciding whether two rooted trees are correlated up to a relabeling of the nodes, that is if they contain a common planted subtree, or if they are independent. 

This problem could very well be defined per se and studied as such; we nevertheless explain briefly how this problem arises from the study of sparse graph alignment. Let us imagine that we are given correlated graphs $G,H$ from the \ER planted model, and that one would like to know whether node $u \in V(G)$ is matched to $u' \in V(H)$, namely if $u' = \pi^\star(u)$. An answer to this question can be to build an estimator $\hat\pi$ such that $\hat\pi(u)=u'$ if and only if the local structure of graph $G$ in the neighborhood of node $u$ is somehow 'close' to the local structure of graph $H$ in the neighborhood of node $u'$. 

In the sparse regime, it is well known that the neighborhoods up to distance $d$ of node $u$ (resp. $u'$) in $G$ (resp. $G'$), are both asymptotically distributed as Galton-Watson branching trees\footnote{This convergence in in fact even stronger, since it happens in in the sense of \emph{Benjamini-Schramm}, see \cite{Benjamini2011}.}. More specifically, if $u' = \pi^\star(u)$, then the pair of neighborhoods are asymptotically jointly distributed as correlated Galton-Watson branching trees (distribution denoted $\dPls_{d}$). On the other hand, for pairs of nodes $(u,u')$ taken at random in $[n]$, the neighborhoods are asymptotically independent Galton-Watson branching trees (distribution denoted $\dPl_{d}$). Hence, we are now left with the problem of detecting correlation in random trees.

\subsubsection{Rooted labeled trees} 
A \emph{labeled rooted tree} $t=(V,E)$ is an undirected graph with node set $V$ and edge set $E$ which is connected and contains no cycle. The \emph{root} of $t$ is a given distinguished node $\rho \in V$, and the \emph{depth} of a node $u$ is defined as its graph distance to the root $\rho$. The depth of tree $t$ is given as the maximum depth of all nodes in $t$. 

In a rooted tree $t$, each node $u$ at depth $d \geq 1$ has a unique \emph{parent} in $t$, which can be defined as the unique node at depth $d-1$ on the path from $u$ to the root $\rho$. Similarly, the \emph{children} of a node $u$ of depth $d$ are all the neighbors of $u$ at depth $d+1$. For any node $u$ of the tree $t$, we denote by $t_u$ the subtree of $t$ rooted at node $u$, that is the tree obtained by deleting the edge between $u$ and its parent and keeping the connected component of $u$.

\subsubsection{Models of random trees, hypothesis testing}
We describe hereafter models of random trees that will be useful in the sequel. For more detailed definitions we refer to Chapter \ref{chapter:NTMA}, Section \ref{NTMA:subsection:models_trees}.\\

\textit{Galton-Watson trees with Poisson offspring.} The \emph{Galton-Watson tree with offspring $\Poi(\mu)$ up to depth $d$}, denoted by $\GWmu_d$, is defined recursively as follows. First, the distribution $\GWmu_0$ is a Dirac at the trivial tree only consisting in the root. Then, for $d \geq 1$, sample a number $Z \sim \Poi(\mu)$ of independent $\GWl_{d-1}$ trees, and attach each of them as children of the root, to form a tree of depth at most $d$. \\

\textit{Tree augmentation.} For $\lambda >0$ and $s \in [0,1]$, a (random) \emph{$(\lambda,s)-$augmentation} of a given tree $\tau =(V,E)$, denoted by $\Augls_d(\tau)$, is defined as follows. First, attach to each node $u$ in $V$ at depth $<d$ a number $Z^{+}_u$ of additional children, where the $Z^{+}_u$ are i.i.d. of distribution $\Poi(\lambda (1-s))$. Let $V^+$ be the set of these additional children. To each $v \in V^+$ at depth $d_v \in [d]$, we attach another random tree of distribution $\GWl_{d-d_v}$, independently of everything else.\\

We are now ready to describe the two models $\dPl_{d}$ and $\dPls_{d}$ in simple words. Under the independent model $\dPl_{d}$, $T$ and $T'$ are two independent $\GWl_{d}$ trees. 
The correlated model $\dPls_{d}$ is built as follows. Starting from an \emph{intersection tree} $\tau^\star \sim \GWls_d$, and $T$ and $T'$ are obtained as two independent $(\lambda,s)-$augmentations of $\tau^\star$.

In both modes, the labels of the trees are always forgotten, or randomly uniformly re-sampled. We however still distinguish the roots af the two trees. It can easily be verified that the marginals of $T$ and $T'$ are the same under $\dPl_{d}$ and $\dPls_{d}$, namely $\GWl_d$. The parameters are $\lambda$, the mean number of children of a node, and the correlation $s$.

\begin{figure}[h]
	\centering
	\begin{subfigure}[b]{\textwidth}
		\centering
		\includegraphics[scale=0.4]{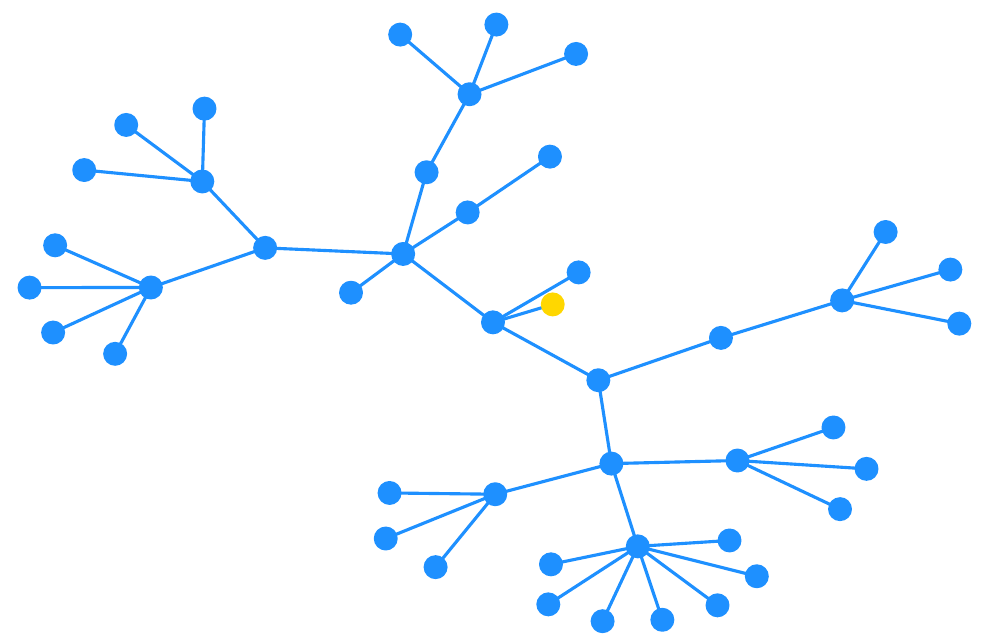}
		\hspace{0.25cm}
		\includegraphics[scale=0.4]{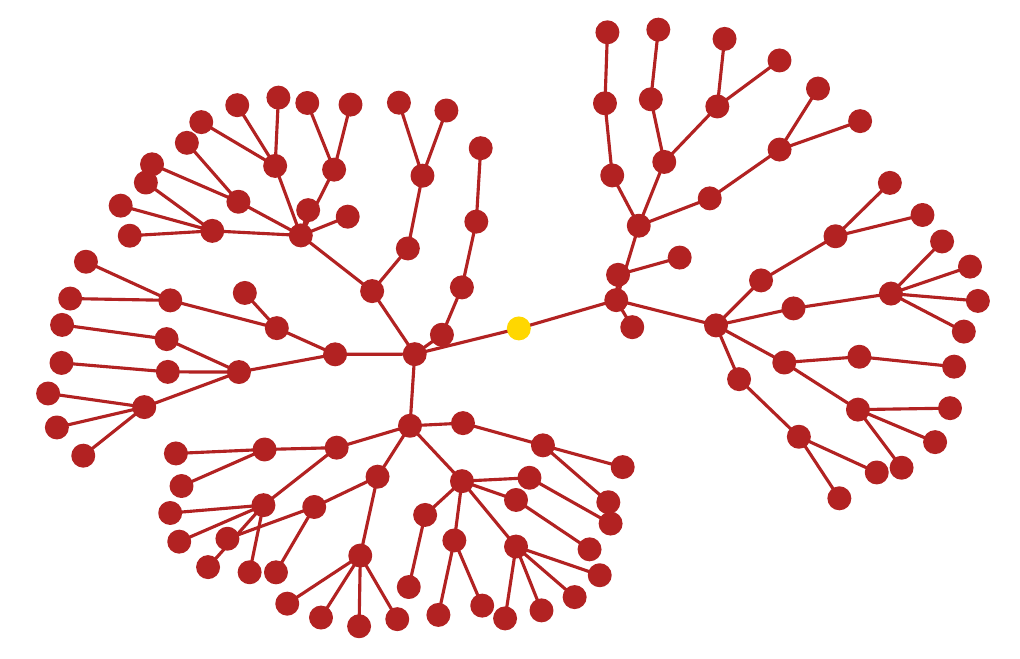}
		\caption{Samples $T,T'$ from $\dPl_{d}$.}
		\label{fig:intro:P0}
	\end{subfigure}
	\begin{subfigure}[b]{\textwidth}
		\vspace{0.3cm}
		\centering
		\includegraphics[scale=0.4]{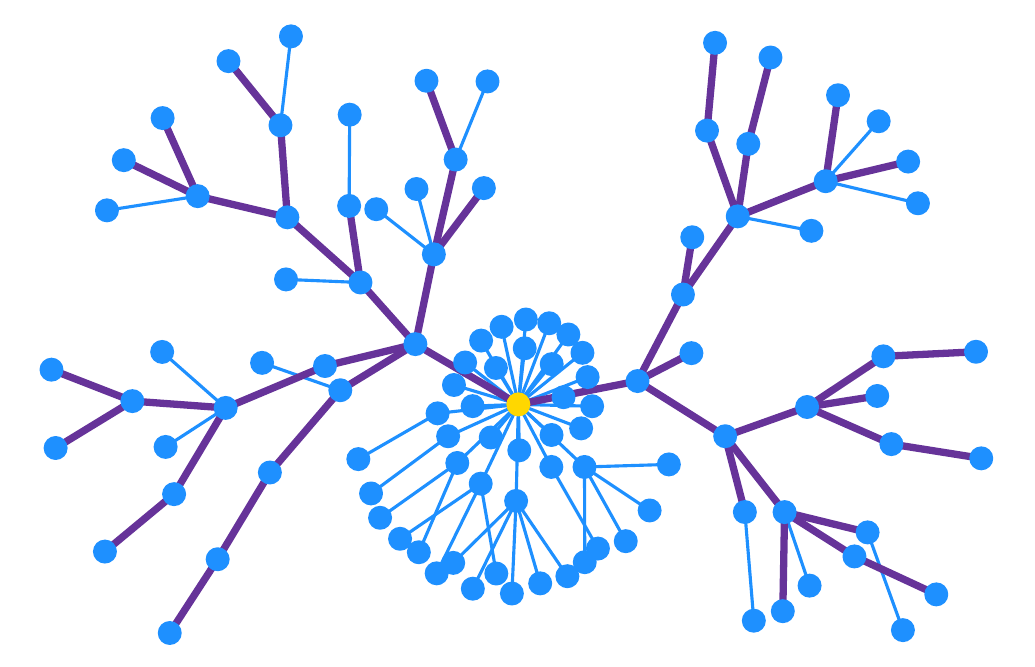}
		\hspace{0.25cm}
		\includegraphics[scale=0.4]{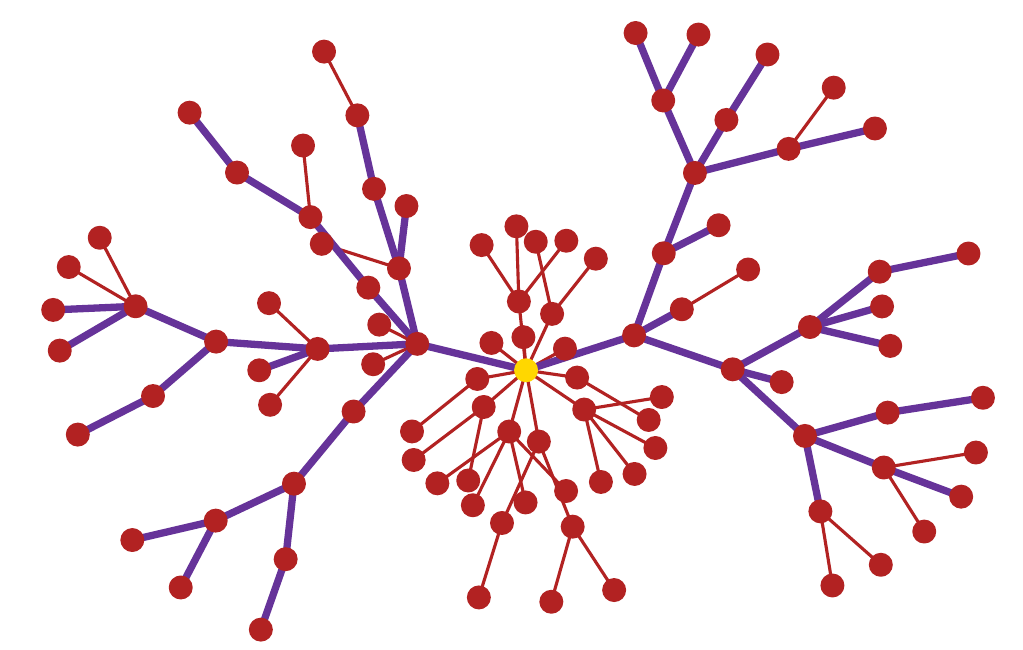}
		\caption{Samples $T,T'$ from $\dPls_{d}$. The common subtree $\tau^\star$ is drawn thick and purple.}
		\label{fig:intro:P1}
	\end{subfigure}
	
	\caption{Samples from models $\dPl_{d}$ and $\dPls_{d}$, with $\lambda = 1.8$, $s=0.8$, and $d=5$. The root node is highlighted in yellow. Labels are forgotten.}
	\label{fig:samples_P01}
\end{figure}

\subsection{Hypothesis testing, one-sided test}\label{intro:subsection:hypothesis_testing_trees}
The corresponding hypothesis test can be formalized as follows: given the observation of a pair of trees $(T,T')$ of depth at most $d$, we want to test
\begin{equation}
\label{eq:test_hypotheses_p}
\cH_0 = \mbox{"$T,T'$ are drawn under $\dPl_d$"} \quad \mbox{versus} \quad \cH_1 = \mbox{"$T,T'$ are drawn under $\dPls_d$"}.
\end{equation} 

In statistical detection problems (see Section \ref{intro:section:inference_on_rg}), the commonly considered tasks are that of  
\begin{itemize}
	\item \emph{strong detection}, i.e. designing tests $\cT_d$ that verify
	\begin{equation*}
	\underset{d \to \infty}{\lim} \left[\dPl_d\left( \cT_n(T,T') = 1 \right) + \dPls_d\left( \cT_n(T,T') = 0 \right)\right] = 0,
	\end{equation*}
	\item \emph{weak detection}, i.e. tests $\cT_n$ that verify
	\begin{equation*}
	\underset{d \to \infty}{\limsup} \left[\dPl_d\left( \cT_n(T,T') = 1 \right) + \dPls_d\left( \cT_n(T,T') = 0 \right)\right] <1,
	\end{equation*}
\end{itemize} In other words, strong detection corresponds to exactly discriminate w.h.p. between $\dPl_d$ and $\dPls_d$, whereas weak detection corresponds to strictly outperforming random guessing. We here argue that neither strong detection nor weak detection are relevant for our problem. 

First, because of the event  that the intersection tree does not survive, which is of positive probability  under $\dPls_d$: we always have $\dPls_d(t,t') \geq C_{\lambda,s} \cdot \dPl_d(t,t')$ for some $C_{\lambda,s}>0$. This implies that $\dPl_d$ is always absolutely continuous w.r.t. $\dPls_d$, hence strong detection can never be achieved.

Second, weak detection is always achievable as soon as $s>0$: with the same notations as here above, the difference of the degree of the root in $T$ and that of the root in $T'$ is always centered but has different variance under $\dPl_d$ and under $\dPls_d$, hence these two distributions can be weakly distinguished, without any further assumption than $s>0$. 

Moreover, if we want our test to be relevant for partial alignment -- for which we know that only a fraction of the nodes can be recovered -- it is natural to require a positive power (i.e., being able to detect matched nodes with some positive probability), but also to ensure that the output of the algorithm contain almost no wrong pair (i.e. imposing a vanishing type I error).\\

We are thus interested in being able to ensure the existence of an (asymptotic) \textit{one-sided test}, that is a test $\cT_d: \cX_d \times \cX_d \to \left\lbrace 0,1 \right\rbrace$ such that $\cT_d$ chooses hypothesis $\cH_0$ under $\dPl_d$ with probability $1-o(1)$, and chooses $\cH_1$ with some positive probability uniformly bounded away from 0 under $\dPls_d$. 

\subsection{Two methods} 

We now give the outline of two methods for detection of correlation in random trees, that will be the object of Chapters \ref{chapter:NTMA} and \ref{chapter:MPAlign}.

\subsubsection{Tree matching weight}
In Chapter \ref{chapter:NTMA}, we build a test based on a measure of similarity between two trees: the \emph{tree matching weight}.\\

\textit{Matching weight of two rooted trees.} For any $d \geq 0$, let $\cA_d$ denote the collection of rooted trees whose leaves are all of depth $d$. Given two rooted trees $t$ and $t'$ of depth at most $d$, let $M(t,t')$ denote the collection of trees $\tau \in\cA_d$ such that there exist injective embeddings of $\tau$ in $t$ and $t'$ that preserve the rooted tree structure, that is the depth of the nodes and the child-parent relationship. The \emph{matching weight of trees $t$ and $t'$ at depth $d$} is then defined as:
\begin{equation}\label{eq:def:matching_weight}
	\cW_d(t,t'):=\sup_{\tau \in M(t,t')}\card{\cL_d(\tau)},
\end{equation} where $\cL_d$ is the number of leaves at depth $d$ of tree $\tau$. In other words, the tree matching weight fo a pair of trees is defined as the maximal size of a common subtree, measured in terms of number of leaves. \\

\begin{figure}[h]
	\centering
	\includegraphics[scale=1]{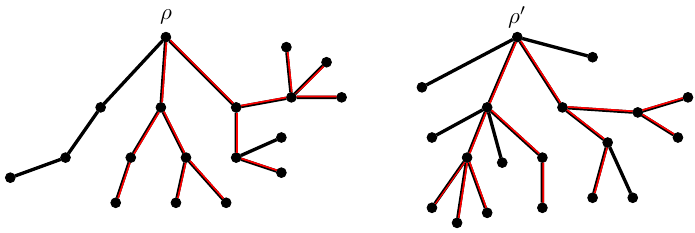}
	\caption{\label{image_tree_matching_weight} Example of two trees $t$, $t'$ with $\cW_3(t,t')=7$, where an optimal $t \in \cA_3$ is drawn in red.}
\end{figure}
\textit{Recursive computation of $\cW_d$.}
From the previous definition, a first step conditioning yields a recursion formula for the matching weight $\cW_d$, of the following form:
\begin{equation}
\label{eq:intro:rec_formula_W(i,u)}
\cW_d(t,t') = \sup_{\mathfrak{m} \; : \; \cM(C_t, C_{t'})} \sum_{(u,u') \in \mathfrak{m}} \cW_{d-1}(t_u,t'_{u'}) \, ,
\end{equation}where the supremum is taken over all matchings, that is one-to-one mappings $\mathfrak{m}$ from a subset of the root's children set $C_t$ in $t$ to the root's children set $C_{t'}$ in $t'$. This recursion formula \eqref{eq:intro:rec_formula_W(i,u)} is at the heart of the analysis of this statistic as well as the design of related algorithms. The general idea is that if the trees are correlated, with positive probability they will tend to have a significantly higher matching weight than if they are independent.  We refer to Section \ref{NTMA:subsection:recursionW} for a proof of \eqref{eq:intro:rec_formula_W(i,u)} and more generally to Chapter \ref{chapter:NTMA} for the study of this method.

\subsubsection{The likelihood ratio}
In Chapter \ref{chapter:MPAlign}, we are interested in studying the existence of one-sided tests for the task, which we recall are tests guarantying an asymptotic vanishing type I error and non vanishing power. According to the Neyman-Pearson Lemma, optimal one-sided tests are based on the \emph{likelihood ratio} $L_d$ of the distributions under the distinct hypotheses $\dPls_{d}$ and $\dPl_{d}$. For a pair of trees $(t,t')$, this likelihood ratio is given by
\begin{equation}
\label{eq:intro:def_LR}
L_d(t,t'):= \frac{\dPls_{d}(t,t')}{\dPl_d(t,t')} \, .
\end{equation}

\textit{Recursive computation of $L_d$.}
This likelihood ratio also satisfies the following nice recursive property:

\begin{equation}\label{eq:intro:LR_rec}
L_d(t,t')=\sum_{k=0}^{c \wedge c'}\psi(k,c,c')\sum_{\substack{\sigma : [k] \to [c] \\ \sigma' : [k] \to [c']}}\prod_{i=1}^k L_{n-1}(t_{\sigma(i)},t'_{\sigma'(i)}) \, ,
\end{equation} where $c$ (resp. $c'$) is the degree of the root in $t$ (resp. in $t'$), and the second sum of the RHS is taken over injective mappings $\sigma$ and $\sigma'$. The coefficients $\psi(k,c,c')$ are given by
\begin{equation*}
\psi(k,c,c') = e^{\lambda s} \times \frac{s^k \bar{s}^{c+c'-2k}}{\lambda^{k} k!} \, .
\end{equation*}

The idea here again is that with positive probability the likelihood ratio is going to be significantly larger for correlated trees than for independent pairs, hence a test $\cT_d$ of the form $\cT_d(t,t') = \one_{L_d(t,t')>\beta_d}$ for a well chosen threshold $\beta_d$ should solve one-sided detection. We refer to Section \ref{MPAlign:subsection:recursion_L} for the details and proof of \eqref{eq:intro:LR_rec}, and more generally to Chapter \ref{chapter:MPAlign} for a thorough study of this method.

\subsection{Heuristics for partial graph alignment}\label{intro:subsection:heuristics_tree_graph}

We briefly state the results that establish a link between tree correlation detection and graph alignment. Given that the tests considered above are one-sided tests, we are going to perform \emph{one-sided partial recovery}.
 
\subsubsection{One-sided partial recovery}
In order to define this notion, we are left to consider estimators of $\pi^\star$ that are no longer necessarily permutations, but only one-to-one functions from a subset $\cC \subset [n]$ of the node set of $G$ to the node set of $H$ -- which we recall is also $[n]$. 

For any subset $\cC \subset [n]$, the performance of any one-to-one estimator $\hat{\pi}: \cC \to [n]$ is still assessed through its overlap $\ov(\hat{\pi},\pi^\star)$, defined as in \eqref{eq:intro:def_overlap} by:
\begin{equation*}
\ov(\hat{\pi},\pi^\star)=\frac{1}{n}\sum_{u \in \cC}\one_{\hat{\pi}(u) = \pi^\star(u)} \, .
\end{equation*} Note that the estimator may not be in $\cS_n$, and only consist in a partial matching. We also need to define the \emph{error fraction} of $\hat{\pi}$ with the unknown permutation $\pi^\star$: 
\begin{equation}\label{eq:intro:def_error}
\err(\hat{\pi},\pi^\star) := \frac{1}{n}\sum_{u\in \cC}\one_{\hat{\pi}(u) \neq \pi^\star(u)}  = \frac{\card{\cC}}{n} - \ov(\hat{\pi},\pi^\star).
\end{equation} 
A sequence of injective estimators $\{\hat{\pi}_n\}_n$ -- omitting the dependence in $n$ -- is said to achieve one-sided partial recovery if there exists some $\eps>0$ such that w.h.p. $ \ov(\hat{\pi},\pi^\star) > \eps$ and also $\err(\hat{\pi},\pi^\star) = o(1)$.

We end this Section and the introduction with an informal statement that will be discussed further in Chapter \ref{chapter:MPAlign}.

\begin{informal*}
	For given $(\lambda,s)$, if there exists a one-sided test for tree correlation detection, then one-sided partial alignment in the correlated \ER model $\cG(n,\lambda/n,s)$ is achieved in polynomial time by the \alg{MPAlign} algorithm defined in Chapter \ref{chapter:MPAlign}.
\end{informal*}

\chapter{Alignment of graph databases with Gaussian weights: fundamental limits}\label{chapter:gaussian_alignment_IT}
In this chapter, we study the fundamental limits for reconstruction in weighted graph (or matrix) database alignment. We consider the Wigner model $\Wig'(n,\rho)$ \eqref{eq:CGW_model}, and we prove that there is a sharp threshold for exact recovery of $\pi^\star$: if $n \rho^2 \geq (4+\eps) \log n + \omega(1)$ for some $\eps>0$, there is an estimator  $\hat{\pi}$ -- namely the MAP estimator -- based on the observation of databases $A,B$ that achieves exact reconstruction with high probability. Conversely, if $n \rho^2 \leq 4 \log n - \log \log n - \omega(1)$, then any estimator $\hat{\pi}$ verifies $\hat{\pi}=\pi^\star$ with probability $o(1)$. 
		
This result shows that the information-theoretic threshold for exact recovery is the same as the one obtained for detection in \cite{Wu20}: in other words, for Gaussian weighted graph alignment, the problem of reconstruction is fundamentally not more difficult than that of detection. 
		
The proofs build upon the analysis of the MAP estimator and the second moment method -- introduced earlier in Section \ref{intro:subsection:basics_rg} -- together with the study of the correlation structure of energies of permutations.\\

This chapter is based on the paper \textit{Sharp threshold for alignment of graph databases with gaussian weights} \cite{Ganassali21MSML}, published at \emph{MSML 2021}.
	
\section{Introduction}
\subsection{Aligning databases}
We address the following problem: suppose that we have two databases consisting in weighted graphs represented by their adjacency matrices $A$ and $B$. For simplicity, assume that the two graphs have same size and that each individual appears in both graphs. For a given individual, its attached signal consists in weighted edges with all other users. Across databases, edges that correspond to pairs of matched individuals are correlated. We consider the following question: \emph{if the graphs are shown unlabeled (that is, if users are anonymized), is it possible to recover the corresponding matching between databases by aligning them at the sight of their correlation structure?}

Intuitively, when the matrices are correlated enough, one can learn the true matching between individuals present in the databases. We investigate the precise conditions on correlation under which exact reconstruction (or perfect de-anonymization) is feasible with high probability.

As mentioned in Section \ref{intro:subsection:motivations}, \emph{de-anonymization problems} aroused great interest when \cite{Narayanan08} were able to de-anonymize an unlabeled dataset of film ratings with the observation of a publicly available database, using correlations between the ratings. Since then, some authors have sought to quantify privacy issues related to databases \cite{Dwork08} or social networks \cite{Narayanan09}, one of the starting points of the
widespread attention given on the more general \emph{graph alignment problem}. We refer to Section \ref{intro:subsection:motivations} for further applications and to Section \ref{intro:subsection:short_survey} for a survey of theoretical results.
	
\subsubsection{Vector-shaped and graph-shaped databases} From the theoretical point of view, fundamental limits for the deanonymisation problem are now well understood when data only consists in vectors  $u,v$ of size $n$ (or more generally, rectangular databases of size $n \times k$) \cite{Cullina18data, Dai19}, that is when each user has its own signal, regardless of its connections with others. In this setting, the problem can be phrased in terms of a \emph{Linear Assigment Problem (LAP)}:
\begin{equation}\label{eq:LAP_2}
\argmax_{\Pi} \langle \Pi u,v \rangle,
\end{equation} where the maximum runs over all permutation matrices of size $n$. LAP can be solved efficiently in $O(n^3)$ steps using the classical Hungarian algorithm (\cite{Kuhn55}).

Another related problem is that of linear regression with an unknown permutation, studied in \cite{Pananjady16}: this time, one observes $y = \Pi^\star A x^\star + w$, where $x^\star \in \dR^d$ is  an unknown vector, $\Pi^\star$ is an unknown $n \times n$ permutation matrix, and $w \in \dR^n$ is additive Gaussian noise. Here again, the permutation $\Pi^\star$ applies only on the left side of $A$, which corresponds to row permutation.

On the other hand, we recall that when the databases are graphs, the problem is different and can be phrased this time in terms of a \emph{Quadratic Assigment Problem (QAP)}:
\begin{equation}\label{eq:QAP_2}
\argmax_{\Pi} \langle A ,\Pi B \Pi^T \rangle.
\end{equation} A significant difference with the previous vector-shaped setting is that this problem is known to be NP-hard in the worst case, as well as some of its approximations \cite{Makarychev14,Pardalos94}. In the case where the signal lies in the graph structure itself -- that is, when the pairs $(A_{u,v}, B_{\pi^\star(i),\pi^\star(j)})$ are correlated pairs of Bernoulli variables -- \cite{Cullina2017} shows that there exists a sharp threshold for exact recovery, where the signal-to-noise ratio can be expressed in the correlated \ER model in terms of the size $n$ of both graphs, the marginal edge probability $q$ and the correlation parameter $s$ between edges of the two graphs. Namely, this sharp threshold is at $nqs \sim \log n$. 

This chapter focuses on the case where signal lies in weights on edges between all pairs of nodes. We recall hereafter the correlated Gaussian Wigner model $\Wig'(n,\rho)$ defined in \eqref{eq:CGW_model}.

\subsubsection{Model of Gaussian Wigner matrices} 
In the correlated Gaussian Wigner model $\Wig'(n,\rho)$ \eqref{eq:CGW_model}, the weighted adjacency matrices $A$ and $B$ of the two graphs $G$ and $H$ are symmetric, and sampled as follows: first draw the planted permutation $\pi^\star$ uniformly at random in $\cS_n$. Then all pairs of edge weights $(A_{u,v}, B_{\pi^\star(u),\pi^\star(v)})_{1 \leq i<j \leq n}$ are i.i.d. couples of normal variables with zero mean, unit variance and correlation parameter $\rho \in [0,1]$.
Since all Gaussian variables are independent from $\pi^\star$, matrix $B$ can also be drawn from $A$ as follows: 
\begin{equation}\label{eq:model}
B = \rho \cdot \Pi^{\star \top} A\Pi^\star + \sqrt{1-\rho^2}  \cdot H,
\end{equation} where $H$ is an independent copy of $A$, and $\Pi^\star$ is the $n \times n$ matrix representation of permutation $\pi^\star$, that is $\Pi^\star_{u,v} = \one_{v = \pi^\star(u)}$.

\begin{figure}[h]
	\centering
	\includegraphics[scale=1.3]{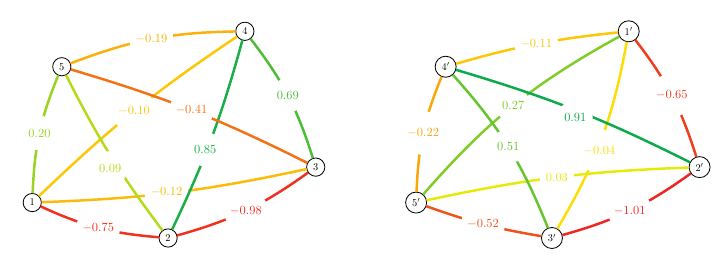}
	\caption{A sample from model \eqref{eq:model} with $n=5$. For representation, edges are colored according to their weights, and the underlying alignment is $u \mapsto u'$ for $u \in \left\lbrace 1,2,3,4,5 \right\rbrace $.}
	\label{fig:model} 
\end{figure}

\subsubsection{Detection problem} A most recent paper (\cite{Wu20}) studies fundamental limits for detection, both in correlated Gaussian weighted and correlated \ER graphs. This time, the problem is as follows: \emph{given $A,B$, are we able to distinguish between model \eqref{eq:model} and a null model, where the two graphs are just independent Gaussian weighted graphs?} Intuitively, this problem is less demanding than that of exact alignment, since the task is to detect -- wherever in the graph -- the presence of a hidden planted alignment. Under the same model \eqref{eq:model}, Y. Wu, J. Xu and S. Yu showed that strong detection is feasible with high probability if $n \rho^2 \geq 4 \log n$, whereas it is impossible if $n \rho^2 \leq (4-\eps) \log n$ for some $\eps>0$. Their study builds on an analysis of the likelihood ratio, as often done in detection problems. The contribution of this chapter is to show that this sharp detection threshold is also that of exact reconstruction. Interestingly, for Gaussian weighted graph alignment, the problem of reconstruction is in fact fundamentally not more difficult than that of detection.

After this study was completed, the author was made aware of recent and independent work conducted by \cite{Wu2021SettlingTS}, which also obtains -- among other things -- the results of this study, albeit with different proof techniques.

\subsection{Main results}
In the sequel, we work with the correlated Gaussian Wigner model described in \eqref{eq:model}, and establish the precise (sharp) threshold for exact recovery of $\pi^\star$ in this model.
\begin{theorem}[Achievability part]
	\label{GA_IT:theorem:GA_IT_positive}
	If for $n$ large enough
	\begin{equation}\label{eq:cond_ach_glo}
	\rho^2 \geq \frac{(4+\eps) \log n}{n}
	\end{equation}for some $\eps>0$, then there is an estimator (namely, the MAP estimator) $\hat{\pi}$ of $\pi^\star$ given $A,B$ such that $\hat{\pi}=\pi^\star$ with probability $1-o(1)$.
\end{theorem}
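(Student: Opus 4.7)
The first move is to identify the MAP estimator. Mimicking the posterior computation of \eqref{eq:posterior_erdos_renyi} in the Gaussian model $\Wig'(n,\rho)$, the joint density of $(A,B)$ given $\pi^\star=\pi$ factors (in $\pi$) as $\exp\!\bigl(\tfrac{\rho}{2(1-\rho^2)}\langle A,\Pi B\Pi^\top\rangle\bigr)$ times terms independent of $\pi$, so
\begin{equation*}
\hat\pi_{\mathrm{MAP}}=\arg\max_{\pi\in\cS_n}\langle A,\Pi B\Pi^\top\rangle.
\end{equation*}
By the invariance of the law of $(A,B)$ under relabelling I condition on $\pi^\star=\mathrm{id}$, which gives the convenient representation $B=\rho A+\sqrt{1-\rho^2}\,H$ with $H$ an independent copy of $A$. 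Writing $F(\sigma):=\langle A,\Sigma B\Sigma^\top\rangle$, it suffices to show $\sum_{\sigma\neq \mathrm{id}}\Pr(F(\sigma)\geq F(\mathrm{id}))=o(1)$.

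The heart of the argument is the following conditional identity: setting $\Delta_\sigma:=F(\mathrm{id})-F(\sigma)=\langle A-\Sigma^\top A\Sigma,B\rangle$ and $W_\sigma:=\tfrac12\|A-\Sigma^\top A\Sigma\|_F^2\geq 0$, a direct computation using the symmetry of $A$ and $B=\rho A+\sqrt{1-\rho^2}H$ yields
\begin{equation*}
\Delta_\sigma\mid A\;\sim\;\cN\!\bigl(\rho W_\sigma,\,4(1-\rho^2)W_\sigma\bigr),
\end{equation*}
whence $\Pr(\Delta_\sigma\leq 0\mid A)\leq\exp\!\bigl(-\rho^2 W_\sigma/(8(1-\rho^2))\bigr)$ by the standard Gaussian tail. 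To integrate out $A$, I exploit the fact that $\sigma$ permutes the unordered pairs $\binom{[n]}{2}$, so that $W_\sigma$ splits as an independent sum of cyclic quadratic forms over its orbits: an orbit of length $\ell$ contributes a form with eigenvalues $\{2-2\cos(2\pi j/\ell)\}_{j=0}^{\ell-1}$ whose sum is $2\ell$. Applying $\log(1+x)\geq x-x^2/2$ to the resulting chi-square MGF gives $\dE[\exp(-cW_\sigma)]\leq\exp(-c\,\dE[W_\sigma](1-O(c)))$, and an elementary count of moved pairs gives $\dE[W_\sigma]=k(2n-k)+O(k)$ for $\sigma$ with $k$ non-fixed points.

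Grouping the union bound by $k$ and using the crude count $n^k$ on the number of permutations with $k$ non-fixed points, the sub-sum at level $k$ is bounded by $\exp\!\bigl(k\log n-\rho^2 k(2n-k)/8\bigr)(1+o(1))$. For $k=O(1)$ this is governed by the transposition term ($k=2$, for which $\dE[W_\sigma]\approx 4n$), which saturates exactly at the threshold $\rho^2=4\log n/n$ and contributes $O(n^{-\eps/2})$ when $\rho^2\geq(4+\eps)\log n/n$: the contribution from permutations with a bounded number of non-fixed points is therefore $o(1)$.

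The main obstacle is the regime $k=\Theta(n)$: as $k$ approaches $n$ the factor $2n-k$ drops to $n$ while the number of candidate permutations grows like $n!$, so the naive first-moment bound becomes inadequate around the sharp threshold (it would only give $\rho^2>8\log n/n$). I would close this gap by complementing the union bound with a second-moment analysis on the count $N_k$ of bad permutations at each scale $k$: conditionally on $A$ the joint law of $(\Delta_\sigma,\Delta_{\sigma'})$ is Gaussian with a covariance depending only on the orbit structure of $\sigma^{-1}\sigma'$, which should show that bad permutations occur in strongly correlated clusters, so that $\Pr(N_k\geq 1)\ll\dE[N_k]$. This pairwise-correlation computation on the energies of permutations is the chief technical ingredient and matches the second-moment method highlighted in the introduction.
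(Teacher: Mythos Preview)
Your setup through the conditional Gaussian computation $\Delta_\sigma\mid A\sim\cN(\rho W_\sigma,4(1-\rho^2)W_\sigma)$ and the diagnosis that the naive union bound only yields $\rho^2>8\log n/n$ are both correct and match the paper. The gap is in how you propose to repair the large-$k$ regime.

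A ``second-moment analysis on $N_k$'' does not give what you claim. Computing $\dE[N_k^2]$ yields pairwise intersection probabilities $\Pr(\Delta_\sigma\le 0,\Delta_{\sigma'}\le 0)$, and observing these are large (positive correlation) provides a \emph{lower} bound on $\Pr(N_k\ge 1)$ via Paley--Zygmund, not an upper one; there is no second-moment mechanism that turns clustering into $\Pr(N_k\ge 1)\ll\dE[N_k]$. Indeed, in the paper the second-moment method is used for the \emph{converse} (Theorem~\ref{GA_IT:theorem:GA_IT_negative}), not the achievability. Also, the covariance $c_{\sigma,\sigma'}$ is a random quadratic form in $A$, not a deterministic function of the orbit structure of $\sigma^{-1}\sigma'$; its expectation depends on $|\cD^{\mathrm E}_\sigma\cap\cD^{\mathrm E}_{\sigma'}|$ and $|\cD^{\mathrm E}_\sigma\cap\cD^{\mathrm E}_{\sigma'}\cap\cF^{\mathrm E}_{\sigma^{-1}\sigma'}|$, which is more than the cycle type of $\sigma^{-1}\sigma'$.

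What the paper actually does for $d=\alpha n$ with $\alpha$ bounded away from $0$ is to bound the \emph{maximum} of the Gaussian field $(X_\sigma)_{\sigma\in\cS_{n,d}}$ directly. First, Hanson--Wright concentrates every $c_{\sigma,\sigma'}$ around its mean uniformly (event $\cA$); then a combinatorial argument gives a deterministic lower bound $c_{\sigma,\sigma'}\ge f(\alpha)n^2-o(n^2)$ for all pairs at level $d=\alpha n$ (Lemma~\ref{lemma:min_corr_delta}). This uniform positive covariance lets one compare $(X_\sigma)$ to a totally exchangeable Gaussian family and invoke the elementary decomposition $Z_i=\sqrt{c}\,\xi_0+\sqrt{v-c}\,\xi_i$ (Lemma~\ref{lemma:max_TC_gaussian}), yielding $\max_\sigma X_\sigma\lesssim\sqrt{2\alpha(\alpha(2-\alpha)-f(\alpha))}\,n^{3/2}\log^{1/2}n$. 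A final algebraic check (Lemma~\ref{lemma:final_function_study}) shows this maximum is beaten by $\rho v_\sigma$ whenever $\rho\ge(2+\eps)\sqrt{\log n/n}$. The essential idea you were reaching for---positive correlations reduce the effective size of the maximum---is there, but the correct vehicle is a Gaussian-comparison bound on $\max_\sigma X_\sigma$, not a second-moment argument on the count.
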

\begin{theorem}[Converse part]
	\label{GA_IT:theorem:GA_IT_negative}
	Conversely, if
	\begin{equation}\label{eq:cond_imp_glo}
	\rho^2 \leq \frac{4 \log n - \log \log n - \omega(1)}{n}
	\end{equation}then any estimator $\hat{\pi}$ of $\pi$ given $A,B$ verifies $\hat{\pi}=\pi^\star$ with probability $o(1)$.
\end{theorem}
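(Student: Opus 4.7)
Since the prior on $\pi^\star$ is uniform on $\cS_n$, the MAP estimator minimizes the Bayes error, so it suffices to prove that MAP fails with probability $1-o(1)$ in the regime \eqref{eq:cond_imp_glo}. In the Gaussian Wigner model, the posterior computation (analogous to \eqref{eq:posterior_erdos_renyi}) reduces MAP to the QAP \eqref{eq:QAP}, and by the invariance of the model under relabeling one may take $\pi^\star = \mathrm{id}$, so that $B = A' = \rho\, A + \sqrt{1-\rho^2}\,H$. Define the energy gap $\Delta(\pi) := \langle A,\Pi A'\Pi^\top\rangle - \langle A,A'\rangle$; MAP fails as soon as there exists some $\pi \neq \mathrm{id}$ with $\Delta(\pi) \geq 0$. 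My plan is to restrict to the $\binom{n}{2}$ single transpositions $\pi_{ij} = (i\, j)$ and to apply the second moment method (Lemma \ref{intro:lemma:second_moment_method}) to the count
\[
N := \#\bigl\{(i,j) : 1 \leq i < j \leq n,\; \Delta(\pi_{ij}) \geq 0\bigr\}.
\]

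\textbf{First moment.} A direct expansion, in which every term not involving $\{i,j\}$ cancels, yields
\[
\Delta(\pi_{ij}) \;=\; -2 \sum_{v \notin \{i,j\}} (A_{i,v}-A_{j,v})(A'_{i,v}-A'_{j,v}),
\]
a sum of $n-2$ i.i.d.\ products of jointly Gaussian variables, each product with mean $2\rho$ and variance $4+4\rho^2$. Consequently $\Delta(\pi_{ij})$ has mean $-4(n-2)\rho$ and variance $16(n-2)(1+\rho^2)$, and a sharp Gaussian-type tail estimate -- obtained e.g.\ via Hanson--Wright or by a direct Laplace-transform computation on the underlying quadratic form in Gaussians -- gives
\[
\dP\bigl(\Delta(\pi_{ij}) \geq 0\bigr) \;=\; (1+o(1))\,\Phi(-\sqrt{n}\,\rho) \;\asymp\; \frac{e^{-n\rho^2/2}}{\sqrt{n\rho^2}}.
\]
Summing, $\dE[N] \asymp n^2\, e^{-n\rho^2/2}/\sqrt{n\rho^2}$; the hypothesis $n\rho^2 \leq 4\log n - \log\log n - \omega(1)$ makes the $-\log\log n$ slack in the exponent exactly absorb the polynomial prefactor $1/\sqrt{n\rho^2}\asymp 1/\sqrt{\log n}$, which gives $\dE[N] \to \infty$.

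\textbf{Second moment -- the main obstacle.} To upgrade $\dE[N]\to\infty$ into $\dP(N\geq 1)\to 1$ via Paley--Zygmund, the heart of the proof is to show $\dE[N^2] \leq (1+o(1))\dE[N]^2$. I would decompose
\[
\dE[N^2] \;=\; \sum_{(i,j),(k,l)} \dP\bigl(\Delta(\pi_{ij})\geq 0,\;\Delta(\pi_{kl})\geq 0\bigr)
\]
according to the overlap $r := |\{i,j\}\cap\{k,l\}| \in \{0,1,2\}$. The diagonal $r=2$ contributes exactly $\dE[N]$, negligible compared to $\dE[N]^2\to\infty$. For $r=0$ the two gaps couple only through the four edges $\{i,k\},\{i,l\},\{j,k\},\{j,l\}$, an $O(1)$ perturbation to a $\Theta(n)$ variance; conditioning on these edges the joint tail factorizes up to a $1+o(1)$ factor, and the $\Theta(n^4)$ such pairs contribute exactly $(1+o(1))\dE[N]^2$. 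The truly delicate case is $r=1$, say $k=j$: a covariance calculation on the shared $\Theta(n)$ summands shows that $\Delta(\pi_{ij})$ and $\Delta(\pi_{jl})$ have correlation coefficient $\tfrac{1}{4}+o(1)$, and the bivariate Gaussian tail then yields $\dP(\text{both}\geq 0) \lesssim \exp\bigl(-n\rho^2/(1+\tfrac{1}{4})\bigr) = e^{-(4/5)\,n\rho^2}$. Summed over the $\Theta(n^3)$ such configurations, and compared to $\dE[N]^2 \asymp n^4 e^{-n\rho^2}/\log n$, the ratio is of order $(\log n)\cdot n^{-1}\cdot e^{n\rho^2/5} = (\log n)\, n^{-1/5+o(1)} \to 0$. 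Establishing the correct bivariate tail bound, and in particular proving that the correlation coefficient between transposition energies sharing exactly one vertex is strictly bounded away from $1$, is the step I expect to be the main obstacle; this is precisely what the introduction refers to as the \emph{correlation structure of energies of permutations}.
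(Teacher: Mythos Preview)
Your proposal is correct and follows the same strategy as the paper: a second-moment argument on the count of transpositions with nonnegative energy gap, splitting pairs by $r=|\{i,j\}\cap\{k,l\}|\in\{0,1,2\}$, with the $r=1$ correlation equal to $\tfrac14+o(1)$ and the $r=0$ correlation vanishing. The numerics (first moment $\asymp e^{a_n/2}\to\infty$, the $r=1$ contribution of relative size $n^{-1/5+o(1)}$) match the paper's.

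The one substantive technical difference is how the tail probabilities are obtained. You treat $\Delta(\pi_{ij})$ as a sum of i.i.d.\ products of correlated Gaussians, a quadratic form that is not itself Gaussian; getting the sharp asymptotic $\dP(\Delta\geq 0)=(1+o(1))\Phi(-\sqrt n\,\rho)$---and its bivariate analogue for the second moment---then requires a moderate-deviation or exact Laplace/saddle-point argument, which is precisely the obstacle you flag (note in particular that Hanson--Wright alone gives only upper tail bounds, not the lower bound needed for $\dE[N]\to\infty$). The paper sidesteps this by first \emph{conditioning on $A$}: one writes $\delta(\tau)=\rho^2 v_\tau - 2\rho\sqrt{1-\rho^2}\,X_\tau$, where $X_\tau$ is \emph{exactly} Gaussian given $A$, and then controls the $A$-measurable quantities $v_\tau\approx 4n$ and $c_{\tau,\tau'}$ (hence the conditional correlations $\alpha_{\tau,\tau'}$) via Hanson--Wright on a high-probability event. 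This decoupling reduces both the univariate and bivariate tail estimates to exact Gaussian computations (handled by an elementary lemma on tails of two correlated standard normals), and is what the introduction means by exploiting the ``correlation structure of energies of permutations''. Your direct route would work, but the conditioning trick is what makes the argument clean.
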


\subsubsection{Computational limits of exact recovery}
For the correlated Gaussian Wigner model \eqref{eq:model}, several algorithms have been studied, usually as a first step in order to analyze further graph alignment algorithms. The state-of-the-art polynomial-time algorithms are either based on \emph{degree profiles} \cite{Ding18}, or on a spectral method \cite{Fan2019Wigner}. In both cases, these methods require the noise parameter $\sqrt{1-\rho^2}$ to be $O\left(\log^{-1} n\right)$. In Chapter \ref{chapter:EIG1}, we will study a simpler algorithm with lower computational complexity, requiring $\sqrt{1-\rho^2}$ to be $O(n^{-7/6})$ \cite{GLM19}. In any case, $\rho$ needs to tend to $1$, and the regimes in which these methods work well are far from the fundamental limits established in Theorems \ref{GA_IT:theorem:GA_IT_positive} and \ref{GA_IT:theorem:GA_IT_negative}. The main result of this chapter thus corroborates the idea that matrix alignment may be computationally hard even  in the feasibility regime. In other words, the hard phase can be conjectured to be wide for this reconstruction problem. Proving a result of that form however remains a very thorny question.

\subsubsection{Organization of the chapter}
We first some notations at the beginning of Section \ref{section:preliminaries}, and then establish a control on correlations between energies of permutations, using Hanson-Wright inequality. The achievability result is proved in Section \ref{section:ach}: after showing that the classical first moment method fails, we take advantage of the correlation structure established before to handle the sharp bound. Then, second moment method is applied in Section \ref{section:conv} to show that lots of small perturbations of the true underlying permutation have lower energies, establishing the converse bound. Finally, some additional proofs are deferred to Appendix \ref{appendix}. 
The proof techniques are not far from those used by \cite{Dai19}, the main novelty being the use of correlation of energies, which is essential to both achievability and impossibility result.

\section{Preliminaries}\label{section:preliminaries}
\subsection{Definitions and notations}\label{subsection:def}
Recall that for any positive integer $n$, $[n] := \left\lbrace 1,2,\ldots,n \right\rbrace $. For two positive sequences $\left\lbrace u_n\right\rbrace $ and $\left\lbrace v_n\right\rbrace $, denote $u_n = O(v_n)$ if there exists $C>0$ such that $u_n \leq Cv_n$ for all $n$. We will also write $u_n = o(v_n)$ (resp. $u_n = \omega(v_n)$) if $u_n/v_n \to 0$ (resp. $v_n/u_n \to 0$). All limits considered are taken when $n \to \infty$.\\

\textit{Linear algebra.} We work with the canonical euclidean norm $\|\cdot \|$ on $\dR^n$, and $\langle \cdot, \cdot \rangle$ the canonical inner product on $\dR^n$ or $\dR^{n \times n}$. For any $n \times n$ matrix $M$ with real entries, its \emph{Frobenius norm} $\|M\|_{F}$ and its \emph{operator norm} $\|M\|_{\mathrm{op}}$ are defined as follows:
\begin{equation*}
\|M\|_{F}  := \left(\sum_{1\leq u,v \leq n} A_{u,v}^2\right)^{1/2} \quad \mbox{and} \quad
\|M\|_{\mathrm{op}}  := \sup_{X \in \dR^n \setminus \left\lbrace 0\right\rbrace } \frac{\| M X \|}{\|X\|}. 
\end{equation*} Note that for any normal matrix (that is, if $M^T M = M M^T$), then $\|M\|_{\mathrm{op}}$ equals $\rho(M)$, the spectral radius of $M$.\\

\textit{Probability.} When working with model \eqref{eq:model}, we will denote by $\dP_A$ (resp. $\dE_A$) the conditional probability (resp. the conditional expectation) with respect to the random matrix $A$. We recall that $\cN(\mu,v)$ denotes a Gaussian variable (resp. vector) with mean $\mu$ and variance (resp. covariance matrix) $v$. Such a Gaussian variable (resp. vector) is called \emph{standard} if $\mu=0$ and $v=1$ (resp. $v$ is the identity matrix). We say that an event $\cA_n$ happens \emph{with high probability (w.h.p)} if $\dP(\cA_n) \to 1$ when $n \to \infty$.\\

\textit{Permutations.} 
We denote by $\cS_m$ the set of permutations of $[m]$. To any permutation $\sigma \in \cS_m$, we can associate its $m \times m$ matrix representation $\Sigma$ defined by $\Sigma_{u,v} = \one_{v = \sigma(u)}$. Define $\mathcal{F}_{\sigma}$ the set of \emph{fixed points} of $\sigma$:
\begin{equation}
\mathcal{F}_{\sigma} := \left\lbrace u \in [m], \sigma(u) = u\right\rbrace,
\end{equation} and denote $f_{\sigma} := \card{\mathcal{F}_{\sigma}}$. Similarly, we define the set of \emph{unfixed points} of $\sigma$:
\begin{equation}
\mathcal{D}_{\sigma} := [m] \setminus \mathcal{F}_{\sigma} = \left\lbrace i \in [m], \sigma(i) \neq i\right\rbrace,
\end{equation} and we denote $d_{\sigma} :=  \card{ \mathcal{D}_{\sigma}}$. For any $d \in \left\lbrace 0,\ldots,m \right\rbrace $ we define $\cS_{m,d}$ the set of permutations of $\cS_{m}$ with exactly $d$ unfixed points. Note that $ \card{ \cS_{m,1}}=0$ and that we have the inequality\begin{equation}\label{eq:ineq_S_n,d}
 \card{\cS_{m,d}} = \binom{m}{m-d}  \card{\set{\sigma \in \cS_{d}, F_{\sigma} = 0 }} \leq \binom{m}{m-d} d! \leq m^d.
\end{equation} We recall that similarity between two permutations $\sigma, \sigma' \in \cS_n$ is measured by their \emph{overlap}:
\begin{equation*}\label{eq:overlap}
\ov(\sigma,\sigma') := \frac{1}{n} \sum_{u=1}^{n} \mathbf{1}_{\sigma(u)=\sigma'(u)} = \frac{1}{n} f_{\sigma^{-1} \circ \sigma'}. 
\end{equation*}

Observe that on a graph of size $n$, each permutation $\sigma$ of the vertices $[n]$ has a natural extension to a canonical permutation on edges $\sigma^{\mathrm{E}} : \binom{[n]}{2} \to \binom{[n]}{2}$ defined as follows:
\begin{equation*}\label{eq:sigma_E}
\sigma^{\mathrm{E}} : e = \left\{u,v\right\} \mapsto \sigma^{\mathrm{E}}(e) = \left \{ \sigma(u),\sigma(jv)\right\}.
\end{equation*} Note that the mapping $\sigma \mapsto \sigma^{\mathrm{E}}$ is one-to-one as soon as $n\geq 3$, since for all $u \in [n]$ and $v\neq v' \in [n] \setminus \left\lbrace u\right\rbrace $, edges $\sigma^{\mathrm{E}}(\left\lbrace u,v\right\rbrace )$ and $\sigma^{\mathrm{E}}(\left\lbrace u,v'\right\rbrace )$ have only one node in common, which is $\sigma(u)$. We will use the notation $\mathcal{F}^{\mathrm{E}}_{\sigma} := \mathcal{F}_{\sigma^{\mathrm{E}}}$ (resp. $\mathcal{D}^{\mathrm{E}}_{\sigma} := \mathcal{D}_{\sigma^{\mathrm{E}}}$) the set of \emph{fixed edges} (resp. \emph{unfixed edges}) of $\sigma$. Similarly we denote ${f}^{\mathrm{E}}_{\sigma} := {f}_{\sigma^{\mathrm{E}}}$ and $f^{\mathrm{E}}_{\sigma} := d_{\sigma^{\mathrm{E}}}$, for brievity.

Note that $d^{\mathrm{E}}_{\sigma}$ and are $d_{\sigma}$ are closely tied, since for all $\sigma \in \cS_n$, we have the inequality
\begin{equation}\label{eq:ineq_d}
d_{\sigma} \left(n - \frac{d_{\sigma}}{2}\right) \leq d^{\mathrm{E}}_{\sigma} \leq d_{\sigma} \left(n - \frac{d_{\sigma}-1}{2}\right).
\end{equation}
Indeed, observe that 
\begin{itemize}
\item[$(i)$] the number of fixed edges is at least the number of pairs of fixed points, and
\item[$(ii)$] the number of fixed edges is exactly the number of pairs of fixed points plus the number of pairs $(u,v), u<v$ that are exchanged by $\sigma$ (that is, the number of transpositions), this number being at most $d_{\sigma}/2$.
\end{itemize}
These remarks give that
\begin{equation*}
\binom{n-d_{\sigma}}{2} \leq \binom{n}{2} - d^{\mathrm{E}}_{\sigma} \leq \binom{n-d_{\sigma}}{2} + \frac{d_{\sigma}}{2},
\end{equation*} which directly implies \eqref{eq:ineq_d}.
\begin{remark}\label{remark:equiv_d}
Note that inequality \eqref{eq:ineq_d} gives the almost sure equivalents $d^{\mathrm{E}}_{\sigma} \sim d_{\sigma}n$ when $d_{\sigma}=o(n)$, and $d^{\mathrm{E}}_{\sigma} \sim \frac{1}{2}\alpha(2-\alpha)n^2$ when $d_{\sigma}=\alpha n$. In any case, $d^{\mathrm{E}}_{\sigma} \in \left[\frac{1}{2} d_{\sigma} n , d_{\sigma} n \right]$.
\end{remark}

\subsection{MAP estimation, relative energy of permutations}  
Since $\pi^\star$ is uniformly chosen, we work in a Bayesian setting: let us evaluate the posterior probability density of $\pi^\star$ given $A,B$:
\begin{flalign*}
p_{\pi^\star|A,B}\left(\pi | a,b\right) & \propto p_{\pi^\star,A,B}\left(\pi,a,b\right)\\
& \propto \exp\left(-  \frac{1}{2 (1-\rho^2)}\sum_{1\leq i<j\leq n} \left(B_{\pi(u),\pi(v)} - \rho A_{u,v} \right)^2 \right),
\end{flalign*} where $\propto$ indicates equality up to some factors that do not depend on $\sigma$. Define the \emph{loss function}
\begin{equation}\label{eq:L_pi}
\cE(\pi,A,B) := \sum_{1\leq i<j\leq n} \left(B_{\pi(u),\pi(v)} - \rho A_{u,v} \right)^2.
\end{equation} 
This loss function can also be viewed as the \emph{energy} associated with permutation $\pi$. Note that the posterior distribution is a Gibbs measure corresponding to this energy $\cE$, with inverse temperature $\beta =  \frac{1}{2(1-\rho^2)}$. The MAP (maximum a posteriori) estimator is thus
\begin{equation}\label{eq:MAP}
\hat{\pi}_{\MAP} := \argmax_{\pi} p_{\pi^\star|A,B}\left(\pi | A,B\right) = \argmin_{\pi} \cE(\pi,A,B),
\end{equation} where the minimum is taken over all permutations $\pi \in \cS_n$. As previously stated in Section \ref{intro:subsection:planted_GA}, the above formulation \eqref{eq:MAP} is standard in the literature of graph alignment and meets the classical QAP formulation \eqref{eq:QAP_2}, since 
$$\argmin_{\pi} \cE(\pi,A,B) = \argmax_{\Pi} \langle A, \Pi B \Pi^T\rangle. $$

Theory from Bayesian optimal estimation guarantees that the best possible estimator for our exact reconstruction problem, in the Bayes risk sense, is $\hat{\pi}_{\MAP}$. Thus, if MAP estimator fails with high probability, then no estimator can succeed. This is why this estimator is often studied in exact reconstruction problems, as already done in previous works (\cite{Cullina2017,Cullina18data,Dai19}).

From now on we work conditionally on $\pi^\star$ which can always be assumed to be $\mathrm{id}$ without loss of generality. More precisely, we will make the variable change $\sigma = \pi^\star \circ \pi^{-1}$ ; writing $B$ as a function of $\sigma, A$ and $H$, \eqref{eq:L_pi} becomes
\begin{flalign*}
\cE(\sigma,A,H) & = \rho^2 \sum_{1\leq i<j\leq n} \left(A_{u,v} -  A_{\sigma(u),\sigma(v)} \right)^2 - 2\rho \sqrt{1-\rho^2} \sum_{1\leq i<j\leq n}  H_{u,v} \left(A_{u,v} -  A_{\sigma(u),\sigma(v)} \right)\\ & + (1-\rho^2) \sum_{1\leq i<j\leq n} H_{u,v} ^2 .
\end{flalign*} 
The loss function $\cE$ applied to the ground truth $\pi=\pi^\star$ -- that is $\sigma = \id$ -- gives the energy reference $(1-\rho^2) \sum_{1\leq i<j\leq n} H_{u,v} ^2$. In order to compare any $\pi$ with $\pi^\star$ -- or any $\sigma$ with $\id$ -- we further define the \emph{relative energy} of a permutation $\sigma \in \cS_n$:
\begin{flalign}
\delta(\sigma) &:= \cE(\sigma,A,H)-\cE(\mathrm{id},A,H) \nonumber\\
& = \rho^2 \sum_{1\leq i<j\leq n} \left(A_{u,v} -  A_{\sigma(u),\sigma(v)} \right)^2 - 2\rho \sqrt{1-\rho^2} \sum_{1\leq i<j\leq n}  H_{u,v} \left(A_{u,v} -  A_{\sigma(u),\sigma(v)} \right).\label{eq:delta_1} 
\end{flalign}
We next omit in our notations the dependency on $A$ and $H$ of $\delta(\sigma)$.
\begin{remark}
This relative energy $\delta$, also introduced by \cite{Cullina2017} for \ER graph alignment, is a measurement of the quality of a proposed alignment: $\delta(\sigma) \leq 0$ means that $\sigma^{-1} \circ \pi^\star$ is a better alignment than $\pi^\star$ for $A$ and $B$ in the posterior sense. A crucial set is then
\begin{equation*}\label{eq_Q}
\cQ := \left\lbrace \sigma \in \cS_n, \, \delta(\sigma) \leq 0 \right\rbrace.
\end{equation*}
Points of $\cQ$ are alignments on which the posterior distribution puts important weights -- at least greater weights than that of the ground truth -- or equivalently points of low energy. Note that $\id \in \cQ$. 
\end{remark}

In view of \eqref{eq:delta_1}, conditionally on $A$, $\delta(\sigma)$ is as follows:
\begin{equation}\label{eq:delta_gaussien_general}
\delta(\sigma) = \rho^2 v_\sigma - 2 \rho \sqrt{1-\rho^2} X_{\sigma},
\end{equation} where 
\begin{equation*}\label{eq:def_v_sigma}
v_\sigma := \sum_{1\leq i<j\leq n} \left(A_{u,v} -  A_{\sigma(u),\sigma(v)} \right)^2,
\end{equation*} and $X=(X_\sigma)_{\sigma \in \cS_{n}}$ is a Gaussian vector, centered, with covariance given by
\begin{equation*}\label{eq:def_c_sigma}
\Cov(X_\sigma,X_{\sigma'}) = \sum_{1\leq i<j\leq n} \left(A_{u,v} -  A_{\sigma(u),\sigma(v)} \right)\left(A_{u,v} -  A_{\sigma'(i),\sigma'(j)} \right) := c_{\sigma,\sigma'}.
\end{equation*}Note that for all $\sigma \in \cS_n$, $c_{\sigma,\sigma} = v_\sigma$. Elaborating on the correlation structure of these relative energies is the object of the end of this section.

\subsection{Control of covariance structure of relative energies}
For all $\sigma, \sigma' \in \cS_{n}$, $c_{\sigma,\sigma'}$ can be written as follows
\begin{equation*}\label{eq:c_sigma_E}
c_{\sigma,\sigma'} = \sum_{e \in \binom{[n]}{2}} \left(A_e -  A_{\sigma^E(e)} \right) \left(A_e -  A_{\sigma'^E(e)} \right)
\end{equation*} and satisfies
\begin{equation*}\label{eq:c_exp}
\dE \left[c_{\sigma,\sigma'} \right] =
 \card{\cD^{\mathrm{E}}_\sigma \cap \cD^{\mathrm{E}}_{\sigma'}} +  \card{\cD^{\mathrm{E}}_\sigma \cap \cD^{\mathrm{E}}_{\sigma'} \cap \cF^{\mathrm{E}}_{\sigma^{-1} \circ \sigma'}}.
\end{equation*} In particular,
\begin{equation*}\label{eq:v_exp}
\dE \left[v_{\sigma} \right] = d^{\mathrm{E}}_{\sigma} +d^{\mathrm{E}}_{\sigma} = 2 d^{\mathrm{E}}_{\sigma}.
\end{equation*} 

Random variables $c_{\sigma,\sigma'}$ only depend on the entries of $A$, which are Gaussian. Moreover, $c_{\sigma,\sigma'}$ being a quadratic form evaluated on a Gaussian vector, it can be controlled using Hanson-Wright inequality:
\begin{lemma}[Hanson-Wright inequality (\cite{HansonWright1971})] \label{lemma:HW_ineq}
	Let $X$ be a standard Gaussian vector, and $M$ a deterministic matrix. Then there exists a universal constant $c>0$ such that with probability at least $1-2\delta$:
	\begin{equation}\label{eq:HW_ineq}
	\big| X^T M X - \mathrm{Tr} M \big| \leq c \left(\|M\|_F \sqrt{\log(1/\delta)} + \|M\|_{\mathrm{op}} \log(1/\delta)\right).
	\end{equation}
\end{lemma}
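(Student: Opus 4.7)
The statement is the classical Gaussian Hanson--Wright inequality, for which the cleanest route is via diagonalisation followed by a Bernstein-type tail bound for weighted sums of chi-squared variables. I would organise the argument in three stages.

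First, reduce to the diagonal symmetric case. Since $X^\top M X = X^\top \tfrac{1}{2}(M+M^\top) X$, and since the Frobenius and operator norms both satisfy $\|\tfrac{1}{2}(M+M^\top)\|_\bullet \leq \|M\|_\bullet$ by the triangle inequality together with $\|M^\top\|_\bullet = \|M\|_\bullet$, it suffices to prove the inequality when $M$ is symmetric. Then diagonalise $M = U^\top D U$ with $U$ orthogonal and $D = \mathrm{diag}(\lambda_1,\ldots,\lambda_n)$. Using the rotational invariance of the standard Gaussian distribution, $Y := UX$ is again standard Gaussian, so
\begin{equation*}
X^\top M X - \mathrm{Tr}(M) = \sum_{i=1}^n \lambda_i (Y_i^2 - 1),
\end{equation*}
a weighted sum of independent centered $\chi^2_1 - 1$ variables, with $\sum_i \lambda_i^2 = \|M\|_F^2$ and $\max_i |\lambda_i| = \|M\|_{\mathrm{op}}$.

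Second, control the moment generating function of a single term. A direct Gaussian integral gives, for $|t|<1/2$,
\begin{equation*}
\dE\bigl[\exp(t(Y_i^2-1))\bigr] = \frac{e^{-t}}{\sqrt{1-2t}},
\end{equation*}
which after standard manipulation (Taylor expansion around $t=0$) is bounded above by $\exp(t^2/(1-2|t|))$. This is the canonical sub-exponential bound: $Y_i^2 - 1$ has Bernstein parameters $(2,2)$ or similar universal constants.

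Third, apply Bernstein's inequality to $S := \sum_i \lambda_i(Y_i^2-1)$. Combining the single-variable MGF bound with independence yields
\begin{equation*}
\dE[e^{tS}] \leq \exp\!\left(\frac{t^2 \|M\|_F^2}{1-2|t|\,\|M\|_{\mathrm{op}}}\right)\qquad\text{for } |t|\,\|M\|_{\mathrm{op}}<1/2.
\end{equation*}
A Chernoff bound then gives $\dP(|S|\geq u) \leq 2\exp\bigl(-c\min(u^2/\|M\|_F^2,\,u/\|M\|_{\mathrm{op}})\bigr)$ for a universal $c>0$. Setting the RHS equal to $2\delta$ and solving for $u$ (the min is attained by one of the two regimes according as $u/\|M\|_{\mathrm{op}} \lessgtr u^2/\|M\|_F^2$) yields $u \leq c'\bigl(\|M\|_F\sqrt{\log(1/\delta)} + \|M\|_{\mathrm{op}}\log(1/\delta)\bigr)$, which is exactly \eqref{eq:HW_ineq}.

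The only mildly delicate step is getting the clean sub-exponential MGF bound for $Y_i^2 - 1$ with universal constants; everything else is bookkeeping (symmetrisation, diagonalisation, optimising the Chernoff exponent). Since the paper only needs the inequality as a black-box tool, citing the original \cite{HansonWright1971} and sketching the Gaussian specialisation above is sufficient.
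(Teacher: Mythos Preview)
Your proof sketch is correct and actually provides more than the paper does: the paper treats Lemma~\ref{lemma:HW_ineq} as a black box and simply writes ``We refer to \cite{HansonWright1971} for a proof'' without any argument. Your three-stage outline (symmetrisation, diagonalisation via rotational invariance, then a Bernstein bound on the weighted $\chi^2$ sum) is the standard Gaussian specialisation and is fully sound; in particular the MGF bound $\dE[e^{t(Y_i^2-1)}]\leq \exp(t^2/(1-2|t|))$ does hold for $|t|<1/2$ by direct coefficient comparison. Your closing remark already anticipates the situation correctly.
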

We refer to \cite{HansonWright1971} for a proof. Inequality \eqref{eq:HW_ineq} used in our context leads to the following

\begin{corollary} \label{corollary:control_C}
	There exists a universal constant $C>0$ such that with high probability, for every $d \in \left\lbrace 2,\ldots, n \right\rbrace$, for all $\sigma,\sigma' \in \mathcal{S}_{n,d}$, 
	\begin{equation*}\label{eq:control_C}
	\big|c_{\sigma,\sigma'} -  \card{\cD^{\mathrm{E}}_\sigma \cap \cD^{\mathrm{E}}_{\sigma'}} -  \card{\cD^{\mathrm{E}}_\sigma \cap \cD^{\mathrm{E}}_{\sigma'} \cap \cF^{\mathrm{E}}_{\sigma^{-1} \circ \sigma'}} \big|  \leq   C d\sqrt{n \log n}.
	\end{equation*}
\end{corollary}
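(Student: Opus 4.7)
The plan is to view $c_{\sigma,\sigma'}$ as a quadratic form in the Gaussian vector of off-diagonal entries of $A$, apply Hanson--Wright (Lemma~\ref{lemma:HW_ineq}) to control its fluctuations around $\dE[c_{\sigma,\sigma'}]$ for each fixed pair $(\sigma,\sigma')$, and then union-bound over all pairs in $\cS_{n,d}^2$ and all $d \in \{2,\ldots,n\}$. Concretely, letting $X=(A_e)_{e \in \binom{[n]}{2}}$ denote the associated standard Gaussian vector, we have $c_{\sigma,\sigma'} = X^\top M X$ for a deterministic symmetric matrix $M=M(\sigma,\sigma')$ obtained by expanding each term $(A_e - A_{\sigma^{\mathrm{E}}(e)})(A_e - A_{\sigma'^{\mathrm{E}}(e)})$ and symmetrizing. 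A first useful observation is that any edge $e$ fixed by $\sigma$ or by $\sigma'$ contributes $0$, so only edges $e \in \cD^{\mathrm{E}}_\sigma \cap \cD^{\mathrm{E}}_{\sigma'}$ are active, and each such edge generates at most four nonzero entries of $M$ of $O(1)$ magnitude. The trace $\mathrm{Tr}(M)$ coincides with $\dE[c_{\sigma,\sigma'}]$, which is precisely the centering appearing in the corollary.

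The next step is to bound the two matrix norms that enter Hanson--Wright. Using $\card{\cD^{\mathrm{E}}_\sigma \cap \cD^{\mathrm{E}}_{\sigma'}} \leq d^{\mathrm{E}}_\sigma \leq d\,n$ (by \eqref{eq:ineq_d} and Remark~\ref{remark:equiv_d}), the total number of nonzero entries of $M$ is $O(dn)$, each of bounded magnitude, so $\|M\|_F^2 = O(dn)$. For the operator norm, I use the fact that $\sigma^{\mathrm{E}}$ and $\sigma'^{\mathrm{E}}$ are bijections of $\binom{[n]}{2}$: each edge $f$ arises as $\sigma^{\mathrm{E}}(e')$ or $\sigma'^{\mathrm{E}}(e')$ for at most one $e'$, so each row of $M$ has a universally bounded number of nonzero entries of bounded magnitude, giving $\|M\|_{\mathrm{op}} = O(1)$ via the maximum $\ell^1$ row sum.

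With these two inputs, Lemma~\ref{lemma:HW_ineq} applied with $\delta = n^{-3d}$ yields, for each fixed pair $(\sigma,\sigma')$,
\[ \bigl| c_{\sigma,\sigma'} - \dE[c_{\sigma,\sigma'}] \bigr| \leq c' \bigl(\|M\|_F \sqrt{3d\log n} + \|M\|_{\mathrm{op}} \cdot 3d\log n \bigr) \leq C\, d\sqrt{n\log n} \]
with probability at least $1-2n^{-3d}$, for some universal $C>0$ and $n$ large enough (the $O(d\log n)$ term being absorbed since $d\leq n$). Since $\card{\cS_{n,d}}^2 \leq n^{2d}$ by \eqref{eq:ineq_S_n,d}, a union bound over pairs in $\cS_{n,d}^2$ and over $d \in \{2,\ldots,n\}$ has total failure probability at most $\sum_{d=2}^n 2\,n^{2d}\cdot n^{-3d} = O(n^{-2}) = o(1)$, which gives the claim.

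The main technical hurdle I expect is producing a sharp enough operator-norm bound: the naive estimate $\|M\|_{\mathrm{op}} \leq \|M\|_F = O(\sqrt{dn})$ would contribute $O(dn\log n)$ to the Hanson--Wright error, which is far too large to survive a union bound over $n^{2d}$ pairs when $d$ grows linearly in $n$. The saving feature is exactly the injectivity of the edge permutations $\sigma^{\mathrm{E}}, \sigma'^{\mathrm{E}}$, which forces the row-sparsity of $M$ to be a universal constant and pins $\|M\|_{\mathrm{op}}$ to $O(1)$; careful bookkeeping of how the four products $A_e^2, A_e A_{\sigma^{\mathrm{E}}(e)}, A_e A_{\sigma'^{\mathrm{E}}(e)}, A_{\sigma^{\mathrm{E}}(e)} A_{\sigma'^{\mathrm{E}}(e)}$ land in $M$ after symmetrization is the only slightly delicate combinatorial piece.
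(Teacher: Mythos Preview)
Your proposal is correct and follows the same overall strategy as the paper: represent $c_{\sigma,\sigma'}$ as a Gaussian quadratic form, apply Hanson--Wright with $\delta$ a suitable negative power of $n$ in $d$, then union-bound over $\cS_{n,d}^2$ and over $d$. The trace identification, the Frobenius bound $\|M\|_F^2 = O(dn)$, and the union-bound arithmetic all match.

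The one place where your route differs is the operator-norm bound. You argue combinatorially that each row of (the symmetrized) $M$ has $O(1)$ nonzero entries of $O(1)$ size, via injectivity of $\sigma^{\mathrm{E}},\sigma'^{\mathrm{E}}$. The paper instead writes the quadratic form directly as $A^\top (I_N-\Sigma)^\top(I_N-\Sigma')A$, where $\Sigma,\Sigma'$ are the permutation matrices of $\sigma^{\mathrm{E}},\sigma'^{\mathrm{E}}$ on $\binom{[n]}{2}$, and obtains $\|(I_N-\Sigma)^\top(I_N-\Sigma')\|_{\mathrm{op}} \leq \rho(I_N-\Sigma)\,\rho(I_N-\Sigma') \leq 4$ in one line by submultiplicativity. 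This factorization also gives the Frobenius bound more cheaply via $\|(I_N-\Sigma)^\top(I_N-\Sigma')\|_F \leq 2\|I_N-\Sigma'\|_F = 2\sqrt{2 d^{\mathrm{E}}_{\sigma'}}$. Your row-sparsity argument is perfectly valid, but the factorized form saves you the ``careful bookkeeping'' you flag as the delicate piece.
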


\begin{proof}
	We first make the following observation: for any $\sigma,\sigma' \in \mathcal{S}_n$, 
	\begin{flalign*}
	c_{\sigma,\sigma'} & = \sum_{e} \left(A_{e} - A_{\sigma(e)}\right) \left(A_{e} - A_{\sigma'(e)}\right)\\
	& = A^T (I_N - \Sigma)^T(I_N - \Sigma') A,
	\end{flalign*} where $A = (A_e)_{e}$ is viewed as a standard Gaussian vector of size $N=\binom{n}{2}$, and $\Sigma$ (resp. $\Sigma'$) is the $N \times N$ permutation matrix associated with $\sigma^E$ (resp. $\sigma'^E$). Note that
	\begin{flalign*}
	\Tr ( (I_N - \Sigma)^T(I_N - \Sigma')) &= N - f^{\mathrm{E}}_{\sigma} - f^{\mathrm{E}}_{ \sigma'} +f^{\mathrm{E}}_{\sigma^{-1} \circ \sigma'} \\ 
	&\overset{(a)}{=}  \card{\cD^{\mathrm{E}}_\sigma \cap \cD^{\mathrm{E}}_{\sigma'}} +  \card{\cD^{\mathrm{E}}_\sigma \cap \cD^{\mathrm{E}}_{\sigma'} \cap \cF^{\mathrm{E}}_{\sigma^{-1} \circ \sigma'}},
	\end{flalign*} where $(a)$ is obtained by noticing that 
	\begin{flalign*}
	 \card{\cD^{\mathrm{E}}_\sigma \cap \cD^{\mathrm{E}}_{\sigma'}} +  \card{\cD^{\mathrm{E}}_\sigma \cap \cD^{\mathrm{E}}_{\sigma'} \cap \cF^{\mathrm{E}}_{\sigma^{-1} \circ \sigma'}} & = d^{\mathrm{E}}_\sigma + d^{\mathrm{E}}_{\sigma'} -  \card{\cD^{\mathrm{E}}_\sigma \cup \cD^{\mathrm{E}}_{\sigma'}} + f^{\mathrm{E}}_{\sigma^{-1} \circ \sigma'} -  \card{\cF^{\mathrm{E}}_{\sigma} \cup \cF^{\mathrm{E}}_{\sigma'}}
	\end{flalign*} and that 
	$
	 \card{\cD^{\mathrm{E}}_\sigma \cup \cD^{\mathrm{E}}_{\sigma'}} +  \card{\cF^{\mathrm{E}}_{\sigma} \cup \cF^{\mathrm{E}}_{\sigma'}} = N.
	$
	For a fixed $d$ and $\sigma,\sigma' \in \mathcal{S}_{n,d}$, one has
	\begin{flalign*}
	\|(I_N - \Sigma)^T(I_N - \Sigma')\|_F \leq \|(I_N - \Sigma')\|_F + \| \Sigma^T(I_N - \Sigma')\|_F
	& = 2 \|(I_N - \Sigma') \|_F \\
	& \leq 2 \sqrt{2 d^{\mathrm{E}}_{\sigma'}} \\
	&  \leq 2 \sqrt{2 d n},
	\end{flalign*} where we used \eqref{eq:ineq_d} in the last step. One also has
	\begin{flalign*}
	\|(I_N - \Sigma)^T(I_N - \Sigma')\|_{\mathrm{op}} &\leq \rho(I_N-\Sigma) \times \rho(I_N - \Sigma') \\
	& \leq 2 \times 2 = 4. 
	\end{flalign*} Taking $\delta = n^{-(2d+2)}$, Lemma \ref{lemma:HW_ineq} gives that with probability at least $1-2\delta$, 
	\begin{flalign}
	\label{eq:concentration_c}
	\big|c_{\sigma,\sigma'} -  \card{\cD^{\mathrm{E}}_\sigma \cap \cD^{\mathrm{E}}_{\sigma'}} -  \card{\cD^{\mathrm{E}}_\sigma \cap \cD^{\mathrm{E}}_{\sigma'} \cap \cF^{\mathrm{E}}_{\sigma^{-1} \circ \sigma'}} \big|  & \leq c \left(2\sqrt{2}\sqrt{d(2d+2)} \sqrt{n \log n} + 4 (2d+2) \log n\right) \nonumber \\
	& \leq C d\sqrt{n \log n}, 
	\end{flalign}for some universal constant $C>0$. The proof is concluded by checking that this inequality holds w.h.p. for all $d$ and $\sigma,\sigma' \in \cS_{n,d}$ : the probability that at least one pair $(\sigma,\sigma')$ contradicts \eqref{eq:concentration_c} is upper bounded by
	\begin{equation*}
	n \times \card{\cS_{n,d}} ^2 \times 2 \delta \leq 2n^{1+2d-2d-2} =o(1).
	\end{equation*}
\end{proof}

In the rest of the chapter we define the event
\begin{equation}\label{eq:event_A}
\cA := \left\lbrace \forall d \in [n], \forall \sigma,\sigma' \in \cS_{n,d}, \big|c_{\sigma,\sigma'} -  \card{\cD^{\mathrm{E}}_\sigma \cap \cD^{\mathrm{E}}_{\sigma'}} -  \card{\cD^{\mathrm{E}}_\sigma \cap \cD^{\mathrm{E}}_{\sigma'} \cap \cF^{\mathrm{E}}_{\sigma^{-1} \circ \sigma'}} \big|   \leq   C d\sqrt{n \log n} \right\rbrace,
\end{equation} which happens with probability $1-o(1)$ by Corollary \ref{corollary:control_C}.

\section{Achievability result}\label{section:ach}
In this section, we establish the result of Theorem \ref{GA_IT:theorem:GA_IT_positive}.
\subsection{Failure of first moment method}\label{subsection:first_moment}
For the achievability result, the first strategy is to use the union bound (or first moment method) to show that under condition \eqref{eq:cond_ach_glo} of Theorem \ref{GA_IT:theorem:GA_IT_positive},
\begin{equation*}\label{eq:union_bound_ach}
\dP\left(\mbox{MAP fails} \right) = \dP\left( \hat{\pi}_{\MAP} \neq \pi \right) =o(1).
\end{equation*} As described hereafter, this naive method does not give the correct bound. Indeed, let us evaluate $\dP\left(\delta(\sigma) \leq 0 \right)$ for a given $\sigma \neq \id$. In view of the conditional distribution \eqref{eq:delta_gaussien_general} of $\delta(\sigma)$ we have
\begin{flalign*}
\dP\left(\delta(\sigma) \leq 0 \right) & = \dE \left[ \dE_A \left[\one_{\delta(\sigma) \leq 0 } \right] \right] = \dE \left[ \dP_A \left( \rho^2 v_\sigma - 2 \rho \sqrt{1-\rho^2} X_{\sigma} \leq 0\right)\right] \\
& = \dE \left[ \dP_A \left( \rho^2 v_\sigma - 2 \rho \sqrt{1-\rho^2} \sqrt{v_\sigma} \cdot \cN(0,1) \leq 0\right)\right] \\
& = \dE \left[ \dP_A \left(\cN(0,1) \geq \frac{\rho \sqrt{v_\sigma}}{2 \sqrt{1-\rho^2}}  \right)\right] \leq \dE \left[ \exp \left( - \frac{\rho^2}{8(1-\rho^2)} v_\sigma\right)\right],
\end{flalign*} where we used standard Gaussian concentration in the last inequality: $\dP\left(\cN(0,1) \geq t  \right) \leq \exp(-t^2/2)$. Note that on event $\cA$ defined in \eqref{eq:event_A} and inequality \eqref{eq:ineq_d},
\begin{equation*}\label{eq:controle_v_first_mom}
\forall d \in [n], \forall \sigma \in \cS_{n,d}, \, v_\sigma \geq 2 d^{\mathrm{E}}_{\sigma} - C d_{\sigma} \sqrt{n \log n} \geq  d^{\mathrm{E}}_{\sigma} \left(2-2\eps_n\right),
\end{equation*}
setting $\eps_n=2C \sqrt{\log n/n}$. Union bound then gives
\begin{flalign*}
\dP\left(\mbox{MAP fails} \right) & \leq \dP\left(\exists \sigma \in \cS_n \setminus \left\{\id\right\}, \; \delta(\sigma) \leq 0 \right) \\
& \leq o(1) + \sum_{\sigma \in \cS_n \setminus \left\{\id\right\}} \dE \left[ \exp \left( - \frac{\rho^2}{8(1-\rho^2)} v_\sigma\right) \one_{\cA}\right] \\
& \leq o(1) + \sum_{\sigma \in \cS_n \setminus \left\{\id\right\}} \exp \left( - \frac{\rho^2}{8(1-\rho^2)} (2-2\eps_n) d^{\mathrm{E}}_{\sigma}\right) \\
& \leq o(1) + \sum_{\sigma \in \cS_n \setminus \left\{\id\right\}} \exp \left( - \frac{\rho^2}{4} (1-\eps_n) d^{\mathrm{E}}_{\sigma}\right),
\end{flalign*}where we used $1/(1-\rho^2)>1$ in the last step. Let us now study the last sum, distinguishing the terms according to $d:=d_\sigma$: 
\begin{itemize}
	\item As long as $d=o(n)$, by Remark \ref{remark:equiv_d}, the terms behave like $\exp \left(- \frac{\rho^2}{4} (1-\eps_n) d n \right)$. By \eqref{eq:ineq_S_n,d}, $\log  \card{\cS_{n,d}} \leq d \log n$ so the partial sum is small if $\frac{\rho^2}{4} (1-\eps_n) n - \log n >0$, which gives the necessary condition $\rho^2 \geq 4 \frac{\log n}{n}$. 
	
	\item However, the situation is different when it comes to large values of $d$. For instance, let us study the contribution of \emph{derangements} to the sum (that is, $\sigma$ such that $d_\sigma=n$). Note that these derangements are very numerous (their number is $\sim e^{-1} n!$). Again by Remark \ref{remark:equiv_d}, their contribution is thus of order 
	\begin{equation*}
		e^{-1} n! \exp \left(\rho^2(1-\eps_n) n^2/8(1-o(1))\right) = \exp \left(\left(n \log n - \rho^2 n^2/8\right)(1-o(1))\right),
	\end{equation*} which gives a more restrictive condition: $\rho^2 \geq 8 \frac{\log n}{n}$. 
\end{itemize}

As seen here-above, this naive first moment method enables to ensure feasibility of exact reconstruction only in the regime where $\rho^2 \geq 8 \frac{\log n}{n}$, which is not the optimal one. This bound is actually quite rough here, because the variables are substantially correlated when $d$ gets large and their contributions make the first moment explode. We take advantage of these correlations in the next section in order to get access to the sharp bound.

\subsection{Improving the first moment method with correlations.}
For all $d \in \left\lbrace 2, \ldots, n \right\rbrace $, define $\cE_d$ the event:
\begin{equation*}\label{eq:def_Ed}
\cE_d := \left\lbrace \exists \sigma \in \cS_{n,d}, \; \delta(\sigma) \leq 0 \right\rbrace.
\end{equation*} In this Section we will assume that
\begin{equation*}
\rho \geq (2+\eps)\sqrt{\frac{\log n}{n}},
\end{equation*} for some $\eps>0$. Recall that we work on the event $\cA$ defined in \eqref{eq:event_A}, and that conditionally on entries of matrix $A$, we can write
\begin{equation}\label{eq:delta_gaussien}
\delta(\sigma) = \rho^2 v_\sigma - 2 \rho \sqrt{1-\rho^2} X_{\sigma},
\end{equation} where $X=(X_\sigma)_{\sigma \in \cS_{n,d}}$ is a Gaussian vector, centered, with covariance given by $\Cov(X_\sigma,X_{\sigma'}) = c_{\sigma,\sigma'}$. Also note that on event $\cA$, for all $d\leq \alpha n$ and $\sigma \in \cS_{n,d}$, inequality \eqref{eq:ineq_d} gives 
\begin{equation}\label{eq:equiv_v_alpha_n}
v_\sigma = (1-o(1))2dn (1-\alpha/2).
\end{equation} In view of \eqref{eq:equiv_v_alpha_n}, as previously done in Section \ref{subsection:first_moment}, naive first moment method may suffice for $d \leq \alpha n$:
\begin{flalign*}
\dP\left(\bigcup_{2 \leq d \leq \alpha n} \cE_d \right) & \leq o(1) + \sum_{d=2}^{\alpha n}  \card{\cS_{n,d}} \times \dP\left(\cN(0,1) \geq \frac{\rho \sqrt{v_{\sigma}}}{2 \sqrt{1-\rho^2}} \cap \cA\right)\\
& \leq o(1) + \sum_{d=2}^{\alpha n}  \card{\cS_{n,d}} \times \dP\left(\cN(0,1) \geq (1+\eps/2) \sqrt{2d \log n (1-\alpha/2) }(1-o(1)) \right)\\
& \leq o(1) + \sum_{d=2}^{\alpha n} \exp \left(d \log n -  d \log n (1+\eps) (1-\alpha/2) + o(d \log n)\right),
\end{flalign*} which is $o(1)$ as soon as $\alpha < \alpha_0 := \frac{2 \eps}{1-\eps/2}$. It then remains to control the probabilities $\dP(\cE_d)$ for $d \geq \alpha_0 n$. 
As mentioned earlier, we take advantage of the correlation structure in \eqref{eq:delta_gaussien}. More precisely, we show that all variables $X_\sigma$ at a given level $d=\alpha n$ have substantial positive covariance when compared to their variance -- of order $\alpha (2-\alpha) n^2$ on $\cA$ by \eqref{eq:equiv_v_alpha_n} -- as shown in Figure \ref{fig:GA_IT:plot_f}. To do so, we derive an appropriate lower bound for $c_{\sigma,\sigma'}$ for $\sigma,\sigma' \in \cS_{n,\alpha n}$. This is the scope of the following Lemma:
\begin{lemma}\label{lemma:min_corr_delta}
	With high probability, there exists a universal constant $C_1>0$ such that for any $d = \alpha n$ with fixed $\alpha>0$ and $\sigma,\sigma' \in \cS_{n,\alpha n}$:
	\begin{equation*}\label{eq:min_corr_delta}
	\Cov(X_{\sigma},X_{\sigma'}) = c_{\sigma,\sigma'} \geq f(\alpha) n^2 - C_1 n^{3/2} \log^{1/2} n ,
	\end{equation*}with
	\begin{equation}\label{eq:f_alpha}
	f(\alpha) := \left\{
	\begin{array}{ll}
	\alpha^2 & \mbox{if } \alpha<1/2 \\
	\alpha^2- \frac{1}{2}(2\alpha-1)^2 & \mbox{if } \alpha \geq 1/2
	\end{array}
	\right.
	\end{equation}
	Thus for any $\eps'>0$, with high probability, for any $d=\alpha n$ with fixed $\alpha>0$,
	\begin{equation*}\label{eq:max_control}
	\max_{\sigma \in \cS_{n,\alpha n}} X_\sigma \leq \sqrt{2 \alpha \left(\alpha(2-\alpha)-f(\alpha)\right)} n^{3/2}\log^{1/2} n + (2+\eps') n \log^{1/2} n.
	\end{equation*}
\end{lemma}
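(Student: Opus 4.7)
The plan is to prove the covariance lower bound by a purely combinatorial argument (relying on Corollary~\ref{corollary:control_C} to translate $c_{\sigma,\sigma'}$ into a set-counting quantity), and then to deduce the maximum bound via Gaussian concentration applied to the \emph{centered} variables $X_\sigma - X_{\sigma_0}$, whose variance is small thanks to the covariance lower bound we have just obtained.

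For the lower bound on $c_{\sigma,\sigma'}$: on the event $\cA$ of Corollary~\ref{corollary:control_C}, it suffices to lower-bound $\card{\cD^\mathrm{E}_\sigma \cap \cD^\mathrm{E}_{\sigma'}}$, since $\card{\cD^\mathrm{E}_\sigma \cap \cD^\mathrm{E}_{\sigma'} \cap \cF^\mathrm{E}_{\sigma^{-1}\circ \sigma'}}\geq 0$ and the Hanson--Wright error is $O(d\sqrt{n\log n}) = O(n^{3/2}\sqrt{\log n})$. Set $D = \cD_\sigma$, $D' = \cD_{\sigma'}$ (both of size $\alpha n$), and denote $a := \card{D \cap D'}$, which ranges over $[\max(0,(2\alpha-1)n),\,\alpha n]$. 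Any edge $e = \{u,v\}$ that is unfixed by both $\sigma$ and $\sigma'$ must have $\card{e\cap D}\geq 1$ and $\card{e \cap D'} \geq 1$; conversely, any such edge is unfixed by both unless it happens to be a transposition of $\sigma$ or $\sigma'$, and the total number of such transpositions is at most $\alpha n$. A straightforward case analysis on the position of $\{u,v\}$ with respect to the partition $\{D \cap D',\, D \setminus D',\, D' \setminus D,\, \overline{D \cup D'}\}$ gives the count
\begin{equation*}
\#\bigl\{e \,:\, \card{e \cap D}\geq 1 \text{ and } \card{e \cap D'}\geq 1\bigr\} = \frac{a^2}{2} + a n (1 - 2\alpha) + \alpha^2 n^2 + O(n).
\end{equation*}
Minimizing this quadratic in $a$ over the admissible range yields $a^\star = 0$ and value $\alpha^2 n^2$ when $\alpha < 1/2$; and $a^\star = (2\alpha-1)n$ and value $\alpha^2 n^2 - \tfrac{1}{2}(2\alpha-1)^2 n^2$ when $\alpha \geq 1/2$. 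Both cases coincide with $f(\alpha) n^2$ as defined in \eqref{eq:f_alpha}, which combined with the Hanson--Wright error yields the announced bound with some $C_1 > 0$.

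For the maximum bound, fix any reference permutation $\sigma_0 \in \cS_{n,\alpha n}$ and write $\max_\sigma X_\sigma \leq X_{\sigma_0} + \max_\sigma (X_\sigma - X_{\sigma_0})$. The single Gaussian $X_{\sigma_0}$ has variance $v_{\sigma_0}\leq \alpha(2-\alpha)n^2 + O(n^{3/2}\sqrt{\log n})$ on $\cA$, so a standard tail estimate yields $X_{\sigma_0} \leq (2+\eps')n\sqrt{\log n}$ with probability $1-o(1)$. For the max of the differences, the just-established covariance lower bound gives
\begin{equation*}
\Var(X_\sigma - X_{\sigma_0}) = v_\sigma + v_{\sigma_0} - 2 c_{\sigma,\sigma_0} \leq 2\bigl[\alpha(2-\alpha) - f(\alpha)\bigr] n^2 + o(n^2),
\end{equation*}
so a union bound with $\log\card{\cS_{n,\alpha n}} \leq \alpha n \log n$ plus Gaussian tail estimates controls $\max_\sigma(X_\sigma - X_{\sigma_0})$ by a term of order $\sqrt{2\alpha[\alpha(2-\alpha)-f(\alpha)]}\, n^{3/2}\sqrt{\log n}$ with high probability.

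The main obstacle is the combinatorial minimization that produces the piecewise definition of $f(\alpha)$: the two regimes $\alpha<1/2$ and $\alpha\geq 1/2$ correspond to which extremal overlap between $\cD_\sigma$ and $\cD_{\sigma'}$ minimizes the count of edges hitting both unfixed sets, and one must carefully verify that the $O(n)$ corrections (from transpositions and from the lower-order Hanson--Wright error) are negligible compared to the leading $n^{3/2}\sqrt{\log n}$ correction. A secondary delicate point is tracking variance constants sharply enough in the Gaussian-maximum step so that the advantage of positive covariance (the $f(\alpha)$ reduction) is not lost.
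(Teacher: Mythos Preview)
Your combinatorial argument for the covariance lower bound is correct and matches the paper's: both reduce to minimizing the same quadratic $\tfrac{a^2}{2}+a n(1-2\alpha)+\alpha^2 n^2$ in $a=\card{\cD_\sigma\cap\cD_{\sigma'}}$ over the admissible range, and the Hanson--Wright error is indeed lower order.

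The gap is in the second part. Your centering trick $\max_\sigma X_\sigma\le X_{\sigma_0}+\max_\sigma(X_\sigma-X_{\sigma_0})$ followed by a plain union bound does \emph{not} yield the constant $\sqrt{2\alpha[\alpha(2-\alpha)-f(\alpha)]}$ you claim. Since $\Var(X_\sigma-X_{\sigma_0})\le 2[\alpha(2-\alpha)-f(\alpha)]n^2$ and $\log\card{\cS_{n,\alpha n}}\le \alpha n\log n$, the union bound gives
\[
\max_\sigma(X_\sigma-X_{\sigma_0})\;\lesssim\;\sqrt{2\cdot 2[\alpha(2-\alpha)-f(\alpha)]n^2\cdot \alpha n\log n}
\;=\;2\sqrt{\alpha[\alpha(2-\alpha)-f(\alpha)]}\,n^{3/2}\sqrt{\log n},
\]
which is a factor $\sqrt{2}$ too large. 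This factor is fatal downstream: the final inequality in Section~\ref{section:ach} would become $\alpha(2-\alpha)\ge 2\sqrt{\alpha[\alpha(2-\alpha)-f(\alpha)]}$, which for $\alpha<1/2$ (where $f(\alpha)=\alpha^2$) reduces to $\alpha^2+4\alpha-4\ge 0$, failing for all $\alpha<2\sqrt{2}-2\approx 0.83$. Since the first-moment part only covers $\alpha\le\alpha_0=2\eps/(1+\eps/2)$, which is small for small $\eps$, a whole range of $\alpha$ is left uncovered and Theorem~\ref{GA_IT:theorem:GA_IT_positive} would not follow.

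The paper avoids this loss by a Slepian-type comparison with an \emph{exchangeable} auxiliary process $Z_\sigma=\sqrt{c_\alpha}\,\xi_0+\sqrt{v_\alpha-c_\alpha}\,\xi_\sigma$ (with $\xi_0,\xi_\sigma$ i.i.d.\ standard Gaussians), for which $\max_\sigma Z_\sigma=\sqrt{c_\alpha}\,\xi_0+\sqrt{v_\alpha-c_\alpha}\,\max_\sigma\xi_\sigma$ and the leading term is $\sqrt{2(v_\alpha-c_\alpha)\log N}$, saving exactly the $\sqrt{2}$. The point is that subtracting one \emph{sample} $X_{\sigma_0}$ doubles the residual variance, whereas subtracting the \emph{common part} (here $\sqrt{c_\alpha}\,\xi_0$) does not; your union bound cannot exploit the residual positive correlation among the $X_\sigma-X_{\sigma_0}$ that would compensate for this.
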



The proof of this Lemma is obtained by working on event $\cA$ defined in \eqref{eq:event_A}, and establishing a lower bound on $\card{\cD^{\mathrm{E}}_\sigma \cap \cD^{\mathrm{E}}_{\sigma'}}$, which is simply the number of edges that are deranged both by $\sigma^E$ and $\sigma'^E$. It can be found in Appendix \ref{appendix:lower_bound_corr}.

\begin{figure}[h]
	\centering
	\includegraphics[width=0.8\textwidth]{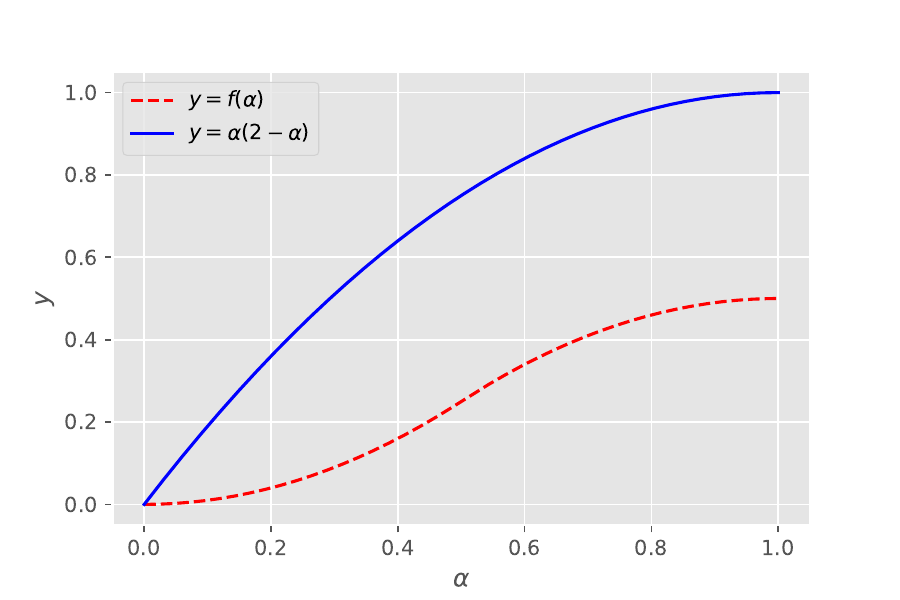}
	\caption{Plot on $[0,1]$ of normalized variance $\alpha (2-\alpha)$, together with the lower bound on the normalized covariance (function $f$) defined by \eqref{eq:f_alpha}.}
	\label{fig:GA_IT:plot_f} 
\end{figure}

Then, since $f(\alpha) \leq \alpha(2-\alpha)$ with elementary computations, according to Lemma \ref{lemma:min_corr_delta}, there is an event $\cB$ of probability $1-o(1)$  such that $$\max_{\sigma \in \cS_{n,d}} X_\sigma \leq (1+o(1)) \sqrt{2 \alpha \left(\alpha(2-\alpha)-f(\alpha)\right)} n^{3/2} \log^{1/2} n$$ holds for all $d=\alpha n$ with $\alpha>\alpha_0$. Note that on event $\cA \cap \cB$, for all $d  = \alpha n$ and $\sigma \in \cS_{n,d}$,
\begin{flalign*}
\rho^{-1}  \delta(\sigma) & \geq \rho v_{\sigma} - 2 \sqrt{1-\rho^2} \max_{\sigma \in \cS_{n,d}} X_\sigma \\
& \geq (1+o(1)) n^{3/2} \log^{1/2} n \left[(2+\eps) \alpha(2-\alpha) - 2 \sqrt{2 \alpha \left(\alpha(2-\alpha)-f(\alpha)\right)}\right]\\
& \geq (1+o(1)) \times 2 \times \left[\alpha(2-\alpha) - \sqrt{2 \alpha \left(\alpha(2-\alpha)-f(\alpha)\right)}\right] n^{3/2} \log^{1/2} n  \geq 0,
\end{flalign*} for $n$ large enough, since it can be easily checked (see Appendix \ref{appendix:final_function}) that 

\begin{lemma}\label{lemma:final_function_study}
	For every $\alpha \in [0,1]$,
	\begin{equation} \label{eq:final_function_study}
	\alpha(2-\alpha) - \sqrt{2 \alpha \left(\alpha(2-\alpha)-f(\alpha)\right)} \geq 0.
	\end{equation}
\end{lemma}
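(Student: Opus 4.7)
The plan is to square and reduce the inequality to two polynomial comparisons, one in each branch of the piecewise definition of $f$. The case $\alpha=0$ is immediate since both terms vanish, so throughout I assume $\alpha \in (0,1]$. Since $\alpha(2-\alpha) \geq 0$ on $[0,1]$, the inequality \eqref{eq:final_function_study} is equivalent to
\begin{equation*}
\alpha^2(2-\alpha)^2 \geq 2\alpha\bigl(\alpha(2-\alpha)-f(\alpha)\bigr),
\end{equation*}
and after dividing by $\alpha>0$, to the cleaner form
\begin{equation*}
\alpha(2-\alpha)^2 \geq 2\bigl(\alpha(2-\alpha)-f(\alpha)\bigr). \qquad (\star)
\end{equation*}

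First I would treat the range $\alpha \in (0,1/2)$, where $f(\alpha)=\alpha^2$. A direct expansion gives $\alpha(2-\alpha)-f(\alpha) = 2\alpha(1-\alpha)$, so $(\star)$ becomes $\alpha(2-\alpha)^2 \geq 4\alpha(1-\alpha)$. Dividing by $\alpha$ and expanding, this is equivalent to $\alpha^2 \geq 0$, which is trivially true. This first case is therefore routine.

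The remaining range $\alpha \in [1/2,1]$ is the one that actually needs thought. Here $f(\alpha)=\alpha^2-\tfrac{1}{2}(2\alpha-1)^2$, and a short computation reveals the very clean identity $\alpha(2-\alpha)-f(\alpha)=\tfrac{1}{2}$. Thus $(\star)$ reduces to showing that $g(\alpha):=\alpha(2-\alpha)^2 \geq 1$ on $[1/2,1]$. I would verify this by a brief monotonicity argument: the derivative $g'(\alpha)=(2-\alpha)(2-3\alpha)$ vanishes in $[1/2,1]$ only at $\alpha=2/3$, and is positive on $[1/2,2/3)$ and negative on $(2/3,1]$, so $g$ is unimodal with an interior maximum. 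Consequently the minimum of $g$ on $[1/2,1]$ is attained at one of the endpoints, and a direct evaluation gives $g(1/2)=9/8$ and $g(1)=1$. In particular $g \geq 1$ on $[1/2,1]$, which closes the argument. The only slightly delicate feature to notice is that equality holds at $\alpha=1$, which matches the intuition that energies of derangements can be marginally dangerous and explains why Lemma \ref{lemma:min_corr_delta} needed the covariance lower bound $f(\alpha)$ rather than just the trivial bound $0$.
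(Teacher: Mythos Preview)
Your proof is correct and follows the same squaring-and-case-splitting approach as the paper. The only cosmetic difference is in the case $\alpha\in[1/2,1]$: you exploit the clean identity $\alpha(2-\alpha)-f(\alpha)=\tfrac12$ and then use a short calculus argument on $g(\alpha)=\alpha(2-\alpha)^2$, whereas the paper instead rearranges to $f(\alpha)\ge \alpha^2-\alpha^3/2$ and factors the resulting cubic as $(\alpha-1)(\alpha^2-3\alpha+1)$, reducing the claim to $\alpha\ge (3-\sqrt{5})/2$. Both routes are elementary and yield the same boundary case of equality at $\alpha=1$.
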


Previous computations hence give that $\dP\left(\bigcup_{d \geq \alpha n} \cE_d\right) \leq 1 - \dP(\cA \cap \cB) =o(1)$, and ends the proof of Theorem \ref{GA_IT:theorem:GA_IT_positive}.

\section{Converse bound: second moment method for transpositions}\label{section:conv}
In this section, we prove Theorem \ref{GA_IT:theorem:GA_IT_negative}. As claimed in the introduction, theory from Bayesian optimal estimation guarantees that the best possible estimator for our exact reconstruction problem, in the Bayes risk sense, is $\hat{\pi}_{\MAP}$. We will show that under assumption \eqref{eq:cond_imp_glo} of Theorem \ref{GA_IT:theorem:GA_IT_negative}, this MAP estimator fails with high probability, which implies that no estimator can succeed.

This converse bound is obtained by a second moment argument, showing that with high probability, there are  lots of permutation $\tau \neq \id$ -- in fact, transpositions -- such that $\delta(\tau)$ is negative, that is, $\tau^{-1} \circ \pi^\star$ is a substantially better alignment than $\pi^\star$, with lowest energy. Let us denote $\cT_n \subset \cS_n$ the set of all permutations of $[n]$ that are transpositions. 
For all $\tau \in \cT_n$, we have $d^{\mathrm{E}}_\tau = 2(n-2)$. Corollary \ref{corollary:control_C} gives that the event
\begin{equation*}\label{eq:event_D}
\cC := \left\lbrace \forall \tau, \tau' \in \cT_n, 
\big|c_{\tau,\tau'} -  \card{\cD^{\mathrm{E}}_\tau \cap \cD^{\mathrm{E}}_{\tau'}} -  \card{\cD^{\mathrm{E}}_\tau \cap \cD^{\mathrm{E}}_{\tau'} \cap \cF^{\mathrm{E}}_{\tau \circ \tau'}} \big|  \leq   C\sqrt{n \log n} \right\rbrace 
\end{equation*} happens with probability $1-o(1)$ for $C>0$ large enough. In particular, on $\cC$, for $C>0$ large enough,
\begin{equation*}\label{eq:control_v_tau}
\forall \tau \in \cT_n, \; \left|v_\tau - 4n\right| \leq C \sqrt{n \log n}.
\end{equation*}

In this section we are working under the assumption \eqref{eq:cond_imp_glo} that we recall here:
\begin{equation*}
\rho^2 \leq \frac{4 \log n - \log \log n - \omega(1)}{n}
\end{equation*} 
We are about to show the following: under condition \eqref{eq:cond_imp_glo}, with high probability,
\begin{equation}\label{eq_converse_transp}
\card{\set{\tau \in \cT_n, \, \delta(\tau) < 0}}  =\omega(1).
\end{equation} To do so, we use the classical Paley-Zygmund inequality (Lemma \ref{intro:lemma:first_moment_method} of Section \ref{intro:subsection:basics_rg}) that implies (taking $c\to0$ in Lemma \ref{intro:lemma:first_moment_method}) that if $Y$ is a positve random variable such that $\dE\left[Y^2\right] \sim \dE\left[Y\right]^2$, then $Y\geq o(\dE\left[Y\right])$ with high probability. Define
\begin{equation}\label{eq_def_X}
X := \sum_{\tau \in \cT_n} \mathbf{1}_{\delta(\tau)<0}.
\end{equation} Using a standard coupling argument in \eqref{eq_def_X}, one can see that $X$ is decreasing with $\rho$, thus we can assume without loss of generality that
\begin{equation}
\label{eq:sec_cond_conv}
\rho^2 = \frac{4 \log n - \log \log n - a_n}{n},
\end{equation} with a sequence $(a_n)_n$ such that $a_n = \omega(1)$ and $a_n = o(\log \log n)$, e.g. $a_n = \log \log \log n$. We compute the first moment of $X$, in view of the conditional distribution of $\delta(\tau)$ given in \eqref{eq:delta_gaussien_general}:
\begin{flalign*}
\dE\left[X\right] & \geq  \dE\left[X \mathbf{1}_\cC \right] = \frac{n(n-1)}{2} \dE\left[\dP_A \left(\cN(0,1) \geq \frac{\rho \sqrt{v_\tau}}{2 \sqrt{1-\rho^2}} \cap  \cC \right) \right]\\
& \geq \frac{n(n-1)}{2} \dE\left[(1-o(1)) \dP_A \left(\cN(0,1) \geq \frac{1}{2} \sqrt{4 \log n - \log \log n - a_n} \sqrt{4 - C n^{-1/2} \log^{1/2}n}  \right)  \right]\\
& = \frac{n(n-1)}{2} \dE\left[(1-o(1)) \dP_A \left(\cN(0,1) \geq  \sqrt{4 \log n - \log \log n - a_n} - o(1) \right)  \right]\\
& \sim \frac{n^2}{4 \sqrt{2 \pi} \sqrt{ \log n }}\exp \left(- 2\log n + \frac{\log \log n }{2} + \frac{a_n }{2}\right)\\
& = \frac{1}{4 \sqrt{2 \pi}} \exp \left( \frac{a_n}{2}  \right) \to \infty.
\end{flalign*} Note that \eqref{eq:sec_cond_conv} is thus precisely the condition ensuring that $\dE\left[X \mathbf{1}_\cC \right] \to \infty$. The second moment argument computation being a little more technical, we encapsulate it into the following Lemma:
\begin{lemma}[Second moment computation of $X \mathbf{1}_\cC$]\label{lemma_second_moment}
Let $Y:= X \mathbf{1}_\cC$. Under assumption \eqref{eq:sec_cond_conv},
\begin{equation*}
\dE\left[Y^2\right]
\leq (1+o(1)) \dE\left[Y\right]^2.
\end{equation*}
\end{lemma}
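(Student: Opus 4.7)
The strategy is to expand
$$\dE[Y^2] = \sum_{\tau, \tau' \in \cT_n} \dP\!\left(\delta(\tau)<0, \, \delta(\tau')<0, \, \cC\right)$$
and classify ordered pairs $(\tau, \tau')$ according to $\card{\mathrm{supp}(\tau) \cap \mathrm{supp}(\tau')} \in \left\{0, 1, 2\right\}$. The case $\card{\cdot}=2$ reduces to the diagonal $\tau = \tau'$ and contributes exactly $\dE[Y]$, which is $o(\dE[Y]^2)$ since $\dE[Y] \to \infty$. The two non-trivial contributions come from \emph{overlapping} pairs (one common element in their supports, $n(n-1)(n-2)$ such ordered pairs) and from \emph{disjoint} pairs ($(1+o(1))\binom{n}{2}^2$ of them).

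The first step is to compute, on the event $\cC$, the main order of $c_{\tau,\tau'}$ in both regimes, using Corollary \ref{corollary:control_C}. For disjoint $\tau=(a,b)$ and $\tau'=(c,d)$, a direct count gives $\card{\cD^{\mathrm{E}}_\tau \cap \cD^{\mathrm{E}}_{\tau'}} = 4$ (the four edges joining the two supports), so $c_{\tau,\tau'} = O(\sqrt{n \log n})$ on $\cC$, yielding a standardized correlation $\rho_0 = O(\sqrt{\log n / n}) = o(1)$. For overlapping $\tau = (a,b), \tau' = (a,c)$, the edges deranged by both $\tau$ and $\tau'$ are the edges $\{a,x\}$ with $x \notin \{a,b,c\}$ together with $\{b,c\}$, giving $\card{\cD^{\mathrm{E}}_\tau \cap \cD^{\mathrm{E}}_{\tau'}} = n-2$; the second term in Corollary \ref{corollary:control_C} vanishes since the 3-cycle $\tau\tau' = (a,c,b)$ only fixes edges entirely outside $\{a,b,c\}$, none of which are deranged by $\tau$. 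Hence $c_{\tau, \tau'} = (n-2) + O(\sqrt{n \log n})$, and the correlation is close to $1/4$, bounded away from zero.

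For the overlapping regime, I would use the crude bound $\mathbf{1}_{\delta(\tau)<0}\mathbf{1}_{\delta(\tau')<0} \leq \mathbf{1}_{\delta(\tau) + \delta(\tau') < 0}$. Conditionally on $A \in \cC$, the quantity $X_\tau + X_{\tau'}$ is centered Gaussian with variance $v_\tau + v_{\tau'} + 2c_{\tau,\tau'} = 10n(1+o(1))$, and the event $\delta(\tau) + \delta(\tau') < 0$ amounts to $X_\tau + X_{\tau'} > 4\rho n(1+o(1))$. A standard Gaussian tail estimate then gives probability $\exp\!\left(-4\rho^2 n/5 + o(\log n)\right) = n^{-16/5 + o(1)}$, so the overlapping contribution is bounded by $O(n^3) \cdot n^{-16/5+o(1)} = n^{-1/5+o(1)} = o(\dE[Y]^2)$.

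For the disjoint regime, conditionally on $A \in \cC$, the pair $(X_\tau, X_{\tau'})$ is bivariate Gaussian with correlation $\rho_0 = o(1)$ and thresholds $t_\tau, t_{\tau'} = 2\sqrt{\log n}(1+o(1))$ satisfying $\rho_0 \, t_\tau t_{\tau'} = O((\log n)^{3/2}/\sqrt{n}) = o(1)$. A standard bivariate Gaussian tail asymptotic then yields
$$\dP_A(\delta(\tau)<0, \delta(\tau')<0) \leq (1 + o(1)) \, \dP_A(\delta(\tau)<0) \, \dP_A(\delta(\tau')<0).$$
Since $v_\tau$ is sharply concentrated on $\cC$ (fluctuations $O(\sqrt{n \log n})$), each factor $\dP_A(\delta(\tau)<0) = \Phi(-t_\tau)$ is $(1+o(1))$ times its deterministic main value; taking expectations and summing over the $(1+o(1))\binom{n}{2}^2$ disjoint pairs then delivers the bound $(1+o(1)) \dE[Y]^2$. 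The main obstacle is the overlapping case: constant positive correlation $\approx 1/4$ rules out any quasi-independence argument, and the proof must rely on (i) reducing to the univariate tail of the sum $\delta(\tau)+\delta(\tau')$ and (ii) exploiting the smaller count $O(n^3)$ to beat the tail $n^{-16/5}$.
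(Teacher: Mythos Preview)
Your proposal is correct and follows the same decomposition as the paper: diagonal, disjoint supports, and overlapping supports, with the same correlation estimates ($c_{\tau,\tau'}=4+O(\sqrt{n\log n})$ and $c_{\tau,\tau'}=(n-2)+O(\sqrt{n\log n})$ respectively) obtained from Corollary~\ref{corollary:control_C}. The only substantive difference is in the overlapping case: you reduce to the one-dimensional tail of $X_\tau+X_{\tau'}$ via the inclusion $\{\delta(\tau)<0,\delta(\tau')<0\}\subset\{\delta(\tau)+\delta(\tau')<0\}$, whereas the paper applies a bivariate Chernoff-type bound (Lemma~\ref{lemma:control_dev_cor_gaussian}$(ii)$) giving $\dP(Z_1>t_n,Z_2>t_n)\lesssim t_n^{-1}\exp(-t_n^2/(1+\alpha_n))$. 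Since the large-deviation rate for $\{Z_1>t,Z_2>t\}$ with equal marginals and positive correlation is attained on the diagonal, both approaches yield the same exponent $n^{-16/5+o(1)}$, and the $O(n^3)$ count then makes the overlapping contribution $o(1)=o(\dE[Y]^2)$. Your disjoint-case quasi-independence is also the paper's argument (Lemma~\ref{lemma:control_dev_cor_gaussian}$(i)$), stated under the slightly sharper hypothesis $\alpha_n t_n^2\to 0$, which holds here and absorbs the paper's residual $e^{-2t_n^2}$ term.
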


\begin{proof}[Proof of Lemma \ref{lemma_second_moment}]

We represent a transposition $\tau$ by its only $2-$cycle $(i \; j)$ with $i<j$. We then distinguish two cases in couples $\tau = (i \; j) \neq \tau' = (k \; \ell) \in \cT_n$:
\begin{itemize}
\item We write $\tau \cap \tau' = \varnothing$ when $\tau$ and $\tau'$ have no common point in their $2-$cycle: $i \neq k$ and $j \neq l$. When $\tau \in \cT_n$ is fixed, note that
\begin{equation*}\label{count_1}
 \card{\set{\tau' \in \cT_n, \,  \tau \cap \tau' = \varnothing}} = \frac{(n-2)(n-3)}{2}.
\end{equation*}

\item We write $\tau \cap \tau' \neq \varnothing$ when $\tau$ and $\tau'$ are different but share one common point: for instance $\tau = (3 \; 5) $ and $\tau = (5 \; 11)$ verify $\tau \cap \tau' \neq \varnothing$. When $\tau \in \cT_n$ is fixed, note that
\begin{equation*}\label{count_2}
\card{\set{\tau' \in \cT_n, \,  \tau \cap \tau' \neq \varnothing}} = 2 (n-2).
\end{equation*}
\end{itemize} 

Note that
\begin{flalign*}
\dE\left[Y^2\right] = \dE\left[Y\right] + \sum_{\tau \in \cT_n} \sum_{\tau', \tau \cap \tau' = \varnothing} \dP(\delta(\tau)<0, \delta(\tau')<0, \cC) + \sum_{\tau \in \cT_n} \sum_{\tau', \tau \cap \tau' \neq \varnothing} \dP(\delta(\tau)<0, \delta(\tau')<0, \cC).
	\end{flalign*} We now evaluate these two sums. For this, we will need the following Lemma, which proof is deferred to Appendix \ref{appendix:corr_gaussians}.

\begin{lemma}[Control of deviation probabilities for correlated Gaussians]\label{lemma:control_dev_cor_gaussian}
Let $Z_1, Z_2$ be two Gaussian variables with mean $0$, variance $1$ and correlation $\alpha_n \in [0,1]$. For any $t_n$ such that $t_n \to \infty$, 
\begin{itemize}
	\item[$(i)$] If $\alpha_n t_n \to 0$, then for $n$ large enough
	\begin{equation}\label{eq:control_dev_gaussian_1}
	\dP \left(Z_1 > t_n, Z_2 > t_n \right) \leq e^{-2t_n^2} + (1+o(1))\dP \left(Z_1 > t_n\right) \dP \left(Z_2 > t_n\right).
	\end{equation} 
	\item[$(ii)$] More generally,
	\begin{equation}\label{eq:control_dev_gaussian_2}
	\dP \left(Z_1 > t_n, Z_2 > t_n \right) \leq (1+o(1))\frac{1+\alpha_n}{\sqrt{2\pi} \, t_n}\exp\left(- \frac{t_n^2}{1+ \alpha_n} \right).
	\end{equation} 
\end{itemize}
\end{lemma}

\proofstep{First case: $\tau \cap \tau' = \varnothing$.} Without loss of generality we can assume that $\tau = (1 \; 2)$ and $\tau' = (3 \; 4)$. The following diagram shows the simple action of $\tau$ and $\tau'$ on an interesting (overlapping) subset of edges.
\begin{center}
	\begin{tabular}{ c c c }
		$\left\lbrace 1,3\right\rbrace $ & $\overset{\tau}{\longleftrightarrow} $& $\left\lbrace 2,3\right\rbrace $\\ 
		{\scriptsize $\tau'$} $\updownarrow$&  & $\updownarrow$ {\scriptsize $\tau'$}  \\  
		$\left\lbrace 1,4\right\rbrace $ & $\overset{\tau}{\longleftrightarrow}$ & $\left\lbrace 2,4\right\rbrace $   
	\end{tabular}
\end{center} 

We then see that $ \card{\cD^{\mathrm{E}}_\tau \cap \cD^{\mathrm{E}}_{\tau'}} +  \card{\cD^{\mathrm{E}}_\tau \cap \cD^{\mathrm{E}}_{\tau'} \cap \cF^{\mathrm{E}}_{\tau \circ \tau'}} = 4 + 0 = 4$. So, denoting $\alpha_{\tau,\tau'} := \frac{c_{\tau,\tau'}}{\sqrt{v_\tau v_{\tau'}}}$, on $\cC$,
\begin{equation*}\label{eq:case_1_alpha}
\left|\alpha_{\tau,\tau'}\right| \leq \frac{C\sqrt{n \log n}+4}{4n-C\sqrt{n \log n}} = O \left(\sqrt{\frac{\log n}{n}}\right).
\end{equation*}
In view of the conditional distribution of $\delta(\tau)$ given in \eqref{eq:delta_gaussien_general}:

\begin{equation}\label{eq:sum_case1}
\sum_{\tau \in \cT_n} \sum_{\tau', \tau \cap \tau' = \varnothing} \dP(\delta(\tau)<0, \delta(\tau')<0, \cC) = (1-o(1)) \sum_{\tau \in \cT_n} \sum_{\tau', \tau \cap \tau' = \varnothing} \dP \left( Z_\tau > t_n, \, Z_{\tau'} >t_n\right),
\end{equation} with $t_n = \sqrt{4\log n - \log \log n - a_n}$, where $Z_{\tau}, Z_{\tau'}$ are two Gaussian variables of mean $0$, with correlation coefficient $\alpha_n$ of order $O(\log^{1/2}n^{-1/2})$. Since $\alpha_n t_n \to 1$, by lemma \ref{lemma:control_dev_cor_gaussian} case $(i)$, the sum in \eqref{eq:sum_case1} is upper bounded by
\begin{flalign*}
 & (1-o(1))\frac{n(n-1)}{2} \times \frac{(n-2)(n-3)}{2} \times \left[C e^{-2t_n^2} + (1-o(1))\dP \left(Z_1 > t_n\right) \dP \left(Z_2 > t_n\right)\right] \\
 & \leq (1+o(1))\dE\left[Y\right]^2.
\end{flalign*}

\proofstep{Second case: $\tau \cap \tau' \neq \varnothing$.} Without loss of generality we can assume that $\tau = (1 \; 2)$ and $\tau' = (2 \; 3)$. We can immediately deduce that $ \card{\cD^{\mathrm{E}}_\tau \cap \cD^{\mathrm{E}}_{\tau'}} +  \card{\cD^{\mathrm{E}}_\tau \cap \cD^{\mathrm{E}}_{\tau'} \cap \cF^{\mathrm{E}}_{\tau \circ \tau'}} = (n-2) + 0 = n-2$. So, denoting $\alpha_{\tau,\tau'} := \frac{c_{\tau,\tau'}}{\sqrt{v_\tau v_{\tau'}}}$, on $\cC$,
\begin{equation*}\label{eq:case_2_alpha}
\left|\alpha_{\tau,\tau'}\right| \leq \frac{C\sqrt{n \log n}+n-2}{4n-C\sqrt{n \log n}} \sim \frac{1}{4}.
\end{equation*}
Again, in view of the conditional distribution of $\delta(\tau)$ given in \eqref{eq:delta_gaussien_general}:

\begin{equation}\label{eq:sum_case2}
\sum_{\tau \in \cT_n} \sum_{\tau', \tau \cap \tau' \neq \varnothing} \dP(\delta(\tau)<0, \delta(\tau')<0, \cC) = (1-o(1)) \sum_{\tau \in \cT_n} \sum_{\tau', \tau \cap \tau' \neq \varnothing} \dP \left( Z_\tau > t_n, \, Z_{\tau'} >t_n\right),
\end{equation} with $t_n = \sqrt{4\log n - \log \log n - a_n}$, where $Z_{\tau}, Z_{\tau'}$ are two Gaussian variables of mean $0$, with correlation coefficient $\alpha_n\sim 1/4$. By Lemma \ref{lemma:control_dev_cor_gaussian} case $(ii)$, the sum in \eqref{eq:sum_case2} is upper bounded by
\begin{flalign*}
(1-o(1))\frac{n(n-1)}{2} \times 2(n-2) \times \left[ (1+o(1))\frac{1+\alpha_n}{\sqrt{2\pi} \, t_n}\exp\left(- \frac{t_n^2}{1+ \alpha_n} \right)\right]\\
\leq C'' n^3 \log^{-1/2}(n)  \exp\left(- \frac{16}{5} \log n + o(\log n) \right)=o(1)=o(\dE\left[Y\right]^2).
\end{flalign*}
\end{proof}	
	
Lemma \ref{lemma_second_moment} together with Payley-Zigmund inequality (Lemma \ref{intro:lemma:first_moment_method} of Section \ref{intro:subsection:basics_rg}) implies that $Y \geq o\left(\mathbb{E}[Y]\right)$ with high probability and thus proves \eqref{eq_converse_transp} and the converse result of Theorem \ref{GA_IT:theorem:GA_IT_negative}.

\begin{remark}
We have shown here that under condition \eqref{eq:cond_imp_glo}, there is with high probability a great number of negative relative energy points near the ground truth, none of them being of significant interest to recover \emph{exactly} our permutation. 
We may also study this relative energy far from the planted permutation, which would be interesting to address the problem of almost exact (resp. partial) alignment, which consists in finding an estimator $\hat{\pi}$ that coincides with $\pi$ on at least $n - o(n)$ (resp. some positive fraction of $n$) points. In the light of our result which shows that exact recovery is not more difficult than detection, we can also conjecture that the same threshold $n \rho^2 / \log n = 4$ is sharp for the tasks of almost exact and partial recovery.
\end{remark}

\newpage

\begin{subappendices}
\addtocontents{toc}{\protect\setcounter{tocdepth}{0}}
\section{Additional proofs}\label{appendix}

\subsection{Proof of Lemma \ref{lemma:min_corr_delta}: lower bound on correlations of relative energies}\label{appendix:lower_bound_corr}
\begin{proof}
	Recall that we work under event $\cA$. Fix $\alpha \in (0,1]$ and take $d=\alpha n$ and $\sigma,\sigma' \in \cS_{n,d}$. The proof is obtained by establishing a fine lower bound on $ \card{\cD^{\mathrm{E}}_\sigma \cap \cD^{\mathrm{E}}_{\sigma'}}$, which is simply the number of edges that are deranged both by $\sigma^E$ and $\sigma'^E$. In order to establish this lower bound, let us assume that $\sigma$ and $\sigma'$ have $ \card{\cD_\sigma \cap \cD_{\sigma'}} = \beta n$ common unfixed points, with $\beta \in [0,\alpha]$. We then form edges in $\cD^{\mathrm{E}}_\sigma \cap \cD^{\mathrm{E}}_{\sigma'}$ in the following way:
	\begin{itemize}
		\item First, by taking all pairs but the pairs made of points in the complement of $\cD_\sigma \cap \cD_{\sigma'}$ and those made of pairs $(i,j)$ that are transpositions of $\sigma$ or $\sigma'$, we obtain at least $\frac{1}{2}\beta(2-\beta) n^2 - \alpha n$ edges.
		\item Then, add new edges made of one extremity in $\cD_{\sigma} \setminus \cD_{\sigma'}$ and one in $\cD_{\sigma'} \setminus \cD_{\sigma}$. Since $\cD_{\sigma}$ (resp $\cD_{\sigma}$) is stable by $\sigma$ (resp. by $\sigma'$), all these $(\alpha-\beta)^2 n^2$ edges are in $\cD^{\mathrm{E}}_\sigma \cap \cD^{\mathrm{E}}_{\sigma'}$.
	\end{itemize} Finally we formed $g(\alpha,\beta)n^2 - \alpha n$ edges, with 
	\begin{equation}
	g(\alpha,\beta) := \frac{1}{2} \beta^2 + (1-2\alpha) \beta + \alpha^2,
	\end{equation} which is minimal on $[0,\alpha]$ at $\beta = 2\alpha-1$ if $\alpha \geq 1/2$, or at $\beta=0$ if $\alpha<1/2$. In any case, this minimum is $f(\alpha)$. The first inequality is established by applying inequality \eqref{eq:event_A} of event $\cA$.\\
	
	For the second part, consider a centered vector $Z= (Z_{\sigma})_{\sigma \in \cS_{n,\alpha n}}$ such that all $Z_{\sigma}$ have same variance $v_\alpha$ and $\Cov(Z_\sigma,Z_{\sigma'}) = c_\alpha $ for $\sigma \neq \sigma'$, with $v_\alpha, c_\alpha $ defined as follows:
	\begin{flalign*}
	v_\alpha &:= \alpha (2-\alpha)n^2 - C_1 n^{3/2}\log^{1/2}n,\\
	c_\alpha &:=  f(\alpha)n^2 - C_1 n^{3/2}\log^{1/2}n.
	\end{flalign*}for some $C_1>0$ large enough. Note that on event $\cA$, for all $\alpha \in (0,1]$, all $\sigma,\sigma' \in \cS_{n,\alpha n}$, $$\Cov(Z_\sigma,Z_{\sigma'}) \leq \Cov(X_\sigma,X_{\sigma'}),$$ so one has that for all $t>0$,
	\begin{equation}\label{eq:comparison_max_cov}
	\dP\left(\max_{\sigma \in \cS_{n,\alpha n}} X_\sigma > t \; \cap \cA\right) \leq \dP\left(\max_{\sigma \in \cS_{n,\alpha n}} Z_\sigma > t \right).
	\end{equation} We now control the right-hand side of \eqref{eq:comparison_max_cov} with this classical Lemma, which proof is find hereafter in Appendix \ref{appendix:max_corr_Gaussians}:
	\begin{lemma}[Maximum of totally correlated Gaussian variables]\label{lemma:max_TC_gaussian}
		Let $Z$ be a centered Gaussian vector of size $N$, such that all $Z_i$ have same variance $v$ and $\Cov(Z_i,Z_j) = c $ for $i \neq j$. Then
		\begin{equation}\label{eq:max_TC_gaussian}
		\dP\left(\max_{1 \leq i \leq N} Z_i > \sqrt{2(v-c) \log N} + 2\sqrt{v\log \log N} \right) \leq \frac{2}{\log N}.
		\end{equation}
	\end{lemma}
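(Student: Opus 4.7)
The plan is to exploit the highly symmetric covariance structure via a decomposition into a common Gaussian factor plus independent Gaussians. Assuming $c \geq 0$ (the case $c<0$ follows directly from Slepian's comparison, since $\max Z_i$ is then stochastically dominated by the maximum of $N$ i.i.d.\ $\cN(0,v)$ variables, for which the stated bound is straightforward), I would use the representation
$$Z_i \eqd \sqrt{c}\, G + Y_i,$$
where $G \sim \cN(0,1)$ and $Y_1, \ldots, Y_N$ are i.i.d.\ $\cN(0, v-c)$, all independent. This indeed produces a Gaussian vector with the correct variance $v$ and pairwise covariance $c$, and isolates the ``common'' fluctuation $\sqrt{c}\,G$ from the ``individual'' fluctuations $\max_i Y_i$.

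I would then split the threshold $T := \sqrt{2(v-c)\log N} + 2\sqrt{v \log \log N}$ as $T=s+t$ with
$$s := \bigl(2\sqrt{v} - \sqrt{2(v-c)}\bigr)\sqrt{\log \log N}, \qquad t := \sqrt{2(v-c)}\bigl(\sqrt{\log N} + \sqrt{\log \log N}\bigr),$$
which is valid because $s \geq 0$ (since $v+c>0$), so that $\{\max Z_i > T\} \subseteq \{\sqrt{c}\,G > s\} \cup \{\max Y_i > t\}$. For the second event, a union bound on i.i.d.\ Gaussians combined with $\dP(\cN(0,1) > u) \leq e^{-u^2/2}$ gives
$$\dP(\max Y_i > t) \leq N \exp\!\Bigl(-\tfrac{t^2}{2(v-c)}\Bigr) = N \exp\bigl(-\log N - 2\sqrt{\log N \log \log N} - \log \log N\bigr),$$
which equals $(1/\log N)\exp(-2\sqrt{\log N \log \log N}) = o(1/\log N)$. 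For the first event, the Gaussian tail bound gives $\dP(\sqrt{c}\,G > s) \leq \exp(-s^2/(2c))$, and a short algebraic manipulation reduces the inequality $s^2/(2c) \geq \log \log N$ to $(v-2c)^2 \geq 0$, which is automatic. Hence this probability is at most $1/\log N$. Summing the two contributions yields the stated $2/\log N$ bound.

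The main obstacle --- albeit a minor one --- lies in choosing the split $T = s+t$ correctly. The naive split $s = 2\sqrt{v \log \log N}$, $t = \sqrt{2(v-c)\log N}$ fails: the union bound on $\{\max Y_i > t\}$ then only yields a probability of order $1/\sqrt{\log N}$, which is too coarse. Shifting a $\sqrt{2(v-c)\log \log N}$ piece from $s$ into $t$ gives the union bound an extra multiplicative gain of order $\exp(-2\sqrt{\log N \log \log N})$ beyond $N \cdot N^{-1}$, while the algebraic identity $(v-2c)^2 \geq 0$ ensures that the lowered $s$ still dominates $\sqrt{2c \log \log N}$ uniformly in $c \in [0, v]$, so that the common-factor term is controlled at the desired level.
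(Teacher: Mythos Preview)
Your proof is correct and follows essentially the same route as the paper: both use the decomposition $Z_i=\sqrt{c}\,\xi_0+\sqrt{v-c}\,\xi_i$ with i.i.d.\ standard Gaussians, bound the common and the individual parts separately via the tail inequality $\dP(\cN(0,1)\geq u)\leq e^{-u^2/2}$, and combine through the same algebraic fact (your $(v-2c)^2\geq 0$ is equivalent to the paper's $\sqrt{c}+\sqrt{v-c}\leq\sqrt{2v}$). The only cosmetic difference is that the paper bounds each piece first and then shows the sum is at most $T$, whereas you split $T=s+t$ upfront; the paper also does not discuss $c<0$, which is irrelevant in context.
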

	Note that for $v_\alpha, c_\alpha$ previously defined, one has
	\begin{equation}\label{eq:ineq_Z_1}
	\sqrt{2(v_\alpha-c_\alpha)\log  \card{\cS_{n,\alpha n}}} \leq \sqrt{2 \alpha (\alpha(2-\alpha)-f(\alpha))} n^{3/2}\log^{1/2} n,
	\end{equation} and for $n$ large enough,
	\begin{equation}\label{eq:ineq_Z_2}
	2\sqrt{v_\alpha \log\log  \card{\cS_{n,\alpha n}}} \leq 2 \sqrt{ \alpha(2-\alpha)} n \sqrt{\log n + \log \log n} \leq (2+\eps') n \log^{1/2} n.
	\end{equation} Finally, we use equations \eqref{eq:comparison_max_cov}--\eqref{eq:ineq_Z_2} to conclude that for $n$ large enough:
	\begin{flalign*}
	&\dP\left(\exists d = \alpha n , \alpha>\alpha_0, \; \max_{\sigma \in \cS_{n,d}} X_\sigma > \sqrt{2 \alpha \left(\alpha(2-\alpha)-f(\alpha)\right)} n^{3/2}\log^{1/2} n + (2+\eps') n \log^{1/2} n\right)\\
	&\leq 1-\dP(\cA) + \sum_{d=\alpha n, \, \alpha>\alpha_0} \dP\left(\max_{\sigma \in \cS_{n,\alpha n}} Z_i > \sqrt{2(v_\alpha-c_\alpha)\log  \card{\cS_{n,\alpha n}}} + 2\sqrt{v_\alpha \log\log  \card{\cS_{n,\alpha n}}} \right)\\
	&\leq o(1) + \sum_{d=\alpha n, \, \alpha>\alpha_0} \frac{2}{\log  \card{\cS_{n,\alpha n}}} \leq o(1) + \frac{2n}{\log  \card{ \cS_{n,\alpha_0 n}}} =  o(1) + \frac{2}{\alpha_0 \log n} = o(1),
	\end{flalign*} and Lemma \ref{lemma:min_corr_delta} is proved.
\end{proof}

\subsection{Proof of Lemma \ref{lemma:max_TC_gaussian}: maximum of totally correlated Gaussian variables}\label{appendix:max_corr_Gaussians}

\begin{proof}
	Let us make a change of variables which preserves the joint distribution:
	\begin{equation*}
	\left(Z_1, Z_2, \ldots, Z_N\right) = \left(\sqrt{c} \,\xi_0+\sqrt{v-c}\, \xi_1, \ldots,\sqrt{c}\, \xi_0+\sqrt{v-c}\, \xi_N\right),
	\end{equation*} where $\xi_0, \ldots, \xi_N$ are independent standard Gaussian random variables. The maximum thus writes
	\begin{equation*}
	\max_{1 \leq i \leq N} Z_i = \sqrt{c} \, \xi_0 + \sqrt{v-c} \max_{1 \leq i \leq N} \xi_i
	\end{equation*}
	Then, with the classical inequality $\dP \left(\cN(0,1) \geq t\right) \leq e^{-t^2/2}$, then with probability at least $1-1/(\log N)$, one has:
	\begin{equation*}\label{eq:control_gaussian}
	\sqrt{c} \, \xi_0 \leq \sqrt{2c \log \log N}, \quad \mbox{and} \quad \sqrt{v-c} \max_{1 \leq i \leq N} \xi_i \leq \sqrt{2(v-c) \log N \left(1+\frac{\log \log N}{\log N}\right)},
	\end{equation*}so with probability at least $1-2/(\log N)$:
	\begin{flalign*}
	\max_{1 \leq i \leq N} Z_i & \leq \sqrt{2(v-c) \log N} + \sqrt{2\log \log N} \left(\sqrt{c}+ \sqrt{v-c}\right) \\
	& \leq  \sqrt{2(v-c) \log N} + 2 \sqrt{v \log \log N},
	\end{flalign*} where we used $\sqrt{c}+\sqrt{v-c} \leq \sqrt{2v}$ in the last step.
\end{proof}

\subsection{Proof of Lemma \ref{lemma:final_function_study}}\label{appendix:final_function}
\begin{proof}
	For $\alpha \in (0,1]$,
	\begin{flalign*}
	\eqref{eq:final_function_study} & \iff \alpha^2(2-\alpha)^2 \geq 2 \alpha \left(\alpha(2-\alpha)-f(\alpha)\right)\\
	& \iff f(\alpha) \geq \alpha^2-\alpha^3/2. 
	\end{flalign*} The inequality is verified for $\alpha < 1/2$. To conclude the proof of \eqref{eq:final_function_study}, it remains to check that for $1\geq \alpha \geq 1/2$, $f(\alpha) \geq \alpha^2-\alpha^3/2$, which is equivalent to 
	\begin{flalign*}
	\alpha^2 - \frac{1}{2}(2\alpha -1)^2 \geq \alpha^2-\alpha^3/2 & \iff \alpha^3 - 4\alpha^2 + 4\alpha -1 \geq 0\\
	& \iff (\alpha-1)(\alpha^2-3\alpha+1) \geq 0\\
	& \iff  \alpha^2-3\alpha+1 \leq 0 \iff \alpha \geq \frac{3 - \sqrt{5}}{2} \sim 0.382...
	\end{flalign*} 
\end{proof}

\subsection{Proof of Lemma \ref{lemma:control_dev_cor_gaussian}: control of deviation probabilities for correlated Gaussians}\label{appendix:corr_gaussians}

\begin{proof} Let us first make a change of variable which preserves the joint distribution:
	\begin{flalign*}
	(Z_1,Z_2) = (Z , \alpha_n Z + \sqrt{1-\alpha_n^2} Z'),
	\end{flalign*} with $Z,Z'$ two independent standard Gaussian variables. 
		
	\proofstep{Proof of $(i)$.} Note that standard Gaussian concentration gives $\dP\left(Z>2t_n \big| Z>t_n \right) \sim \frac{1}{2}e^{-3t_n^2/2}$. Thus, for $n$ large enough
	\begin{flalign*}
	\dP \left(Z_1 > t_n, Z_2 > t_n \right) & \leq \dP \left(Z > t_n \right) e^{-3t_n^2/2} + \dP \left(Z > t_n \right)\dP \left(\alpha_n Z + \sqrt{1-\alpha_n^2} Z' > t_n, Z\leq 2t_n \big| Z>t_n \right) \\
	& \leq e^{-2t_n^2} + \dP \left(Z > t_n \right)\dP \left(Z' > t_n -2\alpha_n t_n + O(t_n \alpha_n^2)\right)\\
	& \leq e^{-2t_n^2} + \dP \left(Z > t_n \right)\dP \left(Z' > t_n -o(1)\right)\\
	& \leq e^{-2t_n^2} +  (1+o(1))\dP \left(Z > t_n\right) \dP \left(Z' > t_n\right)\\
	& = e^{-2t_n^2} +  (1+o(1))\dP \left(Z_1 > t_n\right) \dP \left(Z_2 > t_n\right).
	\end{flalign*}
	
	\proofstep{Proof of $(ii)$.} For any $(s_n)$ such that $s_n \leq  t_n$ for all $n$, one has
	\begin{flalign*}
	\dE\left[e^{s_n Z} \, \big| \, Z > t_n \right] & = \frac{1}{\sqrt{2 \pi}} \int_{t_n}^{+ \infty} e^{s_n z-z^2/2} dz \left({\frac{1}{\sqrt{2 \pi}} \int_{t_n}^{+ \infty} e^{-z^2/2} dz} \right)^{-1}\\
	& = {e^{s_n^2/2} \int_{t_n - s_n}^{+ \infty} e^{-z^2/2} dz} \left({\int_{t_n}^{+ \infty} e^{-z^2/2} dz}\right)\\
	& \sim \frac{t_n}{t_n - s_n} \exp \left(s_n^2/2 - (t_n-s_n)^2/2 + t_n^2/2\right) = \frac{t_n}{t_n - s_n} e^{s_n t_n}.
	\end{flalign*} Using independence of $Z,Z'$ and Chernoff bound, we get, taking $s_n$ such that $\alpha s_n = u t_n$ with $u<1$, for $n$ large enough,
	\begin{flalign*}
	\dP\left(\alpha Z + \sqrt{1-\alpha^2} Z' > t_n	\big | Z >t_n \right) & 
	\leq (1+o(1))\frac{t_n}{t_n-\alpha s_n} \exp\left( \alpha s_n t_n + \frac{1-\alpha^2}{2} s_n^2 - s_n t_n \right)\\
	& \leq (1+o(1))\frac{1}{1-u} \exp\left(\left(u   + \frac{u^2 (1-\alpha^2)}{2\alpha^2} - \frac{u}{\alpha}\right) t_n^2 \right)\\
	& \overset{(a)}{\leq} (1+o(1)) (1+\alpha) \exp\left(- \frac{1-\alpha}{1+\alpha} \cdot \frac{t_n^2}{2} \right)
	\end{flalign*} where we took $u = \frac{\alpha}{1+\alpha} <1$ in $(a)$.  The proof follows from this last inequality, together with the bound $\dP\left( Z >t_n \right) \leq \frac{1}{\sqrt{2\pi}t_n} \exp\left(-\frac{t_n^2}{2}\right).$
\end{proof}
\addtocontents{toc}{\protect\setcounter{tocdepth}{2}}

\end{subappendices}

\chapter{Alignment of graph databases with Gaussian weights: analysis of a spectral method}\label{chapter:EIG1}
In this chapter, we analyze a simple spectral method (\alg{EIG1}) for the problem of matrix alignment, consisting in aligning their leading eigenvectors: given two adjacency matrices $A$ and $B$, \alg{EIG1} aligns $v_1$ and $v'_1$, their two corresponding leading eigenvectors (up to the sign of $v'_1$).
	
We will consider the Gaussian model $\Wig(n,\xi)$ defined earlier in \eqref{eq:GOE_model}: $A$ belongs to the Gaussian Orthogonal Ensemble ($\GOE$) of size $n \times n$, and $B$ is a noisy version of $A$ where all nodes have been relabeled according to some planted permutation $\pi^\star$. We show the following zero-one law: with high probability, under the condition $\xi n^{7/6+\epsilon} \to 0$ for some $\eps>0$, \alg{EIG1} recovers all but a vanishing part of the underlying permutation $\pi^\star$, whereas if $\xi n^{7/6-\epsilon} \to \infty$, this method cannot recover more than $o(n)$ correct matches.
	
This result gives an understanding of the simplest and fastest spectral method for matrix alignment (or complete weighted graph alignment), and involves proof methods and techniques which could be of independent interest.\\

This chapter is based on the paper \textit{Spectral alignment of correlated gaussian matrices} \cite{GLM19}, published in \emph{Advances in Applied Probability}, a joint work with M. Lelarge and L. Massoulié.

\section{Introduction}

\subsection{The \alg{EIG1} algorithm}

As in Chapter \ref{chapter:gaussian_alignment_IT}, we are interested in alignment of Gaussian databases, which is one of the instances of the graph alignment problem. For a general overview, we refer here again to the introduction of this manuscript, to Section \ref{intro:subsection:motivations} for applications and to Section \ref{intro:subsection:short_survey} for theoretical results.

\subsubsection{Related work: spectral methods for graph alignment} Some general spectral methods for random graph alignment are introduced in \cite{Feizi16}, based on representation matrices and low-rank approximations. These methods are tested over synthetic graphs and real data; however no precise theoretical guarantee -- e.g. an error control of the inferred mapping depending on the signal-to-noise ratio -- can be found for such techniques. 

Most recently, a spectral method for matrix and graph alignment (\alg{GRAMPA}) was proposed in \cite{Fan2019Wigner,fan2019ERC} and computes a similarity matrix which takes into account all pairs of eigenvalues $(\lambda_i, \mu_j)$ and eigenvectors $(u_i,v_j)$ of matrices $A$ and $B$. The authors study the regime in which the method exactly recovers the underlying vertex correspondence: this method can tolerate a noise $\xi$ up to $O\left(1/\log n\right)$ to recover the entire underlying vertex correspondence. Since the computations of all eigenvectors is required, the time complexity of \alg{GRAMPA} is at least $O(n^3)$. 

It is important to note that the signs of eigenvectors are ambiguous: in practice, it is necessary to test over all possible signs of eigenvectors. This additional complexity has no consequence when reducing $A$ and $B$ to rank-one matrices, but becomes costly when the reduction made is of rank $k \gg 1$. This combinatorial observation makes implementation and analysis of general rank-reduction methods (as the ones proposed in \cite{Feizi16}) more difficult. We therefore focus on the analysis of the rank-one reduction (\alg{EIG1} hereafter) which is the simplest and most natural spectral alignment method, where only the leading eigenvectors of $A$ and $B$ are computed, with time complexity $O(n^2)$, which is significantly less than \alg{GRAMPA}.

\subsubsection{Model and method} Let us recall the model $\Wig(n,\xi)$ defined in \eqref{eq:GOE_model2}. In this model, $A$ is a matrix from the normalized Gaussian Orthogonal Ensemble ($\GOE$), i.e. for all $1 \leq u \leq v \leq n$, 
\begin{equation}
\label{eq:pre_GOEmodel}
A_{u,v} = A_{v,u} \sim \begin{cases}
\cN(0,1/n) & \text{if $u \neq v$}, \\
\cN(0,2/n) & \text{if $u = v$},
\end{cases}
\end{equation} and $H$ is an independent copy of $A$. We define \begin{equation}
\label{eq:GOEmodel_bis}
B=\Pi^{\star \top} \left(A+\xi H\right) \Pi^\star
\end{equation} where $\Pi^\star$ is a random uniform matrix of a permutation $\pi^\star$ -- e.g. random uniform -- of $[n]$ and $\xi = \xi(n)$ is the \textit{noise parameter}.

Given two vectors $x=\left(x_1,\ldots,x_n\right)$ and $y=\left(y_1,\ldots,y_n\right)$ having all distinct coordinates, the permutation $\rho$ which \textit{aligns} $x$ and $y$ is the permutation such that for all $1 \leq i \leq n$, the rank (for the usual order) of $y_{\rho(i)}$ in $y$ is the rank of $x_{i}$ in $x$.

\begin{remark}
\label{densitylebesgue}
Note that in our model, all the probability distributions are absolutely continuous with respect to Lebesgue measure, thus the eigenvectors of $A$ and $B$ all have almost surely pairwise distinct coordinates.
\end{remark}

We recall that the aim is to infer the underlying permutation $\Pi^{\star}$ given the observation of $A$ and $B$. We now introduce our simple spectral algorithm derived from \cite{Feizi16}, which we call \alg{EIG1}, that consists in computing and aligning the leading eigenvectors $v_1$ and $v'_1$ of $A$ and $B$. This very natural method can be thought of as the relaxation of the QAP formulation \eqref{eq:QAP} when reducing $A$ and $B$ to rank-one matrices $\lambda_1 v_1 v_1^{\top}$ and $\lambda'_1 v'_1 v_1^{'T}$. Indeed, as soon as $v_1$ and $v'_1$ have pairwise distinct coordinates, is it easy to see that
\begin{equation*}
\argmax_{\Pi \in \cS_n} \langle \lambda_1 v_1 v_1^{\top}, \Pi \lambda'_1 v'_1 v_1^{'T} \Pi^{\top} \rangle = \argmax_{\Pi \in \cS_n} \pm v_1^{\top} \Pi v'_1  = \rho,
\end{equation*} where $\rho$ is the aligning permutation of $v_1$ and $\pm v'_1$. Computing the two normalized leading eigenvectors (i.e. corresponding to the highest eigenvalues) $v_1$ and $v'_1$ of $A$ and $B$, the \alg{EIG1} algorithm returns the aligning permutation of $v_1$ and $\pm  v'_1$. The method then decides which permutation to output according to the scores.

\begin{algorithm}[h]
	\caption{\alg{EIG1} Algorithm for matrix alignment}
	\label{algo:EIG1}
	\SetAlgoLined
	Compute $v_1$ a normalized leading eigenvector of $A$\;
	Compute $v'_1$ a normalized leading eigenvector of $B$\;
	Compute $\Pi_{+}$ the permutation aligning $v_1$ and $v'_1$\;
	Compute $\Pi_{-}$ the permutation aligning $v_1$ and $-v'_1$\;
	\eIf{$\langle A, \Pi_{+} B \Pi_{+}^{\top} \rangle \geq \langle A, \Pi_{-} B \Pi_{-}^{\top} \rangle$}{
		return $\Pi_{+}$}
	{
		return $\Pi_{-}$
	}
\end{algorithm}

The aim of this chapter is to find the regime in which \alg{EIG1} achieves almost exact recovery, i.e. recovers all but a vanishing fraction of nodes of the planted ground truth $\Pi^{\star}$.

\subsection{Main results and proof scheme} \label{EIG1:subsection:notations}

We start by introducing specific notations and recall some useful basic definitions. Throughout the chapter, all limits are taken when $n \to \infty$, and the dependency in $n$ will most of the time be eluded, as an abuse of notation.\\

\textit{Eigenvalues, eigenvectors.} In the following, $\left(v_1, v_2, \ldots, v_n\right)$ (resp. $\left(v'_1, v'_2, \ldots, v'_n\right)$) denote two orthonormal bases of eigenvectors of $A$ (resp. of $B$) with respect to the (real) eigenvalues $\lambda_1~\geq~\lambda_2~\geq~\ldots~\geq~\lambda_n$ of $A$ (resp. $\lambda'_1~\geq~\lambda'_2~\geq~\ldots~\geq~\lambda'_n$ of $B$). Through all the study, the sign of $v'_1$ is fixed such that $\langle \Pi^{\star} v_1,v'_1 \rangle >0$.\\

\textit{Overlap.} For any (matrix) estimator $\hat{\Pi}$ of $\Pi^{\star}$ its overlap is defined as follows
\begin{equation}
\label{overlap}
\ov(\hat{\Pi},\Pi^{\star}) := \ov(\hat{\pi},\pi^{\star}) = \frac{1}{n} \sum_{u=1}^{n} \one_{\hat{\pi}(u)=\pi^{\star}(u)} \, , 
\end{equation} where $\hat{\pi}$ (resp. $\pi^{\star}$) is the permutation corresponding to matrix $\hat{\Pi}$ (resp. $\Pi^{\star}$).

\textit{Probability.} The equality $\overset{(d)}{=}$ will refer to equality in distribution. Some event $A_n$ is said to hold \textit{with high probability} (we will use the abbreviation "w.h.p."), if $\dP(A_n)$ converges to $1$ when $n \to \infty$.

For two random variables $u=u_n$ and $v=v_n$, we will use the notation $u=o_{\mathbb{P}}\left(v\right)$ if $u_n / v_n \overset{\mathbb{P}}{\longrightarrow} 0$ when $n \to \infty$. We also use this notation when $X=X_n$ and $Y=Y_n$ are $n-$dimensional random vectors: $X=o_{\mathbb{P}}\left(Y\right)$ if $\|X_n\| / \|Y_n\| \overset{\mathbb{P}}{\longrightarrow} 0$ when $n \to \infty$.

Define 
\begin{equation}
\label{nologfunctions}
\mathcal{F} := \left\lbrace f : \dN \to \dR \; | \; \forall t>0, n^t f(n) \to \infty, n^{-t} f(n) \to 0 \right\rbrace. 
\end{equation} For two random variables $u=u(n)$ and $v=v(n)$, $u \asymp v$ refers to equivalence with high probability up to some sub-polynomial factor, meaning that there exists a function $f \in \mathcal{F}$ such that
\begin{equation}
\label{defasymp}
\mathbb{P}\left(\frac{v(n)}{f(n)} \leq u(n) \leq f(n) v(n)\right) \to 1.
\end{equation}

\subsubsection{Main results, proof scheme} 
The main result of this chapter can be stated as follows: there exists a condition -- a threshold -- on $\xi$ and $n$ under which the \alg{EIG1} method enables us to recover $\Pi^{\star}$ almost exactly, in terms of the overlap defined in \eqref{overlap}. Above this threshold, we show that \alg{EIG1} Algorithm cannot recover more than a vanishing part of $\Pi$. 

\begin{theorem}[Zero-one law for \alg{EIG1} method]
\label{EIG1:theorem:01law}
For all $n$, $\Pi_n$ denotes an arbitrary permutation of size $n$, $\hat{\Pi}_n$ is the estimator obtained with Algorithm \alg{EIG1}, for $A$ and $B$ of model \eqref{eq:GOEmodel_bis}, with permutation $\Pi^{\star}_n$ and noise parameter $\xi$. We have the following zero-one law:
	\begin{itemize}
		\item[$(i)$] If there exists $\epsilon>0$ such that $\xi = o(n^{-7/6-\epsilon})$ then $$\ov(\hat{\Pi}_n,\Pi^{\star}_n) \overset{L^1}{\longrightarrow} 1.$$
		\item[$(ii)$] If there exists $\epsilon>0$ such that $\xi = \omega(n^{-7/6+\epsilon})$ then $$\ov(\hat{\Pi}_n,\Pi^{\star}_n) \overset{L^1}{\longrightarrow} 0.$$
	\end{itemize}
\end{theorem}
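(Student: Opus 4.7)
The natural starting point is to undo the planted permutation and work with $w_1 := \sigma \, \Pi^\star v'_1$, where $\sigma \in \{\pm 1\}$ is chosen so $\langle v_1, w_1 \rangle > 0$. By construction, $w_1$ is the leading eigenvector of $A + \xi H$ (with the convention fixing its sign). A short computation shows that the permutation $\hat{\pi}_+$ aligning $v_1$ with $v'_1$ satisfies $\hat{\pi}_+ = \pi^\star \circ \rho^{-1}$, where $\rho$ is the permutation aligning $v_1$ with $w_1$; hence $\mathrm{ov}(\hat{\pi}_+, \pi^\star) = \mathrm{ov}(\rho, \mathrm{id})$, and similarly for $\hat{\pi}_-$ with $\sigma$ flipped. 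The tie-breaking step picks whichever of $\pm v'_1$ yields the larger QAP score; since under model \eqref{eq:GOEmodel_bis} the ``correct'' sign gives score of order $\|A\|_F^2$ while the wrong sign gives essentially independent noise, this selection is correct w.h.p.\ and can be dispensed with. The problem thus reduces cleanly to: how many coordinates of $v_1$ and $w_1$ come in the same rank order?

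\textbf{Positive part ($\xi = o(n^{-7/6-\epsilon})$).} I will use first-order eigenvector perturbation theory, writing
\begin{equation*}
w_1 = v_1 + \xi \sum_{k \geq 2} \frac{v_k^\top H v_1}{\lambda_1 - \lambda_k}\, v_k + R,
\end{equation*}
with $R$ a higher-order remainder. The key quantitative inputs, all standard for GOE, are: $(a)$ rigidity/edge spacing $\lambda_1 - \lambda_k \asymp (k/n)^{2/3}$; $(b)$ delocalization of $v_k$, namely $\|v_k\|_\infty \lesssim n^{-1/2}$ up to polylog factors; $(c)$ the Gaussians $v_k^\top H v_1$ are centered with variance $\asymp 1/n$ and are jointly controllable. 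Combining these, the typical size of any fixed coordinate of the perturbation is of order $\xi\, n^{-1/3}$ (computing the variance coordinate-wise, the dominant contributions come from small $k$, yielding $\sum_k 1/(k^{4/3} n^{1/3}) \asymp n^{-1/3}$ per entry). On the other hand, the entries of $v_1$ behave as $n^{-1/2}$ times nearly i.i.d.\ Gaussians, so the gap between two consecutive order statistics in the bulk is $\asymp n^{-3/2}$. A coordinate $i$ is correctly ranked as soon as its perturbation is smaller than its gap to the neighboring coordinates; a union/counting argument over indices shows that the number of wrongly ranked coordinates is negligible compared to $n$ as soon as $\xi\, n^{-1/3} \ll n^{-3/2}$, i.e. $\xi = o(n^{-7/6})$ up to sub-polynomial factors. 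The $n^\epsilon$ slack absorbs the polylog losses in delocalization and the $\ell^2$-to-$\ell^\infty$ passage.

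\textbf{Negative part ($\xi = \omega(n^{-7/6+\epsilon})$).} The same decomposition shows the perturbation cannot be anomalously small: an anti-concentration estimate for the linear combination in $H$ (conditioning on $A$ and using that the entries of $v_1$ are delocalized) guarantees that for a positive fraction of indices $i$, the perturbation at $i$ is at least $c\,\xi\,n^{-1/3}$. When this quantity is much larger than $n^{-3/2}$, any such coordinate gets displaced by $\gg 1$ ranks in the sorted order with positive probability, and a second-moment computation on the number of correctly matched indices shows that $\mathrm{ov}(\rho,\mathrm{id}) \to 0$ in $L^1$.

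\textbf{Main obstacle.} The delicate step is clearly the coordinate-wise (as opposed to $\ell^2$) control of the perturbation, which requires combining Tracy--Widom-scale information on the top eigenvalue gaps with eigenvector delocalization uniformly over both the index $i$ and the eigenvector label $k$, and quantifying the higher-order remainder $R$ well enough that it does not swamp the $n^{-3/2}$ gap scale. Equally subtle is the anti-concentration needed for the negative part, since one must rule out that the perturbation vector accidentally preserves the ordering of most coordinates despite being large in norm; this will be handled by showing approximate Gaussianity of the perturbation vector conditionally on $A$ and exploiting independence of $H$ from $v_1$.
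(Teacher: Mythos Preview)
Your heuristics are correct and the threshold computation ($\xi n^{-1/3}$ per-coordinate perturbation versus $n^{-3/2}$ order-statistic gap, hence $n^{-7/6}$) matches the paper exactly. But your route is harder than necessary, and the paper bypasses precisely the step you flag as the main obstacle.

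The paper does not attempt coordinate-wise control of the perturbation via delocalization. Instead it observes (Lemma~\ref{lemma_invrotation}) that, conditionally on $v_1$, the perturbation $w = w'_1 - v_1$ has a distribution that is \emph{exactly} rotation-invariant in $v_1^\perp$: this is a direct consequence of the rotational invariance of the GOE law of $H$. Hence $w/\|w\|$ is uniform on the sphere of $v_1^\perp$ and independent of both $v_1$ and $\|w\|$. This collapses the whole problem to a clean toy model: $X$ a standard Gaussian vector, $Y = X + \mathbf{s} Z$ with $Z$ an independent standard Gaussian and $\mathbf{s} \asymp \xi n^{1/6}$ a scalar. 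The zero-one law is then proved by computing $p(n,s) = \dP(r_1(X)=r_1(Y))$ explicitly (via the counts $\cN^{\pm}$ of coordinates crossing a threshold in one direction but not the other) and showing $p(n,s)\to 1$ or $0$ according as $sn\to 0$ or $\infty$.

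What this buys: no $\ell^\infty$ eigenvector bounds, no uniform-in-$k$ control, and the ``anti-concentration'' you need for the negative part is for free since the perturbation direction is exactly Haar-distributed rather than approximately Gaussian. Your approach would in principle work, but carrying out the coordinate-wise union bound and the second-moment argument for the negative side, while simultaneously controlling the remainder $R$ at the $n^{-3/2}$ scale, is substantially more delicate than the paper's reduction. One further point: for large $\xi$ (beyond the microscopic regime $\xi \ll n^{-1/2-\alpha}$) the first-order expansion you rely on breaks down; the paper handles this in case~$(ii)$ by a coupling argument, writing $\xi^2 = \xi_1^2 + \xi_2^2$ with $\xi_1$ small, and showing the overlap can only decrease under additional noise.
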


Results of Theorem \ref{EIG1:theorem:01law} are illustrated on Figure \ref{fig:EIG1:image_overlap} showing the zero-one law at $\xi \asymp n^{-7/6}$. Note that the convergence to the step function appears to be slow.

\begin{figure}[h]
	\centering
	\includegraphics[width=14cm,height=8cm]{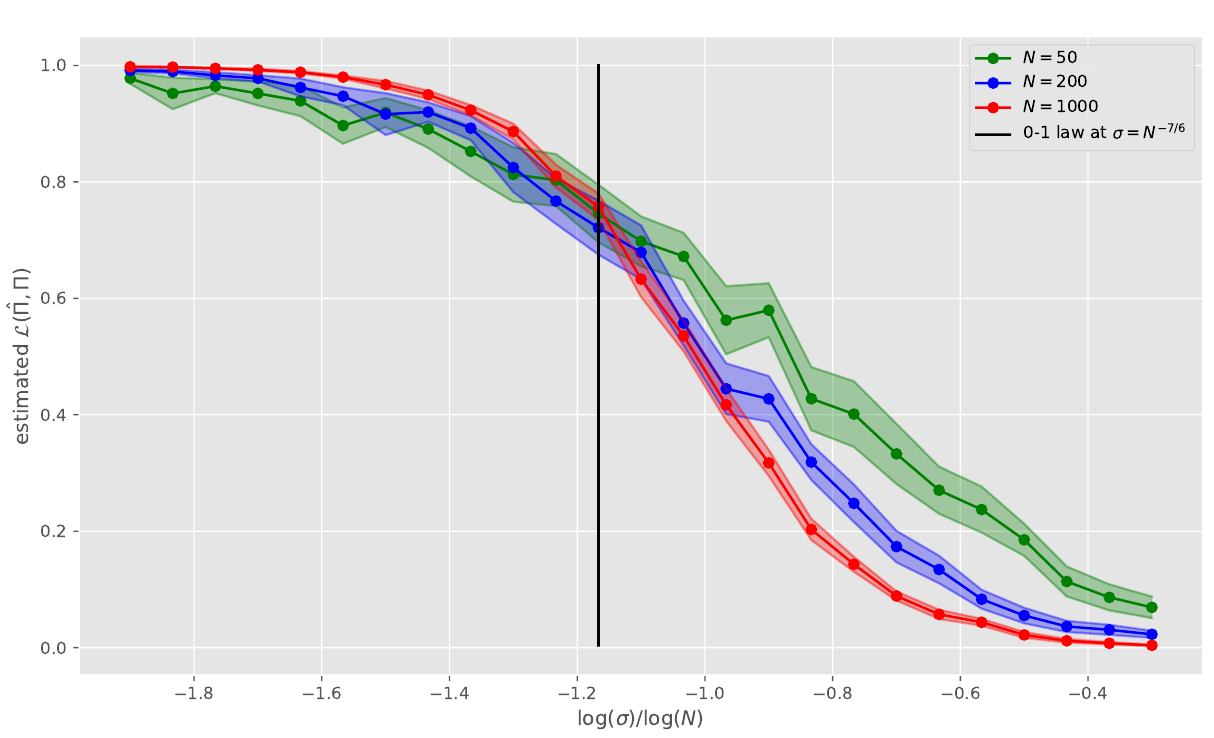}
	\caption{Estimated overlap $\ov(\hat{\Pi},\Pi^{\star})$ reached by \alg{EIG1} in model \eqref{eq:GOEmodel_bis}, for varying $n$ and $\xi$. \footnotesize{With $95\%$ confidence intervals}.}
	\label{fig:EIG1:image_overlap} 
\end{figure}

\begin{remark}
	We can now underline that without loss of generality, we can assume that $\Pi^{\star} = I_n$, the identity matrix. Indeed, one can return to the general case applying transformations $A \rightarrow \Pi^{\star} A \Pi^{\star,\top}$ and $H \rightarrow \Pi^{\star} H \Pi^{\star,\top}$. From now on we will assume in the rest of the chapter that $\Pi^{\star} = I_n$.
\end{remark}

In order to prove this theorem, it is necessary to establish two intermediate results along the way, which could also be of independent interest. First, we study the behavior of $v'_1$ with respect to $v_1$, showing that under some conditions on $\xi$ and $n$, the difference $v_1 - v'_1$ can be approximated by a renormalized Gaussian standard vector, multiplied by a  variance term $\mathbf{S}$, where $\mathbf{S}$ is a random variable which behavior is well understood in terms of $n$ and $\xi$ when $n \to \infty$. For this we work under the following assumption: 
\begin{equation}
\label{microscopicregime}
\exists \, \alpha >0, \;\xi = o \left(n^{-1/2-\alpha}\right),
\end{equation} 

\begin{proposition} \label{EIG1:prop:gaussian_decomp}
Under assumption \eqref{microscopicregime}, there exists a standard Gaussian vector $Z \sim \cN\left(0,I_n\right)$ independent from $v_1$ and a random variable $\mathbf{S} \asymp \xi n^{1/6}$, such that
	\begin{equation*}
	v'_1 = \left(1 + o_{\mathbb{P}}(1)\right) \left(v_1+ \mathbf{S} \frac{Z}{\|Z\|}\right).
	\end{equation*}
\end{proposition}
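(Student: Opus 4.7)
The plan is to linearize the top eigenvector of $B = A + \xi H$ around $v_1$ by perturbation theory, then exploit the rotational invariance of the GOE to show the correction has an isotropic direction on $v_1^\perp$.

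\textbf{Step 1 (Perturbation expansion).} Since $\|H\|_{\mathrm{op}} = O(1)$ w.h.p.\ and $\xi = o(n^{-1/2-\alpha})$, standard analytic perturbation theory for the top eigenvector (valid because the spectral gap $\lambda_1 - \lambda_2 \asymp n^{-2/3}$ dominates $\xi$) gives
\begin{equation*}
v_1' = \frac{v_1 + W + R}{\|v_1 + W + R\|}, \qquad W := \xi \sum_{k \geq 2} \frac{\xi_k}{\lambda_1 - \lambda_k} v_k, \qquad \xi_k := \langle v_k, H v_1 \rangle,
\end{equation*}
where $R$ collects the second- and higher-order resolvent terms and satisfies $\|R\| = o_{\mathbb{P}}(\|W\|)$ thanks to control on the resolvent $(A - \lambda_1 I)^{-1}$ restricted to $v_1^\perp$. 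By rotational invariance of $H$ and the independence of $H$ and $A$, conditionally on $A$ the scalars $(\xi_k)_{k \geq 2}$ are i.i.d.\ $\cN(0, 1/n)$, so $W \mid A$ is a centered Gaussian supported on $v_1^\perp$.

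\textbf{Step 2 (Rotational invariance gives an isotropic direction).} Conditionally on $v_1$ and the spectrum $(\lambda_i)$, the vectors $(v_2, \ldots, v_n)$ form a Haar-uniform orthonormal basis of $v_1^\perp$. Since $W$ has the form $\xi \sum_{k \geq 2} c_k \xi_k v_k$ with $c_k$ deterministic given the eigenvalues and the $\xi_k$ i.i.d.\ Gaussian independent of the $v_k$, applying an arbitrary rotation of $v_1^\perp$ to $(v_2, \ldots, v_n)$ leaves the joint law unchanged and rotates $W$ accordingly. Thus, conditionally on $(v_1, (\lambda_i))$, the law of $W$ is rotationally symmetric on $v_1^\perp$, which forces $W = \mathbf{S}\, U$ with $\mathbf{S} := \|W\|$ and $U$ uniform on the unit sphere of $v_1^\perp$, independent of $\mathbf{S}$.

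\textbf{Step 3 (Scaling of $\mathbf{S}$).} A direct computation gives $\dE[\mathbf{S}^2 \mid A] = (\xi^2/n) \sum_{k \geq 2} (\lambda_1 - \lambda_k)^{-2}$. GOE eigenvalue rigidity yields $\lambda_1 - \lambda_k \asymp n^{-2/3} k^{2/3}$ for $k$ at the spectral edge and $\lambda_1 - \lambda_k = \Theta(1)$ for bulk $k$, whence
\begin{equation*}
\dE[\mathbf{S}^2 \mid A] \asymp \frac{\xi^2}{n} \sum_{k \geq 2} \frac{n^{4/3}}{k^{4/3}} \asymp \xi^2 n^{1/3}.
\end{equation*}
The summable exponent $4/3$ ensures that the top few eigenvalue gaps carry essentially all the mass, and a concentration argument on $\sum_k \xi_k^2/(\lambda_1-\lambda_k)^2$ transfers the scaling from expectation to $\mathbf{S}$ itself, giving $\mathbf{S} \asymp \xi n^{1/6}$.

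\textbf{Step 4 (Building $Z$ and conclusion).} On an enlarged probability space, let $T \sim \chi^2_{n-1}$ and $\eta \sim \cN(0,1)$ be independent of everything. Set $Z^\perp := \sqrt{T}\, U$, which is a standard Gaussian on $v_1^\perp$ conditionally on $v_1$, and define $Z := Z^\perp + \eta v_1 \sim \cN(0, I_n)$, which is then independent of $v_1$. Using $T/n \to 1$ and $\eta = O_{\mathbb{P}}(1)$,
\begin{equation*}
\mathbf{S}\,\frac{Z}{\|Z\|} = (1 + o_{\mathbb{P}}(1))\,\mathbf{S}\,\frac{Z^\perp}{\|Z^\perp\|} + O_{\mathbb{P}}\!\left(\tfrac{\mathbf{S}}{\sqrt{n}}\right) v_1 = (1+o_{\mathbb{P}}(1))\,W + o_{\mathbb{P}}(\mathbf{S})\, v_1,
\end{equation*}
which combined with Step 1 yields the claimed identity $v_1' = (1+o_{\mathbb{P}}(1))(v_1 + \mathbf{S} Z/\|Z\|)$. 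The main obstacle is Step 3: the scaling $\mathbf{S} \asymp \xi n^{1/6}$ rests on sharp quantitative control of the edge eigenvalue gaps (Tracy–Widom regime and rigidity), and the contribution requires more than just the spectral gap $\lambda_1-\lambda_2$, being genuinely driven by the $k^{2/3}$-spacing law at the edge. The remainder control of $R$ in Step 1 is the secondary technicality, requiring moment bounds on the resolvent of $A$ projected away from its leading eigenspace.
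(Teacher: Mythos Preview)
Your overall skeleton matches the paper's: perturbation expansion for $v'_1$, rotational invariance on $v_1^\perp$ to get an isotropic direction, the estimate $\sum_{k\ge 2}(\lambda_1-\lambda_k)^{-2}\asymp n^{4/3}$ for the scaling, and the addition of an extra Gaussian coordinate along $v_1$ to build $Z$. Two points deserve comment.

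\textbf{Step 1 justification.} Your claim that perturbation theory is ``valid because the spectral gap $\lambda_1-\lambda_2\asymp n^{-2/3}$ dominates $\xi$'' is incorrect under assumption~\eqref{microscopicregime}: that assumption only gives $\xi=o(n^{-1/2-\alpha})$ for \emph{some} $\alpha>0$, so for $\alpha\in(0,1/6)$ one can have $\xi\gg n^{-2/3}$, i.e.\ the perturbation exceeds the gap. The paper does not rely on a naive gap argument. Instead it sets up the eigenvector equations as a fixed-point system and runs an explicit iterative scheme, proving geometric contraction at rate $\epsilon=\sqrt{\xi n^{1/2+\alpha}}$ via the bounds $\sum_j(\lambda_1-\lambda_j)^{-1}=O(n^{1+\delta})$ and $\sum_j(\lambda_1-\lambda_j)^{-2}\asymp n^{4/3}$ together with $\lambda_1-\lambda_2\ge n^{-2/3}(\log n)^{-C\log\log n}$. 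This is what actually delivers $\|R\|=o_{\mathbb P}(\|W\|)$; your ``control on the resolvent'' remark points in the right direction but the stated reason for validity does not hold.

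\textbf{Where rotational invariance is applied.} You apply rotational symmetry to the first-order term $W$ only, and then absorb $R$ afterwards. The paper instead observes (Lemma~\ref{lemma_invrotation}) that the \emph{full} correction $w:=w'-v_1$ is rotation-invariant on $v_1^\perp$: if $O\in \mathcal O_{n-1}(v_1^\perp)$ then $O^\top B O\stackrel{d}{=}B$, hence $Ow'$ is an eigenvector and the law of $w$ is unchanged. This yields that $w/\|w\|$ is exactly uniform on the sphere of $v_1^\perp$ and independent of $\|w\|$, and after augmenting by one Gaussian coordinate along $v_1$ one obtains the \emph{exact} identity $v'_1=c_n\bigl(v_1+\mathbf S\, Z/\|Z\|\bigr)$ with an explicit scalar $c_n=1+o_{\mathbb P}(1)$. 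Your route produces only $v'_1=v_1+\mathbf S\, Z/\|Z\|+E$ with $\|E\|=o_{\mathbb P}(1)$, which is weaker: the downstream use of the proposition is the rank identity $r_1(v'_1)=r_1(X+\mathbf s Z)$, which needs the scalar (exact) form so that rescaling does not change ranks. The paper's trick of applying invariance to the exact $w$ rather than its leading term is what makes this clean.
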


\begin{remark} \label{remark_assumption6} This assumption (\ref{microscopicregime}) (or a tighter formulation) arises when studying the diffusion trajectories of eigenvalues and eigenvectors in random matrices, and corresponds to the \textit{microscopic regime} in \cite{Allez14}. This assumption ensures that all eigenvalues of $B$ are close enough to the eigenvalues of $A$. This comparison term is justified from the random matrix theory ($n^{-1/2}$ is the typical amplitude of the spectral gaps $\sqrt{n}(\lambda_i - \lambda_{i+1})$ in the bulk, which are the smaller ones). 

Eigenvectors diffusions in similar models (diffusion processes dawn with the scaling $\xi = \sqrt{t}$) are studied in \cite{Allez14}, where the main tool is the Dyson Brownian motion (see e.g. \cite{Anderson09}) and its formulation for eigenvectors trajectories, giving stochastic differential equations for the evolutions of $v'_j(t)$ with respect to vectors $v_i=v'_i(0)$. These equations lead to a system of stochastic differential equations for the overlaps $\langle v_i, v'_j(t) \rangle$, which is quite difficult to analyze rigorously. In this work a more elementary method to get a expansion of $v'_1$ around $v_1$, for which this very condition (\ref{microscopicregime}) also appears.

Note that here, spectral gaps at the edge are of order $n^{-1/6}$ so assumption \eqref{microscopicregime} may not optimal for our study, and we expect Proposition \ref{EIG1:prop:gaussian_decomp} to hold up to $\xi = o \left(n^{-1/6-\alpha}\right)$. However, since the positive result of Theorem \ref{EIG1:theorem:01law} holds in a way more restrictive regime -- see condition $(i)$, condition \eqref{microscopicregime} is enough for our purpose and allows a short and simple proof.
\end{remark}

Proposition \ref{EIG1:prop:gaussian_decomp} suggests the study of $v'_1$ as a Gaussian perturbation of $v_1$. The main question is now formulated as follows: \textit{what is the probability that the perturbation on $v_1$ has an impact on the overlap of the estimator $\hat{\Pi}$ from the \alg{EIG1} method?} To answer this question, we introduce a correlated Gaussian vectors model (or \emph{toy model} hereafter) of parameters $n$ and $s>0$. In this model, we draw a standard Gaussian vector $X$ of size $n$ and $Y= X + s Z$ where $Z$ is an independent copy of $X$. We will use the notation $(X,Y) \sim \J(n,s)$.

Define $r_1$ the function that associates to any vector $T=(t_1,\ldots,t_p)$ the rank of $t_1$ in $T$ (for the usual decreasing order). For $(X,Y) \sim \J(n,s)$ we evaluate
\begin{equation*}
p(n,s) := \mathbb{P}\left(r_1(X)=r_1(Y)\right).
\end{equation*} Our second result shows that there is a zero-one law for the property of rank preservation in the toy model $\J(n,s)$.

\begin{proposition}[Zero-one law for $p(n,s)$]
\label{EIG1:prop:zero_one_toy}
In the correlated Gaussian vectors model we have the following:
	\begin{itemize}
		\item[$(i)$] If $s = o(1/n)$ then $$p(n,s) \underset{n \to \infty}{\longrightarrow} 1.$$
		\item[$(ii)$] If $s = \omega(1/n)$ then $$p(n,s) \underset{n \to \infty}{\longrightarrow} 0.$$
	\end{itemize}
\end{proposition}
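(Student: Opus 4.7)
The plan rests on the observation that the (signed) rank difference can be decomposed as
$$r_1(X) - r_1(Y) \;=\; S \;:=\; \sum_{j=2}^{n} T_j, \qquad T_j \;:=\; \one_{X_j > X_1} - \one_{Y_j > Y_1} \,\in\, \{-1,0,1\},$$
where each $T_j$ is nonzero exactly on the event $E_j$ that the relative order of $X_1$ and $X_j$ flips between $X$ and $Y$, and where the variables $(T_j)_{j \geq 2}$ are i.i.d.\ conditionally on $(X_1, Z_1)$. I denote their conditional mean and variance by $\mu(x,z)$ and $\sigma^2(x,z)$.

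For part $(i)$ a plain union bound on the $E_j$'s suffices. Setting $U = X_j - X_1 \sim \cN(0,2)$ and $V = Z_j - Z_1 \sim \cN(0,2)$ independently, the pair $(U,\,U+sV)$ is centered bivariate Gaussian with correlation $\rho = 1/\sqrt{1+s^2}$, so the classical orthant-probability formula yields
$$\dP(E_j) \;=\; \dP\bigl(\mathrm{sgn}(U) \neq \mathrm{sgn}(U+sV)\bigr) \;=\; \frac{\arccos\rho}{\pi} \;=\; \frac{\arctan s}{\pi} \;\leq\; \frac{s}{\pi}.$$
Since $r_1(X) = r_1(Y)$ whenever none of the $E_j$'s occur, the union bound gives $\dP(r_1(X) \neq r_1(Y)) \leq (n-1)s/\pi = o(1)$ under the hypothesis $s = o(1/n)$.

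For part $(ii)$ I would apply Chebyshev's inequality to $S$ conditionally on $(X_1,Z_1)$. A direct computation gives $\mu(x,z) = \Phi\bigl((x+sz)/\sqrt{1+s^2}\bigr) - \Phi(x)$ and $\sigma^2(x,z) \leq \dP(E_j \mid X_1=x, Z_1=z)$, and a conditional version of the orthant computation above shows that this conditional flip probability is of order $s \wedge 1$, uniformly in $(x,z)$ on compact sets. On the event $\{\mu(X_1,Z_1) \neq 0\}$, Chebyshev's inequality yields
$$\dP(S = 0 \mid X_1,\,Z_1) \;\leq\; \frac{\sigma^2(X_1,Z_1)}{(n-1)\,\mu(X_1,Z_1)^2},$$
and on any compact set $A \subset \dR^2$ at positive distance from the (measure-zero) curve $\{\mu = 0\}$ one checks by Taylor expansion that this right-hand side is $O(1/(ns))$ in the regime $s \to 0$ and $O(1/n)$ otherwise, hence $o(1)$ as soon as $ns \to \infty$. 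Writing
$$p(n,s) \;=\; \dP(S=0) \;\leq\; \dP\bigl((X_1,Z_1) \notin A\bigr) \,+\, \sup_{(x,z) \in A} \dP(S=0 \mid X_1=x,\,Z_1=z),$$
and letting $A \uparrow \dR^2$ after $n \to \infty$ then closes the argument. The main obstacle is the uniform control on the signal-to-noise ratio of $S$ that this requires: as $s \to 0$, one has $\mu(x,z) \sim sz\,\phi(x)$ while $\sigma^2(x,z) \asymp s$, so the conditional SNR $\sqrt{n-1}\,|\mu|/\sigma$ is of order $\sqrt{ns}\,|z|\phi(x)$, which both pinpoints the threshold at $s \asymp 1/n$ and forces us to exclude a neighborhood of $\{z = 0\}$ from the compact set $A$.
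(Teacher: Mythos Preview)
Your proposal is correct and, for part $(ii)$, takes a genuinely different route from the paper. For part $(i)$ the two arguments are essentially the same: both show that with high probability no order-flip event $E_j$ occurs, the paper via the conditional product $(1-S^+-S^-)^{n-1}$ and dominated convergence, you via a cleaner unconditional union bound using the orthant formula $\dP(E_j)=\arctan(s)/\pi$.

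For part $(ii)$ the paper writes down the exact multinomial expression $\phi_{x,z}(n,s)=\dP(\cN^+=\cN^-\mid X_1,Y_1)$ and bounds it by extracting the maximal binomial term $M(n,s)\asymp (nS^+)^{-1/2}$, the remaining factor being controlled through a Fibonacci-type generating-function identity. You instead observe that $S=\cN^+-\cN^-$ is a sum of conditionally i.i.d.\ $\{-1,0,1\}$-valued variables and apply Chebyshev. Your argument is more elementary and avoids the combinatorial lemma entirely; the paper's explicit formula, on the other hand, is what allows it to identify the exact nondegenerate limit of $p(n,s)$ in the critical regime $sn\to c>0$ (the remark following the proof), which Chebyshev alone would not deliver.

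One small wrinkle to tidy up: the zero set $\{\mu=0\}$ is the line $z=x(\sqrt{1+s^2}-1)/s$, which moves with $s$ (from $\{z=0\}$ as $s\to 0$ toward $\{z=x\}$ as $s\to\infty$). Your final step ``let $A\uparrow\dR^2$ after $n\to\infty$'' therefore cannot use a single fixed $A$ when $s=s(n)$ is a general sequence with $ns\to\infty$. The clean fix is a subsequence argument: any subsequence has a further subsequence along which $s\to s_\infty\in[0,\infty]$, the bad line converges, and your estimate then applies with $A$ chosen accordingly. The paper's pointwise-then-dominated-convergence argument sidesteps this because its bound on $\phi_{x,z}$ only needs $nS^+_{x,z}\to\infty$, which holds for every fixed $(x,z)$ without excluding any curve.
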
 These results are illustrated on Figure \ref{fig:EIG1:image_jouet}, showing the zero-one law at $s \asymp n^{-1}$.

\begin{figure}[h]
	\centering
	\includegraphics[width=14cm,height=8cm]{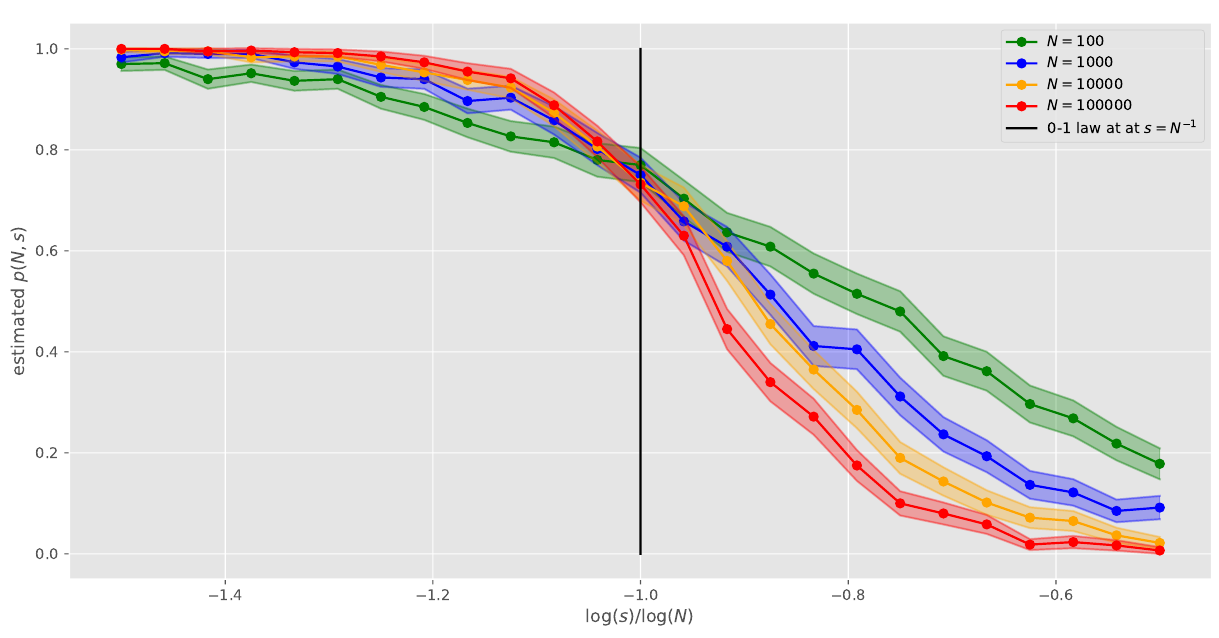}
	\caption{Estimated $p(n,s)$ in the toy model $\J(n,s)$. \footnotesize{With $95\%$ confidence intervals}.}
	\label{fig:EIG1:image_jouet} 
\end{figure}

\paragraph*{Organization of the chapter} The gaussian approximation of $v_1 - v'_1$ is established in Section \ref{linkGOEtoy} with the proof of Proposition \ref{EIG1:prop:gaussian_decomp}. The toy model defined here above is studied in Section \ref{toymodel} where Proposition \ref{EIG1:prop:zero_one_toy} is established. Finally, we gather results of Propositions \ref{EIG1:prop:gaussian_decomp} and \ref{EIG1:prop:zero_one_toy} in Section \ref{EIG1threshold} to show Theorem \ref{EIG1:theorem:01law}. Some additional proofs are deferred to Appendices \ref{section3_add_proofs} and \ref{section45_add_proofs}.

\section{Behavior of the leading eigenvectors of correlated matrices}
\label{linkGOEtoy}
The main idea of this section is to find a first order expansion of $v'_1$ around $v_1$. Recall that we use the notations $\left(v_1, v_2, \ldots, v_n\right)$ for normalized eigenvectors of $A$, corresponding to the eigenvalues $\lambda_1~\geq~\lambda_2~\geq~\ldots~\geq~\lambda_n$. Similarly, $\left(v'_1, v'_2, \ldots, v'_n\right)$ and $\lambda'_1~\geq~\lambda'_2~\geq~\ldots~\geq~\lambda'_n$ will refer to eigenvectors and eigenvalues of $B = A + \xi H$. Since $A$ and $B$ are symmetric, all these eigenvalues are real and the vectors $\left\{v_i\right\}_i$ (resp. $\left\{v'_i\right\}_i$) are pairwise orthogonal. We also recall that $v'_1$ is taken such that $\langle v_1,v'_1 \rangle >0$.

\subsection{Computation of a leading eigenvector of $B$}
Recall now that we are working under assumption \eqref{microscopicregime}:
\begin{equation*}
\exists \, \alpha >0, \;\xi = o \left(n^{-1/2-\alpha}\right).
\end{equation*}  Let $w'$ be an (non normalized) eigenvector of $B$ for the eigenvalue $\lambda'_1$ of the form
\begin{equation*}
w' := \sum_{i=1}^{n} \theta_i v_i,
\end{equation*}where we assume that $\theta_1 =1$. Such an assumption can be made a.s. since any hyperplane of $\dR^n$ has a null Lebesgue measure in $\dR^n$ (see Remark \ref{densitylebesgue}). 

The defining eigenvector equations projected on vectors $v_i$ give
\begin{equation} \label{eq:system_eigenequations}
\left \{
\begin{array}{c c c}
\theta_1 & = & 1,\\
\forall i>1, \; \theta_i &  = &  \dfrac{\xi}{\lambda'_1 - \lambda_i} \sum_{j=1}^{n} \theta_j \langle H v_j,v_i \rangle, \\
\lambda'_1 - \lambda_1 &  = & \xi  \sum_{j=1}^{n} \theta_j \langle H v_j,v_1 \rangle. \\
\end{array}
\right.
\end{equation} The strategy is then to approximately solve \eqref{eq:system_eigenequations} with an iterative scheme, leading to the following expansion:

\begin{proposition}
	\label{EIG1:prop:w'expansion}
	Under the assumption \eqref{microscopicregime} one has the following:
	\begin{equation}
	\label{w'}
	w' = v_1 +  \xi \sum_{i=2}^{n} \frac{\langle Hv_i,v_1 \rangle }{\lambda_1 - \lambda_i} v_i + o_{\mathbb{P}}\left(\xi \sum_{i=2}^{n} \frac{\langle Hv_i,v_1 \rangle }{\lambda_1 - \lambda_i} v_i  \right).
	\end{equation}
\end{proposition}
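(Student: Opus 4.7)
The strategy is to solve the coupled system \eqref{eq:system_eigenequations} by a Neumann-series fixed-point argument, retaining only the first-order term and showing the remainder is $o_{\mathbb{P}}$ of the main term in norm. Writing $\vec\theta = (\theta_i)_{i\geq 2}$ and plugging $\theta_1=1$ into the second block of \eqref{eq:system_eigenequations} gives an affine equation
\begin{equation*}
\vec\theta \;=\; \vec f + T\vec\theta, \qquad f_i := \frac{\xi\,\langle Hv_1,v_i\rangle}{\lambda'_1-\lambda_i}, \qquad (T\vec\theta)_i := \frac{\xi}{\lambda'_1-\lambda_i}\sum_{j\geq 2}\theta_j\,\langle Hv_j,v_i\rangle.
\end{equation*}
Once the operator $T$ is shown to satisfy $\|T\|_{\mathrm{op}} = o_{\mathbb{P}}(1)$, a Neumann expansion yields $\vec\theta = \vec f + o_{\mathbb{P}}(\|\vec f\|)$. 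Translating back, $w'-v_1 = \sum_{i\geq 2} f_i\,v_i + o_{\mathbb{P}}(\|\sum_{i\geq 2} f_i v_i\|)$, and it only remains to replace $\lambda'_1$ by $\lambda_1$ in each denominator to recover \eqref{w'}.

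\textbf{Key steps.} First, Weyl's inequality applied to $B-A = \xi H$ together with $\|H\|_{\mathrm{op}} = 2+o_{\mathbb{P}}(1)$ for the normalized GOE yields $|\lambda'_1-\lambda_1| \leq \xi\,\|H\|_{\mathrm{op}} = O_{\mathbb{P}}(\xi)$. Combined with classical GOE edge rigidity --- the top spectral gap $\lambda_1-\lambda_2$ is of order $n^{-2/3}$ with high probability, and more generally $\lambda_1-\lambda_k \asymp (k/n)^{2/3}$ near the edge --- assumption \eqref{microscopicregime} gives $\frac{1}{\lambda'_1-\lambda_i} = \frac{1}{\lambda_1-\lambda_i}\bigl(1+o_{\mathbb{P}}(1)\bigr)$ uniformly in $i\geq 2$. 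Second, conditionally on $A$, orthogonal invariance of the independent GOE $H$ guarantees that $M_{ij} := \langle Hv_j,v_i\rangle$ is itself GOE-distributed; in particular $\|M\|_{\mathrm{op}}=O_{\mathbb{P}}(1)$, while the $f_i$'s are (up to the diagonal) centered independent Gaussians with variance $1/n$. Third, control the Hilbert--Schmidt norm of $T$:
\begin{equation*}
\|T\|_{\mathrm{op}}^2 \;\leq\; \|T\|_{\mathrm{HS}}^2 \;=\; \xi^2 \sum_{i,j\geq 2}\frac{M_{ij}^2}{(\lambda'_1-\lambda_i)^2} \;=\; O_{\mathbb{P}}\!\Big(\xi^2 \sum_{i\geq 2}(\lambda_1-\lambda_i)^{-2}\Big).
\end{equation*}
Using the rigidity estimates above, $\sum_{i\geq 2}(\lambda_1-\lambda_i)^{-2} = O(n^{4/3})$ w.h.p., so $\|T\|_{\mathrm{op}} = O_{\mathbb{P}}(\xi n^{2/3}) = o_{\mathbb{P}}(1)$ under \eqref{microscopicregime}. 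Finally, one compares $\sum_{i\geq 2} f_i v_i$ with the claimed main term $\xi \sum_{i\geq 2}\frac{\langle Hv_1,v_i\rangle}{\lambda_1-\lambda_i}v_i$: they differ only by the multiplicative $(1+o_{\mathbb{P}}(1))$ factor from the denominator substitution, which is exactly what the $o_{\mathbb{P}}$ in \eqref{w'} absorbs.

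\textbf{Main obstacle.} The delicate point is the simultaneous control of the operator-norm estimate $\|T\|_{\mathrm{op}}=o_{\mathbb{P}}(1)$ and the denominator substitution $\lambda'_1 \leftrightarrow \lambda_1$, both of which rest on sharp edge-spectrum information for $A$: the near-edge gaps are on the Tracy--Widom scale $n^{-2/3}$ and fluctuate on that same scale, so a crude worst-case bound (replacing every gap by $\lambda_1-\lambda_2$) is too lossy. The right tool is eigenvalue rigidity in the form $\lambda_1-\lambda_k \asymp (k/n)^{2/3}$, holding on a high-probability event uniformly in $k$, which turns the naive divergent sum $\sum (\lambda_1-\lambda_i)^{-2}$ into the sharp $O(n^{4/3})$ bound and keeps the relative denominator error uniformly $o_{\mathbb{P}}(1)$. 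Once this spectral bookkeeping is set up, the remainder of the argument is standard matrix perturbation theory.
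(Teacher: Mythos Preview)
Your Neumann-series strategy is sound in spirit, but the quantitative bound you extract is not strong enough to close under assumption~\eqref{microscopicregime}. You obtain $\|T\|_{\mathrm{op}} \leq \|T\|_{\mathrm{HS}} = O_{\mathbb{P}}(\xi n^{2/3})$ and then assert this is $o_{\mathbb{P}}(1)$; but \eqref{microscopicregime} only says $\xi = o(n^{-1/2-\alpha})$ for \emph{some} $\alpha>0$, so $\xi n^{2/3} = o(n^{1/6-\alpha})$, which diverges whenever $\alpha<1/6$. The same exponent reappears in your denominator substitution: the uniform estimate $(\lambda'_1-\lambda_i)^{-1} = (\lambda_1-\lambda_i)^{-1}(1+o_{\mathbb{P}}(1))$ needs $|\lambda'_1-\lambda_1|/(\lambda_1-\lambda_2) = O_{\mathbb{P}}(\xi n^{2/3}) = o_{\mathbb{P}}(1)$, which fails for the same reason. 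Moreover this is not a slack HS-versus-op issue: the $\ell^2$ operator norm of $\xi D^{-1}M$ is genuinely of order $\xi n^{2/3}$, since already the single row $i=2$ has $\ell^2$-norm $\xi(\lambda_1-\lambda_2)^{-1}\big(\sum_j M_{2j}^2\big)^{1/2} \sim \xi n^{2/3}$. So no sharpening of the operator-norm bound will rescue the argument as stated.

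The paper circumvents this by iterating in $\ell^1$ rather than $\ell^2$: tracking $\Delta_k := \sum_{i\ge 2}|\theta_i^k - \theta_i^{k-1}|$, one combines the uniform entrywise bound $\max_{i,j}|\langle Hv_j,v_i\rangle| = O(\sqrt{\log n/n})$ with the estimate $\sum_{i\ge 2}(\lambda_1-\lambda_i)^{-1} = O(n^{1+\delta})$ (Lemma~\ref{lemma_sommevp1}). The resulting contraction factor is $\xi \cdot n^{-1/2+o(1)} \cdot n^{1+\delta} = \xi\, n^{1/2+\delta+o(1)}$, which \emph{does} vanish under~\eqref{microscopicregime} as soon as $\delta<\alpha$. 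The underlying reason for the discrepancy is that the $\ell^1\to\ell^1$ norm of $D^{-1}M$ is governed by the bulk eigenvalue gaps (of order $n^{-1/2}$, matching the exponent in~\eqref{microscopicregime}), whereas the $\ell^2$ operator norm is governed by the edge gap $\lambda_1-\lambda_2 \sim n^{-2/3}$. Your argument would go through verbatim under the stronger hypothesis $\xi = o(n^{-2/3-\alpha})$, but not under~\eqref{microscopicregime} as written.
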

We refer to Appendix \ref{appendix_proof_prop_w'} for the details regarding the definition of the mentioned iterative scheme, as well as a proof of Proposition \ref{EIG1:prop:w'expansion}. The proof uses assumption \eqref{microscopicregime} an builds upon some standard results on the distribution of eigenvalues in the $\GOE$.

\begin{remark}
\label{microscopicregimenotoptimal}
The above proposition could easily be extended for all eigenvectors of $B$, under assumption \eqref{microscopicregime}.
Based on the studies of the trajectories of the eigenvalues and eigenvectors in the $\GUE$ \cite{Allez14} and the $\GOE$ \cite{Allez14bis}, since we are only interested here in the leading eigenvectors, we expect the result of Proposition \ref{EIG1:prop:w'expansion} to hold under the weaker assumption $\xi n^{1/6+\alpha} \to 0$,  for $n^{-1/6}$ is the typical spectral gap $\sqrt{n}(\lambda_1-\lambda_2)$ on the edge. However, as explained before (see Remark \ref{remark_assumption6}), our analysis doesn't require this more optimal assumption. We also know that the expansion \eqref{w'} doesn't hold as soon as $\xi = \omega(n^{-1/6})$. A result proved by Chatterjee (\cite{Chatterjee14}, Theorem 3.8) shows that the eigenvectors corresponding to the highest eigenvalues $v_1$ of $A$ and $v'_1$ of $B=A+\xi H$, when $A$ and $H$ are two independent matrices from the $\GUE$, are delocalized (in the sense that $\langle v_1,v'_1 \rangle$ converges in probability to $0$ as $n \to \infty$), when $\xi = \omega(n^{-1/6})$.
\end{remark}

\subsection{Gaussian representation of $v'_1 - v_1$}
We still work under assumption (\ref{microscopicregime}). After renormalization, we have $v'_1 = \frac{w'}{\| w' \|}$. We are now able to study the behavior of the overlap $\langle v'_1, v_1 \rangle$:
\begin{equation*}
\langle v'_1, v_1 \rangle = \left(1+\xi^2 (1+o_{\mathbb{P}}(1)) \sum_{i=2}^{n} \dfrac{\langle H v_i,v_1 \rangle ^2 }{\left(\lambda_1 - \lambda_i\right)^2}\right)^{-1/2}
\end{equation*}
Hence
\begin{equation}
\label{eq:diffusion_vp}
\langle v'_1, v_1 \rangle = 1-\frac{\xi^2}{2} \sum_{i=2}^{n} \frac{\langle H v_i,v_1 \rangle ^2 }{\left(\lambda_1 - \lambda_i\right)^2} + o_{\mathbb{P}}\left(\xi^2 \sum_{i=2}^{n} \frac{\langle H v_i,v_1 \rangle ^2 }{\left(\lambda_1 - \lambda_i\right)^2} \right).
\end{equation} 

Let us give the heuristic to evaluate the first sum in the right-hand side of \eqref{eq:diffusion_vp}: since the $\GOE$ distribution is invariant by rotation (see e.g. \cite{Anderson09}), the random variables $\langle H v_i,v_1 \rangle$ are zero-mean Gaussian, with variance $1/n$. Moreover, it is well known \cite{Anderson09} that the eigenvalue gaps $\lambda_1 - \lambda_i$ are of order $n^{-1/6}$ when $i$ is small, and $n^{-1/2}$ in the bulk (when $i$ is typically of order $n$). These considerations lead to the following:
\begin{lemma}
	\label{lemma_concentrationksi}
	We have the following concentration
	\begin{equation}
	\label{lemma_concentrationksieq}
	\sum_{i=2}^{n} \frac{\langle H v_i,v_1 \rangle ^2 }{\left(\lambda_1 - \lambda_i\right)^2} \asymp n^{1/3}.
	\end{equation}
\end{lemma}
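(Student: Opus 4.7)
The plan is to combine three ingredients: the orthogonal invariance of the $\GOE$ (to decouple the Gaussian weights from the spectrum), classical edge-spectrum estimates for the $\GOE$ (to control the deterministic-given-$A$ quantity $S_n := \sum_{i=2}^n (\lambda_1-\lambda_i)^{-2}$), and an elementary moment argument (to transfer the control to the random $T_n := \sum_{i=2}^n g_i^2/(\lambda_1-\lambda_i)^2$, up to sub-polynomial factors compatible with the $\asymp$ notation).

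First, since $H$ is a $\GOE$ matrix independent of $A$, orthogonal invariance implies that if $V=(v_1|\cdots|v_n)$ is the eigenbasis of $A$, then $V^\top H V \eqd H$ conditionally on $A$. In particular, the variables $g_i := \sqrt{n}\,\langle H v_i, v_1\rangle$ for $i\geq 2$ are, conditionally on $A$, i.i.d.\ standard Gaussians, independent of the spectrum $\{\lambda_i\}$. Hence
\begin{equation*}
\sum_{i=2}^n \frac{\langle H v_i, v_1\rangle^2}{(\lambda_1-\lambda_i)^2} \;=\; \frac{1}{n}\,T_n, \qquad E[T_n\mid A]=S_n, \quad \mathrm{Var}(T_n\mid A)=2\sum_{i=2}^n(\lambda_1-\lambda_i)^{-4}.
\end{equation*}

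Next, I would show $S_n \asymp n^{4/3}$ using standard facts on the soft edge of the $\GOE$ spectrum. The lower bound is essentially free from the single term $i=2$: the rescaled top gap $n^{2/3}(\lambda_1-\lambda_2)$ converges in distribution (to the top gap of the Airy$_1$ point process), so $(\lambda_1-\lambda_2)^{-2}\geq n^{4/3-\epsilon}$ w.h.p. for any $\epsilon>0$. For the upper bound, combine the classical $\GOE$ locations $\gamma_i = 2 - \Theta((i/n)^{2/3})$ at the edge with (i) the rigidity bound $|\lambda_i-\gamma_i|\leq n^{-2/3+\epsilon}\min(i,n-i+1)^{-1/3}$ and (ii) an edge level-repulsion-type estimate guaranteeing $n^{2/3}(\lambda_1-\lambda_i)\gtrsim n^{-\epsilon}\,i^{2/3}$ uniformly in $i$ w.h.p. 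These yield $(\lambda_1-\lambda_i)^{-2}\lesssim n^{4/3+O(\epsilon)}/i^{4/3}$, and summing gives $S_n\leq n^{4/3+O(\epsilon)}$ w.h.p.

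Finally, to pass from $S_n$ to $T_n$: conditional Markov on $A$ gives $T_n \leq n^\epsilon\,S_n\leq n^{4/3+O(\epsilon)}$ w.h.p., while $T_n \geq g_2^2\,(\lambda_1-\lambda_2)^{-2} \geq n^{-\epsilon}\cdot n^{4/3-\epsilon} = n^{4/3-O(\epsilon)}$ w.h.p., using $|g_2|\geq n^{-\epsilon/2}$ with probability $1-O(n^{-\epsilon/2})$ and the edge gap estimate above. Dividing by $n$ closes the proof.

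The main obstacle is the upper bound on $S_n$. The generic rigidity estimate $|\lambda_i-\gamma_i|\leq n^{-2/3+\epsilon}$ is insufficient for $i$ near the edge, since the spacing $\gamma_1-\gamma_2$ is itself only of order $n^{-2/3}$; naively subtracting two rigid eigenvalues loses all the signal. One must therefore invoke a finer edge-spacing/level-repulsion statement to ensure that $n^{2/3}(\lambda_1-\lambda_i)$ stays comparable to $i^{2/3}$ from below. This is the only non-elementary ingredient, and it is precisely where the sub-polynomial slack hidden in $\asymp$ is expended; in contrast, all the probabilistic bookkeeping (concentration of $g_i^2$, Markov, union bound) is routine.
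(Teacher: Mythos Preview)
Your approach is correct and largely parallels the paper's: both decouple the Gaussian weights from the spectrum via orthogonal invariance of the $\GOE$, and both reduce to showing $S_n := \sum_{i\geq 2}(\lambda_1-\lambda_i)^{-2} \asymp n^{4/3}$, which the paper establishes (as its Lemma~\ref{lemma_sommevp2}) via exactly the split you describe---rigidity for moderate $i$, and a Tracy--Widom-based gap lower bound $\lambda_1-\lambda_2 \geq n^{-2/3}(\log n)^{-C\log\log n}$ (Lemma~\ref{lemma_controletrousp}) for the smallest indices.

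The one genuine difference is in how you transfer from $S_n$ to the random sum $T_n$. You use Markov's inequality for the upper tail and a single-term argument $T_n \geq g_2^2(\lambda_1-\lambda_2)^{-2}$ for the lower, which is valid for the $\asymp$ statement and requires no further spectral input. The paper instead runs a second-moment (Chebyshev) argument on $M_n := T_n - S_n$, invoking the additional estimate $\sum_{i\geq 2}(\lambda_1-\lambda_i)^{-4} \asymp n^{8/3}$, and thereby obtains the sharper conclusion $T_n = (1+o_{\mathbb{P}}(1))\,S_n$. Your route is more elementary; the paper's buys a genuine $(1+o_{\mathbb{P}}(1))$ concentration, which is in fact what is used downstream in the proof of Proposition~\ref{EIG1:prop:gaussian_decomp} rather than the bare $\asymp$ stated in the lemma.
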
 We refer to Appendix \ref{proof_lemma_concentrationksi} for a rigorous proof of this result. With this Lemma, we are now able to give the first order expansion of $\langle v'_1, v_1 \rangle$ with respect to $\xi$:
\begin{equation}
\label{eq:diffusion_vp_2}
\langle v'_1, v_1 \rangle = 1-\frac{\xi^2}{2} n^{1/3} + o_{\mathbb{P}}\left(\xi^2 n^{1/3} \right).
\end{equation} 
\begin{remark}
The comparison between $\xi $ and $n^{1/6}$ made in \cite{Chatterjee14} naturally reappears here, as $\xi^2 n^{1/3}$ is the typical shift of $v'_1$ with respect to $v_1$.
\end{remark}

The intuition is that the scalar product $\langle v'_1, v_1 \rangle$ is sufficient to derive a Gaussian representation of $v'_1$ w.r.t. $v_1$. We formalize this in the following
\begin{lemma}
	\label{lemma_invrotation}
	Given $v_1$, when writing the decomposition $w' = v_1 + w$, with
	\begin{equation*}
	w :=\sum_{i=2}^{n} \theta_i v_i,
	\end{equation*}
	the distribution of $w$ is invariant by rotation in the orthogonal complement of $v_1$. This implies in particular that given $v_1$, $\|w\|$ and $\frac{w}{\|w\|}$ are independent, and that $\frac{w}{\|w\|}$ is uniformly distributed on $\mathbb{S}^{n-2}$, the unit sphere of $v_1^\perp$.
\end{lemma}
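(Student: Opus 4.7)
The plan is to condition on the leading eigenvector $v_1$ of $A$ and exploit the orthogonal invariance of the GOE distribution, restricted to rotations that fix $v_1$. Let $R$ be any orthogonal transformation of $\mathbb{R}^n$ that fixes $v_1$, so $R$ acts as a rotation on the subspace $v_1^\perp$. I would first observe that the conditional distribution of $A$ given $v_1$ is invariant under $A \mapsto RAR^\top$: indeed $A \stackrel{d}{=} RAR^\top$ by orthogonal invariance of the GOE, and since the leading eigenvector of $RAR^\top$ is $Rv_1 = v_1$, the conditional law given this eigenvector is preserved. The analogous invariance holds for $H$ (also GOE and independent of $A$). Consequently, conditionally on $v_1$, one has $B = A + \xi H \stackrel{d}{=} RBR^\top$.

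Next I would transport this invariance to $w'$ and then to $w$. By the normalization $\theta_1 = \langle w', v_1 \rangle = 1$ we can write $w' = v'_1 / \langle v'_1, v_1 \rangle$, where the sign convention $\langle v'_1, v_1\rangle > 0$ is used and is a.s. well-defined thanks to Remark \ref{densitylebesgue}. The leading (top) eigenvector of $RBR^\top$ is $Rv'_1$, and since $R^\top v_1 = v_1$, the corresponding normalized version satisfies
\[
\frac{Rv'_1}{\langle Rv'_1,\, v_1\rangle} = \frac{Rv'_1}{\langle v'_1,\, R^\top v_1\rangle} = \frac{Rv'_1}{\langle v'_1,\, v_1\rangle} = Rw'.
\]
Combining with $RBR^\top \stackrel{d}{=} B$ given $v_1$ yields $Rw' \stackrel{d}{=} w'$ conditionally on $v_1$, and subtracting $v_1$ (fixed by $R$) gives $Rw \stackrel{d}{=} w$ conditionally on $v_1$, which is the claimed rotational invariance.

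For the consequences, I would invoke the standard polar-decomposition argument: any probability distribution on $v_1^\perp \simeq \mathbb{R}^{n-1}$ that is invariant under the whole rotation group of this subspace can be written as the independent product of a radial distribution (for $\|w\|$) and the uniform distribution on the unit sphere (for $w/\|w\|$). Applied to $w$, whose support is contained in $v_1^\perp$, this yields at once the independence of $\|w\|$ and $w/\|w\|$ together with the uniformity of $w/\|w\|$ on $\mathbb{S}^{n-2}$, completing the proof.

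The argument is essentially a symmetry computation and does not involve any hard estimate; the only mild subtlety is measurability of the conditioning on $v_1$ and the sign convention for $v'_1$, both of which are handled by standard facts since the events on which $v'_1$ fails to be well-defined (collision of the top two eigenvalues of $B$, or $\langle v'_1, v_1\rangle = 0$) have zero Lebesgue measure under our Gaussian model.
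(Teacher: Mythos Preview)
Your proof is correct and follows essentially the same approach as the paper: both exploit orthogonal invariance of the GOE under conjugation by a rotation fixing $v_1$ to deduce that $w$ is rotationally invariant in $v_1^\perp$, and then derive the independence of $\|w\|$ and $w/\|w\|$ together with uniformity on $\mathbb{S}^{n-2}$. The only cosmetic differences are that you spell out more carefully how the invariance of $B$ transports to $w'$ via $w' = v'_1/\langle v'_1,v_1\rangle$, whereas the paper is terser there; conversely, the paper writes out the independence step explicitly via a Haar-measure integration, while you simply invoke the standard polar-decomposition fact.
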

\begin{proof}[Proof of Lemma \ref{lemma_invrotation}]
	We work conditionnally on $v_1$. Let $O$ be an orthogonal transformation of the hyperplane $v_1^{\perp}$ (such that $Ov_1=v_1$). Since the $\GOE$ distribution is invariant by rotation and $A$ and $H$ are independent, $\widetilde{B} := O^{\top}AO + \xi O^{\top}HO$ has he same distribution as $B = A+\xi H$.
	
	Note that $Ow' = v_1 + Ow$ is an eigenvector of $\widetilde{B}$ for the eigenvalue $\lambda_{1}$. Since the distribution of the matrix of eigenvectors $(v_2, \ldots, v_n)$ is the Haar measure on the orthogonal group $\mathcal{O}_{n-1}\left(v_1^{\perp}\right)$, denoted by $d\mathcal{H}$, the distribution of $w$ is also invariant by rotation in the orthogonal complement of $v_1$. Furthermore, for any $f,g$ bounded continuous functions and $O \in \mathcal{O}_{n-1}\left(v_1^{\perp}\right)$,
	\begin{flalign*}
	\mathbb{E}\left[f(\|w\|)g\left(\frac{w}{\|w\|}\right)\right] &= \mathbb{E}\left[f(\|w\|)g\left(\frac{Ow}{\|Ow\|}\right)\right] = \mathbb{E}\left[f(\|w\|) \int_{\mathcal{O}_{n-1}\left(v_1^{\perp}\right)} d\mathcal{
		H}(O) g\left(\frac{Ow}{\|Ow\|}\right)\right] \\
	&= \mathbb{E}\left[f(\|w\|) \int_{\mathbb{S}^{n-2}} \frac{g(u) du}{\mathrm{Vol}\left(\mathbb{S}^{n-2}\right)}\right] = \mathbb{E}\left[f(\|w\|)\right]\mathbb{E}\left[g\left(\frac{w}{\|w\|}\right)\right].
	\end{flalign*} 
	This completes the proof of Lemma  \ref{lemma_invrotation}.
\end{proof} 

We can now show the main result of this section, Proposition \ref{EIG1:prop:gaussian_decomp}.
\begin{proof}[Proof of Proposition \ref{EIG1:prop:gaussian_decomp}]
	Recall the decomposition $w'= v_1 + w$ with $w =\sum_{i=2}^{n} \theta_i v_i$. According to Lemma \ref{lemma_invrotation}, conditioned to $v_1$, $\frac{w}{\|w\|}$ is uniformly distributed on $\mathbb{S}^{n-2}$, the unit sphere of $v_1^\perp$. We now state a classical result about sampling uniform vectors on a sphere:
	
\begin{lemma}
		Let $E$ be $p-$dimensional Euclidean space, endowed with an orthogonal basis $\cB = (e_1, \ldots, e_p)$. Let $u$ be a random vector uniformly distributed on the unit sphere $\mathbb{S}^{p-1}$ of $E$. Then, in basis $\cB$, $u$ has the same distribution as
		$$ \left( \frac{Z_1}{\sqrt{\sum_{i=1}^{p} Z_i^2}}, \ldots, \frac{Z_p}{\sqrt{\sum_{i=1}^{p} Z_i^2}} \right), $$
		where $Z_1, \ldots, Z_p$ are i.i.d. standard normal random variables.
\end{lemma}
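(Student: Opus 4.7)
The plan is to exploit the rotational invariance of the standard Gaussian distribution in $\mathbb{R}^p$ and the uniqueness (up to normalization) of the rotationally invariant measure on the sphere.

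First, I would work in coordinates with respect to the basis $\cB$, so without loss of generality identify $E$ with $\mathbb{R}^p$ equipped with the standard inner product. Let $Z = (Z_1, \ldots, Z_p)$ be a vector of i.i.d. standard normal variables; its density on $\mathbb{R}^p$ is $\frac{1}{(2\pi)^{p/2}} \exp(-\|z\|^2/2)$, which depends on $z$ only through $\|z\|$. This shows immediately that for any orthogonal transformation $O \in \mathcal{O}_p(\mathbb{R})$, the random vector $OZ$ has the same distribution as $Z$. Note also that $Z \neq 0$ almost surely, so the quantity $V := Z/\|Z\|$ is well defined a.s.

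Next, I would deduce that the law of $V$ on $\mathbb{S}^{p-1}$ is invariant under the orthogonal group: for any $O \in \mathcal{O}_p(\mathbb{R})$,
\begin{equation*}
OV = \frac{OZ}{\|OZ\|} = \frac{OZ}{\|Z\|} \overset{(d)}{=} \frac{Z}{\|Z\|} = V,
\end{equation*}
since the map $z \mapsto z/\|z\|$ is $\mathcal{O}_p$-equivariant and $OZ \eqd Z$. Now, by uniqueness (up to normalization) of the rotation-invariant probability measure on $\mathbb{S}^{p-1}$ -- this is the standard fact that the Haar measure on the compact group $\mathcal{O}_p$ pushed forward by the action on any point of the sphere yields the normalized surface measure -- $V$ must be uniformly distributed on $\mathbb{S}^{p-1}$.

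This would complete the proof, as the right-hand side of the displayed formula in the lemma is precisely the coordinate expression of $V = Z/\|Z\|$ in the basis $\cB$. The only subtle point to pin down carefully is the uniqueness of the uniform measure on the sphere, which one can either invoke as a standard fact about the action of a compact group on a homogeneous space, or prove directly by a short argument based on the transitivity of $\mathcal{O}_p$ on $\mathbb{S}^{p-1}$; no real obstacle arises here.
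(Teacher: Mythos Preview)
Your argument is correct and is the standard proof of this classical fact: the spherical symmetry of the standard Gaussian density forces $Z/\|Z\|$ to have an $\mathcal{O}_p$-invariant law on $\mathbb{S}^{p-1}$, and uniqueness of the Haar-induced surface measure finishes it. The paper does not give its own proof of this lemma but simply cites an external reference (\cite{ORourke16}, Lemma 10.1); your write-up is exactly the kind of argument one finds there, so there is nothing to compare.
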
 

We refer e.g. to \cite{ORourke16}, Lemma 10.1, for the proof of this result. In our context, this proves that the joint distribution of the coordinates $w_2, \ldots, w_n$ of $w$ along $v_2, \ldots, v_n$ is always that of a normalized standard Gaussian vector (on $\dR^{n-1}$). This joint probability does not dependent on $v_1$. Hence, there exist $Z_2, \ldots, Z_n$ standard Gaussian independent variables, independent from $v_1$ (and from $\| w\|$ by Lemma \ref{lemma_invrotation}), such that:
	\begin{equation*}
	w' = v_1 + \frac{\|w\|}{\left(\sum_{i=2}^{n} Z_i^2\right)^{1/2}} \sum_{i=2}^{n} Z_i v_i.
	\end{equation*}
	Let $Z_1$ be another standard Gaussian variable, independent from everything else. Then
	\begin{equation*}
	w' = \left(1 - \frac{\|w\| Z_1}{\left(\sum_{i=2}^{n} Z_i^2\right)^{1/2}}\right) v_1 + \frac{\|w\|}{\left(\sum_{i=2}^{n} Z_i^2\right)^{1/2}} \sum_{i=1}^{n} Z_i v_i.
	\end{equation*} Let $Z =\sum_{i=1}^{n} Z_i v_i$, which is a standard Gaussian vector. Since the distribution of $Z$ is invariant by permutation of the $\left(Z_i\right)_{1 \leq i \leq n}$, $Z$ and $v_1$ are independent. We have
	\begin{flalign*}
	v'_1 &= \frac{w'}{\| w'\|} = \frac{w'}{\sqrt{1+\|w\|^2}} \\&= \frac{1}{\sqrt{1+\|w\|^2}} \left(1 - \frac{\|w\| Z_1}{\left(\sum_{i=2}^{n} Z_i^2\right)^{1/2}}\right) v_1+ \frac{\|w\| \|Z\|}{\sqrt{1+\|w\|^2} \left(\sum_{i=2}^{n} Z_i^2\right)^{1/2}} \frac{Z}{\|Z\|}.
	\end{flalign*} 
	Taking 
	\begin{equation*}
	\mathbf{S}=\frac{\|w\| \|Z\|}{\left(\sum_{i=2}^{n} Z_i^2\right)^{1/2} - \|w\| Z_1},
	\end{equation*} we get 
	\begin{equation}\label{eq:v'1_wZ}
	v'_1 = \frac{1}{\sqrt{1+\|w\|^2}} \left(1 - \frac{\|w\| Z_1}{\left(\sum_{i=2}^{n} Z_i^2\right)^{1/2}}\right) \left(v_1+ \mathbf{S} \frac{Z}{\|Z\|}\right).
	\end{equation}
	Proposition \ref{EIG1:prop:w'expansion} together with Lemma \ref{lemma_concentrationksi} yield
	\begin{equation*}
		\|w\|^2=\|w'-v_1\|^2 = (1+o_{\mathbb{P}}(1)) \cdot \xi^2 \sum_{i=2}^{n} \frac{\langle H v_i,v_1 \rangle ^2 }{\left(\lambda_1 - \lambda_i\right)^2} \asymp \xi^2 n^{1/3},
	\end{equation*} the last quantity being $o(1)$ under assumption \eqref{microscopicregime}. With the previous computation, equation \eqref{eq:v'1_wZ} becomes
	\begin{align*}
	v'_1 = \left(1 + o_{\mathbb{P}}(1)\right) \left(v_1+ \mathbf{S} \frac{Z}{\|Z\|}\right),
	\end{align*}
	with $\mathbf{S} = (1+o_{\mathbb{P}}(1)) \|w\|\; {\asymp} \; \xi n^{1/6}$.
\end{proof}

\section{Definition and analysis of a toy model}
\label{toymodel}
Now that we have established a expansion of $v'_1$ with respect to $v_1$, our main question boils down to the study of the effect of a random Gaussian perturbation of a Gaussian vector in terms of rank of its coordinates: if these ranks are preserved, the permutation that aligns these two vectors will be $\hat{\Pi}=\Pi^{\star} = I_n$. Otherwise we want to understand the error made between $\hat{\Pi}$ and $\Pi^{\star}=I_n$.

\subsection{Definitions and notations}
\label{relevantlink}
We refer to Section \ref{EIG1:subsection:notations} for the definition of the toy model $\J(n,s)$. Recall that we want to compute, when $(X,Y) \sim \J(n,s)$, the probability
\begin{equation*}
p(n,s) := \mathbb{P}\left(r_1(X)=r_1(Y)\right).
\end{equation*}
In this section, we denote by $E$ the probability density function of a standard Gaussian variable, and $F$ its cumulative distribution function. Namely
\begin{equation*}
E(u) := \frac{1}{\sqrt{2 \pi}} e^{-u^2 /2} \quad \mbox{and} \quad F(u) := \frac{1}{\sqrt{2 \pi}} \int_{- \infty}^{u} e^{-z^2 /2} dz.
\end{equation*}

We hereafter elaborate on the link between this toy model and our first matrix model \eqref{eq:GOEmodel_bis} in Section \ref{linkGOEtoy}. Since $v_1$ is uniformly distributed on the unit sphere, we have the equality in distribution $v_1 = \frac{X}{\|X\|}$ where $X$ is a standard Gaussian vector of size $n$, independent of $Z$ by Proposition \ref{EIG1:prop:gaussian_decomp}. We write
\begin{align*}
v_1 &= \frac{X}{\|X\|}, \\
v'_1 &=  \left( 1 + o_{\mathbb{P}}(1) \right)\left(  \frac{X}{\|X\|} + \mathbf{S} \frac{Z}{\|Z\|}\right).
\end{align*}
Note that for all $\lambda>0$, $r_1(\lambda T) = r_1(T)$, hence
\begin{equation}
\label{GOEtoylink}
r_1(v_1) = r_1(X), \quad r_1(v'_1) = r_1 \left(X + \mathbf{s} Z \right),\\
\end{equation} where 
\begin{equation*}
\mathbf{s}  = \frac{\mathbf{S}  \|X\|}{\|Z\|} \asymp \xi n^{1/6},
\end{equation*} where we used the law of large numbers ($\|X\|/\|Z\| \to 1$ p.s.) as well as Proposition \ref{EIG1:prop:gaussian_decomp} in the last expansion. Equation \eqref{GOEtoylink} shows that this toy model is relevant for our initial problem, up to the fact that the noise term $\mathbf{s}$ is random in the matrix model (though we know its order of magnitude to be $\asymp \xi n^{1/6}$).
\begin{remark}
The intuition for the zero-one law for $p(n,s)$ is as follows. If we sort the $n$ coordinates of $X$ on the real axis, all coordinates being typically perturbed by a factor $s$, it seems natural to compare $s$ with the typical gap between two coordinates of order $1/n$ to decide whether the rank of the first coordinate of $X$ is preserved in $Y$. 
\end{remark}
Let us show that this intuition is rigorously verified. For every couple $(x,y)$ of real numbers, define
\begin{equation*}
\cN_{n,s}^{+}(x,y) := \card{\set{1 \leq i \leq n, \; X_i > x, Y_i < y}},
\end{equation*}
\begin{equation*}
\cN^{-}_{n,s}(x,y) :=  \card{\set{1 \leq i \leq n, \; X_i < x, Y_i > y}} .
\end{equation*} In the following, we omit all dependencies in $n$ and $s$, using the notations $\cN^{+}$ and $\cN^{-}$. The corresponding regions are shown on Figure \ref{fig:EIG1:imagen+-}.
\begin{figure}[h]
	\centering
	\begin{picture}(240,240)
\put(140,90){\textcolor{green}{\line(1,1){20}}}
\put(120,90){\textcolor{green}{\line(1,1){40}}}
\put(100,90){\textcolor{green}{\line(1,1){60}}}
\put(80,90){\textcolor{green}{\line(1,1){80}}}
\put(60,90){\textcolor{green}{\line(1,1){100}}}
\put(40,90){\textcolor{green}{\line(1,1){120}}}
\put(20,90){\textcolor{green}{\line(1,1){140}}}
\put(0,90){\textcolor{green}{\line(1,1){160}}}
\put(0,110){\textcolor{green}{\line(1,1){140}}}
\put(0,130){\textcolor{green}{\line(1,1){120}}}
\put(0,150){\textcolor{green}{\line(1,1){100}}}
\put(0,170){\textcolor{green}{\line(1,1){80}}}
\put(0,190){\textcolor{green}{\line(1,1){60}}}
\put(65,165){\mbox{$\mathcal{N}^{-}(x,y)$}}

\put(160,70){\textcolor{red}{\line(1,1){20}}}
\put(160,50){\textcolor{red}{\line(1,1){40}}}
\put(160,30){\textcolor{red}{\line(1,1){60}}}
\put(160,10){\textcolor{red}{\line(1,1){80}}}
\put(170,0){\textcolor{red}{\line(1,1){75}}}
\put(190,0){\textcolor{red}{\line(1,1){55}}}
\put(210,0){\textcolor{red}{\line(1,1){35}}}
\put(230,0){\textcolor{red}{\line(1,1){15}}}
\put(185,45){\mbox{$\mathcal{N}^{+}(x,y)$}}

\put(0,120){\vector(1,0){240}}

\put(120,0){\vector(0,1){240}}

\put(160,90){\circle*{4}}
\put(174,98){\makebox(0,0){$(x,y)$}}

\multiput(0,90)(10,0){25}{\line(1,0){5}}

\multiput(160,0)(0,10){25}{\line(0,1){5}}

\end{picture}
	\caption{Areas corresponding to $\cN^{+} (x,y)$ and $\cN^{-} (x,y)$.}
	\label{fig:EIG1:imagen+-}
\end{figure}
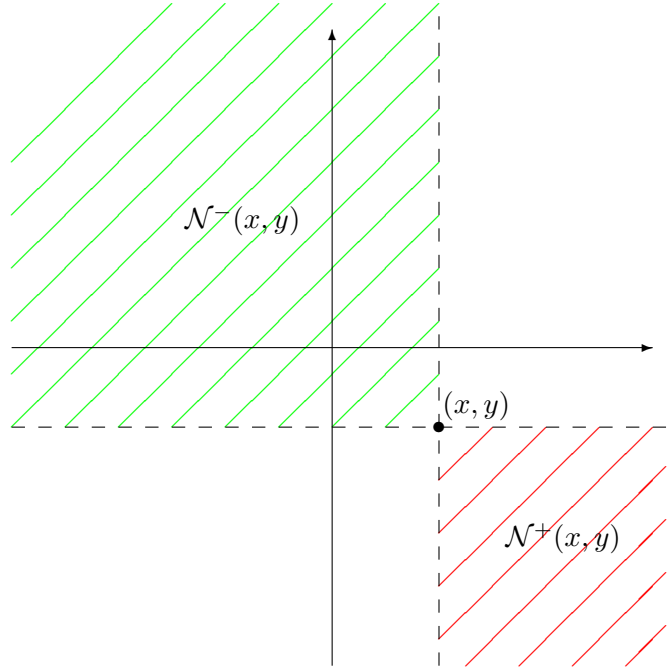 We will also need the following probabilities

\begin{align*}
S^{+}(x,y) &:= \mathbb{P}\left(X_1 > x, Y_1 < y\right), \mbox{ and}\\
S^{-}(x,y) &:= \mathbb{P}\left(X_1 < x, Y_1 > y\right) = S^{+}(-x,-y).
\end{align*}
In terms of distribution, the random vector $$\left(\cN^{+}(x,y), \cN^{-}(x,y), n-1-\cN^{+}(x,y)-\cN^{+}(x,y)\right)$$ follows a multinomial distribution of parameters $$\left(n-1,S^{+}(x,y),S^{-}(x,y),1-S^{+}(x,y)-S^{-}(x,y)\right).$$ 
In order to have $r_1(X)=r_1(Y)$, there must be the same number of points on the two domains on Figure \ref{fig:EIG1:imagen+-}, for $x=X_1$ and $y=Y_1$. We then have the following expression of $p(n,s)$:
\begin{align*} 
p(n,s) &= \mathbb{E}\left[\mathbb{P}\left(\cN^{+}(X_1,Y_1)=\cN^{-}(X_1,Y_1)\right)\right] \\
&= \int_{\dR} \int_{\dR} \mathbb{P}(dx,dy) \mathbb{P}(\cN^{+}(x,y)=\cN^{-}(x,y)) \\
&= \int_{\dR} \int_{\dR} E(x) E(z) \phi_{x,z}(n,s) \, dx \, dz,
\end{align*} with
\begin{equation}
\label{phiexpression}
\phi_{x,z}(n,s) := \sum_{k=0}^{\lfloor(n-1)/2\rfloor} \binom{n-1}{k}  \binom{n-1-k}{k} \left(S^{+}_{x,z}\right)^k \left(S^{-}_{x,z}\right)^k \left(1-S^{+}_{x,z}-S^{-}_{x,z}\right)^{n-1-2k},
\end{equation} using the notations $S^{+}_{x,z} = S^{+}(x,x+sz)$ and $S^{-}_{x,z} = S^{-}(x,x+sz)$. A simple computation shows that
\begin{flalign}
S^{+}(x,x+sz) &= \int_{x}^{+ \infty} \frac{1}{\sqrt{2\pi}} e^{-u^2 /2} \left( \int_{- \infty}^{z+\frac{x-u}{s}} \frac{1}{ \sqrt{2\pi}} e^{-v^2 / 2} \, dv \right) du \nonumber \\
&= \int_{x}^{+ \infty} E(u) \, F \left(z-\frac{u-x}{s}\right) du,  \label{S+-inf} \\
&= s\int_{0}^{+ \infty} E(x+vs) \, F \left(z-v\right) dv. \label{S+-0} 
\end{flalign}
We have the classical integration result
\begin{equation}
\label{integrateS_+-}
\int_{- \infty}^{z} F(u) du = z F(z) + E(z).
\end{equation}
From (\ref{S+-inf}), (\ref{S+-0}) and (\ref{integrateS_+-}) we derive the following easy lemma:
\begin{lemma}
	\label{s+-}
	For all $x$ and $z$,
	\begin{flalign*}
	S^{+}(x,x+s z) & \underset{s \to 0}{=} s \left[E(x) \left(z F(z) +E(z)\right)\right] + o(s),\\
	S^{+}(x,x+s z) & \underset{s \to \infty}{\longrightarrow} F(z) \left(1-F(x)\right),\\
	S^{-}(x,x+s z) & \underset{s \to 0}{=} s \left[E(x) \left(-z+ z F(z) +E(z)\right)\right] + o(s),\\
	S^{-}(x,x+s z) & \underset{s \to \infty}{\longrightarrow} F(x) \left(1-F(z)\right).
	\end{flalign*}
	Moreover, both $s \mapsto S^{+}(x,x+s z)$ and $s \mapsto S^{-}(x,x+s z)$ are increasing.
\end{lemma}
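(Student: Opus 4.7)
The plan is to work directly from the two integral representations~\eqref{S+-inf} and~\eqref{S+-0} already established above, applying dominated convergence in each limit and using symmetry to transfer results from $S^+$ to $S^-$.

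First I would handle the $s\to 0$ asymptotics of $S^+$. Starting from~\eqref{S+-0}, write
\[
\frac{S^+(x,x+sz)}{s} \;=\; \int_0^{+\infty} E(x+vs)\,F(z-v)\,dv.
\]
Since $E$ is bounded by $(2\pi)^{-1/2}$ and $F(z-v)\le 1$, with $F(z-v)$ decaying exponentially as $v\to+\infty$, the integrand is dominated independently of $s\in[0,1]$ by an integrable function. As $s\to 0$, $E(x+vs)\to E(x)$ pointwise, so dominated convergence yields
\[
\frac{S^+(x,x+sz)}{s} \;\longrightarrow\; E(x)\int_0^{+\infty} F(z-v)\,dv \;=\; E(x)\int_{-\infty}^{z} F(u)\,du,
\]
and the integration identity~\eqref{integrateS_+-} gives the limit $E(x)(zF(z)+E(z))$, which is the first claim.

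Next I would obtain the $s\to\infty$ limit of $S^+$ from~\eqref{S+-inf}: for each fixed $u\ge x$, $F(z-(u-x)/s)\to F(z)$ as $s\to\infty$, while $|F(\cdot)|\le 1$ and $E(u)\one_{u\ge x}$ is integrable, so dominated convergence gives
\[
S^+(x,x+sz) \;\longrightarrow\; F(z)\int_x^{+\infty} E(u)\,du \;=\; F(z)(1-F(x)).
\]
The two corresponding statements for $S^-$ then follow immediately from the identity $S^-(x,y)=S^+(-x,-y)$ established just above the lemma: applied with $y=x+sz$, we have $S^-(x,x+sz)=S^+(-x,-x+s(-z))$, so the small-$s$ expansion gives $E(-x)((-z)F(-z)+E(-z))$, and using $E(-x)=E(x)$, $F(-z)=1-F(z)$, $E(-z)=E(z)$ one recovers the stated $s[E(x)(-z+zF(z)+E(z))]+o(s)$; the large-$s$ limit becomes $F(-z)(1-F(-x))=(1-F(z))F(x)$.

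Finally, for monotonicity I would argue directly from~\eqref{S+-inf}: when $u\ge x$, $(u-x)/s$ is a nonnegative decreasing function of $s$, so $z-(u-x)/s$ is increasing in $s$, hence $F(z-(u-x)/s)$ is nondecreasing in $s$ for every $u\ge x$. Integrating this pointwise monotonicity against the nonnegative weight $E(u)\one_{u\ge x}$ shows that $s\mapsto S^+(x,x+sz)$ is increasing. The corresponding statement for $S^-$ follows either by the same argument on the representation $S^-(x,x+sz)=\int_{-\infty}^{x} E(u)\bigl(1-F(z+(x-u)/s)\bigr)\,du$ (where $(x-u)/s\ge 0$ is now decreasing in $s$, so the integrand is increasing in $s$), or via the symmetry relation applied to the already-proved monotonicity of $S^+$. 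No step is truly the hard part here — the only technical point is producing an integrable dominating function for each application of dominated convergence, which is straightforward because $E$ decays as a Gaussian and $F$ is bounded.
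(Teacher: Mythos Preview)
Your proof is correct and follows exactly the route the paper indicates: the paper simply states that the lemma is derived from~\eqref{S+-inf}, \eqref{S+-0} and~\eqref{integrateS_+-} without giving details, and you have filled those details in precisely — using~\eqref{S+-0} with dominated convergence for the $s\to 0$ behavior, \eqref{S+-inf} with dominated convergence for the $s\to\infty$ limit, the symmetry $S^-(x,y)=S^+(-x,-y)$ to transfer everything to $S^-$, and the pointwise monotonicity of $F(z-(u-x)/s)$ in~\eqref{S+-inf} for the increasing property.
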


\subsection{Zero-one law for $p(n,s)$} 
In this Section we give a proof of Proposition \ref{EIG1:prop:zero_one_toy}.

\begin{proof}[Proof of Proposition \ref{EIG1:prop:zero_one_toy}]
	\proofstep{First case $(i)$.} If $s = o(1/n)$, we have the following inequality
	\begin{equation}
	\label{pi}
	p(n,s) \geq \int_{\dR} \int_{\dR} dx dz E(x) E(z) \mathbb{P} \left(\cN^{+}(x,x+sz)=\cN^{-}(x,x+sz)=0\right).
	\end{equation}
	According to Lemma \ref{s+-}, for all $x,z \in \dR$
	\begin{flalign*}
	\mathbb{P} \left(\cN^{+}(x,x+sz)=\cN^{-}(x,x+sz)=0\right)&= \left(1-S^{+}(x,x+sz)-S^{-}(x,x+sz)\right)^{n-1}\\ & \sim \exp\left(- nsE(x)\left[z(2F(z)-1)+2E(z)\right]\right) \\
	& \underset{n \to \infty}{\longrightarrow} 1,&&
	\end{flalign*} By applying the dominated convergence theorem in (\ref{pi}), we conclude that $p(n,s) \to 1$.\\
	
	\proofstep{Second case $(ii)$.} If $s n \to \infty$, recall that
	\begin{equation}
	\label{pii}
	p(n,s)= \int_{\dR} \int_{\dR} dx dz E(x) E(z) \phi_{x,z}(n,s),
	\end{equation} with $\phi_{x,z}$ defined in equation $(\ref{phiexpression})$. In the rest of the proof, we fix $x$ and $z$ two real numbers. Letting
\begin{equation*}
b(n,s,k) := \binom{n-1}{k} \left(S^{+}_{x,z}\right)^k \left(1- S^{+}_{x,z}\right)^{n-1-k}
\end{equation*} and 
\begin{equation*}
M(n,s) := \underset{0 \leq k \leq n-1}{\max} b(n,s,k).
\end{equation*} Note that by Lemma \ref{s+-}, there exists $C = C(x,z) <1$ such that for $n$ large enough, $S^{+}_{x,z}<C<1$. Moreover, combining this Lemma with assumption $(ii)$ gives that $ n S^{+}_{x,z} \to \infty$. It is also known that $M(n,s) = b(n,s,\lfloor n S^{+}_{x,z} \rfloor)$ and a classical computation shows that in this case (see e.g. \cite{Bollobas2001}, formula 1.5):
\begin{flalign*}
M(n,s) & = \binom{n-1}{\lfloor n S^{+}_{x,z} \rfloor} \left(S^{+}_{x,z}\right)^{\lfloor n S^{+}_{x,z} \rfloor} \left(1- S^{+}_{x,z}\right)^{n-1-{\lfloor n S^{+}_{x,z} \rfloor}}\\
& \sim \frac{1}{\sqrt{2 \pi n t(1-t)}} t^{-(n-1)t} (1-t)^{-(n-1)(1-t)} \left(S^{+}_{x,z}\right)^{(n-1)t} \left(1- S^{+}_{x,z}\right)^{(n-1)(1-t)}\\
& = \left(n S^{+}_{x,z}\right)^{-1/2} (1+O(1)) \to 0.
\end{flalign*} where $t := \frac{\lfloor n S^{+}_{x,z} \rfloor}{n-1} \sim  S^{+}_{x,z}$. Working with equation \eqref{phiexpression}, we obtain the following control
\begin{flalign*}
\phi_{x,z}(n,s) & \leq M(n,s) \times \sum_{k=0}^{\lfloor(n-1)/2\rfloor}  \binom{n-1-k}{k} \left(S^{-}_{x,z}\right)^k \frac{\left(1-S^{+}_{x,z}-S^{-}_{x,z}\right)^{n-1-2k}}{\left(1-S^{+}_{x,z}\right)^{n-1-k}}\\
& \overset{(a)}{=} M(n,s) \times \frac{(1-S^{+}_{x,z})\left(1-\left(\frac{-S^{+}_{x,z}}{1-S^{-}_{x,z}} \right)^n\right)}{1+S^{-}_{x,z}-S^{+}_{x,z}}\\
& \overset{(b)}{=} M(n,s) \times O(1) \underset{n \to \infty}{\longrightarrow} 0. &&
\end{flalign*}
We used in $(b)$ the fact that $S^{+}_{x,z}+S^{-}_{x,z}$ is increasing in $s$, and that given $x$ and $z$, for all $s>0$, by Lemma \ref{s+-}, $$S^{+}_{x,z}+S^{-}_{x,z}< F(x)\left(1-F(z)\right)+F(z)\left(1-F(x)\right)<1.$$ We used in $(a)$ the following combinatorial result:

\begin{lemma}
\label{fibo}
For all $\alpha>0$, 
\begin{equation}
\label{sum_fibo}
\sum_{k=0}^{\lfloor(n-1)/2\rfloor}  \binom{n-1-k}{k} \alpha^k = \frac{1}{\sqrt{1+4 \alpha}} \left[\left(\frac{1+\sqrt{1+4 \alpha}}{2}\right)^n - \left(\frac{1-\sqrt{1+4 \alpha}}{2}\right)^n\right].
\end{equation}
\end{lemma}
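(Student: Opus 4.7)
\textbf{Proof plan for Lemma \ref{fibo}.} The plan is to recognize the left-hand side as a generalized Fibonacci-type sum and prove the identity by showing that both sides satisfy the same linear recurrence of order two with the same initial conditions. This is the standard route for such "diagonal-of-Pascal" sums, and avoids any generating-function machinery.

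First I would set $u_n := \sum_{k=0}^{\lfloor(n-1)/2\rfloor} \binom{n-1-k}{k} \alpha^k$ and establish the recurrence
\begin{equation*}
u_n = u_{n-1} + \alpha \, u_{n-2} \qquad (n \geq 3).
\end{equation*}
This step is the mildly technical one: applying Pascal's rule $\binom{n-1-k}{k} = \binom{n-2-k}{k} + \binom{n-2-k}{k-1}$, splitting the sum accordingly, and doing an index shift $k \mapsto k-1$ in the second piece converts it into $\alpha$ times the sum defining $u_{n-2}$. The only care needed is checking the summation bounds (the terms at the endpoints of the Pascal split either vanish or coincide with the boundary terms of $u_{n-1}$ and $u_{n-2}$), which is a straightforward verification via $\lfloor(n-1)/2\rfloor$ parity bookkeeping.

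Next I would verify the base cases: $u_1 = \binom{0}{0} = 1$ and $u_2 = \binom{1}{0} = 1$. Let
\begin{equation*}
r_{\pm} := \frac{1 \pm \sqrt{1+4\alpha}}{2},
\end{equation*}
the roots of the characteristic polynomial $X^2 - X - \alpha$ associated with the above recurrence; note $r_+ + r_- = 1$, $r_+ r_- = -\alpha$, and $r_+ - r_- = \sqrt{1+4\alpha} > 0$ since $\alpha > 0$. The general solution writes $u_n = A r_+^n + B r_-^n$, and imposing $u_1 = u_2 = 1$ together with $r_\pm^2 = r_\pm + \alpha$ yields $A + B = 0$ and $A(r_+ - r_-) = 1$, hence $A = -B = 1/\sqrt{1+4\alpha}$. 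This gives exactly the claimed closed form.

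The only real obstacle, as noted, is the endpoint bookkeeping in step one (especially handling the parity of $n$ in the index $\lfloor(n-1)/2\rfloor$ when splitting via Pascal's rule); everything else is purely mechanical. An alternative, slicker route would be a combinatorial proof: $u_n$ equals the number of $\alpha$-weighted tilings of a $1\times(n-1)$ strip by monominoes (weight $1$) and dominoes (weight $\alpha$), which manifestly satisfies the same recurrence by conditioning on the first tile; I would only fall back on this if the index manipulation above turned out to be cumbersome to write cleanly.
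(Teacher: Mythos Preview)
Your proof is correct but takes a different route from the paper. You establish the recurrence $u_n = u_{n-1} + \alpha u_{n-2}$ directly via Pascal's rule, then solve it through the characteristic polynomial $X^2 - X - \alpha$. The paper instead works with the generating function $f(t) = \sum_{n\geq 1} u_n t^n$: it computes $f(t) = t/(1 - t - \alpha t^2)$ by a direct double-sum manipulation, then extracts the closed form by partial fractions over the roots $\phi_\pm = (1\pm\sqrt{1+4\alpha})/2$. Your approach is slightly more elementary in that it stays purely algebraic and sidesteps any convergence bookkeeping (the paper needs the crude bound $u_n \leq (1+\alpha)^n$ to justify that $f$ is well-defined for small $t$). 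On the other hand, the generating-function route avoids the endpoint and parity bookkeeping you flag as the main nuisance; both methods ultimately rely on the same factorization of $1 - t - \alpha t^2$, just packaged differently.
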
 
We refer to Appendix \ref{proof_lemma_fibo} for a proof of this result. To obtain $(a)$ from Lemma \ref{fibo}, we apply $(\ref{sum_fibo})$ to $\alpha = \frac{S^{-}_{x,z} \left(1-S^{+}_{x,z}\right)}{\left(1-S^{+}_{x,z}-S^{-}_{x,z}\right)^2}$, with $\sqrt{1+4 \alpha} = \frac{1-S^{+}_{x,z}+S^{-}_{x,z}}{1-S^{+}_{x,z}-S^{-}_{x,z}}$. Some simple simplifications then give the claimed result. The dominated convergence theorem in $(\ref{pii})$ shows that $p(n,s) \to 0$ and ends the proof.
\end{proof}

\begin{remark}
The above computations also imply the existence of a non-degenerate limit of $p(n,s)$ in the critical case where $s n \to c>0$: in this case, previous discussions as well as Lemma \ref{s+-} show that the joint distribution of $(\cN^+(x,x+sz), \cN^-(x,x+sz))$ is asymptotically
$$\Poi(c \left[E(x) \left(z F(z) +E(z)\right)\right]) \otimes \Poi(c \left[E(x) \left(-z+z F(z) +E(z)\right)\right]).  $$ 
Therefore, $p(n,s)$ has a non-degenerate limit given by 
\begin{equation}
\int_{\dR} \int_{\dR} E(x) E(z) \cdot \mathbf{G}\left(c \left[E(x) \left(z F(z) +E(z)\right)\right],c \left[E(x) \left(-z+z F(z) +E(z)\right)\right] \right) \, dx \, dz,
\end{equation} where
\begin{equation}
\mathbf{G}(a,b) := \mathbb{P}(\Poi(a)=\Poi(b)) = e^{-(a+b)} \sum_{k \geq 0} \frac{a^k b^k}{(k!)^2}.
\end{equation} 
\end{remark}

\section{Analysis of the \alg{EIG1} method for matrix alignment}
\label{EIG1threshold}
By now, we come back to our initial problem, which is the analysis of \alg{EIG1} method. Recall that for any estimator $\hat{\Pi}$ of $\Pi^{\star}$, its overlap is defined as follows
\begin{equation*}
\ov(\hat{\Pi},\Pi^{\star}) := \frac{1}{n} \sum_{u=1}^{n} \one_{\hat{\Pi}(u)=\Pi^{\star}(u)}.
\end{equation*}
The aim of this section is to show how Propositions \ref{EIG1:prop:gaussian_decomp} and \ref{EIG1:prop:zero_one_toy} can be assembled to show the main result of our study, namely Theorem \ref{EIG1:theorem:01law}.

\begin{proof}[Proof of Theorem \ref{EIG1:theorem:01law}]
\proofstep{First case $(i)$.} Assuming $\xi = o(n^{-7/6-\epsilon})$ for some $\epsilon>0$, then in particular condition (\ref{microscopicregime}) holds. Proposition \ref{EIG1:prop:gaussian_decomp} as well as equation \eqref{GOEtoylink} in Section \ref{toymodel} enable to identify $v_1$ and $v'_1$ with the following vectors:
\begin{equation}
\label{identificationrang}
v_1 \sim X, \quad v'_1 \sim X + \mathbf{s} Z,
\end{equation} where $X$ and $Z$ are two independent Gaussian vectors from the toy model, and where $\mathbf{s} \asymp \xi n^{1/6} $  w.h.p. Recall that we work under the assumptions $\Pi^{\star}=I_n$ and $\langle v_1, v'_1 \rangle >0$. In this case, we expect $\Pi_{+}$ to be very close to $I_n$.

We will use the notations of Section \ref{toymodel} hereafter. Let's take $f \in \mathcal{F}$ such that w.h.p., $\xi n^{1/6}f(n)^{-1} \leq \mathbf{s} \leq \xi n^{1/6 }f(n)$. We have for all $1 \leq i \leq n$,
\begin{flalign*}
\mathbb{P}\left({\Pi_{+}}(i)=\Pi^{\star}(u)\right) &=  \mathbb{P}\left({\Pi_{+}}(1)=\Pi^{\star}(1)\right) \\
& = \mathbb{E}\left[\iint dx dz E(x) E(z) \phi_{x,z}\left(n, \mathbf{s} \right)\one_{\xi n^{1/6}f(n)^{-1} \leq \mathbf{s} \leq \xi n^{1/6 }f(n)}\right] + o(1)\\
& = \iint dx dz E(x) E(z) \mathbb{E}\left[\phi_{x,z}\left(n, \mathbf{s} \right)\one_{\xi n^{1/6}f(n)^{-1} \leq \mathbf{s} \leq \xi n^{1/6 }f(n)}\right] + o(1).&&
\end{flalign*} When conditioning on the event $\cA$ where $\xi n^{1/6}f(n)^{-1} \leq \mathbf{s} \leq \xi n^{1/6 }f(n)$, we know that $\mathbf{s}n \to 0$ by condition $(i)$ and for all $x,z$, $\dE\left[\phi_{x,z}\left(n, \mathbf{s} \right) \, | \, \cA \right] \to 1$ as shown in Section \ref{toymodel}. Since $\cA$ occurs w.h.p. we have
\begin{flalign*}
\dE\left[\phi_{x,z}\left(n, \mathbf{s} \right)\one_{\cA}\right] {\longrightarrow} 1,
\end{flalign*} which implies with the dominated convergence theorem that 
\begin{equation}\label{eq:last_eq(i)}
\mathbb{E}\left[\ov({\Pi_{+}},\Pi^{\star})\right] \underset{n \to \infty}{\longrightarrow} 1
\end{equation}
and thus
\begin{equation*}
\ov({\Pi_{+}},\Pi^{\star}) \overset{L^1}{\rightarrow} 1.
\end{equation*} 
We now check that w.h.p., $\Pi_{+}$ is preferred to $\Pi_{-}$ in the \alg{EIG1} method:
\begin{lemma}
		\label{Pi+casei}
		In the case $(i)$ of Theorem \ref{EIG1:theorem:01law}, if $\langle v_1,v'_1 \rangle >0$, we have w.h.p.
		\begin{equation*}
		\langle A, \Pi_{+} B \Pi_{+}^{\top} \rangle > \langle A, \Pi_{-} B \Pi_{-}^{\top} \rangle,
		\end{equation*} in other words Algorithm \alg{EIG1} returns w.h.p. $\hat{\Pi}=\Pi_{+}$.
\end{lemma}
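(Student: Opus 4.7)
The plan is to show that in regime $(i)$, the QAP objective $\langle A, \Pi_+ B \Pi_+^\top \rangle$ is $(1+o_\dP(1))\|A\|_F^2$, of order $n$, while $\langle A, \Pi_- B \Pi_-^\top \rangle = O_\dP(1)$, so the first strictly dominates with high probability.

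For $\Pi_+$, I would combine \eqref{eq:last_eq(i)} with Markov's inequality to conclude that the number of unfixed points of the permutation $\rho_+$ underlying $\Pi_+$ is $o_\dP(n)$. Splitting $\langle A, \Pi_+ B \Pi_+^\top \rangle = \sum_{i,j} A_{ij} B_{\rho_+(i)\rho_+(j)}$ along the fixed-point set of $\rho_+$, the dominant piece equals $\sum_{i,j \text{ fixed}} A_{ij} B_{ij}$, which differs from $\langle A, B \rangle = \|A\|_F^2 + \xi \langle A, H \rangle$ by a boundary contribution involving $O(n|D_{\rho_+}|) = o_\dP(n^2)$ pairs, itself controlled by a Hanson--Wright-type estimate as in Corollary~\ref{corollary:control_C}. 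Standard GOE concentration gives $\|A\|_F^2 = n + o_\dP(n)$, while the Gaussian $\langle A, H \rangle = O_\dP(\sqrt{n})$ multiplied by $\xi = o(n^{-7/6-\epsilon})$ is negligible.

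For $\langle A, \Pi_- B \Pi_-^\top \rangle$, the key observation is that $\Pi_-$ aligns $v_1$ with $-v'_1$ and hence, by the rearrangement inequality, realises the minimum of $\langle v_1, \Pi v'_1 \rangle$ over permutation matrices. Since $v'_1 \approx v_1$ by Proposition~\ref{EIG1:prop:gaussian_decomp} and $v_1$ is uniform on the sphere, $\Pi_-$ essentially pairs each index of rank $k$ in $v_1$ with the index of rank $n-k+1$, leaving at most one fixed point. Decomposing $A = \lambda_1 v_1 v_1^\top + A^\perp$ gives
\begin{equation*}
\langle A, \Pi_- A \Pi_-^\top \rangle = \lambda_1^2 (v_1^\top \Pi_- v_1)^2 + 2\lambda_1\, v_1^\top \Pi_- A^\perp \Pi_-^\top v_1 + \langle A^\perp, \Pi_- A^\perp \Pi_-^\top \rangle.
\end{equation*}
An order-statistics computation for a sphere-uniform vector yields $v_1^\top \Pi_- v_1 = -1 + o_\dP(1)$ (since $\sum_k X_{(k)} X_{(n-k+1)} / \|X\|^2 \to -\int_0^1 \Phi^{-1}(u)^2 du = -1$), so the first term is $\lambda_1^2(1+o_\dP(1)) = O_\dP(1)$. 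Writing $\Pi_-^\top v_1 = \alpha v_1 + \beta w$ with $\beta^2 = 1 - \alpha^2 = o_\dP(1)$, the cross term is $o_\dP(1)$ because $A^\perp v_1 = 0$. For the residual $\langle A^\perp, \Pi_- A^\perp \Pi_-^\top \rangle$, I would condition on $v_1$ and on the spectrum: by GOE rotational invariance, $A^\perp$ is then uniform over symmetric matrices on $v_1^\perp$ with prescribed eigenvalues, so applying Hanson--Wright to the fixed permutation $\Pi_-$ (which has only $O(1)$ fixed points) yields $O_\dP(1)$. Finally $\xi\langle A, \Pi_- H \Pi_-^\top \rangle$ is negligible thanks to $\xi = o(n^{-7/6-\epsilon})$.

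The main obstacle is the tight dependence of $\Pi_-$ on $A$ (through $v_1$) and on $H$ (through $v'_1$), which forbids a direct application of concentration results for fixed permutations. The resolution is to condition on $v_1$, exploit GOE invariance under orthogonal conjugation on $v_1^\perp$ to treat $A^\perp$ as a random GOE-like matrix with fixed permutation $\Pi_-$, and use $\|v'_1 - v_1\|^2 = o_\dP(1)$ from Proposition~\ref{EIG1:prop:gaussian_decomp} to argue that substituting $v'_1$ by $v_1$ in the definition of $\Pi_-$ alters the QAP score by only $o_\dP(n)$.
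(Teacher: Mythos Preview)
Your treatment of $\Pi_+$ matches the paper's: show via Markov that $\Pi_+$ has at most $n^{1-\epsilon/2}$ unfixed points, split the QAP sum accordingly, and obtain $\langle A,\Pi_+ B\Pi_+^\top\rangle\geq C n$. The paper uses the crude entrywise bound $\max_{u,v}\{|A_{uv}|,|H_{uv}|\}\leq C\log n/n$ on boundary terms instead of Hanson--Wright, but the substance is identical.

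For $\Pi_-$ the routes diverge. The paper performs no spectral decomposition and argues combinatorially: since $\Pi_-(i)=\Pi_+(\text{rank-complement of }i)$, on the large fixed-point set $\mathcal{G}$ of $\Pi_+$ one has $\Pi_-(u)=n{+}1{-}u$ after relabeling by the ranks of $v_1$. The main sum $\sum_{(u,v)\in\mathcal{G}^2}A_{uv}B_{n+1-u,\,n+1-v}$ then pairs $A_{uv}$ with the \emph{distinct} entry $A_{n+1-u,\,n+1-v}$ except for $O(n)$ pairs with $u+v=n+1$, and a law-of-large-numbers argument gives $o_{\dP}(n)$; off-$\mathcal{G}^2$ terms are absorbed by the entrywise bound. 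This only yields $o_{\dP}(n)$ rather than your targeted $O_{\dP}(1)$, but that already suffices for the lemma and is considerably more elementary.

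Your spectral route handles the rank-one and cross pieces correctly via $v_1^\top\Pi_- v_1\to -1$, but there is a genuine gap at the residual $\langle A^\perp,\Pi_- A^\perp\Pi_-^\top\rangle$. Hanson--Wright controls a quadratic form $X^\top M X$ in a \emph{Gaussian} vector with a \emph{fixed} matrix $M$; after conditioning on $v_1$ and the spectrum, $A^\perp=VDV^\top$ with $V$ Haar on $v_1^\perp$ is not Gaussian, and $\Tr(A^\perp R A^\perp R^\top)$ is a degree-four polynomial in the Haar-random $V$, which Hanson--Wright does not cover. Making this step rigorous would require concentration for polynomials on the orthogonal group (Weingarten calculus, or a direct second-moment computation for Haar matrices), which is substantially more work than what you outline. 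The paper's entrywise approach sidesteps this difficulty entirely by never isolating $A^\perp$.
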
 This Lemma is proved in Appendix \ref{proof_lemma_Pi+(i)} and implies, together with \eqref{eq:last_eq(i)}, that
\begin{flalign*}
\mathbb{E}\left[\ov(\hat{\Pi},\Pi^{\star})\right] & \geq \mathbb{E}\left[\ov(\hat{\Pi},\Pi^{\star})\one_{\hat{\Pi}=\Pi_{+}}\right] = \mathbb{E}\left[\ov(\Pi_{+},\Pi^{\star})\one_{\hat{\Pi}=\Pi_{+}}\right] \\
& = \mathbb{E}\left[\ov(\Pi_{+},\Pi^{\star})\right]  - \mathbb{E}\left[\ov(\Pi_{+},\Pi^{\star})\one_{\hat{\Pi}=\Pi_{-}}\right] \\
& = 1 -o(1).
\end{flalign*} and thus 
\begin{equation}\label{eq:conv_i}
\ov(\hat{\Pi},\Pi^{\star}) \underset{n \to \infty}{\overset{L^1}{\longrightarrow}} 1.
\end{equation} 

\proofstep{Second case $(ii)$.} If condition (\ref{microscopicregime}) is verified then the identification (\ref{identificationrang}) still holds and the proof of case $(i)$ adapts well. However, if (\ref{microscopicregime}) is not verified, we can still make a link with the toy model studied in Section \ref{toymodel}. Let's use a simple coupling argument: if $\xi = \omega(n^{-1/2-\alpha})$ for some $\alpha \geq 0$, let's take $\xi_1, \xi_2 >0$ such that $$ \xi^2 = \xi_1^2 + \xi_2^2 $$ and $$ n^{-7/6+\epsilon} \ll \xi_1 \ll n^{-1/2-\alpha}, $$ fixing for instance $\xi_1 = n^{-1}$. We will use the notation $\widetilde{v}_1$, now viewed as the leading eigenvector of the matrix 
\begin{equation*}
\widetilde{B} = A + \xi_1 H + \xi_2 \widetilde{H},
\end{equation*} where $\widetilde{H}$ is an independent copy of $H$. This has no consequence in terms of distribution : $(A,\widetilde{B})$ is still drawn under model \eqref{eq:GOEmodel_bis}. Let's denote $v'_1$ the leading eigenvector of $B_1=A+\xi_1 H$, chosen so that $\langle v_1,v'_1\rangle >0$. It is clear that condition \eqref{microscopicregime} holds for $\xi_1$. We have the following result, based on the invariance by rotation of the $\GOE$ distribution:

\begin{lemma}
	\label{stillthelink}
	We still have the following equality in distribution:
	\begin{equation*}
	\left(r_1(v_1),r_1(\widetilde{v}_1)\right) \overset{(d)}{=} \left(r_1(X),r_1(X+\mathbf{s}Z) \right),
	\end{equation*} 
	where $X$, $Z$ are two standard Gaussian vectors from the toy model, with w.h.p. 
	\begin{equation*}
	\mathbf{s} \geq \mathbf{s^1} \asymp \xi_1 n^{1/6}.
	\end{equation*} 
\end{lemma}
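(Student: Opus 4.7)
The plan is to extract a Gaussian representation of $\widetilde v_1$ relative to $v_1$ exactly as in Proposition \ref{EIG1:prop:gaussian_decomp}, and then to lower bound the resulting noise scale $\mathbf{s}$ by a coupling through the intermediate eigenvector $v'_1$, for which the small-noise proposition does apply.

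The first observation I would make is that Lemma \ref{lemma_invrotation} uses only the rotational invariance of the $\GOE$ distribution and independence, not the small-noise assumption \eqref{microscopicregime}. Since $(A, \widetilde B) \overset{(d)}{=} (A, A + \xi H')$ for a $\GOE$ matrix $H'$ independent of $A$, its conclusion applies verbatim: conditionally on $v_1$, the rescaled leading eigenvector of $\widetilde B$ satisfies $\widetilde v_1 / \langle v_1, \widetilde v_1 \rangle = v_1 + \widetilde w$ with $\widetilde w \in v_1^{\perp}$ rotationally invariant, so that $\widetilde w/\|\widetilde w\|$ is uniform on the unit sphere of $v_1^{\perp}$, independent of $\|\widetilde w\|$. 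Reprising the sphere-sampling argument from the proof of Proposition \ref{EIG1:prop:gaussian_decomp}, there exists a standard Gaussian vector $Z$, independent of $v_1$ and of $\|\widetilde w\|$, such that $\widetilde w/\|\widetilde w\| = Z/\|Z\|$. Writing $v_1 = X/\|X\|$ with $X$ standard Gaussian and using that the rank function is invariant under positive scalings, this yields
\[
(r_1(v_1), r_1(\widetilde v_1)) \overset{(d)}{=} (r_1(X), r_1(X + \mathbf{s} Z)), \qquad \mathbf{s} = \|\widetilde w\|\,\|X\|/\|Z\|,
\]
on the event $\{\langle v_1, \widetilde v_1\rangle > 0\}$, the opposite-sign case being absorbed by the symmetry $Z \overset{(d)}{=} -Z$.

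The second step is to control $\mathbf{s}$ from below by $\mathbf{s}^1 \asymp \xi_1 n^{1/6}$, using the auxiliary eigenvector $v'_1$ of $B_1 = A + \xi_1 H$. Since $\xi_1$ verifies \eqref{microscopicregime}, Proposition \ref{EIG1:prop:gaussian_decomp} applies to $v'_1$ and gives $\alpha_1 := \langle v_1, v'_1\rangle = 1 - o_{\mathbb{P}}(1)$ with $\sqrt{1-\alpha_1^2} \asymp \xi_1 n^{1/6}$. Applying Lemma \ref{lemma_invrotation} again, now conditionally on $v'_1$ (legitimate by independence of $\widetilde H$ from $(A,H)$ and rotational invariance of $\widetilde H$), I can decompose $\widetilde v_1 = \alpha_{12}\, v'_1 + \sqrt{1-\alpha_{12}^2}\, u_2$, where $u_2$ is uniform on the unit sphere of $(v'_1)^{\perp}$. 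Projecting on $v_1$ gives
\[
|\langle v_1, \widetilde v_1\rangle| \leq |\alpha_{12}|\,\alpha_1 + \sqrt{1-\alpha_{12}^2}\,|\langle v_1, u_2\rangle|,
\]
and the second term is $O_{\mathbb{P}}(\sqrt{(1-\alpha_1^2)/n}) = o_{\mathbb{P}}(\sqrt{1-\alpha_1^2})$ because $u_2$ is uniform on an $(n-2)$-dimensional sphere in $(v'_1)^{\perp}$ and the projection of $v_1$ on that subspace has norm $\sqrt{1-\alpha_1^2}$. It follows that $|\langle v_1, \widetilde v_1\rangle| \leq \alpha_1\,(1+o_{\mathbb{P}}(1))$, hence $\|\widetilde w\|^2 = (1-|\langle v_1, \widetilde v_1\rangle|^2)/|\langle v_1, \widetilde v_1\rangle|^2 \geq (1-o_{\mathbb{P}}(1))\,(1-\alpha_1^2)/\alpha_1^2 \asymp \xi_1^2 n^{1/3}$. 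Combining with the a.s.\ convergence $\|X\|/\|Z\| \to 1$ yields $\mathbf{s} \geq (1-o_{\mathbb{P}}(1))\,\mathbf{s}^1$ with $\mathbf{s}^1 \asymp \xi_1 n^{1/6}$, as required.

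The main obstacle I anticipate is the bookkeeping of the independence and measurability structures between the two applications of Lemma \ref{lemma_invrotation}: first unconditionally on the perturbation (to extract a Gaussian direction $Z$ independent of $X$) and then conditionally on $v'_1$ (to quantitatively compare $\|\widetilde w\|$ with $\|w^1\|$), without double counting the randomness or losing the needed independence with $X$. Some care is also needed to handle the sign convention $\langle v_1, \widetilde v_1\rangle > 0$ consistently, since in case $(ii)$ the sign returned by Algorithm \ref{algo:EIG1} is not fixed a priori.
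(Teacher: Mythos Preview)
Your overall plan coincides with the paper's: decompose $\widetilde v_1$ relative to $v_1$ via Lemma~\ref{lemma_invrotation} (which indeed needs only rotational invariance, not the microscopic regime) to get the toy-model identity, and then decompose $\widetilde v_1$ relative to the intermediate eigenvector $v'_1$ to lower-bound the noise scale. The paper proceeds exactly this way.

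However, there is a genuine quantitative gap in your Step~2. From
\[
|\langle v_1,\widetilde v_1\rangle|\le |\alpha_{12}|\,\alpha_1+\sqrt{1-\alpha_{12}^2}\,|\langle v_1,u_2\rangle|
\]
you bound $|\alpha_{12}|\le 1$ and obtain $|\langle v_1,\widetilde v_1\rangle|\le \alpha_1(1+o_{\mathbb P}(1))$, and then claim $1-|\langle v_1,\widetilde v_1\rangle|^2\ge (1-o_{\mathbb P}(1))(1-\alpha_1^2)$. This implication is false at the relevant scales. With $\xi_1=n^{-1}$ one has $1-\alpha_1^2\asymp \xi_1^2 n^{1/3}=n^{-5/3}$, while your additive error on $|\langle v_1,\widetilde v_1\rangle|$ is $O_{\mathbb P}(\sqrt{(1-\alpha_1^2)/n})\asymp n^{-4/3}$. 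Squaring produces a cross term $2\alpha_1\cdot O_{\mathbb P}(n^{-4/3})$, which is of order $n^{-4/3}\gg n^{-5/3}=1-\alpha_1^2$; so your bound on $|\langle v_1,\widetilde v_1\rangle|$ is actually weaker than the trivial bound $|\langle v_1,\widetilde v_1\rangle|\le 1$ and yields no control on $\|\widetilde w\|$.

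The repair is to keep $|\alpha_{12}|$ and use Cauchy--Schwarz on the pair $(\alpha_{12},\sqrt{1-\alpha_{12}^2})$:
\[
|\langle v_1,\widetilde v_1\rangle|^2\le \alpha_1^2+|\langle v_1,u_2\rangle|^2=\alpha_1^2+O_{\mathbb P}\bigl((1-\alpha_1^2)/n\bigr),
\]
from which $1-|\langle v_1,\widetilde v_1\rangle|^2\ge (1-\alpha_1^2)(1-O_{\mathbb P}(1/n))$ does follow, and the rest of your argument goes through. The paper's proof keeps the factor $\sqrt p=|\alpha_{12}|$ explicit precisely to avoid this loss; discarding it is what breaks your version.
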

We refer to Appendix \ref{proof_stillthelink} for a proof. Since w.h.p. $\mathbf{s} \geq \mathbf{s^1}$ and $\mathbf{s^1} n \asymp \xi_1 n^{7/6} \to \infty$, we have for all $1 \leq i \leq n$,
\begin{flalign*}
\mathbb{P}\left(\Pi_{+}(i)=\Pi^{\star}(u)\right) &=  \mathbb{P}\left(\Pi_{+}(1)=\Pi^{\star}(1)\right) \\
& = \mathbb{E} \left[\iint dx dz E(x) E(z) \phi_{x,z}(n,\mathbf{s}) \one_{\mathbf{s} n  \to \infty} \right] + o(1)\\
& = \iint dx dz E(x) E(z) \mathbb{E} \left[\phi_{x,z}(n,\mathbf{s}) \one_{\mathbf{s} n  \to \infty} \right] + o(1).
\end{flalign*}
With the same arguments as in the case $(i)$, we show that 
$\phi_{x,z}(n,\mathbf{s}) \one_{\mathbf{s} n  \to \infty} \overset{L^1}{\longrightarrow} 0,$ which implies
\begin{equation*}
\mathbb{E}\left[\ov(\Pi_{+},\Pi^{\star})\right] \underset{n \to \infty}{\longrightarrow} 0,
\end{equation*}
hence $\ov(\Pi_{+},\Pi^{\star}) \underset{n \to \infty}{\overset{L^1}{\longrightarrow}} 0.$ The last step is to verify that the overlap achieved by $\Pi_{-}$ does not outperform that of $\Pi_{+}$. We prove the following Lemma in Appendix \ref{proof_lemma_Pi+caseii}:
\begin{lemma}
	\label{Pi+caseii}
	In the case $(ii)$, if $\langle v_1,v'_1 \rangle >0$, we also have
	\begin{equation*}
	\ov(\Pi_{-},\Pi^{\star}) \underset{n \to \infty}{\overset{L^1}{\longrightarrow}} 0.
	\end{equation*} 
\end{lemma}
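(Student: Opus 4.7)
The plan is to reduce the overlap of $\Pi_-$ to a ``rank-reversal'' probability in the toy model and then adapt the argument of case $(ii)$ of Proposition \ref{EIG1:prop:zero_one_toy}. By Lemma \ref{stillthelink} applied to a suitable copy $\widetilde{B} \overset{(d)}{=} B$, we have the identification $\bigl(r(v_1), r(v'_1)\bigr) \overset{(d)}{=} \bigl(r(X), r(X+\mathbf{s}Z)\bigr)$ for independent standard Gaussian vectors $X, Z \in \dR^n$ and a random parameter $\mathbf{s}$ satisfying $\mathbf{s}\,n \to \infty$ w.h.p. Since $\Pi_-$ aligns $v_1$ with $-v'_1$, one has $\Pi_-(i) = i$ if and only if $r(v_1)_i + r(v'_1)_i = n+1$, so by exchangeability of the coordinates,
\[
\dE\bigl[\ov(\Pi_-,\Pi^\star)\bigr] \;=\; \dP\bigl(r_1(X) + r_1(X+\mathbf{s}Z) = n+1\bigr),
\]
and it suffices to prove that this probability tends to $0$.

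Conditioning on $X_1=x$ and $Z_1=z$, the event above rewrites as $\widetilde{\cN}^+(x,x+\mathbf{s}z) = \widetilde{\cN}^-(x,x+\mathbf{s}z)$, where $\widetilde{\cN}^+$ (resp.\ $\widetilde{\cN}^-$) counts the indices $j \neq 1$ for which both (resp.\ neither) of the strict inequalities $X_j > x$ and $(X+\mathbf{s}Z)_j > x+\mathbf{s}z$ hold. Writing $\widetilde{S}^\pm_{x,z}$ for the corresponding single-index probabilities (the probabilities of the two quadrants complementary to the ones defining $S^\pm_{x,z}$), the same multinomial reasoning as in Section \ref{toymodel} gives the exact analog of \eqref{phiexpression}:
\[
\widetilde{\phi}_{x,z}(n,\mathbf{s}) = \sum_{k=0}^{\lfloor (n-1)/2\rfloor}\binom{n-1}{k}\binom{n-1-k}{k}\bigl(\widetilde{S}^+_{x,z}\bigr)^k\bigl(\widetilde{S}^-_{x,z}\bigr)^k\bigl(S^+_{x,z}+S^-_{x,z}\bigr)^{n-1-2k}.
\]
A direct computation yields $\widetilde{S}^+_{x,z} \to (1-F(x))(1-F(z))$ and $\widetilde{S}^-_{x,z} \to F(x)F(z)$ as $\mathbf{s} \to \infty$, while $\widetilde{S}^+_{x,z} \to 1-F(x)$ and $\widetilde{S}^-_{x,z} \to F(x)$ as $\mathbf{s} \to 0$, so both stay bounded away from $0$ for generic $(x,z)$ in either regime.

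When $\mathbf{s}$ stays bounded below by a positive constant, I would replicate verbatim the Lemma \ref{fibo} / maximum-of-binomial argument behind case $(ii)$ of Proposition \ref{EIG1:prop:zero_one_toy}, applied with $\alpha = \widetilde{S}^-_{x,z}(1-\widetilde{S}^+_{x,z})/(S^+_{x,z}+S^-_{x,z})^2$, to obtain $\widetilde{\phi}_{x,z}(n,\mathbf{s}) \to 0$ pointwise in $(x,z)$. The main obstacle lies in the degenerate regime $\mathbf{s} \to 0$, where $S^+_{x,z}+S^-_{x,z} = O(\mathbf{s})$ vanishes and the coefficient $\alpha$ blows up, so Lemma \ref{fibo} is no longer applicable. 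In that case I would view $(\widetilde{\cN}^+, \widetilde{\cN}^-)$ as two coordinates of a trinomial with success probabilities $(\widetilde{S}^+_{x,z}, \widetilde{S}^-_{x,z}, S^+_{x,z}+S^-_{x,z})$ and invoke a standard local-limit bound of the form $\dP(\widetilde{\cN}^+ = \widetilde{\cN}^-) \leq C n^{-1/2} \exp\bigl(-(n-1)(\widetilde{S}^+_{x,z}-\widetilde{S}^-_{x,z})^2 / (2\widetilde{S}^+_{x,z}+2\widetilde{S}^-_{x,z})\bigr)$. Since $\widetilde{S}^+_{x,z}-\widetilde{S}^-_{x,z} = 1 - 2F(x) + o(1)$ as $\mathbf{s} \to 0$, the exponential factor concentrates the integral on a Gaussian band of width $O(1/\sqrt{n})$ around $x=0$ (outside of which the decay is at least $e^{-c n x^2}$), yielding a total contribution of order $O(1/n)$. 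Dominated convergence then concludes that $\dE[\ov(\Pi_-, \Pi^\star)] \to 0$, whence the claimed $L^1$ convergence since $\ov \in [0,1]$.
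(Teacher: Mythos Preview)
Your reduction to the toy model with the ``reversed'' quadrant probabilities $\widetilde S^\pm$ is exactly the paper's route. However, the obstacle you flag in the regime $\mathbf s\to 0$ is illusory, and the local-limit detour is unnecessary. Lemma~\ref{fibo} is an exact algebraic identity valid for \emph{all} $\alpha>0$; carrying out the same simplification as in the proof of Proposition~\ref{EIG1:prop:zero_one_toy}$(ii)$ with $\widetilde S^\pm$ in place of $S^\pm$ yields
\[
\sum_{k}\binom{n-1-k}{k}(\widetilde S^-)^k\,\frac{(1-\widetilde S^+-\widetilde S^-)^{n-1-2k}}{(1-\widetilde S^+)^{n-1-k}}
\;=\;\frac{1-\widetilde S^+}{\,1+\widetilde S^--\widetilde S^+\,}\left[1-\Bigl(\frac{-\widetilde S^-}{1-\widetilde S^+}\Bigr)^{n}\right]\;\le\;2,
\]
since $\widetilde S^++\widetilde S^-\le 1$ always. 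The large $\alpha$ is exactly cancelled by the small prefactor $(1-\widetilde S^+-\widetilde S^-)^{n-1}$, and nothing blows up in the final bound.

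The paper's proof is therefore the observation you already made but did not exploit: for each fixed $(x,z)$, both $\widetilde S^+$ and $\widetilde S^-$ lie in a fixed compact subinterval of $(0,1)$ uniformly in $\mathbf s$ (they are continuous in $s$ on $(0,\infty)$ with the nondegenerate limits you computed at $0$ and $\infty$). Hence $M(n,\mathbf s)\sim\bigl(2\pi n\,\widetilde S^+(1-\widetilde S^+)\bigr)^{-1/2}\to 0$ and the remaining factor is $O(1)$, so $\widetilde\phi_{x,z}(n,\mathbf s)\to 0$ pointwise and dominated convergence finishes --- exactly as in case~$(ii)$ of Proposition~\ref{EIG1:prop:zero_one_toy}, with no case split on $\mathbf s$.
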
 Lemma \ref{Pi+caseii} then gives
\begin{equation*}
\mathbb{E}\left[\ov(\hat{\Pi},\Pi^{\star})\right] \leq \mathbb{E}\left[\ov(\Pi_{+},\Pi^{\star})\right] + \mathbb{E}\left[\ov(\Pi_{-},\Pi^{\star})\right] \underset{n \to \infty}{\longrightarrow} 0,
\end{equation*}and thus 
\begin{equation}\label{eq:conv_ii}
\ov(\hat{\Pi},\Pi^{\star}) \underset{n \to \infty}{\overset{L^1}{\longrightarrow}} 0.
\end{equation} 

Of course, the convergences in \eqref{eq:conv_i} and \eqref{eq:conv_ii} also hold in probability, by Markov's inequality.
\end{proof}

\begin{subappendices}
\section{Additional proofs for Section \ref{linkGOEtoy}}
\addtocontents{toc}{\protect\setcounter{tocdepth}{0}}
\label{section3_add_proofs}
Throughout the proofs, all variables denoted by $C_i$ with $i=1,2,\ldots$ are unspecified, independent, positive constants.

\subsection{Proof of Proposition \ref{EIG1:prop:w'expansion}}
\label{appendix_proof_prop_w'}
\begin{proof}[Proof of Proposition \ref{EIG1:prop:w'expansion}]
	Let us establish a first inequality: since the $\GOE$ distribution is invariant by rotation (see e.g. \cite{Anderson09}), the random variables $\langle H v_j,v_i \rangle$ are zero-mean Gaussian, with variance $1/n$ of $i \neq j$ and $2/n$ if $i=j$. Hence, w.h.p.
	\begin{equation}
	\label{ineq:encadrementgaussiennes}
	\underset{1 \leq i,j \leq n}{\sup} \left|\langle H v_j,v_i \rangle \right|\leq C_1 \sqrt{\frac{\log n}{n}}.
	\end{equation} We will use the following short-hand notation for $1 \leq i,j \leq n$:
	\begin{equation*}
	m_{i,j} := \langle H v_j,v_i \rangle,
	\end{equation*} 
	The defining eigenvector equations projected on vectors $v_i$ write
	\begin{equation}
	\left \{
	\begin{array}{c @{=} c}
	\theta_i & \dfrac{\xi}{\lambda'_1 - \lambda_i} \sum_{j=1}^{n} \theta_j m_{i,j}, \\
	\lambda'_1 - \lambda_1 & \xi  \sum_{j=1}^{n} \theta_j m_{1,j}. \\
	\end{array}
	\right.
	\end{equation}
	In order to approximate the $\theta_i$ variables, we define the following iterative scheme:
	\begin{equation}
	\label{schemadepicard}
	\left \{
	\begin{array}{c @{\; = } c}
	\theta_i^k & \dfrac{\xi}{\lambda_1^{k-1} - \lambda_i} \sum_{j=1}^{n} \theta_j^{k-1} m_{i,j} ,\\
	\lambda^{k}_1 - \lambda_1 & \xi  \sum_{j=1}^{n} \theta_j^{k-1} m_{1,j} ,
	\end{array}
	\right.
	\end{equation} with initial conditions $\left(\theta_i^{0}\right)_{2 \leq i \leq n}=0$ and $\lambda_{1}^0 = \lambda_1$, and setting $\theta_1^{k}=1$ for all $k$. For $k \geq 1$, define
	\begin{equation*}
	\Delta_k := \sum_{i \geq 2} \left|\theta_i^k - \theta_i^{k-1}\right|,
	\end{equation*} 
	and for $k \geq 0$,
	\begin{equation*}
	S_k := \sum_{i \geq 1} \left|\theta_i^k \right|.
	\end{equation*} Recall that under assumption (\ref{microscopicregime}), there exists $\alpha>0$ such that $\xi = o \left(n^{-1/2-\alpha}\right)$. We define $\epsilon$ as follows: 
	\begin{equation*}
	\epsilon = \epsilon(n) = \sqrt{\xi n^{1/2+\alpha}}.
	\end{equation*} The idea is to show that the sequence $\left\{\Delta_k \right\}_{k \geq 1}$ decreases geometrically with $k$ at rate $\epsilon$. More specifically, we show the following result:
	\begin{lemma}
		\label{propagation_picard}
		With the same notations and under the assumption (\ref{microscopicregime}) of Proposition \ref{EIG1:prop:w'expansion}, one has w.h.p.
	\begin{itemize}
		\item[$(i)$] $\forall k \geq 1, \; \Delta_k \leq \Delta_1 \epsilon^{k-1}$, \\
		\item[$(ii)$] $\forall k \geq 0, \forall \, 2 \leq i \leq n, \; \left|\lambda_{1}^k - \lambda_i\right| \geq \frac{1}{2} \left|\lambda_{1} - \lambda_i\right| \left(1- \epsilon - \ldots - \epsilon^{k-1}\right)$, \\
		\item[$(iii)$] $\forall k \geq 0, \; S_k \leq 1 +(1+\ldots+\epsilon^{k-1}) \Delta_1$,\\
		\item[$(iv)$]  $\sum_{i=2}^{n} \left| \theta_i - \theta_i^{1} \right|^2 = o\left(\sum_{i=2}^{n} \left| \theta_i^{1} \right|^2 \right)$.
	\end{itemize}
	\end{lemma}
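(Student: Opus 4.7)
The plan is to establish (i), (ii) and (iii) jointly by induction on $k$, and then to derive (iv) by passing to the limit of the Picard iteration in $\ell^2$. The entire argument is carried out on the intersection of the high-probability event on which \eqref{ineq:encadrementgaussiennes} holds, and the event on which the eigenvalue-gap sums $\sum_{i \geq 2}|\lambda_1 - \lambda_i|^{-p}$ (for $p=1,2$) concentrate at the orders of magnitude dictated by the semicircle law and edge rigidity -- essentially the random matrix inputs already behind Lemma \ref{lemma_concentrationksi}. On this event, $\sum_{i \geq 2}|\lambda_1-\lambda_i|^{-1} = O(n \log n)$ and $\sum_{i \geq 2}|\lambda_1-\lambda_i|^{-2} \asymp n^{4/3}$.

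The base cases are free: (i) at $k=1$ is vacuous, (ii) at $k=0$ reduces to $|\lambda_1 - \lambda_i| \geq \tfrac12|\lambda_1 - \lambda_i|$ and (iii) at $k=0$ is immediate since $\theta^0 = (1,0,\ldots,0)$. For the inductive step, the key identity is the splitting
\begin{equation*}
\theta_i^{k+1} - \theta_i^k \;=\; \frac{\xi}{\lambda_1^k - \lambda_i}\sum_j (\theta_j^k - \theta_j^{k-1})\,m_{i,j} \;+\; \xi\,\frac{\lambda_1^{k-1} - \lambda_1^k}{(\lambda_1^k - \lambda_i)(\lambda_1^{k-1} - \lambda_i)}\sum_j \theta_j^{k-1}\,m_{i,j},
\end{equation*}
obtained by isolating the contributions coming from the change in numerator and in denominator. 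Using \eqref{ineq:encadrementgaussiennes} to control the $m_{i,j}$, (ii) at rank $k$ to replace $|\lambda_1^\ell - \lambda_i|^{-1}$ by $2|\lambda_1 - \lambda_i|^{-1}$, and (iii) for $S_{k-1}$, summing over $i \geq 2$ yields $\Delta_{k+1} \leq \epsilon\,\Delta_k$: the factor $\epsilon$ comes from multiplying $\xi \sqrt{\log n /n}$ by the $\ell^1$ gap sum $O(n\log n)$, and under \eqref{microscopicregime} the resulting quantity is $o(1)$ and in particular bounded by $\sqrt{\xi n^{1/2+\alpha}}$. This gives (i) at step $k+1$. Point (ii) at step $k+1$ follows from $|\lambda_1^{k+1} - \lambda_1^k| \leq C_1\,\xi\sqrt{\log n/n}\,\Delta_k \leq C_1\,\xi\sqrt{\log n/n}\,\Delta_1\,\epsilon^{k-1}$ (since $\theta_1^k \equiv 1$, so only $\Delta_k$ appears) and telescoping, checking that the total drift $|\lambda_1^k - \lambda_1|$ stays $o(n^{-1/6})$ and hence negligible compared to the worst-case edge gap. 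Point (iii) at step $k+1$ is immediate from $S_{k+1} \leq S_k + \Delta_{k+1}$ and (i).

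For (iv), since $\epsilon = o(1)$, (i) shows that $\{\theta^k\}$ is Cauchy in $\ell^1$; passing to the limit in \eqref{schemadepicard} shows that the limit solves the defining eigenvector equations, so by uniqueness under the normalization $\theta_1 = 1$ it coincides with $\theta$ itself. To upgrade from $\ell^1$ to $\ell^2$, I would redo the same contraction estimate in $\ell^2$: applying Cauchy-Schwarz to the splitting above and using the $\ell^2$ eigenvalue-gap sum $\sum_{i \geq 2}|\lambda_1-\lambda_i|^{-2} \asymp n^{4/3}$, each iterate contracts in $\ell^2$ by a factor of order $\xi\sqrt{\log n/n}\cdot n^{2/3} = \xi n^{1/6}\sqrt{\log n} = o(1)$ under \eqref{microscopicregime}. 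Telescoping the $\ell^2$ increments from $k = 2$ onward yields $\|\theta - \theta^1\|_2 = o(\|\theta^1\|_2)$, which is exactly (iv).

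The main obstacle is the bookkeeping in the inductive step: each of the two contributions to $\theta_i^{k+1}-\theta_i^k$ must be bounded by $\epsilon\,\Delta_k$, and this requires matching the uniform tail bound $|m_{i,j}| = O(\sqrt{\log n/n})$ against the appropriate moment of the eigenvalue gaps with just enough precision. A cruder use of Cauchy-Schwarz instead of the $\ell^\infty$ bound on $m_{i,j}$, or a cruder bound on $\sum_i|\lambda_1-\lambda_i|^{-1}$, does not produce a contraction. A related subtlety is making sure the accumulated shift $|\lambda_1^k - \lambda_1|$ never closes the gap between $\lambda_1^k$ and $\lambda_2$, which has typical size $n^{-1/6}$ only; this is the precise point where \eqref{microscopicregime} (rather than merely $\xi \to 0$) is genuinely needed.
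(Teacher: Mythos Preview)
Your inductive scheme for (i)--(iii) is exactly the paper's: same splitting of $\theta_i^{k+1}-\theta_i^k$ into a numerator-change and a denominator-change term, same use of the $\ell^\infty$ bound on $m_{i,j}$ against the $\ell^1$ eigenvalue-gap sum, same telescoping for (ii) and (iii).

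For (iv), your idea is right but the phrasing ``each iterate contracts in $\ell^2$ by a factor $\xi\sqrt{\log n/n}\cdot n^{2/3}$'' is not quite what the computation gives. That factor comes from the $\ell^\infty$ bound $|m_{i,j}|\le C_1\sqrt{\log n/n}$, which multiplies the $\ell^1$ norm $\Delta_k$ of the previous increment, not its $\ell^2$ norm: you obtain $\|\theta^{k+1}-\theta^k\|_2 \lesssim \xi n^{1/6}\sqrt{\log n}\,\Delta_k$, a hybrid $\ell^1\!\to\!\ell^2$ bound. A genuine $\ell^2\!\to\!\ell^2$ contraction (via Cauchy--Schwarz on $\sum_j(\theta_j^k-\theta_j^{k-1})m_{i,j}$, using $\|m_{i,\cdot}\|_2\asymp 1$) would give factor $\xi\,n^{2/3}$, which is \emph{not} $o(1)$ under \eqref{microscopicregime}. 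This is not fatal: feeding in $\Delta_k\le \epsilon^{k-1}\Delta_1$ from (i) and summing the resulting geometric series in $k$ does yield $\|\theta-\theta^1\|_2 \lesssim \xi n^{1/6}\sqrt{\log n}\,\Delta_1$, and comparing with $\|\theta^1\|_2\asymp \xi n^{1/6}$ and $\Delta_1\lesssim \xi n^{1/2+\delta}$ gives (iv). The paper carries out the equivalent computation by keeping the pointwise structure $|\theta_i^k-\theta_i^{k-1}|\lesssim \tfrac{\xi n^{-1/2}}{\lambda_1-\lambda_i}\,\epsilon^{k-2}\Delta_1$, summing over $k$ first for each fixed $i$, and only then squaring and summing over $i$ against $\sum_i(\lambda_1-\lambda_i)^{-2}\asymp n^{4/3}$.
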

	This Lemma is proved in the next section. Equation $(iv)$ of Lemma \ref{propagation_picard} yields
	\begin{flalign*}
	w' & = v_1 + \sum_{i=2}^{n} \theta_i^1 v_i + \sum_{i=2}^{n} \left(\theta_i - \theta_i^1\right) v_i\\
	& = v_1 + \xi \sum_{i=2}^{n} \frac{\langle Hv_i,v_1 \rangle }{\lambda_1 - \lambda_i} v_i + o_{\mathbb{P}}\left(\xi \sum_{i=2}^{n} \frac{\langle Hv_i,v_1 \rangle }{\lambda_1 - \lambda_i} v_i  \right). 
	\end{flalign*}
\end{proof}

\subsection{Proof of Lemma \ref{propagation_picard}}
\begin{proof}[Proof of Lemma \ref{propagation_picard}]
In this proof we will use the same notations as defined in the proof of Proposition \ref{EIG1:prop:w'expansion}, and we make the assumption $(\ref{microscopicregime})$. We now state three technical lemmas controlling some statistics of eigenvalues in the $\GOE$ which are useful hereafter.

\begin{lemma}
	\label{lemma_sommevp1}
	W.h.p., for all $\delta>0$,
	\begin{equation}
	\label{sommevp1eq}
	\sum_{j = 2}^{n} \frac{1}{\lambda_1 - \lambda_j} \leq O\left( n ^{1+\delta}\right).
	\end{equation}
\end{lemma}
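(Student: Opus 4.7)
The plan is to exploit the precise asymptotic behavior of $\GOE$ eigenvalues near the spectral edge. The empirical spectral distribution of $A$ converges to the semicircle law on $[-2,2]$, and the $j$-th largest eigenvalue concentrates around its classical location $\gamma_j$, which satisfies $2 - \gamma_j \asymp (j/n)^{2/3}$ for $j$ small compared to $n$. I would invoke the sharp rigidity estimates for $\GOE$ eigenvalues (due to Erd\H{o}s--Schlein--Yau and subsequent refinements), which guarantee that, with high probability, $|\lambda_j - \gamma_j| \leq n^{-2/3 + \delta/3} \cdot \min(j, n-j+1)^{-1/3}$ uniformly in $j \in [n]$.

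Given such control, I would split the sum at the threshold $j_0 := \lceil n^{\delta/2} \rceil$. For the bulk contribution $j > j_0$, rigidity gives $\lambda_1 - \lambda_j \geq c(j/n)^{2/3}$ for some constant $c>0$, hence
\[ \sum_{j=j_0+1}^{n} \frac{1}{\lambda_1 - \lambda_j} \;\leq\; \frac{1}{c}\sum_{j=j_0+1}^{n} (n/j)^{2/3} \;=\; O(n), \]
by comparison with $\int_1^n x^{-2/3}\,dx = O(n^{1/3})$. This already produces a bound of the desired order for the bulk part, with no polynomial loss.

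For the edge contribution $2 \leq j \leq j_0$, the individual gaps $\lambda_1 - \lambda_j$ may be atypically small and must be handled separately. I would appeal to level-repulsion at the soft edge of the $\GOE$ -- a quantitative small-ball estimate for Tracy--Widom-scale gaps between top eigenvalues -- yielding $\lambda_1 - \lambda_j \geq n^{-2/3 - \delta/2}$ uniformly over $2 \leq j \leq j_0$, with high probability. Each such term is thus bounded by $n^{2/3 + \delta/2}$, and summing over the at most $n^{\delta/2}$ edge indices contributes $O(n^{2/3 + \delta})$. Combining the two regimes yields the stated bound $O(n^{1 + \delta})$.

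The main obstacle lies in handling the potentially small eigenvalue gaps at the top of the spectrum: bypassing a sharp level-repulsion estimate, one could alternatively argue via a crude union bound over a mesh of possible values of $\lambda_1 - \lambda_j$ combined with the (well-controlled) joint density of the first few ordered eigenvalues of the $\GOE$, at the cost of an additional polynomial slack which is harmlessly absorbed into the $n^\delta$ factor allowed by the statement.
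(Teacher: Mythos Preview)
Your proposal is correct and uses the same core ingredients as the paper: rigidity of eigenvalues for the bulk contribution, and a lower bound on the edge gap $\lambda_1-\lambda_2$ (the paper's Lemma~\ref{lemma_controletrousp}, obtained exactly as in your ``alternative'' paragraph via the limiting Tracy--Widom-type joint density) for the few top indices. The paper splits into three regions $\{2,\ldots,n^{1/3}\}$, $\{n^{1/3},\ldots,n^{1-\delta}\}$, $\{n^{1-\delta},\ldots,n\}$ and handles the last one via Cauchy--Schwarz together with its companion estimate $\sum_j (\lambda_1-\lambda_j)^{-2}\asymp n^{4/3}$, whereas your two-region split at $j_0=n^{\delta/2}$ treats the whole range $j>j_0$ directly from rigidity; this is a slightly more self-contained variant of the same argument.
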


\begin{lemma}
	\label{lemma_sommevp2}
	We have
	\begin{equation}
	\label{sommevp2eq}
	\; \sum_{j = 2}^{n} \frac{1}{\left(\lambda_1 - \lambda_j \right)^2} \asymp n^{4/3}.
	\end{equation} 
\end{lemma}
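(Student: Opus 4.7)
The plan is to exploit the well-understood edge behaviour of the GOE spectrum: the dominant contribution to $\sum_{j=2}^n (\lambda_1 - \lambda_j)^{-2}$ comes from small values of $j$, where the top eigenvalue spacings are of order $n^{-2/3}$. Both the upper and lower bounds will be established up to arbitrary sub-polynomial factors, which is exactly what the notation $\asymp$ requires.

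First, I would recall the classical locations $\gamma_j$ of the eigenvalues, defined by $\int_{\gamma_j}^2 \rho_{\mathrm{sc}}(x)\,dx = (j-1/2)/n$ with $\rho_{\mathrm{sc}}(x) = \frac{1}{2\pi}\sqrt{4-x^2}$. A direct computation gives $2-\gamma_j \asymp (j/n)^{2/3}$ for $j \leq n/2$, and $2-\gamma_j$ is of constant order otherwise. The key input is the edge rigidity theorem for GOE (Erd\H{o}s--Yau--Yin, see \cite{Erdos12}), which states that for every $\delta>0$, with high probability $|\lambda_j - \gamma_j| \leq n^{-2/3+\delta}(j \wedge (n+1-j))^{-1/3}$ uniformly in $j$. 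Combining with the asymptotics for $\gamma_j$, this yields, with high probability, $\lambda_1 - \lambda_j \geq c\,j^{2/3} n^{-2/3}$ for $n^{2\delta} \leq j \leq n/2$ and $\lambda_1 - \lambda_j \geq c'>0$ for $j > n/2$.

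For the upper bound, I would split the sum into three ranges. The bulk range $j > n/2$ contributes $O(n) = o(n^{4/3})$. The edge-to-bulk range $n^{2\delta} \leq j \leq n/2$ gives, using the above bound,
\begin{equation*}
\sum_{n^{2\delta} \leq j \leq n/2} \frac{1}{(\lambda_1 - \lambda_j)^2} \leq C\,n^{4/3}\sum_{j \geq 2} \frac{1}{j^{4/3}} = O(n^{4/3}),
\end{equation*}
since $\sum_j j^{-4/3}$ is convergent. The very-small-$j$ range $2 \leq j < n^{2\delta}$ contains only sub-polynomially many terms; here I would invoke the convergence of $(n^{2/3}(2-\lambda_j))_{j \geq 1}$ to the Airy point process (which, up to sub-polynomial factors, gives $\lambda_1 - \lambda_j \gtrsim j^{2/3} n^{-2/3}$ uniformly in this range, w.h.p.), so the partial sum is bounded by $n^{4/3+O(\delta)}$. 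Taking $\delta \to 0$ yields $\sum \leq n^{4/3} f(n)$ for some $f \in \mathcal{F}$.

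For the lower bound, I would simply keep the $j=2$ term: $\sum_{j=2}^{n}(\lambda_1 - \lambda_j)^{-2} \geq (\lambda_1 - \lambda_2)^{-2}$. The GOE edge Tracy--Widom statistics give that $n^{2/3}(\lambda_1 - \lambda_2)$ converges in distribution to a tight positive random variable, hence $\lambda_1 - \lambda_2 \leq n^{-2/3+\delta}$ with high probability for any $\delta>0$, which implies $(\lambda_1 - \lambda_2)^{-2} \geq n^{4/3-2\delta}$ w.h.p. Again letting $\delta \to 0$ absorbs the error into a sub-polynomial factor.

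The main obstacle is the very-small-$j$ range in the upper bound. For bounded $j$, the fluctuations of $\lambda_1$ and $\lambda_j$ around their classical locations are both of order $n^{-2/3}$, the \emph{same} scale as $\gamma_1 - \gamma_j$ itself, so rigidity alone cannot produce a useful lower bound on $\lambda_1 - \lambda_j$. This is where the finer Airy point process convergence at the edge is indispensable, and ensures that each individual top gap $\lambda_1 - \lambda_j$ does not collapse faster than $n^{-2/3}$ up to sub-polynomial factors.
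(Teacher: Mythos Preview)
Your proposal is correct and follows essentially the same strategy as the paper: eigenvalue rigidity for the mid-range, edge statistics for the extreme top, and a three-way split of the sum. One minor remark: the paper places its small-$j$ cutoff at a polylogarithmic threshold rather than $n^{2\delta}$, and then simply bounds every term in that range by $(\lambda_1-\lambda_2)^{-2}$ times the (sub-polynomial) number of terms---this sidesteps your appeal to Airy asymptotics, which strictly speaking give only fixed-$j$ convergence and not uniformity up to $j=n^{2\delta}$.
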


\begin{lemma}
	\label{lemma_controletrousp}
	For any $C>0$, w.h.p.
	\begin{equation}
	\label{controletrouspeq}
	\lambda_1 - \lambda_2 \geq n^{-2/3} \left(\log n\right)^{-C \log \log n}.
	\end{equation}
\end{lemma}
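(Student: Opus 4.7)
The target inequality is a lower-tail estimate for the spectral gap at the top of the GOE spectrum, at the natural Tracy--Widom scale $n^{-2/3}$, with a very slowly decaying (doubly logarithmic) correction. The natural tool is level repulsion.

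I would first rescale to the edge: by Tracy--Widom edge universality for the GOE, the pair $(n^{2/3}(\lambda_1-2), n^{2/3}(\lambda_2-2))$ converges in distribution to the first two points of the Airy$_1$ process, so the rescaled gap $W_n := n^{2/3}(\lambda_1 - \lambda_2)$ converges to a non-degenerate continuous limit $W_\infty$ supported on $(0,\infty)$. The lemma is then equivalent to showing $\dP(W_n \leq \eta_n) \to 0$ for the threshold $\eta_n := (\log n)^{-C \log \log n}$, which tends to zero slowly as $n \to \infty$.

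The crux is to establish a level-repulsion estimate uniform in $n$. Starting from the joint density of the GOE eigenvalues, which is proportional to
$$\prod_{i<j}|\mu_i-\mu_j| \, \exp\Bigl(-\tfrac{n}{4}\sum_i \mu_i^2\Bigr),$$
the Vandermonde factor $|\mu_1 - \mu_2|$ suppresses small gaps. Performing the change of variables $(\mu_1,\mu_2) \mapsto (u = \mu_1 - \mu_2,\ v = (\mu_1+\mu_2)/2)$ isolates this level-repulsion factor; integrating $u$ over $(0, \eta n^{-2/3})$ yields a factor of order $\eta^2 n^{-4/3}$, and controlling the marginal density of $v$ in a Tracy--Widom window (which is $O(n^{2/3})$ over a window of size $O(n^{-2/3})$) leads to the uniform estimate
$$\dP\bigl(W_n \leq \eta\bigr) \leq C\, \eta^2, \qquad \eta \in (0,1],\ n \geq n_0,$$
for an absolute constant $C$. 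Applying this with $\eta = \eta_n$ gives $\dP(W_n \leq \eta_n) \leq C (\log n)^{-2C \log \log n} = o(1)$, which is the claim.

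The main obstacle is to make the level-repulsion bound genuinely uniform in $n$ and valid for $\eta$ going to zero with $n$. Pointwise weak convergence of $W_n$ to $W_\infty$ only controls $\dP(W_n \leq \eta)$ for fixed $\eta>0$; the regime $\eta = \eta_n \to 0$ requires either carrying out the density computation above in a quantitative fashion (using edge asymptotics of the Hermite orthogonal polynomials to control the interaction with the remaining $n-2$ eigenvalues) or invoking a quantitative edge universality result with an explicit convergence rate. The doubly-logarithmic correction $(\log n)^{C \log \log n}$ appearing in the bound is in fact the characteristic error factor arising in state-of-the-art eigenvalue rigidity results for the GOE, which strongly suggests that the proof can be organised by combining such rigidity estimates with a careful accounting of the local edge statistics.
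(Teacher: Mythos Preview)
Your approach via quantitative level repulsion would work if carried out, but it is substantially harder than necessary, and the ``main obstacle'' you identify is not actually an obstacle.

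The paper's proof is a three-line argument. It cites the Forrester/Tracy--Widom result that $(n^{2/3}(\lambda_1-2), n^{2/3}(\lambda_2-2))$ converges in law to a continuous limit, so $W_n := n^{2/3}(\lambda_1-\lambda_2)$ converges in distribution to some $W_\infty$ with a density on $(0,\infty)$. From this alone, $\dP(W_n \leq \eta_n) \to 0$ for \emph{any} sequence $\eta_n \to 0$, and the paper explicitly remarks that the choice $(\log n)^{-C\log\log n}$ is arbitrary.

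Your claim that ``pointwise weak convergence of $W_n$ to $W_\infty$ only controls $\dP(W_n \leq \eta)$ for fixed $\eta>0$'' is the misconception. Weak convergence to a limit with no atom at $0$ is enough: given $\eps>0$, pick $\eta_0>0$ with $\dP(W_\infty \leq \eta_0) < \eps$; since $\eta_0$ is a continuity point, $\dP(W_n \leq \eta_0) < \eps$ for large $n$; and since $\eta_n \to 0$, eventually $\eta_n < \eta_0$, whence $\dP(W_n \leq \eta_n) \leq \dP(W_n \leq \eta_0) < \eps$. No uniformity, no level repulsion, no rigidity is needed.

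Your reading of the doubly-logarithmic factor as a signature of rigidity estimates is therefore a red herring here: that factor appears in the \emph{statement} only because it matches the error scale used elsewhere in the chapter (Lemma~\ref{lemma_sommevp2}), not because the proof of this particular lemma requires it.
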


Proofs of these three Lemmas can be found in the next sections. We will work under the event (that occurs w.h.p.) on which the equations (\ref{sommevp1eq}), (\ref{sommevp2eq}), (\ref{controletrouspeq}), (\ref{lemma_concentrationksieq}) and (\ref{ineq:encadrementgaussiennes}) are satisfied. We show the following inequalities:
	
	\begin{itemize}
		\item[$(i)$] $\forall k \geq 1, \; \Delta_k \leq \Delta_1 \epsilon^{k-1}$, \\
		\item[$(ii)$] $\forall k \geq 0, \forall \, 2 \leq i \leq n, \; \left|\lambda_{1}^k - \lambda_i\right| \geq \frac{1}{2} \left|\lambda_{1} - \lambda_i\right| \left(1- \epsilon - \ldots - \epsilon^{k-1}\right)$, \\
		\item[$(iii)$] $\forall k \geq 0, \; S_k \leq 1 + (1+\ldots+\epsilon^{k-1}) \Delta_1$.\\
	\end{itemize}
	
	Recall that $\epsilon$ is given by
	\begin{equation*}
	\epsilon = \epsilon(n) = \sqrt{\xi n^{1/2+\alpha}}.
	\end{equation*}
	
	We will denote by $f_{i}(n)$, with $i$ an integer, functions as defined in Lemma \ref{lemma_sommevp2}. All the following inequality will be valid for $n$ large enough (uniformly in $i$ and in $k$).\\
	
	\proofstep{Step 1: propagation of the first equation.}
	Let $k \geq 3$. We work by induction, assuming that $(i)$, $(ii)$ and $(iii)$ are verified until $k-1$. 
	\begin{flalign*}
	\left| \theta_i^k - \theta_i^{ k-1} \right| & \leq \left| \frac{\xi}{\lambda_1^{k-1}-\lambda_i} \sum_{j=2}^{n} \left(\theta_j^{k-1} - \theta_j^{k-2}\right)m_{i,j} \right|  + \left| \frac{\xi \left(\lambda_1^{k-2}-\lambda_1^{k-1}\right)}{\left(\lambda_1^{k-1}-\lambda_i\right)\left(\lambda_1^{k-2}-\lambda_i\right)} \sum_{j=1}^{n} \theta_j^{k-2} m_{i,j} \right|\\
	& \leq \frac{ \xi}{\left|\lambda_1^{k-1}-\lambda_i\right|} C_1 \sqrt{\frac{\log n}{n}} \Delta_{k-1} + \xi C_1 \sqrt{\frac{\log n}{n}} S_{k-2} \frac{\left|\lambda_1^{k-2}-\lambda_1^{k-1}\right|}{\left|\lambda_1^{k-1}-\lambda_i\right|\left|\lambda_1^{k-2}-\lambda_i\right|} \\ 
	& \overset{(a)}{\leq}  \xi \frac{3}{\left|\lambda_1-\lambda_i\right|} C_1 \sqrt{\frac{\log n}{n}} \Delta_{k-1} + \xi C_1 \sqrt{\frac{\log n}{n}} S_{k-2} \frac{9 \left|\lambda_1^{k-2}-\lambda_1^{k-1}\right|}{\left|\lambda_1-\lambda_i\right|^2} \\ 
	& \overset{(b)}{\leq}  \xi \frac{3}{\left|\lambda_1-\lambda_i\right|} C_1 \sqrt{\frac{\log n}{n}} \Delta_{k-1} + \xi C_1 \sqrt{\frac{\log n}{n}} 2 \frac{9 \left|\lambda_1^{k-2}-\lambda_1^{k-1}\right|}{\left|\lambda_1-\lambda_i\right|^2} .
	\end{flalign*} We applied $(ii)$ to $k-1, k-2$ in $(a)$ and $(iii)$ to $k-2$ in $(b)$.
	Note that
	\begin{align*}
	\left|\lambda_1^{k-2}-\lambda_1^{k-1}\right| & = \left|\xi \sum_{j=1}^{n} \left(\theta_j^{k-2}-\theta_j^{k-3}\right) m_{i,j}\right| \leq \xi C_1 \sqrt{\frac{\log n}{n}} \Delta_{k-2},
	\end{align*} which yields the inequality:
	\begin{align*}
	\left| \theta_i^k - \theta_i^{ k-1} \right| & \leq  \frac{\xi}{\left|\lambda_1-\lambda_i\right|} f_1(n) n^{-1/2} \Delta_{k-1} + \frac{\xi^2 }{\left|\lambda_1-\lambda_i\right|^2} f_2(n) n^{-1} \Delta_{k-2}.
	\end{align*}
 	We choose $\delta $ such that $0 < \delta < \alpha$ (where $\alpha$ is fixed by $(\ref{microscopicregime})$), and we sum from $i=2$ to $n$:
	\begin{align*}
	\Delta_k & \leq \xi f_1(n) n^{1/2 + \delta} \Delta_{k-1} + \xi^2 f_3(n) n^{1/3} \Delta_{k-2} \\
	& \overset{(a)}{\leq} o(\epsilon) \epsilon^{k-2} \Delta_1 + o(\epsilon^2) \epsilon^{k-3} \Delta_1 \\
	& \leq  \epsilon^{k-1} \Delta_1.
	\end{align*} We used $\xi f_1(n) n^{1/2 + \delta} =o(\epsilon)$, $\xi^2 f_3(n) n^{1/3} = o(\epsilon^2)$ and we applied $(i)$ to $k-1$ and $k-2$ in $(a)$.\\
	
	\proofstep{Step 2: propagation of the second equation.}	
	Let $k \geq 2$, and $0 < \delta < \alpha$. We work by induction, assuming that $(i)$, $(ii)$ and $(iii)$ are verified until $k-1$. 
	\begin{align*}
	\left| \lambda_1^k - \lambda_1^{ k-1} \right| & \leq \xi f_1(n) n^{-1/2} \Delta_{k-1} \\ 
	& \overset{(a)}{\leq} \xi f_1(n) n^{-1/2} {\epsilon^{k-2} \Delta_1}\\
	& \leq n^{-2/3} (\log n)^{-C \log \log n} \epsilon^{k-2} \Delta_1\\
	& \leq \frac{\lambda_1-\lambda_2}{2} \epsilon^{k-2} \Delta_1\\
	& \leq \frac{\lambda_1-\lambda_i}{2} \epsilon^{k-2} \Delta_1.
	\end{align*} We applied $(i)$ to $k-1$ in $(a)$. Note that
	\begin{align*}
	\Delta_1 & = \sum_{j=2}^{n} \frac{\xi}{\lambda_1 - \lambda_i} \left| m_{i,1} \right| \leq \xi f_1(n) n^{1/2+\delta} \leq o(\epsilon).
	\end{align*}
	Applying $(ii)$ to $k-1$, we get
	\begin{align*}
	\left| \lambda_1^k - \lambda_i \right| & \geq \left| \lambda_1 - \lambda_1^{k-1} \right| - \left| \lambda_1^k - \lambda_1^{k-1} \right| \\ 
	& \geq \frac{\lambda_1-\lambda_i}{2} \left(1-\epsilon-\ldots-\epsilon^{k-2}\right) - \frac{\lambda_1-\lambda_i}{2} \epsilon^{k-1}\\
	& \geq \frac{\lambda_1-\lambda_i}{2} \left(1-\epsilon-\ldots-\epsilon^{k-1}\right).
	\end{align*}
	
	\proofstep{Step 3: propagation of the third equation.} Let $k \geq 1$. Here again, we work by induction, assuming that $(i)$, $(ii)$ and $(iii)$ are verified until $k-1$. 
	\begin{align*}
	S_k &= 1 + \sum_{j=2}^{n} \left|\theta^k_j\right|\\
	& \leq 1 + \Delta_k + S_{k-1} - 1\\
	& \overset{(a)}{\leq} \epsilon^{k-1} \Delta_1 + 1 + \left(1+\ldots+\epsilon^{k-2}\right)\Delta_1\\
	& \leq 1 + \left(1+\epsilon+\ldots+\epsilon^{k-1}\right) \Delta_1.
	\end{align*} We applied $(i)$ to $k$ and $(iii)$ to $k-1$ in $(a)$.
	
	\proofstep{Step 4: Proof of $(i)$ for $k=1,2$, $(ii)$ for $k=0,1$ and $(iii)$ for $k=0,1$.}
	The equation $(i)$ for $k=1$ is obvious. For $k=2$ :
	\begin{align*}
	\left| \theta_i^2 - \theta_i^1 \right| & \leq \left| \frac{\xi}{\lambda_1^1-\lambda_i} \sum_{j=2}^{n} \left(\theta_j^{1} - \theta_j^{0}\right)m_{i,j} \right|  + \left| \frac{\xi \left(\lambda_1^{0}-\lambda_1^{1}\right)}{\left(\lambda_1^{1}-\lambda_i\right)\left(\lambda_1^{0}-\lambda_i\right)} \sum_{j=1}^{n} \theta_j^{0} m_{i,j} \right|.
	\end{align*}
	We have
	\begin{align*}
	\left| \lambda_1^1 - \lambda_i \right| & \geq \left| \lambda_1 - \lambda_i \right| - \left| \lambda_1 - \lambda_1^1 \right| 
	 \geq \left| \lambda_1 - \lambda_i \right| - \xi \left|m_{1,1}\right|\\
	& \geq \left| \lambda_1 - \lambda_i \right| - \frac{1}{2} \left| \lambda_1 - \lambda_2 \right|
	\geq \frac{1}{2} \left| \lambda_1 - \lambda_i \right|,
	\end{align*}
	which shows $(ii)$ for $k=0,1$. Thus, for $0<\delta < \alpha$:
	\begin{align*}
	\left| \theta_i^2 - \theta_i^1 \right| &\leq \frac{2 \xi}{\lambda_1-\lambda_i} C_1 \sqrt{\frac{\log n}{n}} \Delta_1 + \frac{4 \xi}{\left(\lambda_{1}-\lambda_i\right)^2 } \xi \left|m_{1,1}\right| \left|m_{i,1}\right|,
	\end{align*} and
	\begin{align*}
	\Delta_2 & \leq  \xi f_1(n) n^{1/2 + \delta} \Delta_1 + 4 \xi \left|m_{1,1}\right| \sum_{i=2}^{n} \frac{\xi \left|m_{i,1}\right|}{\left(\lambda_1 - \lambda_i\right)^2}\\
	& \leq \xi f_1(n) n^{1/2 + \delta} \Delta_1 + 4 \xi f_4(n) n^{-1/2} n^{2/3}  \sum_{i=2}^{n} \frac{\xi \left|m_{i,1}\right|}{\left(\lambda_1 - \lambda_i\right)}\\
	& \leq \xi f_1(n) n^{1/2 + \delta} \Delta_1 + 4 \xi f_4(n) n^{1/6}  \Delta_1\\
	& \leq \epsilon \Delta_1.
	\end{align*}
	The proof of $(iii)$ for $k=0,1$ is obvious.\\
	
\proofstep{Step 5: Proof of equation $(iv)$.}	
Let $k \geq 2$ and $2 \leq i \leq n$. In the same way as in Step 1, we have
\begin{flalign*}
\left| \theta_i^k - \theta_i^{ k-1} \right| & \leq \frac{2 \xi C_1}{\lambda_1-\lambda_i} \sqrt{\frac{\log n}{n}} \epsilon^{k-2} \Delta_1 + \frac{8 \xi^2 C_1^2}{\left(\lambda_1-\lambda_i\right)^2} \frac{\log n}{n} \epsilon^{(k-3)_+}\Delta_1.
\end{flalign*}
In the right-hand term, the ratio of the second term on the first one is smaller that
\begin{equation*}
\frac{4 \xi C_1}{\lambda_1 - \lambda_i} \sqrt{\frac{\log n}{n}} \epsilon^{-1} \leq \xi n^{1/6} f(n) \epsilon^{-1} \leq \epsilon \to 0,
\end{equation*}
using Lemma \ref{lemma_controletrousp}, with $f \in \mathcal{F}$. It follows that for $n$ big enough (uniformly in $k$ and $i$) one has
\begin{equation}
\label{ecartik}
\left| \theta_i^k - \theta_i^{ k-1} \right| \leq \frac{\xi f(n) }{\lambda_1-\lambda_i} n^{-1/2} \epsilon^{k-2} \Delta_1.
\end{equation}
Equation \eqref{ecartik} shows that the scheme \eqref{schemadepicard} converges, and that the limits are indeed the solutions $\theta_1=1, \theta_2, \ldots, \theta_n$ of the fixed-point equations. By a simple summation of (\ref{ecartik}) over $k \geq 2$, applying Lemma \ref{lemma_sommevp1} and inequality \eqref{ineq:encadrementgaussiennes} we have
\begin{flalign*}
\left| \theta_i - \theta_i^{1} \right| & \leq \frac{2 \xi f(n) }{\lambda_1-\lambda_i} n^{-1/2} \Delta_1 \leq \frac{2 \xi^2 f(n) }{\lambda_1-\lambda_i} n^{\delta},
\end{flalign*}
where $\delta>0$ is a positive quantity of Lemma \ref{lemma_sommevp1} specified later. Using Lemma \ref{lemma_sommevp2} one has the following control
\begin{equation*}
\sum_{i=2}^{n} \left| \theta_i - \theta_i^{1} \right|^2 \leq 4 \xi^4 n^{2 \delta} f(n) n^{4/3}.
\end{equation*}
Moreover, Lemma \ref{lemma_concentrationksi} shows that
\begin{equation*}
\sum_{i=2}^{n} \left| \theta_i^{1} \right|^2 \asymp \xi^2 n^{1/3} \geq g(n)^{-1} \xi^2 n^{1/3},
\end{equation*}
where $g$ is another function in $\mathcal{F}$. This yields
\begin{equation*}
\sum_{i=2}^{n} \left| \theta_i - \theta_i^{1} \right|^2 \leq \sum_{i=2}^{n} \left| \theta_i^{1} \right|^2 4 \xi^2 n^{2 \delta+1} {f(n)}{g(n)}.
\end{equation*}
The proof is completed by taking $\delta = \alpha/2$ and applying (\ref{microscopicregime}).
\end{proof}

\subsection{Proof of Lemma \ref{lemma_controletrousp}}\label{proof_lemma_controletrousp}
\begin{proof}[Proof of Lemma \ref{lemma_controletrousp}]
	This lemma provides a control of the spectral gap $\lambda_1 - \lambda_2$. Given a good rescaling (in $n^{2/3}$), the asymptotic joint law of the eigenvalues in the edge has been investigated in a great amount of research work, for Gaussian ensembles, and for more general Wigner matrices. The $\GOE$ case has been mostly studied by Tracy, Widom, and Forrester among many others; in \cite{Forrester93} and \cite{Tracy98}, the convergence of the joint distribution of the first $k$ eigenvalues towards a density distribution is established:
	\begin{proposition}[\cite{Forrester93}, \cite{Tracy98}]
		For a given $k\geq 1$, and all $s_1, \ldots, s_k$ real numbers,
		\begin{equation}
		\label{loilimitetrousp}
		\mathbb{P}\left(n^{2/3} \left(\lambda_1 - 2\right) \leq s_1, \ldots, n^{2/3} \left(\lambda_k - 2\right) \leq s_k \right) \underset{n \to \infty}{\longrightarrow} \mathcal{F}_{1,k}(s_1, \ldots, s_k),
		\end{equation}
	\end{proposition}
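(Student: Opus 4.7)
My plan is to follow the classical edge scaling analysis for the GOE, which produces the joint distribution of the top $k$ eigenvalues as a Fredholm Pfaffian associated to the Airy kernel. The statement is a standard deep result of Forrester and Tracy--Widom, so what follows is a sketch of the structural steps rather than a self-contained argument.

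First, I would start from the explicit joint density of the (unordered) eigenvalues of a normalized $\GOE$ matrix, which is of the Vandermonde form
\begin{equation*}
p_n(\lambda_1,\dots,\lambda_n) = \frac{1}{Z_n}\prod_{i<j}|\lambda_i-\lambda_j|\prod_{i=1}^n e^{-\tfrac{n}{4}\lambda_i^2}.
\end{equation*}
Using skew-orthogonal polynomials built from Hermite polynomials, one can rewrite all finite-dimensional correlation functions as Pfaffians of a $2\times 2$ matrix kernel $K_n(x,y)$ involving the Christoffel--Darboux-type sum, its derivative, and an integrated version (the classical Mehta/Dyson construction for $\beta=1$ ensembles). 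In particular, the probability that no eigenvalue lies in a Borel set $B$ is a Fredholm Pfaffian of $K_n$ restricted to $B$, and the joint cumulative distribution of the top $k$ eigenvalues can be expressed through inclusion--exclusion as a linear combination of such Pfaffians evaluated on half-lines and their $k$-fold products.

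Second, I would perform the edge rescaling $\lambda = 2 + x\,n^{-2/3}$ and study the asymptotics of the Hermite polynomials and their skew-orthogonal partners in this regime. This is the classical Plancherel--Rotach steepest-descent analysis (alternatively one can use the Riemann--Hilbert approach of Deift and collaborators, which is the cleanest way to control error terms uniformly in $x$ on compact sets and with Gaussian decay as $x\to+\infty$). The outcome is the pointwise (and locally uniform) convergence
\begin{equation*}
n^{-2/3}K_n\bigl(2+x n^{-2/3},\,2+y n^{-2/3}\bigr)\longrightarrow K_{\mathrm{Ai}}^{(1)}(x,y),
\end{equation*}
where $K_{\mathrm{Ai}}^{(1)}$ is the $2\times 2$ Airy kernel for $\beta=1$, built from the Airy function $\mathrm{Ai}$, its derivative, and an integrated primitive.

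Third, I would upgrade this convergence of the kernel to convergence of the relevant Fredholm Pfaffians. This requires two ingredients: (i) trace-class convergence of the rescaled integral operators on half-lines $(s_i,+\infty)$, which follows from the locally uniform kernel convergence together with Gaussian/Airy decay bounds that provide a dominating integrable envelope, and (ii) continuity of the Fredholm Pfaffian on trace-class operators. Once this is done, the inclusion--exclusion expansion for $\mathbb{P}(n^{2/3}(\lambda_i-2)\le s_i,\ i\le k)$ converges termwise to the corresponding expression in the Airy Pfaffian, which is exactly the definition of $\mathcal{F}_{1,k}(s_1,\dots,s_k)$.

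The main obstacle is the third step rather than the first two: writing down the correct $\beta=1$ Pfaffian structure and, above all, controlling uniformly in $n$ the tails of the kernel so as to justify passing to the limit inside an infinite series of Pfaffian minors. The algebraic identities are delicate (GOE is much more painful than GUE because of the skew-symmetric structure and the need for integrated kernels), and one really needs sharp bounds on Hermite asymptotics away from the edge to avoid contributions from eigenvalues deep in the bulk. All other steps are essentially bookkeeping once the Airy kernel limit is established; the result is by now classical and is documented, with complete proofs, in Forrester \cite{Forrester93} and Tracy--Widom \cite{Tracy98}.
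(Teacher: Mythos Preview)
The paper does not prove this proposition at all: it is stated as a black-box citation to \cite{Forrester93} and \cite{Tracy98} and immediately used to deduce that $n^{2/3}(\lambda_1-\lambda_2)$ has a nondegenerate limit law on $\mathbb{R}_+$. Your sketch is a fair outline of the classical Forrester/Tracy--Widom argument (joint density $\to$ Pfaffian structure via skew-orthogonal Hermite polynomials $\to$ Plancherel--Rotach edge asymptotics $\to$ trace-class convergence of Fredholm Pfaffians), and you yourself correctly note at the end that this is a deep documented result rather than something to be reproved here. So there is no discrepancy to discuss: the paper's ``proof'' is simply the citation, and your proposal is consistent with what those references contain.
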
 where the $\mathcal{F}_{1,k}$ are continuous and can be expressed as solutions of non linear PDEs. Thus the re-scaled spectral gap $n^{2/3}\left(\lambda_{1}-\lambda_2\right)$ has a limit probability density law supported by $\mathbb{R_+}$, which implies that
	\begin{equation*}
	\mathbb{P}\left(n^{2/3} \left(\lambda_1 - \lambda_2\right) \geq \left(\log n\right)^{-C \log \log n} \right) \underset{n \to \infty}{\longrightarrow} 1.
	\end{equation*}
	Of course, the choice of the function $n \mapsto \left(\log n\right)^{-C \log \log n}$ is here arbitrary and the result is also true for any function tending to 0.
\end{proof}

\subsection{Proof of Lemma \ref{lemma_sommevp2}}\label{proof_lemma_sommevp2}
\begin{proof}[Proof of Lemma \ref{lemma_sommevp2}]
	This result needs an understanding of the behavior of the spectral gaps of matrix $A$, in the bulk and in the edges (left and right). The eigenvalues in the \textit{edge} correspond to indices $i$ such that $i=o(n)$ (left) or $i = n-o(n)$ (right). Eigenvalues in the \textit{bulk} are the remaining eigenvalues. For this, we use a result of rigidity of eigenvalues, due to L. Erdös et al. \cite{Erdos10}, which consists in a control of the probability of the gap between the eigenvales of  $A$ and the typical eigenvalues $\gamma_j$ of the semi-circle law, defined as follows
	\begin{equation}
	\forall i \in \left\lbrace 1, \dots, n\right\rbrace , \; \frac{1}{2 \pi} \int_{-2}^{\gamma_j} \sqrt{4-x^2} dx =1- \frac{j}{n}.
	\end{equation}
	
	\begin{proposition}[\cite{Erdos10}]
		For some positive constants $C_5>0$ and $C_6>0$, for $n$ large enough,
		\begin{multline}
		\label{erdos}
		\mathbb{P}\left( \exists j \in \left\lbrace 1, \dots, n\right\rbrace \, | \, \left|\lambda_j - \gamma_j \right| \geq \left(\log n\right)^{C_5 \log \log n} \left(\min \left(j, n+1-j\right)\right)^{-1/3} n^{-2/3} \right) \\ \leq C_5 \exp \left(- \left(\log n\right)^{C_6 \log \log n} \right).
		\end{multline}
	\end{proposition}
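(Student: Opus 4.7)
The plan is to establish the rigidity bound via the local semicircle law, following the strategy of Erdős–Schlein–Yau. Write $m_n(z) := \frac{1}{n}\sum_{j=1}^n \frac{1}{\lambda_j - z}$ for the empirical Stieltjes transform of $A$ and $m_{sc}(z) := \int \frac{\rho_{sc}(x)}{x-z}dx$ for the Stieltjes transform of the semicircle, where $\rho_{sc}(x) = \frac{1}{2\pi}\sqrt{(4-x^2)_+}$. The key intermediate target is a \emph{local law}: with the spectral parameter $z = E + i\eta$ for $E \in [-2-\kappa, 2+\kappa]$ and $\eta$ as small as $\eta_* := (\log n)^{C\log\log n}/n$, one has
\begin{equation*}
\mathbb{P}\!\left( \sup_{z} |m_n(z) - m_{sc}(z)| \geq \frac{(\log n)^{C\log\log n}}{n\eta}\!\left(\frac{1}{\sqrt{n\eta}\,\kappa_E^{1/4}} + \frac{1}{(n\eta)\,\kappa_E^{1/2}}\right)\right) \leq C\exp\bigl(-(\log n)^{C'\log\log n}\bigr),
\end{equation*}
where $\kappa_E = ||E|-2|$ controls the square-root vanishing of $\rho_{sc}$ near the edge.

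To prove this local law, I would use the standard resolvent / Schur complement expansion. For $G(z) = (A-z)^{-1}$, the diagonal entries satisfy
\begin{equation*}
G_{ii}(z) = \frac{1}{A_{ii} - z - \mathbf{a}_i^\top G^{(i)}(z)\mathbf{a}_i}\,,
\end{equation*}
where $\mathbf{a}_i$ is the $i$-th column of $A$ with the $i$-th entry removed, and $G^{(i)}$ is the resolvent of the $(n-1)\times(n-1)$ minor. By Gaussian concentration of quadratic forms (Hanson–Wright, already used in Chapter \ref{chapter:gaussian_alignment_IT}) applied conditionally on $G^{(i)}$, one shows $\mathbf{a}_i^\top G^{(i)} \mathbf{a}_i \approx \frac{1}{n}\Tr G^{(i)} \approx m_n(z)$ up to fluctuations of size $\sqrt{\Im m_n/(n\eta)}$. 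Summing over $i$ this yields a self-consistent equation $m_n = -\bigl(z + m_n\bigr)^{-1} + \text{error}$, whose unique stable solution is $m_{sc}$. A stability analysis of the equation $m+(z+m)^{-1}=0$ near the semicircle (careful at the edge, where the derivative degenerates as $\sqrt{\kappa_E}$) converts the error in the self-consistent equation into the stated bound on $|m_n-m_{sc}|$. The exponential tail $\exp(-(\log n)^{C\log\log n})$ comes from applying Hanson–Wright with deviation parameter $t=(\log n)^{C\log\log n}$ and union bounding over an $\eta_*$-net in $z$.

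Once the local law is established, the rigidity follows by the standard Helffer–Sjöstrand argument: for any $E$, integrating $\Im m_n(E+i\eta)$ against a smooth cutoff produces the counting function $N(E) := \#\{j:\lambda_j\leq E\}$, and the local law on $m_n$ yields
\begin{equation*}
|N(E) - n\gamma(E)| \leq (\log n)^{C\log\log n}\,, \qquad \gamma(E) := \int_{-\infty}^E \rho_{sc}\,,
\end{equation*}
uniformly in $E \in [-2-\kappa, 2+\kappa]$, on the same very-high-probability event. Inverting this relation at $E = \gamma_j$ (where $n\gamma(\gamma_j) = n-j$ by definition) gives $|\lambda_j - \gamma_j| \leq (\log n)^{C\log\log n} / (n \rho_{sc}(\gamma_j))$, and the square-root vanishing $\rho_{sc}(\gamma_j) \asymp (\min(j,n+1-j)/n)^{1/3}$ (a direct computation from the defining equation for $\gamma_j$) produces precisely the claimed bound
\begin{equation*}
|\lambda_j - \gamma_j| \leq (\log n)^{C_5 \log\log n}\,(\min(j,n+1-j))^{-1/3}\,n^{-2/3}\,.
\end{equation*}

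The main obstacle is obtaining the local law down to the optimal scale $\eta_* \sim 1/n$ at the \emph{edge}, because the self-consistent equation for $m_{sc}$ becomes unstable there (derivative of order $\sqrt{\kappa_E}$) and the error terms in the Schur expansion blow up like $1/\sqrt{n\eta\,\kappa_E^{1/2}}$. Handling this requires a careful bootstrap: first prove a weak local law on a larger scale $\eta \sim n^{-1+c}$ where stability is easy, then iteratively improve by a fluctuation-averaging step that exploits the independence of rows (a martingale/ISE-type cancellation) to gain an extra factor $(n\eta)^{-1/2}$ per iteration. This bootstrap is the technical core of \cite{Erdos10}; the rest of the argument (Helffer–Sjöstrand, inversion of counting function) is routine given the local law.
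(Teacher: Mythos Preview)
The thesis does not prove this proposition at all: it is quoted verbatim as a black-box result from Erd\H{o}s, Yau and Yin \cite{Erdos10}, stated inside the proof of Lemma~\ref{lemma_sommevp2} and then immediately applied. There is therefore no ``paper's own proof'' to compare your proposal against.

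That said, your outline is a faithful sketch of the strategy actually carried out in \cite{Erdos10}: resolvent expansion and Schur complement to obtain an approximate self-consistent equation for $m_n$, concentration of the quadratic form $\mathbf{a}_i^\top G^{(i)}\mathbf{a}_i$ around $\frac{1}{n}\Tr G^{(i)}$, stability analysis of $m+(z+m)^{-1}=0$ (with the edge degeneracy handled by a bootstrap and fluctuation averaging), and finally Helffer--Sj\"ostrand to pass from the local law to the counting function and invert. So as a description of how the cited reference proves the statement, your proposal is on target; but for the purposes of this thesis the result is simply imported, and you are not expected to reproduce the argument.
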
	
	\begin{remark}
		Another similar result that goes in the same direction for the $\GOE$ is already known: it has been shown by O'Rourke in \cite{ORourke10} that the variables $\lambda_i - \gamma_i$ behave as Gaussian variables when $n \to \infty$. However, the rigidity result in \eqref{erdos} obtained in \cite{Erdos10} can apply in more general models. This quantitative probabilistic statement was not previously known even for the $\GOE$ case.
	\end{remark} 
	\begin{remark}
		Let us note that one of the assumptions made in \cite{Erdos10} is that variances of each column sum to 1, which is not directly the case in our model \eqref{eq:GOEmodel_bis}. Nevertheless, one may use (\ref{erdos}) for the re-scaled matrix $\tilde{A} := A \left(1+\frac{1}{n}\right)^{-1/2}$, then easily check that there is a possible step back to $A$: $|\lambda_j - \gamma_j|\leq \left|\lambda_j \left(1+\frac{1}{n}\right)^{-1/2}- \gamma_j \right|+ n^{-1} + o(n^{-1})$, and $ n^{-1} + o(n^{-1}) \leq 2\left(\min \left(j, n+1-j\right)\right)^{-1/3} n^{-2/3}$ for $n$ big enough. Tolerating a slight increase of the constant $C_5$, the result (\ref{erdos}) is thus valid in the $\GOE$.
	\end{remark}
	Let us now compute an asymptotic expansion of $\gamma_j$ in the right edge, which is for  $j = o(n)$. Define
	\begin{equation}
	\label{G}
	G(x) := \frac{1}{2 \pi} \int_{-2}^{x} \sqrt{4-t^2} dt = \frac{x\sqrt{4-x^2}+4 \arcsin(x/2)}{4 \pi}+\frac{1}{2} , 
	\end{equation}
	for all $x \in [-2,2]$. We have $\gamma_j = G^{-1}(1-j/n) = - G^{-1}(j/n)$, observing that the integrand in (\ref{G}) is an even function. We get the following expansion when $x \to -2$,
	\begin{equation*}
	G(x) \underset{x \to -2}{=} \frac{2(x+2)^{3/2}}{3 \pi} + o\left((x+2)^{3/2}\right)
	\end{equation*} which implies that
	\begin{equation*}
	G^{-1}(y) \underset{y \to 0}{=} -2 + \left(\frac{3 \pi y}{2}\right)^{2/3} + o\left(y^{2/3}\right), 
	\end{equation*} hence
	\begin{equation}
	\label{gammaedge}
	\gamma_j \underset{j/n \to 0}{=} 2 - \left(\frac{3 \pi j}{2 n}\right)^{2/3} + o\left((j/n)^{2/3}\right).
	\end{equation} 
	\begin{remark}
		One can observe the coherence of this result that arises naturally in \cite{ORourke10} as the expectation of the eigenvalues in the edge.
	\end{remark}
	
	Let $\epsilon>0$, to be specified later. To establish our result we will split the variables $j$ in three sets:
	\begin{align*}
	A_1 &:= \left\lbrace 2 \leq j \leq \left(\log n\right)^{(C_5+1)\log \log n} \right\rbrace \; \mbox{(a small part of the right edge)},\\ A_2 &:= \left\lbrace \left(\log n\right)^{(C_5+1)\log \log n} < j \leq n^{1-\epsilon} \right\rbrace \;  \mbox{(a larger part of the right edge)}, \\ A_3 &:= \left\lbrace n^{1-\epsilon} < j \leq n \right\rbrace  \; \mbox{(everything else)}.
	\end{align*} We show that the sum over $A_1$ is the major contribution in (\ref{sommevp2eq}). The split in the right edge in $A_1$ and $A_2$ is driven by the error term of (\ref{erdos}): this term is small compared to $\gamma_j$ if and only if $\left(\log n\right)^{C_5 \log \log n} = o(j)$.\\
	
	\proofstep{Step 1: estimation of the sum over $A_1$.}
	According to (\ref{erdos}) and Lemma \ref{lemma_controletrousp}, w.h.p.
	\begin{equation*}
	n^{-4/3} \left(\log n\right)^{-C_6 \log \log n}  \leq \left(\lambda_1 - \lambda_2 \right)^2 \leq C_7 n^{-4/3} \left( \log n \right)^{C_6 \log \log n},
	\end{equation*} where $C_6, C_7$ are positive constants. Hence, w.h.p.
	\begin{flalign*}
	\label{a1}
	\frac{n^{4/3}}{C_7 \left( \log n \right)^{C_6 \log \log n}} &\leq \sum_{j \in A_1} \frac{1}{\left(\lambda_1 - \lambda_j\right)^2} \\ & \leq \sum_{j \in A_1} \frac{1}{\left(\lambda_1- \lambda_2\right)^2}\\ & \leq n^{4/3} \left( \log n \right)^{(C_5+C_6+1) \log \log n}. 
	\end{flalign*}
	
	\proofstep{Step 2: estimation of the sum over $A_2$.} Let us show that the sum over $A_2$ is asymptotically small compared to the sum over $A_1$: using (\ref{erdos}) and (\ref{gammaedge}), we know that there exists $C_8>0$ such that for all $j \in A_2$, w.h.p.
	\begin{equation*}
	\lambda_j = 2 - C_8 \left(\frac{j}{n}\right)^{2/3} + o\left((j/n)^{2/3}\right),
	\end{equation*} and we know furthermore (se e.g. \cite{Anderson09}) that w.h.p.
	\begin{equation}
	\label{lambda1}
	\lambda_1 = 2 + o\left((j/n)^{2/3}\right), \forall j \in A_2
	\end{equation} hence w.h.p.
	\begin{flalign*}
	\label{a2}
	\sum_{j \in A_2} \frac{1}{\left(\lambda_1 - \lambda_j\right)^2} &= n^{4/3} \sum_{j \in A_2} \frac{1}{C_9 j^{4/3}(1+o(1))}\\& =  n^{4/3} (1+o(1)) \sum_{j \in A_2} \frac{1}{C_9 j^{4/3}} = o\left(n^{4/3}\right),
	\end{flalign*}
	using in the last line the fact that the Riemann's series $\sum j^{-4/3}$ converges.\\
	
	\proofstep{Step 3: estimation of the sum under $A_3$.} With the previous results (\ref{erdos}), (\ref{gammaedge}) and (\ref{lambda1}), assuming that $\epsilon<1$, we get w.h.p.
	\begin{equation*}
	\lambda_1 - \lambda_{n^{1-\epsilon}} = C_8 n^{-2\epsilon/3} + O\left(n^{-2\epsilon/3}\right),
	\end{equation*} which gives w.h.p. the following control
	\begin{flalign*}
	\label{a3}
	\sum_{j \in A_3} \frac{1}{\left(\lambda_1- \lambda_j\right)^2} &\leq \left(n-n^{1-\epsilon}\right) \frac{1}{\left(\lambda_1 - \lambda_{n^{1-\epsilon}}\right)^2}\\& = \left(n-n^{1-\epsilon}\right) \frac{n^{4\epsilon/3}}{C_9 (1+o(1))} = O\left(n^{1+4\epsilon/3}\right) = o\left(n^{4/3}\right),&&
	\end{flalign*}
	as long as $\epsilon < 1/4$. Taking such a $\epsilon$, these three controls end the proof.
\end{proof}

\subsection{Proof of Lemma \ref{lemma_sommevp1}} \label{proof_lemma_sommevp1}
\begin{proof}[Proof of Lemma \ref{lemma_sommevp1}]
	We follow the same steps as in the proof of Lemma \ref{lemma_sommevp2}. Let's take $\delta>0$. We split the $j$ variables in three sets:
	\begin{align*}
	A_1 &:= \left\lbrace 2 \leq j \leq n^{1/3} \right\rbrace,\\ A_2 &:=  \left\lbrace n^{1/3} < j \leq n^{1-\delta} \right\rbrace,\\ A_3 &:= \left\lbrace n^{1-\delta} < j \leq n \right\rbrace.
	\end{align*}
	We use Lemma \ref{lemma_controletrousp} to obtain the following control w.h.p.
	\begin{flalign*}
	\sum_{j \in A_1} \frac{1}{\lambda_1 - \lambda_j} \leq n^{1/3} n^{2/3} \left(\log n\right)^{C_5 \log \log n} = O(n^{1+\delta}).
	\end{flalign*}
	Similarly, for $A_2$
	\begin{flalign*}
	\sum_{j \in A_2} \frac{1}{\lambda_1 - \lambda_j} &\leq \sum_{j \in A_2} \frac{1}{o(n^{-2/3}) + C_8(j/n)^{2/3} + O\left(\left(\log n\right)^{C_5 \log \log n}n^{-2/3}j^{-1/3}\right)} \\ &= n^{2/3} \sum_{j \in A_2} \frac{1}{o(1)+C_8j^{2/3}} \leq C_{10} n^{2/3} n^{(1-\delta)/3} \leq O(n^{1+\delta}).
	\end{flalign*}
	Finally, using Cauchy–Schwarz inequality
	\begin{flalign*}
	\sum_{j \in A_3} \frac{1}{\lambda_1 - \lambda_j} \leq  \sqrt{n} \left(\sum_{j \in A_3} \frac{1}{\left(\lambda_1 - \lambda_j\right)^2}\right)^{1/2} \leq  \sqrt{n} O(n^{1/2+2\delta/3}) = O(n^{1+\delta}).
	\end{flalign*}
\end{proof}

\subsection{Proof of Lemma \ref{lemma_concentrationksi}} \label{proof_lemma_concentrationksi}
\begin{proof}[Proof of Lemma \ref{lemma_concentrationksi}]
We show that w.h.p.
\begin{equation}
\label{concentration}
\sum_{i=2}^{n} \frac{\langle Hv_i,v_1 \rangle ^2 }{\left(\lambda_1 - \lambda_i\right)^2} - \frac{1}{n} \sum_{i=2}^{n} \frac{1}{\left(\lambda_1 - \lambda_i\right)^2}
= o\left(\frac{1}{n} \sum_{i=2}^{n} \frac{1}{\left(\lambda_1 - \lambda_i\right)^2}\right)
\end{equation}
Let us recall that $H$ is drawn according to the $\GOE$, hence its law is invariant by rotation. This implies that the $\langle Hv_i,v_1 \rangle$ are independent variables with variance $1/n$, independent of $\lambda_1, \ldots, \lambda_n$. Define
\begin{equation*}
M_n := \sum_{i=2}^{n} \frac{\langle Hv_i,v_1 \rangle ^2 - 1/n }{\left(\lambda_1 - \lambda_i\right)^2}.
\end{equation*}
Computing the second moment of $M_n$, we get
\begin{flalign*}
\mathbb{E}\left[M_n ^2 | \lambda_1, \ldots, \lambda_n \right]& = \mathrm{Var}(M_n | \lambda_1, \ldots, \lambda_n) = \frac{1}{n^4} \sum_{i=2}^{n} \frac{2 }{\left(\lambda_1 - \lambda_i\right)^4}. 
\end{flalign*}
Adapting the proof of Lemma \ref{lemma_sommevp2}, following the same steps, one can also show that w.h.p.
\begin{equation}
\label{controlevp4}
\sum_{i=2}^{n} \frac{1}{\left(\lambda_1 - \lambda_i\right)^4} \asymp n^{8/3}.
\end{equation}
Let $\epsilon = \epsilon(n)>0$ to be specified later. By Markov's inequality
\begin{flalign*}
\mathbb{P}\left(\left|M_n\right| \geq \frac{\epsilon}{n} \sum_{i=2}^{n} \frac{1}{\left(\lambda_1 - \lambda_i\right)^2} | \lambda_1, \ldots, \lambda_n \right)& \leq \frac{n^2}{\epsilon^2} \frac{\mathbb{E}\left[M_n ^2 | \lambda_1, \ldots, \lambda_n \right]}{\left(\sum_{i=2}^{n} \frac{1}{\left(\lambda_1 - \lambda_i\right)^2}\right)^2} \\
&  \asymp \frac{1}{\epsilon^2 n^2},
\end{flalign*}
by Lemma \ref{lemma_sommevp2} and equation (\ref{controlevp4}). Taking e.g. $\epsilon(n)=n^{-1/2}$ concludes the proof.
\end{proof}
\addtocontents{toc}{\protect\setcounter{tocdepth}{2}}

\section{Additional proofs for Sections \ref{toymodel} \& \ref{EIG1threshold}}
\addtocontents{toc}{\protect\setcounter{tocdepth}{0}}
\label{section45_add_proofs}
\subsection{Proof of Lemma \ref{fibo}}\label{proof_lemma_fibo}
\begin{proof}[Proof of Lemma \ref{fibo}]
We fix $\alpha>0$ and we want to prove
\begin{equation}
\label{sum_fibop}
\sum_{k=0}^{\lfloor(n-1)/2\rfloor}  \binom{n-1-k}{k} \alpha^k = \frac{1}{\sqrt{1+4 \alpha}} \left[\left(\frac{1+\sqrt{1+4 \alpha}}{2}\right)^n - \left(\frac{1-\sqrt{1+4 \alpha}}{2}\right)^n\right].
\end{equation}We denote in the following $\phi_+ := \frac{1+\sqrt{1+4 \alpha}}{2}$ and $\phi_- := \frac{1-\sqrt{1+4 \alpha}}{2}$, and for all $n \geq 1$:
\begin{equation*}
u_n = u_n(\alpha) := \sum_{k=0}^{\lfloor(n-1)/2\rfloor}  \binom{n-1-k}{k} \alpha^k.
\end{equation*}
We clearly have $u_n(\alpha) \leq \left(1+\alpha\right)^n$. For all $t>0$ small enough (e.g. $t < \frac{1}{1+\alpha}$), define
\begin{equation*}
f(t) := \sum_{n =1}^{\infty}  u_n t^n.
\end{equation*}
On one hand,
\begin{flalign*}
\frac{t}{1-t-\alpha t^2} & = t \sum_{m=0}^{\infty} (t+\alpha t^2)^m = \sum_{m=0}^{\infty} \sum_{l=0}^{m} \binom{m}{l} \alpha ^l t^{l + m + 1}\\
& = \sum_{n=1}^{\infty} \left(\sum_{\substack{0 \leq l \leq m \\ l+m = n-1}} \binom{m}{l} \alpha^l \right) t^n = \sum_{n=1}^{\infty} u_n t^n = f(t). 
\end{flalign*}
On the other hand,
\begin{flalign*}
\frac{t}{1-t-\alpha t^2} & = \frac{t}{\left(1-\phi_{-}t\right)\left(1-\phi_{+}t\right)} =  \frac{1}{\phi_{+}-\phi_{-}}\left(\frac{1}{1-\phi_{+}t}-\frac{1}{1-\phi_{-}t}\right)\\
& =  \frac{1}{\sqrt{1+4\alpha}} \sum_{n=1}^{\infty}\left(\phi_{+}^n - \phi_{-}^n\right) t^n. 
\end{flalign*}
This proves $(\ref{sum_fibop})$.
\end{proof}

\subsection{Proof of Lemma \ref{stillthelink}}
\begin{proof}[Proof of Lemma \ref{stillthelink}]\label{proof_stillthelink}
Let us represent the situation in the plane spanned by $v_1$ and $v'_1$, as shown on Figure \ref{fig:EIG1:imagev}.
\begin{figure}[h]
	\centering
	\begin{picture}(150,150)

\put(75,75){\circle*{4}}
\put(0,75){\vector(1,0){150}}
\put(157,75){\makebox(0,0){$b$}}
\put(75,0){\vector(0,1){150}}
\put(75,157){\makebox(0,0){$v'_1$}}

\put(75,75){\vector(1,3){23}}
\put(100,149){\makebox(0,0){$v_1$}}

\put(75,75){\vector(-1,2){23}}
\put(35,128){\makebox(0,0){$\mathrm{proj}_{\mathcal{P}}(\widetilde{v'_1})$}}

\multiput(53,75)(0,10){5}{\line(0,1){5}}
\multiput(75,120)(-9,0){3}{\line(-1,0){4}}

\put(42,97){\makebox(0,0){$\sqrt{p}$}}

\end{picture}
	\caption{Orthogonal projection of $\widetilde{v}_1$ on $\mathcal{P} := \mathrm{span}(v'_1, v_1)$.}
	\label{fig:EIG1:imagev}
\end{figure}
Since $\widetilde{v}_1$ is taken such that $\langle v_1,\widetilde{v}_1 \rangle >0$ and $\xi_1$ satisfies (\ref{microscopicregime}), we have  $\langle \widetilde{v}_1,v'_1 \rangle >0$ for $n$ large enough by Proposition \ref{EIG1:prop:gaussian_decomp}. Let $p := \langle \widetilde{v}_1,v'_1 \rangle^2$ and $\widetilde{w} := \widetilde{v}_1 - \sqrt{p} v'_1 \in \left(v'_1\right)^{\perp}$. By invariance by rotation we can obtain that $\frac{\widetilde{w}}{\| \widetilde{w}\|}=\frac{\widetilde{w}}{\sqrt{1-p}}$ is uniformly distributed on the unit sphere $\mathbb{S}^{n-2}$ of $\left(v'_1\right)^{\perp}$, and independent of $p, v_1$ and $v'_1$. Hence 
\begin{equation*}
\langle b,\widetilde{v}_1 \rangle = \langle b, \widetilde{w} \rangle \overset{(d)}{=} \sqrt{1-p} \cdot \frac{\widetilde{Z}_1}{\sqrt{\sum_{i=1}^{n-1} \left(\widetilde{Z}_i\right)^2}},
\end{equation*} where the $\widetilde{Z}_i$ are independent Gaussian standard variables, independent from everything else.
According to Section \ref{linkGOEtoy} we know that $1 - \langle v_1,v'_1 \rangle \asymp \xi_1^2 n^{1/3} $ and thus $\langle v_1,b \rangle \asymp \xi_1 n^{1/6}$. This yields, for $n$ large enough, w.h.p,
\begin{flalign*}
0 < \langle \widetilde{v}_1, v_1 \rangle & \leq \sqrt{p} \langle v_1, v'_1 \rangle + \sqrt{\frac{1-p}{n}} \widetilde{Z}_1 \xi_1 n^{1/6} f(n)\\
& \leq \sqrt{p} \langle v_1, v'_1 \rangle + \sqrt{1-p} n^{-4/3} g(n) \\
& \leq \max \left(\sqrt{p},\sqrt{1-p}\right) \langle v_1, v'_1 \rangle \\
& \leq \langle v_1, v'_1 \rangle,
\end{flalign*} where $f$ and $g$ are two functions as defined in Lemma \ref{lemma_sommevp2}. From this point one can still make the link with the toy model, as done in the beginning of Section \ref{toymodel}. By invariance by rotation, letting $t := \widetilde{v}_1 - \langle \widetilde{v}_1,v_1 \rangle v_1$, we know that $\|t\|$ and $\frac{t}{\| t\|}$ are independent, and that $\frac{t}{\| t\|}$ is uniformly distributed on the unit sphere in $v_1^{\perp}$. We have the following equality in distribution:
\begin{equation*}
\left(r_1(v_1),r_1(\widetilde{v}_1)\right) \overset{(d)}{=} \left(r_1(X),r_1(X+\mathbf{s}Z) \right),
\end{equation*} 
with w.h.p. $$\mathbf{s} \geq \mathbf{s^1} = \frac{\|w\| \|X\|}{\left(\sum_{i=2}^{n} Z_i^2\right)^{1/2} \left(1 - \frac{\|w\| Z_1}{\left(\sum_{i=2}^{n} Z_i^2\right)^{1/2}}\right)} \asymp \xi_1 n^{1/6},$$ where the $X_i$, $Z_i$ and $w$ are defined in Section \ref{toymodel}, for $\xi=\xi_1$.
\end{proof}

\subsection{Proof of Lemma \ref{Pi+casei}} \label{proof_lemma_Pi+(i)}
\begin{proof}[Proof of Lemma \ref{Pi+casei}]
Recall that we work in the case $(i)$ ($\xi = o(n^{-7/6-\epsilon})$ for some $\epsilon>0$), with $\langle v_1,v'_1 \rangle >0$ and $\Pi^{\star} = I_n$. We want to show that w.h.p.
\begin{equation}
\label{Pi+caseieq}
	\langle A, \Pi_{+} B \Pi_{+}^{\top} \rangle > \langle A, \Pi_{-} B \Pi_{-}^{\top} \rangle.
\end{equation} 
Define
\begin{equation*}
\mathcal{G} := \left\lbrace u, \Pi_{+}(u)=\Pi^{\star}(u)=i \right\rbrace.
\end{equation*} and
\begin{equation*}
\mathcal{A} := \left\lbrace \xi n^{1/6}f(n)^{-1} \leq \mathbf{s} \leq \xi n^{1/6 }f(n) \right\rbrace,
\end{equation*} with $f \in \mathcal{F}$ such that $\mathbb{P} \left(\mathcal{A}\right) \to 1$. For $n$ large enough, on the event $\mathcal{A}$, we have $ 0 \leq \mathbf{s}n \leq  n^{-\epsilon }f(n)$. Hence, retaking the proof of Proposition \ref{EIG1:prop:zero_one_toy}, we have 
\begin{flalign*}
\phi_{x,z}\left(n, \mathbf{s} \right) & \geq \mathbb{P} \left(\cN^{+}(x,x+\mathbf{s}z)=\cN^{-}(x,x+\mathbf{s}z)=0\right) \\
& \sim \exp\left(- \mathbf{s}nE(x)\left[z(2F(z)-1)+2E(z)\right]\right) = 1 - O(n^{-\epsilon}f(n)).&&
\end{flalign*}

Thus, with dominated convergence, for $n$ large enough,
\begin{flalign}
\label{probassurA}
\mathbb{P}\left({\Pi_{+}}(u)=\Pi^{\star}(u) | \mathcal{A}  \right)  = \iint dx dz E(x) E(z) \mathbb{E}\left[\phi_{x,z}\left(n, \mathbf{s} \right)| \mathcal{A}\right] \geq 1 - O(n^{-\epsilon}f(n)).
\end{flalign} We use Markov's inequality with (\ref{probassurA}) to show that $\mathbb{P}\left( \card{\mathcal{G}} \leq n- n^{1-\epsilon/2} \; | \; \mathcal{A} \right) \leq O\left(n^{-\epsilon/2}f(n)\right) $, hence w.h.p.
\begin{equation}
\label{controlgoods}
\card{\mathcal{G}} \geq n- n^{1-\epsilon/2}.
\end{equation}

Splitting the sum
\begin{flalign*}
\langle A, \Pi_{+} B \Pi_{+}^{\top} \rangle & = \sum_{u,v} A_{u,v} B_{\Pi_{+}(u),\Pi_{+}(v)} = \sum_{(u,v) \in \mathcal{G}^2 } A_{u,v} B_{u,v} + \sum_{(i,j) \notin \mathcal{G}^2 } A_{u,v} B_{\Pi_{+}(u),\Pi_{+}(v)},
\end{flalign*} one has, w.h.p.,
\begin{multline*}
\langle A, \Pi_{+} B \Pi_{+}^{\top} \rangle = \sum_{(i,j) \in \mathcal{G}^2 } A_{u,v}^2 + \sum_{\substack{(i,j) \notin \mathcal{G}^2\\ (\Pi_{+}(u),\Pi_{+}(v)) \neq (v,u) } } A_{u,v} A_{\Pi_{+}(u),\Pi_{+}(v)} \\+ \sum_{\substack{(i,j) \notin \mathcal{G}^2\\ (\Pi_{+}(u),\Pi_{+}(v)) = (v,u) } } A_{u,v}^2 + \xi \sum_{1 \leq i,j \leq n } A_{u,v} H_{\Pi_{+}(u),\Pi_{+}(v)} \\ 
\geq C_1 \frac{\card{\mathcal{G}}^2}{n} - C_2 \left(n^2-\card{\mathcal{G}}^2\right)\frac{\log n}{n} -  C_2 \xi n^2 \frac{\log n}{n}.
\end{multline*} We applied the law of large numbers for the first sum, lower-bounded the third sum by zero, and the classical inequality $\max_{u,v}\left\lbrace A_{u,v}, H_{u,v} \right\rbrace \leq C_2 \frac{\log n}{n}$ (which holds w.h.p.) for the two others. \\
Inequality (\ref{controlgoods}) and condition $(i)$ lead to, w.h.p.
\begin{equation*}
\langle A, \Pi_{+} B \Pi_{+}^{\top} \rangle \geq C_1 n - 2 C_1 n^{1- \epsilon/2} - 2 C_2 n^{1- \epsilon/2} \log n - C_2 n^{-1/6-\epsilon}\log n \geq C_3 n.
\end{equation*}
On the other hand, since by definition $\Pi_{-}(i)=\Pi_{+}(n+1-i)$, w.h.p.,
\begin{multline*}
\langle A, \Pi_{-} B \Pi_{-}^{\top} \rangle = \sum_{(u,v) \in \mathcal{G}^2 } A_{u,v}B_{n+1-u,n+1-v} + \sum_{\substack{(u,v) \notin \mathcal{G}^2 } } A_{u,v} B_{\Pi_{-}(u),\Pi_{-}(v)} \\
\leq O(\log n) + \frac{\card{\mathcal{G}}^2 }{n}o(1) + C_2 \left(n^2-\card{\mathcal{G}}^2\right)\frac{\log n}{n}.
\end{multline*} For the first sum, we used the law of large numbers: the variables $A_{u,v}$ and $B_{n+1-u,n+1-v}$ are independent in all cases but at most $n+1$, and this part of the sum is bounded by $O(\log n) $. We used the same control on Gaussian variables as above. \\
This gives
\begin{equation*}
\left(\langle A, \Pi_{-} B \Pi_{-}^{\top} \rangle \right)_{+} = o_{\mathbb{P}}(n),
\end{equation*} where $(x)_{+} := \max (0,x)$, which proves (\ref{Pi+caseieq}).
\end{proof}

\subsection{Proof of Lemma \ref{Pi+caseii}}
\begin{proof}[Proof of Lemma \ref{Pi+caseii}]\label{proof_lemma_Pi+caseii}
Recall that we work in the case $(ii)$ ($\xi = \omega(n^{-7/6+\epsilon})$ for some $\epsilon>0$), with $\langle v_1,v'_1 \rangle >0$ and $\Pi^{\star} = I_n$. We want to show that the aligning permutation between $v_1$ and $-v'_1$ has a very bad overlap. Considering the pair $(X,-Y)$ where $(X,Y) \sim \J(n,s)$, one can adapt the proof of Proposition \ref{EIG1:prop:zero_one_toy}, with the new definitions
\begin{align*}
	\widetilde{S}^{+}(x,y) &:= \mathbb{P}\left(X_1 > x, -Y_1 < -y\right), \mbox{ and}\\
	\widetilde{S}^{-}(x,y) &:= \mathbb{P}\left(X_1 < x, -Y_1 > -y\right).
\end{align*} The analysis is even easier since for all $x,z$, there exist two constants $c,C$ such that $$0 < c \leq \widetilde{S}^{+}(x,x+s z) , \, \widetilde{S}^{-}(x,x+ sz) \leq C <1.$$
It is then easy to check that the proof of Proposition \ref{EIG1:prop:zero_one_toy}, case $(ii)$ adapts well.
\end{proof}
\addtocontents{toc}{\protect\setcounter{tocdepth}{2}}
 
\end{subappendices}

\chapter{Alignment of sparse \ER graphs: information-theoretic results}\label{chapter:impossibility}
In this chapter, we study fundamental limits of graph alignment: in the correlated \ER model, we prove an impossibility result for partial recovery in the sparse regime, with constant average degree and correlation, as well as a general bound on the maximal reachable overlap. This bound is tight in the noiseless case (the graph isomorphism problem) and we conjecture that it is still tight with noise. The proof of this negative result relies on a careful application of the probabilistic method to build automorphisms between tree components of a subcritical \ER graph.\\

This chapter is based on the paper \textit{Impossibility of partial recovery in the graph alignment problem} \cite{ganassali2021impossibility}, published at \emph{COLT 2021}, which is a joint work with M. Lelarge and L. Massoulié.

\section{Introduction}
As we have seen in Section \ref{intro:subsection:short_survey} of the introduction, a vast majority of previous works focus on the exact (resp. quasi-exact) alignment, which is known to be feasible in the dense case, when $nqs \geq \log n$ (resp. $nqs \to \infty$). On the computational side, many algorithms are proposed for (quasi-)exact alignment; however, none of these succeed in the sparse setting with constant correlation and average degree $\lambda >0$, i.e. with $q=\lambda/n$. It is thus natural and interesting to tackle the challenging question of partial alignment in the sparse setting.

\subsection{A colored view on the correlated \ER model}\label{impossibility:section:introduction}
	Let us recall the definition of the \emph{correlated \ER model} (already introduced in \eqref{eq:CER_model}) in the sparse case: in this chapter, we represent the graphs $(G,G') \sim \G(n,\lambda/n,s)$ with respectively blue and red edges, and with the same set of nodes $[n]$. For each edge, the colors are samples independently:
	\begin{itemize}
		\item with probability $\lambda s/n$ to get two-colored edges;
		\item with probability $\lambda(1-s)/n$ to get a blue (monochromatic) edge;
		\item with probability $\lambda(1-s)/n$ to get a red (monochromatic) edge;
		\item with probability $1-\lambda(2-s)/n$ to get a non-edge,
	\end{itemize}
	where $\lambda>0$ and $s\in [0,1]$ are fixed parameters and $n$ is large. In this model, $G$ and $G'$ are both sparse $\G(n,\lambda/n)$ graphs. For large values of $n$, the fraction of edges of one graph that are shared with the other is of order $s$ (see Figure \ref{fig:img_GGp}).
	
	\begin{figure}[h]
		\centering
		\includegraphics[scale=0.8]{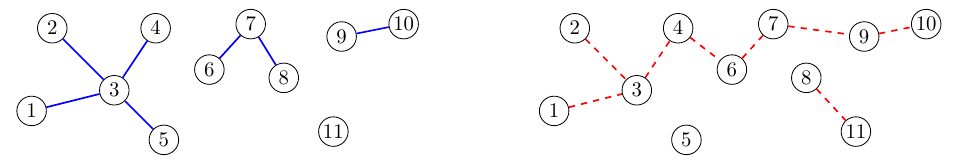}
		\caption{A realization of $(G,G')$ from the correlated \ER model, with $n=11$, $\lambda = 1.9$, and $s=0.7$. For the sake of readability, red edges are always dashed.}
		\label{fig:img_GGp}
	\end{figure}
	
	We then relabel the vertices of the red graph $G'$ with a uniform independent permutation $\pi^{\star} \in \cS_n$, and we observe $G$ and $H := G'^{\pi^{\star}}$, see Figure \ref{fig:img_GH}.
	Upon observing $G$ and $H$, the goal is to recover (or, reconstruct) partially the latent vertex correspondence $\pi^{\star}$ with probability converging to $1$ as $n\to \infty$.
	
	\begin{figure}[h]
		\centering
		\includegraphics[scale=0.8]{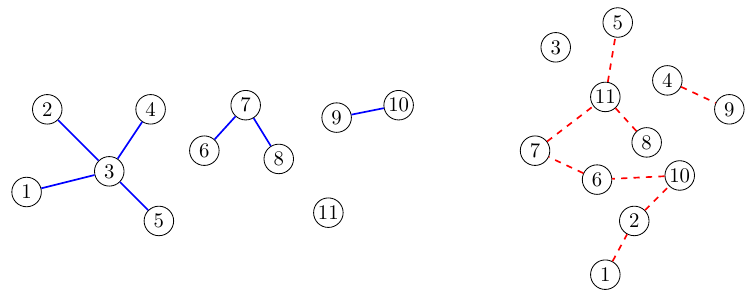}
		\caption{The pair $(G,H)$ corresponding to $(G,G')$ of Figure \ref{fig:img_GGp}, after relabeling $G'$ with the permutation $\pi^{\star} = (6)(1 \; 5 \; 3 \; 11\; 9 \;2 \;8 \;4 \;7 \;10)$.}
		\label{fig:img_GH}
	\end{figure}
	
	\subsection{Partial alignment in the sparse regime}
	 First note that since we are in the sparse regime, then even without noise, i.e. with $s=1$, there is no way to be able to map the $\Theta(n)$ isolated vertices\footnote{We refer to Theorem \ref{intro:theorem:connectivity_ER} of the introduction for a proof of this result.} in $G$ and $H$ better than chance. Hence, we rather focus on the partial alignment problem where we ask for the best possible fraction of matched vertices between $G$ and $H$. More formally, an \emph{estimator} $\hat{\pi}$ (of $\pi^{\star}$) is a $\cS_n$-valued measurable function of $(G, H)$. 
	 
	 Note that in this problem, the graphs could very well be unlabeled in the first place. We could assign the labels uniformly, the only interesting information being the correspondence between vertices. Hence, any estimator $\hat{\pi}$ must satisfy the \emph{equivariance} property, in the sense that for all $\sigma \in \cS_n$, 
	 
	 \begin{equation}\label{eq:equivariance}
	 \hat{\pi}(G^{\sigma},H) = \hat{\pi}(G,H) \circ \sigma^{-1} \,.
	 \end{equation}
	 
	\begin{remark}
	Note that unsurprisingly, the maximum a posteriori estimator $\hat{\pi}_{\MAP}$, which is the permutation solving the maximization problem \eqref{eq:QAP}, satisfies \eqref{eq:equivariance}.
	
	Another -- though more cumbersome -- approach to enforce some notion of equivariance (and put aside some trivial estimators such as $\hat{\pi}=\id$) would be to redefine the overlap as follows:
	\begin{equation*}\label{eq:overlap3}
	\ov(\hat{\pi}(G,H),\pi^{\star}) := \frac{1}{n \cdot n!} \sum_{\sigma \in \cS_n} \sum_{u=1}^{n} \one_{\hat{\pi}(G^{\sigma},H)(u)= \pi^{\star} \circ  \sigma^{-1} (u)} \,.
	\end{equation*} With this definition, it is ensured that for any $\sigma \in \cS_n$,
	\begin{equation*}
	\ov(\hat{\pi}(G^{\sigma},H),\pi^{\star}) = \ov(\hat{\pi}(G,H),\pi^{\star} \circ \sigma) \,.
	\end{equation*}
	\end{remark}
	
	We recall that partial alignment consists in finding a estimator $\hat{\pi}$ of $\pi^{\star}$ satisfying $\ov(\hat{\pi},\pi^{\star}) > \alpha n$ with high probability, for some $\alpha>0$.  Let us start by stating a conjecture\footnote{At the time this manuscript is being completed, this conjecture and a more general form are proved in \cite{Ding22}. The result of Theorem \ref{impossibility:theorem:sparse_threshold} however does not lose of its appeal, since it gives an upper bound on the best reachable overlap.}:
	\begin{conj*}
		\item[$(i)$] If $\lambda s \leq 1$, partial reconstruction is impossible, i.e. for any $\alpha >0$, for all estimator $\hat{\pi}$, $$\dP\left(\ov(\hat{\pi},\pi^{\star}) > \alpha n  \right) \underset{n \to \infty}{\longrightarrow} 0.$$
		\item[$(ii)$] If $\lambda s > 1$, partial reconstruction is possible (feasible), i.e. there exists $\alpha>0$ and an estimator $\hat{\pi}$ such that $$\dP\left(\ov(\hat{\pi},\pi^{\star}) > \alpha n  \right) \underset{n \to \infty}{\longrightarrow} 1.$$
	\end{conj*}
		
	\begin{figure}[h]
		\centering
		\includegraphics[scale=0.92]{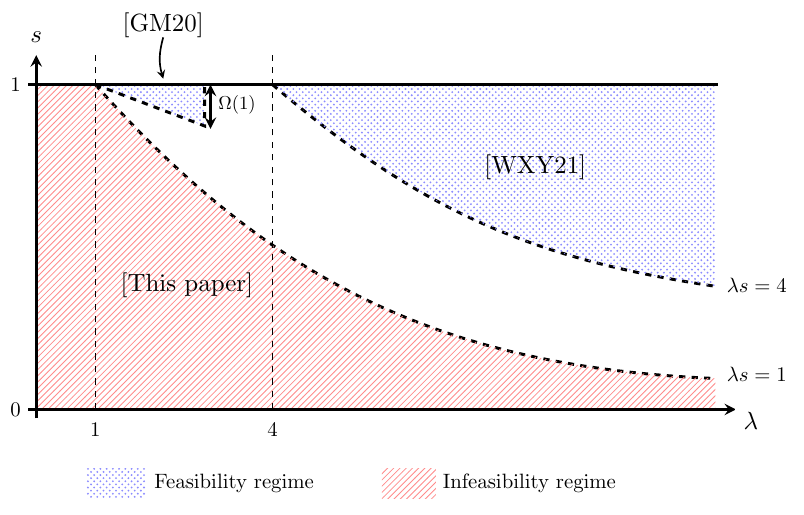}
		\caption[Caption for LOF]{Diagram of the $(\lambda,s)$ regions where partial reconstruction is known\protect\footnotemark\,to be impossible (resp. possible), in the sparse regime where $\lambda, s$ are fixed constants.}
		\label{fig:diagram}
	\end{figure}

\footnotetext{at the time of this contribution.}

	\subsubsection{Main result} The main result of the chapter is as follows:
	\begin{theorem}\label{impossibility:theorem:sparse_threshold} 
		For $\lambda>0$ and $s\in[0,1]$, we have for any $\alpha>0$, for any estimator $\hat{\pi}$:
		\begin{equation}
		\dP\left(\ov(\hat{\pi},\pi^{\star}) > (c(\lambda s)+ \alpha) n  \right) \underset{n \to \infty}{\longrightarrow} 0,
		\end{equation}
		where $c(\mu)$ is the greatest non-negative solution to the equation $e^{-\mu x}=1-x$.
	\end{theorem}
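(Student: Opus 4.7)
\textbf{Reduction to the Bayes estimator.} First, I would reduce the problem to bounding the overlap of the Bayes-optimal estimator. Since the prior on $\pi^\star$ is uniform on $\cS_n$, the posterior is proportional to the likelihood and, in the colored formulation of Section~\ref{impossibility:section:introduction}, a direct computation as in~\eqref{eq:posterior_erdos_renyi} gives
\begin{equation*}
\dP\left(\pi^\star=\pi \cond G,H\right)\;\propto\;C^{A_\pi},\qquad A_\pi:=\card{\set{uv\in E(G):\pi(u)\pi(v)\in E(H)}},
\end{equation*}
with $C\ge 1$. Consequently the best equivariant estimator is $\hat\pi(u)=\arg\max_v\dP(\pi^\star(u)=v\cond G,H)$, and it is enough to prove $\dE[\ov(\hat\pi,\pi^\star)]\le c(\lambda s)+o(1)$; concentration of $\ov(\hat\pi,\pi^\star)$ around its mean is then obtained by a bounded-differences argument on the entries of $(G,H)$.

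\textbf{Exposing the intersection graph.} Next I would expose the intersection $I:=\set{uv\in E(G):\pi^\star(u)\pi^\star(v)\in E(H)}$, which is marginally distributed as $\G(n,\lambda s/n)$; conditionally on $I$, the blue edges $E(G)\setminus I$ and the red edges $E(H^{(\pi^\star)^{-1}})\setminus I$ are two independent families of i.i.d.\ Bernoulli$(\lambda(1-s)/n)$ variables on $\binom{[n]}{2}\setminus I$. By local weak convergence of $\G(n,\mu/n)$ to the Poisson$(\mu)$ Galton--Watson tree with $\mu=\lambda s$, and since $1-c(\mu)$ is precisely its extinction probability (exactly the relation $1-x=e^{-\mu x}$), for every $\alpha>0$ there exists $K=K(\alpha,\lambda s)$ such that w.h.p.\ the set $\mathcal{S}_K$ of vertices lying in a tree component of $I$ of size at most $K$ satisfies $\card{\mathcal{S}_K}\ge(1-c(\lambda s)-\alpha/2)\,n$. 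A pigeonhole on isomorphism types then produces an unlabeled tree shape $\tau$ of size $k\le K$ whose number of occurrences as a component of $I$ is $\Theta(n)$.

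\textbf{Constructing indistinguishable swaps.} The heart of the proof is to produce a large commuting family of involutions $\sigma\in\cS_n$ with $A_{\pi^\star\sigma}=A_{\pi^\star}$, making $\pi^\star$ and $\pi^\star\sigma$ equiposterior. For any pair $(T,T')$ of $I$-components of shape $\tau$, fix an isomorphism $\phi:T\to T'$ and let $\sigma_{T,T'}$ be the involution of $[n]$ that exchanges $T$ and $T'$ via $\phi$ and fixes every other vertex. A direct calculation decomposes $A_{\pi^\star\sigma_{T,T'}}-A_{\pi^\star}=\Delta^{\mathrm{in}}+\Delta^{\mathrm{out}}$, where $\Delta^{\mathrm{in}}$ collects adjustments from edges with both endpoints in $T\cup T'$ and vanishes identically (because $T,T'$ are \emph{whole} components of $I$ and $\phi$ is an isomorphism), while $\Delta^{\mathrm{out}}$ is a sum of $\pm 1$ contributions from blue edges of $G$ incident to $T\cup T'$ and from red edges of $H^{(\pi^\star)^{-1}}$ incident to $\pi^\star(T)\cup\pi^\star(T')$. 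I would then build, via the probabilistic method, a random matching of the $\Theta(n)$ shape-$\tau$ components into pairs whose vertex sets and $\pi^\star$-images are blue/red-disjoint; a first- and second-moment computation on the centered variables $\Delta^{\mathrm{out}}(T,T')$, using the independence from the previous step, then shows that a positive fraction of the pairs satisfy $\Delta^{\mathrm{out}}(T,T')=0$. The resulting $\Theta(n)$ commuting involutions generate a subgroup $\Sigma\le\cS_n$ of size $2^{\Theta(n)}$ acting freely on $\Theta(n)$ vertices, on whose orbit the posterior is uniform; this forces $\sum_u\dP(\hat\pi(u)=\pi^\star(u)\cond G,H)\le c(\lambda s)\,n+\alpha n$, which combined with Step~1 concludes.

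\textbf{Main obstacle.} The delicate point is the third step: the increments $\Delta^{\mathrm{out}}(T,T')$ are \emph{not} independent across pairs since a single blue or red edge may touch several candidate components, and the exact identity $\Delta^{\mathrm{out}}=0$ becomes a restrictive condition as $\lambda$ grows. I expect two complementary remedies. First, one can relax $\Delta^{\mathrm{out}}=0$ to the weaker statement that the posterior ratio between $\pi^\star$ and $\pi^\star\sigma_i$ stays bounded, which already prevents any estimator from selecting the correct representative within an orbit with probability bounded away from $1/\card{\mathrm{orbit}}$. Second, restricting the random matching to pairs whose blue and red neighborhoods are pairwise vertex-disjoint -- a positive fraction of all pairs when $\lambda$ is a constant -- restores enough independence to run a Paley--Zygmund bound on the number of ``good'' swaps.
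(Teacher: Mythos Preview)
Your overall strategy—expose the intersection graph $I\sim\G(n,\lambda s/n)$, build relabelings that permute its small tree components, and control extra double edges probabilistically—is exactly the paper's. But the specific construction you propose is too weak to reach the constant $c(\lambda s)$, and the group-action argument at the end does not give what you claim.

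The central gap is that pairwise swaps produce per-vertex orbits of size only $2$. Your commuting involutions $\sigma_{T,T'}$ act on disjoint supports, so the group they generate is $(\dZ/2\dZ)^m$, and for any vertex $u$ in some swapped component $T_i$ the orbit of $u$ under the whole group is just $\{u,\phi_i(u)\}$. Hence the posterior marginal of $\pi^\star(u)$, restricted to the orbit of $\pi^\star$, is uniform on two points, and the Bayes estimator recovers such a vertex with probability $1/2$, not $o(1)$. Summing, the bound you actually obtain on $\sum_u\dP(\hat\pi(u)=\pi^\star(u)\mid G,H)$ is at best $n-m$, with $m$ the number of swapped vertices; this is far from $c(\lambda s)n$. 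The problem is compounded by the fact that you use a \emph{single} tree shape $\tau$, so the swapped vertices cover only a fixed fraction of the $(1-c(\lambda s))n$ small-tree vertices, leaving the rest entirely unconstrained. There is a secondary issue: verifying $\Delta^{\mathrm{out}}=0$ for each generator does not imply it for products—if $\sigma_1$ swaps $(T_1,T_1')$ and $\sigma_2$ swaps $(T_2,T_2')$, a blue edge between $T_1$ and $T_2$ can land on a red edge between $T_1'$ and $T_2'$ under $\sigma_1\sigma_2$ even when neither $\sigma_1$ nor $\sigma_2$ alone creates an extra double edge—so the posterior need not be constant on the full orbit.

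The paper resolves all three issues by changing both the construction and the final counting argument. Instead of a large group, it builds only a \emph{fixed} number $p$ of permutations $\sigma_1,\ldots,\sigma_p$; each $\sigma_i$ is obtained by drawing, for every tree shape $\bT$ of size $\le K(n)=\lfloor\sqrt{\log n}\rfloor$, a uniform random permutation of all $X_\bT=\Theta(n)$ components of that shape (Algorithm~\ref{algo:rec_construction}). This makes each $\sigma_i$ derange essentially all $(1-c(\lambda s))n$ small-tree vertices, so two independent draws satisfy $\ov(\sigma_i,\sigma_j)\le c(\lambda s)n+o(n)$ w.h.p.\ (Lemma~\ref{lemma:controle_overlap_ij}). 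Since only $p$ permutations are needed, the Poisson approximation (Theorem~\ref{theorem:poisson}) shows that with positive probability none of them creates an extra double edge, so by the probabilistic method such $\sigma_1,\ldots,\sigma_p$ exist. The conclusion is then a direct pigeonhole rather than an orbit argument: the $p$ candidate alignments $\pi^\star\circ\sigma_i$ are equiposterior and pairwise at overlap $\le c(\lambda s)n+o(n)$, so any estimator can have overlap $>(c(\lambda s)+\alpha)n$ with at most $O(1/\alpha)$ of them, forcing the probability below $O(1/(\alpha p))$. This also sidesteps your concentration step via bounded differences, which is delicate since flipping a single edge can change the MAP estimator on many vertices.
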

	Note that a well-known result (see e.g. \cite{Bollobas2001}) is that $c(\mu)$ is the typical fraction of nodes in the largest component of an \ER graph with average degree $\mu$, and that $c(\mu)=0$ if $\mu \leq 1$, and $c(\mu) \in (0,1)$ whenever $\mu >1$. Hence, Theorem \ref{impossibility:theorem:sparse_threshold} implies that partial reconstruction is impossible for $\lambda s \leq 1$. Moreover, if $\lambda s > 1$, any estimator can reach an overlap of at most $c(\lambda s) n+o(n)$. Note that $c(\lambda s)$ is the typical fraction of nodes in the largest component of the intersection graph. 
	
	\subsubsection{Related work in the sparse regime}
	In this chapter, we work in the regime where $\lambda >0$ and $s \in [0,1]$ are fixed constants. Our results prove part $(i)$ of the conjecture, which had not been previously studied, and give an upper bound on the maximal reachable overlap in case $(ii)$. 
	
	Most relevant related results\footnote{at the time of this contribution.} for our conjecture \cite{Ganassali20a} which proves\footnote{see Chapter \ref{chapter:NTMA}.} that partial recovery is possible in polynomial time in a region $\cR := \{(\lambda,s); \: \lambda\in [1,\lambda_0) \text{ and } s \in (s^*(\lambda),1] \}$ for some function $s^*(\lambda)<1$, so that interestingly the case $\lambda > \lambda_0$ is left open, nevertheless much in step with $(ii)$. Previous results from \cite{Hall20} showed that partial reconstruction was feasible for $\lambda s>C$, with an unspecified constant $C>20$. The work \cite{Wu2021SettlingTS} significantly improves these results, narrowing down the gap for $(ii)$: when translated with our notations, it is shown that partial alignment is possible (theoretically) if $\lambda s \geq 4+\eps$. In addition, an impossibility condition of the form $n q s \leq 1-\eps$ is also established, but in a denser case, where $nq/s = \omega(\log^2 n)$. Note that this last impossibility result does not cover our regime, where both the mean degree $nq$ and the correlation parameter $s$ are of order $1$.
	
	These results are summed up in a diagram in Figure \ref{fig:diagram}. In particular, our bound is tight and our conjecture is almost solved for the case $s=1$, with a remaining gap $[\lambda_0,4]$ being still open. 
	
	For the impossibility part, \cite{Wu2021SettlingTS} works with the mutual information $I(\pi^{\star}; G, G')$, closely related to the minimum mean squared error. They are able to derive an upper bound on the expectation of $\ov(\hat{\pi},\pi^{\star})$, for any estimator, which happens to be $o(1)$ when the mean degree in the parent graph of $G$ and $G'$ is at least of order $\log^2 n$, but not when $\lambda, s$ are of order $1$. In our result, we do not work directly with the mutual information, but we are considering the posterior distribution of $\pi^{\star}$: in simple words, we show that under the assumption $\lambda s \leq 1$ the posterior distribution puts equal weights on permutations that overlap only on a vanishing fraction of points. This is done by building ad hoc permutations with the probabilistic method.\\
	
	In this work, we derive information-theoretic results: our proof is not related to a particular algorithm. The search for efficient algorithms in this field is a very active field of research: we refer once again to Section \ref{intro:subsection:short_survey}. Unfortunately, all proposed algorithms are not known to give a positive fraction of overlap in the regime $\lambda s \geq 1$, hence leaving the question of the tightness of our bound open. New light will be shed on this question in Chapter \ref{chapter:MPAlign}.

	\section{Main results and global intuition} 
	\subsection{Some definitions} 
	Let us first recall that for two permutations $\sigma,\sigma' \in \cS_n$ we denote by $\ov(\sigma,\sigma')$ the number of points on which $\sigma=\sigma'$, namely 
	\begin{equation*}
		\ov(\sigma,\sigma') := \sum_{u=1}^{n} \one_{\sigma(u)=\sigma'(u)} \, .
	\end{equation*}
	
	Through all the chapter, we will implicitly consider that every graph $G$ of size $n$ has the canonical vertex set $[n]$. We will denote by $E(G)$ its edge set and $e(G)$ its number of edges.
	
	For any pair of graphs $(G,G')$, both labeled on $[n]$, we denote by $G \lor G'$ (resp. by $G \land G'$) the union graph (resp. intersection graph) of $G$ and $G'$, that is the graph with same node set and edge set $E(G) \cup  E(G')$ (resp. $E(G) \cap  E(G')$). The symmetric difference of $G$ and $G'$, denoted by $G \triangle G'$, is the subgraph made of edges of $G \lor G'$ that are not in $G \land G'$.
	
	In the case where edges are colored, say edges of $G$ (resp. $G'$) are blue (resp. red), these definitions extend to ensure colour preservation: note e.g. that in this case $G \land G'$ is simply the subgraph of $G \lor G'$ consisting of two-colored edges (see Figure \ref{fig:img_GunionGp}). 
	
	\begin{figure}[h]
		\centering
		\includegraphics[scale=0.8]{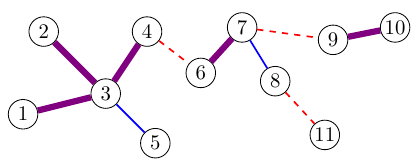}
		\caption{The graph $G \lor G'$ with $(G,G')$ of Figure \ref{fig:img_GGp}. For the sake of readability, the two-colored edges of $G \land G'$ are always drawn thick and purple.}
		\label{fig:img_GunionGp}
	\end{figure} 
	
	When the pair $(G,G')$ is drawn under the correlated \ER model, for all $u,v \in [n]$, we write $u \noir v$ (resp. $u \rouge v$) if $u$ and $v$ are connected in $G$, that is the edge is either blue or two-colored (resp. in $G'$, either red or two-colored).
	
	For $G$ a graph with vertex set $[n]$ and $\sigma \in \cS_n$, we denote by $G^{\sigma}$ the \emph{relabeling of $G$ with $\sigma$}, which is the graph with same vertex set $[n]$ and edges $\set{\sigma(u), \sigma(v)}$ for all $\set{u,v} \in E(G)$.
	
	Finally we recall the definition of $c(\mu)$: for all $\mu > 0$, $c(\mu)$ is the greatest non-negative solution to the equation $e^{-\mu x}=1-x$. We also recall the fact that for $\mu \leq 1$, $c(\mu)=0$.

	\subsection{General intuition on the main result} 
	Let us describe the general intuition for our result : recall that we are given $(G,H)$ drawn under the correlated \ER model with planted relabeling $\pi^{\star}$. The idea of the argument for impossibility is to show that, there are w.h.p. lots of permutations that have the same weight for the posterior distribution of $\pi^{\star}$ given $G, H$, and that are far apart. In other words, an informal statement is as follows :
	
	\begin{informal*}
		We want to show that there exists lots of relabelings $G^{\sigma_i}$ of $G$ such that:
		\begin{itemize}
			\item[$(i)$] There is no way of deciding (statistically) whether the two graphs we observe are $(G, G')$ or some $(G^{\sigma_i}, G')$.
			\item[$(ii)$] These relabelings are far apart from each other on small components of $G \land G'$.
		\end{itemize}
	\end{informal*}
	
	Let us give a formal version of the previous intuition. First note that for any labeled graphs $g,g'$ on $[n]$:
	\begin{flalign*}
		\dP(G=g,G'=g') 
		& = \left(\frac{\lambda s}{n}\right)^{e(g \land g')} \left(\frac{\lambda (1-s)}{n}\right)^{e(g \triangle g')} \left(1 - \frac{\lambda (2-s)}{n}\right)^{\binom{n}{2} - e(g \lor g')}.
	\end{flalign*}
	Since $$e(g \lor g') = e(g) + e(g') - e(g \land g') \quad \mbox{and} \quad e(g \triangle g') = e(g \lor g') - e(g \land g'),$$  
	$\dP(G=g,G'=g')$ is uniquely determined by $e(g), e(g')$ and $e(g \land g')$. In particular, the dependence of the joint distribution in $e(g \land g')$ is given by:
	
	\begin{equation}\label{eq:joint_distribution}
	\dP(G=g,G'=g')  \propto \left(\frac{s(n-\lambda(2-s))}{\lambda (1-s)^2}\right)^{e(g \land g')}.
	\end{equation}
	
	In view of \eqref{eq:joint_distribution}, preserving the posterior distribution by relabeling a graph $G$ is simply preserving the number of edges of their intersection graph. We now have a formal rephrasing for our conditions $(i)$ and $(ii)$ above: we encapsulate them in a theorem, which will constitute the bulk of this chapter.
	
	\begin{theorem}\label{theorem:autos} 
		Fix an integer $p >0$. Consider $(G,G')$ drawn under the correlated \ER model $\G(n,\lambda/n,s)$. Then, with high probability,  there exists $\left\lbrace \sigma_i \right\rbrace_{i \in [p]}$ -- that depend on the intersection graph $G \land G'$ -- such that 
		\begin{itemize}
			\item[$(i)$] $\forall i \in [p], \; e\left(G^{\sigma_i} \land G'\right) = e\left(G \land G'\right)$,
			
			\item[$(ii)$] $\forall i,j \in [p], \; i \neq j \implies \ov(\sigma_i,\sigma_j) \leq c(\lambda s) n + o(n)$, where the $o(n)$ is independent of $i,j \in [p]$.
		\end{itemize}
	\end{theorem}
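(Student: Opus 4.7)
The plan centers on the structure of the intersection graph $G \land G'$, which, conditional on $\pi^\star$ (taken to be the identity without loss of generality), is distributed as the \ER graph $\G(n, \lambda s/n)$. Standard sparse \ER asymptotics imply that with high probability the set $V_0$ of vertices lying in tree components of $G \land G'$ of size at most a large constant $K$ has cardinality $(1-c(\lambda s))\, n+o(n)$; moreover, for each fixed unlabeled tree shape $\tau$ with $|\tau|\le K$, the number $N_\tau$ of tree components of $G \land G'$ of shape $\tau$ is $\Theta(n)$ and concentrated around its mean. This abundance of ``interchangeable'' components within each isomorphism class is the combinatorial resource that I will exploit to build the $\sigma_i$'s.

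To construct the $\sigma_i$, I would apply the probabilistic method on $V_0$. Independently for each $i\in[p]$, draw inside each isomorphism class a uniformly random perfect matching on the components of that class, together with a uniformly random tree isomorphism between each paired pair, and let $\sigma_i$ act as the resulting swaps on $V_0$ and as the identity on $V\setminus V_0$. By construction, $\sigma_i$ depends only on $G \land G'$ and is an automorphism of $G \land G'$ viewed as a simple graph.

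The main obstacle is verifying condition $(i)$. Writing $G=(G \land G')\sqcup X$ and $G'=(G \land G')\sqcup Y$ where $X$ (resp.\ $Y$) is the set of ``blue-only'' (resp.\ ``red-only'') edges, the fact that $\sigma_i$ is an automorphism of $G \land G'$ yields $e(G^{\sigma_i}\land G')-e(G \land G')=|X^{\sigma_i}\cap Y|$, a cross-term counting blue edges sent by $\sigma_i$ onto red edges. One then needs to show that the random construction can be refined so that this cross-term vanishes exactly: conditionally on $G \land G'$, the sets $X$ and $Y$ are independent of the random matching and have density $O(1/n)$ per pair, so the expected size of the cross-term is $O(1)$; a careful application of the probabilistic method (discarding matchings producing even a single wrong crossing, and appealing either to a Lov\'asz local lemma argument or to direct second-moment estimates on the boundary of each isomorphism class) then yields that a positive fraction of the random $\sigma_i$'s satisfy the exact equality, from which $p$ valid permutations can be extracted. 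This is the technical crux of the proof.

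Finally, condition $(ii)$ follows by a standard concentration argument: for $i\neq j$ and any $v\in V_0$, the independence of the two random matchings and isomorphisms gives $\dP(\sigma_i(v)=\sigma_j(v))=O(1/n)$ (up to the constant-size automorphism group of $\tau$), so $\dE[\ov(\sigma_i,\sigma_j)]\le|V\setminus V_0|+O(1)=c(\lambda s)n+o(n)$; a Chernoff-type bound on sums of weakly dependent indicators then upgrades this to the desired high-probability statement, uniformly over $(i,j)\in[p]^2$.
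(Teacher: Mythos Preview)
Your high-level strategy matches the paper's: build the $\sigma_i$ by permuting, within each isomorphism class of small tree components of $G\land G'$, the components among themselves, and then control the ``extra double edges'' $|X^{\sigma_i}\cap Y|$ by the probabilistic method. However, two points deserve correction.

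First, a genuine (if easily repaired) gap: you take $K$ to be a fixed constant and claim $|V_0|=(1-c(\lambda s))n+o(n)$. This is false for fixed $K$: the vertices outside the giant component but on tree components of size larger than $K$ form a \emph{positive} fraction $\delta_K n$ of $[n]$, with $\delta_K\to 0$ only as $K\to\infty$. Since the $\sigma_i$ fix all of $V\setminus V_0$, your overlap in $(ii)$ would be at least $(c(\lambda s)+\delta_K)n$, violating the claimed bound. The paper takes $K=K(n)\to\infty$ (specifically $K(n)=\lfloor\sqrt{\log n}\rfloor$), which is precisely what makes $(ii)$ go through; on the other hand, letting $K$ grow forces you to check that the number of components in each isomorphism class is still $\Theta(n)$ uniformly over the (now growing) family of shapes, and that the total number of shapes does not swamp the union bounds.

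Second, and more importantly, your treatment of the crux $|X^{\sigma_i}\cap Y|=0$ is too optimistic. You correctly compute that the expectation is $O(1)$, but then propose to ``discard matchings producing even a single wrong crossing'' and extract $p$ valid permutations from a positive fraction of good ones. This does not work as stated: you need the $p$ permutations to satisfy $(i)$ \emph{simultaneously} while also satisfying the pairwise overlap bound $(ii)$, so you cannot select them one at a time after seeing $X,Y$. What is actually required is that $\dP(\Delta(\sigma_1)=\cdots=\Delta(\sigma_p)=0)$ be bounded away from zero, where $\Delta(\sigma_i):=|X^{\sigma_i}\cap Y|$ and the probability is over the monochrome edges given fixed $\sigma_1,\ldots,\sigma_p$. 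The paper obtains this via a full \emph{joint} Poisson approximation: it shows that $(\Delta(\sigma_1),\ldots,\Delta(\sigma_p))$ converge in distribution to independent Poisson variables, by the method of factorial moments. The key technical ingredient---absent from your sketch---is a control on the number of edges that are \emph{common fixed edges} of several $\sigma_i$'s simultaneously: one needs that pairs of $\sigma_i$'s share $O(n^{1+o(1)})$ fixed edges, triples share $O(n^{o(1)})$, and quadruples or more share none. These bounds are what make the cross-moments in the factorial-moment expansion factorize correctly. Neither the Lov\'asz local lemma nor a second-moment argument on individual $\Delta(\sigma_i)$ gives this joint conclusion without essentially redoing this analysis. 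Your choice of random matchings rather than random permutations is a cosmetic difference, but you would still need the analogous fixed-edge control for your construction.
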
 Let us now explain how Theorem \ref{theorem:autos} implies our impossibility result via a simple pigeonhole principle. 
	\begin{proof}[Proof of Theorem \ref{impossibility:theorem:sparse_threshold}]
		Let us take $\alpha>0$. We want to control the probability that the overlap between an estimator $\hat{\pi}$ and $\pi^{\star}$ is greater than $\alpha n + c(\lambda s) n$. Fix $\eps>0$, and take $p$ large enough so that $$\alpha\eps  p >2.$$ 
		First note that point $(i)$ together with \eqref{eq:joint_distribution} gives that the joint probability of $(G,G',\pi^{\star})$ is is equal to that of $(G^{\sigma_i},G',\pi^{\star})$, for all $i \in [p]$. Thus, for all estimator $\hat{\pi}$ depending on $G, H = G'^{\pi^{\star}}$, one has
		\begin{equation}\label{eq:egalite_distrib}
		\forall i \in [p], \; \ov\left(\hat{\pi}(G^{\sigma_i}, H),\pi^{\star}\right) \overset{(d)}{=} \ov\left(\hat{\pi}(G , H),\pi^{\star}\right),
		\end{equation} and by \eqref{eq:equivariance}, we also have
		\begin{equation}
		\forall i \in [p], \; \ov\left(\hat{\pi}(G^{\sigma_i}, H),\pi^{\star}\right) = \ov\left(\hat{\pi}(G, H),\pi^{\star} \circ \sigma_i\right).
		\end{equation}
		Let
		\begin{equation*}
			X :=  \sum_{i \in [p]} \one_{\ov(\hat{\pi},\pi^{\star} \circ {\sigma_i})>(c(\lambda s) + \alpha)n}
		\end{equation*} Note that because of point $(ii)$, all $\ov(\pi^{\star} \circ {\sigma_i},\pi^{\star} \circ {\sigma_j})$ are at most $c(\lambda s) n + o(n)$ for $i \neq j \in [p]$. Thus, there are at least $X \times (\alpha -o(1)) n$ distinct points among the node set $[n]$. This gives that one necessarily has 
		\begin{equation}
		X \leq \frac{1}{\alpha -o(1)}.
		\end{equation}
		
		Then, taking the expectation and considering the event on which the set $\left\lbrace \sigma_i \right\rbrace_{i \in [p]}$ of Theorem \ref{theorem:autos} exists -- which happens with probability $1-o(1)$ -- gives
		\begin{flalign*}
			\dE\left[X \right] & \geq \sum_{i=1}^{p} \dP\left(\ov(\hat{\pi},\pi^{\star} \circ \sigma_i ) > (c(\lambda s) + \alpha)n  \right) - p \times o(1) \\
			& = p \times \dP\left(\ov(\hat{\pi},\pi^{\star}) > (c(\lambda s) + \alpha)n \right) - o(1).
		\end{flalign*} Hence, 
		\begin{equation}
		\dP\left(\ov(\hat{\pi},\pi^{\star}) > (c(\lambda s) + \alpha)n \right) \leq \frac{1}{p (\alpha -o(1))} + o(1).
		\end{equation} 
		For $n$ large enough, the right-hand side of the last term is less that $\frac{1}{p(\alpha/2)}$, which is less than $\eps$. This proves as desired that for all $\alpha >0$
		\begin{equation}
		\dP\left(\ov(\hat{\pi},\pi^{\star}) > (c(\lambda s) + \alpha)n \right) \underset{n \to \infty}{\longrightarrow} 0.
		\end{equation}
	\end{proof}
	
	We are now left to understand how to build ad hoc permutations verifying points $(i)$ and $(ii)$ of Theorem \ref{theorem:autos}. In order to build these permutations, we are going to relabel the vertices on small tree components of the intersection graph $G \land G'$. As a first step, we hereafter check that they indeed nearly cover the whole graph, when letting aside the giant component.
	
	\subsection{Vertices on small tree components}
	We briefly recall the definition of the simple \ER model $\G(n,p)$: it consist in drawing a (single) graph with node set $[n]$ in which every edge is independently present with probability $p$. Let us begin with a classical result:
	
	\begin{lemma}[\cite{Bollobas2001}, Corollary 5.8, Theorem 6.11]
		\label{lemma:bollobas_trees}
		Let $G\sim \G(n,\mu/n)$ with $\mu>0$, and $a_n \to \infty$. Then, with high probability, $G$ has a giant component of order $c(\mu) n +o(n)$ and outside the giant component, at least $(1-c(\mu)) n-a_n$ vertices are on tree components.
	\end{lemma}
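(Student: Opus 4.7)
The lemma packages two classical facts -- the Erd\H{o}s--R\'enyi giant component theorem and the near-absence of non-tree components outside the giant -- and I would prove each separately before combining them by a union bound. Both are standard consequences of the techniques in \cite{Bollobas2001}, so in the final write-up I would likely cite directly rather than reproduce the proofs; what follows is just the natural approach from scratch.

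For the first clause, the plan is the standard branching-process coupling. Running a breadth-first exploration of the component $C(u)$ of a uniform vertex $u$, the boundary offspring counts have distribution $\Bin(n-k,\mu/n)$ after $k$ vertices have been revealed, which for $k = o(n)$ is well approximated by $\Poi(\mu)$; the associated Galton--Watson tree survives with probability exactly $c(\mu)$, the largest root of $e^{-\mu x} = 1-x$. A first-moment computation then gives $\dE[\#\{u:|C(u)|>K_n\}] = c(\mu)n + o(n)$ for any $K_n \to \infty$ slowly, and a second-moment argument together with a sprinkling step -- gluing all such large exploration fragments into a single component -- upgrades this to concentration and uniqueness of the giant.

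For the second clause, I would estimate by a direct first moment the number $N$ of vertices lying on a non-tree component of size at most $\log^{2}n$; the tail above $\log^{2}n$ is handled, for $\mu \neq 1$, by the classical exponential decay of subcritical cluster sizes. A vertex lies on such a non-tree component of size $k$ iff it belongs to a connected induced subgraph on $k$ vertices carrying at least $k$ edges, disconnected from the remaining $n-k$. Counting spanning trees via Cayley's formula and bounding the excess edge crudely by $\binom{k}{2}$, the leading (unicyclic) contribution yields
\begin{equation*}
\dE[N] \;\le\; \sum_{k=3}^{\log^{2}n} k\binom{n}{k} k^{k-2}\binom{k}{2}\left(\frac{\mu}{n}\right)^{k}\left(1-\frac{\mu}{n}\right)^{k(n-k)},
\end{equation*}
which by Stirling reduces to $\sum_k \mathrm{poly}(k)\,(\mu e^{1-\mu})^{k}$. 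Since $\mu e^{1-\mu}\le 1$ with equality only at $\mu=1$, this converges away from criticality, giving $\dE[N] = O(1)$ and hence $N \le a_n$ with high probability for every $a_n\to\infty$ by Markov's inequality. Combining with the giant estimate of the previous paragraph then proves the lemma.

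The only delicate case is the critical regime $\mu=1$, where $c(\mu)=0$, the ``giant'' is of order $n^{2/3}$ rather than linear, and the series above sits at its summability threshold. I expect this to be the main bookkeeping obstacle: one has to truncate at a power of $n$ rather than at $\log^{2}n$ and invoke the classical $\Theta(n^{2/3})$ critical-cluster estimates. Since $c(1)=0$ the giant claim is vacuous there, and the tree claim reduces to showing that the total mass of unicyclic-or-worse components is $o_p(n)$, which the same first-moment computation delivers up to polylogarithmic factors. Everywhere away from criticality, the geometric decay in $k$ makes the argument essentially routine.
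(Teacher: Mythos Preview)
The paper does not prove this lemma at all: it is stated with the bracketed citation to \cite{Bollobas2001} (Corollary~5.8 and Theorem~6.11) and used immediately as a black box to derive the next lemma on small tree components. Your own instinct --- ``in the final write-up I would likely cite directly rather than reproduce the proofs'' --- is exactly what the paper does, so there is nothing to compare against. The sketch you give (branching-process coupling plus sprinkling for the giant, first-moment bound on unicyclic components for the tree claim) is the standard route and is essentially how Bollob\'as argues, but none of it appears in the paper.
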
 We need here a slight adaptation of this result, showing that $(1-c(\mu))n-o(n)$ vertices are in fact on \emph{small} tree components. 
	\begin{lemma}
		\label{lemma:small_trees}
		Let $G\sim \G(n,\mu/n)$ with $\mu >0$, and $K(n) \to \infty$. Then with high probability, $(1-c(\mu))n-o(n)$ vertices are on tree components of size at most $K(n)$.
	\end{lemma}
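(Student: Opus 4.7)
The plan is to combine Lemma \ref{lemma:bollobas_trees} with a tail estimate for the sizes of tree components obtained by the first moment method. Let $X_k$ denote the number of connected components of $G$ that are trees on exactly $k$ vertices. Then $\sum_k k X_k$ counts all vertices lying on tree components, and Lemma \ref{lemma:bollobas_trees} ensures that w.h.p.\ this total is at least $(1-c(\mu))n - a_n$ for any $a_n\to\infty$. Hence it suffices to establish that $\sum_{k > K(n)} k X_k = o(n)$ w.h.p. Splitting the count of tree vertices at $k = K(n)$ and using that the giant component has size $c(\mu)n + o(n)$ w.h.p.\ then yields the desired $(1-c(\mu))n - o(n)$ lower bound on the number of vertices lying in tree components of size at most $K(n)$.

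For the first moment, the standard calculation combining Cayley's formula with the \ER model gives
\begin{equation*}
\dE[X_k] \,=\, \binom{n}{k}\, k^{k-2}\, \left(\frac{\mu}{n}\right)^{k-1} \left(1-\frac{\mu}{n}\right)^{\binom{k}{2}-(k-1)+k(n-k)},
\end{equation*}
the last factor enforcing that the $k$ vertices form a connected component. Using Stirling's approximation (so that $k^{k-2}/k! \sim e^k/(k^{5/2}\sqrt{2\pi})$) together with $\log(1-\mu/n) = -\mu/n + O(n^{-2})$, one obtains, uniformly for $2 \leq k \leq n/2$, an estimate of the form
\begin{equation*}
\dE[X_k] \,\leq\, C(\mu)\, \frac{n}{k^{5/2}}\, \bigl(\mu\, e^{1-\mu}\bigr)^{k},
\end{equation*}
while for $k > n/2$ the factor $(1-\mu/n)^{k(n-k)}$ kills the bound exponentially in $n$, so those terms contribute negligibly after summation.

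Since $\mu e^{1-\mu} \leq 1$ with equality iff $\mu = 1$, multiplying by $k$ and summing from $K(n)+1$ yields
\begin{equation*}
\dE\!\left[\sum_{k > K(n)} k X_k\right] \,\leq\, C(\mu)\, n \sum_{k > K(n)} \frac{(\mu e^{1-\mu})^k}{k^{3/2}} \,=\, o(n),
\end{equation*}
uniformly in the three regimes $\mu<1$ and $\mu>1$ (exponential decay of the summand in $k$) and $\mu = 1$ (where the tail reduces to $\sum_{k>K(n)} k^{-3/2} = O(K(n)^{-1/2}) \to 0$). Markov's inequality then converts this expectation bound into the required w.h.p.\ control of $\sum_{k>K(n)} k X_k$, and combining with Lemma \ref{lemma:bollobas_trees} completes the proof.

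The only mildly delicate point is the uniform treatment of the three regimes of $\mu$: at criticality the exponential factor $\mu e^{1-\mu}$ disappears and the polynomial decay $1/k^{3/2}$ becomes the sole source of smallness, but this is exactly what the hypothesis $K(n) \to \infty$ is there to exploit. Beyond bookkeeping in the Stirling estimates, I do not expect any real obstacle; the argument is essentially a refinement of the classical first-moment counting of tree components in sparse \ER graphs, restricted to components of atypically large size.
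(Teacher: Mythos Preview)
Your proposal is correct and follows essentially the same approach as the paper: bound $\sum_{k>K(n)} k X_k$ in expectation via the standard Cayley-formula first-moment computation and combine with Lemma~\ref{lemma:bollobas_trees}. The only bookkeeping difference is that the paper truncates the sum at $O(\log n)$ (invoking that tree components are w.h.p.\ at most logarithmic) and uses the cruder bound $\binom{n}{k}\le (en/k)^k$, whereas your sharper Stirling estimate giving $k^{-3/2}$ summability lets you treat the critical case $\mu=1$ explicitly---a point the paper's write-up glosses over.
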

	\begin{proof}
		Assume without loss of generality that $K(n) = o(\log n)$. Let $T_>$ be the number of vertices that are on tree components of size $\geq K(n)$. Taking $a_n = o(n)$ in Lemma \ref{lemma:bollobas_trees}, it remains to show that w.h.p., $T_> = o(n)$. This is done easily by bounding very roughly the first moment. Another classical result (see e.g. \cite{Janson00}, Theorem 5.4) is that with probability $1-o(1)$, all tree components are of size $O(\log n)$, which gives
		\begin{flalign*}
			\frac{\dE\left[T_>\right]}{n} &\leq o(1)+\sum_{k = K(n)}^{O(\log n)} \frac{1}{n} \cdot k \cdot \binom{n}{k} k^{k-2} \left(\frac{\mu}{n}\right)^{k-1} \left(1-\frac{\mu}{n}\right)^{k(n-k)+\binom{k}{2}-k+1}\\
			& \leq o(1)+ (1+o(1)) \sum_{k = K(n)}^{O(\log n)} \frac{e^k}{k} \mu^{k-1} e^{-k\mu},
		\end{flalign*} using $\binom{n}{k} \leq \left(\frac{en}{k}\right)^k$ together with Cayley's formula\footnote{Cayley's formula states that the number of trees on $k$ labeled vertices is $k^{k-2}$.} and the fact that for all indices $K(n) \leq k \leq O(\log n)$ in the sum, $k^2 \leq o(n)$ (uniformly). Now, the series in the right hand term has general terms which is $O\left(e^{-k(\mu - \log \mu +1)}\right)$, and since $\mu - \log \mu +1>0$ the series converges, which implies that $\dE\left[T_>\right/n]=o(1)$. The proof is concluded by Markov's inequality.
	\end{proof}
	
	Since in our model $G \land G'$ is an \ER graph of parameters $(n,\lambda s/n)$, the previous results ensures that all but a vanishing part of the $(1-c(\mu)) n$ vertices outside the giant component are on small (i.e. $\leq K(n)$) tree components of the intersection graph. For the rest of the chapter, we will take $$K(n) = \lfloor \sqrt{\log n} \rfloor.$$
	
	This first step suggests to build the permutations (relabelings) only by looking at $G \land G'$. Hence, we will first consider the random generation of the intersection graph, then create some permutations $\sigma_i$, and finally reveal the monochromatic edges.
	
	The generating process is as follows: since almost all  $(1-c(\mu)) n$ vertices are on small trees in $G \land G'$, we can prove that each small tree up to isomorphism will have a number of occurrences in the intersection graph of order $n$ (this is claimed more precisely in Lemma \ref{lemma:controle_X}). Permuting iteratively these isomorphic trees, we may derange them quite a lot, and each time differently. 
	
	In order to prove Theorem \ref{theorem:autos}, we use the \emph{probabilistic method}\footnote{The main interest of this widely used method (see \cite{ProbaMethod}) is to be non-constructive. Indeed, as detailed in the next Sections, explicitly giving the $p$ permutations considered in Theorem \ref{theorem:autos} is very cumbersome, because of the extra double edges that may appear (see Section \ref{section_extra_double_edges}).}: we give in the next section a simple detailed stochastic method to build $p$ permutation candidates, and we will next prove that these permutations satisfy conditions $(i)$ and $(ii)$ with positive probability, hence proving the desired existence.

	\section{Building automorphisms of $G \land G'$ tree-wise}
	Through all this section, we work conditionally on the intersection graph $G \land G'$ (that is the two-colored edges). 
	
	\subsection{Mathematical formalization}
	Recall that we fix $K:=K(n)= \lfloor \sqrt{\log n} \rfloor$. For all $k \in [K]$, we will denote by $\dT_k$ the set of \emph{unlabeled} trees of size $k$. $\dT_k$ can also be viewed as the set of equivalence classes of labeled trees of size $k$ for the isomorphism relation. Note that $\dT_k$ is finite and that we can roughly upper bound its size by the number of \emph{labeled} trees of size $k$ which equals $k^{k-2}$, by Cayley's formula\footnote{This upper bound is far from being optimal, but is enough for our use.}.
	
	\begin{figure}[h]
		\centering
		\includegraphics[scale=1]{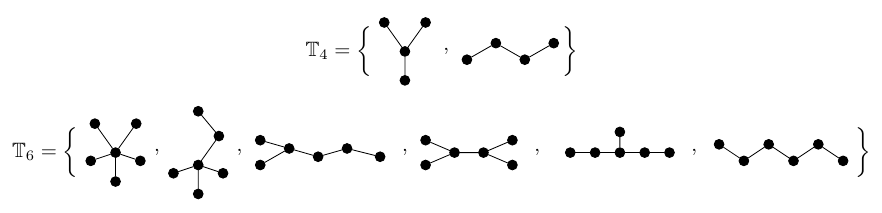}
		\caption{Explicit composition of $\dT_4$ (of size $2$) and $\dT_6$ (of size $6$).}
		\label{fig:img_ex_Tk}
	\end{figure} 
	
	For a given tree $\bT \in \dT_k$, we will denote by $X_\bT$ the number of distinct connected components of $G \land G'$ that are isomorphic to $\bT$, $H_\bT := \left\lbrace T_1, T_2, \ldots, T_{X_\bT} \right\rbrace$ the set of the corresponding labeled subgraphs of $G \land G'$, and $V(H_\bT)$ the set of vertices of $[n]$ that belong to one of the trees in $H_\bT$.
	
	Our global finite recursion will be done on the finite set 
	\begin{equation}\label{eq:def_T}
	\dT := \bigcup_{k=1}^{K} \dT_k = \left\{\bT_1, \bT_2, \ldots, \bT_M\right\},
	\end{equation} which we assume to have been ordered increasingly according to tree sizes, for convenience. The global permutation $\sigma$ is built block-wise by composing permutations $\sigma_\bT$ for $\bT \in \dT$ such that each $\sigma_\bT$ only acts on vertices of $H_\bT$.
	
	More precisely, for a fixed $\bT \in \dT$, $\sigma_\bT$ will consists in permuting the vertices tree by tree, so $\sigma_\bT$ will be determined by a tree permutation $\Sigma_\bT$ of size $X_\bT$. Assume that for all trees $T_1, \ldots, T_{X_\bT}$ isomorphic to $\bT$ in $G \land G'$, we fix some isomorphisms $\psi_1, \ldots, \psi_{X_\bT}$ such that $T_i\underset{\psi_i}{ \; \widehat{=} \; } \bT$ for all $i \in [X_\bT]$. More generally we will denote $\itree(u)$ the index of the tree that $u \in V(H_\bT)$ belongs to (when there is no ambiguity on $\bT$), and $u \simt u'$ when two vertices of $G \land G'$ are sent onto the same point of $\bT$ by these isomorphisms. Then, the natural definition of the node permutation $\sigma_{\bT}$ according to $\Sigma_\bT$ and these isomorphisms is given by
	
	\begin{equation}\label{eq:Sigma_sigma}
	\sigma_\bT : u \mapsto 
	\left\{
	\begin{array}{ll}
	\psi_{\Sigma_\bT (\itree(u))}^{-1} \circ \psi_{\itree(u)} (u) \;\; (\in T_{\Sigma_\bT (\itree(u))}) & \mbox{if } u \in V(H_{\bT}), \\
	u & \mbox{if } u \notin V(H_{\bT}).
	\end{array}
	\right.
	\end{equation} Note that by definition, $V(H_{\bT})$ is stable by $\sigma_\bT$, and $\sigma_\bT$ fixes all nodes in $[n] \setminus V(H_{\bT})$. Recall that $M$ denotes the total size of $\dT$ as defined in \eqref{eq:def_T}. The recursive construction is as follows :
	
	\begin{algorithm}[H]
		\caption{Recursive construction of $\sigma$}
		\label{algo:rec_construction}
		\SetAlgoLined
		Initialize $\sigma_0 \gets \id$\;
		
		\For{$i=1$ to $M$}{
			Consider $\bT = \bT_i$ and draw uniformly at random the tree permutation $\Sigma_{\bT} \in \cS_{X_{\bT}}$, independently from the past\;
			
			Consider $\sigma_{\bT}$ the node permutation associated with $\Sigma_{\bT}$ by \eqref{eq:Sigma_sigma}\;
			
			$\sigma_{i} \gets \sigma_{\bT} \circ \sigma_{i-1}$\;
		}
		\textbf{return} $\sigma = \sigma_M$
	\end{algorithm} Note that at the end of the procedure, $\sigma$ fixes all points that are either on the giant component of the intersection graph, or on a component that is not a tree a size $\leq K(n)$. Figure \ref{fig:img_algo_ok} gives an example of this random recursive construction (for convenience, $\lambda s <1$, the true labels are in red, whereas blue labels enables to keep track of the relabeling recursively built on the blue graph). 
	
	\begin{figure}[h]
		\centering
		\includegraphics[scale=0.83]{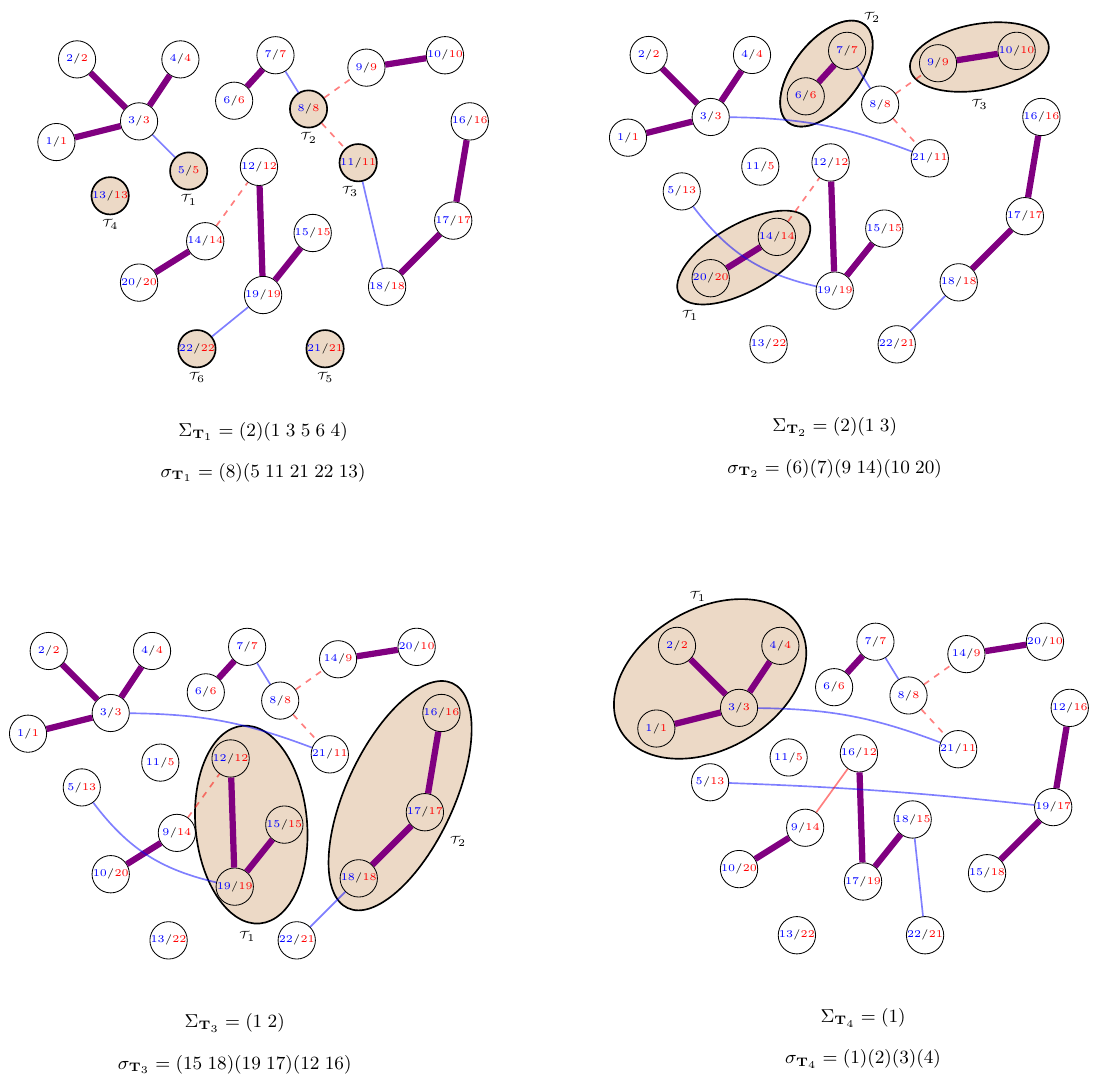}
		\caption{Example of recursive (tree-wise) generation of a permutation with Algorithm \ref{algo:rec_construction}.}
		\label{fig:img_algo_ok}
	\end{figure} 
	
	Through the analysis we will need the following control on $X_\bT$ for $\bT \in \dT$:
	
	\begin{lemma}\label{lemma:controle_X}
		Recall that $K(n)= \lfloor \sqrt{\log n} \rfloor$. For all $k \in [K(n)]$, define $
		f(k) := \frac{(\lambda s)^{k-1}e^{-\lambda s k}}{k!}$. Then, with high probability (on the intersection graph),
		\begin{equation}\label{eq:controle_X}
		\forall k \in [K(n)], \, \forall \bT \in \dT_k, \, X_\bT \geq n (1-o(1)) f(k).\end{equation}
	\end{lemma}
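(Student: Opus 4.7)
The strategy is a standard second-moment argument applied uniformly over $\dT$, made possible by the fact that $\dT$ has sub-polynomial size. Throughout, $p := \lambda s / n$ and we consider a fixed unlabeled tree $\bT \in \dT_k$ with $k \leq K(n) = \lfloor\sqrt{\log n}\rfloor$.

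\textbf{Step 1: lower bound on the expectation.} The number of distinct labeled copies of $\bT$ on a fixed $k$-subset of $[n]$ is $k!/|\mathrm{Aut}(\bT)|$. Each such copy is an isolated component of $G \land G'$ if and only if its $k-1$ tree-edges are present (probability $p^{k-1}$), the remaining $\binom{k}{2}-(k-1)$ possible internal edges are absent, and the $k(n-k)$ possible edges to the rest of the graph are absent. Summing over $\binom{n}{k}$ choices of vertex sets,
\begin{equation*}
\dE[X_\bT] = \binom{n}{k}\frac{k!}{|\mathrm{Aut}(\bT)|}\, p^{k-1} (1-p)^{k(n-k)+\binom{k}{2}-(k-1)}.
\end{equation*}
For $k = o(\sqrt{n})$, standard asymptotics give $\binom{n}{k} \sim n^k/k!$, $(1-p)^{k(n-k)} \sim e^{-\lambda s k}$, and $(1-p)^{\binom{k}{2}-(k-1)} = 1 + o(1)$ uniformly. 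Since $|\mathrm{Aut}(\bT)| \leq k!$, we get uniformly in $k \leq K(n)$ and $\bT \in \dT_k$
\begin{equation*}
\dE[X_\bT] \geq (1-o(1))\, \frac{n (\lambda s)^{k-1}e^{-\lambda s k}}{k!} = n f(k)(1-o(1)).
\end{equation*}
The uniformity in $k$ is crucial and relies on the fact that $k \leq \sqrt{\log n}$ makes the relative error terms in each approximation $O(k^2/n) + O(k \lambda s / n) = o(1)$.

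\textbf{Step 2: variance control.} Writing $X_\bT = \sum_T \one_{T \text{ is a component}}$ where $T$ ranges over labeled copies of $\bT$, we have
\begin{equation*}
\dE[X_\bT^2] = \dE[X_\bT] + \sum_{T \neq T'} \dP(T, T' \text{ are both components}).
\end{equation*}
Two distinct labeled copies $T$ and $T'$ with overlapping vertex sets cannot both be connected components of $G\land G'$, so only disjoint pairs contribute. For disjoint $T, T'$, a direct edge count gives
\begin{equation*}
\dP(T, T' \text{ components}) = \dP(T)\dP(T')\,(1-p)^{-k^2} = \dP(T)\dP(T')(1+o(1)),
\end{equation*}
since $k^2 p = O((\log n)/n) = o(1)$ uniformly. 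Summing yields $\dE[X_\bT^2] \leq \dE[X_\bT] + \dE[X_\bT]^2 (1+o(1))$, so
\begin{equation*}
\Var(X_\bT) \leq \dE[X_\bT] + o(\dE[X_\bT]^2).
\end{equation*}

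\textbf{Step 3: concentration and union bound.} Chebyshev's inequality applied with any $\eps_n \to 0$ gives, uniformly in $\bT$,
\begin{equation*}
\dP\big(X_\bT \leq (1-\eps_n)\dE[X_\bT]\big) \leq \frac{\Var(X_\bT)}{\eps_n^2 \dE[X_\bT]^2} \leq \frac{1}{\eps_n^2 \dE[X_\bT]} + \frac{o(1)}{\eps_n^2}.
\end{equation*}
Since $(\lambda s)^k/k!$ and $e^{-\lambda s k}$ are both $n^{o(1)}$ for $k \leq \sqrt{\log n}$, we have $\dE[X_\bT] \geq n^{1-o(1)}$ uniformly. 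Moreover, by Cayley's formula $|\dT_k| \leq k^{k-2}$ and therefore $|\dT| \leq \sum_{k=1}^{K(n)} k^{k-2} \leq K(n)^{K(n)} = \exp(O(\sqrt{\log n}\log\log n)) = n^{o(1)}$. Choosing $\eps_n = n^{-1/4}$ (say), the union bound over all $\bT \in \dT$ of the failure event gives a total failure probability of $n^{o(1)} \cdot n^{-1/2 + o(1)} = o(1)$, which combined with Step 1 yields \eqref{eq:controle_X}.

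The main technical obstacle I anticipate is maintaining uniformity in $k$ and $\bT$ throughout Steps 1--2: all the $o(1)$ and $1+o(1)$ approximations must be controlled uniformly, which is why the restriction $k \leq K(n) = \lfloor \sqrt{\log n}\rfloor$ (rather than some larger slowly-growing function) is well-suited: it keeps $k^2/n$, $kp$, $(\lambda s)^k/k!$, and $|\dT|$ all in the sub-polynomial regime simultaneously.
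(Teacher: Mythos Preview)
Your proposal is correct and follows essentially the same approach as the paper's proof: compute $\dE[X_\bT]$ and $\dE[X_\bT(X_\bT-1)]$ by direct counting, bound the variance, apply Chebyshev with $\eps_n = n^{-1/4}$, and take a union bound over $\dT$ using $|\dT| \leq K(n)^{K(n)} = n^{o(1)}$. The only cosmetic difference is that the paper tracks the quantity $\alpha(\bT) = n\mu^{k-1}e^{-\mu k}/|\mathrm{Aut}(\bT)|$ rather than immediately passing to the cruder lower bound $nf(k)$, but the union-bound step is identical.
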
 The proof of this result is deferred to Appendix \ref{app:proof_control_X}.
	
	\begin{remark}\label{remark:f(K)}
		Note that since $\lambda s e^{-\lambda s}<1$, $k \mapsto f(k)$ is decreasing with $k$. Moreover, for $K(n) \leq \sqrt{\log n}$, we have that for any $t>0$, $$f(K(n)) \geq \exp\left(- C\sqrt{\log n} \log \log n \right) \gg n^{-t}.$$  
	\end{remark}
	
	\subsection{Ensuring that the permutations are 'far apart'} 
	We check in this section that Algorithm \ref{algo:rec_construction} generates permutations that will verify condition $(ii)$ of Theorem \ref{theorem:autos}, w.h.p. Let $\sigma_{1}, \ldots, \sigma_{p}$ be generated independently with Algorithm \ref{algo:rec_construction}. We then have the following results:
	
	\begin{lemma}\label{lemma:controle_overlap_ij} With high probability, for all $i \neq j \in [p]$,
		$$ \ov(\sigma_i,\sigma_j) = c(\lambda s) n + o(n).$$
	\end{lemma}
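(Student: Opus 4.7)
The plan is to decompose the overlap $\ov(\sigma_i,\sigma_j)$ according to how the two permutations act on the different pieces of the intersection graph $G \land G'$. By the very construction of Algorithm \ref{algo:rec_construction}, both $\sigma_i$ and $\sigma_j$ fix every vertex that does not lie on a tree component of $G\land G'$ of size at most $K(n) = \lfloor \sqrt{\log n} \rfloor$. Applying Lemma \ref{lemma:small_trees} to $G \land G' \sim \G(n, \lambda s/n)$, whose giant component has typical size $c(\lambda s) n + o(n)$, I would first observe that this common fixed set has size $c(\lambda s) n + o(n)$ with high probability, which already delivers the lower bound $\ov(\sigma_i, \sigma_j) \geq c(\lambda s) n - o(n)$.

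For the matching upper bound, the task is to show that the agreement between $\sigma_i$ and $\sigma_j$ on the union of small tree components is only $o(n)$. The key observation is an ``all-or-nothing'' phenomenon at the level of isomorphic copies: for each $\bT \in \dT$ of size $k_\bT$ and each labeled copy $T_\ell$ of $\bT$ in $G\land G'$, formula \eqref{eq:Sigma_sigma} maps $T_\ell$ entirely into a single other copy of $\bT$, and since distinct copies have disjoint vertex sets, one gets $\sigma_i(u) = \sigma_j(u)$ for some (equivalently, every) $u \in T_\ell$ if and only if $\Sigma^{(i)}_\bT(\ell) = \Sigma^{(j)}_\bT(\ell)$. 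Hence the overlap contribution of $V(H_\bT)$ is exactly $k_\bT \cdot F_{\bT, ij}$, where $F_{\bT, ij}$ counts the $\ell \in [X_\bT]$ with $\Sigma^{(i)}_\bT(\ell) = \Sigma^{(j)}_\bT(\ell)$. Since $\Sigma^{(i)}_\bT$ and $\Sigma^{(j)}_\bT$ are independent and uniform on $\cS_{X_\bT}$, their composition is also uniform, and the classical fact that a uniform random permutation has expected one fixed point gives $\dE[F_{\bT,ij}] = 1$, regardless of $X_\bT$.

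Summing over $\bT$ and using the crude Cayley bound $|\dT_k| \leq k^{k-2}$, the total expected agreement on small tree components is at most $K(n) \sum_{k=1}^{K(n)} k^{k-2} \leq K(n)^{K(n)} = \exp(O(\sqrt{\log n}\, \log \log n)) = n^{o(1)}$. Markov's inequality then yields that this agreement is $o(n)$ with high probability, and a union bound over the $\binom{p}{2}$ pairs $(i,j)$ (recall that $p$ is fixed) concludes. No delicate concentration argument is actually needed --- the first-moment computation suffices precisely because $K(n)$ has been chosen slowly enough that the total number of tree shapes stays $n^{o(1)}$. The one thing to be careful about is the combinatorial identity for $F_{\bT, ij}$: I would revisit \eqref{eq:Sigma_sigma} to verify that agreement on a single vertex of $T_\ell$ forces agreement on the whole tree, and that no incidental coincidence between distinct copies of $\bT$ can contribute to the overlap, which is the main place where the geometric structure of the construction is used.
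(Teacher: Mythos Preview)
Your proof is correct and uses exactly the same decomposition as the paper: both write
\[
\ov(\sigma_i,\sigma_j) = T_\infty + T_> + \sum_{k=1}^{K(n)}\sum_{\bT \in \dT_k} k \cdot \ov\bigl(\Sigma^{(i)}_\bT,\Sigma^{(j)}_\bT\bigr),
\]
identify the first two terms as $c(\lambda s)n + o(n)$ via Lemma~\ref{lemma:small_trees}, and then control the last sum. The ``all-or-nothing'' observation you spell out is exactly what makes this identity hold, and your verification via \eqref{eq:Sigma_sigma} is the right one.

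The only genuine difference is in how the tail of the last sum is controlled. You use the first-moment bound $\dE\bigl[\ov(\Sigma^{(i)}_\bT,\Sigma^{(j)}_\bT)\bigr] \leq 1$ and Markov, which gives the sum is $o(n)$ w.h.p.\ with essentially no work. The paper instead proves a sub-Poissonian Laplace bound $\dE[e^{tX}] \leq \exp(e^t)$ for the number of fixed points of a uniform permutation, and runs a Chernoff argument to obtain the stronger conclusion that the sum is at most $n^{\alpha}$ for any $\alpha \in (0,1)$, with super-polynomially small failure probability. For the lemma as stated your simpler argument is entirely sufficient; the paper's sharper bound is not used elsewhere, so nothing is lost.
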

	
	This lemma is proved in Appendix \ref{app:proof_overlap}. In the sequel we will denote by $V_\infty$ the set of vertices that are on the giant component of $G \land G'$ (if there is one), and by $V_>$ the vertices of $[n] \setminus V_\infty$ that are \emph{not} on tree components of size $\leq K(n)$. Finally we set $V_{\infty,>} := V_\infty \cup V_>$. Define
	\begin{equation}\label{eq:def_Sin_Sout}
	\cS_{in} := \binom{[n]\setminus V_{\infty,>} }{2}, \quad \cS_{out} := \binom{[n]}{2} \setminus \left(\binom{V_{\infty,>}}{2} \cap \binom{[n]\setminus V_{\infty,>}}{2} \right), \quad \cS := \cS_{in} \cup \cS_{out}.
	\end{equation}$\cS_{in}$ is the set of edges that have both endpoints outside $V_{\infty,>}$, whereas edges of $\cS_{out}$ have exactly one endpoint in $V_{\infty,>}$. We say that an edge $(u,v) \in \binom{[n]}{2}$ is a \emph{common fixed edge} of permutations $\sigma_{1}, \ldots, \sigma_r$ if 
	$$ \left\lbrace \sigma_1(u), \sigma_1(v) \right\rbrace = \ldots = \left\lbrace \sigma_r(u), \sigma_r(v) \right\rbrace. $$ For all subset of edges $\cW \subseteq \binom{[n]}{2}$, we define
	\begin{equation}
	F(\cW,\sigma_{1}, \ldots, \sigma_r) := \sum_{e \in \cW} \one_{e \mbox{\footnotesize{ is a common fixed edge of} } \sigma_{1}, \ldots, \sigma_r}.
	\end{equation} 
	
	We now state a result -- which proof is deferred to \ref{app:proof_control_F} -- that will be useful in next section.
	
	\begin{lemma}\label{lemma:controle_F}
		With high probability, we have, for any $t>0$,
		\begin{itemize}
			\item for any $i_1 \neq i_2 \in [p]$,
			\begin{equation}\label{eq:lemma:controle_F2}
			F(\cS,\sigma_{i_1},\sigma_{i_2}) \leq n^{1+t},
			\end{equation}
			
			\item for any $i_1, i_2, i_3 \in [p]$ pairwise distinct,
			\begin{equation}\label{eq:lemma:controle_F3}
			F(\cS,\sigma_{i_1},\sigma_{i_2},\sigma_{i_3}) \leq n^t,
			\end{equation}
			
			\item for any $r \geq 4$, $i_1, \ldots, i_r \in [p]$ pairwise distinct,
			\begin{equation}\label{eq:lemma:controle_F4}
			F(\cS,\sigma_{i_1},\ldots,\sigma_{i_r}) =0.
			\end{equation}
		\end{itemize}
	\end{lemma}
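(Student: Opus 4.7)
The plan is a first-moment computation followed by Markov's inequality, carried out conditionally on the high-probability event of Lemma \ref{lemma:controle_X} which guarantees the uniform lower bound $X_\bT \geq n(1-o(1))f(k_\bT) \geq n^{1-o(1)}$ for every $\bT \in \dT$, using Remark \ref{remark:f(K)}. The crucial structural fact is that in Algorithm \ref{algo:rec_construction} the tree permutations $\{\Sigma^{(i)}_\bT\}_{\bT \in \dT, i \in [p]}$ are mutually independent and uniform on $\cS_{X_\bT}$, and that each $\sigma_i$ fixes every vertex outside $\bigcup_{\bT \in \dT} V(H_\bT)$, i.e.\ every vertex in $V_{\infty,>}$. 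Consequently, for any fixed $v \in V(H_\bT)$ the probability that $r$ independent uniform tree permutations induce the same image at $v$ is at most $|\mathrm{Aut}(\bT)|^{r-1}/X_\bT^{r-1}$, and since $|\mathrm{Aut}(\bT)| \leq k_\bT! \leq K!$ is sub-polynomial in $n$, this is bounded by $C_K/X_\bT^{r-1}$ with $C_K = n^{o(1)}$.

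First I would bound the $\cS_{out}$ contribution. Any edge $\{u,v\} \in \cS_{out}$ has one endpoint $u \in V_{\infty,>}$ (fixed by every $\sigma_i$) and one endpoint $v$ in some tree of type $\bT$, so it is a common fixed edge of $\sigma_{i_1},\ldots,\sigma_{i_r}$ if and only if these permutations all agree on $v$. Summing over $u \in V_{\infty,>}$ (at most $n$ choices) and $v \in V(H_\bT)$ (exactly $k_\bT X_\bT$ choices) gives $\dE[F(\cS_{out},\sigma_{i_1},\ldots,\sigma_{i_r})] \leq n \sum_{\bT} k_\bT X_\bT \cdot C_K/X_\bT^{r-1} \leq n^{2-(r-2)(1-o(1))+o(1)}$. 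For $\cS_{in}$ I would split further into pairs whose endpoints lie in the same tree and pairs whose endpoints lie in distinct trees (possibly of the same type, possibly not). In all sub-cases independence of the relevant tree permutations and the $X_\bT^{r-1}$ denominator(s) yield a strictly smaller contribution than the one from $\cS_{out}$; the key observation is that $\sigma_i$ preserves the isomorphism type of tree components, so no ``swap'' between endpoints living in differently-typed trees can ever produce a matching unordered image pair.

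Collecting these bounds yields $\dE[F(\cS,\sigma_{i_1},\ldots,\sigma_{i_r})] \leq n^{3-r+o(1)}$, uniformly in the choice of indices. Markov's inequality then gives, for any fixed $t > 0$: $F(\cS,\sigma_{i_1},\sigma_{i_2}) \leq n^{1+t}$ with probability $1 - o(1)$ for $r=2$; $F(\cS,\sigma_{i_1},\sigma_{i_2},\sigma_{i_3}) \leq n^{t}$ with probability $1 - o(1)$ for $r=3$; and $F(\cS,\sigma_{i_1},\ldots,\sigma_{i_r}) = 0$ with probability $1 - O(n^{-(r-3)+o(1)}) = 1-o(1)$ for $r \geq 4$ (using that $F$ is integer-valued). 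Since $p$ is a fixed constant, a union bound over the $O(1)$ ordered tuples of indices in $[p]$ transfers each of these statements into the simultaneous ``for any $i_1,\ldots,i_r$'' form claimed in the lemma.

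The main obstacle will be the careful bookkeeping needed to exploit Lemma \ref{lemma:controle_X} uniformly: one must keep track of three separate sub-polynomial sources -- the size $|\dT| \leq \sum_{k \leq K} k^{k-2} = n^{o(1)}$, the automorphism counts $|\mathrm{Aut}(\bT)|^{r-1} \leq (K!)^{r-1} = n^{o(1)}$, and the factor $f(k_\bT)^{-(r-2)} \leq f(K)^{-(r-2)} = n^{o(1)}$ -- and verify that, combined, they are still dominated by any positive power of $n$, so that the $n^{-(r-2)(1-o(1))}$ decay coming from the $X_\bT^{r-2}$ denominator indeed produces the advertised thresholds $n^{1+t}$, $n^t$ and $0$ for $r = 2, 3$ and $r \geq 4$ respectively.
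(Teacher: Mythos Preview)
Your proposal is correct and follows essentially the same approach as the paper: a first-moment bound conditional on the event of Lemma~\ref{lemma:controle_X}, yielding $\dE[F(\cS,\sigma_{i_1},\ldots,\sigma_{i_r})] \leq n^{3-r+o(1)}$, followed by Markov's inequality and a union bound over the $O(1)$ tuples. The paper's case analysis for $\cS_{in}$ is slightly more explicit (it distinguishes three sub-cases: both endpoints in the same tree, endpoints in distinct trees with $u \not\simt v$, and endpoints in distinct trees with $u \simt v$, the last case picking up a harmless factor~$2$), and your automorphism correction $|\mathrm{Aut}(\bT)|^{r-1}$ is in fact unnecessary since the isomorphisms $\psi_i$ in \eqref{eq:Sigma_sigma} are fixed once and for all, so the image $\sigma_i(v)$ is determined by $\Sigma^{(i)}_\bT(\itree(v))$ alone and the probability is exactly $1/X_\bT^{r-1}$.
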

	
	\subsection{Emergence of extra double edges}\label{section_extra_double_edges} 
	In the example of Figure \ref{fig:img_algo_ok}, we can see that the number of two-colored edges in the relabeled union graph $G^{\sigma_i} \lor G'$ is constant through time. This property is fundamental for point $(i)$ of Theorem \ref{theorem:autos}. However, depending on the random $\sigma_{\bT_i}$ drawn through the process -- we recall that they are drawn independently from the monochromatic edges, that are not revealed yet -- we may see extra two-colored edges appear (extra double edges hereafter). Figure \ref{fig:img_algo_pb} shows a case in which there is an emergence of an extra double edge in the process.
	
	\begin{figure}[h]
		\centering
		\includegraphics[scale=0.85]{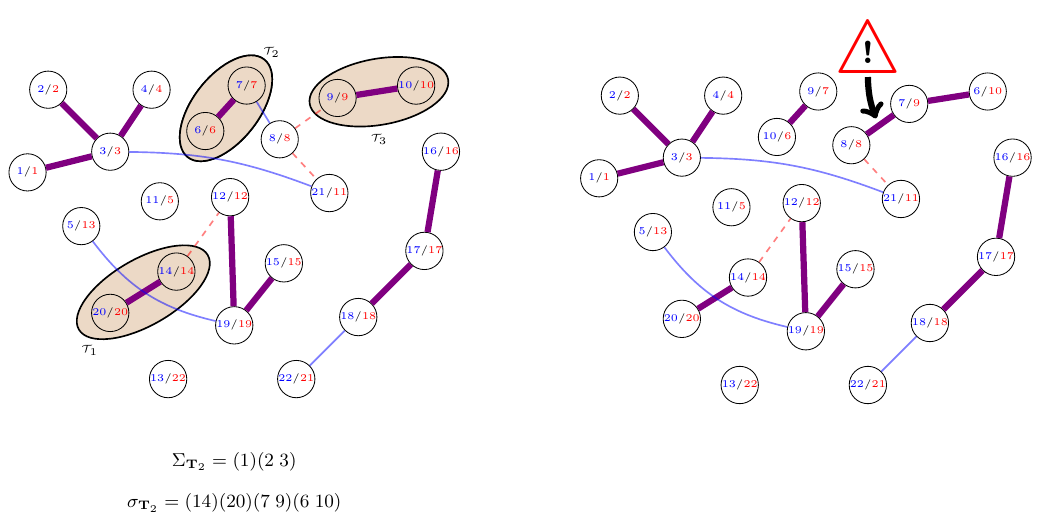}
		\caption{Example of the emergence of an extra double edge in Algorithm \ref{algo:rec_construction}.}
		\label{fig:img_algo_pb}
	\end{figure}
	Note that the number of two-coloured edges can only be greater or equal to $e(G \land G')$ through this process, since by definition we are preserving edges of the intersection graph.
	
	The last part of our work is to prove that there is a positive probability that applying independently Algorithm \ref{algo:rec_construction} $p$ times gives $p$ permutations that do not present extra double edges, before using the probabilistic method. This step will require a Poisson approximation, described hereafter.
	
	\section{Poisson approximation, proof of Theorem \ref{theorem:autos}}
	
	In this section we introduce $n'$ to be the number of vertices that the permutations actually act on:
	\begin{equation}
	n' := \card{[n] \setminus V_{\infty,>}} \sim (1-c(\lambda s)) n \mbox{ w.h.p.}
	\end{equation} 
	
	\subsection{Poisson approximation for extra double edges}
	In the sequel, we will assume that we fix a set $\left\{\sigma_{i}\right\}_{i \in [p]}$ of $p$ permutations of $[n']$, verifying : 
	\begin{equation}\tag{H1}\label{H1}
	\mbox{for all } t>0, \mbox{for all } m\neq m' \in [p], F(\cS,\sigma_{m},\sigma_{m'}) \leq n^{1+t}.
	\end{equation} 
	\begin{equation}\tag{H2}\label{H2}
	\mbox{for all } t>0, \mbox{for all } m_1, m_2, m_3 \in [p] \mbox{ pairwise distinct }, F(\cS,\sigma_{m_1},\sigma_{m_2},\sigma_{m_3}) \leq n^{t}.
	\end{equation} 
	\begin{equation}\tag{H3}\label{H3}
	\mbox{There are no common fixed edge of any $r$-tuple in $\left\{\sigma_{i}\right\}_{i \in [p]}$.}
	\end{equation}
	
	We will work under the event $\cE_\cS$ on which $n' \sim (1-c(\lambda s)) n$ and $\card{\cS} \sim \binom{n'}{2} \sim n'^2/2 = (1-c(\lambda s))^2 n^2/2 $. It is easy (see e.g. \cite{Bollobas2001}) to show that $\cE_\cS$ is satisfied w.h.p. As explained before, some extra double edges (e.d.e. hereafter) may appear when revealing the non double edges of $\cS$ (that is, blue and red edges that are not between vertices of $V_{\infty,>}$). Note that for every edge we have
	\begin{flalign*}
		\dP\left(u \noir v \, | \, (u,v) \notin E(G \land G') \right) &= \dP\left(u \rouge v \, | \, (u,v) \notin E(G \land G') \right)\\
		&= \frac{\dP\left(u \rouge v, (u,v) \notin E(G \land G') \right)}{\dP\left((u,v) \notin E(G \land G') \right)} \\ &= \frac{\lambda(1-s)/n}{1 - \lambda s/n} \sim \frac{\lambda(1-s)}{n}.
	\end{flalign*}
	For any permutation $\sigma$, define the number of created e.d.e. by the relabeling of $G$ by $\sigma$ as follows:
	\begin{equation}\label{def:delta_sigma}
	\Delta(\sigma) := \sum_{\left\{u,v\right\} \in \cS} \one_{u \nnoir v} \one_{\sigma(u) \rrouge \sigma(v)}.
	\end{equation}
	We now present the key result for our analysis, with the notation $n^{\underline{k}}$ for the \emph{falling factorial} $$n^{\underline{k}}:= n(n-1)\cdots(n-k+1).$$
	
	\begin{theorem}[Asymptotic Poisson behavior of $\left\{\Delta(\sigma_{i})\right\}_{i \in [p]}$]\label{theorem:poisson}
		Assume that $\left\{\sigma_{i}\right\}_{i \in [p]}$ verify \eqref{H1}, \eqref{H2} and \eqref{H3}. Then, for all $\ell_1, \ell_2, \ldots, \ell_p \geq 0$,
		\begin{equation}\label{eq:theorem:poisson}
		\dE \left[\Delta(\sigma_1)^{\underline{\ell_1}}\Delta(\sigma_2)^{\underline{\ell_2}} \cdots \Delta(\sigma_p)^{\underline{\ell_p}} \, \big| \, G \land G', \cE_\cS\right] \underset{n \to \infty}{\longrightarrow} \left(\frac{\lambda^2(1-s)^2 (1-c(\lambda s))^2}{2}\right)^{\ell_1 + \ell_2 + \ldots + \ell_p}.
		\end{equation} In other words, conditionally to graph $G \land G'$ and event $\cE_\cS$, the random variables $\left\{\Delta(\sigma_{i})\right\}_{i \in [p]}$ are asymptotically distributed as independent Poisson variables of parameter $\frac{\lambda^2(1-s)^2 (1-c(\lambda s))^2}{2}$.
	\end{theorem}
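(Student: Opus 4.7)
The strategy is the method of factorial moments: a vector of nonnegative integer-valued random variables that converges in joint factorial moments to those of an independent Poisson vector also converges in joint distribution to it, since Poisson laws are determined by their moments. Hence the statement of Theorem \ref{theorem:poisson} is exactly the required moment-convergence, and it suffices to establish it. Setting $A:=\{(i,j):1\leq i\leq p,\,1\leq j\leq \ell_i\}$, $L:=|A|=\sum_i\ell_i$, and $\mu:=\lambda^2(1-s)^2(1-c(\lambda s))^2/2$, the expansion of the left-hand side of \eqref{eq:theorem:poisson} would yield
\begin{equation*}
\dE\!\Big[\prod_i\Delta(\sigma_i)^{\underline{\ell_i}}\,\Big|\,G\land G',\cE_\cS\Big] = \sum_{(e^i_j)}\dP\!\Big(\bigcap_{(i,j)\in A}\{e^i_j\text{ blue mono}\}\cap\{\sigma_i(e^i_j)\text{ red mono}\}\,\Big|\,G\land G',\cE_\cS\Big),
\end{equation*}
where the sum runs over tuples $(e^i_j)_{(i,j)\in A}\in\cS^A$ with $e^i_1,\dots,e^i_{\ell_i}$ pairwise distinct in $\cS$ for each $i$. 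Since, conditionally on $G\land G'$, the monochromatic coloring of each edge outside $E(G\land G')$ is independent with $\dP(\text{blue mono})=\dP(\text{red mono})=\lambda(1-s)/n+o(1/n)$, every summand would factorise as a product over the distinct edges in the blue and red pools.

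\textbf{Generic contribution.} Call a tuple \emph{generic} if the $L$ edges $\{e^i_j\}$ are pairwise distinct, the $L$ edges $\{\sigma_i(e^i_j)\}$ are pairwise distinct, and the blue pool is disjoint from the red pool. On such a tuple the indicated event has probability $(\lambda(1-s)/n)^{2L}(1+o(1))$, while on $\cE_\cS$ one has $|\cS|\sim(1-c(\lambda s))^2n^2/2$, so the number of generic tuples is $|\cS|^{\underline L}(1-o(1))=|\cS|^L(1-o(1))$. The generic contribution to the factorial moment therefore converges to
\begin{equation*}
\Big(\frac{(1-c(\lambda s))^2\,n^2}{2}\Big)^{\!L}\!\cdot\Big(\frac{\lambda(1-s)}{n}\Big)^{\!2L}(1+o(1)) \longrightarrow \mu^L,
\end{equation*}
which is the claimed limit.

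\textbf{Non-generic contributions.} The remaining tuples exhibit at least one coincidence, of one of three types: a blue-pool collision $e^i_j=e^{i'}_{j'}$ with $i\neq i'$, a red-pool collision $\sigma_i(e^i_j)=\sigma_{i'}(e^{i'}_{j'})$ with $i\neq i'$, or a cross-collision $e^i_j=\sigma_{i'}(e^{i'}_{j'})$. Cross-collisions force an edge to be simultaneously blue mono and red mono, contributing zero. A collision that is purely blue-pool (or purely red-pool) without its matching counterpart removes one factor $|\cS|\sim n^2$ from the count while saving only one probability factor $\sim\lambda(1-s)/n$, giving an $O(1/n)$ suppression per collision. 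The delicate case is a simultaneous blue-and-red collision merging $r$ positions $(i_k,j_k)$: this forces the common blue edge $e$ to satisfy $\sigma_{i_1}(e)=\cdots=\sigma_{i_r}(e)$, i.e. $e$ must be a common fixed edge of $\sigma_{i_1},\dots,\sigma_{i_r}$. The number of such $e$ is then bounded by $F(\cS,\sigma_{i_1},\dots,\sigma_{i_r})$, and hypotheses \eqref{H1}, \eqref{H2}, \eqref{H3} yield the bounds $n^{1+t}$, $n^t$, and $0$ for $r=2$, $r=3$, and $r\geq 4$ respectively. Combined with the probability factor $(\lambda(1-s)/n)^2$ that survives the collision, this produces a contribution of order $n^{-1+t}$, $n^{-2+t}$, or $0$ per such pattern, for arbitrarily small $t>0$. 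Since the number of possible overlap patterns depends only on $p$ and $(\ell_i)_i$, summing over all non-generic patterns aggregates to $o(1)$.

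\textbf{Main obstacle.} The principal difficulty is the systematic combinatorial bookkeeping: every non-generic pattern must be identified, decomposed into its pure blue/red collisions and its simultaneous-collision blocks, and shown either to carry a built-in $n^{-1}$ gain from unrestricted summation, or to fall within the scope of one of H1--H3. In particular, one must verify that any pattern that could a priori yield a non-vanishing contribution is always of the common-fixed-edge type controlled by H1--H3; once this correspondence is established, the exponent counting in $n$ reduces to a routine check, and the conclusion follows from summing generic and non-generic parts.
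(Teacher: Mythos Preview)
Your proposal is correct and follows essentially the same approach as the paper: expand the joint falling factorial moment as a sum over edge-tuples, isolate the generic (all-distinct) contribution as the main term $\mu^L$, and classify the non-generic tuples into blue-pool collisions, red-pool collisions, and simultaneous collisions controlled by $F(\cS,\sigma_{i_1},\dots,\sigma_{i_r})$ via \eqref{H1}--\eqref{H3}. The paper carries out precisely the ``systematic combinatorial bookkeeping'' you flag as the main obstacle by encoding each collision pattern as a \emph{dependency graph}: vertices are the positions $(i,j)$, plain edges record blue-pool collisions (your type), dashed edges record red-pool collisions, and thick edges record simultaneous collisions; the paper then parameterizes patterns by $(k_1,k_2,k_3,k_4)$ (numbers of plain, dashed, isolated thick, and length-two thick-path edges), bounds the per-pattern contribution by $q^{2L-(k_1+k_2+2k_3+4k_4)}$ and the pattern count by $n^{2L-2(k_1+k_2)-(3-t)k_3-(6-t)k_4}$ using \eqref{H1}--\eqref{H3}, and sums to $O(n^{-(1-t)})$. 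This is exactly the exponent arithmetic you sketch per collision type; the dependency-graph device is the missing organizational tool to handle all overlap patterns at once.
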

	
	The proof of Theorem \ref{theorem:poisson}, based on a fine control of terms of unusually high contribution, is deferred to Appendix \ref{app:proof_theorem_poisson}.
	\subsection{Proof of Theorem \ref{theorem:autos}}
	\begin{proof}
		The proof is quite straightforward now. Fixing $p>0$, Lemma \ref{lemma:controle_F} gives that \eqref{H1}, \eqref{H2} and \eqref{H3} are verified w.h.p. by some $\sigma_{1}, \ldots, \sigma_{p}$ generated independently with Algorithm \ref{algo:rec_construction}. Then, the probability (on the remaining monochrome edges) that the $p$ permutations given satisfy conditions $(i)$ and $(ii)$ of Theorem \ref{theorem:autos} is equivalent to
		\begin{multline*}
		(1-o(1)) \times \dP\left(\Poi\left(\frac{\lambda^2(1-s)^2 (1-c(\lambda s))^2}{2}\right)=0 \right)^p \\ = (1-o(1)) \exp\left(- p \frac{\lambda^2(1-s)^2(1-c(\lambda s))^2}{2}\right) >0,
		\end{multline*} which gives the existence with high probability of a set a permutations of size $p$ satisfying conditions $(i)$ and $(ii)$ of Theorem \ref{theorem:autos}.
	\end{proof}

\begin{subappendices}
	\section{Proof of Theorem \ref{theorem:poisson}}\label{app:proof_theorem_poisson}
	\begin{proof}[Proof of Theorem \ref{theorem:poisson}]
		Let $\ell_1, \ell_2, \ldots, \ell_p $ be non negative integers. Recall that conditioned to $G \land G'$, each edge of $\cS$ is independently blue (resp. red) with probability 
		\begin{equation*}
			q = q(\lambda,s,n) := \frac{\lambda(1-s)}{n - \lambda s}.
		\end{equation*}
		Now, let us explain why convergence \eqref{eq:theorem:poisson} holds. First recall that for a given $\ell \geq 0$, $\dE\left[\Delta(\sigma)^{\underline{\ell}}\right]$ is nothing else but the expected number of (ordered) $p-$tuples of edges $\left\{u,v\right\} \in \cS$ such that $\one_{u \nnoir v} \one_{\sigma(u) \rrouge \sigma(v)} = 1$. Using the notation ${\sum}^*$ for summation of ordered tuples of edges in $\cS$ as well as linearity of expectation, we get:
		\begin{multline}\label{eq:expr_multi_mom}
			\dE \left[\Delta(\sigma_1)^{\underline{\ell_1}}\Delta(\sigma_2)^{\underline{\ell_2}} \cdots \Delta(\sigma_p)^{\underline{\ell_p}}\right] = \\
			\sideset{}{^*}\sum_{\substack{\lbrace u^{(1)}_{1},v^{(1)}_{1}\rbrace, \\ \lbrace u^{(1)}_{2},v^{(1)}_{2}\rbrace, \\\ldots, \\ \lbrace u^{(1)}_{\ell_1},v^{(1)}_{\ell_1}\rbrace} } \; \;
			\sideset{}{^*}\sum_{\substack{\lbrace u^{(2)}_{1},v^{(2)}_{1}\rbrace, \\ \lbrace u^{(2)}_{2},v^{(2)}_{2}\rbrace, \\\ldots, \\ \lbrace u^{(2)}_{\ell_2},v^{(2)}_{\ell_2}\rbrace} }
			\ldots
			\sideset{}{^*}\sum_{\substack{\lbrace u^{(p)}_{1},v^{(p)}_{1}\rbrace, \\ \lbrace u^{(p)}_{2},v^{(p)}_{2}\rbrace, \\\ldots, \\ \lbrace u^{(p)}_{\ell_p},v^{(p)}_{\ell_p}\rbrace} }
			\dE \left[\prod_{m=1}^{p} \prod_{j=1}^{\ell_m} \one_{u^{(m)}_{j} \nnoir v^{(m)}_{j}} \one_{\sigma_m(u^{(m)}_{j}) \rrouge \sigma_m(v^{(m)}_{j})}\right] 
		\end{multline}
		
		First observe that the total number of terms $N$ in the previous sum is 
		$$N := \card{\cS}^{\underline{\ell_1}} \times \card{\cS}^{\underline{\ell_2}} \times \cdots \card{\cS}^{\underline{\ell_p}} \sim \left(\frac{(1-c(\lambda s))^2 n^2 }{2}\right)^{\ell_1 + \ldots + \ell_p},$$ since $\card{\cS} \sim \frac{ (1-c(\lambda s))^2 n^2}{2}$ on event $\cE_\cS$.\\
		
		\proofstep{Lower bound.} Observe that the $N$ terms in the sum of eq. \eqref{eq:expr_multi_mom} are made in general of $2(\ell_1 + \ldots + \ell_p)$ indicator variables, not necessarily distinct. For most of the terms however, all involved edges are distinct, thus independent, and their contribution to the sum is $q^{2(\ell_1 + \ldots + \ell_p)}$.
		
		Whenever a pair of blue (resp. red) indicators are equal, at least one term may be canceled, so the contribution to the expectation is higher than $q^{2(\ell_1 + \ldots + \ell_p)}$.
		
		Whenever a pair of edges that appear in a blue/red pair of indicators are equal, the product of the indicators is necessarily $0$ (indeed, an edge in $\cS$ cannot be two-colored). These terms, where at least one equality of the form $\{u_j^{(m)}, v_j^{(m)}\}  = \{\sigma_{m'}(u^{(m')}_{j'}) , \sigma_{m'}(v^{(m')}_{j'})\} $ occurs, cover the case where the contribution is strictly less that $q^{2(\ell_1 + \ldots + \ell_p)}$ (it is $0$). There are at most 
		$$\binom{\ell_1+\ldots+\ell_p}{2} \left(\frac{n^2}{2}\right)^{\ell_1+\ldots+\ell_p-1}$$ such terms. Thus
		\begin{flalign*}
			\dE \left[\Delta(\sigma_1)^{\underline{\ell_1}}\Delta(\sigma_2)^{\underline{\ell_2}} \cdots \Delta(\sigma_p)^{\underline{\ell_p}}\right] & \geq \left(N - \binom{\ell_1+\ldots+\ell_p}{2} \left(\frac{n^2}{2}\right)^{\ell_1+\ldots+\ell_p-1}\right) \times q^{2(\ell_1 + \ldots + \ell_p)}\\
			& \sim \left(\frac{(1-c(\lambda s))^2 n^2}{2}\right)^{\ell_1 + \ldots \ell_p} \times \left(\frac{\lambda (1-s)}{n}\right)^{2(\ell_1 + \ldots + \ell_p)} \\
			&\underset{n \to \infty}{\longrightarrow} \left(\frac{\lambda^2(1-s)^2 (1-c(\lambda s))^2}{2}\right)^{\ell_1 + \ell_2 + \ldots + \ell_p}.
		\end{flalign*}
		
		\proofstep{Upper bound.} The terms that we now want to study are the terms for which the contribution is greater than $q^{2(\ell_1 + \ldots + \ell_p)}$. Looking closely at the general product in \eqref{eq:expr_multi_mom}, an unusual high contribution is the consequence of three possible type of constraints:
		\begin{itemize}
			\item[$(i)$] constraints of the form  $\{u_j^{(m)}, v_j^{(m)}\}  = \{u^{(m')}_{j'} , v^{(m')}_{j'}\} $: note that since the sums are made of ordered tuples, this equality may happen only for pairs such that $m \neq m'$. Moreover, transitivity of equality implies that a constraint implying some $\{u_j^{(m)}, v_j^{(m)}\}$ may happen at most once for each $m' \in [p], m' \neq m$ (otherwise we would have a relationship of the form $\{u_{j'}^{(m')}, v_{j'}^{(m')}\} = \{u_{k'}^{(m')}, v_{k'}^{(m')}\}$, which is impossible).
			
			\item[$(ii)$] constraints of the form  $\{\sigma_{m}(u^{(m)}_{j}) , \sigma_{m}(v^{(m)}_{j})\}  = \{\sigma_{m'}(u^{(m')}_{j'}) , \sigma_{m'}(v^{(m')}_{j'})\} $. For the same reasons as in case $(i)$, a constraint implying some $\{\sigma_{m}(u^{(m)}_{j}) , \sigma_{m}(v^{(m)}_{j})\}$ may happen at most once for each $m' \in [p], m' \neq m$.
			
			\item[$(iii)$] the last case is made of intersection of cases $(i)$ and $(ii)$, i.e. edges satisfying both constraints $\{u_j^{(m)}, v_j^{(m)}\}=\{u^{(m')}_{j'}, v^{(m')}_{j'}\}$ and $\{\sigma_{m}(u^{(m)}_{j}) , \sigma_{m}(v^{(m)}_{j})\}=\{\sigma_{m'}(u^{(m')}_{j'}) , \sigma_{m'}(v^{(m')}_{j'})\}$. This implies in particular that $\{u_j^{(m)}, v_j^{(m)}\}$ is an common fixed edge for $\sigma_{m}$ and $\sigma_{m'}$. By assumption \eqref{H3}, note that there cannot be a connected path of constraints of the form $(iii)$ of length greater or equal to $3$.
		\end{itemize}
		
		Let us now represent these constraints with a dependency graph. Each vertex a the graph represent one edge $\{u_j^{(m)}, v_j^{(m)}\}$ of the sum, that we will align column-wise according to $m \in [p]$. We put a plain (resp. dashed) edge between two nodes if they are enforced by constraint $(i)$ but not $(iii)$ (resp. $(ii)$ but not $(iii)$). Finally we draw a thick plain edge between two nodes if they are enforced by constraint $(iii)$.
		
		In view of discussion in points $(i)-(ii)-(iii)$, this dependency graph must be $p$-partite. Moreover, the subgraph made of plain thick or plain edges (resp. plain thick of dashed edges) only consists in a union of disjoint paths. The thick plain subgraph is only made of isolated edges and paths fo size $3$. Finally, transitivity of the equality relationship enables to draw any path in any order: we shall take the left to right order by convention (no backtracking). 
		
		We denote by $k_1$ (resp. $k_2$) the number of plain (resp. dashed) edges. We also denote track $k_3$ the number of thick plain isolated edges, and $k_4$ the number of thick plain isolated paths of length $2$. Figure \ref{fig:img_dependency} gives an example of such a dependency graph.
		
		\begin{figure}[h]
			\centering
			\includegraphics[scale=0.8]{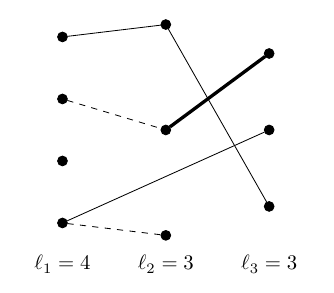}
			\caption{Example of a dependency graph, with $(k_1,k_2,k_3,k_4) = (3,2,1,0)$.}
			\label{fig:img_dependency}
		\end{figure} 
		
		In order to upper bound the contribution due to large terms, we must understand both the expectation of the product of indicators in \eqref{eq:expr_multi_mom} (this only depends on $(k_1,k_2,k_3,k_4)$), as well as the number of possible (labeled) dependency graphs with a given $(k_1,k_2,k_3,k_4)$.
		
		First, all plain (resp. dashed) dependency edge makes $1$ (resp. $1$) indicators disappear in the expectation (for any event $\cA, \one_\cA^2 = \one_\cA$). In the same way, all thick plain isolated edge (resp. thick plain isolated path of length $2$) makes $2$ (resp. $4$) indicators disappear
		the expectation for a given case with given $(k_1,k_2,k_3,k_4)$ is 
		\begin{equation}\label{eq:contrib_cas_patho}
		q^{2(\ell_1 + \ldots + \ell_p) - (k_1+k_2+2k_3+4k_4)} \leq C_1 n^{-2(\ell_1 + \ldots + \ell_p) + (k_1+k_2+2k_3+4k_4)}
		\end{equation} where $C_1$ is a constant depending on $\ell_1, \ldots, \ell_p$,
		
		Second, an upper bound for the number of possible (labeled) dependency graphs with a given $(k_1,k_2,k_3,k_4)$ can be established as follows. First, we have $k_1+k_2+k_3+2k_4$ equalities, leaving at most $\ell_1+\ldots+\ell_p - (k_1+k_2+k_3+2k_4)$ degrees of freedom in the choices of the edges. Moreover, we force $k_3$ of these edges to be common fixed edges between two (distinct) permutations, and $k_4$ of them to be common fixed edges between three (pairwise distinct) permutations. In view of hypotheses \eqref{H1} and \eqref{H2}, the number of possible (labeled) dependency graphs with a given $(k_1,k_2,k_3,k_4)$ is at most 
		\begin{flalign}\label{eq:nombre_cas_patho}
			\binom{k_1+k_2+k_3+k_4}{k_3+k_4}\card{\cS}^{\ell_1+\ldots+\ell_p - (k_1+k_2+k_3+2k_4) - k_3 -k_4} \times (n^{1+t})^{k_3} \times n^{t k_4} \nonumber \\ \leq C_2 n^{2(\ell_1+\ldots+\ell_p) - 2(k_1+k_2)-(3-t)k_3 -(6-t)k_4},
		\end{flalign} where $C_2$ is a constant depending on $\ell_1, \ldots, \ell_p$.
		
		Hence, in view of \eqref{eq:contrib_cas_patho} and \eqref{eq:nombre_cas_patho}, the total contribution of higher terms is upper bounded by
		\begin{multline*}
			 \sum_{s=1}^{\ell_1+\ldots+\ell_p} \sum_{k_1+k_2+k_3+2k_4=s} C_1 C_2 n^{-2(\ell_1 + \ldots + \ell_p) + (k_1+k_2+2k_3+4k_4)} n^{2(\ell_1+\ldots+\ell_p) - 2(k_1+k_2)-(3-t)k_3 -(6-t)k_4}\\
			\leq C_1 C_2 \sum_{s=1}^{\ell_1+\ldots+\ell_p} \sum_{k_1+k_2+k_3+2k_4=s} n^{-k_1} n^{-k_2} n^{-(1-t)k_3} n^{-(2-t)k_4} \\
			\leq C_1 C_2 \times (\ell_1+\ldots+\ell_p)\times (\ell_1+\ldots+\ell_p)^{4 (\ell_1+\ldots+\ell_p)} \times n^{-(1-t)} \underset{n \to \infty}{\longrightarrow} 0.
		\end{multline*} This last convergence concludes the proof.
	\end{proof}
	
	\section{Proofs of Lemmas}
	\addtocontents{toc}{\protect\setcounter{tocdepth}{0}}
	\subsection{Proof of Lemma \ref{lemma:controle_X}}\label{app:proof_control_X}
	\begin{proof}
		For the control of $X_\bT$ we follow classical computations made in \cite{Bollobas2001} to establish asymptotic behavior of $X_\bT$. For our purpose, we only need the two first moments. Assume that $\bT$ is of size $k = k(\bT) \leq K$, and that its automorphism group has $a =a(\bT)$ elements. Then, letting $\mu = \lambda s$,
		\begin{flalign*}
			\dE\left[X_\bT\right] &= \binom{n}{k} \times \frac{k!}{a} \times \left(\frac{\mu}{n}\right)^{k-1} \left(1-\frac{\mu}{n}\right)^{k(n-k)+\binom{k}{2}-k+1}.
		\end{flalign*} Indeed, we have $\binom{n}{k}$ choices for the nodes, then $\frac{k!}{a}$ ways of putting the edges. 
		Using $ \binom{n}{k} \sim \frac{n^k}{k!}$ and $\left(1-\frac{\mu}{n}\right)^{-k^2+\binom{k}{2}-k+1} \sim 1$ as soon as $k = o(\sqrt{n})$, we get
		\begin{flalign*}
			\dE\left[X_\bT\right] &\sim n \mu^{k-1}e^{-\mu k}/a.
		\end{flalign*} We now compute $\dE\left[X_\bT (X_\bT-1)\right] $ by classically counting the number of ordered pairs of distinct isolated tree components of $G \land G'$ isomorphic to $\bT$. This number is then multiplied by the probability of observing these two distinct isolated components. This gives
		\begin{flalign*}
			\dE\left[X_\bT (X_\bT-1)\right] &= \binom{n}{k} \binom{n-k}{k} \times \left(\frac{k!}{a}\right)^2 \times \left(\frac{\mu}{n}\right)^{2(k-1)} \left(1-\frac{\mu}{n}\right)^{2\left(k(n-2k) + \binom{k}{2}-k+1\right)} \left(1-\frac{\mu}{n}\right)^{k^2}.
		\end{flalign*}
		Here again, $k=o(\sqrt{n})$ gives that 
		\begin{flalign*}
			\dE\left[X_\bT (X_\bT-1)\right] &\sim n^2 \mu^{2(k-1)} e^{-2\mu k}/a^2.
		\end{flalign*}Denoting $\alpha = \alpha(\bT) := n \mu^{k-1}e^{-\mu k}/a(\bT)$, these computations give that $\dE\left[X_\bT\right] \sim \Var\left(X_\bT\right) \sim \alpha(\bT)$ when $n \to \infty$, uniformly in $k \leq K(n)$ as soon as $K(n) = o(\sqrt{n})$. Let us fix $\eps = \eps(n)>0$ small enough. Applying Chebyshev's inequality together with the union bound gives
		\begin{flalign*}
			\dP\left(\exists (k,\bT) \in [K(n)] \times \dT, X_{\bT} \leq (1-\eps)\alpha(\bT)\right) & \leq \sum_{k=1}^{K(n)} \sum_{\bT \in \dT_k} \dP\left(X_\bT - \dE\left[X_\bT\right] \leq (1- \eps) \alpha(\bT) - \dE\left[X_\bT\right] \right)\\
			& \overset{(a)}{\leq} \sum_{k=1}^{K(n)} \sum_{\bT \in \dT_k} \frac{\Var\left(X_\bT\right)}{\left((1- \eps) \alpha(\bT) - \dE\left[X_\bT\right]\right)^2} \\
			& \overset{(b)}{\leq} (1+o(1)) \sum_{k=1}^{K(n)} \sum_{\bT \in \dT_k} \frac{1}{\eps^2 \alpha(\bT)} \\
			&  \overset{(c)}{\leq} (1+o(1)) \sum_{k=1}^{K(n)} \sum_{\bT \in \dT_k} \frac{1}{\eps^2 n f(k)} \\
			& \overset{(d)}{\leq}(1+o(1)) K(n)^{K(n)} \frac{1}{\eps^2 n f(K(n))}, && \\
		\end{flalign*} where 
		\begin{equation}\label{eq:def_fk}
		f(k) := \frac{\mu ^{k-1}e^{-\mu k}}{k!}.
		\end{equation}
		We used in $(a)$ that all $ (1- \eps) \alpha(\bT) - \dE\left[X_\bT\right]$ are negative for $n$ large enough, in $(b)$ uniformity in $k \leq K(n)$, in $(c)$ the lower bound $nf(k)$ for $\alpha(T)$, and finally in $(d)$ that $k \mapsto f(k)$ is decreasing since $\mu e^{-\mu}<1$.
		
		Taking now e.g. $\eps = n^{-1/4}$, the last fact to check to establish the Lemma is that $K^{K}/f(K) = o(n^{1/2})$ when $K=K(n) = \log^{1/2}(n)$:
		\begin{flalign*}
			K^{K}/f(K) &= K^K K! (1/\mu)^{K-1} e^{\mu K}\\
			& \leq \exp\left(2 K \log K + (\log(1/\mu) + \mu) K\right)\\
			&=  \exp\left(\log^{1/2}(n) \log \log n + (\log(1/\mu) + \mu) \log^{1/2}(n)\right) = o(n^{1/2}).
		\end{flalign*} 
		
	\end{proof}

	\subsection{Proof of Lemma \ref{lemma:controle_overlap_ij}}\label{app:proof_overlap}
	\begin{proof}
		Denote $T_\infty := \card{V_\infty}$ and $T_> := \card{V_>}$. First notice that for any permutations $\sigma_i, \sigma_j$ with $i \neq j$ generated with Algorithm \ref{algo:rec_construction}, we have the following equality:
		\begin{equation}\label{eq:decompo_overlap}
		\ov(\sigma_i, \sigma_j) = T_\infty + T_> + \sum_{k=1}^{K(n)} \sum_{\bT \in \dT_k} k \cdot \ov(\Sigma^{(i)}_\bT, \Sigma^{(j)}_\bT),
		\end{equation} where $\Sigma^{(i)}_\bT$ (resp. $\Sigma^{(j)}_\bT$) is the tree permutation associated with $\bT$ in $\sigma_i$ (resp. in $\sigma_j$). We know that $T_\infty = c(\lambda s) n +o(n)$ w.h.p. and by Lemma \ref{lemma:small_trees}, $T_> =o(n)$ w.h.p.
		
		Define
		\begin{equation}
		\ov'(\sigma_i, \sigma_j) := \sum_{k=1}^{K(n)} \sum_{\bT \in \dT_k} k \cdot \ov(\Sigma^{(i)}_\bT, \Sigma^{(j)}_\bT),
		\end{equation} the second term in \eqref{eq:decompo_overlap}. We dominate $\ov'(\sigma_i, \sigma_j)$ as follows: \begin{lemma}\label{lemma:laplace_overlap}
			If $X=\ov(\Sigma^{(i)}_\bT, \Sigma^{(j)}_\bT),$, then for all $t \in \dR$,
			\begin{equation}\label{eq:lemma:laplace_overlap}
			\dE\left[e^{tX}\right] \leq \exp(e^t).
			\end{equation}
		\end{lemma}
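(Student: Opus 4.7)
The plan is to recognize that $X = \ov(\Sigma^{(i)}_{\bT},\Sigma^{(j)}_{\bT})$ is distributed as the number of fixed points of a uniform random permutation on $m := X_{\bT}$ elements. Indeed, by the construction in Algorithm \ref{algo:rec_construction}, $\Sigma^{(i)}_{\bT}$ and $\Sigma^{(j)}_{\bT}$ are independent and uniformly distributed on $\cS_{X_{\bT}}$, so by left- and right-invariance of the uniform measure on $\cS_{X_{\bT}}$ the composition $\tau := (\Sigma^{(i)}_{\bT})^{-1} \circ \Sigma^{(j)}_{\bT}$ is itself uniform on $\cS_{X_{\bT}}$. Since $\Sigma^{(i)}_{\bT}(k) = \Sigma^{(j)}_{\bT}(k)$ is equivalent to $\tau(k)=k$, $X$ equals the number of fixed points of $\tau$.

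Next I would invoke the classical factorial moment identity for the number of fixed points $F_m$ of a uniform random permutation, namely $\dE[F_m^{\underline{j}}]=1$ if $j\leq m$ and $\dE[F_m^{\underline{j}}]=0$ otherwise. This follows from a short inclusion-exclusion computation: expanding $F_m^{\underline{j}} = \sum_{i_1,\ldots,i_j \text{ distinct}} \prod_{\ell} \one_{\tau(i_\ell)=i_\ell}$ and noting that the probability that any prescribed $j$ points are all fixed equals $(m-j)!/m!$ when $j\leq m$.

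The last step is an MGF computation via expansion in falling factorials. Using the binomial identity $e^{tk} = (1+(e^t-1))^k = \sum_{j\geq 0} (e^t-1)^j \binom{k}{j} = \sum_{j\geq 0} (e^t-1)^j k^{\underline{j}}/j!$ valid for every nonnegative integer $k$, and taking expectations term by term (the sum is finite since $X\leq m$), one gets
\begin{equation*}
\dE\!\left[e^{tX}\right] \;=\; \sum_{j=0}^{m} \frac{(e^t-1)^j}{j!}.
\end{equation*}
For $t\geq 0$ the summands are nonnegative so the right-hand side is bounded above by $\sum_{j\geq 0}(e^t-1)^j/j! = \exp(e^t-1) \leq \exp(e^t)$. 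For $t<0$, the bound is trivial since $X\geq 0$ gives $e^{tX}\leq 1 \leq \exp(e^t)$. There is no real obstacle here: the only ingredient beyond invariance of the uniform measure on $\cS_{X_{\bT}}$ is the factorial moment identity, which is standard; the main subsequent use of the lemma (Chernoff-type bound on $\ov'(\sigma_i,\sigma_j)$ via independence of the tree permutations across $\bT \in \dT$) is the step that truly requires an MGF control of this type.
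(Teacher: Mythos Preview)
Your proof is correct and rests on the same key identity as the paper's: the factorial moments of the fixed-point count of a uniform permutation satisfy $\dE[X^{\underline{j}}]=\one_{j\le m}$. The only difference is cosmetic: the paper bounds $\dP(X\ge m)\le \dE\bigl[\binom{X}{m}\bigr]=1/m!$ and sums $\sum_m e^{tm}/m!=\exp(e^t)$, whereas you expand $e^{tX}$ directly in falling factorials to get $\sum_{j\le m}(e^t-1)^j/j!\le \exp(e^t-1)$, which is in fact slightly sharper.
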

		\begin{proof}
			\begin{flalign*}
				\dE\left[e^{tX}\right] & = \sum_{m \geq 0} e^{tm} \dP(X \geq m).
			\end{flalign*} Noting that $\dP(X \geq m) \leq \dE\left[\binom{X}{m}\right]$ and that
			\begin{flalign*}
				\dE\left[\binom{X}{m}\right] &= \frac{1}{m!} \dE\left[X(X-1)\ldots (X-m+1)\right]\\
				& = \frac{1}{m!} k(k-1)\ldots (k-m+1) \frac{(k-m)!}{k!} = \frac{1}{m!}
			\end{flalign*} gives
			\begin{flalign*}
				\dE\left[e^{tX}\right] & \leq \sum_{m \geq 0} \frac{e^{tm}}{m!}\leq \exp(e^t).
			\end{flalign*} 
		\end{proof}
		
		Using independence of the $X$ variables, Equation \eqref{eq:lemma:laplace_overlap} of Lemma \ref{lemma:laplace_overlap} give that for all $t \in \dR$,
		\begin{flalign}\label{eq:laplace_overlap_total}
			\dE\left[e^{t \cdot \ov'(\sigma_{i},\sigma_{j})}\right]& \leq \prod_{k=1}^{K(n)} \prod_{\bT \in \dT_k} \exp(e^{tk})  \leq \exp\left(e^{t K(n) } K(n)^{K(n)+1}\right).
		\end{flalign} Now, we use the classical Chernoff bound, for positive $t$,
		\begin{flalign*}
			\dP\left(\ov'(\sigma_i,\sigma_j) \geq n^{\alpha}\right) & \leq \exp\left(- tn^{\alpha} + e^{t K(n)} K(n)^{K(n)+1}\right) \\ 
			& \leq \exp\left(- \frac{n^{\alpha}}{K(n)} \left[\log \left(\frac{n^{1-\alpha}}{K(n)^{K(n)+2}}\right) - 1\right]\right),
		\end{flalign*} taking $t = \frac{1}{K(n)} \log \left(\frac{n^{\alpha}}{K(n)^{K(n)+2}}\right)$.
		The right hand side tend to $0$ for any $\alpha \in (0,1)$, and a simple use of the union bound ends the proof. 
	\end{proof}
	
	\subsection{Proof of Lemma \ref{lemma:controle_F}}\label{app:proof_control_F}
	\begin{proof}
		Fix $t>0$. We use a standard first moment method. We will use the results of Lemmas \ref{lemma:small_trees} and \ref{lemma:controle_X}, conditioning on the event $\cA$ where the corresponding results hold. Since $\dP(\cA) = 1-o(1)$, this conditioning is legitimate for our purpose. \\
		
		\proofstep{Step 1.} Let us first control the term $F(\cS_{out}, \sigma_{i_1},\ldots,\sigma_{i_r})$: edges of $\cS_{out}$ are made of exactly one vertex in $V_{\infty,>}$. There are at most $n^2$ such edges, and the probability for a given edge of $\cS_{out}$ being a common fixed edge of $\sigma_{i_1},\ldots,\sigma_{i_r}$ is $\frac{1}{X_\bT^{r-1}}$, which can be upper-bounded on $\cA$ by $(nf(K(n)))^{1-r} \leq n^{1-r+t/2}$ by Remark \ref{remark:f(K)}. 
		
		Edges of $\cS_{out}$ thus have a contribution in $\dE\left[F(\sigma_{i_1},\ldots,\sigma_{i_r}) | \cA \right]$ of at most $n^{3-r+t/2}$.
		
		\proofstep{Step 2.} In the edges appearing in $F(\sigma_{i_1},\ldots,\sigma_{i_r})$, we consider three cases:
		\begin{itemize}
			\item[$(i)$] edges of $\intra$: these are edges made with two vertices in the same tree $T \; \widehat{=} \; \bT \in \dT$. On event $\cA$, there are at most $$\sum_{k=1}^{K(n)} \sum_{\bT \in \dT_k} X_\bT k^2 \leq n K(n)$$ such edges. The probability for a given edge of $\intra$ made of vertices of  $\bT \in \dT$ being a common fixed edge of $\sigma_{i_1},\ldots,\sigma_{i_r}$ is $\frac{1}{X_\bT^{r-1}}$, which can be upper-bounded by $(nf(K(n)))^{1-r} \leq n^{1-r+t/2}$. Edges of $\intra$ thus have a contribution in $\dE\left[F(\sigma_{i_1},\ldots,\sigma_{i_r}) | \cA \right]$  of at most $n^{2-r+t/2}$.
			
			\item[$(ii)$] edges of $\interu$: these are edges made with two vertices $u,v$ in different trees $T \neq T'$ (but that may be $\sim$ to the same $\bT \in \dT$), and verifying $u \notsimt v$. There are at most $n^2$
			such edges. Since $u \notsimt v$, there are only one possibility to map two edges of $\interu$. The probability for a given edge of $\interu$ made of vertices of  $T \; \widehat{=} \; \bT, T' \; \widehat{=} \; \bT'$ being a common fixed edge is $\frac{1}{(X_\bT(X_\bT -1))^{r-1}}$, and edges of $\interu$ thus have a contribution in the expectation of at most $n^{4-2r+t/2}$.
			
			\item[$(iii)$] edges of $\interd$: these are edges similar to case $(ii)$, except that their endpoints belong necessarily to isomorphic trees, and verifying $u \simt v$. There are at most $n^2$ 
			such edges. Since $u \simt v$, there are two ways to map two edges of $\interd$. The probability for a given edge of $\interd$ made of vertices of  $T, T' \; \widehat{=} \; \bT$ being a common fixed edge is time $\left(\frac{2}{X_\bT(X_\bT -1)}\right)^{r-1}$, and edges of $\interd$ thus have a contribution in the expectation of at most $n^{4-2r+t/2}$.
		\end{itemize}
		
		\proofstep{Step 3.} The first two steps show that $\dE\left[F(\sigma_{i_1},\ldots,\sigma_{i_r}) | \cA \right] \leq C n^{3-r+t/2}$ for all $t>0$. Summing over all possible $r$-tuples of permutations, Markov inequality yields
		\begin{flalign*}
			\dP\left( \exists r \geq 4, \, \exists \sigma_{i_1}, \ldots, \sigma_{i_r} \mbox{ pairwise distinct}, \, F(\cS,\sigma_{i_1},\ldots,\sigma_{i_r}) \geq 1 \right) & \leq o(1) + \sum_{r=4}^{\infty} p^r C n^{3-r+t/2}\\
			& \leq C p^4 n^{t/2-1} \to 0,
		\end{flalign*} for $t$ small enough, and 
		\begin{flalign*}
			\dP\left( \exists \sigma_{i_1}, \sigma_{i_2}, \sigma_{i_3} \mbox{ pairwise distinct}, \, F(\cS,\sigma_{i_1}, \sigma_{i_2}, \sigma_{i_3}) \geq n^t \right) & \leq o(1) + p^3 \times C n^{-t/2} \to 0, 
		\end{flalign*}and
		\begin{flalign*}
			\dP\left( \exists \sigma_{i_1} \neq \sigma_{i_2}, F(\cS,\sigma_{i_1}, \sigma_{i_2}) \geq n^{1+t} \right) & \leq o(1) + p^2 \times C n^{-t/2} \to 0.
		\end{flalign*}
	\end{proof}
\addtocontents{toc}{\protect\setcounter{tocdepth}{2}}

\end{subappendices}

\chapter{From tree matching to sparse graph alignment}\label{chapter:NTMA}
In this chapter, we consider alignment of sparse graphs, for which we introduce the \emph{Neighborhood Tree Matching Algorithm} (\alg{NTMA}), based on a measure of similarity between trees. For correlated \ER random graphs, we prove that the algorithm returns -- in polynomial time -- a positive fraction of correctly matched vertices, and a vanishing fraction of mismatches. This result holds with average degree of the graphs in $O(1)$ and correlation parameter $s$ that can be bounded away from 1, conditions under which random graph alignment is particularly challenging. As a byproduct of the analysis we introduce a matching metric between trees and characterize it for several models of correlated random trees. These results may be of independent interest, yielding for instance efficient tests for determining whether two random trees are correlated or independent\footnote{This related problem first appeared in this contribution, and will be the focus of Chapter \ref{chapter:MPAlign}.}.\\

This chapter is based on the paper \textit{From tree matching to sparse graph alignment} \cite{Ganassali20a}, published at \emph{COLT 2020}, a joint work with L. Massoulié.

\section{Introduction}

As seen in the introduction (Section \ref{intro:subsection:short_survey}), previously existing methods for \ER graph alignment only succeed in a dense regime where the mean degree of the graphs is $\Omega(\log n)$. When the mean degree is constant, several phenomena occur -- degrees do not concentrate any more and the graph looses its connectivity (see Theorem \ref{intro:theorem:connectivity_ER}), among other things -- and make the performance of standard dense methods collapse. 

We recall in particular that results from \cite{Cullina2017,Cullina18} show that in the sparse regime, there is no hope of recovering $\pi^{\star}$ exactly or almost exactly, or in other words, of perfectly re-aligning $G$ and $H$. 
Nevertheless, their result does not rule out the possibility of partially recovering the unknown permutation $\pi^{\star}$. For the applications mentioned earlier in Section \ref{intro:subsection:motivations}, it is at the same time natural to assume that the graphs involved are sparse, and potentially useful to recover only a fraction of the unknown matches $(u,\pi^{\star}(u))$. 

This motivates the present work, whose goal is to  show that \emph{partial alignment of sparse correlated graphs is feasible}, and to introduce a polynomial-time algorithm for producing such partial alignments. 

We do not recall here the definition of the correlated \ER model, already introduced in the introduction (see \eqref{eq:CER_model}), and specified in Section \ref{impossibility:section:introduction} of Chapter \ref{chapter:impossibility} in the sparse case. We only recall that the parameters of $\G(n,\lambda/n,s)$ are the number of nodes $n$, the mean degree $\lambda >0$ and the correlation parameter $s \in [0,1]$. The vertices of the second graph $G'$ are relabeled with a uniform independent permutation $\pi^{\star} \in \cS_n$, and we observe $G$ and $H := G'^{\pi^{\star}}$.

\subsubsection{Notations}
Let us recall a few notations. For an undirected graph $G$, denote by $V(G)$ its set of vertices, $E(G)$ (resp. $\overrightarrow{E}(G):=\lbrace (u,v), \lbrace u,v \rbrace \in E(G) \rbrace$) its set of non-oriented (resp. oriented) edges. We use the notations $u \conn v$ if $\lbrace u,v \rbrace \in E(G)$ and $u \to v$ if $(u,v)\in \overrightarrow{E}(G)$. The usual graph distance in $G$ will be denoted $\delta_G$. For $u \in V(G)$, let $\mathcal{N}_G(u)$ denote the neighborhood -- the set of neighbors -- of $v$ in $G$, and $\deg_{G}(v)$ its degree. 

For $d \geq 1$ we also define $\cB_{G}(v,d)$ the set of vertices at (graph) distance at most $d$ from $v$, and $\cS_{G}(v,d):=\cB_{G}(v,d) \setminus \cB_{G}(v,d-1)$ the set of vertices at distance exactly $d$ from $v$. 

For a rooted tree $t$, we let $\rho(t)$ denote its root node. For any $u \in V(T)\setminus \{\rho(t)\}$, we let $\pi_{t}(u)$ denote the parent of node $u$ in $T$. For $d \geq 1$, we note $\cB_d(t)=\cB_{t}(\rho(t),d)$ and $\cL_d(t)=\cS_{t}(\rho(t),d)$.

We omit the dependencies in $G$ or $t$ of these notations when there is no ambiguity.

\subsubsection{Objectives and main result}
Our main result is the proposal of the so-called \emph{Neighborhood Tree Matching Algorithm} (\alg{NTMA} hereafter) together with the following
\begin{theorem}
	\label{NTMA:thm:theoreme_principal}
	For some $\lambda_0>1$, for all $\lambda \in (1,\lambda_0]$, there exists $s^*(\lambda)<1$ such that, provided $s \in (s^*(\lambda),1]$, for $(G,H) \sim \G(n,\lambda/n,s)$. \alg{NTMA} returns a matching $\cS = \cS(G,H)$ verifying the following properties with high probability:
	\begin{equation}
	|\cS\cap \{(u,\pi^{\star}(u)),\;u\in [n]\}|=\Omega(n) \quad \mbox{and} \quad |\cS\setminus\{(u,\pi^{\star}(u)),\; u \in [n]\}|=o(n) \, .
	\end{equation}
\end{theorem}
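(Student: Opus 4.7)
The plan is to exploit the local tree-like structure of sparse Erd\H{o}s-R\'enyi graphs and use the tree matching weight $\cW_d$ as a scoring function for candidate pairs. I will describe the \alg{NTMA} algorithm as follows: fix a depth $d = d(n)$ growing slowly (say $d = \lfloor c \log n\rfloor$ for some small $c>0$ so that typical neighborhoods have size $n^{o(1)}$), and for each pair $(u,v)\in [n]^2$ compute $\cW_d(\cB_G(u,d), \cB_H(v,d))$ viewed as rooted trees (truncated on cycles, which will be shown to occur with vanishing probability). Then output the matching $\cS$ consisting of pairs $(u,v)$ whose score exceeds a suitably chosen threshold $\beta_d$ and for which no other competing pair exceeds the threshold ``by too much'' (a mutual best-match rule).

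The analysis proceeds in three main steps. First, I would invoke the classical Benjamini--Schramm-type coupling: for $\lambda$ bounded and $d = o(\log n)$ small enough, with probability $1-o(1)$ uniformly in $u$ the neighborhood $\cB_G(u,d)$ is a tree, and the joint neighborhoods $(\cB_G(u,d), \cB_H(\pi^\star(u),d))$ are close in total variation to the correlated Galton--Watson model $\dPls_d$ introduced in Section \ref{intro:section:cdt}. For $u\neq \pi^\star(v)$ whose neighborhoods are disjoint (which holds for all but $o(n)$ choices of $v$ given $u$), the pair is close in distribution to $\dPl_d$. Second, the recursion \eqref{eq:intro:rec_formula_W(i,u)} for $\cW_d$ defines an associated recursive distributional equation (RDE) under $\dPl_d$ and $\dPls_d$. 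The key analytic step is to show that there exist constants $\mu^\star(\lambda,s) > \mu^\circ(\lambda)$ such that, with positive probability bounded away from zero,
\begin{equation*}
\cW_d/(\lambda s)^d \to W_{\mathrm{corr}} \geq \mu^\star \text{ under } \dPls_d, \qquad \cW_d/(\lambda s)^d \leq \mu^\circ + o(1) \text{ w.h.p.\ under } \dPl_d.
\end{equation*}
Exactly this separation is what the algorithm exploits. Choosing the threshold $\beta_d$ strictly between $\mu^\circ (\lambda s)^d$ and $\mu^\star (\lambda s)^d$ gives, for correlated pairs, a probability $p^\star = p^\star(\lambda,s) > 0$ of exceeding $\beta_d$, and for independent pairs a probability $o(1/n)$ (after amplifying the RDE tail bound via a moment or concentration argument on $d$ levels of the recursion).

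The third and most delicate step is to control the false positives: even though a single independent pair fails the threshold with very high probability, we must union-bound over the $\Theta(n^2)$ candidate pairs. The hard part will be to obtain a stretched-exponential tail for $\cW_d$ under $\dPl_d$, of the form
\begin{equation*}
\dPl_d\bigl(\cW_d \geq \beta_d\bigr) \leq \exp(-c \, (\lambda s)^d)
\end{equation*}
for a constant $c>0$. Such a bound can be derived by propagating a Cram\'er-type exponential moment bound through the matching recursion, exploiting that at each level the supremum over matchings introduces only a polynomial factor in the degrees while the threshold grows geometrically. Combined with $d = \lfloor c\log n\rfloor$ tuned so that $(\lambda s)^d \gg \log n$, this ensures $o(n)$ false matches in total by a union bound; this is precisely where the condition $\lambda s > 1$ (hence $\lambda > 1$) is needed. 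Meanwhile, on the positive side, since correlated pairs succeed independently with probability $p^\star > 0$ (once the $o(n)$ atypical vertices are excluded), a standard second moment computation yields $\Omega(n)$ vertices whose correct match exceeds the threshold and is unique among competitors. Finally, the requirement $s > s^\star(\lambda)$ and $\lambda \leq \lambda_0$ enters when quantifying the gap $\mu^\star - \mu^\circ$: for $s$ close enough to $1$, the correlated tree $\tau^\star$ inside the augmentations dominates the matching weight, while for $\lambda$ not too large, the ``parasitic'' matchings on independent pairs cannot fill up to the threshold. Making this gap explicit via a monotone analysis of the RDE on a suitable interval of $(\lambda,s)$ is the main technical obstacle, and justifies the restrictions to $\lambda \in (1,\lambda_0]$ and $s \in (s^\star(\lambda),1]$.
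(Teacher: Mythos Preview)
Your overall architecture (local tree coupling, matching-rate separation, exponential tails under $\dPl_d$) matches the paper, but there is a genuine gap in the control of false positives. You only analyze wrong pairs $(u,u')$ whose $d$-neighborhoods are \emph{disjoint}, observing this covers all but $o(n)$ choices of $u'$ for each fixed $u$. But that still leaves $\Theta(n)$ problematic pairs in total---namely, for each $u$, the vertices $u'$ that lie within distance $O(d)$ of $\pi^\star(u)$ in the union graph. For those pairs the two neighborhoods overlap substantially, and $(\cB_G(u,d),\cB_H(u',d))$ is \emph{not} close to $\dPl_d$; in fact it can share a common subtree of size $\Theta((\lambda s)^d)$, so $\cW_d$ can exceed your threshold $\beta_d$ with non-vanishing probability. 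The paper states this explicitly: ``testing $\cW_d(\cB_G(u,d),\cB_H(u',d))>\gamma^d$ is not enough, because two-hop neighbors in $G\cap G'$ would dramatically increase the number of incorrectly matched pairs.'' Your vague ``mutual best-match rule'' does not obviously resolve this, since both $(u,\pi^\star(u))$ and $(u,u')$ can score highly for such $u'$.

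The paper handles this via two additional ingredients you are missing. First, the \emph{dangling trees trick}: rather than thresholding $\cW_d(u,u')$, one requires the existence of \emph{two} distinct neighbors $v\neq w$ of $u$ and $v'\neq w'$ of $u'$ such that both oriented weights $\cW_{d-1}(v\leftarrow u,v'\leftarrow u')$ and $\cW_{d-1}(w\leftarrow u,w'\leftarrow u')$ exceed $\gamma^{d-1}$. Second, to show this trick works for the near-by pairs, the paper introduces and analyzes a third tree model, the \emph{correlated shifted} model $\dP^{(\lambda,s,\delta)}_d$ (two trees whose roots are at distance $\delta\geq 1$ in a common realization), and proves $\sup_{\delta\geq 1}\gamma(\lambda,s,\delta)<\lambda$ via a separate induction. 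The no-mismatch proof then performs a case analysis (the four cases of Figure~\ref{fig:NTMA:parrallel_construction_bis}) showing that with two dangling trees at least one of the relevant pairs is either independent or falls into the shifted model, whence its weight stays below threshold. Without the dangling trees trick and the shifted-model analysis, your bound on mismatches does not go through.
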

In words, our algorithm returns a set of node alignments which contains a negligible fraction of mismatches, and $\Omega(n)$ good matches, that is performs \emph{one-sided partial alignment} (see \ref{intro:subsection:heuristics_tree_graph}). This result covers values of $\lambda$ arbitrarily close to 1, and thus applies to very sparse graphs. For $\lambda s<1$, \ER graphs in our correlated model have connected components of size at most logarithmic in $n$, and we saw earlier on in Chapter \ref{chapter:impossibility} that there is no hope to recover a positive fraction of correct matches. This result can be interpreted as follows. For partial graph alignment of sparse \ER correlated random graphs, there is an ``easy phase'' that includes the parameter range $\{(\lambda,s):\lambda\in(1,\lambda_0],\; s\in(s^*(\lambda),1]\}$.

\subsubsection{Organization of the chapter}
The description of the Neighborhood Tree Matching Algorithm and the proof strategy for establishing Theorem \ref{NTMA:thm:theoreme_principal} are given in Section \ref{NTMA:section:sparse_graph_alignment}. 
Our algorithm relies essentially on a tree matching operation. To pave the way for Section \ref{NTMA:section:sparse_graph_alignment}, we introduce in Section \ref{NTMA:section:tree_matching} a notion of matching weight between trees that is key for our algorithm, and can be computed efficiently in a recursive manner. We further obtain probabilistic guarantees on the matching weights between random trees drawn according to some (correlated) Galton-Watson branching processes. These are instrumental in the proof of Theorem \ref{NTMA:thm:theoreme_principal}, but also of independent interest. Indeed we introduce in Section \ref{NTMA:section:tree_matching} a natural hypothesis testing problem on pairs of random trees, for which we obtain a successful test based on computation of tree matching weights. This last problem will next be the main focus of Chapter \ref{chapter:MPAlign}.

\section{Tree matching} \label{NTMA:section:tree_matching} 
In this section, we introduce the matching weight between rooted trees and the related matching rate. We then establish high probability bounds on the latter for (correlated) Galton-Watson random trees. We also give an application to a hypothesis testing problem of correlation detection in trees. 

\subsection{Matching weight of two rooted trees}
For any pair of rooted trees $(\tau,t)$, we say that a mapping $g: V(\tau)\to V(t)$ is \emph{tree-preserving} if 
\begin{itemize}
	\item $f(\rho(\tau))=\rho(t)$ (the root of $\tau$ is sent onto the root of $t$), and
	\item $\forall \, u \in V(\tau)\setminus\{\rho(\tau)\}, \; f(\pi_\tau(u))=\pi_{t}(f(u))$ (the parent of $u$ is matched on the match of its parent).
\end{itemize}

For any $d\geq 0$, let $\cA_d$ denote the collection of rooted trees whose leaves are all of depth $d$. Given two rooted trees $t$ and $t'$ of depth at most $d$, let $\set{t \cap t'}$ denote the collection of trees $\tau \in \cA_d$ such that there exist tree-preserving injective embeddings $f: V(\tau)\to V(t)$, $f': V(\tau)\to V(t')$. The \textit{matching weight of $t$ and $t'$ at depth $d$}, as introduced in Section \ref{intro:subsection:heuristics_tree_graph}, is defined as follows:
\begin{equation}\label{eq:NTMA:def:Wtt}
\cW_d(t,t'):=\sup_{\tau \in \set{t \; \cap \; t'}}\left| \cL_d(\tau) \right| \, ,
\end{equation} i.e. the size of the largest common subtree of $t$ and $t'$, measured in terms of the number of leaves at depth $d$.

\begin{remark}
	Note that by definition \eqref{eq:NTMA:def:Wtt}, 
	$$\cW_0(t,t')=1 \quad \mbox{and} \quad \cW_1(t,t')=\max \left( \deg_t(\rho(t)), \deg_{t'}(\rho(t'))\right) \, .$$ 
\end{remark}
	
\begin{figure}[H]
	\centering
	\vspace{-0.5cm}
	\includegraphics[scale=1]{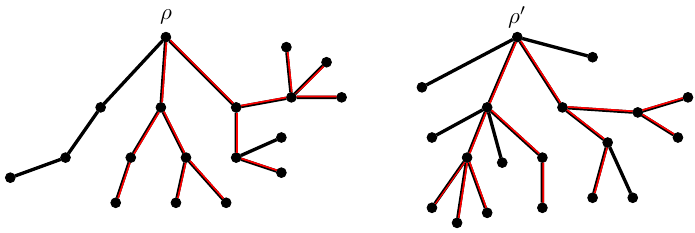}
	\caption{Example of two trees $t$, $t'$ with $\cW_3(t,t')=7$, where an optimal $t \in \cA_3$ is drawn in red.}
	\label{fig:NTMA:image_tree_matching_weight}
\end{figure}
Before going any further, we need to recall and introduce a few new notations. For a rooted tree $t$ of depth at most $d$, $u \in V(t)$, $t_u$ is the downstream subtree of $t$ re-rooted at $u$. More generally\footnote{Note that in tree $t$, $t_u = t_{u \leftarrow \rho(t)}$.} $u,v \in V(t)$ such that $v \to u$, $t_{u \leftarrow v}$ denotes the subtree of $t$ re-rooted at $u$ where edge $\set{u,v}$ has been removed, that is the subtree pointed by the oriented edge $v \to u$.

\begin{figure}[H]
	\centering
	\includegraphics[scale=1]{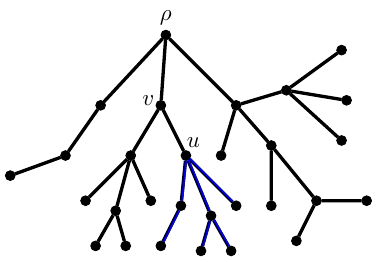}
	\caption{An example of a tree $t$ and its corresponding $t_{u \leftarrow v}$ highlighted in blue.}
	\label{fig:NTMA:image_rerooted_trees}
\end{figure}

	For a given pair of trees $t$ and $t'$ of depth at most $D$, 
	for pairs of vertices $(u,u'),(v,v') \in V(t) \times V(t')$ such that $v \to u$, $v' \to u'$, $t_{u \leftarrow v}$ and $t'_{u' \leftarrow v'}$ are of depth at most $d$, the \textit{matching weight of edges $v \to u$ and $v' \to u'$} is then defined as:
	\begin{equation}\label{eq:NTMA:def:Wuuvv}
	\cW_d(u \leftarrow v, u' \leftarrow v'):=\sup_{\tau \in \{t_{u \leftarrow v} \; \cap \; t'_{u' \leftarrow v'}\}}\left| \cL_d(\tau) \right| \, .
	\end{equation}

\begin{remark}
	Note that by definition \eqref{eq:NTMA:def:Wuuvv}, 
	$$\cW_0(u \leftarrow v, u' \leftarrow v')=1 \quad \mbox{and} \quad \cW_1(u \leftarrow v, u' \leftarrow v')=\max \left( \deg_t(u), \deg_{t'}(u')\right) -1 \, .$$ 
\end{remark}

\subsection{Recursive computation of $\cW_d$}\label{NTMA:subsection:recursionW}
From definition \eqref{eq:NTMA:def:Wuuvv}, doing a first step conditioning, i.e. distinguishing on the matching of pairs of nodes at depth $1$ in both trees, gives the following:
\begin{equation}
\label{eq:NTMA:rec_formula_Wuuvv}
\cW_d(u \leftarrow v, u' \leftarrow v') = \sup_{\mathfrak{m} \in \cM\left(\cN_t(u) \setminus \lbrace v \rbrace \, , \, \cN_{t'}(u') \setminus \lbrace v' \rbrace \right)} \sum_{(w,w') \in \mathfrak{m}} \cW_{d-1}(w \leftarrow u, w' \leftarrow u') \, ,
\end{equation} where for all sets $U,V$, $\cM\left(U,V\right)$ is the set of all partial matchings between $U$ and $V$, that is one-to-one mappings $\mathfrak{m}: U_0 \subseteq U \to V$. In the same way, for two trees $t$, $t'$ of depth at most $d$, we have 
\begin{equation}
\label{eq:NTMA:rec_formula_Wtt}
\cW_d(t,t') = \sup_{\mathfrak{m} \in \cM\left(\cN_t(\rho(t))\, , \, \cN_{t'}(\rho(t')) \right)} \sum_{(u,u') \in \mathfrak{m}} \cW_{d-1}(u \leftarrow \rho(t), u' \leftarrow \rho(t')) \, .
\end{equation}
These recursive formulae \eqref{eq:NTMA:rec_formula_Wtt} and \eqref{eq:NTMA:rec_formula_Wuuvv} show that matching weights at depth $d$ can be obtained by computing weights at depth $d-1$ and solving a linear assignment problem (LAP) \cite{Kuhn55}, and yield the following simple recursive algorithm to compute matching weights at depth $d$.

\begin{algorithm}[H]
	\caption{$\cW_d(u \leftarrow v, u' \leftarrow v') $}
	\label{NTMA:algo:Wuuvv}
	\SetAlgoLined
	\uIf{$d=0$}{
		\textbf{return} 1\;}
	\Else{$U \gets \cN_t(u) \setminus \lbrace v \rbrace$ \; 
		
		$V \gets \cN_{t'}(u') \setminus \lbrace v' \rbrace$ \; 
		\For{$(w,w') \in \cE \times \cF$}{
			Compute $\cW_{d-1}(w \leftarrow u, w' \leftarrow u' )$\;}
		Solve the LAP problem $W^* := \sup_{\mathfrak{m} \in \cM\left(U,V\right)} \sum_{(w,w') \in \mathfrak{m}} \cW_{d-1}(w \leftarrow u, w' \leftarrow u' )$\; 
		
		\textbf{return} $W^*$\;}
\end{algorithm}

\begin{remark}\label{rem:complex1}
It is easy to show that computing the matching weight $\cW_d(t,t')$ with the recursive algorithm \ref{NTMA:algo:Wuuvv} takes $O \left( d_\mathrm{max}^{2d}\right)$ time, where $d_\mathrm{max}$ is the maximal degree in $t$ and $t'$, which is not polynomial in $d$.

However, we can do better using dynamic programming, namely storing for all $k \in [d]$ the weights $\cW_{k}(e,e')$ in a array of size the number of pairs $(e,e’)$ where $e$ and $e’$ are two oriented edges in $t$,$t’$ (there are $4 \times |t| \times |t'|$ such pairs). Each time we increase $k$ and update the array, we solve one LAP for each pair $(e,e’)$, e.g. with the Hungarian algorithm that running in cubic time complexity \cite{Kuhn55}. The size of the -- small -- matrix on which the LAP is done does not exceed $d_\mathrm{max} \times d_\mathrm{max}$, hence updating the array from $k$ to $k+1$ is done in $O(|t| \times |t'|\times d_\mathrm{max}^3)$ steps. This gives a time complexity of $O\left(d \times |t| \times |t'|\times d_\mathrm{max}^3\right)$, which is better in general\footnote{Note that however, if $\card{t}, \card{t'} = \Theta(n^{\alpha})$ and $d_\mathrm{max}=O(\log n)$, which will be the case later in Section \ref{NTMA:section:sparse_graph_alignment}, then for small values of $d$ and large values of $n$, the recursive algorithm \ref{NTMA:algo:Wuuvv} is faster.}. 
\end{remark}
\subsection{Matching rate of random trees}
	For each $d \geq 0$, let us consider a pair of random trees $(T_d,T'_d)$ sampled according to some distribution $\mu_d$ (we will further introduce models in the sequel). The \textit{matching rate of the family of distributions $\set{\mu_d}_{d \geq 0}$} is defined as follows
	\begin{multline}
	\label{eq:NTMA:matching_rate}
	\gamma(\set{\mu_d}_{d \geq 0}) := \\ \inf \left\lbrace \gamma : \exists \, m,c,d_0>0, \; \forall x \geq 0, \; \forall d \geq d_0, \; \mu_d(\{(t,t') : \cW_{d}(t,t') \geq mx \gamma^d\}) \leq e^{-(x-c)_+}\right\rbrace \, .
	\end{multline} This important quantity \eqref{eq:NTMA:matching_rate} captures the asymptotic geometric growth rate of matching weights of random trees drawn under $\mu_d$. A simpler alternative definition could have been $$\widetilde{\gamma}(\set{\mu_d}_{d \geq 0}) := \inf \left\lbrace \gamma : \mu_d(\{(t,t') : \cW_{d}(t,t') \geq \gamma^d\}) \underset{d \to \infty}{\longrightarrow} 0\right\rbrace \, .$$  However, definition \eqref{eq:NTMA:matching_rate} better suits our purpose.

\begin{remark}\label{rem:NTMA:1.5}
	By definition,  note that for any $\gamma>\gamma(\set{\mu_d}_{d \geq 0})$, $\mu_d\left(\cW_{d}(t,t') \geq \gamma^d\right)$ converges to $0$ very fast, like $O\left( \exp \left(-c(\gamma)^d\right) \right)$ with $c(\gamma)>1$, so that $\widetilde{\gamma}(\set{\mu_d}_{d \geq 0}) \leq \gamma(\set{\mu_d}_{d \geq 0})$.
\end{remark}

\subsection{Models of random trees}\label{NTMA:subsection:models_trees}
We now introduce\footnote{Some of them are already mentioned in the introduction, see Section \ref{intro:section:cdt}.} three models of random trees that are relevant to sparse graph alignment.

\subsubsection{Galton-Watson trees with Poisson offspring}  The \emph{Galton-Watson tree with offspring $\Poi(\mu)$ up to depth $d$}, denoted by $\GWmu_d$, is defined recursively as follows. First, the distribution $\GWmu_0$ is a Dirac at the trivial tree, containing only the root. Then, for $d \geq 1$, sample a number $Z \sim \Poi(\mu)$ of independent $\GWl_{d-1}$ trees, and attach each of them as $c$ children of the root, to form a tree of depth at most $d$. 

\subsubsection{Independent model $\dPl_d$} Under the independent model $\dPl_{d}$, $t$ and $t'$ are two independent $\GWl_{d}$, where $\lambda>0$ is the mean number of children in the graph.

\subsubsection{Tree augmentation} For $\lambda >0$ and $s \in [0,1]$, a (random) \emph{$(\lambda,s)-$augmentation} of a given tree $\tau =(V,E)$, denoted by $\Augls_d(\tau)$, is defined as follows. First, to each node $u$ in $V$ of depth $<d$, we attach a number $Z^{+}_u$ of additional children, where the $Z^{+}_u$ are i.i.d. of distribution $\Poi(\lambda (1-s))$. Let $V^+$ be the set of these additional children. To each $v \in V^+$ at depth $d_v$, we attach another random tree of distribution $\GWl_{d-d_v}$, independently of everything else.

\subsubsection{Correlated shifted model $\dPlsd_d$} In the correlated shifted model $\dPlsd_{d}$ , the tree $T$ is rooted at $\rho$ and $T'$ is rooted at $\rho'$, and $\rho'$ is also a node of $T$, at distance $\delta$ from its root $\rho$. The two trees are generated as follows. First, all nodes $u$ in $T$ on the path from $\rho$ to the parent of $\rho'$ in $T$ have, besides their child leading to $\rho'$, extra $Z^{+}_u \sim \Poi(\lambda)$ children in $T$, and all extra child $v$ at depth $d_v$ has an additional offspring in $T$ sampled from $\GWl_{d-d_v}$. Then, sample an \emph{intersection tree} $\tau^\star \sim \GWls_{d-\delta}$ starting from $\rho'$. Independently, we finish the construction of $T$ (resp. of $T'$) with a  $(\lambda,s)-$augmentation of $\tau^\star$ of depth $d-\delta$ (resp. of depth $d$). See Figure \ref{fig:NTMA:image_GW} for an illustration. We denote $(T,T') \sim \dPlsd_d$.

\subsubsection{Correlated model $\dPls_d$} It is the previous model with $\delta=0$, so that the two correlated trees $T$ and $T'$ have same root $\rho$. We denote $(T,T') \sim \dPls_d$.
In other words, the correlated model $\dPls_{d}$ is built as follows: starting from an \emph{intersection tree} $\tau^\star \sim \GWls_d$, and $T$ and $T'$ are obtained as two independent $(\lambda,s)-$augmentations of $\tau^\star$. We denote $(T,T') \sim \dPls_d$.

In all these models, the labels of the trees $T$ and $T'$ are always forgotten, or randomly uniformly re-sampled. We however still distinguish the roots af the two trees. It can easily be verified that the marginals of $T$ and $T'$ are the same under $\dPl_{d}$ and $\dPls_{d}$, namely $\GWl_d$. The parameters are $\lambda$, the mean number of children of a node, and the correlation $s$. 

\begin{figure}[h]
	\centering
	\includegraphics[scale=0.9]{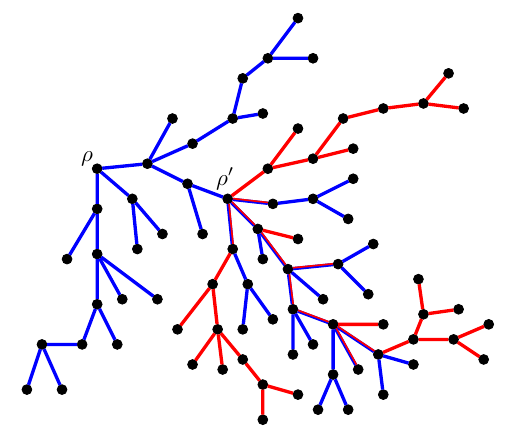}
	\caption{Random trees $T$ (blue) and $T'$ (red) from model $\dPlsd_d$ with $\delta=3$.}
	\label{fig:NTMA:image_GW}
\end{figure}

We now move to the analysis of matching rates for these models, which as explained before are crucial quantities which can help discriminate between the independent and the correlated setting.

\subsection{Matching rate of independent and correlated Galton-Watson trees}

\begin{proposition}
	\label{NTMA:prop:exponent_correlated}
	Let $\lambda>1$ and $s \in [0,1]$ such that $\lambda s >1$. Then, letting $\gamma(\lambda,s):=\gamma(\{\dPls_d\}_{d \geq 0})$, we have:
	\begin{equation*}
	\gamma(\lambda,s) \geq \lambda s.
	\end{equation*}
\end{proposition}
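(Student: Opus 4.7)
\textit{Proof plan.} The key structural observation is that under $\dPls_d$ the correlated trees share a common intersection tree $\tau^\star\sim \GWls_d$ that is embedded tree-preservingly in both $T$ and $T'$ by construction (both $T$ and $T'$ are obtained as $(\lambda,s)$-augmentations of $\tau^\star$). Therefore, if one restricts $\tau^\star$ to its \emph{surviving lineage up to depth $d$}, namely the subtree $\tau^\star_{|d}$ consisting of all ancestors of nodes of $\tau^\star$ at depth exactly $d$, then $\tau^\star_{|d}\in\cA_d$ (when non-empty) and it admits tree-preserving injective embeddings into both $T$ and $T'$. By the definition \eqref{eq:NTMA:def:Wtt}, this yields the deterministic lower bound
\begin{equation*}
\cW_d(T,T')\;\geq\; |\cL_d(\tau^\star_{|d})|\;=\;|\cL_d(\tau^\star)|.
\end{equation*}

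The second step reduces the proposition to a classical Galton--Watson statement. The intersection tree $\tau^\star$ is a supercritical Galton--Watson branching process with $\Poi(\lambda s)$ offspring distribution (since $\lambda s>1$). Because the Poisson distribution has finite moments of all orders, the $L\log L$ condition is satisfied, so Kesten--Stigum applies: the renormalized generation sizes satisfy $|\cL_d(\tau^\star)|/(\lambda s)^d \to W$ almost surely for some non-negative random variable $W$ with $\dE[W]=1$. In particular, $\dP(W>0)=p$ where $p>0$ is the survival probability of the Galton--Watson tree.

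Given the two previous ingredients, I would conclude by contradiction. Fix any $\gamma\in(0,\lambda s)$ and suppose for contradiction that $\gamma\geq \gamma(\lambda,s)$, so that there exist $m,c,d_0>0$ with
\begin{equation*}
\forall x\geq 0,\;\forall d\geq d_0,\qquad \dPls_d\!\left(\cW_d(T,T')\geq mx\gamma^d\right)\leq e^{-(x-c)_+}.
\end{equation*}
Pick $x$ large enough that $e^{-(x-c)_+}<p/2$. Since $\gamma/(\lambda s)<1$, for this fixed $x$ and $m$ we have $mx(\gamma/(\lambda s))^d\to 0$ as $d\to\infty$, so for all $d$ large enough,
\begin{equation*}
\dPls_d\!\left(\cW_d(T,T')\geq mx\gamma^d\right)\;\geq\;\dPls_d\!\left(\frac{|\cL_d(\tau^\star)|}{(\lambda s)^d}\geq mx\Big(\tfrac{\gamma}{\lambda s}\Big)^d\right)\;\xrightarrow[d\to\infty]{}\;\dP(W>0)=p,
\end{equation*}
using the lower bound of the first step and the a.s.\ convergence of the second step. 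This contradicts the assumed tail bound, hence $\gamma<\gamma(\lambda,s)$ for every $\gamma<\lambda s$, proving $\gamma(\lambda,s)\geq\lambda s$.

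The only subtle point — and what I expect to be the main source of care — is the embedding argument in the first step: one must verify that $\tau^\star_{|d}$ is indeed an element of $\set{T\cap T'}$ in the sense of \eqref{eq:NTMA:def:Wtt}, i.e.\ that the tree-preserving property (root-to-root, parent-to-parent) is inherited when one extracts the depth-$d$ surviving subtree. This follows directly from the recursive construction of the $(\lambda,s)$-augmentation, which inserts extra children but never removes the edges of $\tau^\star$, so all parent–child relations of $\tau^\star_{|d}$ are present verbatim in both $T$ and $T'$. Everything else is a routine application of supercritical branching-process asymptotics and the definition \eqref{eq:NTMA:matching_rate}.
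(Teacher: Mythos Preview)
Your proof is correct and follows essentially the same approach as the paper: both use the deterministic lower bound $\cW_d(T,T')\geq |\cL_d(\tau^\star)|$ coming from the embedded intersection tree, then invoke the Kesten--Stigum martingale limit for the supercritical $\GWls$ process to conclude. The paper's write-up is terser (it simply notes that $\liminf_d \dPls_d(\cW_d\geq (\lambda s(1-\eps))^d)\geq 1-\pext>0$ and says ``the result follows''), whereas you spell out the contradiction with the tail bound in the definition of $\gamma(\lambda,s)$ more explicitly; one tiny imprecision in your phrasing is that ``$\gamma\geq\gamma(\lambda,s)$'' does not literally imply membership in the defining set (the infimum need not be attained), but since that set is upward-closed your argument goes through verbatim once you take $\gamma$ strictly above $\gamma(\lambda,s)$.
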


\begin{proof}
	Let $\tau^\star$ be the intersection tree between $T$ and $T'$. Branching process theory implies that $(\lambda s)^{-d}\big|\cL_d(\tau^\star)\big|$ converges almost surely to a random variable $Z$ as $d\to\infty$, such that $\dP\left(Z>0\right)=1-\pext$, with $\pext$ the extinction probability of the branching tree $\tau^\star$. Since $\pext<1$ when $\lambda s>1$, and for every small enough $\eps>0$, $$\lim_{d\to\infty}\dPls_d\left(\cW_d(T,T')\geq \left(\lambda s(1-\eps)\right)^d\right) \geq 1-\pext>0 \, ,$$ the result follows. 
\end{proof}

\begin{theorem}\label{NTMA:thm:lambda_close_to_1}
	Let $\gamma(\lambda):=\gamma(\{\dPl_d\}_{d \geq 0})$. There exists $\lambda_0>1$ such that for all $\lambda \in (1,\lambda_0]$, we have
	\begin{equation}
	\gamma(\lambda)<\lambda.
	\end{equation}
\end{theorem}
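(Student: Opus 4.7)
The plan is to obtain the upper bound on $\gamma(\lambda)$ via a tail analysis of $W_d := \cW_d(T,T')$ under the independent model $\dPl_d$, exploiting the distributional recursion inherited from \eqref{eq:NTMA:rec_formula_Wtt}:
\[
W_d \; \overset{d}{=} \; \sup_{\mathfrak{m} \in \cM([c],[c'])} \sum_{(i,j) \in \mathfrak{m}} W_{d-1}^{(i,j)},
\]
where $c, c' \sim \Poi(\lambda)$ are independent and $\{W_{d-1}^{(i,j)}\}$ are independent copies of $W_{d-1}$ (conditional on $c,c'$). The strategy is to propagate a sub-exponential tail estimate of the form $\dP(W_{d-1} \geq t) \leq K \exp(-t/m_{d-1})$ from depth $d-1$ to depth $d$, and then to identify an effective growth rate $\kappa(\lambda) < \lambda$.

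First I would use the easy upper bound $W_d \leq \sum_{i=1}^{c} \max_{j=1}^{c'} W_{d-1}^{(i,j)}$, which reflects that each child of the root of $T$ is matched at most once. A union bound on the tail of $W_{d-1}$ gives $\dE[\max_{j=1}^{c'} W_{d-1}^{(i,j)} \mid c'] \lesssim m_{d-1}(1 + \log^+ c')$, hence
\[
\dE[W_d] \;\leq\; \lambda\, m_{d-1}\, (1 + \dE[\log^+ c']).
\]
This already controls the \emph{mean} by $O(\lambda^d)$ times logarithmic factors, but it does not suffice to push $\gamma(\lambda)$ strictly below $\lambda$, since it ignores the sharper matching constraint $|\mathfrak{m}| \leq \min(c,c')$. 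To extract a strict gap, I would also use $W_d \leq \min(c,c') \cdot \max_{i,j} W_{d-1}^{(i,j)}$ and combine the two bounds, interpolating between "few large entries" and "many small entries". The crucial quantitative input is that $\dE[\min(c,c')] < \lambda$ with a gap $\tfrac{1}{2}\dE[|c-c'|] > 0$ that does not disappear as $\lambda \to 1^+$; one should moreover control the Laplace transform $\dE[\exp(\theta \min(c,c'))]$, which for $\lambda$ close to $1$ admits a useful concentration.

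Putting these ingredients together, the plan is to establish a recursive inequality $m_d \leq \kappa(\lambda)\, m_{d-1}$ for a suitable tail parameter $m_d$, with $\kappa(\lambda) < \lambda$ on $(1,\lambda_0]$ for some $\lambda_0 > 1$. Iterating from $W_0 \equiv 1$ would yield $\dP(W_d \geq x \kappa(\lambda)^d) \leq K' \exp(-x/m)$ uniformly in $d$, which by the defining condition \eqref{eq:NTMA:matching_rate} of the matching rate directly gives $\gamma(\lambda) \leq \kappa(\lambda) < \lambda$. The restriction to $\lambda$ close to $1$ will come naturally from requiring that the logarithmic loss $\dE[\log^+ c']$, which grows with $\lambda$, does not overwhelm the gap $\lambda - \dE[\min(c,c')]$.

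The main obstacle is precisely reconciling the two upper bounds on $W_d$: the bound $W_d \leq \sum_i \max_j$ propagates tails cleanly but is blind to the matching cardinality, while $W_d \leq \min(c,c') \cdot \max_{ij}$ respects the constraint but involves a global maximum that inflates the effective scale by a logarithmic factor. The hard technical step will be to combine them — e.g. by splitting on whether $\max_j W_{d-1}^{(i,j)}$ is "typical" or "atypically large" for each $i$, and using sub-exponential concentration separately in each regime — so that the final multiplicative constant $\kappa(\lambda)$ in the tail recursion is quantitatively below $\lambda$ for $\lambda$ sufficiently close to $1$, rather than merely of order $\lambda \cdot \mathrm{polylog}$.
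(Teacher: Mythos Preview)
Your plan diverges from the paper's in a structural way and, as stated, leaves a real gap. The paper does not try to extract a rate from $\dE[\min(c,c')]$; it first \emph{conditions on survival} to depth $d$ and works with the pruned trees $T_d,T'_d$ (Lemma~\ref{NTMA:lemma:pruning_trees}). For $\eps=\lambda-1$ small, the root degree of $T_d$ then satisfies $q_{d,1}=1-\eps+O(\eps^2)$ and $q_{d,k}\le(3\eps)^{k-1}/k!$ for $k\ge 2$: the surviving tree is almost a path. With this in hand the paper applies the \emph{direct} union bound over the $(k\vee\ell)^{\underline{k\wedge\ell}}$ maximal matchings between the children, so that for each fixed matching the weight is a sum of $k\wedge\ell$ i.i.d.\ copies of $X_{d-1}$. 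One then propagates the tail shape $\dP(X_d\ge x)\le e^{-(x-c)_+}$ with $\gamma=1+r\eps$, $r\in(0,1)$; the combinatorial prefactor $(k\vee\ell)^{\underline{k\wedge\ell}}$ is absorbed precisely because $q_{d,k}q_{d,\ell}=O(\eps^{k+\ell-2})$ decays geometrically (technical Lemma~\ref{NTMA:lemma:q_control}).

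Your two bounds discard exactly the structure that makes this work. The bound $W_d\le\sum_i\max_j W_{d-1}^{(i,j)}$ loses the constraint on $|\mathfrak m|$ and, as you say, gives a rate $\ge\lambda$. The bound $W_d\le\min(c,c')\cdot\max_{ij} W_{d-1}^{(i,j)}$ replaces a sum of $\min(c,c')$ \emph{independent} matched terms by $\min(c,c')$ copies of the \emph{same} global maximum, destroying the concentration that the union-bound-over-matchings argument preserves: in the paper's recursion the sum of $k\wedge\ell$ i.i.d.\ sub-exponentials has a tail governed by convolution, and the polynomial combinatorial factor is then beaten by the $\eps^{k+\ell-2}$. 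Your proposed ``splitting on typical vs.\ large $\max_j$'' does not restore this independence: in either regime you still sum dependent quantities and pay a logarithm from a max. Without the survival conditioning and the resulting $\eps^{k-1}$ degree decay, there is no visible mechanism forcing your $\kappa(\lambda)$ below $\lambda$ rather than landing at $\lambda\cdot(1+\Theta(1))$ from bound~1 or at $\dE[\min(c,c')]\cdot(1+\Theta(1))$ from bound~2, neither of which is shown to be $<\lambda$ once the logarithmic loss from the max is included.
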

Evaluations of $\gamma(\lambda)$ by simulations, confirming and illustrating Theorem \ref{NTMA:thm:lambda_close_to_1}, are provided in Appendix \ref{app:tree}.

\textit{Outline of proof of Theorem \ref{NTMA:thm:lambda_close_to_1}.} The full proof of Theorem \ref{NTMA:thm:lambda_close_to_1} is detailed in Appendix \ref{appendix_proof_lambda_close_to_1}, but we here give the key steps. We introduce some notations. First, for a tree $t$ of depth at most $d$, let $r_d(t)$ denote the tree obtained by
iteratively pruning leaves of depth strictly less than $d$. When computing $\cW_d(t,t')$, the only informative subtrees are precisely in $r_d(t)$ and in $r_d(t')$, one of these being empty if $t$ or $t'$ doesn't survive up to depth $d$. In the rest of the chapter, we define $T_d$ the random variable $r_d(T)$ where $T$ is conditioned to survive up to depth $d$.

Consider $(T,T') \sim \dPl_{d}$. We let $\cE_d$ (respectively, $\cE'_d$) denote the event that tree $T$ (respectively, $T'$) becomes extinct before $d$ generations, i.e. $\cL_d(T)=\varnothing$ (respectively, $\cL_d(T')=\varnothing$). We let $p_{d}=\dP(\cE_d) = \dP(\cE'_d)$. It is well known that it satisfies the recursion
$$
p_{0}=0,\; p_{d}=e^{-\lambda(1-p_{d-1})}.
$$  We now state a lemma on the structure of $T_d$.
\begin{lemma}
	\label{NTMA:lemma:pruning_trees}
	For any $\lambda>1$, $T_d$ can be constructed by first sampling the number of children $D$ of the root $\rho(T)$ according to distribution
	\begin{equation*}
	\dP(D=k)=\one_{k>0}\frac{\dP(\Poi(\lambda(1-p_{d-1}))=k)}{\dP(\Poi(\lambda(1-p_{d-1}))>0)}=: q_{d,k},
	\end{equation*}
	and then attaching $D$ independent copies of $T_{d-1}$ to the $D$ children of $\rho(T)$.
\end{lemma}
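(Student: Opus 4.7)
The plan is a direct application of the branching property together with Poisson thinning, and a routine induction/unwrapping of conditional distributions. I would proceed as follows.

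First, unpack the definition of $T \sim \GWl_d$: the root has $C \sim \Poi(\lambda)$ offspring, and to each of these children is attached an independent copy of $\GWl_{d-1}$. Call these subtrees $S_1, \ldots, S_C$. By definition, the pruning operator $r_d$ keeps a child $i$ of the root if and only if the subtree $S_i$ survives up to depth $d-1$; equivalently, iff $r_{d-1}(S_i)$ is non-empty. Each $S_i$ survives to depth $d-1$ independently with probability $1-p_{d-1}$, so marking each child ``kept'' independently with probability $1-p_{d-1}$ is exactly Poisson thinning: the number $D$ of kept children satisfies
\begin{equation*}
D \sim \Poi\bigl(\lambda(1-p_{d-1})\bigr),
\end{equation*}
and, conditionally on the classification, the kept and discarded children are independent.

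Second, the event ``$T$ survives to depth $d$'' is exactly $\{D \geq 1\}$, since $T$ reaches depth $d$ iff at least one child's subtree reaches depth $d-1$. Hence the conditional law of $D$ given survival is
\begin{equation*}
\dP(D=k \mid D \geq 1) = \one_{k>0}\,\frac{\dP(\Poi(\lambda(1-p_{d-1}))=k)}{\dP(\Poi(\lambda(1-p_{d-1}))>0)} = q_{d,k},
\end{equation*}
which is the claimed distribution for the root degree of $T_d$.

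Third, condition on $D = k \geq 1$ and on the identities of the $k$ kept children $i_1,\ldots,i_k$. By the branching property, the subtrees $S_{i_1},\ldots,S_{i_k}$ are i.i.d.\ copies of $\GWl_{d-1}$, each further conditioned on the event that it survives to depth $d-1$ (since conditioning on ``child $i_j$ is kept'' is precisely conditioning $S_{i_j}$ on non-extinction before depth $d-1$). Applying the pruning operator $r_d$ to $T$ corresponds, on each kept branch, to applying $r_{d-1}$ to $S_{i_j}$; and by definition $r_{d-1}$ of a $\GWl_{d-1}$ tree conditioned to survive up to depth $d-1$ is distributed exactly as $T_{d-1}$. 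The conditional independence of $S_{i_1},\ldots,S_{i_k}$ (inherited from the independence of all children's subtrees and preserved under thinning) yields that the $k$ subtrees of the root in $T_d$ are i.i.d.\ copies of $T_{d-1}$, which is the claim.

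There is no real obstacle here — the only point requiring minimal care is the standard fact that classifying the children of a Poisson-distributed family independently produces independent thinned subfamilies, and that conditioning each retained child's subtree on ``being retained'' gives precisely the law $\GWl_{d-1}$ conditioned on survival, i.e.\ the recursive ingredient $T_{d-1}$.
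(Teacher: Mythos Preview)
Your proof is correct and follows essentially the same approach as the paper. The paper carries out the explicit summation over the original root degree $\ell$ and the subset of surviving children, which amounts to re-deriving Poisson thinning inline; you invoke thinning as a known fact, which is a cleaner packaging of the same computation.
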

\begin{proof}[Proof of Lemma \ref{NTMA:lemma:pruning_trees}]
	For a tree $t$, we identify $t$ to $(t_1,\ldots,t_k)$ the tuple of offsprings of its $k$ children. Write, defining $D$ the number of children of $\rho(T)$, fixing $k \geq 1$, $t_1,\ldots, t_k \in\cA_{d-1}$, and letting $S=i_1<\ldots<i_k$ run over all $k$ subsets of $[\ell]$:
	\begin{flalign*}
	\dP(T_d=(t_1,\ldots, t_k))&=\sum_{\ell \geq 0}\dP(T_d=(t_1,\ldots,t_k),D=\ell)\\
	&=\sum_{\ell \geq k}\sum_{S}\dP\left(D=\ell, r_d(T_{i_j})=t_j,j\in[k], r_d(T^v)=\varnothing, v\notin S \big |\overline{\cE}_d\right)\\
	&=\frac{1}{1- p_{d}}\sum_{\ell \geq k}\binom{\ell}{k}e^{-\lambda}\frac{\lambda^{\ell}}{\ell!}p_{d-1}^{\ell-k}\prod_{j=1}^k\dP(T_{d-1}=t_j) (1-p_{d-1})\\
	&=\frac{1}{1- p_{d}}\frac{(\lambda(1-p_{d-1}))^k}{k!}\prod_{j=1}^k\dP(T_{d-1}=t^j)\sum_{\ell\geq k}e^{-\lambda}\frac{(\lambda p_{d-1})^{\ell -k}}{(\ell-k)!}\\
	&=\frac{e^{-\lambda(1- p_{d-1})}}{1- p_{d}}\frac{(\lambda(1-p_{d-1}))^k}{k!}\prod_{j=1}^k\dP(T_{d-1}=t_j) \, .
	\end{flalign*}
	The conclusion follows by noting that $1-p_{d}=1-e^{-\lambda(1-p_{d-1})}$.
\end{proof} Assume $\eps = \lambda - 1$ to be small enough. Fix $r \in (0,1)$, let $\gamma=1+r\eps$. We first show using exponential moments that there exist $m,c>0$ and $d_0 >0$ such that for all $x>0$ $$\dP \left(\cW_{d_0}\left(T_{d_0},T'_{d_0}\right) \geq m x\right) \leq e^{-x+c}.$$
Then we define the random variables
$$X_d := \gamma^{-(d-d_0)} m^{-1} \cW_d\left(T_{d},T'_{d}\right).$$
Then, considering the number $D$ of children of the root in $T_{d}$ (resp. $D'$ in $T'_{d}$), using the previous lemma, one can establish, for all $x>0$, a recursive formula of the following form
\begin{equation*}
\dP\left( X_d \geq x \right) \leq \sum_{k,\ell \geq 1} q_{d,k} q_{d,\ell} \dP \left(\exists \mathfrak{m} \in \cM\left([k],[\ell]\right), \; \sum_{(i,u) \in \mathfrak{m}} X_{d-1,i,u} \geq \gamma x \right),
\end{equation*} where the $X_{d-1,i,u}$ are i.i.d. copies of $X_{d-1}$. The union bound yields
\begin{equation*}
\dP\left( X_d \geq x \right) \leq \sum_{k,\ell \geq 1} q_{d,k} q_{d,\ell} \min \left( 1, (k \vee \ell)^{\underline{k \wedge \ell}} \times \dP \left( \sum_{i=1}^{k \wedge \ell} X_{d-1,i,u} \geq \gamma x \right)\right),
\end{equation*} where $m^{\underline{p}}:= m(m-1)\ldots(m-p+1) = \frac{m!}{(m-p)!}$. This inequality enables, with a few more technical steps (see \ref{appendix_proof_lambda_close_to_1}), to propagate recursively the inequality
\begin{equation*}
\dP\left( X_d \geq x \right) \leq e^{-(x-c)_{+}}.
\end{equation*}

\subsection{Implications for a  hypothesis testing problem}
Let a pair of trees $(T,T')$ be distributed according to $\dPl_{d}$ under the null hypothesis $\cH_0$, and according to $\dPls_d$ under the alternative hypothesis $\cH_1$. They are thus independent under $\cH_0$, and correlated under $\cH_1$. Consider the following test:
$$
\hbox{Decide }\cH_0\hbox{ if }\cW_d(T,T') < \gamma^d,\; \cH_1\hbox{ otherwise.}
$$
Assume that $\gamma(\lambda)<\gamma<\lambda s$. Then in view of Remark~\ref{rem:NTMA:1.5} and Theorem~\ref{NTMA:thm:lambda_close_to_1} one has for some $c(\gamma)>1$:
$$
\dP\left(\hbox{decide }\cH_1\big|\cH_0\right)=O(e^{-c(\gamma)^d}),
$$
thus a super-exponential decay of the probability of false positive (first type error). Conversely, in view of Proposition \ref{NTMA:prop:exponent_correlated}, noting $\tau^\star$ the intersection tree under $\cH_1$, one has
$$
\dP\left(\hbox{decide }\cH_0\big|\cH_1, \hbox{non-extinction of } \tau^\star \right)=o_d(1).
$$
The false negative probability of this test thus also goes to zero, provided the intersection tree survives. As we will see in next section, this hypothesis testing problem on a pair of random trees is related to our original graph alignment problem much as the so-called tree reconstruction problem, reviewed in \cite{Mossel03}, is related to community detection in sparse random graphs (see e.g. \cite{Bordenave15}). This fundamental correspondence is studied in detail in Chapter \ref{chapter:MPAlign}.

\subsection{Matching rate of correlated shifted trees}
\begin{theorem}
	\label{NTMA:thm:lambda_close_to_1_delta}
 Let $\gamma(\lambda,s,\delta):=\gamma(\{\dPlsd_d\}_{d \geq 0})$. There exists $\lambda_0>1$ such that for all $\lambda \in (1,\lambda_0]$ we have
	\begin{equation}
	\sup_{\delta \geq 1} \gamma(\lambda,s,\delta) < \lambda.
	\end{equation}
\end{theorem}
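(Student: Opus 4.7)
The plan is to reduce the shifted correlated model $\dPlsd_d$ to the independent model $\dPl_d$ handled in Theorem \ref{NTMA:thm:lambda_close_to_1}, by exploiting the key structural observation that the depth shift $\delta \geq 1$ prevents the intersection tree $\tau^\star$ from being fully exploited by any tree-preserving embedding. Indeed, in any common subtree $\tau \in \{T \cap T'\}$ a vertex at depth $k$ must be embedded at depth exactly $k$ in both $T$ and $T'$; since in $T$ the subtree $\tau^\star$ is rooted at $\rho'$, i.e.\ at depth $\delta$ from the root $\rho$, while in $T'$ it is rooted at $\rho'$ at depth $0$, the shared random substructure $\tau^\star$ occupies different depth ranges in the two trees. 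Hence correlation can contribute at most through the $\delta$-step path joining $\rho$ to $\rho'$ in $T$, and the matching weight $\cW_d(T,T')$ should behave as in the independent case, up to a boundary perturbation.

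Concretely, I would follow the same scheme as in the proof of Theorem \ref{NTMA:thm:lambda_close_to_1}: fix $\gamma \in (\gamma(\lambda),\lambda)$ and set $X_d := \gamma^{-(d-d_0)} m^{-1} \cW_d(T,T')$ for suitable pruned versions of $(T,T') \sim \dPlsd_d$. Applying the root recursion \eqref{eq:NTMA:rec_formula_Wtt}, the root $\rho$ of $T$ has $1 + \Poi(\lambda(1-p_{d-1}))$ surviving children -- one being the mandatory path-child $v_0$ toward $\rho'$, the others yielding independent surviving $\GWl_{d-1}$ subtrees -- while $\rho'$ has $\Poi(\lambda(1-p_{d-1}))$ surviving children as in the non-shifted case. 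Any partial matching decomposes into sub-problems of two types: pairs not involving $v_0$ are matching weights between two independent pruned $\GWl_{d-1}$ trees, controlled by $\gamma(\lambda)$ through Theorem \ref{NTMA:thm:lambda_close_to_1}; and at most one sub-problem involves $v_0$, whose subtree in $T$ is again distributed as a shifted correlated tree, now with parameter $\delta - 1$. Iterating this decomposition peels off one path-node at each step, so after at most $\delta$ iterations the recursion reduces entirely to sub-problems of the independent type, reflecting the above decoupling.

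Under this decomposition, the exponential-moment argument of Theorem \ref{NTMA:thm:lambda_close_to_1} carries through and yields a recursive inequality
\[
\dP(X_d \geq x) \leq \sum_{k,\ell \geq 1}\tilde q_{d,k}\,q_{d,\ell}\,\min\!\Bigl(1,\,(k \vee \ell)^{\underline{k \wedge \ell}}\,\dP\!\Bigl(\sum_{i=1}^{k \wedge \ell} X_{d-1,i}\geq \gamma x\Bigr)\Bigr),
\]
where $\tilde q_{d,k} := q_{d,k-1}$ accounts for the mandatory path-child of $\rho$ in $T$, and the $X_{d-1,i}$ are independent, at most one of them being a ``shifted'' variable at parameter $\delta-1$ and all others being independent-case variables dominated via Theorem \ref{NTMA:thm:lambda_close_to_1}. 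The main obstacle is to make the above decoupling rigorous -- namely, that correlation carried by $\tau^\star$ cannot be harvested at aligned depths -- and to show that the induction $\dP(X_d \geq x) \leq e^{-(x-c)_+}$ propagates with constants $m,c,d_0$ uniform in $\delta \geq 1$: the small additive perturbations introduced at each of the $\delta$ path-nodes must not accumulate into a super-exponential correction. This uniformity should follow from the robustness of the exponential-moment bounds to a bounded perturbation of the offspring distribution along a single path, and would yield $\sup_{\delta \geq 1}\gamma(\lambda,s,\delta) \leq \gamma < \lambda$.
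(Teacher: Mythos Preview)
Your overall architecture---one root-recursion step, most subproblems independent, at most one ``special'' subproblem coming from the path-child $v_0$---is the right one and matches the paper. But the core structural claim you build on is incorrect, and it derails the rest of the argument.

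You assert that when $v_0$ is matched with a child $u'$ of $\rho'$ in $T'$, the resulting pair of subtrees $(T_{v_0},T'_{u'})$ is a shifted correlated pair with parameter $\delta-1$, so that after $\delta$ peeling steps everything reduces to the independent model. This is false. If $u'$ lies in the intersection tree $\tau^\star$, then the shared random structure between $T_{v_0}$ and $T'_{u'}$ is $\tau^\star_{u'}$, rooted at $u'$. In $T'_{u'}$ this sits at depth $0$; in $T_{v_0}$ it sits at depth $(\delta-1)+1=\delta$ (since the old $\rho'$ is at depth $\delta-1$ from $v_0$, and $u'$ is a child of $\rho'$). So the new pair is again of type $\dPlsd$ with the \emph{same} shift $\delta$, now at depth $d-1$. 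The shift never decreases; correlation is not confined to a $\delta$-long initial segment but can be harvested at every single recursion step down to the leaves. Your decoupling heuristic---that $\tau^\star$ cannot be exploited because it lives at different depth ranges in $T$ and $T'$---is therefore misleading: what the recursion actually shows is that a single correlated subproblem persists at each level.

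The paper's proof accordingly does the induction on $d$ with $\delta$ held fixed throughout, and this is precisely why uniformity in $\delta$ comes for free: the degree distributions at the root (namely $D\sim 1+\Poi(\lambda(1-p_{d-1}))$ on the $T$ side and $D'+D''\sim \Poi(\lambda(1-p_{d-1}))$ conditioned on positivity on the $T'$ side) do not depend on $\delta$ at all, and both satisfy the tail bounds needed to feed into Lemma~\ref{NTMA:lemma:q_control}. There is no accumulation of ``small perturbations along $\delta$ path-nodes'' to worry about, because the recursion never unrolls along that path. To repair your argument, replace the $\delta\to\delta-1$ reduction by a direct induction on $d$ at fixed $\delta$, distinguishing whether the child $u'$ matched to $v_0$ lies in the intersection tree (giving the self-referential $X'_{d-1}$ term) or not (giving an independent term handled by Theorem~\ref{NTMA:thm:lambda_close_to_1}).
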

Evaluations of $\gamma(\lambda,s,\delta)$ by simulations, confirming and illustrating Theorem \ref{NTMA:thm:lambda_close_to_1_delta}, are provided in Appendix \ref{app:tree}.

\textit{Outline of proof of Theorem \ref{NTMA:thm:lambda_close_to_1_delta}.} The full proof of Theorem \ref{NTMA:thm:lambda_close_to_1_delta} is detailed in Appendix \ref{appendix_proof_lambda_close_to_1_delta}, but we here give the key steps. The proof will again be by induction on $d$, the initial step being established with the same argument as in the proof of Theorem \ref{NTMA:thm:lambda_close_to_1}. 
The difference $\eps = \lambda -1$ is assumed to be small enough. We fix $r \in (0,1)$, and we let $\gamma=1+r\eps'$. We now work with the random variables
$$X'_d := \gamma^{-(d-d_0)} m^{-1} \cW_d\left(T_{d},T'_d\right),$$ conditionally on the event that the path from $\rho$ to $\rho'$ survives down to depth $d$ in $T$. Then, considering $D$ the number of children of $\rho$ in $T_{d}$, $D'$ the number of children of $\rho'$ in $T'_{d}$ that are in the intersection tree $T_d \cap T'_{d}$, and $D''$ the number of children of $\rho'$ in $T'_{d} \setminus T_d$, we establish for all $x>0$ a recursive formula of the following form
\begin{equation*}
\dP\left( X'_d \geq x \right) \leq \sum_{k,\ell \geq 1} \dP \left( D'+D''= k, D=\ell\right) \min \left( 1, (k \vee \ell)^{\underline{k \wedge \ell}} \, \dP \left( X'_{d-1} + \sum_{i=1}^{k \wedge \ell-1} X_{d-1,i,u} \geq \gamma x \right)\right),
\end{equation*}  where the $X_{d-1,i,u}$ are i.i.d. copies of $X_{d-1}$ as defined in the proof of Theorem \ref{NTMA:thm:lambda_close_to_1}. Again, with a few more technical steps (see \ref{appendix_proof_lambda_close_to_1_delta}), we are able to propagate recursively the inequality
\begin{equation*}
\dP\left( X'_d \geq x \right) \leq e^{-(x-c)_{+}}.
\end{equation*}

\section{Sparse graph alignment by matching trees}\label{NTMA:section:sparse_graph_alignment} 

We now describe our main algorithm and its theoretical guarantees. For simplicity we assume  that the underlying permutation $\pi^{\star}$ is the identity.

\subsection{Neighborhood Tree Matching Algorithm (\alg{NTMA}), main result}

The main intuition for the \alg{NTMA} algorithm is as follows. In order to distinguish matched pairs of nodes $(u,u')$, we consider their neighborhoods at a certain depth $d$, that are close to Galton-Watson trees. In the case where the two vertices are actual matches, the largest common subtree measured in terms of children at depth (exactly) $d$ is w.h.p. of size $\geq (\lambda s)^d $. However, when the two nodes $u$ and $u'$ are sufficiently distant in the union graph aligned with the ground truth, $G \cap G'$, previous study of matching rates shows that the growth rate of largest common subtree will be $< \lambda s$.  The natural idea is thus to apply the test comparing $\cW_d(\cB_{G}(u,d),\cB_{H}(u',d))$ to $\gamma^d$ for some well-chosen $\gamma$ to decide whether $u$ is matched to $u'$. 

However, as the reader may have noticed, testing $\cW_d(\cB_{G}(u,d),\cB_{H}(u',d)) > \gamma^d$ is not enough, because two-hop neighbors in $G \cap G'$ would dramatically increase the number of incorrectly matched pairs, making the performance collapse. To fix this, we use the \textit{dangling trees trick}: instead of just looking at their neighborhoods, we look for the downstream trees from two distinct neighbors $v \neq w$ of $u$, and $v' \neq w'$ of $u'$. The trick is now to compare both $\cW_{d-1}(v \leftarrow u,v' \leftarrow u')$  and $\cW_{d-1}(w \leftarrow u,w' \leftarrow u')$ to $\gamma^{d-1}$. This way, even if $u \neq u'$ and $u$ and $u'$ are close by, the pairs of rooted trees that can be considered will lead to one of the four cases considered and illustrated on Figure \ref{fig:NTMA:parrallel_construction_bis}, that are settled in the proof of Theorem \ref{NTMA:thm:no_mismatchs}. 

Our algorithm is as follows, where matching tree weights $\cW_{d-1}(j \leftarrow i, v \leftarrow u )$ are defined in \eqref{eq:NTMA:def:Wuuvv}:

\begin{algorithm}[H]
	\caption{\label{algo_theorique}Neighborhood Tree Matching Algorithm for sparse graph alignment}
	\SetAlgoLined
	
	\textbf{Input:} Two graphs $G$ and $H$ of size $n$, average degree $\lambda$, depth $d$, parameter $\gamma$.
	
	\textbf{Output:} A set of pairs $\cS \subset V(G) \times V(H)$.
	
	$\mathcal{S} \gets \varnothing$
	
	\For{$(u,u') \in V(G) \times V(H)$}{
		\If{$\cB_{G}(u,d)$ and $\cB_{H}(u',d)$ contain no cycle, and $\exists v \neq w \in \mathcal{N}_{G}(u), \exists v' \neq w' \in \mathcal{N}_{H}(u')$ such that $\cW_{d-1}(v \leftarrow u,v' \leftarrow u')> \gamma^{d-1} $ and $\cW_{d-1}(w \leftarrow u,w' \leftarrow u')> \gamma^{d-1} $}{	
			$\cS\gets \cS \cup \left\lbrace (u,u') \right\rbrace $
		}
	}
	\textbf{return} $\cS$
	
\end{algorithm}

\begin{remark}
	For $d = \lfloor c \log n \rfloor$, in view of Remark \ref{rem:complex1},  with high probability the complexity of \alg{NTMA} is $$O \left( \left|V(G)\right| \left|V(H)\right| (\log n)^2 n^{2c \log \lambda}  d_\mathrm{max}^2  \right) + O \left( \left|E(G)\right| \left|E(H)\right| (\log n)  d_\mathrm{max}^3  \right),$$ where $d_\mathrm{max}$ is the maximum degree in $G$ and $H$. In the context of Theorems \ref{NTMA:thm:lot_of_matchs} and \ref{NTMA:thm:no_mismatchs} the complexity is then 
	$O \left( (\log n)^4 n^{5/2}  \right)$.
\end{remark}
The two results to follow will readily imply Theorem \ref{NTMA:thm:theoreme_principal}.
\begin{theorem}
	\label{NTMA:thm:lot_of_matchs}
	Let $(G,H) \sim \G(n,\lambda/n,s)$ be $s-$correlated \ER graphs such that $\lambda s>1$. Let $d = \lfloor c \log n \rfloor$ with $c \log \left(\lambda\left(2-s\right)\right)<1/2$. Then for $\gamma\in(1,\lambda s)$, with high probability, if $\cS$ denotes the matching returned by \alg{NTMA},
	\begin{equation}
	\label{lot_of_matchs_eq}
	\frac{1}{n} \sum_{u \in [n]} \one_{\lbrace (u,u) \in \cS \rbrace}=\Omega(1).
	\end{equation} 
	In other words, a non vanishing fraction of nodes is correctly recovered by \alg{NTMA} (Algorithm \ref{algo_theorique}).
\end{theorem}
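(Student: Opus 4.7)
The plan is to exhibit a uniform positive lower bound $\dP((u,u) \in \cS) \geq \alpha > 0$ for each $u \in [n]$, and then use a second-moment concentration argument exploiting the locality of the indicator events. The proof decomposes into four steps.

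\textbf{Step 1 (Local tree-like structure).} Under the assumption $c \log(\lambda(2-s)) < 1/2$, the size of the ball $\cB_{G \vee H}(u,d)$ in the union graph is dominated (in the coupling/exploration sense) by a Galton--Watson tree with mean offspring $\lambda(2-s)$, hence is typically of size $(\lambda(2-s))^d \leq n^{1/2-\eta}$ for some $\eta > 0$. A routine first-moment computation over potential self-intersecting explorations shows that $\cB_{G \vee H}(u,d)$ is acyclic with probability $1-o(1)$; in particular both $\cB_G(u,d)$ and $\cB_H(u,d)$ are trees with high probability.

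\textbf{Step 2 (Local coupling to $\dPls_d$).} Conditionally on $\cB_{G \vee H}(u,d)$ being acyclic, a standard exploration argument of Benjamini--Schramm type shows that $(\cB_G(u,d), \cB_H(u,d))$ can be coupled with a sample $(T,T') \sim \dPls_d$ (both rooted at $u$), with total variation error $o(1)$ as $n \to \infty$. This works because the explored neighborhood has size $o(\sqrt{n})$, so the difference between sampling edges with replacement (as in the Galton--Watson tree) and without replacement (as in the \ER graph) is negligible.

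\textbf{Step 3 (Two good common neighbors).} Work henceforth under $\dPls_d$. By construction, the intersection tree $\tau^\star$ has root degree $\Poi(\lambda s)$, so with probability $\beta_1 := \dP(\Poi(\lambda s) \geq 2) > 0$ the root $\rho$ has at least two distinct common neighbors $v,w$ in $\tau^\star$. For each such $v$, the pair of edge-subtrees $(T_{v \leftarrow \rho}, T'_{v \leftarrow \rho})$ is distributed exactly as the edge-version of the correlated model of depth $d-1$, whose intersection sub-tree below $v$ is a $\GWls_{d-1}$ branching tree. By Proposition~\ref{NTMA:prop:exponent_correlated} applied to this edge-version (combined with the almost-sure convergence of $(\lambda s)^{-(d-1)} |\cL_{d-1}(\tau^\star_{v \leftarrow \rho})|$ to a non-degenerate random variable $Z$ with $\dP(Z > 0) = 1-\pext > 0$), for any $\gamma < \lambda s$ and $d$ large enough,
\begin{equation*}
\dP\!\left(\cW_{d-1}(v \leftarrow \rho, v \leftarrow \rho) > \gamma^{d-1}\right) \geq \beta_2 > 0.
\end{equation*}
By the branching property, the edge-subtrees hanging from distinct common children $v \neq w$ of $\rho$ are independent correlated pairs; hence the event that both matching weights exceed $\gamma^{d-1}$ occurs with probability at least $\beta_1 \beta_2^2$. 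Choosing $v'=v$ and $w'=w$ in \alg{NTMA} yields $\dP((u,u) \in \cS) \geq \alpha := \beta_1 \beta_2^2 - o(1) > 0$.

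\textbf{Step 4 (Concentration).} Let $N := \sum_{u \in [n]} \one_{(u,u) \in \cS}$, so $\dE[N] \geq \alpha n - o(n)$. The indicator $\one_{(u,u) \in \cS}$ is a measurable function of the edges of $G$ and $H$ within graph distance $d$ of $u$. For a typical pair $u \neq u'$, the two $d$-balls are disjoint (their union has size $O(n^{1-\eta})$ w.h.p.), and in this case the two indicators are independent. The number of pairs whose $d$-balls intersect is $O(n \cdot n^{1-\eta}) = O(n^{2-\eta})$, so $\Var(N) = o((\dE N)^2)$. Paley--Zygmund (or Chebyshev) then gives $N \geq \alpha n/2$ with high probability, yielding \eqref{lot_of_matchs_eq}. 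The main technical obstacle is the careful bookkeeping in Step 3 (ensuring the edge-version of Proposition~\ref{NTMA:prop:exponent_correlated} gives a positive probability uniformly in $d$ large), while Steps 1, 2, and 4 follow standard techniques in the analysis of sparse random graphs.
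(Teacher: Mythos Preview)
Your proposal is correct and follows essentially the same approach as the paper's proof: establish a uniform lower bound $\dP((u,u)\in\cS)\geq\alpha>0$ via the coupling to correlated Galton--Watson trees and Proposition~\ref{NTMA:prop:exponent_correlated}, then use second-moment concentration from the near-independence of distant local neighborhoods. The only minor imprecision is in Step~4, where ``disjoint $d$-balls $\Rightarrow$ independent indicators'' is not literally true (the balls are themselves random); the paper handles this via Lemma~\ref{NTMA:lemma:indep_neighborhoods}, bounding $\Cov(\one_{M_u},\one_{M_{u'}})$ by the total-variation distance between the joint and product laws of the two neighborhoods, which is the rigorous version of your argument.
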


\begin{theorem}
	\label{NTMA:thm:no_mismatchs}
	Let $(G,H) \sim \G(n,\lambda/n,s)$ be two $s-$correlated \ER graphs. Assume that $\gamma_0(\lambda):= \max \left(\gamma(\lambda),\sup_{\delta \geq 1} \gamma(\lambda,s,\delta) \right)<\lambda s$, and that $d = \lfloor c \log n \rfloor$ with $c \log \lambda<1/4$. Then for $\gamma \in(\gamma_0(\lambda),\lambda s)$, with high probability,
	\begin{equation}
	\label{no_mismatchs_eq}
	\mathrm{err}(n):=\frac{1}{n}\sum_{u=1}^{n} \one_{\lbrace \exists u'  \neq u, \; (u,u') \in \cS \rbrace}=o(1) \, ,
	\end{equation} 
	i.e. only at most a vanishing fraction of nodes are incorrectly matched by \alg{NTMA} (Algorithm \ref{algo_theorique}).
\end{theorem}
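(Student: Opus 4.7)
My plan is a first-moment argument combined with the ``dangling trees trick'' and the matching-rate bounds of Section~\ref{NTMA:section:tree_matching}. Since
\[
\mathrm{err}(n) = \frac{1}{n}\sum_{u} \one_{\exists u'\neq u,\,(u,u')\in\cS},
\]
Markov's inequality reduces the claim to showing $\dE[\mathrm{err}(n)] = o(1)$, and the exchangeability of vertex labels further reduces it to proving $\dP((1,2)\in\cS) = o(1/n)$. A straightforward first-moment calculation on short cycles, using $d = \lfloor c\log n\rfloor$ with $c\log\lambda<1/4$ so that each of the balls $\cB_G(1,d)$ and $\cB_H(2,d)$ has size $O(n^{1/4})$ w.h.p., shows that both are trees except on an event of probability $O(n^{-1/2})=o(1/n)$. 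On the good event I can work with labeled trees and invoke the algorithm's cycle-free precondition.

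The core of the argument is as follows. Suppose $(1,2)\in\cS$, so there exist $v\neq w\in\cN_G(1)$ and $v'\neq w'\in\cN_H(2)$ both witnessing the weight condition. Call a neighbor $x$ of $1$ in $G$ \emph{bad} if $2 \in t_{x\leftarrow 1}$; as $\cB_G(1,d)$ is a tree, at most one such bad neighbor exists -- the unique first step on the path from $1$ to $2$, if $2$ lies in the ball. Define bad $H$-neighbors of $2$ analogously. A pair $(x,x')$ is \emph{bad--bad} when both entries are bad; since $v\neq w$ and $v'\neq w'$, at most one of the two witnessing pairs can be bad--bad, so at least one of $(v,v'),(w,w')$ -- call it $(v,v')$ -- falls into one of the three ``tame'' configurations: safe--safe, safe--bad, or bad--safe.

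The crucial step is to identify, in each tame configuration, the joint law of the dangling-tree pair $(t_{v\leftarrow 1}, t_{v'\leftarrow 2})$ with one of the correlated-tree models of Section~\ref{NTMA:section:tree_matching}: one expects (up to negligible events) a coupling with $\dPl_{d-1}$ in the safe--safe case, where the two explorations involve essentially disjoint vertex sets, and with $\dPlsd_{d-1}$ for some $\delta\geq 1$ in the safe--bad or bad--safe case, where the shared structure propagates along the path between $1$ and $2$ through $G\cap H$. The hypothesis $\gamma > \gamma_0(\lambda) = \max\{\gamma(\lambda),\sup_{\delta\geq 1}\gamma(\lambda,s,\delta)\}$ then lets us pick an intermediate rate $\gamma'\in(\gamma_0(\lambda),\gamma)$ and apply~\eqref{eq:NTMA:matching_rate} with $x = (\gamma/\gamma')^{d-1}/m$, which is polynomial in $n$ since $d\asymp\log n$, yielding
\[
\dP\bigl(\cW_{d-1}(t,t')>\gamma^{d-1}\bigr)\leq \exp\bigl(-n^{\Omega(1)}\bigr)
\]
uniformly over the tame models by Theorems~\ref{NTMA:thm:lambda_close_to_1} and~\ref{NTMA:thm:lambda_close_to_1_delta}. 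A union bound over the $O(\mathrm{polylog}\,n)$ possible choices of the witnessing pair $(v,v')$, combined with the $O(n^{-1/2})$ cycle bound, yields $\dP((1,2)\in\cS) = o(n^{-K})$ for every fixed $K$, completing the argument.

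The main obstacle is the coupling step: rigorously identifying the joint distribution of $(t_{v\leftarrow 1},t_{v'\leftarrow 2})$ with the tame tree models. One must enumerate the pathological events -- unexpected vertex collisions between the two dangling subtrees, short cycles in the combined exploration, extra edges in $G\cap H$ beyond the path between $1$ and $2$ -- and bound each by $o(1/n)$. The condition $c\log\lambda<1/4$ is tailored precisely so that the two $d$-balls have combined size $O(n^{1/4})$, keeping all such coincidences at probability $O(n^{-1/2})$, comfortably absorbed into the error budget.
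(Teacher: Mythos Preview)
There is a genuine gap. Your reduction aims to show $\dP((1,2)\in\cS)=o(1/n)$, and you repeatedly invoke an error budget of this size; yet you write ``probability $O(n^{-1/2})=o(1/n)$'' and later ``coincidences at probability $O(n^{-1/2})$, comfortably absorbed into the error budget''. This is false: $n^{-1/2}\gg n^{-1}$. The pathological events you correctly list---vertex collisions between the two dangling subtrees, cycles in the \emph{joint} exploration, stray edges of $G\cap H$ off the main path---have probability $\Theta(n^{-1/2+o(1)})$ under $c\log\lambda<1/4$ (two balls of size $\tilde{O}(n^{1/4})$ intersecting in $[n]$), and on those events you cannot couple with $\dPl_{d-1}$ or $\dPlsd_{d-1}$, so you cannot push $\dP((1,2)\in\cS)$ below this scale. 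Note also that the algorithm's cycle check only controls cycles in $\cB_G(1,d)$ and $\cB_H(2,d)$ \emph{separately}; it does not rule out the joint irregularities that obstruct the coupling.

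The paper avoids this by \emph{not} targeting $o(1/n)$ per pair. Instead, for fixed $u$ it conditions on the single event $C_{\cup,u,2d}=\{\cB_{G_\cup}(u,2d)\text{ has no cycle}\}$ in the union graph $G_\cup=G\cup H$, which has probability $1-o(1)$ by Lemma~\ref{NTMA:lemma:cycles_ER}. On $C_{\cup,u,2d}$ there is at most one path $\mathfrak{p}_\cup$ from $u$ to any $u'$, and the entire correlation structure is channeled along it; this is what makes the coupling with the tree models go through simultaneously for every $u'$ and every choice of dangling trees. (Your safe/bad dichotomy, based only on whether $1$ or $2$ lies in a dangling tree, does not by itself forbid correlation through other shared vertices---the paper's case split is on the existence and nature of $\mathfrak{p}_\cup$ and its intersection refinement $\mathfrak{p}_\cap$.) One then gets
\[
\dP\bigl(\exists u'\neq u:\ (u,u')\in\cS\bigr)\ \leq\ \dP(\overline{C_{\cup,u,2d}})\ +\ n\cdot d_{\max}^2\cdot O\!\left(e^{-n^{\eps}}\right)\ =\ o(1),
\]
and a single Markov step yields $\dE[\mathrm{err}(n)]=o(1)$. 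The key structural point is that the $o(1)$ term from $\overline{C_{\cup,u,2d}}$ is paid \emph{once per vertex $u$}, not once per pair $(u,u')$.
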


\begin{remark}
	The set $\cS$ returned by the \alg{NTMA} is not necessarily a matching. Let $\cS'$ be obtained by removing all pairs $(i,u)$ of $\cS$ such that $i$ or $u$ appears at least twice. Theorems \ref{NTMA:thm:lot_of_matchs} and \ref{NTMA:thm:no_mismatchs} guarantee that $\cS'$ still contains a non-vanishing number of correct matches and a vanishing number of incorrect matches. Theorem \ref{NTMA:thm:theoreme_principal} easily follows. Simulations of \alg{NTMA--2}, a  simple variant of  of \alg{NTMA}, are reported  in Appendix \ref{appendix_simulations_simple_variant}. These confirm our theory, as the algorithm returns many good matches and few mismatches.  
\end{remark}

\subsection{Proof of Theorems \ref{NTMA:thm:lot_of_matchs} and \ref{NTMA:thm:no_mismatchs}}

We start by stating Lemmas,  adapted from \cite{Massoulie14} and \cite{Bordenave15} and proven in Appendix \ref{appendix_proof_lemmas_sec_2}, that are instrumental in the proofs of 
Theorems \ref{NTMA:thm:lot_of_matchs} and \ref{NTMA:thm:no_mismatchs}.

\begin{lemma}[Control of the sizes of the neighborhoods]
	\label{NTMA:lemma:control_S}
	Let $G \sim \G(n,\lambda/n)$, $d = \lfloor c \log n \rfloor$ with $c \log \lambda <1$. For all $\gamma>0$, there is a constant $C=C(\gamma)>0$ such that with probability $1-O\left(n^{-\gamma}\right)$, for all $u \in [n]$, $t \in [d]$:
	\begin{equation}
	\label{control_S_eq}
	\left| \cS_{G}(u,t) \right| \leq C (\log n) \lambda^t.
	\end{equation}
\end{lemma}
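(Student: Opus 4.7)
The plan is to dominate the breadth-first exploration of $G$ by a Galton-Watson branching process and apply uniform concentration to its generation sizes. The key is that the condition $c \log \lambda < 1$ makes $\lambda^d = n^{c \log \lambda} = n^{1-\Omega(1)}$, leaving enough slack to absorb a union bound over all $n$ starting vertices.

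First, I would fix $u \in [n]$ and couple the standard BFS exploration of $G$ from $u$ with a Galton-Watson branching process $(W_t)_{t \geq 0}$ satisfying $W_0 = 1$ and having offspring distribution $\Bin(n, \lambda/n)$. Since each boundary vertex in the BFS can connect to at most $\Bin(n, \lambda/n)$ previously unexplored vertices (independently, after revealing only the minimal relevant edges at each step), one obtains the stochastic domination $|\cS_G(u,t)| \leq W_t$ (in the usual stochastic order) for all $t \geq 0$. It therefore suffices to show
\begin{equation}\label{eq:plan_bound}
\dP\big(W_t \geq C(\gamma)\,(\log n)\,\lambda^t\big) \leq n^{-\gamma-2}
\end{equation}
uniformly in $t \leq d$, after which a union bound over the $n \cdot d = O(n \log n)$ pairs $(u,t)$ yields the desired overall failure probability $O(n^{-\gamma})$.

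Second, I would establish \eqref{eq:plan_bound} by iterating a Chernoff-type bound through the generations of $(W_t)$. Conditionally on $W_{t-1} = s$, the variable $W_t$ is stochastically dominated by $\Bin(sn, \lambda/n)$ with mean at most $s\lambda$, whose upper tail obeys the classical inequality $\dP(\Bin(sn, \lambda/n) \geq (1+\delta)\,s\lambda) \leq \exp(-\delta^2 s\lambda/(2+\delta))$. Combining this with the branching recursion $\dE[e^{\theta W_t}] = \dE[\varphi(\theta)^{W_{t-1}}]$ for the offspring MGF $\varphi$, and iterating, one can show that for $C(\gamma)$ sufficiently large \eqref{eq:plan_bound} holds uniformly for $t \leq d$.

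The main obstacle is that $d = \Theta(\log n)$ iterations of the offspring MGF tend to amplify errors: a naive per-generation Chernoff bound needs either the constant $C(\gamma)$ or the slack $\delta$ to be adjusted by a factor of the form $(1+o(1))^d$. This amplification is precisely what the hypothesis $c \log \lambda < 1$ is designed to control: since $\lambda^d = n^{c \log \lambda}$ remains strictly polynomially smaller than $n$, the additional $\log n$ factor in \eqref{eq:plan_bound} provides enough slack for the per-generation concentration to survive both the propagation through the $d$ layers and the final union bound over the $n$ starting vertices.
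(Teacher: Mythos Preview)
Your approach is correct. The paper takes a different route: rather than dominating by a branching process, it works directly with the conditional law $|\cS_G(u,t)| \mid \cF_{t-1} \sim \Bin\big(n-|\cB_G(u,t-1)|,\, 1-(1-\lambda/n)^{|\cS_G(u,t-1)|}\big)$, introduces the stopping time $T=\inf\{t:|\cS_G(u,t)|\geq K\log n\}$, controls $|\cS_G(u,T)|$ by one application of Bennett's inequality, and then inductively propagates a \emph{two-sided} bound of the form $|\cS_G(u,t)|\in\big[K(\lambda/2)^{t-T}(\log n)\prod_s(1-\eps_s),\;K'\lambda^{t-T}(\log n)\prod_s(1+\eps_s)\big]$ with geometrically decaying $\eps_s$. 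Each inductive step applies Bennett with a fixed \emph{relative} deviation, which requires a lower bound on the conditional mean; this is where the hypothesis $c\log\lambda<1$ actually enters (it guarantees $|\cS_G(u,t)|<n/\lambda$ so that $1-(1-\lambda/n)^x\geq \lambda x/(2n)$ holds). Your MGF-iteration argument is cleaner for the upper bound alone and, contrary to what your last paragraph suggests, does not in fact need $c\log\lambda<1$: choosing $\theta\asymp\lambda^{-t}$ in the recursion $\dE[e^{\theta W_t}]\leq \exp(\phi^{(t)}(\theta))$ with $\phi(\theta)=(e^\theta-1)\lambda$ keeps $\phi^{(t)}(\theta)=O(1)$ and yields $\dP(W_t\geq C(\log n)\lambda^t)\leq e^{O(1)}n^{-\Omega(C)}$ for every $t$, with no restriction on $d$. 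What the paper's approach buys is the companion lower bound on the sphere sizes, which a pure domination argument cannot deliver.
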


\begin{lemma}[Cycles in the neighborhoods in an $ER$ graph]
	\label{NTMA:lemma:cycles_ER}
	Let $G \sim \G(n,\lambda/n)$, $d = \lfloor c \log n \rfloor$ with $c \log \lambda <1/2$. There exists $\eps>0$ such that for any vertex $u \in [n]$, one has
	\begin{equation}
	\label{cycles_ER_eq}
	\dP\left(\cB_G(u,d) \mbox{ contains a cycle}\right) = O\left( n^{-\eps}\right).
	\end{equation}
\end{lemma}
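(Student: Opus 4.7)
The plan is to use a first-moment argument on the number of cycle-creating edges in a BFS exploration from $u$. Since $\cB_G(u, d)$ is connected in $G$ by definition, the induced subgraph $G[\cB_G(u, d)]$ contains a cycle if and only if
$$Z := |E(G[\cB_G(u, d)])| - (|\cB_G(u, d)| - 1) \geq 1,$$
i.e. the subgraph has strictly more edges than any spanning tree. I would therefore upper-bound $\dE[Z]$ and conclude via Markov's inequality.

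The key step is to control $\dE[Z]$ via a BFS exploration and the principle of deferred decisions. Performing BFS from $u$ in $G$, every non-tree edge of $G[\cB_G(u, d)]$ is either a ``back edge'' discovered when processing a vertex $v \in \cB_G(u, d-1)$ (an edge from $v$ to an already-explored vertex other than its parent), or an edge between two depth-$d$ vertices (not revealed when BFS halts at depth $d$). In both cases, conditional on the history of the exploration, each such candidate edge is present independently with probability $\lambda/n$. Summing the conditional expectations over all candidate edges and using that the BFS processes at most $|\cB_G(u,d-1)|$ vertices, each generating at most $|\cB_G(u,d)|$ back-edge candidates, plus at most $\binom{|\cS_G(u,d)|}{2}$ candidates at depth $d$, yields
$$\dE[Z] \leq \frac{C\lambda}{n}\, \dE\bigl[|\cB_G(u, d)|^2\bigr]$$
for some absolute constant $C$.

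To finish, I would bound the second moment of $|\cB_G(u, d)|$ by coupling the BFS in $G$ with a Galton-Watson tree of offspring distribution $\Bin(n, \lambda/n) \preceq \Poi(\lambda)$. A standard recursion for the sizes of successive generations gives $\dE[|\cB_G(u, d)|^2] = O(\lambda^{2d})$ when $\lambda > 1$ (and a polylogarithmic bound if $\lambda \leq 1$). Substituting $d = \lfloor c \log n \rfloor$ with $c \log \lambda < 1/2$ and combining the two estimates,
$$\dE[Z] = O\!\left(\frac{\lambda^{2d+1}}{n}\right) = O\!\left(n^{2c\log\lambda - 1}\right) = O(n^{-\eps})$$
with $\eps := 1 - 2c\log\lambda > 0$. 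Markov's inequality then yields $\dP(\cB_G(u, d) \text{ contains a cycle}) \leq \dP(Z \geq 1) \leq \dE[Z] = O(n^{-\eps})$, as desired. The main subtlety will be to make the deferred-decisions argument fully rigorous across the two distinct sources of non-tree edges above; this can be done either by conditioning on the BFS history layer-by-layer, or by first restricting to the high-probability event $\{|\cB_G(u,d)| \leq C (\log n)\lambda^d\}$ provided by Lemma~\ref{NTMA:lemma:control_S} and running the first-moment estimate on that event.
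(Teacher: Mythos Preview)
Your proposal is correct. The paper's proof takes exactly the second route you mention at the end: it conditions on the high-probability event $\cA = \bigcap_{t\leq d}\{|\cS_G(u,t)| \leq C(\log n)\lambda^t\}$ from Lemma~\ref{NTMA:lemma:control_S}, then does a layer-by-layer union bound on the first depth $k^*$ at which a cycle appears, bounding $\dP(k^*=k\mid\cA)$ by $|\cS_G(u,k-1)|^2\cdot\lambda/n + |\cS_G(u,k)|\cdot|\cS_G(u,k-1)|^2\cdot(\lambda/n)^2$ and summing over $k$. Your primary route---bounding $\dE[Z]$ directly via $\dE[|\cB_G(u,d)|^2]$ and controlling the latter through a Galton-Watson coupling---is a legitimate alternative that is slightly more self-contained, as it does not rely on Lemma~\ref{NTMA:lemma:control_S}; conversely, the paper's route makes the deferred-decisions bookkeeping a bit more transparent since on $\cA$ all sphere sizes are deterministically bounded. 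Both approaches land on the same exponent $\eps = 1 - 2c\log\lambda$ (up to $\log n$ factors absorbed into $\eps$).
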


\begin{lemma}[Two logarithmic neighborhoods are typically size-independent]
	\label{NTMA:lemma:indep_neighborhoods}
	Let $G \sim \G(n,\lambda/n)$ with $\lambda >1$, $d = \lfloor c \log n \rfloor$ with $c \log \lambda < 1/2 $. Then there exists $\eps>0$ such that for any fixed nodes $u \neq v$, the variation distance between the joint law of the neighborhoods $\mathcal{L} \left(\left(\cS_{G}(u,t),\cS_{G}(v,t)\right)_{t \leq d}\right)$ and the product law $\mathcal{L} \left(\left(\cS_{G}(u,t)\right)_{t \leq d}\right) \otimes \mathcal{L} \left(\left(\cS_{G}(v,t)\right)_{t \leq d}\right)$ tends to $0$ as $O\left(n^{-\eps}\right)$ for some $\eps>0$ when $n \to \infty$.
\end{lemma}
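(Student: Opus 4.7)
The plan is to construct a coupling between the joint law of $\big((\cS_G(u,t))_{t\le d},(\cS_G(v,t))_{t\le d}\big)$ and the product of the two marginals, such that the probability of disagreement is $O(n^{-\epsilon})$. The two sources of dependence to control are: (i) the event that $\cB_G(u,d)$ and $\cB_G(v,d)$ intersect, and (ii) even when they do not intersect, the conditioning on the $u$-exploration restricts the ``ambient'' vertex set available for the $v$-exploration.

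First, invoking Lemma \ref{NTMA:lemma:control_S}, I would work on the event $\cE_1$ that $|\cB_G(u,d)|,|\cB_G(v,d)|\le N:=C(\log n)\lambda^d$, whose complement has polynomially small probability. Since $c\log\lambda<1/2$, we have $N=O(n^{1/2-\epsilon'}\log n)$ for some $\epsilon'>0$. Next I would bound the probability that the two balls intersect by a standard first-moment path-counting argument: a common vertex of $\cB_G(u,d)$ and $\cB_G(v,d)$ yields a $u$--$v$ path of length $\le 2d$, hence
\[
\dP\big(\cB_G(u,d)\cap\cB_G(v,d)\neq\varnothing\big)\le\sum_{\ell=1}^{2d}n^{\ell-1}(\lambda/n)^{\ell}\le\frac{\lambda^{2d+1}}{(\lambda-1)n}=O(n^{2c\log\lambda-1})=O(n^{-\epsilon})
\]
for some $\epsilon>0$, using $2c\log\lambda<1$.

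Then I would realize the joint law by sequential BFS: first explore $\cB_G(u,d)$ from $u$, which determines $\Phi_u$ and, on the no-intersection event, ``locks in'' the set $\cB_G(u,d)$ as forbidden for the $v$-exploration. Conditionally, the $v$-exploration is Markov, with the following transition: given the current sphere $S$ and ball $B_v^{(t-1)}$, each $w\in[n]\setminus(\cB_G(u,d)\cup B_v^{(t-1)})$ lies in $\cS_G(v,t)$ independently with probability $1-(1-\lambda/n)^{|S|}$. The marginal law of $\Phi_v$ has the same Markov transition but over the larger ambient set $[n]\setminus B_v^{(t-1)}$. Adopting the common ambient space by assigning inclusion probability $0$ to the vertices of $\cB_G(u,d)$ in an auxiliary process, the per-step total variation between conditioned and marginal transitions is bounded by $|\cB_G(u,d)|\cdot(1-(1-\lambda/n)^{|S|})\le|\cB_G(u,d)|\,|S|\,\lambda/n$. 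Summing over the at most $d$ steps of the $v$-BFS telescopes to the bound
\[
\|\,\mathrm{cond.}\;\Phi_v-\mathrm{marg.}\;\Phi_v\,\|_{TV}\le|\cB_G(u,d)|\cdot|\cB_G(v,d)|\cdot\lambda/n\le N^2\lambda/n=O(n^{2c\log\lambda-1}(\log n)^2)=O(n^{-\epsilon}).
\]

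Combining the three estimates (bad event $\cE_1^c$, intersection of balls, TV between conditional and marginal $v$-exploration) yields the target $O(n^{-\epsilon})$ bound on the total variation distance. The main technical obstacle is step (ii): keeping track of the \emph{label structure} of the spheres so that one really compares two distributions on the same space. The cleanest way to do this is the auxiliary trick above (artificially forbidding $\cB_G(u,d)$ in the coupled marginal exploration), which reduces the per-step comparison to a TV bound between product Bernoulli measures differing on at most $|\cB_G(u,d)|$ coordinates, after which the telescoping summation is routine.
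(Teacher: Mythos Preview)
Your proposal is correct and rests on the same key insight as the paper: the dependence between the two neighbourhoods is controlled by the probability that the balls $\cB_G(u,d)$ and $\cB_G(v,d)$ intersect, which is $O(n^{2c\log\lambda-1})=O(n^{-\eps})$ under the assumption $c\log\lambda<1/2$.

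The implementation, however, differs. The paper exploits vertex exchangeability directly: it samples two \emph{independent} neighbourhood processes $(\tilde\cS(u,\cdot),\tilde\cS(v,\cdot))$, then places their spheres uniformly at random in $[n]$ layer by layer. This single coupling realizes the joint law exactly as long as the two placed balls stay disjoint, so the whole argument reduces to bounding $\dE[|\cB_G(u,d)\cap\cB_G(v,d)|]$ on the good event~$\cA$, which is done in one line. Your decomposition instead runs the $u$-BFS first, then compares the conditional $v$-BFS to the marginal one via a per-step transition-TV bound on the forbidden set $\cB_G(u,d)$; the path-counting bound on the intersection probability and the telescoped bound $|\cB_G(u,d)|\cdot|\cB_G(v,d)|\cdot\lambda/n$ end up controlling exactly the same quantity as the paper's $\dE[|\cB_u\cap\cB_v|]$. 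In effect, your steps (ii) and (iii) do in two passes what the paper's exchangeability coupling does in one. One small point to be careful about in your write-up: the ``conditional $v$-transition'' you describe (forbidding $\cB_G(u,d)$) is not literally the law of $\Phi_v$ given $\Phi_u$, since edges from $\cS_G(u,d)$ to the outside remain unconstrained; it is cleanest to phrase step (iii) as a coupling with an auxiliary BFS on $[n]\setminus\cB_G(u,d)$, with the coupling failing precisely when the marginal $v$-ball enters $\cB_G(u,d)$ --- which again has probability $O(n^{-\eps})$.
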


\begin{lemma}[Coupling the $\left|\cS_{G}\left(i,t\right)\right|$ with a Galton-Watson process]
	\label{lemma:NTMA:coupling_GW}
	Let $G \sim \G(n,\lambda/n)$, $d = \lfloor c \log n \rfloor$ with $c \log \lambda<1/2$. For a fixed $u \in [n]$, the variation distance between the law of $\left( \left|\cS_{G}(u,t)\right|\right)_{t \leq d}$ and the law of $\left( Z_t\right)_{t \leq d}$ where $(Z_t)_t$ is a Galton-Watson process of offspring distribution $\Poi(\lambda)$ tends to 0 as $O\left( n^{-\eps}\right)$ when $n \to \infty$.
\end{lemma}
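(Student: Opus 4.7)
\textbf{Proof plan for Lemma \ref{lemma:NTMA:coupling_GW}.}

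The plan is to build an explicit coupling between $(|\cS_G(u,t)|)_{t\leq d}$ and a Galton-Watson process $(Z_t)_{t\leq d}$ with $\Poi(\lambda)$ offspring, using a Breadth-First Search (BFS) exploration of the ball $\cB_G(u,d)$. First I would work on the event $\cA_1$ that $\cB_G(u,d)$ is acyclic, which by Lemma \ref{NTMA:lemma:cycles_ER} holds with probability $1-O(n^{-\eps_1})$ for some $\eps_1>0$. On $\cA_1$, when the BFS explores a vertex $v$ at depth $t<d$, the set of children of $v$ is obtained by independently drawing edges between $v$ and each of the $n-N_{t,v}$ still-unexplored vertices with probability $\lambda/n$, where $N_{t,v}$ denotes the number of vertices already discovered at the time $v$ is processed. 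The number of children of $v$ is therefore distributed as $\Bin(n-N_{t,v},\lambda/n)$.

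Next I would use a standard two-step coupling between $\Bin(n-N_{t,v},\lambda/n)$ and $\Poi(\lambda)$: by the triangle inequality,
\begin{equation*}
\dTV\bigl(\Bin(n-N_{t,v},\lambda/n),\Poi(\lambda)\bigr)\leq \dTV\bigl(\Bin(n-N_{t,v},\lambda/n),\Bin(n,\lambda/n)\bigr)+\dTV\bigl(\Bin(n,\lambda/n),\Poi(\lambda)\bigr),
\end{equation*}
where the first term is at most $N_{t,v}\lambda/n$ (coupling by adding back $N_{t,v}$ independent Bernoullis) and the second is at most $\lambda^2/n$ by Le Cam's inequality. Letting $N_d=|\cB_G(u,d)|$ and summing over the at most $N_d$ vertices at which children are generated, the total probability that the natural coupling fails -- i.e. that any generated offspring count disagrees with the corresponding GW offspring -- is at most $N_d(N_d\lambda+\lambda^2)/n = O(N_d^2/n)$ (up to a constant depending only on $\lambda$).

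I would then bound $N_d$ using Lemma \ref{NTMA:lemma:control_S}: on an event $\cA_2$ of probability $1-O(n^{-\gamma})$, one has $N_d\leq \sum_{t=0}^d|\cS_G(u,t)|\leq C(\log n)\lambda^d\cdot d$, so that $N_d=O((\log n)^2 n^{c\log\lambda})$. The hypothesis $c\log\lambda<1/2$ yields $N_d^2/n=O((\log n)^4 n^{2c\log\lambda-1})=O(n^{-\eps_3})$ for some $\eps_3>0$. Combining the failure probabilities of the coupling, of $\cA_1$, and of $\cA_2$ gives that $(|\cS_G(u,t)|)_{t\leq d}$ agrees with $(Z_t)_{t\leq d}$ except on an event of probability $O(n^{-\eps})$, from which the total variation bound follows.

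The main technical hurdle is not any individual step but keeping the accounting clean: one must make sure that the randomness used in the coupling step at a vertex $v$ is independent of the past BFS history when conditioning on the event that no cycle has been encountered so far, and that the bound on $N_d$ from Lemma \ref{NTMA:lemma:control_S} (which is a statement about $G\sim\G(n,\lambda/n)$ directly) is correctly transported to the exploration process. Both are handled routinely by exploring in a filtration-adapted way and using the sure bound $N_d\leq 1+\sum_{t=1}^d |\cS_G(u,t)|$ under $\cA_2$, so no new probabilistic input beyond Lemmas \ref{NTMA:lemma:control_S} and \ref{NTMA:lemma:cycles_ER} is required.
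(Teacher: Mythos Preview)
Your plan is sound and reaches the same $O(n^{-\eps})$ bound, with two remarks.

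The acyclic event $\cA_1$ is unnecessary, and the sentence ``On $\cA_1$, the set of children of $v$ is obtained by independently drawing edges \ldots with probability $\lambda/n$'' is misleading as written: conditioning on $\cA_1$ would spoil the exact $\Bin(n-N_{t,v},\lambda/n)$ law of the child count. The correct observation is that \emph{unconditionally}, when the BFS processes $v$, the edges from $v$ to the $n-N_{t,v}$ undiscovered vertices are fresh independent Bernoulli$(\lambda/n)$ variables, and BFS depth always equals graph distance, so $|\cS_G(u,t)|$ is the number of BFS-depth-$t$ vertices whether or not the ball is a tree. You can therefore drop $\cA_1$ entirely; only $\cA_2$ (to control $N_d$) is needed.

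The paper's proof is organised at a coarser granularity: instead of coupling each individual's offspring with a $\Poi(\lambda)$, it couples the whole sphere size $|\cS_G(u,t)|$ with $Z_t\sim\Poi(\lambda Z_{t-1})$ in one step per generation, using that conditionally on the past sphere sizes one has $|\cS_G(u,t)|\sim\Bin\left(n-|\cB_G(u,t-1)|,\,1-(1-\lambda/n)^{|\cS_G(u,t-1)|}\right)$, and then bounding the total variation distance to $\Poi(\lambda|\cS_G(u,t-1)|)$ via Stein--Chen together with the bound on $\Poi(\mu)$ versus $\Poi(\mu')$ in total variation by $|\mu-\mu'|$. This gives $d=O(\log n)$ coupling steps instead of your $N_d$ steps, with the same final $O(n^{-\eps})$ bound. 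Your vertex-level packaging effectively couples the full tree structure (slightly more than the lemma asks), while the paper's sphere-level version is the minimal argument for the statement as written.
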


\subsubsection{Proof of Theorems \ref{NTMA:thm:lot_of_matchs} and \ref{NTMA:thm:no_mismatchs}}
\begin{proof}[Proof of Theorem \ref{NTMA:thm:lot_of_matchs}]
	Define the joint graph $G_{\cup} = G \cup G'$. We recall that we assume that $\pi^\star=\id$, without loss of generality, hence $H=G'$. For $u \in [n]$, let $M_u$ denote the event that the algorithm matches $u$ in $G$ with $u$ in $H$, i.e. on which $\mathcal{B}_{G}(u,d)$ and $\mathcal{B}_{H}(u,d)$ contain no cycle, and $\exists v \neq w \in \mathcal{N}_{G}(u), \exists v' \neq w' \in \mathcal{N}_{H}(u)$ such that $\cW_{d-1}(v \leftarrow u,v' \leftarrow u)> \gamma^{d-1} $ and $\cW_{d-1}(w \leftarrow u,w' \leftarrow u)> \gamma^{d-1} $. Denote by ${C}_{\cup,u,d}$ the event that there is no cycle in $\mathcal{B}_{G_{\cup}}(u,d)$. 
	
	With the same arguments as in the proof of Lemma \ref{lemma:NTMA:coupling_GW}, the two neighborhoods  $\mathcal{B}_{G}(u,d)$ and $\mathcal{B}_{H}(u,d)$ can be coupled with trees distributed as $\dPls_{d}$ of Section \ref{NTMA:section:tree_matching}. However, we will instead consider the intersection graph $G_{\cap} = G \cap H$. Obviously,  $G_{\cap} \sim \G(n,\lambda s/n)$. By Lemma \ref{lemma:NTMA:coupling_GW}, the random variables $|\mathcal{S}_{G_{\cap}}(u,t)|$ can be coupled with a Galton-Watson process with offspring distribution $\Poi(\lambda s)$ up to depth $t=d$. Let $P_u$ denote the event that this coupling succeeds. Since $\lambda s>1$, there is a probability $2 \alpha >0$ that the first generation has at least two children whose offsprings survive up to depth $d-1$. Note $S$ this event. On event  $S$, the matching given by the identity on  the intersection tree implies the existence of two neighbors $ v \neq w \in \mathcal{N}_{G}(u)$ and $ v' \neq w' \in \mathcal{N}_{H}(u)$ such that with high probability $\cW_{d-1}(v \leftarrow u,v' \leftarrow u)> \gamma^{d-1}$ and $\cW_{d-1}(w \leftarrow u,w' \leftarrow u)> \gamma^{d-1}$, by standard martingale arguments, as in Proposition \ref{NTMA:prop:exponent_correlated}. This gives the lower bound for $\dP(M_u)$:
	$$
	\dP(M_u) \; \geq \dP\left({C}_{\cup,u,d} \cap P_u \cap S \right)\; \geq 2\alpha -o(1) > \alpha > 0.
	$$
	It is easy to see that $G_{\cup} \sim \G(n,\lambda(2-s)/n)$. For $u \neq v \in [n]$, define $I_{u,v}$ the event on which the two neighborhoods of $u$ and $v$ in $G_{\cup}$ coincide with their independent couplings up to depth $d$. By lemma \ref{NTMA:lemma:indep_neighborhoods}, $\dP(I_{u,v})=1-o(1)$. Then for $0<\eps<\alpha$ Markov's inequality yields
	
	\begin{flalign}
	\dP\left(\frac{1}{n} \sum_{u \in [n]} \one_{\lbrace (u,u) \in \cS \rbrace}<\alpha-\eps\right) & \leq \dP\left(\sum_{u \in [n]} \left(\dP(M_u)-\one_{M_u}\right)>\eps n\right)\\
	& \leq \frac{1}{n^2 \eps^2} \left(n \mathrm{Var}\left(\one_{M_1}\right)+ n(n-1)\mathrm{Cov}\left(\one_{M_1},\one_{M_2}\right) \right)\\
	& \leq \frac{\mathrm{Var}\left(\one_{M_1}\right)}{n \eps^2} + \frac{1-\dP\left(I_{1,2}\right)}{ \eps^2} \to 0. 
	\end{flalign}
	
\end{proof}

\begin{proof}[Proof of Theorem \ref{NTMA:thm:no_mismatchs}]
	
Define
	\begin{equation*}
	d_{\mathrm{max}}:=\max\left(\max_u \mathrm{deg}_{G}(u), \max_{u'} \mathrm{deg}_{H}(u')\right).
	\end{equation*} 
	We use the same notations as in the former proof: $G_{\cup} = G \cup H$ and $G_{\cap} = G \cap H$. Fix $u \in [n]$. In the rest of the proof we work conditionally to the event $C_{\cup,u,2d}$ that $\mathcal{B}_{G_{\cup}}(u,2d)$ has no cycle. Since $c \log \lambda<1/4$, $\dP\left( C_{\cup,u,2d} \right)=1-o(1)$ by Lemma \ref{NTMA:lemma:cycles_ER}. 
	
	Fix another vertex $u' \neq u$. The $d-$neighborhoods $\cB_{G}(u,d)$ and $\cB_{H}(u',d)$ have offspring distribution stochastically dominated by $\Bin(n,\lambda/n)$, which is also dominated by $\Poi(\lambda')$ as soon as $\lambda'=\lambda+O(1/n)$ (see e.g. \cite{Klencke09}). We can choose $\lambda'$ such that $\gamma > \gamma(\lambda',0)$ still holds: indeed, by a standard coupling argument, one can see that $\gamma : \lambda \mapsto \gamma(\lambda)$ is increasing. We now build two dominating (in the usual edge presence sense) tree-like $d-$neighborhoods of $i$ and $u$ with the following construction. 
	\begin{itemize}
		\item First, if the two neighborhoods do not intersect, we simply sample two independent trees from model $\dP^{(\lambda')}_d$ rooted in $u$ and in $u'$. 
		\item If the two neighborhoods intersect, condition to the event that $\alpha$ is the contact point in the path $\mathfrak{p}_{\cup}$ (unique by conditioning on $C_{\cup,u,2d}$) from $u$ to $u'$ in the joint graph. Then there is a path of edges of $G$ (say, blue) from $u$ to $\alpha$, then a path of edges of $H$ (say, red) from $\alpha$ to $u'$. Next, complete this construction: along $\mathfrak{p}_{\cup}$, propagate the blue path from $\alpha$ towards $u'$ with probability $s$ on each edge, stopping at the first time when one red edge is not selected. Do the symmetrical construction to propagate the red path from $\alpha$ towards $u$. Finally, to each double-colored vertex $w$, attach independent realizations of model $\dP^{(\lambda',s)}_{d(w)}$ of adapted depth, and to each single-colored vertex $z$, attach independent realizations of model $\dP^{(\lambda')}_{d(z)}$ of adapted depth.
	\end{itemize}
	Note that these constructions lead to at most one path $\mathfrak{p}_{\cup}$ between $u$ and $v'$ in $\cB_{G}(u,d) \cup \cB_{H}(u',d)$, so a fortiori in $\cB_{G}(u,d) \cap \cB_{H}(u',d)$. Denote by  $\mathfrak{p}_{\cap}$ this hypothetical path (cf. Figure \ref{fig:NTMA:parrallel_construction_bis}). We then distinguish between several cases.
	\begin{figure}[h]
		\centering
		\includegraphics[scale=0.83]{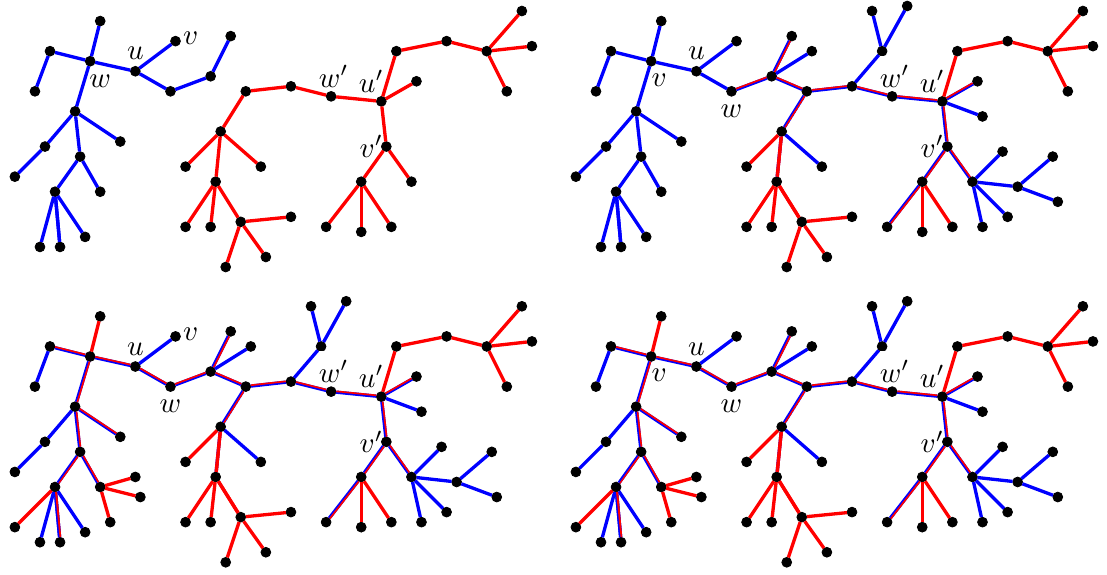}
		\caption{Possible realizations of $\cB_{G}(u,d)$ (blue) and $\cB_{H}(u',d)$ (red), with distinct cases $(i)$ (top left), $(ii)$ (top right), $(iii.a)$ (bottom left) and $(iii.b)$ (bottom right).}
		\label{fig:NTMA:parrallel_construction_bis} 
	\end{figure}

\proofstep{Case $(i)$: $\delta_{G_{\cup}}(u,u')>2d$} (Figure \ref{fig:NTMA:parrallel_construction_bis}, top left), i.e. $\cB_{G}(u,d)\cap\cB_{H}(u',d)=\varnothing$. The  construction gives a coupling with two independent trees from model $\dPl_{d}$. By assumption $\gamma(\lambda)<\lambda s$, the probability that there exist $v$ in $\mathcal{N}_{G}(u)$ and $v'$ in $\mathcal{N}_{H}(u')$ such that $\mathcal{W}_{d-1}(v \leftarrow u, v' \leftarrow u')>\gamma^{d-1}$ is upper bounded by $O\left(d_{\mathrm{max}}^2 \exp\left(-n^{\eps}\right)\right)$, following Remark \ref{rem:NTMA:1.5}. Hence $u$ is matched to $u'$ with at most this probability.\\
	
\proofstep{Case $(ii)$: $\delta_{G_{\cup}}(u,u') \leq 2d$ but $\mathfrak{p}_{\cap}$ does not exist} (see Figure \ref{fig:NTMA:parrallel_construction_bis}, top right). Take $v \neq w$ two neighbors of $u$ and $v' \neq w'$ two neighbors of $u'$. Then (at least) one of these vertices is not on $\mathfrak{p}_{\cup}$ (e.g. vertex $v$ on Figure \ref{fig:NTMA:parrallel_construction_bis}): the downstream tree from this vertex is independent from every other neighborhood in the other graph. They can be coupled with model $\dPl_{d}$, and the same bound as in case $(i)$ holds.\\
	
Now assume that $\mathfrak{p}_{\cap}$ exists, and let $v \neq w$ two neighbors of $u$ and $v \neq v'$ two neighbors of $u'$. 

\proofstep{Case $(iii.a)$:} at least one of the edges $\set{u,v},\set{u,w},\set{u',v'},\set{u',w'}$ is not in $G_{\cap}$ (e.g. edge $(u,v)$ on Figure \ref{fig:NTMA:parrallel_construction_bis}, bottom left): again, the same argument applies. \\

\proofstep{Case $(iii.b)$:}  Edges $\set{u,v},\set{u,w},\set{u',v'},\set{u',w'}$  are all in $G_{\cap}$ (see Figure \ref{fig:NTMA:parrallel_construction_bis}, bottom right). Then one pair of vertices (say $(w,w')$ as on Figure \ref{fig:NTMA:parrallel_construction_bis}) can be on $\mathfrak{p}_{\cap}$ and bring a high $\cW_{d-1}(w \leftarrow u, w' \leftarrow u')>\gamma^{d-1}$ matching weight, if their descendants  spread over a great part of the intersection. In that case, since $v$ and $v'$ can't be on $\mathfrak{p}_{\cap}$, the associated downstream trees are independent, and again $\cW_{d-1}(j \leftarrow i, v \leftarrow u)<\gamma^{d-1}$ with high probability. 

The remaining subcase to be considered is that of matches $(v,w')$ and $(w,v')$, with $w,w'$ on $\mathfrak{p}_{\cap}$. All trees involved are then correlated. However, the coupling construction induces a coupling of the two pairs of $(d-1)-$neighborhoods (from $(v,w')$ and from $(w,v')$, see Figure \ref{fig:NTMA:parrallel_construction_bis}) with two pairs of trees from model $GW(\lambda', s, \delta)$ where $\delta = \left|\mathfrak{p}_{\cap}\right|$. We assume in the Theorem that  $\gamma(\lambda,s,\delta)<\lambda s$ so by Theorem \ref{NTMA:thm:lambda_close_to_1_delta}, the probability that $\mathcal{W}_{d-1}(v \leftarrow u, w' \leftarrow u')>\gamma^{d-1}$ and $\mathcal{W}_{d-1}(w \leftarrow u, v' \leftarrow u')>\gamma^{d-1}$ is upper bounded by $O\left(\exp\left(-n^{\eps}\right)\right)$.\\
	
	Thus, for $u$ fixed, one has 
	$$
	\dP\left(\exists u'  \neq u, \; (u,u') \in \cS\right) \leq 1-\dP\left(C_{\cup,u,2d}\right)+ n \times \dP\left(C_{\cup,u,2d}\right) \times  d_{\mathrm{max}}^2 \times O\left(\exp\left(-n^{\eps}\right) \right) = o(1).
	$$
	The theorem then follows by appealing to Markov's inequality. \end{proof}

\begin{subappendices}

\section{Numerical experiments}
\addtocontents{toc}{\protect\setcounter{tocdepth}{0}}
\subsection{Simulations for tree matching}\label{app:tree}
We here present some simulations of matching rates $\gamma(\lambda)$ (Figure \ref{fig:NTMA:gamma_lambda}) and $\gamma(\lambda,s,\delta)$ for $s=1$ (Figure \ref{fig:NTMA:gamma_lambda_delta}) in order to illustrate Theorems \ref{NTMA:thm:lambda_close_to_1} and \ref{NTMA:thm:lambda_close_to_1_delta} and the final conjecture. For these simulations, error bars correspond to one standard deviation.

\begin{figure}[H]
	\centering
	\medskip
	\begin{subfigure}[t]{0.32\linewidth}
	\centering
	\includegraphics[scale=0.27]{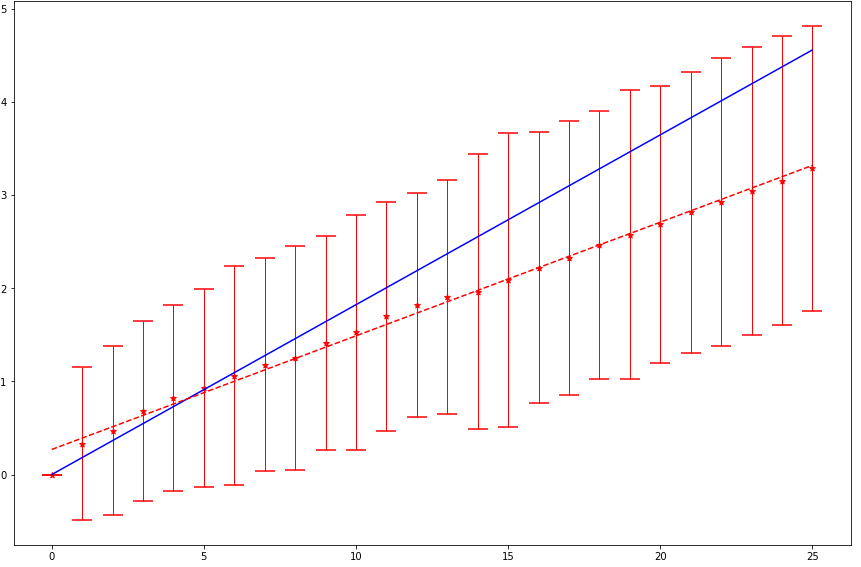}
	\caption{{\footnotesize $\lambda=1.2$, $\log \lambda \sim 0.18$. Red dashed slope $\sim 0.12$}}
	\end{subfigure}
	\hfill
	\begin{subfigure}[t]{0.32\linewidth}
	\centering
	\includegraphics[scale=0.27]{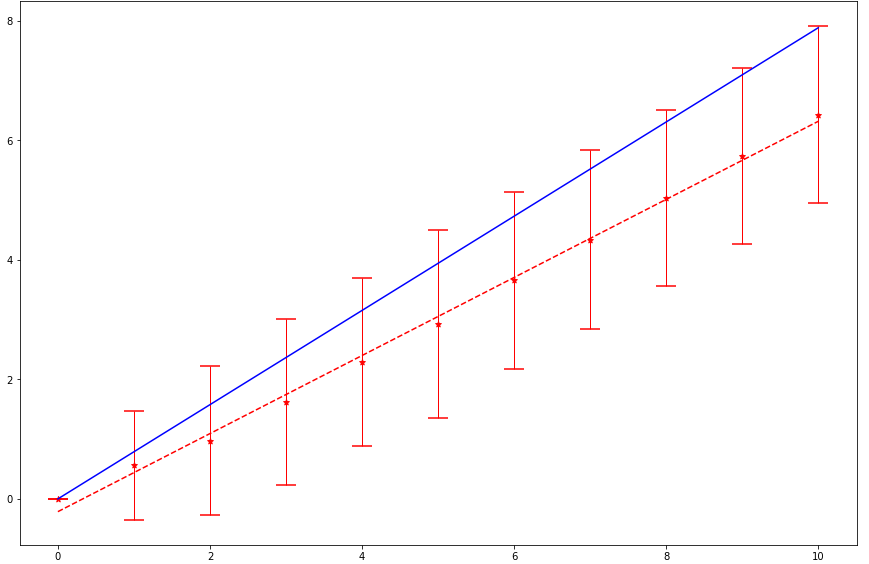}
	\caption{{\footnotesize $\lambda=2.2$, $\log \lambda \sim 0.79$. Red dashed slope $\sim 0.65$}}
	\end{subfigure}
	\hfill
	\begin{subfigure}[t]{0.32\linewidth}
	\centering
	\includegraphics[scale=0.27]{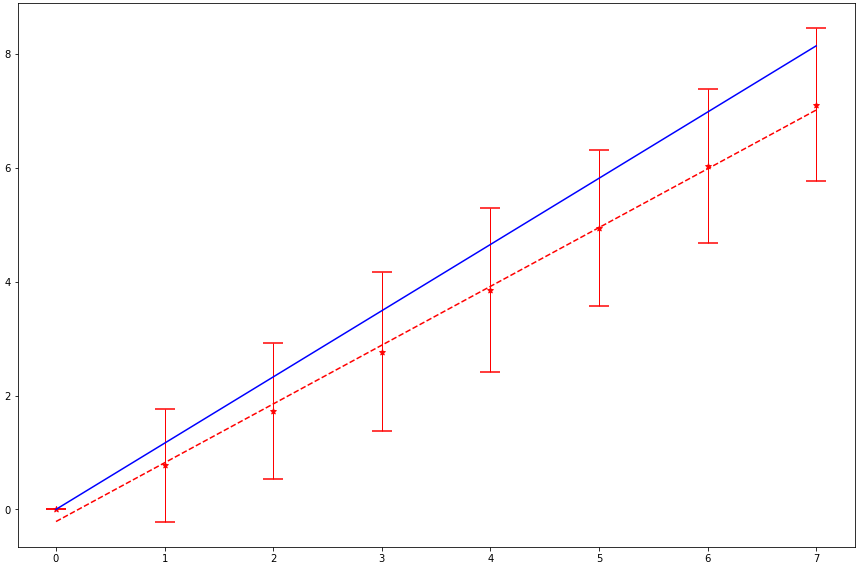}
	\caption{{\footnotesize $\lambda=3.2$, $\log \lambda \sim 1.16$. Red dashed slope $\sim 1.03$}}
	\end{subfigure}
	\caption{Comparison of $d \log \lambda$ (blue) and $\log \cW_d(T,T')$ (red) for $\cW_d(T,T') \sim \dPl_{d}$ conditioned to survive ($100$ iterations)}
	\label{fig:NTMA:gamma_lambda} 
\end{figure}

\begin{figure}[H]
	\begin{subfigure}[t]{0.32\linewidth}
		\centering
		\includegraphics[scale=0.52]{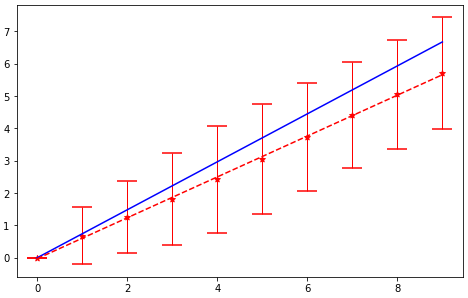}
		\caption{{\footnotesize $\lambda=2.1, \log \lambda \sim 0.74, \delta=1$. Red dashed slope $\sim 0.63$}}
	\end{subfigure}
	\hfill
	\begin{subfigure}[t]{0.32\linewidth}
		\centering
		\includegraphics[scale=0.52]{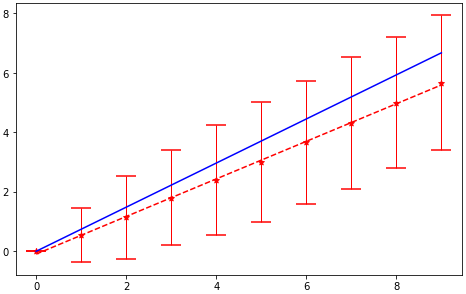}
		\caption{{\footnotesize $\lambda=2.1, \log \lambda \sim 0.74, \delta=2$. Red dashed slope $\sim 0.63$}}
	\end{subfigure}
	\hfill
	\begin{subfigure}[t]{0.32\linewidth}
		\centering
		\includegraphics[scale=0.52]{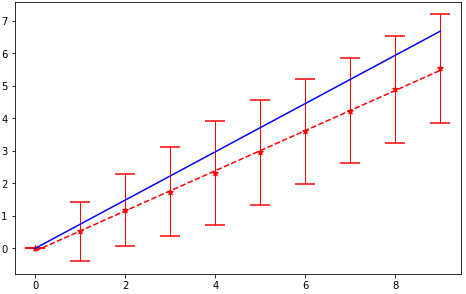}
		\caption{{\footnotesize $\lambda=2.1, \log \lambda \sim 0.74, \delta=5$. Red dashed slope $\sim 0.62$}}
	\end{subfigure}
	\caption{Comparison of $d \log \lambda$ (blue) and $\log \cW_d(T,T')$ (red) for $\cW_d(T,T') \sim \dPlsd_d$ with $s=1$, conditioned to survive ($50$ iterations)}
	\label{fig:NTMA:gamma_lambda_delta} 
\end{figure}

\subsection{\label{appendix_simulations_simple_variant} Simulations for a simple variant algorithm of \alg{NTMA}}
We here present some simulations of simple variant algorithm of \alg{NTMA}, \alg{NTMA--2}, which happens to be more efficient in practice. The Algorithm \alg{NTMA--2} is as follows.

\begin{algorithm}[H]
	\caption{\alg{NTMA--2}}
	\SetAlgoLined
	\textbf{Input:} Two graphs $G$ and $H$ of size $n$, average degree $\lambda$, depth $d$, parameter $\gamma$.
	
	\textbf{Output:} A set of pairs $\cS \subset V(G) \times V(H)$.
	
	$\cS \gets \varnothing$
	
	\For{$(u,u') \in V(G) \times V(H)$}{
		\If{$\cB_{G}(u,d)$ and $\cB_{H}(u',d)$ contain no cycle, and if, denoting $$\cW_{d}(u,v') := \one_{\mboxs{$\cB_{G}(u,d)$ and $\cB_{H}(v',d)$ contain no cycle}}\cW_{d}(\cB_{G}(u,d),\cB_{H}(v',d)) \, ,$$ one has $\cW_{d}(u,u')> \gamma^{d} $, $\cW_{d}(u,u')= \max_{v} \cW_{d}(v,u')$ and $\cW_{d}(u,u')= \max_{v'} \cW_{d}(u,v')$}{
			$\cS \gets \cS \cup \left\lbrace (u,v') \right\rbrace $}
	}
	\For{$(u,u') \neq (v,v') \in \cS$}{
		\If{$u=v$}{$\cS \gets \cS \setminus \left\lbrace (u,y), y \in V(H)\right\rbrace $}
		\If{$u'=v'$}{$\cS \gets \cS \setminus \left\lbrace (x,u'), x \in V(G)\right\rbrace $}
	}
	
	\textbf{return} $\cS$
\end{algorithm}

This algorithm only selects rows and columns weight maximums and match the corresponding pairs. The last part ensures that $\cS$ is a matching. For these simulations, error bars correspond to a confidence interval for the mean value of scores. In Figures \ref{fig:NTMA:NTMA2_21} and \ref{fig:NTMA:NTMA2_31} we compare the scores of \alg{NTMA--2} for $s=0.95$ with the isomorphism case $s=1.0$, for different values of $n$. We illustrate the fact that nearly no vertex is mismatched, whereas a non-negligible fraction of nodes is indeed recovered. In Figure \ref{fig:NTMA:sstar}, we compare the scores of \alg{NTMA--2} for fixed $n$ but varying $s$, illustrating the existence of a 'critical' parameter $s^*(\lambda)$.

\begin{figure}[h]
	\begin{subfigure}[t]{\linewidth}
		\centering
		\includegraphics[scale=0.4]{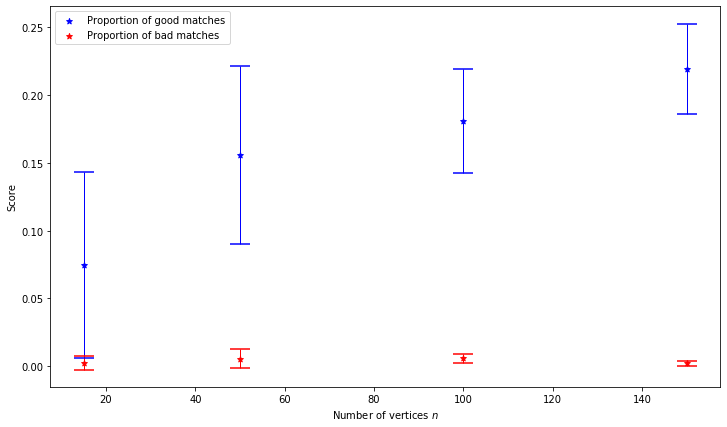}
		\caption{\footnotesize{$s=0.95$}}
	\end{subfigure}
	\begin{subfigure}[t]{1\linewidth}
		\centering
		\includegraphics[scale=0.4]{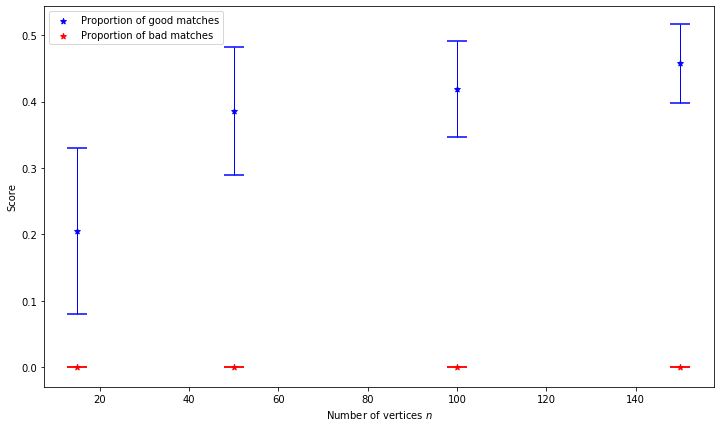}
		\caption{\footnotesize{Isomorphism case, $s=1.0$}}
	\end{subfigure}
	\caption{Mean score of \alg{NTMA--2} for $\lambda=2.1$, $d=5$ (25 iterations per value of $n$)}
	\label{fig:NTMA:NTMA2_21}
\end{figure}

\begin{figure}[h]
	\begin{subfigure}[t]{\linewidth}
		\centering
		\includegraphics[scale=0.5]{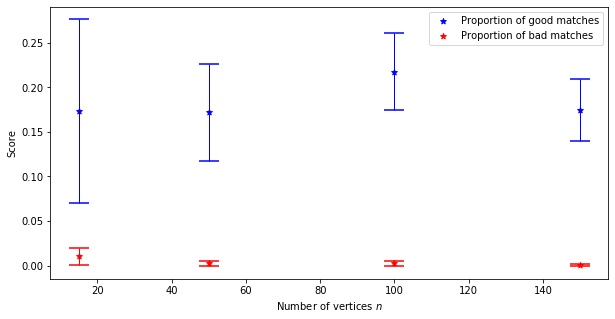}
		\caption{\footnotesize{$s=0.95$}}
	\end{subfigure}
	\hfill
	\begin{subfigure}[t]{\linewidth}
		\centering
		\includegraphics[scale=0.5]{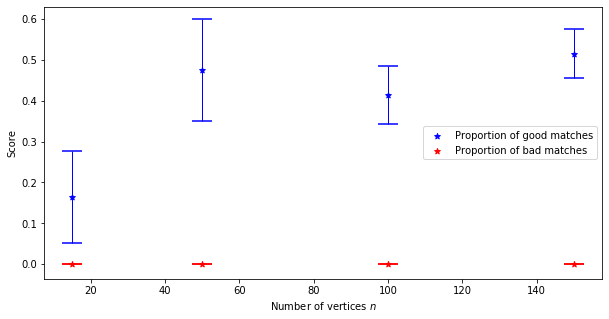}
		\caption{\footnotesize{Isomorphism case, $s=1.0$}}
	\end{subfigure}
	\caption{Mean score of \alg{NTMA--2} for $\lambda=3.1$, $d=4$ (25 iterations per value of $n$)}
	\label{fig:NTMA:NTMA2_31}
\end{figure}

\begin{figure}[h]
	\vspace{0.4cm}
	\begin{subfigure}[t]{\linewidth}
		\centering
		\includegraphics[scale=0.4]{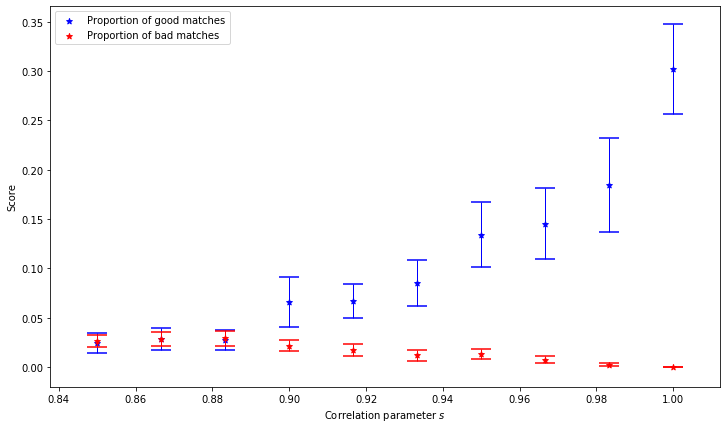} 
		\caption{\footnotesize{$s=0.95$}}
	\end{subfigure}
	\hfill
	\begin{subfigure}[t]{\linewidth}
		\centering
		\includegraphics[scale=0.4]{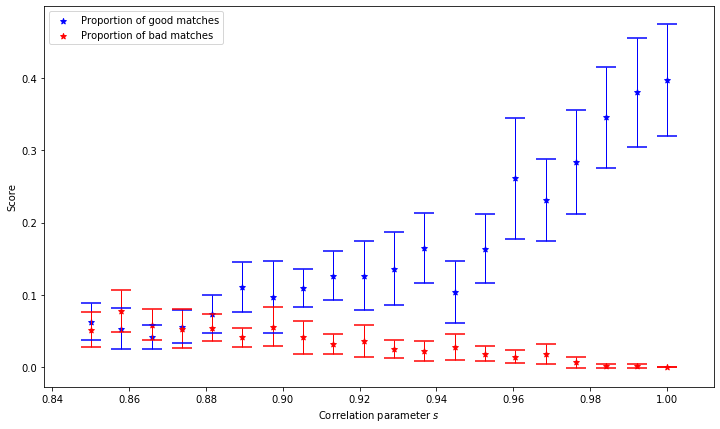}
		\caption{\footnotesize{Isomorphism case, $s=1.0$}}
	\end{subfigure}
	\caption{Mean score of \alg{NTMA--2} with different values of $s$ (25 iterations per value of $n$)}
	\label{fig:NTMA:sstar}
\end{figure}

\addtocontents{toc}{\protect\setcounter{tocdepth}{2}}
\section{Detailed proofs for Section \ref{NTMA:section:tree_matching}}
\addtocontents{toc}{\protect\setcounter{tocdepth}{0}}
\subsection{\label{appendix_proof_lambda_close_to_1}Proof of Theorem \ref{NTMA:thm:lambda_close_to_1}}

\begin{proof}[Proof of Theorem \ref{NTMA:thm:lambda_close_to_1}]
	We first state an easy corollary:
	\begin{corollary}
		\label{cor_exponential_moments}
		For any $d \geq 1$, the random variable $X=\left| \cL_d \left( T_d \right)\right|$ is such that $\dE\left[e^{\theta X} \right]< \infty$ for all $\theta>0$.
	\end{corollary}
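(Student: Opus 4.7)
The plan is to proceed by induction on $d$, using the recursive structure of $T_d$ provided by Lemma~\ref{NTMA:lemma:pruning_trees}. Set $\phi_d(\theta) := \dE[e^{\theta X_d}]$ where $X_d := |\cL_d(T_d)|$. The base case $d=0$ is trivial: $T_0$ is the single-root tree, so $X_0 = 1$ and $\phi_0(\theta) = e^\theta < \infty$.

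For the inductive step, I would apply Lemma~\ref{NTMA:lemma:pruning_trees}: $T_d$ has $D$ children at the root distributed according to $q_{d,\cdot}$, and attached to each of the $D$ children is an independent copy of $T_{d-1}$. Since pruning preserves the structure of deep leaves, the leaves at depth $d$ of $T_d$ are partitioned by the $D$ subtrees, giving the identity
\begin{equation*}
X_d \;\overset{d}{=}\; \sum_{i=1}^{D} X_{d-1}^{(i)},
\end{equation*}
with $\{X_{d-1}^{(i)}\}_{i\geq 1}$ i.i.d.\ copies of $X_{d-1}$, independent of $D$. Conditioning on $D$ then yields the recursion
\begin{equation*}
\phi_d(\theta) \;=\; \dE\bigl[\phi_{d-1}(\theta)^{D}\bigr] \;=\; \dE\!\left[e^{D\,\log \phi_{d-1}(\theta)}\right].
\end{equation*}

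The key observation is that $D$ has finite exponential moments of every order. Indeed, by Lemma~\ref{NTMA:lemma:pruning_trees}, $D$ is a $\Poi(\lambda(1-p_{d-1}))$ random variable conditioned to be positive, so $\dE[e^{tD}] \leq (1-p_d)^{-1}\dE[e^{t\,\Poi(\lambda(1-p_{d-1}))}] = (1-p_d)^{-1}\exp\!\bigl(\lambda(1-p_{d-1})(e^t-1)\bigr) < \infty$ for every $t \in \dR$. Combining this with the induction hypothesis $\phi_{d-1}(\theta) < \infty$ for every $\theta > 0$, we conclude that $\phi_d(\theta) < \infty$ for every $\theta > 0$, completing the induction.

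There is no genuine obstacle here: the argument is purely a finite-depth propagation of exponential moments through a branching recursion with light-tailed offspring, and the only point requiring a moment of care is verifying that the truncated-Poisson law $q_d$ retains all exponential moments, which follows immediately from domination by the untruncated Poisson.
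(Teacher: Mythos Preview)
Your proof is correct and follows exactly the approach the paper indicates: the paper's proof is a single sentence (``This is easily seen by induction, based on the structure of $T_d$ given in Lemma~\ref{NTMA:lemma:pruning_trees}''), and you have simply written out that induction in full, including the explicit recursion $\phi_d(\theta)=\dE[\phi_{d-1}(\theta)^D]$ and the verification that the truncated-Poisson degree $D$ has all exponential moments.
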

	\begin{proof}
		This is easily seen by induction, based on the structure of $T_d$ given in Lemma \ref{NTMA:lemma:pruning_trees}.
	\end{proof}
	Recall that we let $\cE_d$ (respectively, $\cE'_d$) denote the event that tree $T$ (respectively, $T'$) becomes extinct before $d$ generations, i.e. $\cL_d(T)=\varnothing$ (respectively, $\cL_d(T')=\varnothing$). We let $p_{d}=\dP(\cE_d)$. It is well known that it satisfies the recursion
	$$p_{0}=0,\; p_{d}=e^{-\lambda(1-p_{d-1})},$$
	and converges monotonically to the smallest root in $[0,1]$ of $x=e^{-\lambda(1-x)}$. This root, that we denote $p_e$, is the probability of ultimate extinction. For small enough $\eps=\lambda-1$, it holds that 
	$$p_e=1- 2\eps +O(\eps^2),$$ as can be seen by analysis of the fixed point equation satisfied by $p_e$. Let then $d_0$ be such that for all $d\geq d_0$, $p_{d}=1-2\eps+O(\eps^2)$. Clearly, on the event $\cE_d\cup \cE'_d$, the set of matchings $M(T,T')$ is empty, so that $\cW_d(T,T')=0$. Recall that we define $T_d$ the random variable $r_d(T)$ where $T$ is conditioned to survive up to depth $d$.\\
	Now fix $r\in(0,1)$. We shall prove that for sufficiently small $\eps>0$, letting $\gamma=1+r \eps$, there exists some constants $c,m,d_0>0$ such that for all $x > 0$, all $d\geq d_0$, one has
	\begin{equation} 
	\label{exponential_control_eq}
	\dP\left(\cW_d(T_d,T'_d) \geq \gamma^{d-d_0} m x\right)\leq e^{-(x-c)_+}.
	\end{equation}
	We proceed by induction over $d-d_0$. To initialize the induction, notice that one obviously has $\cW_{d_0}(T_{d_0},T'_{d_0})\leq |\cL_{d_0}(T_{d_0})|=:X$. By Corollary \ref{cor_exponential_moments}, for all $m,x,\theta>0$, one has:
	\begin{flalign*}
	\dP\left(  \cW_{d_0}(T_{d_0},T'_{d_0})>m x \right)\leq \dP(X> mx) \leq \dE e^{\theta X} e^{-\theta m x}.
	\end{flalign*} Let now $\theta=1/m$. By taking $m$ sufficiently large, from dominated convergence we can make $\dE e^{(1/m)X}$ as close to $1$ as we like. Choose for instance $m$ such that $\dE e^{(1/m)X}\leq 2$.
	Then
	\begin{align*}
	\dP(\cW_{d_0}(T_{d_0},T'_{d_0})>m x) \leq 2 e^{-x} \leq e^{-x+c}.
	\end{align*} for any $c\geq \ln(2)$. Hence, for sufficiently large $m$, we can initialize the induction at $d=d_0$ with any $c \geq \ln(2)$.\\ 
	
	Recall we set $\gamma=1+r\eps$. Define the random variables
	$$X_d := \gamma^{-(d-d_0)} m^{-1} \cW_d\left(T_{d},T'_{d}\right).$$
	Let $D$ (resp. $D'$) denote the number of children of the root in $T_{d}$ (resp. $D'$ in $T'_{d}$). Given $D$ and $D'$, noting $T_d=\left(T_{d-1,1},\ldots,T_{d-1,D}\right)$ and $T'_d=\left(T'_{d-1,1},\ldots,T'_{d-1,D'}\right)$, we have that
	\begin{equation*}
	\cW_d(T_d,T'_d)=\sup_{\mathfrak{m} \in \cM([D],[D'])} \sum_{(i,u)\in \mathfrak{m}} \cW_{d-1}(T_{d-1,i},T'_{d-1,u}),  
	\end{equation*} where $\cM([D],[D'])$ denotes the set of all $(D \vee D')^{\underline{D\wedge D'}}$ maximal injective mappings between $\cE_0 \subseteq [D]$ and $[D']$. Let $$X_{d-1,i,u}:=\gamma^{-(d-1-d_0)} m^{-1}\cW_{d-1}(T_{d-1,i},T'_{d-1,u}).$$ 
	Note that conditional on $D$ and $D'$, for each matching $\mathfrak{m} \in \cM([D],[D'])$, the variables $\left( X_{d-1,i,u} \right)_{(i,u)\in \mathfrak{m}}$ are i.i.d. with the same distribution as $X_{d-1}$. The induction hypothesis states that each $X_{d-1,i,u}$ is less, for the strong stochastic ordering of comparison of cumulative distribution functions, than $c$ plus an exponential random variable with parameter 1. With an easy union bound, we can derive the following bounds:
	
	\begin{equation}
	\label{eq_key_bound}
	\dP\left(X_d>x\right) \leq \sum_{1 \leq k \leq \ell < \infty}\dP\left(D \wedge D'= k, D \vee D'= \ell\right) \min\left( 1, \ell^{\underline{k}} \, \dP \left( \cE_1 + \ldots + \cE_k > \gamma x - kc \right) \right),
	\end{equation}
	where $\cE_1, \ldots, \cE_k$ are independent exponential random variables of parameter $1$. Lemma \ref{NTMA:lemma:pruning_trees} states that
	\begin{equation*}
	\dP(D=k)=e^{-\lambda(1- p_{d-1})}\frac{\lambda^k(1-p_{d-1})^k}{k! \left(1-p_{d}\right)} =:  q_{d,k}.
	\end{equation*}We can increase $d_0$ such that for some constant $\kappa>0$, for all $d\geq d_0$:
	$$
	q_{d,1}\leq 1-\eps+\kappa\eps^2, \quad q_{d,k}\leq \frac{(3\eps)^{k-1}}{k!},\; k\geq 2.
	$$
	Note that for $x\leq c$, there is nothing to prove in (\ref{exponential_control_eq}), since a probability is always upper-bounded by $1$. We thus only need to consider the case $x>c$. We conclude the proof of this Theorem by appealing to the following technical Lemma, proved later on in Appendix \ref{NTMA:appendix:proof:lemma:q_control}:
	\begin{lemma}
		\label{NTMA:lemma:q_control}
		Let $\kappa,C>0$ and $r\in(0,1)$ be given constants. Then there exists $c>0$ large enough and $\eps_0>0$ such that, for all $\eps\in(0,\eps_0)$, letting  $\gamma=1+r\eps$, $q_1=1-\eps+\kappa \eps^2$, $q_k=(C \eps)^{k-1}/k!$ for $k \geq 2$, one has
		\begin{equation}
		\label{lemma_q_control_eq}
		\forall x>c,\quad \sum_{k,\ell \geq 1} q_k q_l \min\left(1, (k \vee \ell)^{\underline{k \wedge \ell}} \, \dP\left(\cE_1 + \ldots + \cE_{k \wedge \ell}>\gamma x-(k \wedge \ell)c\right) \right)\leq e^{-(x-c)},
		\end{equation}
		where the $\cE_i$ are independent exponential random variables of parameter $1$.
	\end{lemma}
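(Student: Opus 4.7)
\textbf{Plan for the proof of Lemma \ref{NTMA:lemma:q_control}.} The strategy is to identify the dominant contribution and show that the remaining terms are a small perturbation absorbable by choosing $\eps$ small enough. Since $q_1 = 1 - \eps + \kappa \eps^2$ is close to $1$ while $q_k = O(\eps^{k-1})$ for $k \geq 2$, the dominant term in the sum comes from $k = \ell = 1$. For this term, $(k \vee \ell)^{\underline{k \wedge \ell}} = 1$ and $\dP(\cE_1 > \gamma x - c) = e^{-(\gamma x - c)}$ (valid whenever $x > c$ since $\gamma > 1$). Writing $-(\gamma x - c) = -(x - c) - (\gamma-1)x$ and using $(\gamma - 1) x = r \eps x \geq r \eps c$, the $k=\ell=1$ contribution is
\begin{equation*}
q_1^2 e^{-(\gamma x - c)} \leq (1 - \eps + \kappa \eps^2)^2 e^{-r \eps c} e^{-(x-c)} \leq \bigl(1 - \eps(2 + rc) + O(\eps^2)\bigr) e^{-(x-c)}.
\end{equation*}
By taking $c$ large enough that $2 + rc > 0$ and $\eps$ small, this leaves a slack of at least $\tfrac{1}{2}\eps (2 + rc) \cdot e^{-(x-c)}$ into which we must fit all the remaining terms.

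Next I would split the residual sum according to $m := k \wedge \ell$. For $m = 1$ with $(k,\ell) \neq (1,1)$, by symmetry the contribution is $2 q_1 \sum_{\ell \geq 2} q_\ell \cdot \ell \cdot e^{-(\gamma x - c)}$. Using $q_\ell \leq (C \eps)^{\ell - 1}/\ell!$, the inner sum equals $\sum_{\ell \geq 2}(C\eps)^{\ell - 1}/(\ell - 1)! = e^{C\eps} - 1 \leq 2 C \eps$ for $\eps$ small. So this contribution is at most $4 C \eps \, q_1 \, e^{-(\gamma x - c)} \leq 4 C \eps \, e^{-(x-c)}$, which is of first order in $\eps$.

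For $m \geq 2$, I would use the crude bounds $q_k q_\ell \leq (C\eps)^{k + \ell - 2}/(k!\ell!)$ and $(k \vee \ell)^{\underline{m}} \leq (k+\ell)^m$, together with the sub-gamma tail
\begin{equation*}
\dP(\cE_1 + \cdots + \cE_m > t) \leq e^{-t} \frac{t^{m-1}}{(m-1)!} \cdot \mathbf{1}_{t > 0} + \mathbf{1}_{t \leq 0}.
\end{equation*}
After substituting $t = \gamma x - mc$ and using the $\min$ with $1$ to truncate, the factor $e^{-(\gamma x - mc)} = e^{-(x-c)} e^{-r \eps x + (m-1)c}$ separates cleanly from the combinatorial and polynomial factors. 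The sum over $k, \ell \geq 2$ of $(k+\ell)^m (C\eps)^{k+\ell - 2}/(k!\ell!)$, times the polynomial factor $(\gamma x - mc)^{m-1}/(m-1)!$, is absolutely convergent and of order $\eps^2$ uniformly in $x$, because the geometric smallness $(C\eps)^{k+\ell - 2}$ dominates the combinatorial blow-up of $(k+\ell)^m$. Thus this piece contributes at most $C' \eps^2 e^{-(x-c)}$ for some constant $C'$, which is of lower order than the slack created by the dominant term.

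\textbf{Main obstacle.} The delicate step is the uniform control of the sum over $k, \ell \geq 2$, $m \geq 2$ involving the product $(k+\ell)^m (k+\ell)!/(k! \ell! (m-1)!)$ against the geometric factor $(C\eps)^{k+\ell - 2}$, combined with the Gamma-tail polynomial $(\gamma x - mc)^{m-1}$. One must check that choosing $c$ sufficiently large (so that the $+ (m-1)c$ in the exponent of $e^{r\eps x}$ does not blow up) and then $\eps$ sufficiently small (so the geometric factor dominates), the entire non-dominant contribution is at most $\tfrac{1}{4}\eps(2 + rc) e^{-(x-c)}$, which fits in the slack. Assembling the three pieces $(k=\ell=1)$, $(m=1, (k,\ell) \neq (1,1))$, and $(m \geq 2)$ then yields $\sum \leq e^{-(x-c)}$ for all $x > c$, completing the induction step and establishing \eqref{exponential_control_eq}.
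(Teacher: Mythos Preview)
Your decomposition by $m = k \wedge \ell$ and identification of the $k=\ell=1$ term as dominant is exactly the paper's approach. However, your central claim that the $m \geq 2$ contribution is $O(\eps^2)\,e^{-(x-c)}$ \emph{uniformly in $x$} is false, and this is where the proof actually bites. Take $k=\ell=m=2$: the term is
\[
q_2^2\cdot 2\cdot e^{-(\gamma x-2c)}\bigl(1+(\gamma x-2c)\bigr)=\tfrac{(C\eps)^2}{2}\,e^{c}\,e^{-(x-c)}\,e^{-r\eps x}\bigl(1+\gamma x-2c\bigr).
\]
The function $e^{-r\eps x}(\gamma x)$ is maximized at $x\sim 1/(r\eps)$, where it equals $\gamma/(r\eps e)$; so this single term, divided by $e^{-(x-c)}$, is of order $\eps$ there, not $\eps^2$. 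Your stated reason (``geometric smallness $(C\eps)^{k+\ell-2}$ dominates the combinatorial blow-up of $(k+\ell)^m$'') addresses only the sum over $k,\ell$ and ignores the polynomial factor $(\gamma x)^{m-1}$, which is the actual source of trouble.

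The paper closes this gap by a case analysis on $x$: it combines all contributions into a single expression bounded by $e^{-r\eps x}[1+O(\eps)] + e^{-r\eps x}\,O(1)\,(e^{\gamma x\eps^2}-1)$, and then treats $x\in[c,\eps^{-1/2}]$, $x\in[\eps^{-1/2},\eps^{-1}]$, and $x\geq \eps^{-1}$ separately. For small $x$ the polynomial is harmless and the $O(\eps)$ slack from the leading term (via $rc>$ constant) suffices; for large $x$ the factor $e^{-r\eps x}$ makes the dominant term bounded away from $1$ by $\Omega(1)$, so even an $O(\eps)$ remainder is absorbed. Your ``main obstacle'' paragraph correctly locates the difficulty but does not resolve it; without this range-splitting on $x$, the argument does not close.
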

	Its assumptions are indeed verified here with $C=3$, so \eqref{exponential_control_eq} can be propagated by using this Lemma in \eqref{eq_key_bound}, and the conclusion of Theorem \ref{NTMA:thm:lambda_close_to_1} follows.
	
\end{proof}

\subsection{\label{appendix_proof_lambda_close_to_1_delta}Proof of Theorem \ref{NTMA:thm:lambda_close_to_1_delta}}
\begin{proof}[Proof of Theorem \ref{NTMA:thm:lambda_close_to_1_delta}]
	We assume that $\lambda =1+\eps$. We fix $r \in (0,1)$, and we let $\gamma=1+r\eps$ for some fixed $r\in(0,1)$. We work with trees such that $(T,T') \sim \dPlsd_d$. If we assume that the path from $\rho$ to $\rho'$ does not survive down to depth $d$ in $T$, then this path is no more present in $T_d$, and the two trees $T_d$ and $T'_d$ can be coupled with two trees $\widetilde{T}_d$ and $\widetilde{T}'_d$ where $(\widetilde{T},\widetilde{T}') \sim \dPl_{d}$, and we are in the case of Theorem \ref{NTMA:thm:lambda_close_to_1}.
	
	In the following proof, we will thus condition to the event $S_{\rho,d}$ that the path from $\rho$ to $\rho'$ survives down to depth $d$ in $T$. Recall that the tree $T_d$ (resp. $T_d$) is obtained, conditionally on the fact that $T$ (resp. in $T'$) survives down to depth $d$, by suppressing nodes at depth greater than $d$ in $T$ (resp. in $T'$), and then pruning alternatively leaves of depth strictly less than $d$. As in the proof of Theorem \ref{NTMA:thm:lambda_close_to_1}, we shall establish that for sufficiently small $\eps>0$, there exist constants $c,m,d_0>0$ such that for all $x > 0$, all $d\geq d_0$, one has
	\begin{equation}
	\label{exponential_control_lambda_eq}
	\dP\left( \cW_d(T_{d},T'_d)\geq \gamma^{d-d_0} m x \big| S_{\rho,d} \right) \leq e^{-(x-c)^+}.
	\end{equation}
	Define the random variables
	$$X'_d := \gamma^{-(d-d_0)} m^{-1} \cW_d\left(T_{d+\delta},T'_d\right),$$ conditional on $S_{\rho,d}$. The proof will again be by induction on $d$, the initial step being established with the same argument as in the proof of Theorem \ref{NTMA:thm:lambda_close_to_1}. Note that this argument does not depend on $\delta$. 
	
	Denote by $D$ the number of children of $\rho$ in $T_{d}$, $D'$ the number of children of $\rho'$ in $T'_{d}$ that are in the intersection tree $T_d \cap T'_{d}$, and $D''$ the number of children of $\rho'$ in $T'_{d} \setminus T_d$. By branching property, note that these three variables are independent.
	
	Recall that $p_{d}$ denotes the probability that a Galton-Watson tree with offspring $\Poi(\lambda)$ becomes extinct before $d$ generations. Then, conditionally on $S_{\rho,d}$, the random variables $D, D'$ and $D''$ have the following distributions:
	\begin{flalign*}
	D \sim 1 + \Poi\left(\lambda\left(1-p_{d-1}\right)\right), \quad
	D' \sim \Poi\left(\lambda s \left(1-p_{d-1}\right)\right),\\
	D'' \sim \Poi\left(\lambda(1-s)\left(1-p_{d-1}\right)\right), \mbox{ conditionally on } D'+D'' >0.
	\end{flalign*} We show an illustration on Figure \ref{fig:NTMA:image_exemple_D_D'_D''}.
	\begin{figure}[H]
		\centering
		\includegraphics[scale=0.9]{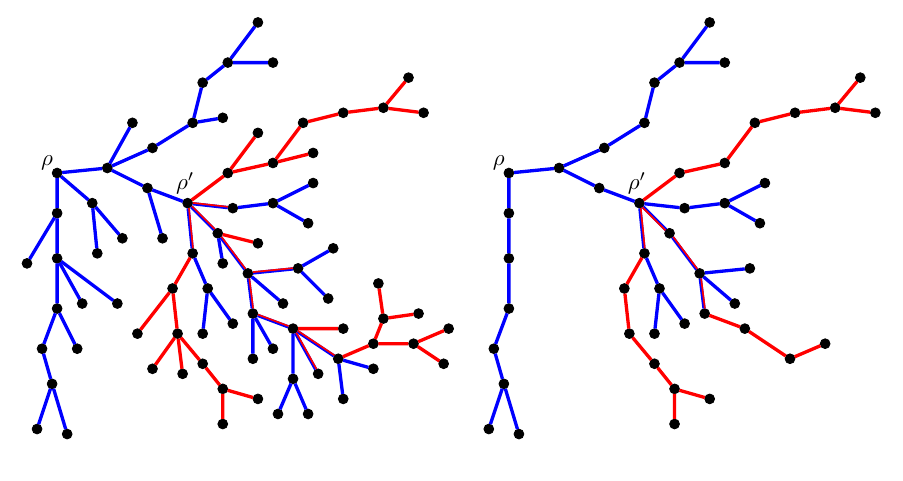}
		\caption{Random trees $T$ (blue) and $T'$ (red) from Figure \ref{fig:NTMA:image_GW} (left), and the results $T_d$  and $T'_d$ after applying $r_d$ (right). In this example, $\delta=3$ and $d=6$, $D=2$, $D' = 2$ and $D'' = 1$.}
		\label{fig:NTMA:image_exemple_D_D'_D''}
	\end{figure}
	We condition on the values $\ell,k',k''$ taken by $D,D',D''$. The number of maximal one-to-one mappings between the children of $\rho$ in $T_d$ and those of $\rho'$ in $T'_d$ is given by $[(k'+k'')\vee \ell]^{\underline{(k'+k'')\wedge \ell}}$, and each of them is of size $\ell\wedge(k'+k'')$. Note here again that for a fixed matching between the children of $\rho$ and $\rho'$, the weights of the matched subtrees are independent. We distinguish between several cases (to help understand these cases, the reader could keep Figure \ref{fig:NTMA:image_exemple_D_D'_D''} in mind):
	\begin{itemize}
		\item For a child $u$ of $\rho$ that is not on the path to $\rho'$, the corresponding subtrees are independent so that the corresponding weight is distributed as $\cW_{d-1}(\widetilde{T}_{d-1},\widetilde{T}'_{d-1})$ in the independent model $\dPl_{d}$.
		\item If the child of $\rho$ on the path to $\rho'$ is matched with a child of $\rho'$ that is not in the intersection tree, again the corresponding weight is similarly distributed.
		\item Finally, if the child $u$ of $\rho$ leading to $\rho'$ is matched to a child $u'$ of $\rho'$ in the intersection tree, setting the new root at $\widetilde{\rho}:=u$ in $T$ and at $\widetilde{\rho}':=u'$ in $T'$, the corresponding weight has the same distribution as $\cW_{d-1}(T_{d-1},T'_{d-1})$ in the model $\dPlsd_{d-1}$, still conditioned to $S_{ \widetilde{\rho},d-1}$. Indeed, there is a path from $\widetilde{\rho}$ to $\widetilde{\rho}'$, and the corresponding Poisson distributions are conserved. 
	\end{itemize} The induction hypothesis for case 3, together with Theorem \ref{NTMA:thm:lambda_close_to_1} for cases 1 and 2, therefore give us:
	
	\begin{equation*}
	\dP\left(X'_d\geq x\right) \leq \sum_{k,\ell} \dP\left(D'+D''=k,D=\ell\right)\min\left(1, (k \vee \ell)^{\underline{k \wedge \ell}} \, \dP\left(\cE_1 + \ldots + \cE_{k \wedge \ell}>\gamma x-(k \wedge \ell)c\right)\right),
	\end{equation*}where the $\cE_i$ are independent exponential random variables of parameter $1$. Assume, as in the proof of Theorem \ref{NTMA:thm:lambda_close_to_1}, that $d_0$ is chosen such that for all $d\geq d_0$, 
	$$
	p^{\lambda}_{d}=1-2\eps+O(\eps^2).
	$$
	With simple computations, we can then ensure that for some $\kappa>0$, noting $q_{d,\cdot}$ the distribution of $D$, one has
	$$
	q_{d,1} \leq 1-\eps+\kappa \eps^2,\quad q_{d,k} \leq \frac{(3\eps)^{k-1}}{(k-1)!} \leq \frac{(6\eps)^{k-1}}{k!},\; k \geq 2,
	$$ where we used $k \leq 2^{k-1}$ in the last step. By independence of $D'$ and $D''$, $D'+D''$ follows a $\Poi(\lambda (1-p_{d-1}))$ distribution, conditional on being positive. Noting $q'_{d,\cdot}$ this distribution, we have, as in the previous proof,
	$$
	q'_{d,1} \leq 1-\eps+\kappa \eps^2,\quad q'_{d,k} \leq \frac{(3\eps)^{k-1}}{k!},\; k \geq 2.
	$$
	We can then invoke Lemma \ref{NTMA:lemma:q_control} to conclude.
	Note that every control in the proof is made uniformly on $\delta \geq 1$.
\end{proof}

\subsection{Proof of Lemma \ref{NTMA:lemma:q_control}}\label{NTMA:appendix:proof:lemma:q_control}
\begin{proof}[Proof of Lemma \ref{NTMA:lemma:q_control}]. Let
	\begin{flalign*}
	S_1  := e^{x-c} q_1^2 e^{-(\gamma x -c)_+} +4 q_1 q_2 e^{-(\gamma x-c)_+}, \quad
	S_2 := 2 e^{x-c} q_1 \sum_{\ell\geq 3} q_\ell \min\left( 1, \ell e^{-(\gamma x -c)_+}\right),\\
	S_3 := 2 e^{x-c} \sum_{2\leq k\leq \ell} q_k q_{\ell} \min\left(1,\ell^{\underline{k}} \, \dP\left(\cE_1 + \ldots + \cE_k > \gamma x-kc\right)\right).
	\end{flalign*} Our goal is to show that for a suitable choice of $c$, for all $x>c$, $S_1+S_2+S_3\leq 1$. One has
	\begin{equation}
	\label{s1_eq}
	S_1 \leq e^{-r \eps x}\left((1-\eps+\kappa \eps^2)^2+2 C \eps \right) \leq e^{-r\eps x}(1+2 C \eps),
	\end{equation} and
	\begin{equation}
	\label{s2_eq}
	S_2\leq 2 e^{-r\eps  x}(1-\eps +\kappa \eps^2)\sum_{\ell \geq 3}\frac{(C\eps)^{\ell-1}}{(\ell-1)!} \leq 2 e^{-r\eps x}\left(e^{C\eps}-1-(C\eps) \right)\leq 2 e^{-r\eps x} C^2\eps^2.
	\end{equation} We let $k_0$ be such that $\gamma x \in[ k_0c,(k_0+1)c)$. We then upper-bound $S_3$ by $A+B$
	where
	\begin{flalign}
	A & = 2 e^{x-c} \sum_{k=2}^{k_0} q_k \sum_{\ell \geq k} q_{\ell} \frac{\ell!}{(\ell-k)!} \, \dP\left(\cE_1 + \ldots + \cE_k > \gamma x - kc\right), \\
	B & = 2 e^{x-c} \sum_{k \geq (k_0+1) \vee 2} \sum_{\ell\geq k} q_k q_{\ell}.
	\end{flalign} One readily has
	\begin{flalign}
	\label{s3_B_eq}
	B & \leq 2 e^{-r\eps  x} e^{\gamma x-c} \sum_{k\geq (k_0+1)\vee 2} \frac{(C\eps)^{k-1}}{k!} \sum_{\ell\geq k} \frac{(C\eps)^{\ell-1}}{\ell!}\\
	&\leq 2 e^{-r\eps x} e^{\gamma x -c}\sum_{k\geq (k_0+1)\vee 2}\frac{(C\eps)^{2(k-1)}}{k!}\\
	&\leq 2 e^{-r\eps x} e^{k_0 c} (C\eps)^{2\left((k_0+1)\vee 2 -1\right)}\\
	&\leq 2 e^{-r\eps x} \left( C\eps e^{c} \right)^2,
	\end{flalign} where in the last steps we assumed that $C\eps e^{c}<1$, so that $$e^{k_0 c} (C\eps)^{2\left((k_0+1)\vee 2 -1\right)} \leq (C\eps e^c)^{2\left((k_0+1)\vee 2 -1\right)} \leq \left(C\eps e^{c}\right)^2.$$
	
	Note that for $y\geq 0$, $\dP\left(\cE_1 + \ldots + \cE_k > y\right)=\dP(\Poi(y)<k)=e^{-y} \sum_{j=0}^{k-1}y^j/j!$. Write then
	\begin{flalign}
	\label{s3_A_eq}
	A & \leq 2 e^{x-c} \sum_{k=2}^{k_0} \frac{(C\eps)^{2(k-1)}}{k!} \sum_{j=0}^{k-1} e^{-\gamma x+k c} \frac{(\gamma x)^j}{j!} \nonumber \\
	&\leq 2 e^{-r\eps x} \sum_{k=2}^{k_0} \frac{(C^2 e^c)^{k}}{k!} \sum_{j=0}^{k-1}\frac{\left(\gamma x\eps^2\right)^j}{j!}\eps^{2(k-1-j)} \nonumber \\
	&\leq 2 e^{-r\eps x}\sum_{k=2}^{k_0}\frac{(C^2 e^c)^{k}}{k!}\left[\eps^{2(k-1)}+e^{\gamma x\eps^2}-1\right] \nonumber \\
	&\leq 2 e^{-r\eps x}\left[\eps^{-2}\left(e^{\eps^2 C^2e^c}-1-\eps^2 C^2 e^c \right)  +  e^{C^2 e^c}\left( e^{\gamma x\eps^2}-1  \right)\right]
	\end{flalign}
	Summing the upper bounds \eqref{s1_eq}-\eqref{s3_A_eq}, the desired property will then hold if for all $x>c$, one has:
	\begin{equation}
	\label{final_eq}
	e^{-r\eps x}\left[1+2C\eps+ 2C^2\eps^2+2 [C\eps e^{c}]^2+2\eps^{-2}\left(e^{\eps^2 C^2e^c}-1-\eps^2 C^2 e^c \right)\right] + 2 e^{-r\eps x} e^{C^2 e^c}\left( e^{\gamma x\eps^2}-1  \right)\leq 1.
	\end{equation}
	
	The first term is, for any fixed $c$, and for sufficiently small $\eps$, upper bounded by $$e^{-r\eps x}\left(1+(2C+1) \eps\right).$$
	
	We now distinguish three cases for $x$.
	
	\proofstep{Case 1:} $x\in [c,1/\sqrt{\eps}]$. The second term is then $O(\eps\sqrt{\eps})$. Provided $r c>2C+1$, since $e^{-r\eps x}\leq e^{-r\eps c}=1-r\eps c+O(\eps^2)$, the left-hand side of \eqref{final_eq} is then upper-bounded by $1-(rc-2C-1)\eps+O(\eps \sqrt{\eps})$, and is thus less than $1$.\\
	
	\proofstep{Case 2:} $x\in [1/\sqrt{\eps},1/\eps]$. Since $e^{-r\eps x}\leq e^{-r\sqrt{\eps}}=1-\Omega(\sqrt{\eps})$, and $e^{\gamma x\eps^2}-1\leq e^{\gamma \eps}-1=O(\eps)$, the left-hand side of \eqref{final_eq} is upper-bounded by $1-\Omega(\sqrt{\eps})$ and is thus less than 1.\\
	
	\proofstep{Case 3:} $x\geq 1/\eps$. 
	The first term is then bounded by $e^{-r}(1+ (2C+1)\eps)$, which is less than $1-\Omega(1)$ for $\eps$ small enough. Letting $y=\eps x$, the second term reads
		$$
		e^{-ry}[e^{\eps \gamma y}-1](2 e^{C^2 e^c}).
		$$
		For small $\eps$, this function is maximized for $y=1/r +O(\eps)$, at which point it evaluates to $O(\eps)$. Thus the left-hand side of \eqref{final_eq} is upper-bounded by $1-\Omega(1)$ in that range.\\

	We have thus shown that for any $r>0$, provided $c> (2C+1)/r$, then for all sufficiently small $\eps$, the desired property holds with $\gamma=1+r\eps$.
\end{proof} 

\addtocontents{toc}{\protect\setcounter{tocdepth}{2}}
\section{\label{appendix_proof_lemmas_sec_2}Detailed proofs for Section \ref{NTMA:section:sparse_graph_alignment}}
\addtocontents{toc}{\protect\setcounter{tocdepth}{0}}
The following  proofs are adapted from the previous work of \cite{Massoulie14} and \cite{Bordenave15}.
\subsection{Proof of Lemma \ref{NTMA:lemma:control_S}}
\begin{proof}[Proof of Lemma \ref{NTMA:lemma:control_S}]
	Fix $K>0$ to be specified later and $\gamma>0$. Fix $u \in [n]$, and define 
	\begin{equation*}
	T := \inf \left\lbrace t \leq d, \left|\cS_{G}(u,t)\right| \geq K \log n \right\rbrace.
	\end{equation*} If $T=\infty$, there is nothing to prove. Given $\left| \cS_{G}(u,T-1) \right|$, $$\left| \cS_{G}(u,T) \right| \sim \Bin(n-\left| \cS_{G}(u,0) \right|-\ldots-\left| \cS_{G}(u,T-1) \right|,1-(1-{\lambda}/{n})^{\left| \cS_{G}(u,T-1) \right|}).$$ Thus
	$$ \left| \cS_{G}(u,T) \right|  \overset{\mathrm{sto.}}{\leq} \Bin\left(n, \lambda K {(\log n)}/{n}\right).$$
	Using Bennett's inequality, for $K'>\lambda K$:
	\begin{equation*}
	\dP \left(\left| \cS_{G}(u,T) \right|  \geq K' \log n \right) \leq \exp\left[- \lambda K h\left(\frac{K'-\lambda K}{\lambda K}\right)\log n \right],
	\end{equation*} with $h(x)= (1+x) \log(1+x) -x$. This probability is $\leq n^{-2-\gamma}$ if $K'$ is large enough to verify $\lambda K h\left(\frac{K'-\lambda K}{\lambda K}\right) > \gamma +2$. With a simple use of the union bound, one gets that $\left|\cS_{G}(u,T) \right|  \in \left[K \log n, K' \log n\right]$ for all $u \in [n]$ with probability $1-O(n^{-1-\gamma})$.
	
	Fix $\eps > 0$ to be specified later. We then check by induction that with high probability, for all $T \leq t \leq d$,
	\begin{equation}
	\label{controle_produit_S}
	\left| \cS_{G}(u,t) \right|  \in \left[ K({\lambda}/{2})^{t-T}  \left(\log n\right) \prod_{s=T}^{t}\left(1-\eps \left({\lambda}/{2}\right)^{-({s-T})/{2}}\right) , K' \lambda^{t-T}  \left(\log n\right) \prod_{s=T}^{t}\left(1+\eps \lambda^{-({s-T})/{2}}\right)\right].
	\end{equation}The case $t=T$ is proved here above. We will next use the inequality
	\begin{equation}
	\label{ineq_lemma}
	\lambda x /(2n) \leq \lambda x /n - \lambda^2 x^2 /(2n^2) \leq 1-\left(1-\lambda/n\right)^x \leq \lambda x /n.
	\end{equation} that holds as soon as $\lambda x /n<1$.\\
	Assuming \eqref{controle_produit_S} holds up to $t$, inequality (\ref{ineq_lemma}) holds for $x = \left| \cS_{G}(u,t) \right| $ for $n$ large enough, since $\left| \cS_{G}(u,t) \right|  < n/\lambda$ for $c \log \lambda <1$. Thus for $n$ large enough $\dE\left| \cS_{G}(u,t+1) \right| $ lies in the interval 
	\begin{equation*}
	\bigg[K   \left({\lambda}/{2}\right)^{t-T}  \left(\log n\right) \underbrace{\prod_{s=T}^{t}\left(1-\eps \left({\lambda}/{2}\right)^{-({s-T})/{2}}\right)}_{=1-O(\eps)} , \lambda K' \lambda^{t-T}  \left(\log n\right) \prod_{s=T}^{t}\left(1+\eps \lambda^{-({s-T})/{2}}\right) \bigg] \, ,
	\end{equation*} with $\hat{\eps} > 0$ to be specified later, Bennett's inequality writes 
	\begin{multline*}
	\dP \left(\big|\left| \cS_{G}(u,t+1) \right| - \dE\left| \cS_{G}(u,t+1) \right|\big|\geq \hat{\eps} \dE\left| \cS_{G}(u,t+1) \right|\right)\\
	 \leq 2 \exp\left[-\left(\lambda/{2}\right)^{t-T+1}  \log n \left(1-O\left(\eps\right) \right) h\left(\hat{\eps}\right) \right],
	\end{multline*}
	which is $\leq n^{-2-\gamma}$ if $K \left({\lambda}/{2}\right)^{t+1-T} h(\hat{\eps})>2+\gamma$. Since for $u \to 0$, $h(u)=u^2 / 2 + o(u^2)$, it suffices to take $\hat{\eps}=\eps \left({\lambda}/{2}\right)^{-({t+1-T})/{2}}$ with $\eps$ small enough and $K$ large enough such that $K \eps >2+\gamma$. Thus \eqref{controle_produit_S} holds for $t+1$ with probability $1-O(n^{-2-\gamma})$. \\
	
	All this ensures that the desired inequality \eqref{control_S_eq} holds for all $u \in [n]$, $t \in [d]$ with probability $1-O(n^{-\gamma})$.
\end{proof}

\subsection{Proof of Lemma \ref{NTMA:lemma:cycles_ER}}
\begin{proof}[Proof of Lemma \ref{NTMA:lemma:cycles_ER}]
	Fix $u \in [n]$. Define
	$$k^* := \inf \lbrace t \leq d, \; \cB_G(u,t) \mbox{ contains a cycle}\rbrace .$$
	Note that $k^* \geq 2$, and that if $k^*=\infty$ then $\cB_G(u,d)$ does not contain any cycle. Now assume that $k^*<\infty$. For any $k \geq 2$, $k^*=k$ if and only if there are two vertices of $\cS_G(u,k-1)$ that are connected, or if there is a vertex of $\cS_G(u,k)$ connected to two vertices of $\cS_G(u,k-1)$. On the event 
	$$\mathcal{A}:=\underset{t \leq d}{\bigcap} \left\lbrace \left|\cS_G(u,t)\right|< C (\log n) \lambda^t \right\rbrace, $$
	this happens with probability at most $$ \left|\cS_G(u,k-1) \right|^2 \times \frac{\lambda}{n}  + \left|\cS_G(u,k) \right| \times \left|\cS_G(u,k-1) \right|^2 \times \frac{\lambda^2}{n^2} \leq C^2 \frac{(\log n)^2 \lambda^{2k}}{n} + C^3 \frac{(\log n)^3 \lambda^{3k}}{n^2}.$$
	
	Taking $\eps>0$ such that $c \log \lambda \leq 1/2-\eps$, choosing $C$ such that $\dP\left(\cA \right)=1-O\left( n^{-2 \eps} \right)$ with Lemma \ref{NTMA:lemma:control_S}, the probability that $\cB_G(u,d)$ contains a cycle is less than
	\begin{flalign*}
	\dP\left( k^* < \infty \right) & \leq  \dP\left( \bar{\cA} \right) +  \sum_{k = 2}^{d} \dP\left( k^* =k \, | \,  \cA \right)\\
	& \leq  O\left( n^{-2 \eps} \right) + O\left( \frac{(\log n)^2 \lambda^{2d}}{n} \right) + O\left( \frac{(\log n)^3 \lambda^{3d}}{n^2} \right) \\
	& \leq  O\left( n^{-2 \eps} \right)+O\left( (\log n)^2 n^{-2 \eps} \right)+O\left( (\log n)^3 n^{-3 \eps} \right) \leq O(n^{-\eps}).
	\end{flalign*}
\end{proof}

\subsection{Proof of Lemma \ref{NTMA:lemma:indep_neighborhoods}}
\begin{proof}[Proof of Lemma \ref{NTMA:lemma:indep_neighborhoods}]
	For fixed $u \neq v \in [n]$, let $\left(\tilde{\cS}(u,t)\right)_{t \leq d}$ and $\left(\tilde{\cS}(v,t)\right)_{t \leq d}$ denote two independent realizations of the neighborhoods (i.e. with independent underlying Bernoulli variables). We then construct recursively a coupling $\left(\cS(u,t),\cS(v,t)\right)_{t \leq k} $:
	\begin{itemize}
		\item For $k=1$, take $\cS(u,t)$ to be a set of vertices uniformly chosen among sets of $[n]$ of size $\left| \tilde{\cS}(u,0)\right|$. Independently, take $\cS(v,t)$ to be a set of vertices uniformly chosen among sets of $[n]$ of size $\left| \tilde{\cS}(v,0)\right|$.
		
		\item Now if $k>1$, construct $\cS(u,k)$ as follows: select a subset of $[n] \setminus \left(\underset{s \leq k-1}{\bigcup} \cS(u,s)\right)$ of size $ \left|\tilde{\cS}(u,k)\right| $ uniformly at random. Then we construct independently $\cS(v,k)$ taking a uniform subset of $[n] \setminus \left(\underset{s \leq k-1}{\bigcup} \cS(v,s)\right)$ of size $ \left|\tilde{\cS}(v,k)\right| $.
	\end{itemize}
	
	This coupling is well defined, and coincides with the independent setting up to step $k$ as long as the sets $\underset{s \leq k}{\bigcup} \cS(u,s)$ and $\underset{s \leq k}{\bigcup} \cS(v,s)$ do not intersect. On the event $$\mathcal{A}:=\underset{t \leq d}{\bigcap} \left\lbrace \left|\cS(u,t)\right|,\left|\cS(v,t)\right| < C (\log n) \lambda^t \right\rbrace, $$ one has
	\begin{flalign*}
	\dE\left[\left|\underset{k \leq d}{\bigcup} \cS(u,s) \cap \underset{k \leq d}{\bigcup} \cS(v,s)\right|\right] & \leq 
	\dE\left[\sum_{k=1}^{d} \Bin\left(C (\log n) \lambda^k, \frac{\sum_{t=1}^{k} C (\log n) \lambda^t}{n-\sum_{t=1}^{k} C (\log n) \lambda^t} \right) \right] \\
	& \leq C^2 (\log n)^2 \left(\frac{\lambda}{\lambda-1}\right) \sum_{k=1}^{d} \frac{\lambda^{2k} }{n-\frac{\lambda}{\lambda-1} C (\log n)\lambda^k}\\
	& \leq O \left((\log n)^2 \lambda^{2d}/n\right)
	\end{flalign*}if $(\log n) \lambda^d = o(n)$, which is the case if $c \log \lambda <1$. The expectation is upper-bounded by $O \left((\log n)^2 \lambda^{2d}/n\right) = O\left((\log n)^2 n^{-2\eps}\right)$ if $c \log \lambda \leq 1/2 -  \eps$.\\
	
	With Lemma \ref{NTMA:lemma:control_S}, choosing $C$ such that $\dP\left(\cA \right)=1-O\left( n^{-2 \eps} \right)$, we get
	\begin{multline*}
	\DTV\left(\mathcal{L} \left(\left(\cS_{G}(u,t),\cS_{G}(v,t)\right)_{t \leq d}\right),\mathcal{L} \left(\left(\cS_{G}(u,t)\right)_{t \leq d}\right) \otimes \mathcal{L} \left(\left(\cS_{G}(v,t)\right)_{t \leq d}\right)\right) \\ \leq O((\log n)^2 n^{-2 \eps}) + \dP\left(\bar{\mathcal{A}}\right) \leq O(n^{- \eps}).
	\end{multline*}
\end{proof}

\subsection{Proof of Lemma \ref{lemma:NTMA:coupling_GW}}
\begin{proof}
We work here conditionally on
$$\mathcal{A}:=\underset{t \leq d}{\bigcap} \left\lbrace \left|\cS_G(u,t)\right|< C (\log n) \lambda^t \right\rbrace.$$
Let's define a Galton-Watson process as follows: set $Z_0=1$, and for $t > 0$, $\cL\left( Z_{t} | \cG_{t-1} \right)=\Poi \left( \lambda Z_{t-1}\right)$, where $\cG_t=\sigma\left(Z_s,s \leq t \right)$. 
Fix $t>0$. Conditionally on $\cF_{t-1}:=\sigma\left(\left|\cS_G(u,s)\right|, s \leq t-1 \right)$, define a random variable $W_t$ with distribution $\Poi \left( \lambda \left|\cS_G(u,t-1)\right|\right)$. Note that
$$\cL\left( \left|\cS_G(u,t)\right| \big| \cF_{u-1} \right) = \Bin ( n-\left|\cS_G(u,0)\right|-\ldots-\left|\cS_G(u,t-1)\right|, \; 1-\left(1-{\lambda}/{n} \right)^{\left|\cS_G(u,t-1)\right|} ).$$
The Stein-Chen method (see e.g. \cite{Barbour05}) enables to bound  $\DTV \left(\Bin(n,\lambda/n),\Poi(\lambda)\right)$ by $\min(1,\lambda^{-1})\lambda^2/n \leq \lambda/n$. We also use the classical bound $\DTV \left(\Poi(\lambda),\Poi(\lambda')\right)\leq \left| \lambda-\lambda' \right|$ together with inequality \eqref{ineq_lemma} (which holds for $n$ large enough since $c \log \lambda <1)$ to obtain that conditionally on $\cF_{t-1}$:
\begin{multline*}
\DTV\left(\left|\cS_G(u,t)\right|,W_t\right) \leq n^{-1}  \left(n-\left|\cS_G(u,0)\right|-\ldots-\left|\cS_G(u,t-1)\right|\right) \frac{\lambda \left|\cS_G(u,t-1) \right|}{n} \\
+ \left| \left(n-\left|\cS_G(u,0)\right|-\ldots-\left|\cS_G(u,t-1)\right|\right) \left( 1-\left(1-{\lambda}/{n} \right)^{\left|\cS_G(u,t-1)\right|} \right) - \lambda \left|\cS_G(u,t-1) \right| \right| \\
 \leq \frac{\lambda \left|\cS_G(u,t-1) \right|}{n} + \lambda \left|\cS_G(u,t-1) \right| - \left(n-\left|\cS_G(u,0)\right|-\ldots-\left|\cS_G(u,t-1)\right|\right) \frac{\lambda \left|\cS_G(u,t-1) \right|}{n}  \\ + \frac{\lambda^2 \left|\cS_G(u,t-1) \right|^2}{2n} \, .
\end{multline*}
Now, for $\eps>0$ such that $c \log \lambda \leq 1/2 - \eps$, on the event $\cA$, all variables $\left|\cS_G(u,s) \right|$ are bounded by $C (\log n) n^{1/2 - \eps}$. This leads to
\begin{flalign*}
\DTV\left(\left|\cS_G(u,t)\right|,W_t\right) & \leq O\left( (\log n)n^{-1/2-\eps} \right) + O\left( (\log n)^3 n^{-2\eps} \right) + O\left( (\log n)^2 n^{-2\eps} \right) \\& = O\left( (\log n)^3 n^{-2\eps} \right).
\end{flalign*}
This proves by induction that the total variation distance between $\left( \left|\cS_G(u,t)\right| \right)_{t \leq d}$ and $\left( Z_t \right)_{t \leq d}$ is bounded by $O\left( (\log n)^4 n^{-2\eps} \right) = O\left(n^{-\eps}\right)$, taking $C$ large enough in Lemma \ref{NTMA:lemma:control_S} so that $\dP\left(\cA \right) \geq 1 - O\left(n^{-2\eps}\right)$.
\end{proof}

\addtocontents{toc}{\protect\setcounter{tocdepth}{2}}
\end{subappendices}

\chapter{Detecting correlation in trees}\label{chapter:MPAlign}
Following the way paved in Chapter \ref{chapter:NTMA}, motivated by alignment of correlated sparse random graphs, we are now studying the hypothesis testing problem of deciding whether or not two random trees are correlated more in detail. We obtain conditions under which this testing is impossible or feasible.

We propose \alg{MPAlign}, a message-passing algorithm for graph alignment inspired by the tree correlation detection problem. We prove \alg{MPAlign} to succeed in polynomial time at partial alignment whenever tree detection is feasible. As a result, our analysis of correlation detection in trees reveals new ranges of parameters for which partial alignment of sparse random graphs is feasible in polynomial time.

We then conjecture that graph alignment is not feasible in polynomial time when the associated tree detection problem is impossible. If true, this conjecture together with our sufficient conditions on tree detection impossibility would imply the existence of a hard phase for graph alignment, i.e. a parameter range where alignment cannot be done in polynomial time even though it is known to be feasible in non-polynomial time.\\


This chapter is based on the paper \textit{Correlation detection in trees for partial graph alignment} \cite{GMLTrees2021journal} (submitted), a joint work with M. Lelarge and L. Massoulié. A short version of this work, \textit{Correlation Detection in Trees for Planted Graph Alignment}, \cite{GMLTrees2021ITCS} is published at \emph{ITCS 2021}.

\section{Introduction}\label{MPAlign:section:intro}

We refer to Section \ref{intro:section:ga} for a presentation of the graph alignment, so as not to repeat ourselves. 

As done in Chapter \ref{chapter:NTMA}, we do not recall here the definition of the correlated \ER model, already introduced in the introduction (see \eqref{eq:CER_model}), and specified in Section \ref{impossibility:section:introduction} of Chapter \ref{chapter:impossibility} in the sparse case. We only recall that the parameters of $\G(n,\lambda/n,s)$ are the number of nodes $n$, the mean degree $\lambda >0$ and the correlation parameter $s \in [0,1]$. The vertices of the second graph $G'$ are relabeled with a uniform independent permutation $\pi^{\star} \in \cS_n$, and we observe $G$ and $H := G'^{\pi^{\star}}$.

\begin{figure}[h]
     \centering
     \begin{subfigure}[b]{\textwidth}
         \centering
         \includegraphics[scale=0.8]{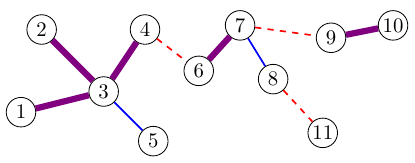}
         \caption{Union graph $(G,G')$}
         \label{fig:G_union}
     \end{subfigure}
     \hfill
     \begin{subfigure}[b]{\textwidth}
         \vspace{0.3cm}
         \centering
         \includegraphics[scale=0.8]{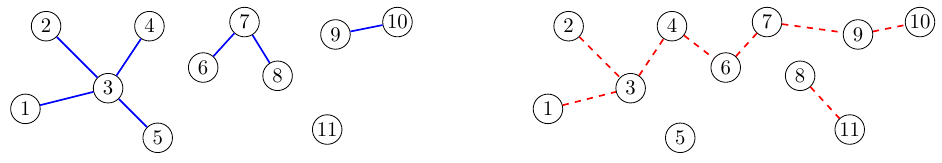}
         \caption{Graphs $G,G'$ in separate views}
         \label{fig:GGtilde}
     \end{subfigure}
     \hfill
     \begin{subfigure}[b]{\textwidth}
         \centering
         \vspace{0.1cm}
         \includegraphics[scale=0.8]{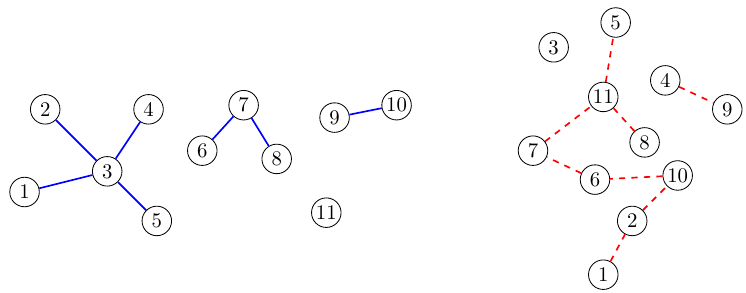}
         \caption{Graphs ${G}, H$}
         \label{fig:GH}
     \end{subfigure}
     
    \caption{A sample from model $\G(n,\lambda/n,s)$ with $n=11$, $\lambda=1.9$, $s=0.7$ (for the sake of readability, the two-colored edges are drawn thick and purple).}
    \label{fig:unplanted_pb}
\end{figure}

The previous model is used to study planted graph alignment -- the mean-case version of graph alignment -- consisting in finding an estimator $\hat{\pi}$ of the planted solution $\pi^{\star}$ upon observing $G$ and $H$. As stated earlier, for any subset $\cC \subset [n]$, the performance of any one-to-one estimator $\hat{\pi}: \cC \to [n]$ is now assessed through $\ov(\pi^{\star},\hat{\pi})$, its {\em overlap} with the unknown permutation $\pi^{\star}$, defined as
\begin{equation}\label{eq:MPAlign:def_overlap}
    \ov(\pi^{\star},\hat{\pi}):=\frac{1}{n}\sum_{u\in \cC}\one_{\hat{\pi}(u)=\pi^{\star}(u)}.
\end{equation} Note that the estimator $\hat{\pi}$ may not be in $\cS_n$, and only consist in a partial matching. The \emph{error fraction} of $\hat{\pi}$ with the unknown permutation $\pi^{\star}$ is defined as 
\begin{equation}\label{eq:MPAlign:def_error}
    \err(\pi^{\star},\hat{\pi}):=\frac{1}{n}\sum_{u\in \cC}\one_{\hat{\pi}(u) \neq \pi^{\star}(u)} = \frac{\card{\cC}}{n} - \ov(\pi^{\star},\hat{\pi}).
\end{equation} 

We recall that sequence of injective estimators $\{\hat{\pi}_n\}_n$ -- omitting the dependence in $n$ -- is said to achieve
\begin{itemize}
    \item \emph{Exact recovery} if $\; \dP(\hat{\pi}=\pi^{\star}) \underset{n \to \infty}{\longrightarrow} 1$,
    \item \emph{Almost exact recovery} if $\; \dP(\ov(\pi^{\star},\hat{\pi})= 1-o(1)) \underset{n \to \infty}{\longrightarrow} 1$,
    \item \emph{Partial recovery} if there exists some $\eps>0$ such that $\; \dP(\ov(\pi^{\star},\hat{\pi}) > \eps) \underset{n \to \infty}{\longrightarrow} 1$,
    \item \emph{One-sided partial recovery} if it achieves partial recovery and $ \dP(\err(\pi^{\star},\hat{\pi}) = o(1)) \underset{n \to \infty}{\longrightarrow} 1$.
\end{itemize}

\begin{remark}\label{remark:MPAlign:one_sided_partial}
One-sided partial recovery is by definition at least as hard as partial recovery. 
As already stated in the introduction, from an application standpoint it is more appealing than partial recovery: indeed, it may be of little use to know one has a permutation with 30\% of correctly matched nodes if one does not have a clue about which pairs are correctly matched. Our proposed algorithm will achieve one-sided partial recovery under suitable conditions.
\end{remark} 

\subsubsection{Phase diagram} 
In the studied sparse regime where the graphs have constant mean degree $\lambda$, it is known \cite{Cullina2017,Cullina18,ganassali2021impossibility} that the presence of $\Omega(n)$ isolated vertices in the underlying intersection graph of $G$ and $H$ makes exact and almost exact recovery impossible.
The main questions consist then in determining the phase diagram of the model $\G(n,\lambda/n,s)$ for partial alignment (or recovery), which we here recall the definition. We are interested in the range of parameters $(\lambda,s)$ for which, in the large $n$ limit:
\begin{itemize}
\item Any sequence of estimators fails to achieve partial recovery for any $\eps>0$. We refer to the corresponding range as the \emph{impossible phase};
\item There is a sequence of estimators $\hat{\pi}$ achieving partial recovery (not necessarily one-sided) with some $\eps>0$, which we refer to as the \emph{IT-feasible phase};
\item There is a sequence of estimators $\hat{\pi}$ \emph{that can be computed in polynomial-time} achieving  partial recovery with some $\eps>0$ (and sometimes even more, achieving also one-sided partial recovery): the \emph{easy phase}.
\end{itemize}

An interesting perspective on this problem is provided by research on community detection, or graph clustering, for random graphs drawn according to the stochastic block model. In that setup, above the so-called Kesten-Stigum threshold, polynomial-time algorithms for clustering are known \cite{BLM18,SpectralRedemption13, Mossel2016}, and the consensus among researchers in the field is that no polynomial-time algorithms exist below that threshold. Yet, there is a range of parameters with non-empty interior below the Kesten-Stigum threshold for which exponential-time algorithms are known to succeed at clustering \cite{Banks2016InformationtheoreticTF}. In other words, for graph clustering, it is believed that there is a non-empty \emph{hard phase}, consisting of the set difference between the IT-feasible phase and the polynomial-time feasible phase.

The picture available to date\footnote{at the time of this contribution.} for partial graph alignment is as follows. Work presented in Chapter \ref{chapter:impossibility} \cite{ganassali2021impossibility} shows that the IT-impossible phase includes the range of parameters $\{(\lambda,s): \lambda s\leq 1\}$, and Wu et al. \cite{Wu2021SettlingTS} have established that the IT-feasible phase includes the range of parameters $\{(\lambda,s): \lambda s> 4\}$ (condition $\lambda s>C$ for some large $C$ had previously been established in \cite{Hall20}). For the easy phase, we established in Chapter \ref{chapter:NTMA} \cite{Ganassali20a}  that it includes the range of parameters $\{(\lambda,s):\lambda\in[1,\lambda_0],s\in[s(\lambda),1]\}$ for some parameter $\lambda_0>1$ and some function $s(\lambda):(1,\lambda_0]\to[0,1]$. The \alg{NTMA} algorithm proposed in Chapter \ref{chapter:NTMA} based on tree matching weights achieves in this regime one-sided partial recovery. Figure \ref{fig:phase_diagram} depicts a phase diagram describing these prior results together with the new results of this chapter.

\begin{figure}[h]
    \centering 
    \hspace{-1.2cm}
    \includegraphics[scale=0.7]{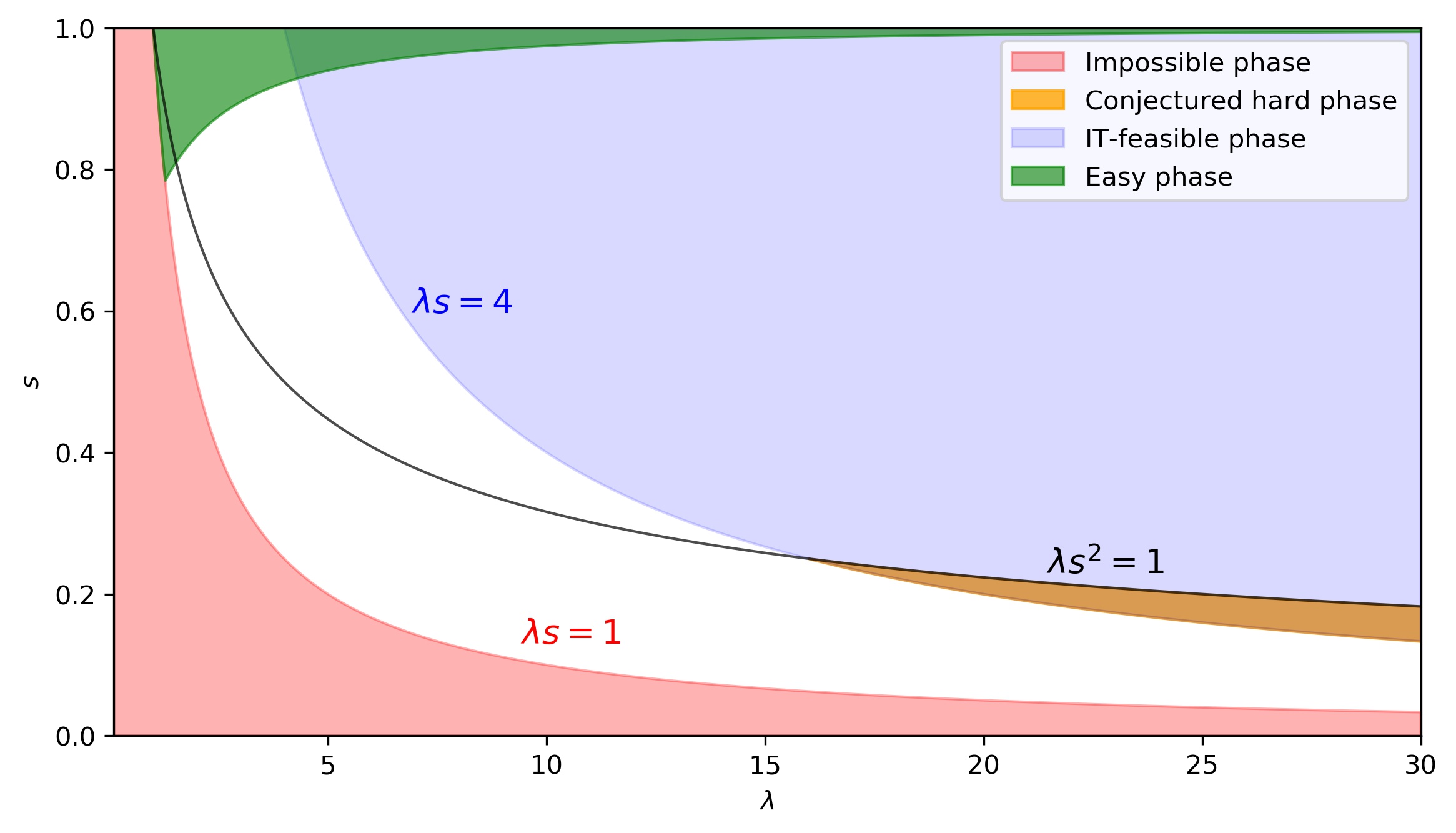}
    \caption{Diagram of the $(\lambda,s)$ regions where partial recovery is known\protect\footnotemark\, to be IT-impossible (\cite{ganassali2021impossibility}), IT-feasible (\cite{Wu2021SettlingTS}), or easy (\cite{Ganassali20a} and this chapter). In the orange region, though partial graph alignment is IT-feasible, one-sided detectability is impossible in the tree correlation detection problem, and partial graph alignment is conjectured to be hard (this chapter).}
    \label{fig:phase_diagram}
\end{figure}
\footnotetext{at the time of this contribution.}

\subsubsection{Problem description and main contributions} 

This partial picture leaves open the question of whether, similarly to the case of graph clustering, graph alignment features a hard phase or not. The contribution of the present work can be summarized in three points:
\begin{itemize}
    \item[$(1)$] We investigate a fundamental statistical problem, which to the best of our knowledge had not been previously studied: hypothesis testing for correlation detection in trees. We study the regimes in which the optimal test on trees succeeds or fails in the setting when the trees are correlated Galton-Watson trees (see Theorem \ref{MPAlign:theorem:main_result_TREES});
    
    \item[$(2)$] For this detection problem on trees, the computation of the likelihood ratio can be made recursively on the depth, which yields an optimal message-passing algorithm for this task running in polynomial-time in the number of nodes;
    
    \item[$(3)$] We remark that the previous detection problem on trees arises naturally from a local point of view in the related problem of one-sided partial recovery for graph alignment. In light of the previous analysis we then draw conclusions for our initial problem on graphs and doing so we precise the phase diagram shown in Figure \ref{fig:phase_diagram}, extending the regime for which one-sided partial alignment is provably feasible in polynomial time, and exhibiting the presence of a conjectured hard phase (see Theorem \ref{MPAlign:theorem:main_result_GRAPHS}).
\end{itemize}

Our approach to point $(3)$ follows the way paved in Chapter \ref{chapter:NTMA}. It essentially relies on an algorithm which lets $\hat{\pi}(u)=u'$ for $u$ such that the local structure of graph $G$ in the neighborhood of node $u$ is 'close' to the local structure of graph $H$ in the neighborhood of node $u'$. As exploited in Chapter \ref{chapter:NTMA}, the neighborhoods to distance $d$ of two nodes $u,u'$ in $G$ and $H$, provided that $u'=\pi^{\star}(u)$, are asymptotically distributed as correlated Galton-Watson branching trees (denoted $\dPls_d$ hereafter). On the other hand, for pairs of nodes $(u,u')$ taken at random in $[n]$, the joint neighborhoods of nodes $u$ and $u'$ in $G$ and $H$ respectively, up to depth $d$, are asymptotically distributed as a pair of independent Galton-Watson branching trees (distribution denoted $\dPl_d$).

Thus a fundamental step in our approach is to determine the efficiency of tests for deciding whether a pair of branching trees is drawn from either a product distribution, or a correlated distribution. \cite{Ganassali20a} relied on tests based on a so-called \emph{tree matching weight} to measure the similarity between two trees. In the present work we are instead interested in studying the existence of \emph{one-sided tests}, which are tests asymptotically guarantying a vanishing type I error and a non vanishing power. According to the Neyman-Pearson Lemma, optimal one-sided tests are based on the likelihood ratio $L_d$ of the distributions under the distinct hypotheses $\dPls_d$ and $\dPl_d$ (trees correlated or not)\footnote{This guarantees that whenever the test based on tree matching weight in Chapter \ref{chapter:NTMA} \cite{Ganassali20a} succeeds, the optimal test studied in this chapter also succeeds. On this point, Theorem \ref{MPAlign:theorem:suff_cond_KL} (see Section \ref{MPAlign:section:KL}) extends the sufficient conditions established in Chapter \ref{chapter:NTMA} for partial alignment (for small $\lambda$ and $s$ close to $1$).}. The mathematical formalization of point $(1)$ here above is the following
\begin{theorem}[Correlation detection in trees]\label{MPAlign:theorem:main_result_TREES}
Let 
\begin{equation*}
	\KL_d := \KL (\dPls_d\Vert\dPl_d)= \dE_{1,d} \left[ \log(L_d) \right].
\end{equation*}
Then the following propositions are equivalent:
\begin{itemize}
    \item[$(i)$] There exists a one-sided test for deciding $\dPl_d$ versus $\dPls_d$,
    \item[$(ii)$] $\underset{d \to \infty}{\lim} \KL_d = + \infty$ and $\lambda s >1$,
    \item[$(iii)$] There exists $(a_d)_d$ such that $a_d \to \infty$, $\dPl_d(L_d>a_d) \to 0$ and $\liminf_d \dPls_d(L_d>a_d) > 0$.
    \item[$(iv)$] The martingale $(L_d)_d$ (w.r.t. $\dPl_{\infty}$) is not uniformly integrable. 
    \item[$(v)$] $\lambda s>1$ and $\dPls_{\infty}\left(\liminf_{d\to\infty}(\lambda s)^{-d} \log L_d >0\right)\geq 1-\pext(\lambda s) $, where $\pext(\lambda s)$ is the probability that a Galton-Watson tree with  offspring distribution  $\Poi(\lambda s)$ gets extinct. 
\end{itemize}
\end{theorem}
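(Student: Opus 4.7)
The plan is to organize the five equivalences around the likelihood-ratio martingale $(L_d)_{d \geq 0}$ and to close the chain $(i) \Longleftrightarrow (iii) \Longleftrightarrow (iv) \Longleftrightarrow (ii)$ together with $(v) \Longleftrightarrow (iv)$. First I would prove $(i) \Longleftrightarrow (iii)$ via the Neyman-Pearson lemma: the optimal one-sided test for $\dPl_d$ versus $\dPls_d$ is a likelihood-ratio test $\one_{\{L_d > a_d\}}$, and the conditions $\dPl_d(L_d > a_d) \to 0$, $\liminf \dPls_d(L_d > a_d) > 0$ are precisely the definition of a one-sided test (vanishing type I error, positive power). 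Then I would prove $(iii) \Longleftrightarrow (iv)$ using general properties of non-negative martingales. Under $\dPl_\infty$, $(L_d)$ is a non-negative martingale with $\dE_{0,d}[L_d] = 1$, so if it is UI it converges in $L^1(\dPl_\infty)$ to some $L_\infty$ and $\dPls_\infty \ll \dPl_\infty$ with density $L_\infty$; then $(L_d)$ is tight under $\dPls_\infty$, ruling out $(iii)$. Conversely, if $(L_d)$ is not UI, a classical dichotomy for non-negative martingales implies $L_d \to \infty$ with strictly positive $\dPls_\infty$-probability, from which one can pick $a_d \to \infty$ slowly enough to realize $(iii)$.

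Second, I would link to $(ii)$. The direction UI $\Longrightarrow$ bounded $\KL_d$ follows from Fatou applied to $\log L_d$ combined with the uniform integrability of $L_d \log L_d$. The converse uses the entropy identity $\KL_d = \dE_{0,d}[L_d \log L_d]$ together with the de la Vallée-Poussin criterion, which gives UI of $(L_d)$ under $\dPl_\infty$ as soon as $\dE_{0,d}[L_d \log L_d]$ is bounded. Concerning the side condition $\lambda s > 1$: when $\lambda s \leq 1$, the intersection Galton-Watson tree $\tau^\star$ goes extinct a.s.\ under $\dPls_d$, so a coupling outside the finite intersection shows that the two distributions become asymptotically indistinguishable, $\KL_d$ stays bounded, and no one-sided test can exist; hence all of $(i)$--$(iv)$ fail, which is consistent with requiring $\lambda s > 1$ in $(ii)$.

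Third, I would tackle the most delicate equivalence with $(v)$. Assuming $\lambda s > 1$, the recursion (6.1.13) for $L_d$ can be lower-bounded by restricting the matching sum to the one $(\sigma^\star, \sigma'^\star)$ induced by the intersection tree $\tau^\star$ under $\dPls_\infty$. Using the explicit form $\psi(k,c,c') = e^{\lambda s} s^k \bar{s}^{c+c'-2k}/(\lambda^k k!)$, one obtains
\begin{equation*}
L_d(t,t') \;\geq\; C_{\lambda,s}^{|\tau^\star|} \prod_{u \in \tau^\star} L_{d-1}\bigl(t_{\sigma^\star(u)}, t'_{\sigma'^\star(u)}\bigr),
\end{equation*}
for an explicit constant $C_{\lambda,s}>0$. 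Iterating and taking logarithms, $\log L_d$ is lower-bounded by a branching random walk along $\tau^\star$, which is a Galton-Watson tree of mean offspring $\lambda s > 1$ under $\dPls_\infty$. A Kesten-Stigum type argument, together with an $X \log X$ moment control on the offspring, then yields that on the survival event $\{|\tau^\star|=\infty\}$, $(\lambda s)^{-d}\log L_d$ has a strictly positive liminf, which is exactly $(v)$. This directly implies $(iv)$ since it forces $L_d \to \infty$ on an event of $\dPls_\infty$-probability $1 - \pext(\lambda s) > 0$ while $L_d$ remains a.s.\ finite under $\dPl_\infty$, preventing uniform integrability.

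The main obstacle I anticipate is making the branching-process step of $(v)$ fully rigorous. The increments of the random walk induced by the recursion are not independent in an obvious way, since the same subtree weights enter the recursion at multiple depths via the matching structure; one needs either a martingale-based concentration along $\tau^\star$ conditionally on its realization, or a second-moment computation to rule out atypical cancellations. A further subtlety is to match the event $\{L_d \to \infty\}$ under $\dPls_\infty$ with the survival event of $\tau^\star$ of probability $1 - \pext(\lambda s)$: on extinction of $\tau^\star$ the intersection tree is finite and $L_d$ converges a.s.\ to a finite limit under $\dPls_\infty$, while on survival the Kesten-Stigum argument must give the claimed double-exponential growth. Closing this dichotomy cleanly, and checking that it correctly feeds back into $(iii)$ with the quantitative rate $1-\pext(\lambda s)$, is the most technical part of the argument.
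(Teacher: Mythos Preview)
Your chain $(i)\Leftrightarrow(iii)\Leftrightarrow(iv)$ via Neyman--Pearson and the martingale dichotomy, and the link between uniform integrability and boundedness of $\KL_d$, are essentially the paper's route. However there are two genuine gaps.

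\textbf{Main gap: the argument for $(v)$.} You try to obtain the liminf statement in $(v)$ from $\lambda s>1$ alone, by restricting the recursion to the true matching along $\tau^\star$ and invoking a Kesten--Stigum type argument. But iterating this bound down to depth $0$ yields $\log L_d \geq \sum_{u\in\cV_{d-1}(\tau^\star)}\log\bigl[\psi(c_u,c_u+\Delta_u,c_u+\Delta'_u)\,c_u!\bigr]$, and the expected increment $\dE\bigl[\log(\psi\cdot c!)\bigr]=\lambda s(\log(s/\lambda)+1)+2\lambda(1-s)\log(1-s)$ can be negative (for large $\lambda$ it behaves like $-\lambda s\log\lambda$). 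In that case your lower bound goes to $-\infty$ on survival, and Kesten--Stigum gives nothing. More fundamentally, $\lambda s>1$ does \emph{not} imply $(v)$: Theorem~\ref{MPAlign:theorem:suff_hard_phase} produces regimes with $\lambda s>1$ (namely $\lambda s^2<1$ and $\lambda$ large) where one-sided testability, hence $(v)$, fails. The paper proves $(ii)\Rightarrow(v)$ in Proposition~\ref{prop:suff_cond_KL} by a different bootstrap: it stops the recursion at an intermediate depth $m$ and uses the hypothesis $\KL_\infty=+\infty$ to make $\dE_{1,m}[\log L_m]$ arbitrarily large, which compensates the possibly negative $\psi$-drift. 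Without feeding $\KL_\infty=+\infty$ into the argument, $(v)$ simply cannot be reached.

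\textbf{Secondary gap: $(i)\Rightarrow\lambda s>1$.} Your direct claim that a.s.\ extinction of $\tau^\star$ makes the distributions ``asymptotically indistinguishable'' and forces $\KL_d$ bounded is not justified. For $\lambda s<1$ the paper does bound $\KL_d$ directly, via data processing: $\KL_d\leq\Ent(\GW^{(\lambda s)}_d)\leq \Ent(\pi_{\lambda s})/(1-\lambda s)$ (Corollary~\ref{MPAlign:cor:impossibility_KL}). But at the critical point $\lambda s=1$ this entropy diverges, and a.s.\ extinction of $\tau^\star$ (whose size then has heavy tails) does not by itself give bounded $\KL_d$. The paper closes this case by an unexpected detour through the graph problem: one-sided testability implies that \alg{MPAlign} achieves partial graph alignment (Section~\ref{MPAlign:section:graph_matching}), which is impossible for $\lambda s\leq 1$ by the result of Chapter~\ref{chapter:impossibility}, hence $(i)\Rightarrow\lambda s>1$.
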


\begin{remark}\label{MPAlign:remark:theorem1}
This Theorem gives general necessary and sufficient conditions for the existence of a one-sided test in the tree correlation detection problem. Several more explicit conditions in terms of $\lambda$ and $s$ will be obtained throughout the chapter which guarantee that the equivalent conditions of Theorem \ref{MPAlign:theorem:main_result_TREES} either fail or hold.
Condition $(v)$ will be used in the design of the algorithm in Section \ref{MPAlign:section:graph_matching}, choosing an appropriate threshold that will guarantee for the method to output both a substantial part of the underlying permutation and a vanishing number of mismatches. 
\end{remark}

The link between the problem on trees and sparse graph alignment is given in the following

\begin{theorem}[Consequences for one-sided partial graph alignment]\label{MPAlign:theorem:main_result_GRAPHS}
For given $(\lambda,s)$, if one-sided correlation detection is feasible, i.e. any of the conditions in Theorem \ref{MPAlign:theorem:main_result_TREES} holds, then one-sided partial alignment in the correlated \ER model $\G(n,\lambda/n,s)$ is achieved in polynomial time by our algorithm \alg{MPAlign} (Algorithm \ref{MPAlign:algo_GA}  in Section \ref{MPAlign:section:graph_matching}).
\end{theorem}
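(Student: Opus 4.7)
The plan is to let \alg{MPAlign} associate to each pair $(u,u') \in [n]^2$ the likelihood ratio computed on the depth-$d$ neighborhoods, $\Lambda(u,u'):=L_d(\cB_G(u,d),\cB_H(u',d))$, and declare $\hat\pi(u)=u'$ exactly when $\Lambda(u,u')>\beta_n$ and a suitable tie-breaking/consistency constraint holds. Thanks to the recursion \eqref{eq:intro:LR_rec}, each $\Lambda(u,u')$ can be evaluated by a message-passing pass over the two rooted neighborhoods in time polynomial in their sizes, so for $d = c\log n$ with $c$ small enough the full algorithm runs in polynomial time in $n$. To avoid false matches coming from the fact that a node $u' \neq \pi^\star(u)$ may still share a lot of graph structure with $\pi^\star(u)$, I would incorporate the "dangling trees" safeguard of Chapter \ref{chapter:NTMA}: only accept the pair $(u,u')$ if the high likelihood ratio can be witnessed via two disjoint pairs of root-children subtrees, which will be crucial in the mismatch analysis below.

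The next step is to port the local tree coupling from Chapter \ref{chapter:NTMA} to the likelihood-ratio statistic. Using the sparse-regime lemmas (Lemmas \ref{NTMA:lemma:control_S}--\ref{lemma:NTMA:coupling_GW}), for $d=c\log n$ with $c\log(\lambda(2-s))<1/2$ one couples, with total-variation error $O(n^{-\varepsilon})$: (i) for a good pair $(u,\pi^\star(u))$, the joint neighborhood with a sample from $\dPls_d$; (ii) for a random bad pair $(u,u')$ whose neighborhoods do not overlap, the joint neighborhood with a sample from $\dPl_d$. Cycles and close-by collisions are controlled by union bounds that cost only $O(n^{-\varepsilon})$. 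Once these couplings are in place, the analysis of \alg{MPAlign} reduces almost entirely to tail estimates on $L_d$ under the two tree models $\dPls_d$ and $\dPl_d$.

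The heart of the proof is then to calibrate $(d,\beta_n)$ using Theorem \ref{MPAlign:theorem:main_result_TREES}. For the false-positive side, $\dE_{0,d}[L_d]=1$ gives via Markov $\dPl_d(L_d>\beta_n)\leq 1/\beta_n$, so choosing $\beta_n=\omega(n)$ will make the expected number of accepted bad pairs $o(n)$ after the union bound over $O(n^2)$ candidates, provided we can reduce each accepted bad pair back to a genuinely independent tree pair — this is exactly the role of the dangling trees safeguard, which forces a rooted subtree disjoint from the overlap of $\cB_G(u,d)$ with the planted signal and hence distributed as $\dPl_{d-1}$. For the true-positive side, I would invoke condition $(v)$ of Theorem \ref{MPAlign:theorem:main_result_TREES}: it asserts that on the non-extinction event of the intersection tree (probability $1-\pext(\lambda s)>0$), $(\lambda s)^{-d}\log L_d$ is asymptotically bounded away from $0$ under $\dPls_\infty$. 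Choosing $d=c\log n$ with $c\log(\lambda s)>1$ then gives $L_d\geq n^{1+\Omega(1)}\gg \beta_n$ with probability bounded away from $0$, yielding $\Omega(n)$ correct matches via a second-moment argument that exploits near-independence of distant neighborhoods (Lemma \ref{NTMA:lemma:indep_neighborhoods}).

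The main obstacle I foresee lies in the interaction between the dangling-trees safeguard and the quantitative tail bound on $L_d$ under $\dPl_d$. Markov gives only a $1/\beta_n$ decay, which is on the borderline of what is needed; to obtain the bound $o(1/n)$ uniformly over the $\Theta(n^2)$ candidate pairs while simultaneously preserving the positive probability of exceeding $\beta_n$ under $\dPls_d$, one has to exploit the non-uniform integrability of $(L_d)$ (condition $(iv)$) more carefully than a naive Markov, or else use the consistency constraint in the algorithm to reduce the effective number of candidates per vertex $u$ from $n$ to $1$. Handling rigorously the pairs $(u,u')$ whose $2d$-neighborhoods intersect non-trivially — which cannot be coupled with $\dPl_d$ — via a case analysis similar to that at the end of Section \ref{NTMA:section:sparse_graph_alignment} (the four configurations of Figure \ref{fig:NTMA:parrallel_construction_bis}) is the technical step where the proof will concentrate most of its effort.
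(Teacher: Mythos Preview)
Your overall strategy is right, but there are two genuine calibration gaps that the paper resolves differently.

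\textbf{Threshold and depth.} You propose $d=c\log n$ with $c\log(\lambda s)>1$ so that $L_d\geq n^{1+\Omega(1)}$ under $\dPls_d$, together with $\beta_n=\omega(n)$ and a Markov bound $\dPl_d(L_d>\beta_n)\leq 1/\beta_n$. But the coupling lemmas you invoke require $c\log(\lambda(2-s))<1/2$ (and even $c\log\lambda<1/4$ for the mismatch side), which is incompatible with $c\log(\lambda s)>1$ in general. More importantly, you are under-reading condition $(v)$ of Theorem \ref{MPAlign:theorem:main_result_TREES}: it gives $\log L_d\gtrsim(\lambda s)^d$ on non-extinction, so with $d=c\log n$ one gets $\log L_d\gtrsim n^{c\log(\lambda s)}$, i.e.\ $L_d$ is super-polynomial. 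The paper therefore takes $\beta=\exp(n^\gamma)$ for some $\gamma\in(0,c\log(\lambda s))$; then $\beta^{-1/2}=\exp(-n^{\gamma}/2)$ annihilates any polynomial union bound, and the ``borderline'' issue you flag disappears entirely. The mismatch bound in the paper is not raw Markov but a Cauchy--Schwarz step: for a pair of disjoint trees one writes the true conditional law $\widetilde\dP_{d-1}$ as $M_{d-1}\,\dPl_{d-1}$ and bounds $\widetilde\dP_{d-1}(L_{d-1}>\beta)\leq \dE_0[M_{d-1}^2\one_{\cA}]^{1/2}\beta^{-1/2}$, with $\dE_0[M_{d-1}^2\one_{\cA}]=O(1)$.

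\textbf{Three dangling trees, not two.} You plan to reuse the two-dangling-trees safeguard of Chapter \ref{chapter:NTMA}. The paper deliberately upgrades this to \emph{three} distinct neighbors $v,w,x$ of $u$ and $v',w',x'$ of $u'$, requiring all three oriented likelihood ratios to exceed $\beta$. The point is a pigeonhole: conditionally on $\cB_{G_\cup}(u,2d)$ being acyclic, for any such triple of pairs at least one pair $(T_\ell,T'_\ell)$ consists of two genuinely disjoint trees. This single observation replaces the entire four-case analysis of Figure \ref{fig:NTMA:parrallel_construction_bis} and the shifted model $\dPlsd_d$ --- with two dangling trees you would still need an analogue of Theorem \ref{NTMA:thm:lambda_close_to_1_delta} for the likelihood ratio, which you have not provided. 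With three, the mismatch analysis reduces directly to the independent-tree bound above.
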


\begin{conj*}\label{conjecture:hard_phase}
We conjecture that if one-sided correlation detection in trees fails, i.e. none of the equivalent conditions in Theorem \ref{MPAlign:theorem:main_result_TREES} holds, then no polynomial-time algorithm achieves partial recovery. In view of Theorem \ref{MPAlign:theorem:suff_hard_phase} of Section \ref{MPAlign:section:hard_phase}, which guarantees existence of a non-empty parameter region where one-sided tree detection fails while partial graph alignment can be done in non-polynomial time, our conjecture would imply the \emph{hard phase} to be non-empty.
\end{conj*}

\subsubsection{Paper organization}
The outline of the chapter is as follows. We recall some notations and model of random trees and the in Section \ref{MPAlign:section:notations}. The derivation of the likelihood ratio between the relevant distributions is done in Section \ref{MPAlign:section:LR}, where points $(iii)$ and $(iv)$ of Theorem \ref{MPAlign:theorem:main_result_TREES} are proved (see \ref{proof_i_iii_iv_TH1}).  
In Section \ref{MPAlign:section:KL}, points $(ii)$ and $(v)$ of Theorem \ref{MPAlign:theorem:main_result_TREES} are proved (see Section \ref{proof_i_ii_TH1}) and a first sufficient condition for one-sided tree detectability (Theorem \ref{MPAlign:theorem:suff_cond_KL}) is obtained by analyzing Kullback-Leibler divergences: this condition is of the same kind as the one following from \cite{Ganassali20a} in Chapter \ref{chapter:NTMA}, however with a more direct derivation as well as a more explicit condition. Using a different approach, a second sufficient condition -- that of Theorem \ref{MPAlign:theorem:suff_cond_auto} -- is established in Section \ref{MPAlign:section:autos_GW} by analyzing the number of automorphisms of Galton-Watson trees. 

Next, we prove in Section \ref{MPAlign:section:hard_phase} another condition (see Theorem \ref{MPAlign:theorem:suff_hard_phase}) for the failure of one-sided detectability, hence showing that the conjectured hard phase is non-empty. The precise message-passing method for aligning graphs is introduced in Section \ref{MPAlign:section:graph_matching}, and guarantees on its output are established as well as the proof of Theorem \ref{MPAlign:theorem:main_result_GRAPHS}. 

Appendix \ref{MPAlign:appendix:numerical} is dedicated to numerical experiments as well as the description of the algorithm used in practice (\alg{MPAlign2}). Some additional proofs are deferred to Appendix \ref{MPAlign:appendix:additional_proofs}.

\section{Notations and problem statement}\label{MPAlign:section:notations}

\subsection{Notations}
In this first part we briefly introduce -- or recall -- some basic definitions that are used throughout the chapter.

\textit{Finite sets, permutations.}
For all $n>0$, we define $[n] := \left\lbrace 1, 2, \ldots, n \right\rbrace$. For any finite set $\mathcal{X}$, we denote by $\card{\mathcal{X}}$ its cardinal. $\cS_{\mathcal{X}}$ is the set of permutations on $\mathcal{X}$. We also denote $\cS_{k} = \cS_{[k]}$ for brevity, and we will often identify $\cS_{k}$ to $\cS_{\mathcal{X}}$ whenever $\card{\mathcal{X}}=k$. For any $0 \leq k \leq \ell$, we will write $\cS(k,\ell)$ (resp. $\cS(A,B)$) for the set of injective mappings from $[k]$ to $[\ell]$ (resp. between finite sets $A$ and $B$). By convention, $\card{\cS(0,\ell)}=1$.\\

\textit{Graphs.}
In a graph $G=(V,E)$ -- with node set $V$ and edge set $E$ -- we denote by $d_G(u)$ the degree of node $u$ in $G$ and $\cN_{G,d}(u)$ (resp. $\cS_{G,d}(u)$) the set of vertices at distance $\leq d$ (resp. exactly $d$) from node $u$ in $G$, $\cS_{G,d}(i)$. The \emph{neighborhood} of a node $u \in V$ is $\cN_G(u):= \cN_{G,1}(u)$, i.e. the set of all vertices that are connected to $u$ by an edge in $G$.\\

\textit{Labeled rooted trees.}
A \emph{labeled rooted tree} $t=(V,E)$ is an undirected graph with node set $V$ and edge set $E$ with no cycle. The \emph{root} of $t$ is a given distinguished node $\rho \in V$, and the \emph{depth} of a node is defined as its distance to the root $\rho$. The depth of tree $t$ is given as the maximum depth of all nodes in $t$. Each node $u$ at depth $d \geq 1$ has a unique \emph{parent} in $t$, which can be defined as the unique node at depth $d-1$ on the path from $u$ to the root $\rho$. Similarly, the \emph{children} of a node $u$ of depth $d$ are all the neighbors of $u$ at depth $d+1$. 

For any $u \in V$, we denote by $t_u$ the subtree of $t$ rooted at node $u$, and $c_t(u)$ the number of children of $u$ in $t$ -- or simply $c(u)$ where there is no ambiguity. Finally we define $\cV_d(t)$ (resp. $\cL_d(t)$) to be the set of nodes of $t$ at depth less than or equal to $d$ (resp. exactly $d$).\\

\textit{Canonical labeling.} A \emph{labeled rooted tree} can be canonically labeled by ordering nodes' children, giving the following labels. First, the label of the root node is set to the empty list $\varnothing$. Then, recursively, the label of a node $u$ is a list $\lbrace m,k \rbrace$ where $m$ is the label of its parent node, and $k$ is the rank of $u$ among the children of its parent.

We denote by $\cY_d$ the collection of such canonically labeled rooted  trees of depth no larger than $d$. Obviously, $\cY_0$ contains a single element, namely the rooted tree with only one node -- its root. Each tree $t$ in $\cY_d$ can be represented with a unique \emph{ordered list} $(t_1,\ldots,t_{c(\rho)})$ where each $t_u$ is the subtree of $t$ rooted at the $u-$th child of the root, and thus belongs to $\cY_{d-1}$. When $c(\rho)=0$, the previous ordered list is empty.\\


\textit{Tree subsampling.} For $s \in (0,1)$, a \emph{$s-$subsampling} of a tree $t$ is obtained by conserving every edge independently with probability $s$, and outputting the connected component of the root (which is still a tree). The nodes in the resulting tree inherit a canonical labeling from their order in the original tree.\\

\textit{Relabelings of trees.} 
A \emph{relabeling} $r(t)$ of a tree $t \in \cY_d$ is recursively identified as a permutation $\sigma\in\cS_{c(\rho)}$ of the children of the root node, together with relabelings $r_u(t_u)$ of its subtrees, resulting in tree $$r(t)=\left(r_{\sigma(1)}(t_{\sigma(1)}),\ldots,r_{\sigma(c(\rho))}(t_{\sigma(c(\rho))})\right).$$
A \emph{random uniform relabeling} $r(t)$ of a (un-)labeled tree $t$ of depth at most $d$ is defined as follows. Associate independently to each node $i$ of $t$ a permutation $\sigma_i$ of its children, uniformly distributed in $\cS_{c(i)}$. The relabeling is then defined by induction on the depth of nodes: the new label $r(\rho)$ of the root is $\varnothing$, and recursively, if the label of $u$ is $\lbrace m,k \rbrace$ and $v$ is the parent of $u$, we assign to $u$ the new label $$r(u) := \lbrace r(v),\sigma_v(k) \rbrace.$$
An important and easily verified property is that, for a given labeled tree $t \in \cY_d$, $r(t)$ is indeed uniformly distributed on the set of all possible relabelings of $t$. \\

\begin{figure}[h]
     \centering
     \begin{subfigure}[c]{0.39\textwidth}
         \centering
         \includegraphics[scale=0.8]{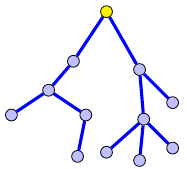}
         \caption{unlabeled rooted tree $t$}
         \label{fig:t_unlabeled}
     \end{subfigure}
 	\hfill
     \begin{subfigure}[c]{0.6\textwidth}
         \centering
         \includegraphics[scale=0.8]{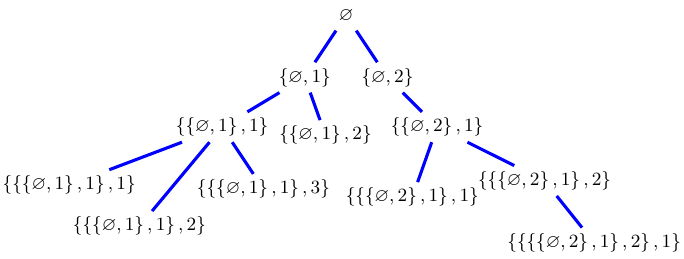}
         \caption{a random uniform relabeling of $t$}
         \label{fig:t_labeled}
     \end{subfigure}
     
    \caption{A rooted tree $t \in \cY_d$ with $n=4$ (the root is highlighted in yellow).}
    \label{fig:example_trees}
\end{figure}

\textit{Automorphisms of labeled trees.} 
Some of the relabelings of a labeled tree $t$ may be \emph{indistinguishable} from $t$, that is, equal to $t$ as labeled trees. These relabelings are called \emph{automorphisms of $t$}, and their set is denoted by $\Aut(t)$.\\

\textit{Injective mappings between labeled trees.} 
For two labeled trees $\tau,t\in\cY_d$, the set of \emph{injective mappings} from $\tau$ to $t$, denoted $\cS(\tau,t)$, is the set of injective mappings from the labels of vertices of $\tau$ to the labels of vertices of $t$ that preserve the rooted tree structure, in the sense that any $ \sigma  \in \cS(\tau,t)$ must verify 
\begin{equation*}
     \sigma (\varnothing)=\varnothing \quad \mbox{ and } \quad \sigma ( \left\lbrace p,k \right\rbrace)= \left\lbrace \sigma (p), j \right\rbrace \mbox{ for some $j$}.
\end{equation*} Note that $\cS(\tau,t)$ is not empty if and only if $\tau$ is, up to some relabeling, a subtree of $t$.\\

\textit{Probability.} 
For the sake of readability, we will denote by $\pi_\mu$ the Poisson distribution of parameter $\mu$, namely for all $k \geq 0$,
$
    \pi_\mu(k) := e^{-\mu} \frac{\mu^k}{k!}.
$

\subsection{Models of random trees, hypothesis testing}\label{MPAlign:subsection:model_random_trees}
We recall hereafter two models of random trees of Chapter \ref{chapter:NTMA}.

\subsubsection{Independent model $\dPl_d$} Under the independent model $\dPl_{d}$, $t$ and $t'$ are two independent $\GWl_{d}$, where $\lambda>0$ is the mean number of children in the graph.

\subsubsection{Tree augmentation} For $\lambda >0$ and $s \in [0,1]$, a (random) \emph{$(\lambda,s)-$augmentation} of a given tree $\tau =(V,E)$, denoted by $\Augls_d(\tau)$, is defined as follows. First, to each node $u$ in $V$ of depth $<d$, we attach a number $Z^{+}_u$ of additional children, where the $Z^{+}_u$ are i.i.d. of distribution $\Poi(\lambda (1-s))$. Let $V^+$ be the set of these additional children. To each $v \in V^+$ at depth $d_v$, we attach another random tree of distribution $\GWl_{d-d_v}$, independently of everything else.

\subsubsection{Correlated model $\dPls_d$} The correlated model $\dPls_{d}$ is built as follows: starting from an \emph{intersection tree} $\tau^\star \sim \GWls_d$, and $T$ and $T'$ are obtained as two independent $(\lambda,s)-$augmentations of $\tau^\star$. We denote $(T,T') \sim \dPls_d$.

In all these models, the labels of the trees $T$ and $T'$ are then uniformly resampled at random by the procedure described above. It can easily be verified that $T$ and $T'$ are marginally both  $\GWl_d$ under $\dPl_{d}$ and $\dPls_{d}$, namely. The parameters are $\lambda$, the mean number of children of a node, and the correlation $s$.

\begin{figure}[h]
     \centering
     \begin{subfigure}[b]{\textwidth}
         \centering
         \includegraphics[scale=0.4]{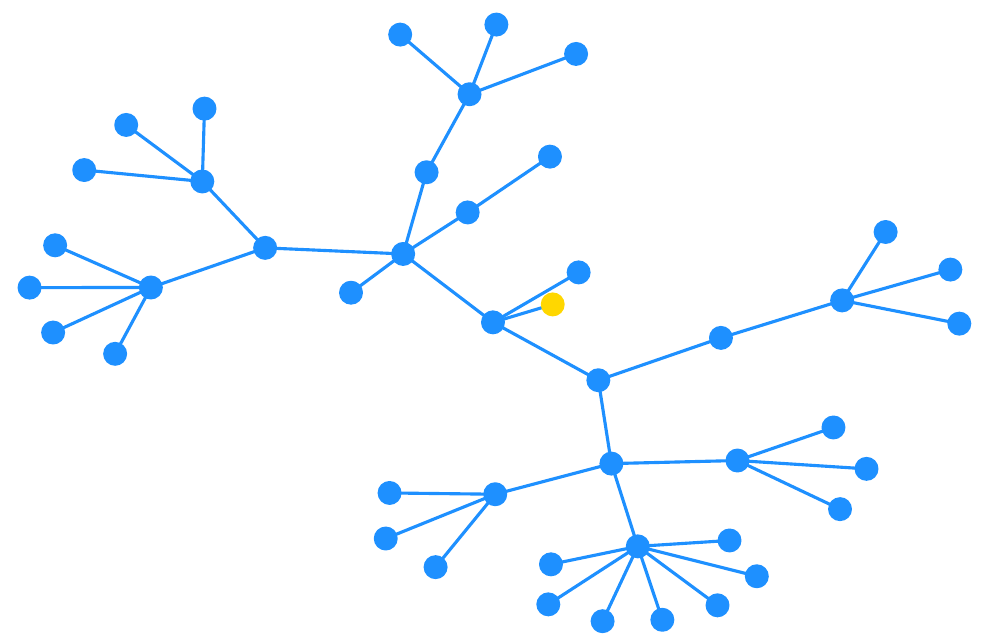}
         \hspace{0.25cm}
         \includegraphics[scale=0.4]{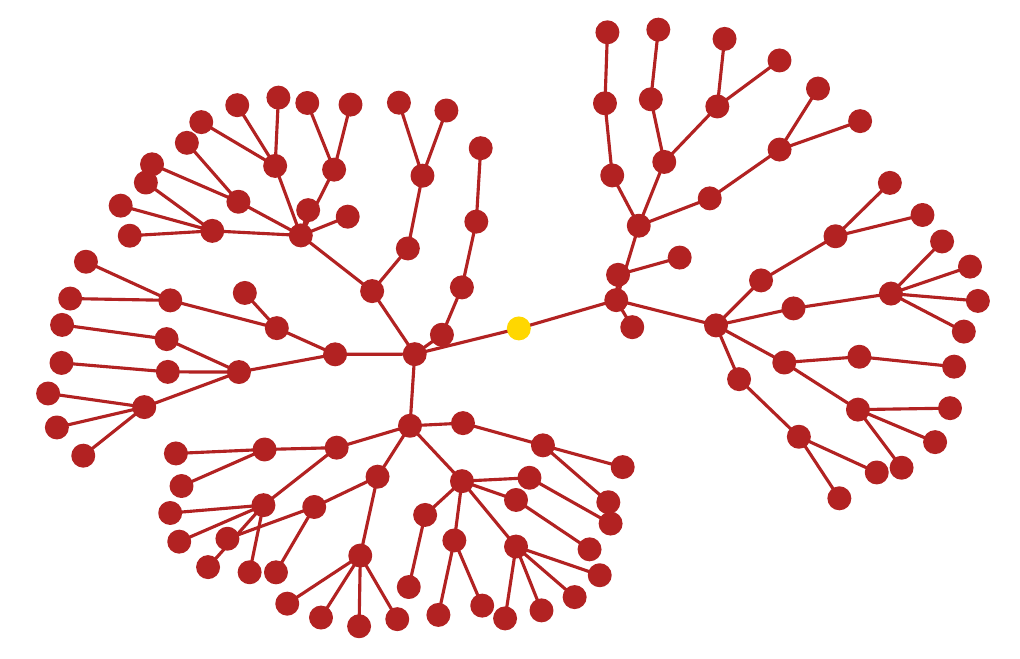}
         \caption{Samples $T,T'$ from $\dPl_{d}$.}
         \label{fig:P0}
     \end{subfigure}
     \begin{subfigure}[b]{\textwidth}
        \vspace{0.3cm}
         \centering
         \includegraphics[scale=0.4]{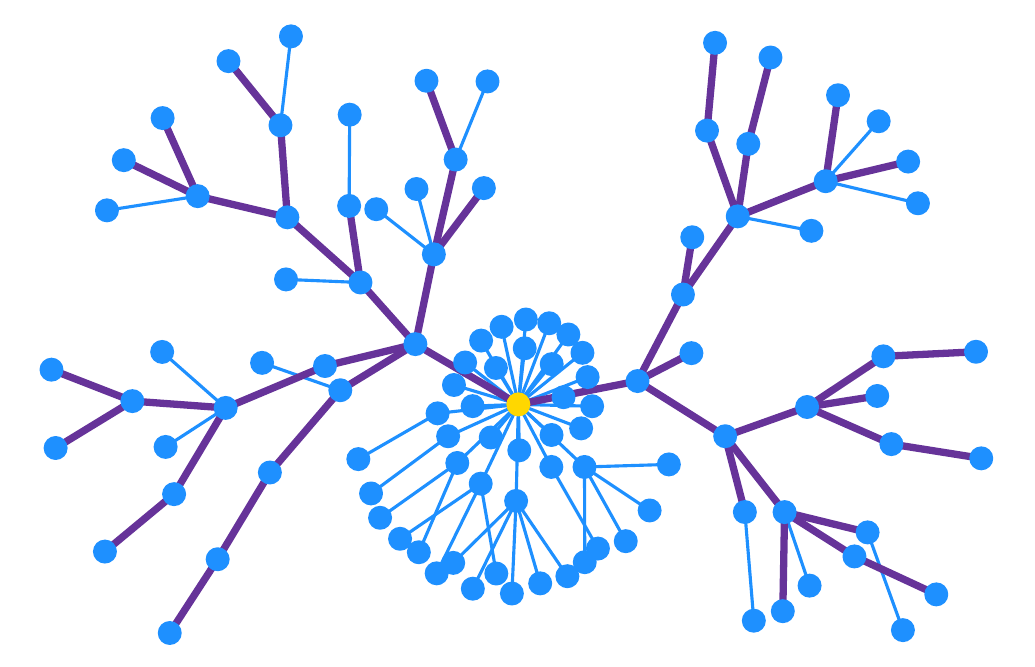}
         \hspace{0.25cm}
         \includegraphics[scale=0.4]{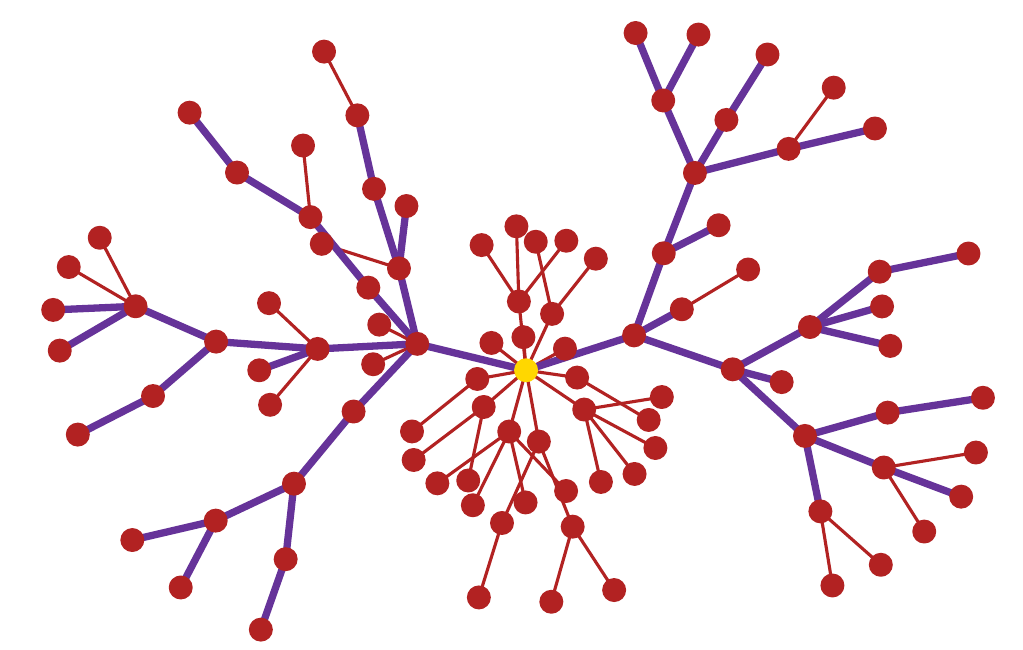}
         \caption{Samples $T,T'$ from $\dPls_{d}$. The common subtree $\tau$ is drawn thick and purple.}
         \label{fig:P1}
     \end{subfigure}
     
    \caption{Samples from models $\dPl_{d}$ and $\dPls_{d}$, with $\lambda = 1.8$, $s=0.8$, and $d=5$. The root node is highlighted in yellow. Labels are not shown.}
\end{figure}

\subsubsection{Hypothesis testing, one-sided test} As mentioned earlier, we observe \emph{finite} trees in practice. A property that we will use implicitly in the sequel is that for $T,T'\sim \dPl_{d}$ (resp. $\sim \dPls_d$)  and $d'<d$, then $p_{d'}(T),p_{d'}(T')\sim \dPl_{d'}$ (resp. $\sim \dPls_{d'}$).

The hypothesis testing considered in this study can be formalized as follows: given the observation of a pair of trees $(t,t')$ in $\cY_d \times \cY_d$, we want to test
\begin{equation}
    \cH_0 = \mbox{"$t,t'$ are realizations under $\dPl_{d}$"} \quad \mbox{versus} \quad \cH_1 = \mbox{"$t,t'$ are realizations under $\dPls_{d}$"}.
\end{equation} 
More specifically, we are interested in being able to ensure the existence of a (asymptotic) \textit{one-sided test}, that is a test $\cT_n: \cY_d \times \cY_d \to \left\lbrace 0,1 \right\rbrace$ such that $\cT_n$ chooses hypothesis $\cH_0$ under $\dPl_{d}$ with probability $1-o(1)$, and chooses $\cH_1$ with some positive probability uniformly bounded away from 0 under $\dPls_{d}$, guaranteeing a vanishing type I error and a non vanishing power.
\begin{remark}\label{remark:one_sided_tests}
We here motivate one-sided tests once again. In statistical detection problems, the commonly considered tasks are that of  
\begin{itemize}
    \item \emph{strong detection}, i.e. tests $\cT_n$ that verify
    \begin{equation*}
        \underset{n \to \infty}{\lim} \left[\dPl_{d}\left( \cT_n(T,T') = 1 \right) + \dPls_{d}\left( \cT_n(T,T') = 0 \right)\right] = 0,
    \end{equation*}
    \item \emph{weak detection}, i.e. tests $\cT_n$ that verify
    \begin{equation*}
        \underset{n \to \infty}{\lim} \left[\dPl_{d}\left( \cT_n(T,T') = 1 \right) + \dPls_{d}\left( \cT_n(T,T') = 0 \right)\right] <1.
    \end{equation*}
\end{itemize} In other words, strong detection corresponds to discriminate  w.h.p. exactly the hypotheses, whereas weak detection corresponds to strictly outperforming random guess. We recall hereafter why neither strong detection nor weak detection are relevant for our problem. 

First, because of the event  that the intersection tree does not survive, which is of positive probability  under $\dPls_{d}$: we always have $\dPls_d(t,t') \geq C \cdot \dPl_d(t,t')$, with
\begin{equation*}
    C:= \frac{\pi_{\lambda s}(0)\pi_{\lambda (1-s)}(c)\pi_{\lambda (1-s)}(c')}{\pi_{\lambda }(c)\pi_{\lambda }(c')},
\end{equation*} where $c$ (resp. $c'$) is the degree of the root in $t$ (resp $t'$). This implies that $\dPl_d$ is always absolutely continuous w.r.t. $\dPls_d$, hence strong detection can never be achieved.

Second, weak detection is always achievable as soon as $s>0$: with the same notations as here above, the distribution of $c-c'$ is always centered but has different variance under $\dPl_d$ and under $\dPls_d$, hence these two distributions can be weakly distinguished, without any further assumption than $s>0$. Since we know by \cite{ganassali2021impossibility} that partial graph alignment is not feasible for $\lambda s \leq 1$, we conclude that weak detection in tree detection is not a relevant task either for graph alignment.
\end{remark}

\subsection{Warm-up discussion: the isomorphic case ($s=1$)}\label{subsection:warmup}

In this section, we discuss the graph alignment problem in the case where $s=1$ in the correlated \ER model \eqref{eq:CER_model}, namely when the graphs $G$ and $H$ are isomorphic, $\pi^{\star}$ being  one of the graph isomorphisms between $G$ and $H$. We then ask the question: \emph{what is the best fraction of nodes that can be recovered with high probability?} 

The answer to the above question comes with the following easy remark: the joint distribution of $(G,H)$ is invariant by any relabeling of $G$ according to some $\sigma \in \Aut(G)$, where $\Aut(G)$ denotes the automorphism group of $G$. The set of nodes that can be aligned w.h.p. is hence
\begin{equation}
    \label{eq:def_I(G)}
    \cI(G) := \left\lbrace u \in V(G), \; \forall \, \sigma \in \Aut(G), \sigma(u)=u \right\rbrace.
\end{equation}
In other words, $\cI(G)$ is the set of vertices of $G$ invariant under any automorphism.

Let us denote $\cC_1(G)$ the largest connected component of $G$ (the \emph{giant component}), and $\overline{\cC_1(G)}$ the subgraph made of all the smaller components. It is clear that
\begin{equation*}
    \Aut(G) = \Aut(\cC_1(G)) \times \Aut(\overline{\cC_1(G)}).
\end{equation*}
Recent work \cite{ganassali2021impossibility} shows that $\cI(G) \, \cap \, \overline{\cC_1(G)}$ contains at most a vanishing fraction of the points: it is not hard to see indeed that smaller components mainly consist in isolated trees, which are proved to have many copies in the graph when $n$ gets large, yielding some automorphisms that swap almost all vertices in $\overline{\cC_1(G)}$.
Hence, for our purpose, the main part of $\cI(G)$ comes from the study of $\Aut(\cC_1(G))$ and $\cI(\cC_1(G))$. 

When $G \sim \G(n,q)$, these sets have been thoroughly studied by Łuczak in \cite{luczak1988}. Vertices of the giant component that are not invariant under automorphism are mainly (i.e. up to $o(n)$ errors) vertices that do not belong to the \emph{2-core}\footnote{The \emph{2-core} of a graph is defined as the maximal subgraph of minimal degree at least $2$.} of $G$, denoted by $\cC^{(2)}(G)$. 

\begin{figure}[h]
    \centering
    \includegraphics[scale=0.55]{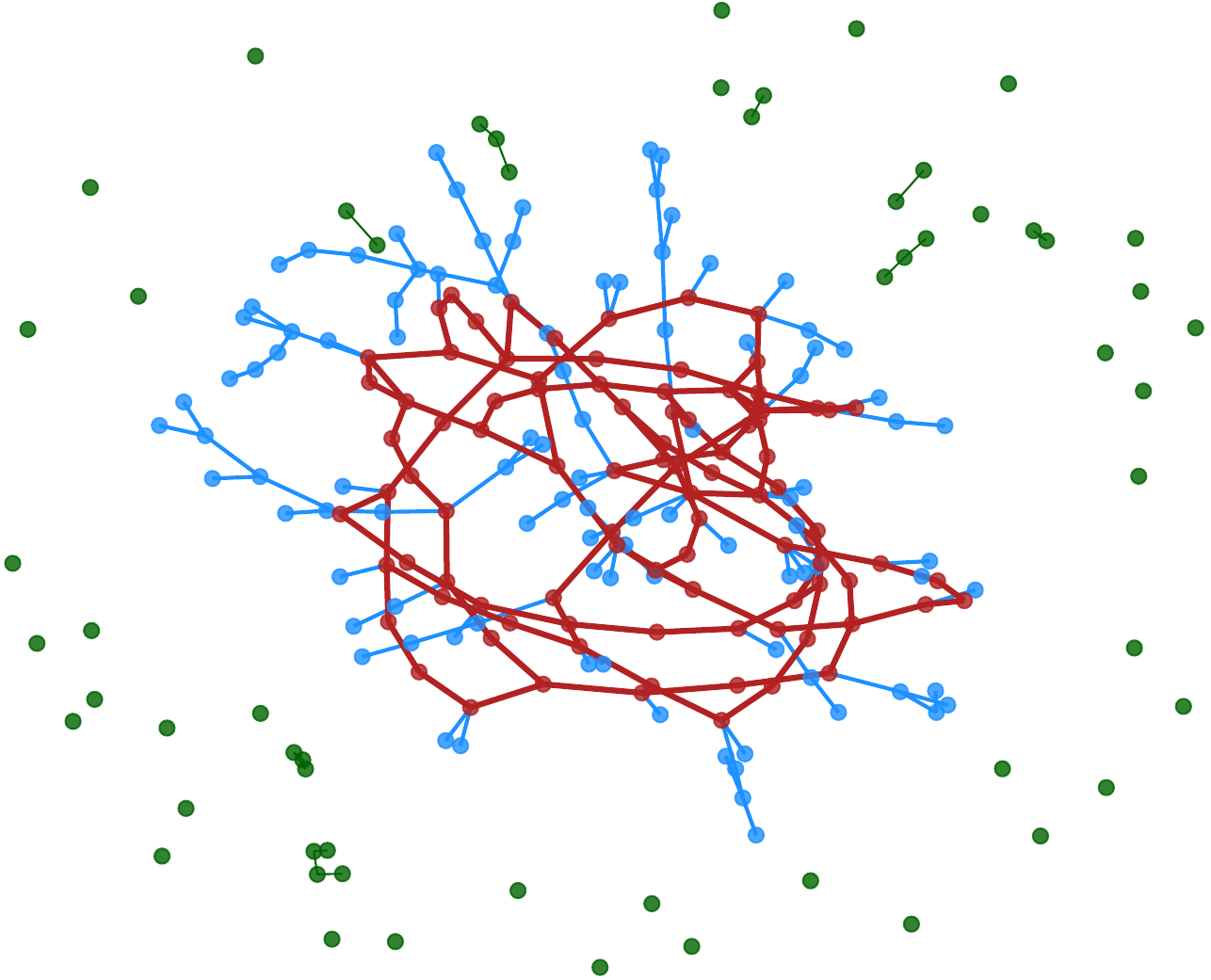}
    \caption{Sample $G$ from model $\G(n,\lambda/n)$, with $\lambda = 2$ and $n=250$. Vertices of $\overline{\cC_1(G)}$ (resp. of $\cC_1(G) \setminus \cC^{(2)}(G)$, $\cC^{(2)}(G)$) are drawn in green (resp. blue, red).}
    \label{fig:example_two_core}
\end{figure}

Simple structures appearing in $\cC_1(G) \setminus \cI(G)$ are leaves (degree one nodes) $v,w$ with common a neighbor $u$ in $\cC_1(G)$.
\cite{luczak1988} upper-bounds the size of $\cC_1(G) \setminus \cI(G)$ by the number of (generalizations) of such structures, thus obtaining the following

\begin{theorem}[\cite{luczak1988}, Theorems 3 and 4] \label{th:luczak}
Let $G \sim \G(n,q)$ with $q = \lambda/n$. Let $(K_n)_n$ be a sequence such that $K_n \to \infty$. There exists $\lambda_0 >0$ such that if $\lambda > \lambda_0$, then with high probability,
\begin{equation}\label{eq:th:luczak}
    \card{\cC^{(2)}(G)} - \card{\cI(\cC^{(2)}(G))} \leq K_n, \quad \mbox{and} \quad \card{\cC_1(G)} - \card{\cI(\cC_1(G))} \leq \lambda(\lambda + 5)e^{-2\lambda} n.
\end{equation}
\end{theorem}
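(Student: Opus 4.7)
The plan is to prove both bounds by the first moment method, based on the following elementary observation: a vertex $u$ fails to lie in $\cI(G)$ precisely when $G$ admits a non-trivial automorphism moving $u$, and the simplest way to build such an automorphism is by swapping a pair of ``twin'' substructures, i.e.\ two vertex-disjoint induced subgraphs of $G$ that are isomorphic as rooted graphs and each attached to the rest of $G$ by a single edge at their root. In particular, any two vertices with identical neighborhoods are twins, and any two isomorphic pendant subtrees hanging off the same $2$-core vertex are twins. For large $\lambda$, these local mechanisms will provide the dominant contribution to $V(G)\setminus\cI(G)$.

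For the bound on $|\cC_1(G)|-|\cI(\cC_1(G))|$, I would focus on pendant trees attached to $\cC^{(2)}(G)$ and enumerate the minimal symmetric motifs they can contain: two sibling leaves sharing a parent (the dominant one), two sibling pendant paths of length two, two isomorphic pendant subtrees of depth two at a common root, etc. A standard first-moment calculation in $\G(n,\lambda/n)$ gives, for instance, that the expected number of vertices participating in a ``two sibling leaves'' configuration is asymptotically $\tfrac{1}{2}\lambda^{2}e^{-2\lambda}n$, the factor $e^{-2\lambda}$ coming from the requirement that each of the two twin leaves have no other neighbor. Summing the analogous contributions of the few other minimal motifs, each producing an $e^{-2\lambda}$-type suppression, one recovers the prefactor $\lambda(\lambda+5)e^{-2\lambda}$; concentration around the mean is obtained by a direct second-moment computation since these structures are locally finite and mostly disjoint. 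For the bound on the $2$-core, the minimum-degree-$2$ constraint kills the twin-leaf mechanism and the surviving local symmetries (two distinct vertices with the same neighborhood inside $\cC^{(2)}(G)$, two isomorphic small subgraphs sharing a common attachment, certain pairs of short cycles, etc.) each have $O(1)$ expected count in $\G(n,\lambda/n)$; since there are only finitely many such bounded-size types, Markov's inequality then yields $|\cC^{(2)}(G)|-|\cI(\cC^{(2)}(G))|\leq K_n$ w.h.p.\ for any $K_n\to\infty$.

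The main obstacle is the upper-bound direction: one must argue that the enumerated small local motifs really capture all but $o(n)$ of the non-invariant vertices, i.e.\ rule out ``exotic'' global automorphisms of $G$ that are not reducible to swaps of small twin substructures. The cleanest route is contrapositive. Given any non-trivial automorphism $\sigma$, decompose its support into orbits, and use the local tree-likeness of $\G(n,\lambda/n)$ (few short cycles, no dense subgraphs) to localize each orbit within a bounded-depth neighborhood; then match the resulting local picture to one of the minimal motifs above, up to a negligible error. This localization step is the technically delicate part of \cite{luczak1988}: once it is granted, the global automorphism problem reduces to a finite list of local moment computations which yield the precise constants $\lambda(\lambda+5)e^{-2\lambda}$ for the giant component and an $O(1)$ bound for the $2$-core, as claimed.
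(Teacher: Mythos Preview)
The paper does not prove this theorem: it is quoted verbatim as Theorems~3 and~4 of \cite{luczak1988} and used as a black box in the warm-up discussion of Section~\ref{subsection:warmup}. There is thus no ``paper's own proof'' to compare your proposal against. The only hint the paper gives about \L{}uczak's argument is the sentence preceding the statement: ``\cite{luczak1988} upper-bounds the size of $\cC_1(G) \setminus \cI(G)$ by the number of (generalizations) of such structures,'' where ``such structures'' refers to pairs of sibling leaves with a common neighbor. Your proposal is consistent with that one-line description: you correctly identify the twin-substructure mechanism and propose a first/second moment count of the relevant local motifs.

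That said, as a self-contained proof your sketch has a real gap at the point you yourself flag. The claim that \emph{every} non-trivial automorphism of the giant component (or of the $2$-core) can be localized to a bounded-depth neighborhood and matched to one of finitely many motif types is the entire content of \L{}uczak's result, and ``local tree-likeness'' alone does not deliver it: one has to rule out automorphisms whose support is large or spread out, and for the $2$-core one must exploit the minimum-degree-$2$ structure to get the count down from $\Theta(n)$ to $O(1)$. Your proposal names this step but does not indicate how to carry it out; without it, the moment computations only give an upper bound on the number of vertices moved by \emph{local} swaps, not on $|\cC_1(G)|-|\cI(\cC_1(G))|$. If you want a complete argument you will need to consult \cite{luczak1988} directly for that localization lemma.
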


Equation \eqref{eq:th:luczak} of Theorem \ref{th:luczak} states that for $\lambda$ large enough, almost all vertices of the 2-core of $G$ are invariant, whereas at most a fraction $\lambda(\lambda + 5)e^{-2\lambda}$ of the nodes are in the giant component and not in $\cI(G)$. In this case, with high probability, any isomorphism $\hat{\pi}$ between $G$ and $H$ will achieve partial recovery and will satisfy
$$ \ov \left(\hat{\pi},\pi^{\star} \right) \geq 1 - \pext(\lambda) - \lambda(\lambda + 5)e^{-2\lambda}, $$
where $\pext(\lambda)$ is defined as the probability that a Galton-Watson
tree of offspring $\Poi(\lambda)$ survives. 

However, finding efficiently such an isomorphism $\hat{\pi}$ is known to be challenging in the general case (see e.g. \cite{Arvind2002}): hence, whether there exists a polynomial-time algorithm achieving this optimal bound remains an open question\footnote{We can however cite a famous result of Bollob\'as (\cite{Bollobas2001}, Theorem 9.9) showing that in the dense case $ n p \geq \Theta(\log n)$, the vertices of every $G \sim \G(n,p)$ graph are uniquely determined by their distance sequences, and the automorphism group of $G$ is w.h.p. trivial.}.

\section{Derivation of the likelihood ratio} \label{MPAlign:section:LR}
For $t,t'\in \cY_d$, we introduce the likelihood ratio 
\begin{equation}
    \label{eq:def_LR}
    L_d(t,t'):= \frac{\dPls_{d}(t,t')}{\dPl_{d}(t,t')}.
\end{equation}

\subsection{Recursive computation}\label{MPAlign:subsection:recursion_L}
In this section, our aim is to obtain a recursive representation of the likelihood ratio $L_d$. First note that for two trees $t=(t_1,\ldots,t_c)$, $t'=(t'_1,\ldots,t'_{c'})$ both in $\cY_d$, we have
\begin{equation} \label{eq:MPAlign:P0_GW}
\dPl_{d}(t,t')=\GWl_{d}(t) \times \GWl_{d}(t'),
\end{equation} and that conditioned to $c$, $\GWl_{d}(t)$ satisfies the recursion
\begin{equation}\label{eq:MPAlign:recursion_GW}
\GWl_{d}(t) =\pi_{\lambda}(c)\prod_{u\in[c]} \GWl_{d-1}(t_u).
\end{equation} In the construction of $t,t'$ under $\cH_1$, partitioning on the permutations $\sigma \in\cS_c, \sigma' \in\cS_{c'}$ used to shuffle the children of the root nodes of $t$, $t'$, as well as on the number $k$ of children of the root in $\tau^{\star}$, we have the following 
\begin{multline*}
\dPls_{d}(t,t') = \sum_{k=0}^{c \wedge c'} \pi_{\lambda s}(k) \pi_{\lambda (1-s)}(c-k)\pi_{\lambda (1-s)}(c'-k) \\
\times \sum_{\sigma \in \cS_c,\sigma' \in \cS_{c'}}\frac{1}{c! \times c'!} \left(\prod_{u=1}^k\dP_{1,n-1}(t_{\sigma(u)},t'_{\sigma'(u)})\right) \\\times \left(\prod_{u=k+1}^d \GWl_{d-1}(t_{\sigma(u)})\right) 
\times \left(\prod_{i=k+1}^{d'} \GWl_{d-1}(t'_{\sigma'(u)})\right).
\end{multline*}

This together with Equations \eqref{eq:MPAlign:P0_GW}, \eqref{eq:MPAlign:recursion_GW} readily implies the following recursive formula for the likelihood ratio $L_d$:
\begin{equation}\label{eq:LR_rec_1}
L_d(t,t')=\sum_{k=0}^{c \wedge c'} \frac{\pi_{\lambda s}(k)\pi_{\lambda(1-s)}(c-k)\pi_{\lambda (1-s)}(c'-k)} {\pi_\lambda(c)\pi_\lambda(c') \times c! \times c'!}\sum_{\sigma\in\cS_c,\sigma'\in\cS_{c'}}\prod_{u=1}^k L_{d-1}(t_{\sigma(u)},t'_{\sigma'(u)}).
\end{equation}
In this expression, by convention the empty product equals 1. We will use in the sequel the following shorthand notation
\begin{flalign*}
\psi(k,c,c') & := \frac{\pi_{\lambda s}(k)\pi_{\lambda(1-s)}(c-k)\pi_{\lambda (1-s)}(c'-k)} {\pi_\lambda(c)\pi_\lambda(c')} \times \frac{(c-k)! \times (c-k')! }{c! \times c'!} \label{eq:def_psi} \\
& = e^{\lambda s} \times \frac{s^k (1-s)^{c+c'-2k}}{\lambda^{k} k!},
\end{flalign*} which enables an alternative, more compact recursive expression:
\begin{equation}\label{eq:MPAlign:LR_rec_2}
L_d(t,t')=\sum_{k=0}^{c \wedge c'}\psi(k,c,c')\sum_{\substack{\sigma \in \cS(k,c) \\ \sigma' \in \cS(k,c')}}\prod_{u=1}^k L_{d-1}(t_{\sigma(u)},t'_{\sigma'(u)}),
\end{equation} where we recall that $\cS(k,\ell)$ denotes the set of injective mappings from $[k]$ to $[\ell]$ and that by convention $\card{\cS(0,\ell)}=1$.

\begin{remark}\label{remark:util_rec_algo}
The above expression \eqref{eq:MPAlign:LR_rec_2} will be useful for efficient computations of the likelihood ratio in Algorithm \ref{MPAlign:algo_GA} in Section \ref{MPAlign:section:graph_matching}, through message-passing.
\end{remark}

\subsection{Explicit computation}
We now use the recursive expression \eqref{eq:MPAlign:LR_rec_2} to prove by induction on $d$ the following explicit formula for $L_d$.
\begin{lemma}\label{MPAlign:lemma:LR_developed}
With the previous notations, we have
\begin{equation}\label{eq:MPAlign:lemma:LR_developed}
L_d(t,t')=\sum_{\tau \in \cY_d} \sum_{\substack{\sigma \in \cS(\tau,t) \\ \sigma' \in \cS(\tau,t')} }\prod_{u \in \cV_{d-1}(\tau)}\psi\left(c_\tau(u),c_t(\sigma(u)),c_{t'}(\sigma'(u))\right).
\end{equation}
\end{lemma}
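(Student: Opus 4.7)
The proof proceeds by induction on $d$, starting from the recursive identity \eqref{eq:MPAlign:LR_rec_2} and unfolding it into the explicit sum over subtrees $\tau\in\cY_d$. The key combinatorial observation is that a canonically labeled tree $\tau \in \cY_d$ equipped with injections $\sigma\in\cS(\tau,t)$, $\sigma'\in\cS(\tau,t')$ is in bijection with the following data: an integer $k = c_\tau(\varnothing) \in \{0,\ldots,c\wedge c'\}$, two injections $\alpha\in\cS(k,c)$, $\alpha'\in\cS(k,c')$ (encoding the images of the root's children), and, for each $i\in[k]$, a triple $(\tau_i,\sigma_i,\sigma'_i)$ with $\tau_i\in\cY_{d-1}$, $\sigma_i\in\cS(\tau_i,t_{\alpha(i)})$, $\sigma'_i\in\cS(\tau_i,t'_{\alpha'(i)})$. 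Moreover, this bijection respects the product structure: the node set $\cV_{d-1}(\tau)$ partitions as the root plus the $k$ disjoint copies of $\cV_{d-2}(\tau_i)$, so the product over $\cV_{d-1}(\tau)$ factors as the factor $\psi(k,c,c')$ for the root times the product over $i\in[k]$ of the product over $\cV_{d-2}(\tau_i)$.

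\textbf{Base case.} For $d=0$ the set $\cY_0$ consists of the single trivial tree $\tau_\varnothing$ with only the root, $\cS(\tau_\varnothing,t)$ and $\cS(\tau_\varnothing,t')$ are each a single map (root to root), and the product over $\cV_{-1}(\tau_\varnothing)=\varnothing$ is $1$ by convention. Hence the right-hand side of \eqref{eq:MPAlign:lemma:LR_developed} equals $1$, which agrees with $L_0(t,t')=1$ since both $\dPl_0$ and $\dPls_0$ put all their mass on the trivial tree.

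\textbf{Induction step.} Assume \eqref{eq:MPAlign:lemma:LR_developed} holds at depth $d-1$. Starting from \eqref{eq:MPAlign:LR_rec_2} and substituting the induction hypothesis into each factor $L_{d-1}(t_{\sigma(u)},t'_{\sigma'(u)})$, one obtains
\begin{equation*}
L_d(t,t')=\sum_{k=0}^{c\wedge c'}\psi(k,c,c')\sum_{\alpha\in\cS(k,c),\,\alpha'\in\cS(k,c')}\prod_{i=1}^k\Biggl(\sum_{\tau_i\in\cY_{d-1}}\sum_{\substack{\sigma_i\in\cS(\tau_i,t_{\alpha(i)})\\ \sigma'_i\in\cS(\tau_i,t'_{\alpha'(i)})}}\prod_{u\in\cV_{d-2}(\tau_i)}\psi(c_{\tau_i}(u),\,c_{t_{\alpha(i)}}(\sigma_i(u)),\,c_{t'_{\alpha'(i)}}(\sigma'_i(u)))\Biggr).
\end{equation*}
Expanding the product over $i$ and then invoking the bijection above to reassemble the data $(k,\alpha,\alpha',(\tau_i,\sigma_i,\sigma'_i)_{i\in[k]})$ into a single triple $(\tau,\sigma,\sigma')$ with $\tau\in\cY_d$, $\sigma\in\cS(\tau,t)$, $\sigma'\in\cS(\tau,t')$, the factor $\psi(k,c,c')$ becomes the root contribution and the factors from the $\cV_{d-2}(\tau_i)$'s assemble into the remaining nodes of $\cV_{d-1}(\tau)$. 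This yields precisely \eqref{eq:MPAlign:lemma:LR_developed}.

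\textbf{Main obstacle.} The argument is essentially combinatorial and the induction is routine; the only subtlety is checking that the proposed bijection is correct. In particular, one must verify that the canonical labeling of $\tau$ is compatible with the decomposition (so that reassembling $k$ ordered subtrees $\tau_1,\ldots,\tau_k$ indeed produces a unique element of $\cY_d$) and that $\cS(\tau,t)$ genuinely decomposes as the product over $i\in[k]$ of $\cS(\tau_i,t_{\alpha(i)})$ after fixing $\alpha$, with no hidden symmetry factors — the injections being label-preserving from the canonical labels of $\tau$ to those of $t$ is exactly what makes this hold without a $1/k!$ or $|\Aut(\tau)|$ correction.
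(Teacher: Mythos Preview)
Your proof is correct and follows essentially the same approach as the paper: induction on $d$ via the recursion \eqref{eq:MPAlign:LR_rec_2}, together with the bijection between $(\tau,\sigma,\sigma')$ at depth $d$ and the data $(k,\alpha,\alpha',(\tau_i,\sigma_i,\sigma'_i)_{i\in[k]})$ at depth $d-1$. The only minor difference is that the paper also works out the case $d=1$ explicitly as a warm-up before the general induction step, whereas you (correctly) start directly from the trivial base case $d=0$.
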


\begin{proof}[Proof of Lemma \ref{MPAlign:lemma:LR_developed}]
We prove this result by recursion on $d$. An empty product being set to $1$, there is nothing to prove in the case $d=0$. Let us first establish formula \eqref{eq:MPAlign:lemma:LR_developed} for $d=1$. In that case, the depth 1 trees $t$, $t'$ are identified by the degrees $c$, $c'$ of their root node. Since $\cY_0$ is a singleton, $L_0$ is identically 1, and from \eqref{eq:LR_rec_1} we have that
\begin{equation}\label{eq:expr_L1_rec}
    L_1(t,t')=\sum_{k=0}^{c \wedge c'}\frac{\pi_{\lambda s}(k)\pi_{\lambda(1-s)}(c-k)\pi_{\lambda (1-s)}(c'-k)}{\pi_\lambda(c)\pi_\lambda(c')}.
\end{equation} On the other hand, in evaluating expression \eqref{eq:MPAlign:lemma:LR_developed}, we only need consider trees $\tau$ in $\cY_1$ with root degree $k\leq c\wedge c'$, since for larger $k$, one of the two sets $\cS(\tau,t)$ or $\cS(\tau,t')$ is empty. For such $k$, we have $|\cS(\tau,t)|=c!/(c-k)!$. The right-hand term in \eqref{eq:MPAlign:lemma:LR_developed} thus writes
\begin{equation*}
    \sum_{k=0}^{c \wedge c'} \frac{c! \times c'!}{(c-k)! \times (c'-k)!} \psi(k,c,c'),
\end{equation*} which gives precisely \eqref{eq:expr_L1_rec}. 

Assume that \eqref{eq:MPAlign:lemma:LR_developed} has been established up to some $n-1\geq 1$. Expressing $L_{d}$ in terms of $L_{d-1}$ based on \eqref{eq:LR_rec_1}, and replacing in there the expression of $L_{d-1}$ by \eqref{eq:MPAlign:lemma:LR_developed} , we get
\begin{multline*}
   L_{d}(t,t') = \sum_{k=0}^{c \wedge c'} \frac{\psi(k,c,c')}{(c-k)!(c'-k)!} \\ \times
   \sum_{\sigma\in\cS_c,\sigma'\in\cS_{c'}} \prod_{u=1}^k \left[ \sum_{\tau_u \in\cY_{d-1}} \sum_{\substack{\sigma_u\in \cS(\tau_u,t_{\sigma(u)}) \\ \sigma'_u \in \cS(\tau_u,t'_{\sigma(u)})} }\prod_{v \in \cV_{d-1}(\tau_u)}\psi\left(c_{\tau_u}(v),c_{t_{\sigma(u)}}(\sigma_u(v)),c_{t'_{\sigma'(u)}}(\sigma'_u(v))\right)\right]. 
\end{multline*}

Note that the product term in the above expression depends on the permutations $\sigma$, $\sigma'$ only through their restriction to $[k]$: for given such restrictions there are $(c-k)! \times (c'-k)!$ corresponding pairs of permutations $\sigma,\sigma'$.

Moreover, there is a bijective mapping between an integer $k\in\left\lbrace 0,\ldots,c\wedge c'\right\rbrace$, pairs of injections $\sigma:[k]\to [c]$, $\sigma':[k]\to [c']$, $k$ trees $\tau_1, \ldots ,\tau_k\in\cY_{d-1}$, injections $\sigma_u \in \cS(\tau_u,t_{\sigma(u)})$ and $\sigma'_u \in \cS(\tau_u,t'_{\sigma'(u)})$ for all $u \in [k]$ and a tree $\tau \in \cY_d$ together with a pair of injections $\sigma,\sigma'\in \cS(\tau,t)\times \cS(\tau,t')$. This establishes formula \eqref{eq:MPAlign:lemma:LR_developed} at step $d$.

\end{proof} 

\subsection{Martingale properties and the objective of one-sided test} 
In this part, we assume that we observe $T,T'$ drawn under one of the two models $\dPl_{\infty}$ or $\dPls_{\infty}$. For $d \geq 0$, let $\cF_d:=\sigma(p_d(T),p_d(T'))$ be the sigma-field of the two trees $T,T'$ observed down to depth $d$. We then have
\begin{lemma}\label{MPAlign:lemma:LR_martingale}
The sequence $\left\lbrace L_d := L_d(p_d(T),p_d(T'))\right\rbrace_{d \geq 0}$ is a $\cF_d$-martingale under $\dPl_{\infty}$.
\end{lemma}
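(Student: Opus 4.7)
The plan is to invoke the classical fact that if $\mathbb{Q}$ and $\mathbb{P}$ are probability measures on a filtered space $(\Omega,(\cF_d)_{d\geq 0})$ with $\mathbb{Q}|_{\cF_d}\ll \mathbb{P}|_{\cF_d}$ for every $d$, then the sequence of Radon--Nikodym derivatives $L_d := d\mathbb{Q}|_{\cF_d}/d\mathbb{P}|_{\cF_d}$ is a non-negative $(\cF_d)_d$-martingale under $\mathbb{P}$. Applied to $\mathbb{P}=\dPl_\infty$ and $\mathbb{Q}=\dPls_\infty$, this yields the claim.

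The first step would be to identify the Radon--Nikodym derivative of $\dPls_\infty|_{\cF_d}$ with respect to $\dPl_\infty|_{\cF_d}$ with the function $L_d(p_d(T),p_d(T'))$ of Equation \eqref{eq:def_LR}. By the recursive construction of the two models recalled in Section \ref{MPAlign:subsection:model_random_trees}, if $(T,T')\sim \dPl_\infty$ (resp.\ $\sim \dPls_\infty$), then the truncation $(p_d(T),p_d(T'))$ is distributed according to $\dPl_d$ (resp.\ $\dPls_d$). In other words, the restrictions of $\dPl_\infty$ and $\dPls_\infty$ to $\cF_d$ are projectively consistent with the distributions $\dPl_d$ and $\dPls_d$ viewed as probability measures on the countable set $\cY_d\times \cY_d$. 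Since $\dPl_d$ puts positive mass on every pair of finite labeled trees, $\dPls_d \ll \dPl_d$ is trivially satisfied, and $L_d(t,t')$ is a well-defined, everywhere finite version of the density.

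It then remains to check the three defining properties of a martingale under $\dPl_\infty$. The $\cF_d$-measurability of $L_d$ is immediate from its definition, since it depends on $(T,T')$ only through $(p_d(T),p_d(T'))$. Integrability follows from a direct summation:
\[
\dE_{0,\infty}[L_d] \;=\; \sum_{(t,t')\in \cY_d\times \cY_d}\dPl_d(t,t')\,\frac{\dPls_d(t,t')}{\dPl_d(t,t')} \;=\; \dPls_d(\cY_d\times \cY_d) \;=\; 1.
\]
Finally, the projection property is obtained by writing, for any $A\in \cF_d$,
\[
\dE_{0,\infty}[L_{d+1}\,\mathbf{1}_A]\;=\;\dPls_\infty(A)\;=\;\dPls_d(A)\;=\;\dE_{0,\infty}[L_d\,\mathbf{1}_A],
\]
where the first and last equalities use that $L_{d+1}$ and $L_d$ are the densities on $\cF_{d+1}\supset \cF_d$ and $\cF_d$ respectively, and the middle equality is the projective consistency established in the previous paragraph. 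This gives $\dE_{0,\infty}[L_{d+1}\mid \cF_d]=L_d$, as required.

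No serious obstacle is anticipated, as this is essentially a textbook verification. The only mild point of care is the projective consistency of the two families $\{\dPl_d\}_d$ and $\{\dPls_d\}_d$ under the truncation maps $p_d$, which itself is a direct consequence of the fact that in both models the laws of offspring are defined by independent Poisson draws that commute with truncation at any fixed depth.
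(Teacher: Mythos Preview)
Your proof is correct. In fact, the paper itself explicitly acknowledges this route: just before its own proof it states that ``the above martingale property follows from general considerations of likelihood ratios,'' which is precisely the abstract Radon--Nikodym argument you carry out. The paper then adds ``It is however informative to derive it by calculus,'' and gives a different, hands-on proof: it expands $L_{d+1}$ via the explicit developed formula \eqref{eq:MPAlign:lemma:LR_developed}, separates the contribution of nodes at depth $\leq d$ from the new degrees appearing at depth $d+1$, and uses that under $\dPl_\infty$ these new degrees are independent $\Poi(\lambda)$ variables independent of $\cF_d$, so that the depth-$(d{+}1)$ factor integrates to $1$. Your approach is shorter and model-agnostic; the paper's computational approach has the advantage of exercising the explicit formula for $L_d$ and making transparent exactly which independence structure under $\dPl_\infty$ drives the martingale property.
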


The above martingale property follows from general considerations of likelihood ratios. It is however informative to derive it by calculus, which we now do.
\begin{proof}[Proof of Lemma \ref{MPAlign:lemma:LR_martingale}]
There are several ways to see that $\left\lbrace L_d \right\rbrace_{d \geq 0}$ is a $\cF_d$-martingale under $\dPl_{\infty}$, depending on the formula used to write $L_{d+1}$ in terms of $L_d$. We here choose to use the developed expression \eqref{eq:MPAlign:lemma:LR_developed}, enabling simple computations:
\begin{multline*}
    L_{d+1} = \sum_{\tau \in \cY_{d+1}} \sum_{\substack{\sigma \in \cS(\tau,T) \\ {\sigma}' \in \cS({\tau},T')} }\prod_{u \in \cV_{d}({\tau})}\psi\left(c_{{\tau}}(u),c_T({\sigma}(u)),c_{T'}({\sigma}'(u))\right) \\
     = \sum_{\chi \in \cY_{d}} \sum_{\substack{\sigma \in \cS({\chi},p_d(T)) \\ \sigma' \in \cS(\chi,p_d(T'))} }\prod_{i\in \cV_{d-1}(\chi)}\psi\left(c_{\chi}(u),c_{p_d(T)}(\sigma(u)),c_{p_d(T')}(\sigma'(u))\right) \\
     \times \prod_{u \in \cL_d(\chi)} \sum_{k = 0}^{c_{T}(\sigma(u)) \wedge c_{T'}(\sigma'(u))} \frac{c_{T}(\sigma(u))! c_{T'}(\sigma'(u))!}{(c_{T}(\sigma(u))-k)! (c_{T'}(\sigma'(u))-k)!} \psi(k,c_{T}(\sigma(u)),c_{T'}(\sigma'(u))).
\end{multline*}
The last product is independent from $\cF_d$. Moreover, under $\dPl_{\infty}$, all terms in the last product are independent, the $c_{T}(u)$ and $c_{T'}(u)$ being independent $\Poi(\lambda)$ random variables. Since for any independent $\Poi(\lambda)$ random variables $c,c'$, one has
\begin{equation*}
    \dE\left[ \sum_{k=0}^{c \wedge c'}\frac{\pi_{\lambda s}(k)\pi_{\lambda(1-s)}(c-k)\pi_{\lambda (1-s)}(c'-k)}{\pi_\lambda(c)\pi_\lambda(c')} \right]=1,
\end{equation*} taking the expectation conditionally to $\cF_d$ entails the desired martingale property.
\end{proof}

We now consider the martingale almost sure limit $L_{\infty}$, and define $\ell := \dEl_{\infty}\left[ L_{\infty} \right]$. Using the recursive formula \eqref{eq:LR_rec_1} and conditioning on the root degrees $c$ and $c'$, it follows that  $\ell$ verifies the following fixed point equation
\begin{equation}\label{eq:fixed_point_l}
\ell=\sum_{k\geq 0}\pi_{\lambda s}(k) \ell^k.
\end{equation}
This is also (!) the fixed point equation for the extinction probability $\pext(\lambda s)$ of a Galton-Watson branching process with offspring distribution $\Poi(\lambda s)$.  For $\lambda s \leq 1$, the only solution of \eqref{eq:fixed_point_l} is $\ell=1$. For $\lambda s>1$, the equation also admits a non-trivial solution $\pext(\lambda s) \in (0,1)$.

Our goal is to find conditions on $(\lambda,s)$ for which the martingale $\left\lbrace L_d \right \rbrace_{d \geq 0}$ is not uniformly integrable and \emph{looses mass} at infinity, i.e. the conditions for which the martingale limit $L_{\infty}$ has expectation $\dEl_{\infty} \left[L_{\infty} \right]<1$. By the previous calculation we know that if this holds, then necessarily  $\dEl_{\infty} \left[L_{\infty} \right]=\pext(\lambda s)<1$. Simulations of $L_d$ displayed on Figure \ref{fig:simulation_LL} seem to indicate that its transition to non-uniform integrability does not coincide with the condition $\lambda s>1$. We shall obtain a theoretical confirmation of this fact with Theorem \ref{MPAlign:theorem:suff_hard_phase}.

Our interest in conditions for non-uniform integrability stem from the following simple Lemma:

\begin{lemma}
\label{lemma:one_sided_test}
Assume that $\dEl_{\infty} \left[L_{\infty} \right]<1$. Then there exists a one-sided test.
\end{lemma}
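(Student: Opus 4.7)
The plan is to build a Neyman--Pearson-type test by thresholding the likelihood ratio itself, $\cT_d(t,t') := \one_{L_d(t,t') > a_d}$, for some divergent threshold sequence $a_d \to \infty$ to be tuned. The hypothesis $\ell := \dEl_\infty[L_\infty] < 1$ is precisely the failure of uniform integrability of the non-negative $\dPl_\infty$-martingale $(L_d)$ (Lemma~\ref{MPAlign:lemma:LR_martingale}); the mass $1-\ell$ that $(L_d)$ loses under $\dPl_\infty$ reappears under $\dPls_\infty$ as a positive probability that $L_d$ diverges to $+\infty$, and this escape of mass is exactly what the thresholding test is designed to pick up.

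Bounding the type~I error is immediate by Markov's inequality: since $\dEl_d[L_d] = 1$ for all $d$, one has $\dPl_d(L_d > a_d) \leq 1/a_d \to 0$ for any divergent $a_d$. For the power I would use the change-of-measure identity $\dPls_d(A) = \dEl_d[L_d \one_A]$, valid for every $\cF_d$-measurable event $A$. Applied with $A = \{L_d > M\}$ for a fixed $M > 0$, and using $L_d \one_{L_d \leq M} \leq L_d \wedge M$, this yields
\begin{equation*}
\dPls_d(L_d > M) \;=\; 1 - \dEl_d[L_d \one_{L_d \leq M}] \;\geq\; 1 - \dEl_d[L_d \wedge M].
\end{equation*}
Doob's martingale convergence theorem provides a $\dPl_\infty$-a.s. limit $L_d \to L_\infty$, and since $x \mapsto x \wedge M$ is bounded and continuous, bounded convergence gives $\dEl_d[L_d \wedge M] \to \dEl_\infty[L_\infty \wedge M] \leq \ell$. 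Hence for every fixed $M > 0$,
\begin{equation*}
\liminf_{d \to \infty} \dPls_d(L_d > M) \;\geq\; 1 - \ell \;>\; 0.
\end{equation*}

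It remains to turn this ``fixed $M$'' lower bound on the power into one valid with a \emph{diverging} threshold, which is required to let Markov kill the type~I error. A standard diagonal extraction does the job: pick inductively indices $d_1 < d_2 < \cdots$ such that $\dPls_d(L_d > k) \geq 1 - \ell - 1/k$ for every $d \geq d_k$, and set $a_d := k$ on $[d_k, d_{k+1})$. Then $a_d \to \infty$, $\dPl_d(L_d > a_d) \to 0$ by Markov, and $\liminf_d \dPls_d(L_d > a_d) \geq 1 - \ell > 0$, so $\cT_d$ is indeed a one-sided test. The only slightly delicate step in the whole argument is this diagonalization, which reconciles the two competing constraints on $a_d$; everything else reduces to classical martingale machinery and Bayes' change of measure.
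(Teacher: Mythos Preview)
Your proof is correct and follows essentially the same route as the paper: threshold the likelihood ratio, use the change-of-measure identity $\dPls_d(L_d>a)=1-\dEl_d[L_d\one_{L_d\le a}]$ together with martingale convergence to lower-bound the power by $1-\ell$, and then diagonalize to send the threshold to infinity. Your use of Markov's inequality for the type~I error and of $L_d\wedge M$ (continuous, so no continuity-point argument needed) in place of $L_d\one_{L_d\le a}$ are minor technical streamlinings of the paper's argument, not a different strategy.
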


\begin{proof}[Proof of Lemma \ref{lemma:one_sided_test}]
Let us take $a>0$ a continuity point of the law of $L_\infty$ under $\dPl_{\infty}$. We have
\begin{equation}\label{eq:suff_test1}
\lim_{d\to\infty}\dPl_{\infty}(L_d>a)=\dPl_{\infty}(L_\infty>a).
\end{equation} Moreover,
\begin{flalign*}
1 = \dEl_{\infty} \left[L_d\right] & = \dEl_{\infty} \left[L_d \one_{L_d>a}\right] + \dEl_{\infty} \left[L_d \one_{L_d \leq a}\right]\\
& = \dPls_{\infty}(L_d>a) + \dEl_{\infty}  \left[L_d \one_{L_d \leq a}\right].
\end{flalign*} 
The last equation implies, under the assumption $\dEl_{\infty} \left[L_{\infty} \right]<1$ (that is $\dEl_{\infty} \left[L_{\infty} \right] = \pext(\lambda s)$), that
\begin{equation}\label{eq:suff_test2}
\liminf_{d \to\infty}\dPls_{\infty}(L_d>a)\geq 1-\dEl_{\infty} \left[L_\infty \right]=1-\pext(\lambda s)>0.
\end{equation}
In view of \eqref{eq:suff_test1} and \eqref{eq:suff_test2}, we can thus choose $a_d \to\infty$ such that:
\begin{equation*}
\lim_{d\to\infty} \dPl_{\infty}(L_d>a_d)=0 \quad \hbox{and} \quad \liminf_{d\to\infty}\dPls_{\infty}(L_d > a_d)\geq 1-\pext(\lambda s)>0.
\end{equation*}
\end{proof}

\subsubsection{Proof of $(i) \iff (iii) \iff (iv)$ in Theorem \ref{MPAlign:theorem:main_result_TREES}}
\label{proof_i_iii_iv_TH1}
\begin{proof}
The previous proof shows first that $(i) \iff (iii)$ in Theorem \ref{MPAlign:theorem:main_result_TREES} (applying  Neyman-Pearson's Lemma and a diagonal extraction procedure) as well as $(iii) \iff (iv)$, since condition
\begin{equation}\label{eq:write_cond_NON_UI}
\exists \, \eps>0, \; \forall a>0, \; \liminf_{d\to\infty}\dPls_{\infty}(L_d>a)\geq \eps>0
\end{equation} is exactly the condition of non-uniform integrability of the martingale $(L_d)_d$ with respect to $\dPl_{\infty}$.
\end{proof}

\subsection{A Markov transition kernel on trees}

In this section, we introduce a Markov transition semi-group on trees that arises naturally in our study. Indeed, the joint distribution of the pair of trees $(T,T')$ under $\dPls_{d}$ will be, up to relabeling, interpreted as the joint distribution of $(X_0,X_{r})$, where $X_0$ is the initial state of this Markov process, distributed according to its stationary distribution $\GWl_{d}$, and $X_r$ is its state at time $r$. The time parameter $r$ is in one-to-one correspondence with the correlation parameter $s$ of our model, through the relation 
\begin{equation*}
    r=-\log(s).
\end{equation*}

For $n>0$, we define $M_{d}$ the linear operator indexed on trees of $\cY_{d}$, defined as follows:
\begin{equation}\label{eq:def_M}
	M_{d}(t,t') := \frac{\dPls_{d}(t,t')}{\dPl_{d}(t)}.
\end{equation}
$M_{d}$ is identified to the \emph{transition kernel} of the Markov chain with transitions denoted $t \underset{\lambda,s}{\longrightarrow} t'$ where $t'$ is obtained from $t$ following the following three-step procedure:
\begin{itemize}
\item[$1.$] Extracting $\tau$, a $s-$subsampling of $t$;
\item[$2.$] Draw $\tau^+$, an augmentation $\Augls_d$ of $\tau$;
\item[$3.$] Take $t'$ to be a uniform relabeling of $\tau^+$.
\end{itemize}
We next denote $M_{d}(s)$ this transition kernel to emphasize its dependence on $s$.

\begin{figure}[h]
    \centering
    \includegraphics[scale=0.42]{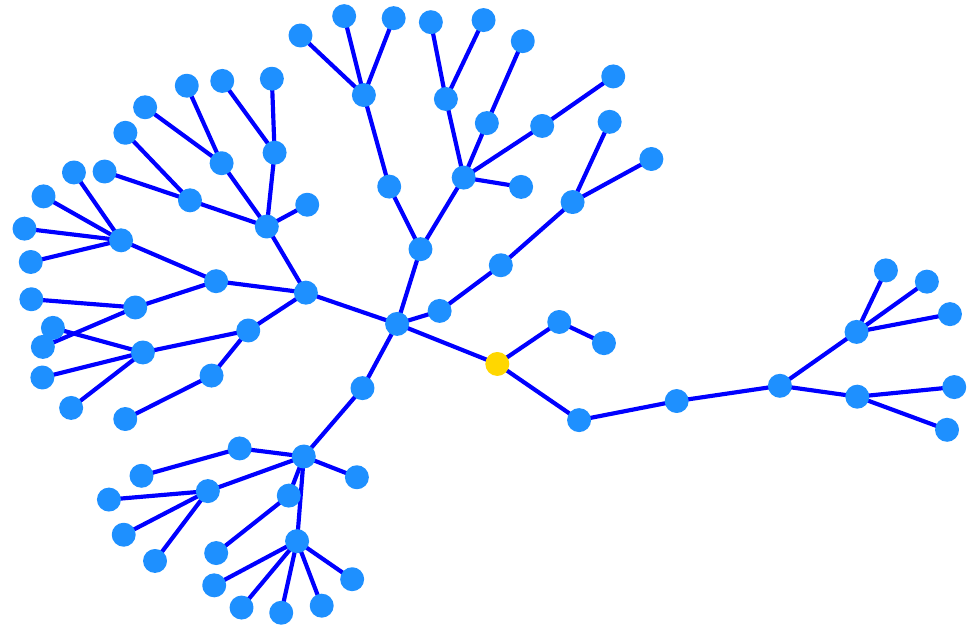}
    \hspace{0.06cm}
    \includegraphics[scale=0.42]{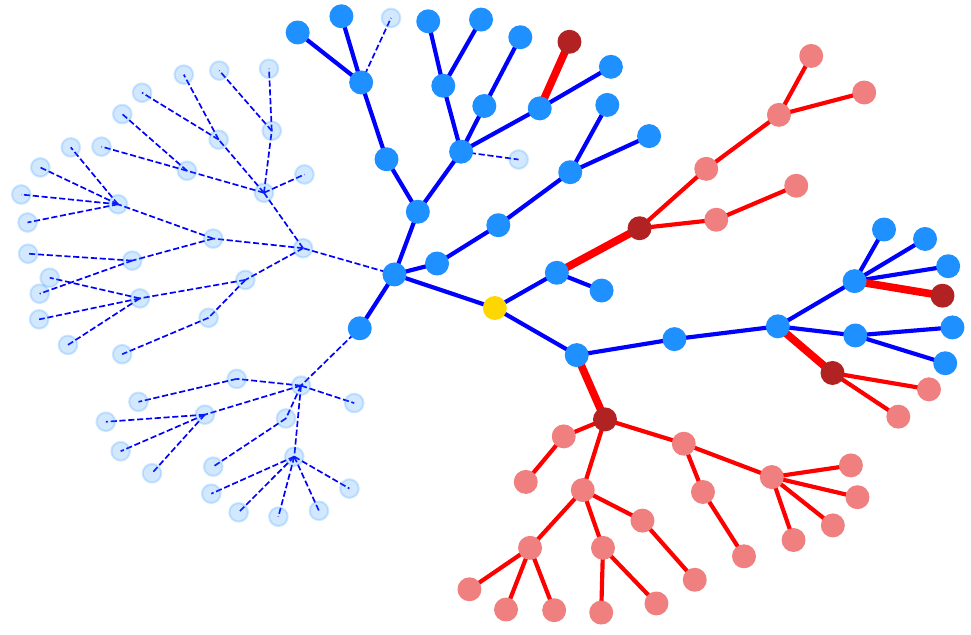}
    \caption{Example of a transition described hereabove, with $\lambda = 1.85$, $s=0.85$, at depth $d=5$. The original tree $t$ is drawn on the left. On the right, $t'$ is obtained as follows: first extracting a $s-$subsampling $\tau$ of $t$ (dashed blue edges are deleted), and drawing a $(\lambda,s)-$augmentation of $\tau$ -- first attaching new children to all vertices of $\tau$ (dark red nodes with thick edges), and attaching new Galton-Watson trees to these new children (light red nodes with standard edges).  Labels are not shown.}
    \label{fig:transition_markov}
\end{figure}

\begin{figure}[h]
     \begin{subfigure}[b]{\textwidth}
         \centering
         \includegraphics[width=0.6\textwidth]{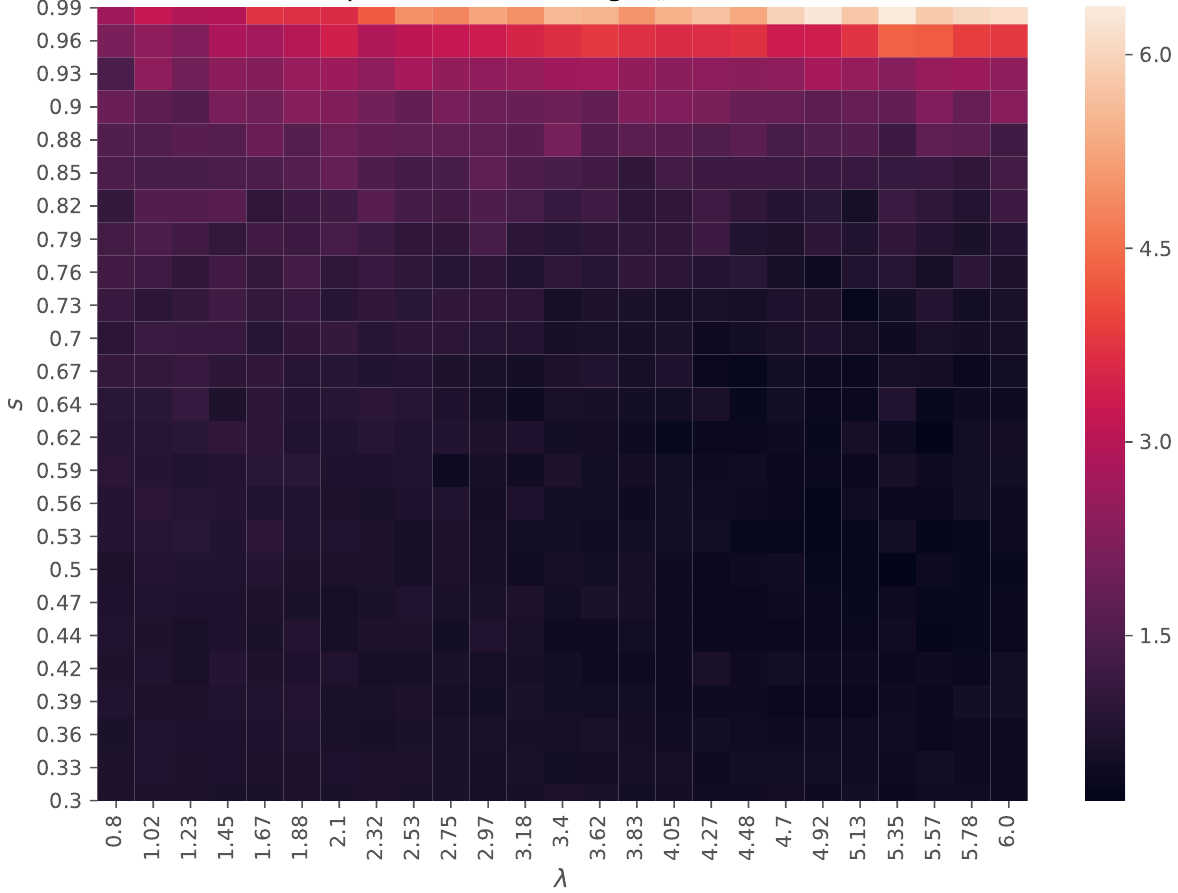}
         \caption{Empirical mean of $\log L_d$ for $d=2$. \scriptsize {$75$ simulations per value of $(\lambda,s)$.}}
         \label{fig:meanlogL_d2}
     \end{subfigure}
     \hfill
    
     \begin{subfigure}[b]{\textwidth}
        \vspace{0.25cm}
         \centering
         \includegraphics[width=0.6\textwidth]{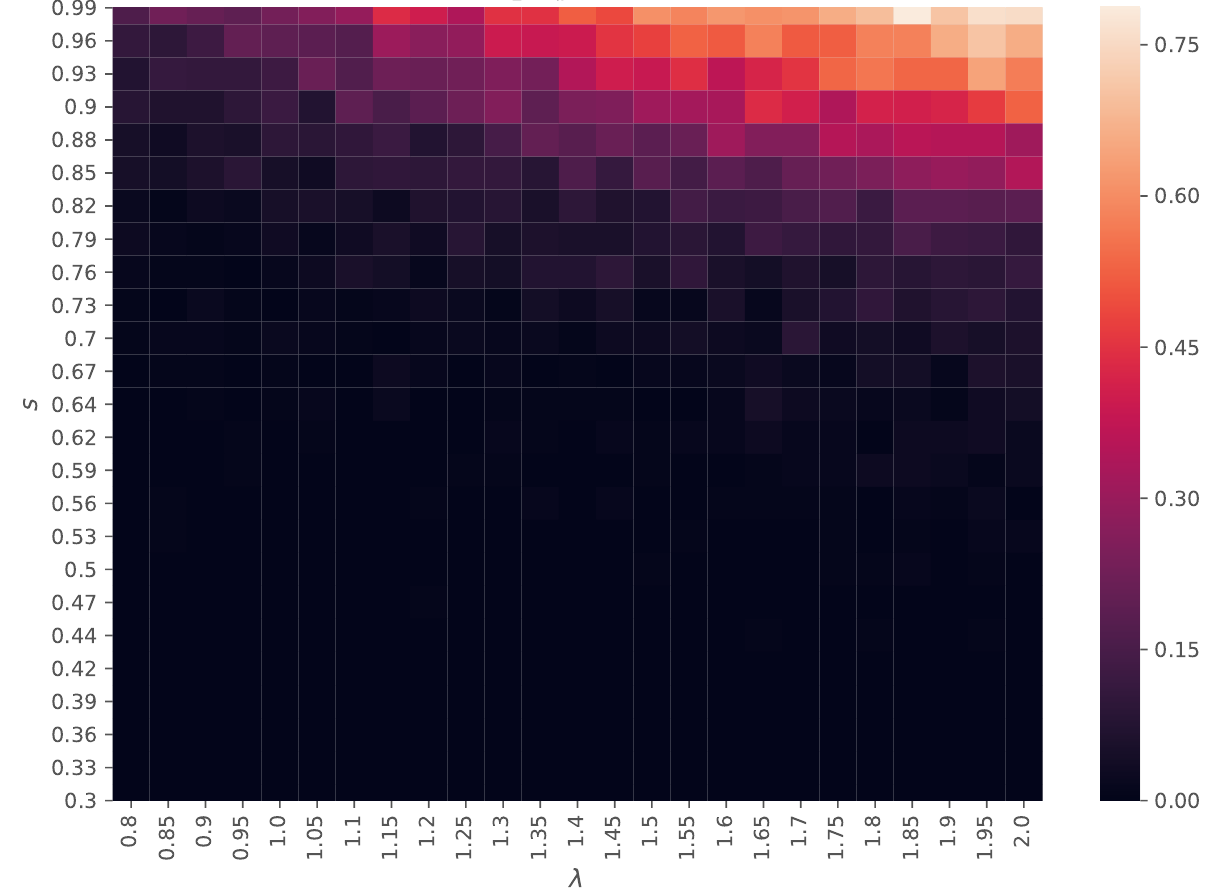}
         \caption{Estimate of $\dPls_d(L_d>\beta)$ for $d=3$, $\beta=10^2$. \scriptsize {$75$ simulations per value of $(\lambda,s)$.}}
         \label{fig:L_d3}
     \end{subfigure}
     \hfill
     
     \begin{subfigure}[b]{\textwidth}
        \vspace{0.25cm}
         \centering
         \includegraphics[width=0.6\textwidth]{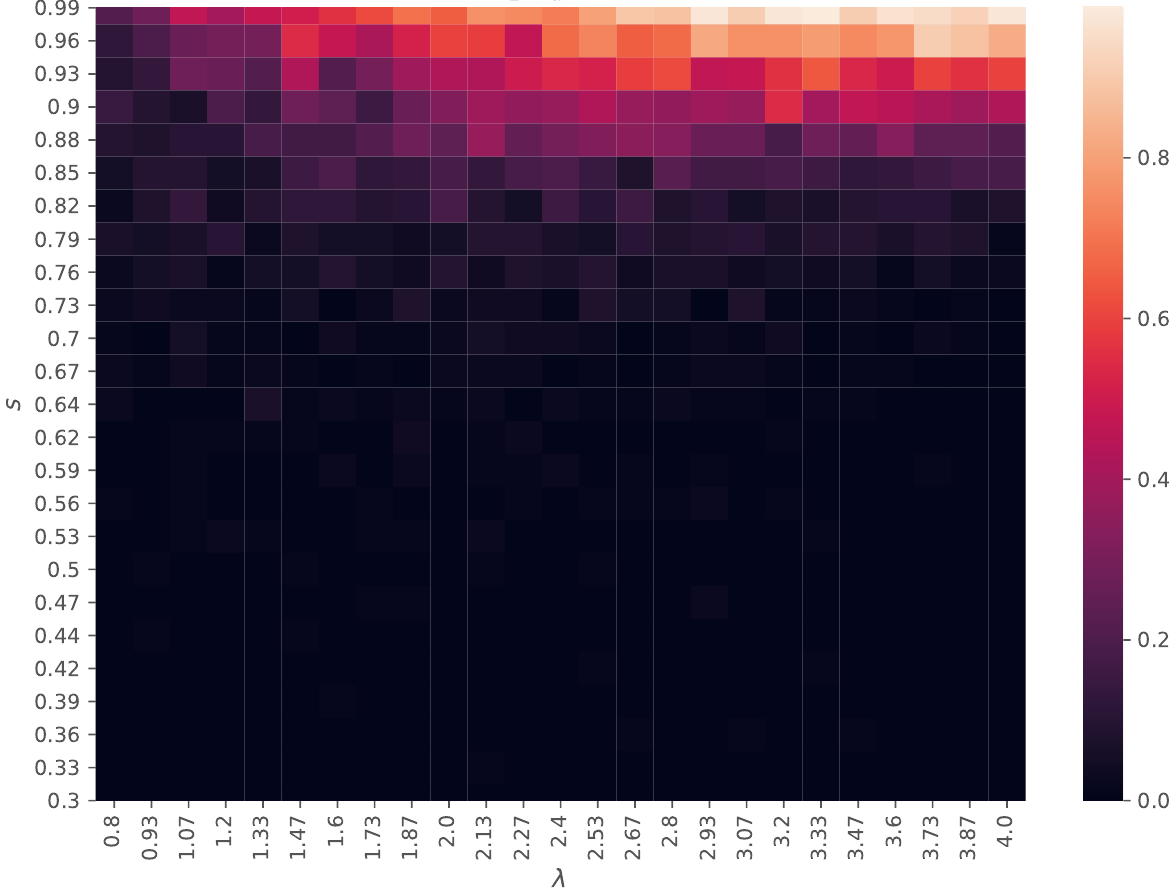}
         \caption{Estimate of $\dPls_d(L_d>\beta)$ for $d=5$, $\beta=5.10^3$. \scriptsize {$150$ simulations per value of $(\lambda,s)$.}}
         \label{fig:L_d5}
     \end{subfigure}
    \caption{Simulations of $L_d$ under model $\dPls_{s}$.}
    \label{fig:simulation_LL}
\end{figure}

A remarkable property of this kernel is the following semi-group structure:
\begin{proposition}[Consistency of kernels $M_d(s)$]\label{prop:consistency}
Let $\lambda >0$ and $s, s' \in [0,1]$. Then, for all $n \geq 1$,
\begin{equation}\label{eq:prop:consistency}
	M_d(s) M_d(s') = M_d(s') M_d(s) = M_d(ss').
\end{equation}
\end{proposition}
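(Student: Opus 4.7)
The proof proceeds by induction on the depth $d$. Since $\cY_0$ is a singleton, both sides of \eqref{eq:prop:consistency} are equal to $1$, and there is nothing to prove at $d=0$. Assume the claim at depth $d-1$. Since multiplication of reals is commutative, $M_d(ss') = M_d(s's)$, so it suffices to establish the single equality $M_d(s) M_d(s') = M_d(ss')$.

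The plan is to realize the composition explicitly on a common probability space as a three-step chain $(T, T', T'')$, with $T \sim \GWl_d$, $T'$ drawn from $M_d(s)(T,\cdot)$, and $T''$ drawn conditionally independently of $T$ from $M_d(s')(T',\cdot)$. Concretely, let $\tau^\star$ be the connected component of the root in an $s$-subsampling of $T$, so $T'$ is a uniform relabeling of the $(\lambda,s)$-augmentation of $\tau^\star$; then let $\tau^{\star\star}$ be the connected component of the root in an independent $s'$-subsampling of $T'$, and let $T''$ be a uniform relabeling of the $(\lambda,s')$-augmentation of $\tau^{\star\star}$. One then has to identify the joint law of $(T,T'')$ with that produced by $M_d(ss')$, i.e.\ with $\dPlsss_d$ where $\tau'' := \tau^\star \cap \tau^{\star\star}$ plays the role of the shared skeleton.

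The core of the argument is a root-level Poisson thinning computation. Conditional on the number $c\sim\Poi(\lambda)$ of children of the root of $T$, classify each child into: (A) children whose edge survives \emph{both} subsamplings (probability $ss'$), which are retained in $\tau''$ and hence in both $T$ and $T''$; (B) the remaining children of $T$, whose subtrees in $T$ I claim are i.i.d.\ $\GWl_{d-1}$ and independent of everything else. By Poisson thinning, $|A|\sim\Poi(\lambda ss')$ and $|B|\sim\Poi(\lambda(1-ss'))$ are independent. Symmetrically, the non-shared children of $T''$ split into (C1) children in $\tau^{\star\star}\setminus\tau''$ (coming from the $s'$-subsampling of the $(\lambda,s)$-augmentation of $\tau^\star$), contributing a $\Poi(\lambda(1-s)s')$ number with i.i.d.\ $\GWl_{d-1}$ subtrees, and (C2) fresh augmentation children of $T''$, contributing an independent $\Poi(\lambda(1-s'))$ number with i.i.d.\ $\GWl_{d-1}$ subtrees. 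The sum (C1)+(C2) then gives $\Poi(\lambda s(1-s')+\lambda(1-s'))=\Poi(\lambda(1-ss'))$ extra children of $T''$, each carrying an independent $\GWl_{d-1}$ subtree; this matches exactly the $(\lambda,ss')$-augmentation prescription around $\tau''$.

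Finally, for each class-(A) child $v$, the pair of subtrees $(T_v, T''_{v})$ is itself generated by the very same three-step composition applied at depth $d-1$, so by the inductive hypothesis its joint law is $\dP^{(\lambda,ss')}_{1,d-1}$, and the shared skeleton of this pair is the subtree of $\tau''$ rooted at $v$. Noting that an $s'$-subsampling of $\tau^\star\sim\GWls_d$ is distributed as $\GWlss_d$, and that composing two uniform relabelings yields a single uniform relabeling, the pieces assemble exactly into $(T,T'')\sim\dPlsss_d$, i.e.\ $(T,T'')\sim M_d(ss')(T,\cdot)$. The main obstacle is the careful bookkeeping in the Poisson thinning: one must verify that (i) the $\Poi(\lambda s(1-s'))$ children in $\tau^\star\setminus\tau''$, whose $\GWls$-subtrees get $(\lambda,s)$-augmented into full $\GWl$-subtrees, together with the $\Poi(\lambda(1-s))$ fresh children of $T$'s augmentation, produce precisely a $\Poi(\lambda(1-ss'))$ collection of extra children of $T$ with i.i.d.\ $\GWl_{d-1}$ subtrees independent of $\tau''$, and symmetrically on the $T''$ side; and (ii) all these collections are mutually independent conditional on $\tau''$, which is what ultimately guarantees the product structure required by $M_d(ss')$.
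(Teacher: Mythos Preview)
Your argument is correct and follows essentially the same Poisson thinning approach as the paper. The paper carries out the thinning at every vertex of the skeleton simultaneously, whereas you do it only at the root and then invoke induction on $d$ for the class-(A) subtrees; both are valid routes to the same identity.

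One slip: when summing (C1) and (C2) you write $\Poi(\lambda s(1-s')+\lambda(1-s'))=\Poi(\lambda(1-ss'))$, but $\lambda s(1-s')+\lambda(1-s')=\lambda(1+s-s'-ss')\neq \lambda(1-ss')$ in general. Your own definition of (C1) gives the parameter $\lambda(1-s)s'$, not $\lambda s(1-s')$, and indeed $\lambda(1-s)s'+\lambda(1-s')=\lambda(1-ss')$ as desired; so this is just a transposition typo, not a gap in the reasoning.
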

\begin{proof}
The proof consists in verifying that applying transitions $M_d(s)$ and $M_d(s')$ successively is equivalent in distribution to applying transition $M_d(s s')$. Let us first show that the unlabeled structures of the trees are equivalent in distribution. For $t \in \cY_d$, let us sample a sequence $t \underset{s}{\longrightarrow} \tilde{t} \underset{s'}{\longrightarrow} t'$ as follows. 

For $t \in \cY_d$, let us apply a first transition $t \underset{s}{\longrightarrow} \tilde{t}$: we extract $\tilde{\tau}$, a $s-$subsampling of $t$. To each vertex $u$ of $\tilde{\tau}$ we attach an independent number $\Poi(\lambda (1-s))$ of new children. The set of these new vertices is denoted $\tilde{V}_+$. Then, to each vertex $u \in \tilde{V}_+$ we attach an independent tree $\tilde{t}_u$ with distribution $\GW_\lambda$. We just sampled the unlabeled version of $\widetilde{t}$.

Let us now apply the second transition $\tilde{t} \underset{s'}{\longrightarrow} t'$. We sample $t$ as follows:
\begin{itemize}
    \item[$1.$] First, we sample all vertices of $\tilde{\tau}$ in $\tilde{t}$, keeping them independently with probability $s'$. The obtained subtree is denoted by $\tau$;
    \item[$2.$] To any vertex $u$ of $\tau$, we keep each previous child vertex in $\tilde{V}_+$ independently with probability $s'$, the set of children that are kept is denoted by $V^1_+$;
    \item[$3.$] To any vertex $u$ of $\tau$, we attach an independent number $\Poi(\lambda (1-s'))$ of new children. The set of these new vertices are referred to as $V^2_+$.
    \item[$4.$] To any vertex $u \in V^1_+$, we sample a transition $\tilde{t}_u \underset{s'}{\longrightarrow} t_u$, and attach $t_u$ to node $u$.
    \item[$5.$] To each vertex $v \in V^2_+$ we attach an independent tree $t_v$ with distribution $\GW_\lambda$.
\end{itemize}

Eventually we performed the following process: from the initial tree $t$, we extracted $\tau$ as a $ss'-$subsampling of $t$, and we attached to each vertex of $\tau$ some new children: the sum of two independent $\Poi(\lambda(1-s)s')$ (for children in $V^1_+$) and $\Poi(\lambda(1-s'))$ (for children in $V^2_+$), hence again of Poisson distribution with parameter $\lambda(1-s)s' + \lambda(1-s') = \lambda(1-ss')$. By steps 4. and 5., the trees attached to every vertex in $V_+ := V^1_+ \cup V^2_+$ are i.i.d. with distribution $\GW_\lambda$, independent of $t$. Hence, the unlabeled version of $t'$ can also be obtained from $t$ with the transition $t \underset{ss'}{\longrightarrow} t'$.

Finally, the definition of the tree subsampling ensures that the composition of the two relabelings in the two steps gives indeed a uniform relabeling of $t$, which ends the proof.

\end{proof}

\section{Conditions based on Kullback-Leibler divergences}\label{MPAlign:section:KL}
In the sequel we shall denote
\begin{equation}\label{eq:def_KL}
\KL_d:=\KL(\dPls_d\Vert\dPl_d)= \dEls_{d}  \left[ \log(L_d) \right].
\end{equation}
Note that by convexity of $\phi:x\to x\log(x)$, the martingale property of likelihood ratios $L_d$ under $\dPl_d$ and Jensen's inequality, the sequence $\KL_d$ is increasing with $d$ and therefore admits a limit $\KL_{\infty}$ as $d\to\infty$. 

\subsection{Phase transition for $\KL_\infty$}
Let us start with a simple proposition.

\begin{proposition}\label{MPAlign:proposition:KL_vs_Ent}
One has $\KL_d \leq \Ent (\GW^{(\lambda s)}_{d})$.
\end{proposition}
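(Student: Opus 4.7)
The plan is to interpret $\KL_d$ as a mutual information and then bound it by the entropy of the intersection tree via the chain rule and the data processing inequality.

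First, observe that under $\dPls_d$ the marginal distributions of $T$ and $T'$ are both $\GW^{(\lambda)}_d$, so that $\dPl_d = \GW^{(\lambda)}_d \otimes \GW^{(\lambda)}_d$ is precisely the product of the marginals of $\dPls_d$. It is a standard identity that the Kullback-Leibler divergence to the product of marginals equals the mutual information, hence
\[
\KL_d \;=\; \KL\!\left(\dPls_d \,\Vert\, \dPl_d\right) \;=\; I_{\dPls_d}(T;T').
\]

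Next, recall that under $\dPls_d$, the pair $(T,T')$ is built from a latent intersection tree $\tau^\star \sim \GW^{(\lambda s)}_d$ by taking two \emph{independent} $(\lambda,s)$-augmentations (and independent uniform relabelings). In particular, $T$ and $T'$ are conditionally independent given $\tau^\star$, so that $I_{\dPls_d}(T;T'\mid \tau^\star)=0$. Applying the chain rule for mutual information on the enlarged probability space $(T,T',\tau^\star)$ yields
\[
I_{\dPls_d}(T;T') \;\leq\; I_{\dPls_d}\!\big((T,\tau^\star);T'\big) \;=\; I_{\dPls_d}(\tau^\star;T') + I_{\dPls_d}(T;T'\mid \tau^\star) \;=\; I_{\dPls_d}(\tau^\star;T').
\]
Finally, since mutual information is bounded above by the entropy of either variable (these are discrete random variables taking values in the countable set of finite rooted trees of depth at most $d$), we get
\[
I_{\dPls_d}(\tau^\star;T') \;\leq\; H(\tau^\star) \;=\; \Ent\!\big(\GW^{(\lambda s)}_d\big),
\]
which combined with the chain of inequalities above concludes the proof.

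The argument is essentially a one-line application of the conditional independence structure of the planted model; the only point requiring mild care is to ensure that entropy and mutual information are well defined on the space of (canonically labeled, or unlabeled) rooted trees, which is immediate since these are discrete objects and the uniform relabeling built into $\dPls_d$ does not affect the conditional independence $T \perp T' \mid \tau^\star$. No step should pose a serious obstacle.
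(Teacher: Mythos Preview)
Your proof is correct and uses essentially the same idea as the paper: both exploit the conditional independence $T\perp T'\mid \tau^\star$ together with a data processing argument. The paper packages it as monotonicity of $\KL$ under the augmentation kernel $K_d$ (so that $\KL_d\le \KL(\cL(\tau,\tau)\Vert \cL(\tau)\otimes\cL(\tau'))=\Ent(\GW^{(\lambda s)}_d)$), whereas you phrase it via the mutual information identity $\KL_d=I(T;T')$ and the Markov chain $T-\tau^\star-T'$; these are equivalent formulations.
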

\begin{proof}
Consider the Markov transition kernel $K_d$ from $\cY_d^2$ to $\cY_d^2$ such that $K_d((\tau,\tau'),(t,t'))$ is the probability that independent $(\lambda,s)-$augmentations and relabelings of $(\tau,\tau')$ to depth $d$ produce the two trees $(t,t')$. 

Thus $\dPls_d$ is the law obtained by applying kernel $K_d$ to the distribution of $(\tau,\tau)$, where $\tau\sim \GW^{(\lambda s)}_{d}$ whereas $\dPl_d$ is the law obtained by applying kernel $K_d$ to the distribution of two independent $\GW^{(\lambda s)}_{d}$ trees $(\tau,\tau')$. Standard monotonicity properties of Kullback-Leibler divergence then guarantee that $\KL_d$ is upper-bounded by $\KL(\cL(\tau,\tau)\Vert\cL(\tau,\tau'))$. This divergence reads
\begin{equation*}
    \sum_{\tau\in\cY_d}\GW^{(\lambda s)}_{d}(\tau)\log\left( \frac{\GW^{(\lambda s)}_{d}(\tau)}{\GW^{(\lambda s)}_{d}(\tau)^2}\right)=\Ent(\GW^{(\lambda s)}_{d}).
\end{equation*}
\end{proof}
This readily implies the following
\begin{corollary}\label{MPAlign:cor:impossibility_KL}
Assume $\lambda s<1$. Then 
\begin{equation}\label{MPAlign:eq:cor:impossibility_KL}
    \KL_\infty=\lim_{d\to\infty} \KL_d \leq \frac{1}{1-\lambda s}\Ent(\pi_{\lambda s})<+\infty.
\end{equation}
\end{corollary}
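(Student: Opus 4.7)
The plan is to combine the bound of Proposition \ref{MPAlign:proposition:KL_vs_Ent} with an explicit computation of the entropy $h_d := \Ent(\GW^{(\lambda s)}_d)$ of the Galton--Watson tree truncated at depth $d$, which benefits from a clean recursion thanks to the branching structure.

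First, recall that trees in $\cY_d$ are canonically (plane-ordered) labeled, so a tree sampled from $\GW^{(\lambda s)}_d$ is fully described by the tuple $(C, T_1, \ldots, T_C)$, where $C \sim \pi_{\lambda s}$ is the offspring count of the root and, conditionally on $C$, the subtrees $T_1, \ldots, T_C$ are i.i.d.\ with distribution $\GW^{(\lambda s)}_{d-1}$. By the chain rule for Shannon entropy,
\begin{equation*}
h_d \;=\; \Ent(\pi_{\lambda s}) + \dE[C]\, h_{d-1} \;=\; \Ent(\pi_{\lambda s}) + \lambda s \cdot h_{d-1},
\end{equation*}
with initial condition $h_0 = 0$ since $\GW^{(\lambda s)}_0$ is a Dirac mass at the trivial tree.

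Iterating this first-order linear recursion yields
\begin{equation*}
h_d \;=\; \Ent(\pi_{\lambda s}) \sum_{k=0}^{d-1} (\lambda s)^k \;=\; \Ent(\pi_{\lambda s}) \cdot \frac{1 - (\lambda s)^d}{1 - \lambda s}.
\end{equation*}
Under the assumption $\lambda s < 1$, the geometric sum converges, giving $h_d \nearrow \Ent(\pi_{\lambda s}) / (1 - \lambda s)$ as $d \to \infty$. The finiteness of $\Ent(\pi_{\lambda s})$ is a standard fact: the Poisson tail decays super-exponentially, so $\sum_k \pi_{\lambda s}(k) \log k! < \infty$ and the entropy is a finite explicit function of $\lambda s$.

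Applying Proposition \ref{MPAlign:proposition:KL_vs_Ent} gives $\KL_d \leq h_d$ for every $d \geq 0$, and since the sequence $\KL_d$ is non-decreasing and bounded (hence convergent) as already noted, passing to the limit concludes the proof of \eqref{MPAlign:eq:cor:impossibility_KL}. There is no real obstacle: the only small point requiring attention is that the canonical labeling convention makes the chain-rule decomposition straightforward (children carry a natural order so no symmetry factor intervenes), which is exactly what allows the linear recursion for $h_d$ to close.
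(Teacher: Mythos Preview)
The proposal is correct and follows essentially the same approach as the paper: apply Proposition~\ref{MPAlign:proposition:KL_vs_Ent} and compute $\Ent(\GW^{(\lambda s)}_d)$ via a linear recursion coming from the chain rule, then sum the resulting geometric series. The only cosmetic difference is that you condition on the root's offspring (giving $h_d = \Ent(\pi_{\lambda s}) + \lambda s\, h_{d-1}$) whereas the paper conditions on the tree up to depth $d-1$ (giving $h_d = h_{d-1} + (\lambda s)^{d-1}\Ent(\pi_{\lambda s})$), but these unwind to the same closed form.
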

\begin{proof}
Entropy $\Ent(\GW^{(\lambda s)}_{d})$ can be evaluated by the conditional entropy formula as
\begin{equation*}
    \Ent(\GW^{(\lambda s)}_{d})=\Ent(\GW^{(\lambda s)}_{d-1})+(\lambda s)^{d-1}\Ent(\pi_{\lambda s}).
\end{equation*}
The result follows from Proposition \ref{MPAlign:proposition:KL_vs_Ent}.
\end{proof}

We then have the following result:

\begin{proposition}\label{prop:suff_cond_KL}
Existence of one-sided tests holds if $\lambda s>1$ and $\KL_\infty = +\infty$, whereas it fails if $\KL_\infty <+\infty$. 
\end{proposition}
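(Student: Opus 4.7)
The plan is to reduce both directions to questions about uniform integrability (UI) of the martingale $(L_d)$ under $\dPl_\infty$. Indeed, by the already-proved equivalence $(i)\iff(iv)$ of Theorem \ref{MPAlign:theorem:main_result_TREES} (see Section \ref{proof_i_iii_iv_TH1}), a one-sided test exists if and only if $(L_d)$ fails to be UI under $\dPl_\infty$, equivalently $\ell := \dEl_\infty[L_\infty] < 1$. From the fixed-point equation \eqref{eq:fixed_point_l}, $\ell$ must lie in $\{1, \pext(\lambda s)\}$, and $\pext(\lambda s) < 1$ precisely when $\lambda s > 1$. The whole task is thus to determine which of these two values $\ell$ takes.

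For the negative direction ($\KL_\infty < +\infty \Rightarrow$ no test), I apply the de la Vallée-Poussin criterion to the nonnegative convex function $\phi(x) = x\log x + e^{-1}$, which satisfies $\phi(x)/x \to \infty$ as $x \to \infty$. Since $\sup_d \dEl_d[\phi(L_d)] = \KL_\infty + e^{-1} < +\infty$, the sequence $(L_d)$ is UI under $\dPl_\infty$, so $\ell = 1$, ruling out any one-sided test. This part is routine.

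For the positive direction, I would argue by contradiction: suppose $\ell = 1$, i.e.\ UI holds. Then $\dPls_\infty \ll \dPl_\infty$ with Radon-Nikodym derivative $L_\infty$, and combining the chain rule / monotone growth of restricted Kullback-Leibler divergences with a.s.\ convergence $L_d \to L_\infty$ under $\dPl_\infty$ yields
\begin{equation*}
\KL_\infty \;=\; \lim_{d\to\infty}\KL_d \;=\; \KL(\dPls_\infty \Vert \dPl_\infty) \;=\; \dEl_\infty[L_\infty \log L_\infty].
\end{equation*}
The plan is then to bypass the entropic computation and establish mass loss of the martingale geometrically: on the event $\mathcal{S}$ that the intersection tree $\tau^\star$ survives (of $\dPls_\infty$-probability $1 - \pext(\lambda s) > 0$, thanks to $\lambda s > 1$), show that $L_d \to +\infty$ $\dPls_\infty$-almost surely. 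Under UI, $\dPls_\infty \ll \dPl_\infty$ gives that $L_\infty = 0$ on a $\dPl_\infty$-event of mass at least $1 - \pext(\lambda s) > 0$, forcing $\ell \leq \pext(\lambda s) < 1$ and thus contradicting UI.

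The main obstacle is the growth $L_d \to +\infty$ on $\mathcal{S}$ under $\dPls_\infty$, which is essentially the content of condition $(v)$ of Theorem \ref{MPAlign:theorem:main_result_TREES}. My plan to establish it is to retain only the \emph{perfectly aligned} summand in the recursion \eqref{eq:MPAlign:LR_rec_2}, namely the pair $(\sigma,\sigma')$ that matches the $k$ children of the root of $\tau^\star$ onto themselves in both $T$ and $T'$. This produces a product-form lower bound for $L_d$ driven by i.i.d.\ copies along the branches of $\tau^\star$, and unwinding the recursion down $\tau^\star$ together with a Kesten-Stigum-type martingale on the supercritical Galton-Watson tree $\tau^\star$ (mean offspring $\lambda s > 1$) should yield the desired exponential growth on the survival event $\mathcal{S}$. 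The delicate point is to check that the remaining summands, coming from mismatched children, do not cancel this growth; this is where I expect the condition $\KL_\infty = +\infty$ (equivalently, sufficiently non-trivial per-level information gain) to enter, and it will be the technical heart of the argument.
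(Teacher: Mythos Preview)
Your negative direction via de la Vall\'ee-Poussin is correct and arguably cleaner than the paper's argument, which instead proves the contrapositive by the direct bound $\KL_d \geq \dPls_d(L_d > a)\log a - e^{-1}$ and sends $a\to\infty$ using the one-sided test hypothesis.

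For the positive direction your framework is right (lower-bound $L_d$ by the summand aligned with $\tau^\star$, exploit supercritical growth of $\tau^\star$), but the plan has a concrete gap in where you think $\KL_\infty=+\infty$ enters. First, your worry about ``remaining summands cancelling the growth'' is misplaced: every term in the recursion \eqref{eq:MPAlign:LR_rec_2} is nonnegative, so dropping all but the aligned one is already a valid lower bound---nothing can cancel. Second, if you unwind the aligned summand \emph{all the way} down $\tau^\star$, you obtain a product of $\psi$ factors (cf.\ Lemma~\ref{lemma:geo_rec_KL}) whose per-node log-expectation is $\lambda s(\log(s/\lambda)+1)+2\lambda(1-s)\log(1-s)$, which is typically negative; so this bound alone does \emph{not} yield growth, and your Kesten--Stigum intuition for $|\cL_d(\tau^\star)|$ cannot save it. The paper's device is a two-scale split: for integers $d,m$, unwind only to depth $d$, then keep the full ratio $L_m$ at each of the $|\cL_d(\tau^\star)|\sim w(\lambda s)^d$ leaves. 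These leaf ratios are i.i.d.\ under $\dPls_m$ with $\dEls_m[\log L_m]=\KL_m$, so by the law of large numbers the leaf product contributes $\exp\bigl(|\cL_d(\tau^\star)|(\KL_m-o(1))\bigr)$. Since $\KL_m\to+\infty$, one picks $m$ large but \emph{fixed} so that $\KL_m$ swamps the possibly negative $\psi$ contribution, forcing $L_{d+m}\to+\infty$ on the survival event. That is the technical heart; the hypothesis $\KL_\infty=+\infty$ enters as the per-leaf payoff, not as a cancellation safeguard. (Your contradiction wrapper via UI is fine but unnecessary: once $L_d\to+\infty$ with positive $\dPls$-probability, one-sided testability follows directly.)
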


\begin{proof}
Assume existence of one-sided tests. As previously mentioned, equivalently there exists $\eps>0$ such that
\begin{equation*}
    \forall a>0,\; \liminf_{d\to\infty}\dPls_{d}(L_d>a)\geq \eps.
\end{equation*}
Fix $a>0$, and define for $d\in\dN$, $C_d := \{x\in\cY_d^2: L_d(x)>a\}$.
Write then, noting $\phi(u):=u\log(u)$:
\begin{flalign*}
\KL_d &\geq \dPls_d(C_d)\log(a)+\sum_{x\in\overline{C_d}}\dPls_d(x)\log \frac{\dPls_d(x)}{\dPl_d(x)}\\
&\geq \dPls_d(C_d)\log(a) +\dPl_d(\overline{C_d})\sum_{x\in\overline{C_d}}\frac{\dPl_d(x)}{\dPl_d(\overline{C_d})}\phi(L_{d}(x))\\
&\overset{(a)}{\geq} \dPls_d(C_d)\log(a) + \dPl_d(\overline{C_d}) \phi\left(\sum_{x\in \overline{C_d}}\frac{\dPl_d(x)}{\dPl_d(\overline{C_d})}L_d(x)\right)\\
&=\dPls_d(C_d)\log(a)+\dPls_d(\overline{C_d})\log\left(\dPls_d(\overline{C_d})\right) - \dPls_d(\overline{C_d}) \log({\dPl_d(\overline{C_d})})\\
&\geq \dPls_d(C_d)\log(a) +\inf_{u\in[0,1]}\phi(u) \geq \dPls_d(C_d)\log(a) -e^{-1} \, .
\end{flalign*}
We used convexity of $\phi$ in $(a)$. It thus follows from characterization of one-sided testability that for all $a>0$, 
\begin{equation*}
    \KL_\infty\geq \eps \log(a)-e^{-1},
\end{equation*} and thus $\KL_\infty=+\infty$.

Conversely, assume  $\lambda s>1$ and $\KL_\infty=+\infty$. Let under $\dPls_{\infty}$ define $$w:=\lim_{d\to\infty} \card{\cL_d(\tau^{\star})} (\lambda s)^{-d} \, .$$
On the event that $\tau^{\star}$ survives, which has strictly positive probability for $\lambda s>1$, it holds that $w>0$. In addition, we let $\pi^{\star}$, $(\pi')^{\star}$ denote the injections from $\tau^{\star}$ to $T$ and $T'$ respectively that result from uniform shuffling of the augmentations of $\tau^{\star}$.

Let $d,m$ be two integers. One then has the lower bound:
\begin{flalign*}
L_{d+m}(T,T')& \geq \prod_{u \in \cV_{d-1}(\tau^{\star})}\psi(c_{\tau^{\star}}(u),c_{T}(\pi^{\star}(u)),c_{T'}((\pi')^{\star}(u))\prod_{u\in \cL_d(\tau^{\star})}L_m(T_{\pi^{\star}(u)},T'_{(\pi')^{\star}(u)})
\\
&\geq \prod_{u\in \cV_{d-1}(\tau^{\star})}\psi(c_{\tau^{\star}}(u),c_{T}(\pi^{\star}(u)),c_{T'}((\pi')^{\star}(u))) e^{\card{\cL_d(\tau^{\star})}[\dEls_{m} \left[\log L_m\right]-o(1)]}.
\end{flalign*}
For $d$ large, by the law of large numbers, the first product is with high probability lower-bounded by $e^{C w (\lambda s)^d}$ for some fixed constant $C$. Choosing $m$ of order 1 but sufficiently large, since by assumption $\lim_{m\to\infty}\dEls_{m} \left[\log(L_m)\right]=+\infty$, we can ensure that the second factor is larger than $e^{C' w (\lambda s)^d}$ for some arbitrary $C'$. Taking $C'$ large enough ensures that, on the event that $\tau^{\star}$ survives, $\lim_{d\to\infty}L_d=+\infty$ almost surely. This readily implies one-sided testability.
\end{proof}

\subsubsection{Proof of $(i) \iff (ii) \iff (v)$ in Theorem \ref{MPAlign:theorem:main_result_TREES}}
\label{proof_i_ii_TH1}
\begin{proof}
Proposition  \ref{prop:suff_cond_KL} gives the implication $(ii)\implies (i)$. Its proof further gives $(ii)\implies (v)$. The converse $(v)\implies (ii)$ is obvious.  The second statement in Proposition  \ref{prop:suff_cond_KL} gives $(i)\implies \KL_\infty=+\infty$. To obtain that $(i)\implies (ii)$ and conclude, it thus only remains to show that $(i)\implies \lambda s>1$. 
 
As will be shown in Section \ref{MPAlign:section:graph_matching}, one-sided testability implies (polynomial-time) feasibility of partial graph alignment. However, \cite{ganassali2021impossibility} established that partial alignment is not feasible when $\lambda s \leq 1$ (see Chapter \ref{chapter:impossibility}). This establishes  $(i)\implies \lambda s>1$ as required.
\end{proof}

\subsection{Applications}
To apply condition $(ii)$ of Theorem \ref{MPAlign:theorem:main_result_TREES}, let us first establish the following
\begin{lemma}\label{lemma:geo_rec_KL}
For all $d \geq 1$, one has
\begin{equation}\label{eq:lemma:geo_rec_KL}
\KL_{d+1}\geq \lambda s \KL_d +\lambda s \left(\log(s/\lambda)  +1\right) +2\lambda (1-s)\log(1-s).
\end{equation}
\end{lemma}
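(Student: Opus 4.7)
The plan is to exploit the recursive formula \eqref{eq:MPAlign:LR_rec_2} for $L_{d+1}$ by keeping only one well-chosen summand, namely the one corresponding to the ``true'' planted matching between the children of the roots under $\dPls_{d+1}$. More precisely, under $\dPls_{d+1}$, write $(T,T')$ as obtained from the intersection tree $\tau^\star \sim \GWls_{d+1}$ via independent $(\lambda,s)$-augmentations followed by uniform relabelings. Let $k := c_{\tau^\star}(\rho) \sim \Poi(\lambda s)$, and let $c, c'$ denote the degrees of the roots of $T, T'$; conditional on $k$, we have $c - k, c' - k \sim \Poi(\lambda(1-s))$ independently, and the uniform relabelings at the root yield injections $\sigma^\star \in \cS(k,c)$, $(\sigma')^\star \in \cS(k,c')$ such that the $k$ pairs $(T_{\sigma^\star(u)}, T'_{(\sigma')^\star(u)})$ are i.i.d.\ $\dPls_d$.

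The key observation is that there are exactly $k!$ distinct pairs $(\sigma,\sigma')\in \cS(k,c)\times\cS(k,c')$ that realize the same set of matched pairs as $(\sigma^\star,(\sigma')^\star)$ (obtained by permuting the label set $[k]$), and each of them contributes the same product $\prod_{u=1}^k L_d(T_{\sigma^\star(u)}, T'_{(\sigma')^\star(u)})$ to the sum \eqref{eq:MPAlign:LR_rec_2}. Since all terms in that sum are nonnegative, dropping every other summand yields
\begin{equation*}
L_{d+1}(T,T')\ \geq\ k!\,\psi(k,c,c')\prod_{u=1}^{k}L_d\bigl(T_{\sigma^\star(u)},T'_{(\sigma')^\star(u)}\bigr).
\end{equation*}
Using the explicit form $\psi(k,c,c') = e^{\lambda s}s^k(1-s)^{c+c'-2k}/(\lambda^k k!)$, the factor $k!$ cancels and taking logarithms gives
\begin{equation*}
\log L_{d+1}(T,T')\ \geq\ \lambda s + k\log(s/\lambda) + (c+c'-2k)\log(1-s) + \sum_{u=1}^{k}\log L_d\bigl(T_{\sigma^\star(u)},T'_{(\sigma')^\star(u)}\bigr).
\end{equation*}

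The last step is to take expectation under $\dPls_{d+1}$ and plug in the marginal means $\dE[k] = \lambda s$, $\dE[c+c'-2k] = 2\lambda(1-s)$, together with the fact that, conditionally on $k$, the pairs $(T_{\sigma^\star(u)},T'_{(\sigma')^\star(u)})$ are i.i.d.\ $\dPls_d$, so that Wald's identity gives $\dEls_{d+1}\!\bigl[\sum_{u=1}^k \log L_d(\cdot,\cdot)\bigr] = \lambda s\,\KL_d$. Combining these contributions produces exactly the announced inequality
\begin{equation*}
\KL_{d+1}\ \geq\ \lambda s\,\KL_d + \lambda s\bigl(\log(s/\lambda)+1\bigr) + 2\lambda(1-s)\log(1-s).
\end{equation*}
The proof is essentially algebraic once the right lower bound is identified; there is no real obstacle, the only subtle point being the correct accounting of the $k!$ equivalent summands, which is what makes the bound tight enough to be useful when iterated.
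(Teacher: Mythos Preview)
Your proof is correct and follows essentially the same approach as the paper: lower-bound $L_{d+1}$ by retaining only the summand in the recursion \eqref{eq:MPAlign:LR_rec_2} corresponding to the planted matching at the root (counting the $k!$ equivalent pairs of injections), then take logarithms and expectations using $\dE[k]=\lambda s$, $\dE[c+c'-2k]=2\lambda(1-s)$, and Wald's identity. The paper's argument is identical up to notation (it writes $c$ for your $k$ and $c+\Delta$, $c+\Delta'$ for your $c,c'$).
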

\begin{proof}
Let $c$ denote under $\dPls_{d}$ the degree of $\tau^{\star}$'s root, and $c+\Delta$ (respectively $c+\Delta')$ the degree of the root nodes in $T$ and $T'$. By the recursive formula for $L_d$, considering only the term for $k=c$ in the first summation as well as the injections $\sigma:[c]\to[c+\Delta]$, $\sigma':[c]\to[c+\Delta']$ that correctly match the $c$ children of $\tau^{\star}$'s root in $T$ and $T'$, of which there are exactly $c!$ pairs, one has:
\begin{flalign*}
L_d(T,T')&\geq \psi(c,c+\Delta,c+\Delta') \times c!\times \prod_{u=1}^{c} L_{d-1}(T_u,T'_u) \\
&\geq e^{\lambda s}\frac{s^{c}(1-s)^{\Delta+\Delta'}}{\lambda ^{c}} \times \prod_{u=1}^{c} L_{d-1}(T_u,T'_u) .
\end{flalign*}
Taking logarithms and then expectations, since $\dEls_{d} \left[c\right]=\lambda s$ and $\dEls_{d} \left[ \Delta \right]=\dEls_{d} \left[\Delta'\right]=\lambda(1-s)$, the result follows. 
\end{proof}

We then have
\begin{corollary}\label{cor:suff_cond_KL}
Assume that $\lambda s>1$ and 
\begin{equation}\label{eq:cor:suff_cond_KL}
\KL_1 > \frac{1}{\lambda s-1} \left[\lambda s(\log(\lambda/s)-1)-2\lambda(1-s)\log(1-s)\right].
\end{equation}
Then $\KL_\infty=+\infty$.
\end{corollary}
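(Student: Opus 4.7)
The plan is to combine the geometric recursion from Lemma \ref{lemma:geo_rec_KL} with the observation that the hypothesis on $\KL_1$ is precisely the condition for $\{\KL_d\}$ to lie strictly above the (unique) fixed point of the associated affine recursion. Concretely, I would set
\[
A := \lambda s\bigl(\log(s/\lambda)+1\bigr) + 2\lambda(1-s)\log(1-s),
\]
so that Lemma \ref{lemma:geo_rec_KL} reads $\KL_{d+1} \geq \lambda s \cdot \KL_d + A$ for all $d \geq 1$.

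Since $\lambda s > 1$ by assumption, the affine map $f : u \mapsto \lambda s \cdot u + A$ has a unique fixed point
\[
u^\star \;=\; \frac{-A}{\lambda s - 1} \;=\; \frac{1}{\lambda s - 1}\bigl[\lambda s(\log(\lambda/s)-1) - 2\lambda(1-s)\log(1-s)\bigr],
\]
using $-\log(s/\lambda) = \log(\lambda/s)$. The hypothesis \eqref{eq:cor:suff_cond_KL} is exactly $\KL_1 > u^\star$.

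Setting $v_d := \KL_d - u^\star$, the recursion rewrites as $v_{d+1} \geq \lambda s \cdot v_d$. Since $v_1 > 0$ and $\lambda s > 1$, an immediate induction gives $v_d \geq (\lambda s)^{d-1} v_1 \to +\infty$, and therefore $\KL_d \to +\infty$, yielding $\KL_\infty = +\infty$.

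There is no real obstacle here: the work has been done in Lemma \ref{lemma:geo_rec_KL}, and the corollary is simply the threshold condition for a linear recursion with multiplier $\lambda s > 1$ to escape to infinity. The only minor care to take is the sign bookkeeping in identifying $u^\star$ with the right-hand side of \eqref{eq:cor:suff_cond_KL}.
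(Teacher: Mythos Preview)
Your proof is correct and follows the same overall strategy as the paper: iterate the affine lower bound from Lemma~\ref{lemma:geo_rec_KL} to force geometric blow-up. The one difference is the centering. You subtract the fixed point $u^\star$ of the affine map, so that the hypothesis $\KL_1 > u^\star$ immediately gives $v_1>0$ and the induction runs without further input. The paper instead subtracts $\KL_1$ itself, obtaining $\KL_{d+1}-\KL_1 \geq \lambda s(\KL_d-\KL_1)$; this requires a separate verification that $\KL_2>\KL_1$, which the paper carries out via Jensen's inequality and the strict convexity of $x\log x$ applied to the martingale $(L_d)$. Your centering is cleaner here, since it makes the strict inequality in \eqref{eq:cor:suff_cond_KL} do all the work and avoids the extra Jensen step entirely.
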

\begin{proof}
This follows from \eqref{eq:lemma:geo_rec_KL}: indeed together with \eqref{eq:cor:suff_cond_KL} it implies that for all $d \geq 1$, 
$$\KL_{d+1}-\KL_1 \geq \lambda s(\KL_{d}-\KL_1),$$
hence $\KL_{d}$ diverges geometrically to infinity, provided we have $\KL_2>\KL_1$. The latter property is established by writing 
$$
\KL_2=\dEl_{\infty} \left[\phi(L_2) \right]=\dEl_{\infty}\left[ \dEl_{\infty}\left[ \phi(L_2)|\cF_1\right]\right]
$$ 
where $\phi(x)=x\log(x)$ is strictly convex. Jensen's inequality thus guarantees $\KL_2 \geq \KL_1=\dEl_{\infty} \left[\phi(L_1) \right]$, with equality only if almost surely, $L_2=L_1$. However this almost sure equality does not hold, hence the result. 
\end{proof}
These results have the following consequence:
\begin{theorem}\label{MPAlign:theorem:suff_cond_KL}
Assume that $\lambda \in(1,e)$. Let 
\begin{equation}
s^*(\lambda):=\sup\{s\in[0,1]: s(\log(\lambda/s)-1)-2(1-s)\log(1-s)\geq 0\}.
\end{equation}
Then $s^*(\lambda)<1$, and under the conditions
\begin{equation} \label{condition:lambda_s_KL}
    \lambda \in(1,e) \mbox{ and } s\in(s^*(\lambda),1],
\end{equation} one-sided detectability holds. 
\end{theorem}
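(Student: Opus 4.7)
The plan is to use Corollary \ref{cor:suff_cond_KL} as the main workhorse. Introduce the auxiliary function
\begin{equation*}
g(s) := s(\log(\lambda/s)-1) - 2(1-s)\log(1-s), \qquad s \in [0,1],
\end{equation*}
so that $s^*(\lambda) = \sup\{s \in [0,1]: g(s) \geq 0\}$. Note that $g$ is continuous on $[0,1]$ (with $g(0)=0$ by continuous extension) and that the RHS of condition \eqref{eq:cor:suff_cond_KL} is precisely $\lambda g(s)/(\lambda s -1)$.

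First I would show $s^*(\lambda) < 1$ for $\lambda \in (1,e)$. A direct evaluation gives $g(1) = \log \lambda - 1$, which is strictly negative whenever $\lambda < e$. By continuity there is a neighborhood $(1-\eta,1]$ of $1$ on which $g<0$, hence $s^*(\lambda) \leq 1-\eta < 1$.

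Next, fix $s \in (s^*(\lambda),1]$. By definition of the supremum, $g(s) < 0$, equivalently $\lambda g(s) < 0$. Since we also have $s$ close enough to $1$ so that $\lambda s > 1$ (which holds at least on a subinterval $(\max(s^*(\lambda),1/\lambda),1]$; up to replacing $s^*(\lambda)$ by this max one may assume this throughout, using that $\lambda s \to \lambda >1$ as $s\to 1$), the quantity $\lambda g(s)/(\lambda s -1)$ is strictly negative. On the other hand, one always has $\KL_1 > 0$: indeed $\dPls_1 \neq \dPl_1$ since the marginal law of $(c,c')$ under $\dPls_1$ is a non-trivial correlated bivariate Poisson while under $\dPl_1$ it is a product of two $\Poi(\lambda)$. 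Hence condition \eqref{eq:cor:suff_cond_KL} of Corollary \ref{cor:suff_cond_KL} is satisfied, and that corollary gives $\KL_\infty = +\infty$. Combined with $\lambda s > 1$, the implication $(ii) \Rightarrow (i)$ of Theorem \ref{MPAlign:theorem:main_result_TREES} yields the existence of a one-sided test, as desired.

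There is no serious obstacle here; the proof is essentially a one-line application of Corollary \ref{cor:suff_cond_KL} combined with a continuity argument at $s=1$. The only mild subtlety is ensuring $\lambda s > 1$ so that the corollary applies, which comes for free as $s \to 1^-$ since $\lambda > 1$, and which is why the restriction $\lambda < e$ (ensuring $g(1)<0$) is what drives the result rather than a more delicate analysis of $g$ on the whole interval.
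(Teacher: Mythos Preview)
Your proof is correct and follows essentially the same route as the paper: evaluate $g(1)=\log\lambda-1<0$ to obtain $s^*(\lambda)<1$ by continuity, then note that for $s>s^*(\lambda)$ the right-hand side of \eqref{eq:cor:suff_cond_KL} is non-positive while $\KL_1>0$, and conclude via Corollary \ref{cor:suff_cond_KL} (the paper then cites Proposition \ref{prop:suff_cond_KL}, which is exactly your implication $(ii)\Rightarrow(i)$ of Theorem \ref{MPAlign:theorem:main_result_TREES}). You are in fact slightly more careful than the paper in flagging the hypothesis $\lambda s>1$ needed to invoke the corollary, which the paper's proof leaves implicit.
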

\begin{proof}
The fact that $s^*(\lambda)<1$ follows by continuity, since for $s=1$ the function 
$$
s\to s(\log(\lambda/s)-1)-2(1-s)\log(1-s)
$$
evaluates to $\log(\lambda)-1$, which is negative by the assumption $\lambda <e$. By definition, for $s\in (s^*(\lambda),1]$, the right-hand side of \eqref{eq:cor:suff_cond_KL} is less than or equal to zero. Since $\KL_1>0$, the result is a consequence of Corollary \ref{cor:suff_cond_KL} and Proposition \ref{prop:suff_cond_KL}.  
\end{proof}
\begin{remark}
A result similar to that of Theorem \ref{MPAlign:theorem:suff_cond_KL} follows from \cite{Ganassali20a}. The present derivation is however more direct, and allows for more explicit upper bound $\lambda_0=e$ on the range of values of $\lambda$ considered, as well as characterization of the function $s^*(\lambda)$ involved.
\end{remark}

Condition \eqref{eq:cor:suff_cond_KL} of Corollary \ref{cor:suff_cond_KL} can also be used to identify conditions on $s$ for one-sided testability for large values of $\lambda$, based on corresponding evaluations of $\KL_1$. However, the resulting conditions do not appear as sharp as those obtained by the analysis of automorphisms of $\tau^{\star}$, that is the object of the next Section.

\section{Number of automorphisms of Galton-Watson trees}\label{MPAlign:section:autos_GW}
In this Section, we show how counting automorphisms of Galton-Watson trees gives a sufficient condition for the existence of one-sided tests in the tree correlation detection problem, and provide evaluations of this number of automorphisms. 

\subsection{A lower bound on the likelihood ratio}
Under $\dPls_{\infty}$, recall that $\tau^{\star}$ is the \emph{true} intersection tree used to perform correlated construction of $T$ and $T'$, and  $\pi^{\star}$, $(\pi')^{\star}$ denote the injections from $\tau^{\star}$ to $T$ and $T'$ respectively that result from uniform shuffling of the augmentations of $\tau^{\star}$. Without loss of generality, we assume in this section that $\pi^{\star}$ and $(\pi')^{\star}$ are the identity map. 
We denote, for each $u\in \cV_{d-1}(\tau^{\star})$:
\begin{equation}
c_u:=c_{\tau^{\star}}(u),\; \Delta_u:=c_{T}(u)-c_{\tau^{\star}}(u),\; \Delta'_u:=c_{T'}(u)-c_{\tau^{\star}}(u).
\end{equation} We now prove the following
\begin{lemma}\label{lemma:lower_bound_LR}
Under $\dPls_{d}$ we have the lower bound:
\begin{equation}\label{eq:lemma:lower_bound_LR}
L_d = L_d(T,T')\geq |\Aut(\tau^{\star})| \prod_{u\in \cV_{d-1}(\tau^{\star})}\frac{s^{ c_i}(1-s)^{\Delta_u+\Delta'_u}}{e^{-\lambda s}\lambda^{c_u}}\prod_{u\in \cL_{d-1}(\tau^{\star})}\binom{c_u+\Delta_u}{c_u}\binom{c_u+\Delta'_u}{c_u},
\end{equation}
where we recall that $\Aut(\tau^{\star})$ denotes the set of tree automorphisms of $\tau^{\star}$.
\end{lemma}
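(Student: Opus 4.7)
The plan is to prove the lemma by induction on $d$, using the recursive formula for the likelihood ratio from Section \ref{MPAlign:subsection:recursion_L}. For the base case $d = 1$, the recursive formula \eqref{eq:LR_rec_1} reduces $L_1$ to a sum of Poisson-ratio terms indexed by the number $k$ of common children of the root. Retaining only the term $k = c_\rho$, i.e. matching exactly the true children of $\tau^{\star}$'s root with those in $T$ and $T'$, and simplifying the Poisson factors directly yields the claimed bound at depth $1$: one has $|\Aut(\tau^{\star})| = c_\rho!$ since $\tau^{\star}$ is then a star, the product $\prod_{u \in \cV_0}$ contributes only the root factor, and $\cL_0(\tau^{\star}) = \{\rho\}$ contributes the two binomial coefficients.

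For the inductive step, I would apply \eqref{eq:MPAlign:LR_rec_2} at depth $d$ and retain only the $k = c_\rho$ summand, restricting the injections $\sigma \in \cS(c_\rho, c_\rho + \Delta_\rho)$ and $\sigma' \in \cS(c_\rho, c_\rho + \Delta'_\rho)$ to those mapping $[c_\rho]$ bijectively onto the $c_\rho$ true children of $\rho$ in $T$ and $T'$. These admissible pairs are parameterized by the bijections of these true children: when one further requires $\sigma = \sigma'$ (up to the canonical identification), the inner product $\prod_u L_{d-1}(T_{\sigma(u)}, T'_{\sigma'(u)})$ reduces, independent of the chosen bijection, to $\prod_v L_{d-1}(T_v, T'_v)$ indexed by true children $v$ of $\rho$. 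Applying the inductive hypothesis $L_{d-1}(T_v, T'_v) \geq B_{d-1, v}$ to each subtree and collecting factors across $v$ produces the product over $u \in \cV_{d-1}(\tau^{\star}) \setminus \{\rho\}$ and $u \in \cL_{d-1}(\tau^{\star})$ in the target bound.

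The combinatorial multiplicity of admissible pairs is $c_\rho! \cdot \prod_{\iota} m_\iota!$, where the first factor accounts for the $c_\rho!$ bijections onto the true children and the second for the iso-preserving permutations $\phi$ of the children of $\rho$ in $\tau^{\star}$ that yield additional valid configurations via $\sigma' = \sigma \circ \phi$. The identity $c_\rho! \cdot \psi(c_\rho, c_\rho + \Delta_\rho, c_\rho + \Delta'_\rho) = \frac{s^{c_\rho}(1-s)^{\Delta_\rho + \Delta'_\rho}}{e^{-\lambda s}\lambda^{c_\rho}}$ absorbs the first factor into the root's contribution in the target product over $\cV_{d-1}(\tau^{\star})$, while the standard factorization $|\Aut(\tau^{\star})| = \prod_{\iota} m_\iota! \cdot \prod_v |\Aut(\tau^{\star}_v)|$ of the rooted-tree automorphism group combines $\prod_\iota m_\iota!$ with the $\prod_v |\Aut(\tau^{\star}_v)|$ arising from the induction into the claimed prefactor $|\Aut(\tau^{\star})|$. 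The main obstacle is precisely the careful bookkeeping of these automorphism multiplicities across the recursion, and in particular the verification that the pairs $(\sigma, \sigma')$ with $\sigma' = \sigma \circ \phi$ for $\phi$ iso-preserving genuinely contribute the same lower bound on the inner $\prod_v L_{d-1}$; this relies on the distributional invariance of the $(\lambda, s)$-augmentation in the construction of $\dPls_d$ under uniform relabelings, which ensures that the intersection tree attached to the pair $(T_{\sigma(u)}, T'_{\sigma'(u)})$ is still $\tau^{\star}_v$ (up to an iso-class preserving reindexing), so that the inductive hypothesis applies with the same constant.
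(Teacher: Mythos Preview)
Your inductive route via the recursion \eqref{eq:MPAlign:LR_rec_2} is genuinely different from the paper's proof, which works directly with the explicit expansion of Lemma~\ref{MPAlign:lemma:LR_developed}: there one restricts the outer sum to relabelings $\tau\equiv\tau^{\star}$, so that the factor $|\Aut(\tau^{\star})|$ emerges in one shot from the count~\eqref{eq:toto} of such relabelings combined with the count~\eqref{eq:tototo} of injection pairs for each of them. Your approach reconstructs this factor level by level through $|\Aut(\tau^{\star})|=\prod_\iota m_\iota!\cdot\prod_v|\Aut(\tau^{\star}_v)|$, which is perfectly viable and arguably more elementary since it avoids Lemma~\ref{MPAlign:lemma:LR_developed}, at the cost of heavier bookkeeping.

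The plan is essentially correct, but your justification of the key step is off. For a pair $(\sigma,\sigma')$ with $\sigma'=\sigma\circ\phi$ and $\phi$ iso-preserving, the inner product becomes $\prod_v L_{d-1}(T_v,T'_{\psi(v)})$ with $\psi=\sigma\phi\sigma^{-1}$, and you need a lower bound matching $\prod_v B_{d-1}(T_v,T'_v,\tau^{\star}_v)$. This does \emph{not} follow from ``distributional invariance of the augmentation'': the lemma is a pointwise deterministic inequality, and for each individual $v$ the bound $B_{d-1}(T_v,T'_{\psi(v)},\tau^{\star}_v)$ involves the $\Delta'$ values of $T'_{\psi(v)}$, which differ from those of $T'_v$. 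What makes it work is purely combinatorial: embedding $\tau^{\star}_v$ into $T'_{\psi(v)}$ via an isomorphism $g_v:\tau^{\star}_v\to\tau^{\star}_{\psi(v)}$, the resulting bound has its $\Delta'$-factors indexed by $g_v(u)$ instead of $u$; since $(v,u)\mapsto g_v(u)$ is a bijection of $\cV_{d-1}(\tau^{\star})\setminus\{\rho\}$ that preserves $c_u$, the \emph{product over $v$} is unchanged. Note also that invoking the inductive hypothesis on $(T_v,T'_{\psi(v)})$ with a non-identity embedding in the second coordinate requires stating the induction for arbitrary embeddings of $\tau^{\star}$, not just the identity; this stronger form is equivalent to the stated one by relabeling invariance of $L_d$, but should be made explicit.
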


\begin{proof}
In view of the developed expression \eqref{eq:MPAlign:lemma:LR_developed}, we can lower-bound $L_d(T,T')$ by writing
\begin{equation}\label{eq:lower_bound_LR_1}
    L_d(T,T')\geq \sum_{\substack{\tau\in \cY_d \\ \tau \equiv \tau^{\star}}}\sum_{ \substack{\sigma \in \cS(\tau,T) \\ \sigma' \in \cS(\tau,T)}} \prod_{u\in \cV_{d-1}(\tau)}\psi\left(c_\tau(u),c_{T}(\sigma(u)),c_{T'}(\sigma'(u))\right),
\end{equation} where $\equiv$ is used to denote equality up to some relabeling. Let us compute the right hand term in \eqref{eq:lower_bound_LR_1}. Note that any tree $\tau\in \cY_d$ such that $\tau\equiv \tau^{\star}$ can be determined by a collection 
\begin{equation*}
    \xi(\tau) := \left \lbrace \xi_u(\tau) \in \cS_{c_{\tau^{\star}}(u)}, u \in \cV_{d-1}(\tau^{\star})\right \rbrace,
\end{equation*} giving the reordering of the children of each node of $\tau^{\star}$ at depth $d-1$. Moreover, the number of such permutations that produce this particular tree $\tau$ is precisely given by $\card{\Aut(\tau^{\star})}$. Thus the number of trees in the summation \eqref{eq:lower_bound_LR_1} is precisely
\begin{equation}\label{eq:toto}
\card{\{\tau\in \cY_d: \tau \equiv \tau^{\star}\}}= \frac{\prod_{u\in \cV_{d-1}(\tau^{\star})}c_{\tau^{\star}}(u)!}{\card{\Aut(\tau^{\star})}} \, .
\end{equation} Note that for any tree $\tau \equiv \tau^{\star}$, we can construct
\begin{equation}
\card{\Aut(\tau^{\star})}^2 \times 
\prod_{u\in\cL_{d-1}(\tau^{\star})}\binom{c_u+\Delta_u}{c_u}\binom{c_u+\Delta'_u}{c_u}
\end{equation} pairs of injections $(\sigma,\sigma')\in \cS(\tau,T)\times \cS(\tau,T')$.  Indeed the factor $\binom{c_u+\Delta_u}{c_u}$ (respectively,  $\binom{c_u+\Delta'_u}{c_u}$) denotes the number of subsets of the $c_u+\Delta_u$ children of $u$ in $t$ (respectively, of the $c_u+\Delta'_u$ children of $u$ in $t'$) that we can associate as children of $u$ in the injection $\sigma$ (respectively, $\sigma'$), the order in which they are considered being determined by the permutation $\xi_u$ in $\xi$. We thus have the following lower bound, for any tree $\tau\equiv \tau^{\star}$:

\begin{flalign}\label{eq:tototo}
\sum_{ \substack{\sigma \in \cS(\tau,T) \\ \sigma' \in \cS(\tau,T)}} \prod_{u\in \cV_{d-1}(\tau)} & \psi\left(c_\tau(u),c_{T}(\sigma(u)),c_{T'}(\sigma'(u))\right)  \nonumber \\
& \geq \card{\Aut(\tau^{\star})}^2 \prod_{u \in \cV_{d-1}(\tau^{\star})}\frac{s^{c_u}(1-s)^{\Delta_u+\Delta'_u}}{c_u! e^{-\lambda s} \lambda^{c_u}} \prod_{u \in \cL_{d-1}(\tau^{\star})}\binom{c_u+\Delta_u}{c_u}\binom{c_u+\Delta'_u}{c_u}.&& 
\end{flalign}

Combined, \eqref{eq:toto} and \eqref{eq:tototo} imply \eqref{eq:lemma:lower_bound_LR}.
\end{proof}

We now turn to lower-bounding  the number $|\Aut(\tau^{\star})|$ of automorphisms for $\tau^{\star}\sim \GW^{(\lambda s)}_d$:
\begin{proposition}\label{proposition:auto_GW}
Let $r$ be a sufficiently large constant (in particular, $r>1$). For $\tau^{\star}\sim \GW^{(r)}_d$, let us denote by $w$ the almost sure limit:
\begin{equation}\label{eq:mgle_limit}
w := \lim_{d\to\infty} \frac{1}{r^d}\card{\cL_d(\tau^{\star})}.
\end{equation}
We place ourselves on the event on which $\tau^{\star}$ survives, which occurs with probability $1-\pext(r)>0$, and on which $w>0$. We let
\begin{equation}\label{eq:def_K}
K:=\frac{w r^{d}}{r -1}\cdot
\end{equation}
We then have with high probability the lower bound
\begin{equation}\label{eq:prop_lower_bound_aut}
\log\left(\frac{\card{\Aut(\tau^{\star})}}{\prod_{u \in \cV_{d-1}(\tau^{\star})}e^{-r}r^{c_{\tau^{\star}}(u)}}\right) \geq K(1-o_{\dP}(1))\left[\frac{\log^{3/2} r}{3 \sqrt{r}}+O_r\left(\frac{\log^{5/4} r}{\sqrt{r}}\right)\right].
\end{equation}
\end{proposition}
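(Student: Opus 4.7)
The plan is to lower-bound $\card{\Aut(\tau^{\star})}$ by considering only automorphisms that permute sibling subtrees within a common isomorphism class. Starting from the classical product formula $\card{\Aut(\tau^{\star})} = \prod_u \prod_i m_i^{(u)}!$, where $m_1^{(u)}, m_2^{(u)},\ldots$ denote the multiplicities of isomorphism classes among the children-subtrees at each internal node $u$, one obtains $\log \card{\Aut(\tau^{\star})} = \sum_u \log \prod_i m_i^{(u)}!$. I would retain only the contributions coming from nodes at depth $d-1$ and depth $d-2$, using the trivial inequality $\log\prod_i m_i^{(u)}!\geq 0$ for all other nodes.

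At depth $d-1$ all children are leaves, so each such node contributes $\log c_u!$; by Stirling this equals $(c_u\log r - r)-\log \dP(\Poi(r)=c_u)$, yielding a surplus over the corresponding denominator term whose per-node expectation is the Poisson entropy $H(\Poi(r)) = \frac{1}{2}\log(2\pi e r)+o(1)$. At depth $d-2$ children are one-level stars, isomorphic iff they share the same degree, so each such node contributes $\sum_k \log n_k^{(u)}!$ with $n_k^{(u)}$ the number of children of $u$ having exactly $k$ leaf-grandchildren. Using the Gaussian approximation $n_k^{(u)} \approx c_u\cdot\dP(\Poi(r)=k) \approx \sqrt{r/(2\pi)}\exp(-(k-r)^2/(2r))$ for $c_u\approx r$, combined with $\log n!=n\log n - n + \frac{1}{2}\log(2\pi n)+O(1/n)$, I would convert this sum to an integral over bins $|k-r|\leq X_*$ with $X_* = \sqrt{r\log(r/(2\pi))}$ the cutoff where $n_k$ drops below one. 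The "entropy" part integrates to $\frac{c_u}{2}\log(r/(2\pi))-\frac{3c_u}{2}$, while the Stirling correction $\int\frac{1}{2}\log(2\pi n(x))\,dx = \frac{X_*}{2}\log(2\pi r) - \frac{X_*^3}{6r}$, after substituting $X_*^2 = r\log(r/(2\pi))$, reduces to $\frac{1}{3}\sqrt{r}\log^{3/2}r$ to leading order, where the constant $1/3 = \frac{1}{2}-\frac{1}{6}$ arises directly from this arithmetic.

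Summing these per-node contributions over $|\cL_{d-1}|(1-o_\dP(1))\sim K(r-1)/r$ and $|\cL_{d-2}|(1-o_\dP(1))\sim K/r$ respectively (using the classical martingale convergence $|\cL_k|/r^k\to w$ a.s.\ on survival together with LLN across nodes at each level) and subtracting the denominator $\sum_{u\in\cV_{d-1}}(c_u\log r - r)\sim K(r\log r - r)$, the leading $Kr\log r$ and $-Kr$ contributions cancel exactly, and so do the $\pm\frac{1}{2}\log r$ bulk terms arising between the two levels. What remains is precisely $K(1-o_\dP(1))\cdot\frac{\log^{3/2}r}{3\sqrt{r}}$, with the $O_r(\log^{5/4}r/\sqrt{r})$ error absorbing the $\sqrt{r\log r}$ subleading correction from the $X_*$ expansion, the sum-versus-integral discrepancy over the $\sqrt{r\log r}$ occupied bins, the $O(1/n_k)$ Stirling remainders in the bulk, and the LLN fluctuation rate on the random tree. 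The main obstacle will be making the Gaussian/multinomial approximation for $n_k^{(u)}$ uniform across all bins — especially near the boundary $|k-r|\sim X_*$ where $n_k$ is of order one and Stirling degrades — and controlling the empirical concentration of the per-node contributions around their expectation on the random Galton-Watson tree tightly enough that all errors can be collected into the stated $O_r$ bound.
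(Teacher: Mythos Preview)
Your proposal is correct and follows essentially the same approach as the paper: restrict to automorphisms acting at depths $d-1$ and $d-2$, use Stirling together with the Gaussian local approximation for $\pi_r(k)$ on the window $|k-r|\lesssim\sqrt{r\log r}$, extract the constant $1/3=1/2-1/6$ from the $\frac{1}{2}\log(2\pi n_k)$ Stirling correction, and pass to expectations via LLN across nodes at each level. The only presentational difference is that the paper first takes expectations per node (writing $N_k\sim\Poi(r\pi_r(k))$ and using $\dE[\log(X_\mu!)]=\mu\log\mu-\mu+\tfrac12\log(2\pi e\mu)+O(1/\mu)$ directly) before summing over $k\in I_{r,\eps}$ with the explicit choice $\eps(r)=\frac{\log\log r}{4\log r}$, whereas you approximate $n_k^{(u)}$ by its mean and then apply Stirling; the resulting computations and error bookkeeping are the same.
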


Proposition \ref{proposition:auto_GW}, proved in Appendix \ref{appendix:proof:proposition:auto_GW}, could be of independent interest. We believe that a little more work could easily show that inequality \eqref{eq:prop_lower_bound_aut} is exponentially tight, i.e. gives the right exponential order for the estimation of the number of automorphism of a Galton-Watson tree. 
We next show that Lemma \ref{lemma:lower_bound_LR} together with Proposition \ref{proposition:auto_GW} yield a sufficient condition for the existence of  one-sided test.

\subsection{A sufficient condition for one-sided tests}
We are now in a position to prove the following

\begin{theorem}\label{MPAlign:theorem:suff_cond_auto}
There exists a constant $r_0$ such that if 
\begin{equation}\label{eq:theorem:suff_cond_auto}
\lambda s > r_0 \quad \mbox{and} \quad 1-s \leq \frac{1}{(3+\eta)}\sqrt{\frac{\log(\lambda s)}{\lambda ^3 s}},
\end{equation} for some $\eta>0$, then one-sided detectability of tree correlation holds.
\end{theorem}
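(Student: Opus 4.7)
The plan is to verify condition $(v)$ of Theorem \ref{MPAlign:theorem:main_result_TREES}: working on the survival event $S$ of $\tau^{\star}$ (which has probability $1-\pext(\lambda s)$ under $\dPls_{\infty}$ since $\lambda s>r_0>1$), I will show that a.s.\ on $S$, $\liminf_{d\to\infty}(\lambda s)^{-d}\log L_d>0$. Set $r=\lambda s$ and $\eps=1-s$. Combining the lower bound \eqref{eq:lemma:lower_bound_LR} of Lemma \ref{lemma:lower_bound_LR} with the automorphism estimate \eqref{eq:prop_lower_bound_aut} of Proposition \ref{proposition:auto_GW}, and simplifying using $-r+\lambda s=0$ and $\log r+\log s-\log\lambda=2\log s$, one obtains
\begin{equation*}
\log L_d\geq \sum_{u\in\cV_{d-1}(\tau^{\star})}\bigl[2c_u\log s+(\Delta_u+\Delta'_u)\log(1-s)\bigr]+\sum_{u\in\cL_{d-1}(\tau^{\star})}\bigl[\log\tbinom{c_u+\Delta_u}{c_u}+\log\tbinom{c_u+\Delta'_u}{c_u}\bigr]+K\,(1-o_{\dP}(1))\tfrac{\log^{3/2}r}{3\sqrt{r}}(1+o_r(1)),
\end{equation*}
with $K=wr^d/(r-1)$ and $w$ the martingale limit defined in \eqref{eq:mgle_limit}.

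\paragraph*{Law-of-large-numbers estimates.} Conditionally on $\tau^{\star}$, the variables $\Delta_u,\Delta'_u$ are i.i.d.\ $\Poi(\lambda\eps)$. By Kesten–Stigum type arguments on $\tau^{\star}$, a.s.\ on $S$:  $|\cV_{d-1}(\tau^{\star})|\sim K$, $|\cL_{d-1}(\tau^{\star})|\sim K(r-1)/r$, and $\sum_{u\in\cV_{d-1}}c_u=|\cV_d(\tau^{\star})|-1\sim Kr$. For the $\Delta_u$ averages, a conditional weak law plus Borel–Cantelli along the a.s.\ growing tree yields
\begin{equation*}
\tfrac{1}{|\cV_{d-1}|}\sum_{u}(\Delta_u+\Delta'_u)\to 2\lambda\eps,\qquad \tfrac{1}{|\cL_{d-1}|}\sum_u \log\tbinom{c_u+\Delta_u}{c_u}\to \dE\bigl[\log\tbinom{c+\Delta}{c}\bigr],
\end{equation*}
where $c\sim\Poi(r)$ and $\Delta\sim\Poi(\lambda\eps)$ are independent. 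Using Stirling, one computes $\dE\log\binom{c+\Delta}{c}=\lambda\eps\log r+\lambda\eps+O((\lambda\eps)^{2}\log(1/\eps))$, so that the binomial contribution is approximately $2K\tfrac{r-1}{r}\lambda\eps\log r$.

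\paragraph*{Balancing the terms.} The hard part is managing the cancellation between the negative contribution $\sum_u(\Delta_u+\Delta'_u)\log(1-s)\sim -2K\lambda\eps\log(1/\eps)$ and the positive binomial contribution $\sim 2K(r-1)r^{-1}\lambda\eps\log r$. Plugging in the condition $\eps\leq \bar\eps:=\tfrac{1}{(3+\eta)\lambda}\sqrt{\log r/r}$ and using $s$ close to $1$ so that $\log\lambda=\log r+O(\eps)$, one has $\log(1/\eps)=\tfrac{3}{2}\log r+O(1)$ at the boundary. The sum of these two contributions equals
\begin{equation*}
2K\lambda\eps\bigl[\tfrac{r-1}{r}\log r-\log(1/\eps)+O(1)\bigr]\sim -K\lambda\eps\log r\geq -\tfrac{K\log^{3/2}r}{(3+\eta)\sqrt{r}}(1+o_r(1)).
\end{equation*}
The remaining negative term $2Kr\log s=-2Kr\eps(1+O(\eps))$ satisfies $|2Kr\eps|\sqrt{r}/\log^{3/2}r\leq 6/((3+\eta)\log r)=o_r(1)$ and is therefore negligible. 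Combining everything,
\begin{equation*}
\log L_d\geq K\tfrac{\log^{3/2}r}{\sqrt{r}}\Bigl[\tfrac{1}{3}-\tfrac{1}{3+\eta}-o_r(1)\Bigr](1-o_{\dP}(1))=K\tfrac{\log^{3/2}r}{\sqrt{r}}\tfrac{\eta}{3(3+\eta)}(1-o_{\dP,r}(1)),
\end{equation*}
which is strictly positive for all large enough $r$, i.e.\ choosing $r_0$ large.

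\paragraph*{Conclusion.} Since $K/r^d=w/(r-1)$ and $w>0$ a.s.\ on $S$, the previous bound gives $(\lambda s)^{-d}\log L_d\geq \tfrac{w}{r-1}\tfrac{\log^{3/2}r}{\sqrt{r}}\tfrac{\eta}{3(3+\eta)}(1-o_\dP(1))$, which is bounded below by a strictly positive constant a.s.\ on $S$. Hence $\dPls_\infty\bigl(\liminf_d(\lambda s)^{-d}\log L_d>0\bigr)\geq 1-\pext(\lambda s)$, so condition $(v)$ of Theorem \ref{MPAlign:theorem:main_result_TREES} holds and one-sided detectability follows. The main technical obstacle is making the Kesten–Stigum/LLN averaging for the $\Delta_u,\Delta'_u$ and the binomial terms rigorous uniformly in $d$, along with carefully tracking the $o_r(1)$ and $o_\dP(1)$ error terms so that the crucial gap $\tfrac{1}{3}-\tfrac{1}{3+\eta}>0$ is preserved in the limit.
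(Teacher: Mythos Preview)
Your proof is correct and follows essentially the same approach as the paper: combine Lemma \ref{lemma:lower_bound_LR} with Proposition \ref{proposition:auto_GW}, apply a law of large numbers along $\tau^{\star}$, and show that the automorphism contribution $\frac{\log^{3/2}r}{3\sqrt{r}}$ dominates the negative terms under condition \eqref{eq:theorem:suff_cond_auto}. Your algebraic reorganization via $\log r+\log s-\log\lambda=2\log s$ is a nice shortcut that makes the cancellation of the $\pm r\log r$ and $\pm r$ terms immediate, whereas the paper obtains the same cancellation only after expanding its quantities $A$ and $B$ in the regime $r\to\infty,\,\alpha\to 0$; both routes land on the same critical balance $2\alpha\log\alpha+\tfrac{\log^{3/2}r}{3\sqrt{r}}$ (with $\alpha=\lambda\eps$), yielding the gap $\tfrac{1}{3}-\tfrac{1}{3+\eta}$.

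A couple of minor points worth tightening. First, $\dE\log\binom{c+\Delta}{c}=\alpha\log r+O(\alpha/r)+O(\alpha^2)$: the extra ``$+\lambda\eps$'' you wrote actually cancels, though this has no effect downstream. Second, your asymptotic ``$\sim -K\lambda\eps\log r$'' is only valid at the boundary $\eps=\bar\eps$; for general $\eps\leq\bar\eps$ one should either note that $\eps\mapsto \alpha[\tfrac{r-1}{r}\log r-\log(1/\eps)]$ is monotone on $(0,\bar\eps]$ (so the worst case is at the boundary), or rewrite the two terms together as $2K\alpha\log\alpha+O(K\alpha)$ as the paper does and use $|\alpha\log\alpha|\leq\bar\alpha|\log\bar\alpha|$. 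Also, the error in $\log(1/\bar\eps)=\tfrac{3}{2}\log r+\cdots$ is $O(\log\log r)$ rather than $O(1)$. None of these affect the conclusion.
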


\begin{proof} 
The proof consists in showing that in this regime, $L_d$ goes to $+\infty$ with positive probability under $\dPls_{\infty}$. Throughout, $X_\mu$ will denote a Poisson random variable with parameter $\mu$. In the lower bound \eqref{eq:lemma:lower_bound_LR} of Lemma \ref{lemma:lower_bound_LR}, consider the factor
 
\begin{equation*}
    \prod_{u\in \cV_{d-1}(\tau^{\star})}\frac{s^{c_u}(1-s)^{\Delta_u+\Delta'_u}}{e^{-\lambda s}\lambda ^{c_u}}\prod_{u\in \cL_{d-1}(\tau^{\star})}\binom{c_u+\Delta_u}{c_u}\binom{c_u+\Delta'_u}{c_u}.
\end{equation*}
Placing ourselves on the event on which $\tau^{\star}$ survives, reusing the notations $w$ and $K$ defined in equations \eqref{eq:mgle_limit} and \eqref{eq:def_K}, another appeal to the law of large numbers gives the following equivalents:
\begin{flalign}\label{eq:equiv_A1}
    A & := \log\left(\prod_{u \in \cV_{d-1}} \frac{s^{c_u} (1-s)^{\Delta_u+\Delta'_u}}{e^{-\lambda s}\lambda ^{c_u}} \right) \sim K \left( \lambda s (\log(s/\lambda)+1) + 2\lambda (1-s) \log (1-s)\right)
\end{flalign} and 
\begin{flalign}\label{eq:equiv_B1}
     B & := \log\left(\prod_{u\in\cL_{d-1}(\tau^{\star})}\binom{c_u+\Delta_u}{c_u}\binom{c_u+\Delta'_u}{c_u}\right) \nonumber \\
     & \sim w (\lambda s)^{n-1} \left( 2\dE\left[\log(X_{\lambda}!)\right]-2\dE\left[\log(X_{\lambda (1-s)}!)\right]-2\dE\left[\log(X_{\lambda s}!)\right]\right).
\end{flalign}
Let us introduce the notations $r := \lambda s$, $\alpha := \lambda (1-s)$, such that $\lambda = \alpha + r$ and $s = \frac{r}{\alpha + r}$. We will identify equivalents of exponents of interest as  $\alpha \to 0$ and $r \to \infty$. In this regime, \eqref{eq:equiv_A1} becomes
\begin{flalign*}
    A  & \sim K \left(-2 r\log(1+\alpha/r)+2\alpha\log\left(\frac{\alpha/r}{1+\alpha/r}\right)-r\log r+r\right)\\
    & \sim K \left(-r \log r + r - 2 \alpha \log r + 2\alpha\log\alpha  + O(\alpha) \right)
\end{flalign*} 

We have the classical estimate for large $\mu$:
\begin{equation}\label{eq:moment_factoriel}
\dE \left[\log(X_\mu!) \right]=\mu\log(\mu)-\mu+\frac{1}{2}\log(2\pi e \mu)+O\left(\frac{1}{\mu}\right),
\end{equation}
Using \eqref{eq:moment_factoriel} and noting that in this regime, $\dE\left[ \log (X_\alpha !) \right] = O(\alpha^2)$, \eqref{eq:equiv_B1} becomes
\begin{flalign*}
B & \sim 2 w r^{n-1}\left((r+\alpha)\log(r+\alpha)-r-\alpha+\frac{1}{2}\log(2\pi e(r+\alpha))-r\log(r)+r-\frac{1}{2}\log(2\pi e r)- O(\alpha^2)\right)\\
& \sim 2 w r^{n-1}\left(r\log(1+\alpha/r)+ \alpha\log(1+\alpha/r) + \alpha \log r - \alpha +\frac{1}{2}\log(1+\alpha/r) +O(\alpha)  \right)\\
& \sim 2 w r^{n-1}\left( \alpha\log(r)+O(\alpha)\right).\end{flalign*}

Combined, these approximations give:
\begin{flalign}\label{eq:equiv_A+B}
A+B & \sim K\left(\left(1-\frac{1}{r}\right)\times 2 \alpha\log(r) -r\log r + r - 2\alpha\log(r)+2\alpha\log(\alpha)+O(\alpha)\right) \nonumber\\
&\sim K\left(-r\log r + r + 2\alpha\log(\alpha)+O(\alpha)\right).
\end{flalign}
Combining \eqref{eq:equiv_A+B} with the results of Proposition \ref{proposition:auto_GW} entails

\begin{flalign*}
\log L_d & \geq K \left[r\log r - r +  \frac{\log^{3/2}(r)}{3\sqrt{r}}+O\left(\frac{\log^{5/4} r}{\sqrt{r}}\right) \right] + K \left[-r\log r + r + 2\alpha\log(\alpha)+O(\alpha)\right]\\
& = K \left[2\alpha \log \alpha +  \frac{\log^{3/2}(r)}{3\sqrt{r}}+O\left(\frac{\log^{5/4} r}{\sqrt{r}}\right)  + O(\alpha) \right].
\end{flalign*} 
Then, under assumption \eqref{eq:theorem:suff_cond_auto}, we have $\alpha\leq \frac{1}{3+\eta}\sqrt{\log(r)/r}$ so that, for sufficiently large $r$,
\begin{equation*}
    2\alpha \log \alpha +  \frac{\log^{3/2}(r)}{3\sqrt{r}} > \Omega\left(\frac{\log^{3/2}(r)}{\sqrt{r}}\right).
\end{equation*}
It follows that on the event on which $\tau^{\star}$ survives, which happens with probability $1-\pext(\lambda s)>0$, under condition \eqref{eq:theorem:suff_cond_auto}, $L_d$ goes to $+\infty$ with $d$. Thus one-sided detectability holds. 
\end{proof}

\section{Impossibility of correlation detection: conjectured hard phase for partial graph alignment}
\label{MPAlign:section:hard_phase}
In the present section we establish that, for $\lambda s^2<1$ and sufficiently large $\lambda$, $\KL_\infty<+\infty$ and hence, by Theorem \ref{MPAlign:theorem:main_result_TREES}, one-sided testability fails for our tree correlation problem. Since there exists a range of parameters $(\lambda,s)$ for which partial alignment can be information-theoretically achieved while $\lambda s^2<1$ (it suffices to have $4<\lambda s< s^{-1}$ in view of \cite{Wu2021SettlingTS}) we therefore conclude that the conjectured hard phase for partial graph alignment (see the conjecture at the end of Section \ref{MPAlign:section:intro}) is non empty. 

\subsection{Mutual information formulation}
Note that the Kullback-Leibler divergence $\KL_d$ also coincides with the mutual information between $T_d := p_d(T)$ and $T'_d := p_d(T')$ under $\dPls_{\infty}$. To emphasize this interpretation we rewrite 
\begin{equation*}
    \KL_d = I(T_d;T'_d).
\end{equation*}
Note that under $\dPls_{\infty}$, conditionally on $\tau^{\star}_d := p_d(\tau^{\star})$, $T_d$ and $T'_d$ are mutually independent, a property that we will depict with the dependence diagram 
\begin{equation*}
    T_d \mbox{  ----  } \tau^{\star}_d \mbox{  ----  } T'_d.
\end{equation*} By the data processing inequality, we thus have
\begin{equation*}
    \KL_d=I(T_d;T'_d) \leq  I(\tau^{\star}_d;T_d).
\end{equation*} To establish that $\KL_\infty<\infty$, it therefore suffices to prove that $I(\tau^{\star}_d;T_d)$ is bounded, uniformly in $d$. Write then
\begin{flalign*}
I(\tau^{\star}_d,t_d) &= \dEls_{d} \ln\left(\frac{\dPls_{d}(\tau^{\star}_d,T_d)}{\dPls_{d}(\tau^{\star}_d)\dPls_{d}(T_d)}\right) \\
& \leq \dEls_{d}\left[ \frac{\dPls_{d}(\tau^{\star}_d,T_d)}{\dPls_{d}(\tau^{\star}_d)\dPls_{d}(T_d)}-1\right] \leq \dEls_{d}\left[\frac{\dPls_{d}(\tau^{\star}_d,T_d)}{\dPls_{d}(\tau^{\star}_d)\dPls_{d}(T_d)}\right].
\end{flalign*} We have established the bound
\begin{equation}\label{eq:upper_bound_mutual_info}
I(\tau^{\star}_d,T_d) \leq V_d: = \dEls_{d} \left[\frac{\dPls_{d}(\tau^{\star}_d|T_d)}{\dPls_{d}(\tau^{\star}_d)}\right] \, .
\end{equation}

\subsection{Bounding the mutual information}
Let us denote by $c$ the degree of the root node in $\tau^{\star}$ and $c+\Delta$ the degree of the root node in $T$. Let us further write 
\begin{equation*}
    \tau^{\star}= (\tau^{\star}_1,\ldots,\tau^{\star}_{c}), \quad T=r(A(\tau^{\star}_1),\ldots,A(\tau^{\star}_{c}), \theta_1,\ldots,\theta_{\Delta})=(T_1,\ldots,T_{c+\Delta}),
\end{equation*}
where the $A(\tau^{\star}_u)$ are $(\lambda,s)-$augmentations, the $\theta_u$ are $\GWl_{d-1}$ trees, and $r$ is a uniform relabeling. Observe that
\begin{flalign*}
    \dPls_d(\tau^{\star}|T) & = \frac{\GW^{(\lambda s)}_{d}(\tau^{\star})}{\GWl_{d}(T)} e^{-\lambda(1-s)} \frac{(\lambda (1-s))^{\Delta}}{\Delta !} \\
    & \quad \quad \quad \quad \quad \times \sum_{\sigma \in \cS(c,c+\Delta)} \frac{\Delta!}{(c + \Delta)!} \prod_{u \in [c]} \dPls_{d-1}(T_{\sigma(u)} | \tau^{\star}_u) \prod_{u = c+1}^{c+\Delta} \GWl_{d-1}(T_{\sigma(u)})   \\
    & = \frac{ e^{-\lambda s} (\lambda s)^{c}/c! }{ e^{-\lambda} \lambda^{c+\Delta}/(c+\Delta)! } e^{-\lambda(1-s)} \frac{(\lambda (1-s))^{\Delta}}{\Delta !} \sum_{\sigma \in \cS(c,c+\Delta)} \frac{\Delta!}{(c + \Delta)!} \prod_{u \in [c]} \dPls_{d-1}(\tau^{\star}_u |T_{\sigma(u)}) \\
    & = \frac{s^{c} (1-s)^{\Delta}}{c !} \sum_{\sigma \in \cS(c,c+\Delta)} \prod_{u \in [c]} \dPls_{d-1}(\tau^{\star}_u |T_{\sigma(u)}),
\end{flalign*}
so that
\begin{equation*}
\frac{\dPls_d(\tau^{\star}|T)}{\dPls_d(\tau^{\star})}=\frac{s^{c} (1-s)^{\Delta}}{c ! \pi_{\lambda s}(c)}
\sum_{\sigma \in \cS(c,c+\Delta)} \prod_{u \in [c]} \frac{\dPls_{d-1}(\tau^{\star}_u|T_{\sigma(u)})}{\dPls_{d-1}(\tau^{\star}_u)}.
\end{equation*}
Taking expectation entails the following formula for $V_d$ defined in equation \eqref{eq:upper_bound_mutual_info}:
\begin{equation}\label{eq_V_d_recursive}
V_d=\sum_{c \geq 0}\sum_{\Delta \geq 0} \pi_{\lambda(1-s)}(\Delta)\frac{s^c (1-s)^\Delta}{c!}\sum_{\sigma\in \cS(c,c+\Delta)}\dEls_{d-1}\left[\prod_{i=1}^c \frac{\dPls_{d-1}(\tau^{\star}_i|T_{\sigma(i)})}{\dPls_{d-1}(\tau^{\star}_i)} \bigg| c  , \Delta \right].
\end{equation}
To evaluate the previous expression, we need to introduce the following notion of cycles.

\subsubsection{Open paths, closed cycles} For two integers $c,\Delta \geq 0$ and an injective mapping $\sigma \in \cS(c,c+\Delta)$, a sequence $(i_1,\ldots,i_\ell)$ of elements of $[c]$ is
\begin{itemize}
    \item an \emph{open path of $\sigma$} if
    \begin{equation*}
        i_1 \notin \sigma([c]), \quad \forall k=1,\ldots, \ell-1, \; \sigma(i_{k})=i_{k+1}, \quad \mbox{ and } \sigma(i_\ell)\notin [c].
    \end{equation*}
    
    \item a \emph{closed cycle of $\sigma$} if
    \begin{equation*}
        \forall k=1,\ldots, \ell-1, \; \sigma(i_{k})=i_{k+1} \quad \mbox{ and }\sigma(i_\ell)=i_1.
    \end{equation*}
\end{itemize}

It is an easy fact to check that each injective mapping $\sigma \in \cS(c,c+\Delta)$ can be factorized in disjoint open paths and closed cycles. Since each term $i$ in the product in \eqref{eq_V_d_recursive} only depends on the other terms $j$ in its own open path (resp. closed cycle), the expectation term in \eqref{eq_V_d_recursive} factorizes according to the path/cycle decomposition of $\sigma$. 

\begin{figure}[H]
	\centering
		\begin{tikzpicture}[scale=0.7,line width=0.4mm]
		\node[draw,circle,thick,scale=0.8] (A1) at (0,8.5) {$1$};
		\node[draw,circle,thick,scale=0.8] (B1) at (0,7.5) {$2$};
		\node[draw,circle,thick,scale=0.8] (C1) at (0,6.5) {$3$};
		\node[draw,circle,thick,scale=0.8] (D1) at (0,5.5) {$4$};
		\node[draw,circle,thick,scale=0.8] (E1) at (0,4.5) {$5$};
		\node[draw,circle,thick,scale=0.8] (F1) at (0,3.5) {$6$};

		\node[draw,circle,thick,scale=0.8] (A2) at (4,10) {$1$};
		\node[draw,circle,thick,scale=0.8] (B2) at (4,9) {$2$};
		\node[draw,circle,thick,scale=0.8] (C2) at (4,8) {$3$};
		\node[draw,circle,thick,scale=0.8] (D2) at (4,7) {$4$};
		\node[draw,circle,thick,scale=0.8] (E2) at (4,6) {$5$};
		\node[draw,circle,thick,scale=0.8] (F2) at (4,5) {$6$};
		\node[draw,circle,thick,scale=0.8] (G2) at (4,4) {$7$};
		\node[draw,circle,thick,scale=0.8] (H2) at (4,3) {$8$};
		\node[draw,circle,thick,scale=0.8] (I2) at (4,2) {$9$};
		
		\draw[blue,dashed,line width=0.8pt] (A1) to[bend right=0] (A2);
		\draw[blue,dashed,line width=0.8pt] (B1) to[bend right=0] (B2);
		\draw[blue,dashed,line width=0.8pt] (C1) to[bend right=0] (C2);
		\draw[blue,dashed,line width=0.8pt] (D1) to[bend right=0] (D2);
		\draw[blue,dashed,line width=0.8pt] (E1) to[bend right=0] (E2);
		\draw[blue,dashed,line width=0.8pt] (F1) to[bend right=0] (F2);

		\draw[red,line width=2pt,->] (A1) to[bend right=0] (D2);
		\draw[red,line width=2pt,->] (B1) to[bend right=0] (F2);
		\draw[red,line width=2pt,->] (C1) to[bend right=0] (B2);
		\draw[red,line width=2pt,->] (D1) to[bend right=0] (E2);
		\draw[red,line width=2pt,->] (E1) to[bend right=0] (A2);
		\draw[red,line width=2pt,->] (F1) to[bend right=0] (H2);
		\end{tikzpicture}
	\caption{Representation of $\sigma \in \cS(c,c+\Delta)$ with $c=6, \Delta=3$, and $\sigma(1)=4$, $\sigma(2)=6$, $\sigma(3)=2$, $\sigma(4)=5$, $\sigma(5)=1$, $\sigma(6) = 8$. In this example, $(1,4,5)$ (resp. $(3,2,6)$) is an open path (resp. closed path) of $\sigma$.}
	\label{fig:three graphs}
\end{figure}
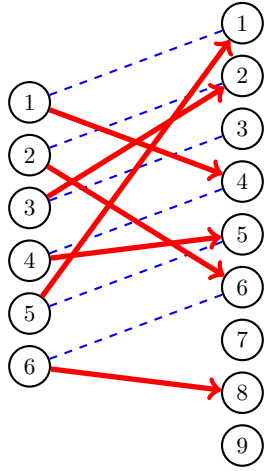
 
First consider an open path $O_\ell$ of $\sigma$ of length $\ell$, assumed without loss of generality to be given by $(1,\ldots,\ell)$, so that $\sigma(1)=2,\ldots,\sigma(\ell-1)=\ell$, and $\sigma(\ell)=c+1$. The expectation of the corresponding factor reads:
\begin{equation*} \label{eq:expectation_Ol}
    \dEls_{d-1} \left[ \prod_{i \in O_\ell} \frac{\dPls_{d-1}(\tau^{\star}_i|T_{\sigma(i)})}{\dPls_{d-1}(\tau^{\star}_i)} \right] =  \dEls_{d-1} \left[ \prod_{k=1}^{\ell-1}\frac{\dPls_{d-1}(\tau^{\star}_k|A(\tau^{\star}_{k+1}))}{\dPls_{d-1}(\tau^{\star}_k)}\times \frac{\dPls_{d-1}(\tau^{\star}_\ell|\theta_1)}{\dPls_{d-1}(\tau^{\star}_\ell)} \right].
\end{equation*}
Now integrated over $\theta_1$, $\dPls_{d-1}(\tau^{\star}_\ell|\theta_1)$ evaluates to $\dPls_{d-1}(\tau^{\star}_\ell)$ and the last factor disappears. Integrating then successively with respect to $A(\tau^{\star}_k)$, $k=\ell,\ell-1,\ldots,2$, we obtain that the factors corresponding to open cycles evaluate to $1$.

Consider next a closed cycle $C_\ell$ of $\sigma$ of length $\ell$. Assuming without loss of generality that $T_i=A(\tau^{\star}_i)$, the expectation reads
\begin{equation*} 
    \dEls_{d-1} \left[ \prod_{i \in C_\ell} \frac{\dPls_{d-1}(\tau^{\star}_i|T_{\sigma(i)})}{\dPls_{d-1}(\tau^{\star}_i)} \right] =  \dEls_{d-1} \left[ \prod_{k=1}^{\ell}\frac{\dPls_{d-1}(\tau^{\star}_k|A(\tau^{\star}_{(k+1) \Mod \ell}))}{\dPls_{d-1}(\tau^{\star}_k)}\right].
\end{equation*}
This reads, using for $t,\tau\in \cY_{d-1}$ the notations $p_{d-1}(t):=\GWl_{d-1}(t)$, $q_{d-1}(\tau|t):=\dPls_{d-1}(\tau|t)$, $r_{d-1}(\tau):=\GW^{(\lambda s)}_{d-1}(\tau)$:
\begin{equation}\label{eq:expectation_Cl}
    \sum_{\tau_1,t_1, \ldots, \tau_\ell, t_\ell \in \cY_{d-1}}\prod_{i\in [\ell]}p_{d-1}(t_i)q_{d-1}(\tau_i|t_i) \times \frac{q_{d-1}(\tau_i|t_{(i+1) \Mod \ell})}{r_{d-1}(\tau_i)}.
\end{equation}
Introduce the operator $\Psi_{d-1}$, indexed by trees in $\cY_{d-1}$: 
\begin{equation}\label{eq:def_op_M}
    \Psi_{d-1}(\tau_1,\tau_2):=\sum_{t\in \cY_{d-1}}p_{d-1}(t)\frac{q_{d-1}(\tau_1|t)q_{d-1}(\tau_2|t)}{\sqrt{r_{d-1}(\tau_2)r_{d-1}(\tau_2)}}.
\end{equation}
$M$ is symmetric and semi-definite positive, hence the operator is diagonalizable and its spectrum lies in $\dR_+$. Note that the expectation in \eqref{eq:expectation_Cl} coincides with the trace of matrix $\Psi_{d-1}^\ell$. It follows that \footnote{To make this argument fully rigorous we can consider truncated summations so that we are dealing with finite dimensional matrices, for which the trace inequality to follow clearly holds, and then use monotone convergence to obtain the desired inequality as written.}
\begin{equation*} 
    \dEls_{d-1} \left[ \prod_{i \in C_\ell} \frac{\dPls_{d-1}(\tau^{\star}_i|T_{\sigma(i)})}{\dPls_{d-1}(\tau^{\star}_i)} \right] =  
    \Tr(\Psi_{d-1}^\ell)\leq \Tr(\Psi_{d-1})^\ell= V_{d-1}.
\end{equation*}

We now have the ingredients in place to prove the following
\begin{lemma}\label{lemma:bounding_v_d}
The quantity $V_d$ verifies
\begin{equation}\label{eq:lemma:bounding_v_d}
V_d\leq f(V_{d-1}),
\end{equation}
where 
\begin{equation}\label{eq:f_bound_v_d}
f(x)=\frac{1}{1- s x}\exp\left(\frac{\kappa(1-s)(x-1)}{1-sx} \right)
\end{equation} with $\kappa := \lambda s^2$.
\end{lemma}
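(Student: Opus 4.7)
The plan is to evaluate the triple sum \eqref{eq_V_d_recursive} via three steps: (i) the open path / closed cycle decomposition of each injection $\sigma \in \cS(c,c+\Delta)$ introduced in the excerpt, (ii) per-component bounds on the inner expectation, and (iii) an exponential generating function (EGF) enumeration of injections by their combinatorial type.

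For step (ii), the spectral analysis already laid out in the excerpt gives that a closed cycle of length $\ell$ contributes $\Tr(\Psi_{d-1}^\ell) \leq (\Tr \Psi_{d-1})^\ell = V_{d-1}^\ell$, by positive semi-definiteness of $\Psi_{d-1}$ and $\sum_i \mu_i^\ell \leq (\sum_i \mu_i)^\ell$. For an open path of length $\ell$ --- a sequence $(i_1,\ldots,i_\ell) \subseteq [c]$ with $\sigma(i_k)=i_{k+1}$ for $k<\ell$ and $\sigma(i_\ell)$ pointing to a fresh $\theta$-index --- successive marginalizations collapse the contribution to exactly $1$: integrate first the fresh $\theta \sim \GWl_{d-1}$ against $\sum_\theta \GWl_{d-1}(\theta)\dPls_{d-1}(\tau|\theta) = \GWls_{d-1}(\tau)$, then successively integrate out each augmentation $A(\tau^{\star}_{i_k})$ conditional on $\tau^{\star}_{i_k}$, using that the two-step kernel $\tau' \mapsto A \mapsto \tau$ has $\GWls_{d-1}$ as reversible measure and hence preserves the marginal. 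Since $V_{d-1} \geq 1$ by Jensen's inequality applied to $V_{d-1}=\dEls_{d-1}[\dPls_{d-1}(\tau|T)/\dPls_{d-1}(\tau)]$, this exact value $1$ is dominated by the looser bound $V_{d-1}^{\ell-1}$, which I would feed to the enumeration step in order to make the EGF computation uniform across paths and cycles.

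For step (iii), a single open path of length $\ell$ with a marker $u$ per path and weight $V_{d-1}^{\ell-1}$ has EGF $uz/(1-V_{d-1}z)$, and a single cycle of length $\ell$ weighted $V_{d-1}^\ell$ has EGF $-\log(1-V_{d-1}z)$, so the EGF of the full path/cycle structure on $[c]$ is
\begin{equation*}
    G(z;u,V_{d-1}) \;=\; \exp\!\left(\frac{uz}{1-V_{d-1}z}\right)\frac{1}{1-V_{d-1}z}.
\end{equation*}
Each of the $p$ open paths receives a distinct endpoint in $\{c+1,\ldots,c+\Delta\}$, contributing a factor $\Delta!/(\Delta-p)!$; averaging over $\Delta \sim \Poi(\lambda(1-s))$ against $(1-s)^\Delta$ reduces, via the Poisson factorial-moment identity $\dE[X^{\underline{p}}(1-s)^X] = e^{-\lambda s(1-s)}(\lambda(1-s)^2)^p$ for $X \sim \Poi(\lambda(1-s))$, to substituting $u = \lambda(1-s)^2$ in $G$ and picking up an overall factor $e^{-\lambda s(1-s)}$. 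Evaluating at $z=s$ and simplifying the exponent via the algebraic identity $\frac{s(x-1)}{1-sx} = -1 + \frac{1-s}{1-sx}$ delivers $f(V_{d-1})$ with $\kappa = \lambda s^2$. The main obstacle is the generating-function bookkeeping: keeping paths and cycles disjoint, correctly matching path endpoints to the fresh $\Delta$-block, and verifying that $z=s$ stays within the radius of convergence $sV_{d-1}<1$ (outside which $f$ is vacuous).
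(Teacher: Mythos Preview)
Your proposal is correct and reaches exactly the same bound $f(V_{d-1})$ as the paper, but the enumeration step is organized differently. Both arguments first reduce to weighting each injection $\sigma\in\cS(c,c+\Delta)$ by $V_{d-1}^{c-p(\sigma)}$, where $p(\sigma)$ is the number of open paths: you do this by upgrading the exact value $1$ of an open path to $V_{d-1}^{\ell-1}$; the paper instead keeps open paths at $1$ and uses the equivalent bound $F(\sigma)\leq |[c]\cap\sigma([c])|=c-p(\sigma)$. From there the paper enumerates injections by the image-intersection size $k=|[c]\cap\sigma([c])|$, giving $\binom{c}{k}\binom{\Delta}{c-k}c!$ injections with weight $x^k$, and simplifies the resulting triple sum algebraically. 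Your EGF route packages the same count via the exponential formula for labeled cycle--path structures, $G(z;u,V)=\frac{1}{1-Vz}\exp\!\bigl(\frac{uz}{1-Vz}\bigr)$, then absorbs the path-endpoint assignment into the Poisson factorial-moment identity before evaluating at $z=s$, $u=\lambda(1-s)^2$. The EGF approach is arguably more transparent structurally and makes the convergence condition $sV_{d-1}<1$ explicit (outside which, as you note, the bound is vacuous and the lemma has no content); the paper's direct count is more elementary and avoids generating-function machinery. They are computationally equivalent routes to the same closed form.
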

\begin{proof}
For given $c,\Delta \geq 0$ and an injection $\sigma \in \cS(c,c+\Delta)$, let $F(\sigma)$ denote the number of elements $i\in [c]$ that belong to  closed cycles of $\sigma$. From the previous evaluations \eqref{eq_V_d_recursive} -- \eqref{eq:def_op_M} we already have obtained the bound
\begin{equation*}
    V_d\leq \sum_{c,\Delta \geq 0} \pi_{\lambda(1-s)}(\Delta)\frac{s^c (1-s)^\Delta}{c!}\sum_{\sigma\in \cS(c,c+\Delta)} V_{d-1}^{F(\sigma)}.
\end{equation*}
To upper-bound this quantity, we use the facts that $V_{d-1} \geq 1$ and that $F(\sigma) \leq \card{[c] \cap \sigma([c])}$. Then, for any $0 \leq k \leq c$, there are $\binom{c}{k}\binom{\Delta}{c-k}$ ways to chose the set $\sigma([c])$ such that $\card{[c] \cap \sigma([c])} = k$, and $c!$ distinct injections $\sigma$ with the same set $\sigma([c])$. Hence $V_d \leq f(V_{d-1})$ with
\begin{flalign*}
f(x) & := \sum_{c, \Delta \geq 0} e^{-\lambda(1-s)}\frac{s^c (\lambda(1-s)^2)^\Delta}{\Delta!}\sum_{k=0}^{c}\binom{c}{k}\binom{\Delta}{c-k} x^k\\
& = e^{-\lambda(1-s)} \sum_{k, \Delta \geq 0} x^k \frac{(\lambda(1-s)^2)^\Delta}{\Delta!}\sum_{c \geq k}  \binom{c}{k}\binom{\Delta}{c-k} s^c \\
& = e^{-\lambda(1-s)} \sum_{k, \Delta \geq 0} x^k \frac{(\lambda(1-s)^2)^\Delta}{\Delta!} s^k \sum_{c = 0}^{\Delta} \binom{c+k}{k}\binom{\Delta}{c} s^c \\
& = e^{-\lambda(1-s)} \sum_{k, c \geq 0} \frac{1}{c!} \binom{c+k}{k} (sx)^k s^c (\lambda(1-s)^2)^{c} \sum_{\Delta \geq c} \frac{(\lambda(1-s)^2)^{\Delta-c}}{(\Delta-c)!}\\
& = e^{-\lambda s (1-s)} \sum_{c \geq 0} \frac{(\lambda s (1-s)^2)^{c} }{c!} \sum_{k \geq 0} \binom{c+k}{k} (sx)^k\\
& = e^{-\lambda s (1-s)} \frac{1}{1-sx} \sum_{c \geq 0} \frac{1}{c!}\left(\frac{\lambda s (1-s)^2}{1-sx}\right)^{c} = \frac{1}{1-sx} \exp\left( \frac{\lambda s^2(1-s)(x-1)}{1-sx}\right).
\end{flalign*}
\end{proof}

We are now in a position to prove the following 
\begin{theorem}\label{MPAlign:theorem:suff_hard_phase}
Assume $\kappa=\lambda s^2$ is fixed such that $\kappa<1$. Then for $\lambda$ sufficiently large, it holds that
\begin{equation}
\limsup_{d\to\infty}V_d<+\infty,
\end{equation}
so that one-sided testability fails. 
\end{theorem}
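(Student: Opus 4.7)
My plan is to deduce $\KL_\infty < +\infty$ from the recursive bound in Lemma \ref{lemma:bounding_v_d} by exhibiting an invariant interval $[1, M]$ for the map $f$. Once $V_d \leq M$ uniformly in $d$, the chain of inequalities $\KL_d = I(T_d, T'_d) \leq I(\tau^{\star}_d, T_d) \leq V_d$ (data processing inequality together with \eqref{eq:upper_bound_mutual_info}) gives $\KL_\infty \leq M < +\infty$; the failure of one-sided testability then follows from the equivalence $(i) \Leftrightarrow (ii)$ in Theorem \ref{MPAlign:theorem:main_result_TREES}.

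\textbf{Reduction to a fixed-point inequality.} I would first check that $V_0 = 1$, since at depth zero both $\tau^{\star}_0$ and $T_0$ are trivial and the ratio $\dPls_0(\tau^{\star}_0|T_0)/\dPls_0(\tau^{\star}_0)$ is identically $1$. By Lemma \ref{lemma:bounding_v_d}, $V_d \leq f(V_{d-1})$, and $f$ is smooth and strictly increasing on $[0, 1/s)$, so any $M \in [1, 1/s)$ satisfying $f(M) \leq M$ yields $V_d \leq M$ for every $d$ by induction: the monotone map $f$ stabilizes the interval $[1, M]$ and $V_0 = 1 \in [1, M]$.

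\textbf{Choice of the invariant bound.} Taking the explicit candidate $M := 1/\kappa$ (so that $M > 1$ since $\kappa < 1$), the first requirement $M < 1/s$ rewrites as $\lambda s > 1$, which holds for $\lambda$ large enough since $\lambda s = \sqrt{\kappa \lambda} \to +\infty$. Substituting into \eqref{eq:f_bound_v_d}:
\begin{equation*}
f(1/\kappa) \;=\; \frac{1}{1 - s/\kappa} \, \exp\!\left(\frac{(1-s)(1-\kappa)}{1 - s/\kappa}\right) \;\underset{s \to 0}{\longrightarrow}\; e^{1-\kappa}.
\end{equation*}
The crucial calculus fact is the strict inequality $e^{1-\kappa} < 1/\kappa$ for $\kappa \in (0,1)$, which is just a restatement of $\log \kappa < \kappa - 1$ (strict except at $\kappa = 1$). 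By continuity of $f(1/\kappa)$ in $s$, one therefore has $f(1/\kappa) \leq 1/\kappa$ as soon as $s$ is small enough, i.e.\ as soon as $\lambda$ is large enough (with $\kappa = \lambda s^2$ held fixed).

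\textbf{Main obstacle and conclusion.} The only real technical content is the verification of the invariant inequality $f(M) \leq M$ at $M = 1/\kappa$, which reduces to the calculus inequality above together with a careful control of the $O(s)$ correction in $f(1/\kappa) - e^{1-\kappa}$ (this is the only place where one must turn the asymptotic statement into an honest quantitative bound in $s$, and hence in $\lambda$). Once this is done, induction starting from $V_0 = 1$ gives $V_d \leq 1/\kappa$ for all $d$, hence $\limsup_d V_d \leq 1/\kappa < +\infty$; consequently $\KL_\infty \leq 1/\kappa$, condition $(ii)$ of Theorem \ref{MPAlign:theorem:main_result_TREES} is violated, and one-sided detectability fails as claimed.
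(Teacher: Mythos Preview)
Your argument is correct and complete. The invariance check $f(1/\kappa)\le 1/\kappa$ for small $s$ is straightforward continuity from $f(1/\kappa)\to e^{1-\kappa}<1/\kappa$, and together with the monotonicity of $f$ on $[0,1/s)$ and the initialization $V_0=1$, this traps the whole sequence in $[1,1/\kappa]$.

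Your route differs from the paper's. The paper works locally near $x=1$: it shows that for $y\in[0,\gamma s]$ one has $f(1+y)-1\le s+Cs^2+(\kappa+\eps)y$, which by induction gives the sharper bound $V_d-1\le (s+Cs^2)\sum_{i<d}(\kappa+\eps)^i=O(s)$. In other words, the paper proves not only boundedness but that $V_d$ stays within $O(s)$ of $1$. Your approach bypasses this local analysis entirely by exhibiting the explicit global barrier $M=1/\kappa$; it is shorter and requires no careful bookkeeping of the Taylor expansion of $f$ near $1$, at the (harmless, for this theorem) cost of a looser final bound on $\limsup_d V_d$.
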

\begin{proof}
Let $\kappa<1$ be fixed, together with $\eps \in (0,4\kappa)$ such that $\kappa+\eps<1$. Let $\gamma>0$ be an arbitrary constant chosen such that 
\begin{equation*}
    \gamma > \frac{1}{1-\kappa-\eps}.
\end{equation*}
We shall consider $s>0$ sufficiently small, or equivalently $\lambda$ large enough, in particular such that $\gamma s<1$. Let  $y\in[0,\gamma s]$. Note that
\begin{equation*}
    \exp\left(\kappa\frac{y(1-s)}{1-s(y+1)}\right) \leq \exp\left(\kappa y /(1-2s)\right).
\end{equation*} Then, assuming $\frac{1}{1-2s} \leq 1+\eps/(4 \kappa)$ as well as $2 e^2 \kappa^2 \gamma s/\eps \leq 1$, we get

\begin{equation}\label{eq:proof_th4_1}
\exp\left(\kappa y /(1-2s)\right) \leq\exp\left(\kappa y + \eps y/4 \right) \leq 1+ (\kappa+\eps/2)y.
\end{equation}
Note also that, $1/(1-t) \leq 1+t+3t^2$ for $t \in (0,2/3)$. Assuming $s(y+1)\leq 2s < 2/3$, and using $y \leq \gamma s \leq 1$, we get
\begin{equation}\label{eq:proof_th4_2}
    \frac{1}{1-s(y+1)} \leq 1+s(y+1)+3[s(y+1)]^2 \leq 1+s + C s^2,
\end{equation} where $C := \gamma + 12$. Together, these last two bounds \eqref{eq:proof_th4_1} and \eqref{eq:proof_th4_2} entail, for any $y \in [0,\gamma s]$:

\begin{equation*}
    f(1+y)-1 \leq (1+(\kappa +\eps/2)y)(1+s+ C s^2)-1 \leq s+ C s^2 +(\kappa+\eps)y,
\end{equation*}
where we assumed $s$ sufficiently small that $(\kappa+\eps/2)(1+s+C s^2) \leq \kappa+\eps$. 
Note now that, provided
\begin{equation*}
    1+(\gamma+12)s + (\kappa+\eps)\gamma \leq \gamma,
\end{equation*}
it holds that 
\begin{equation}\label{eq:bound_iter_f}
f(1+y)-1\in[0,\gamma s].
\end{equation}
Note that this condition can be enforced, for any choice of $\gamma$ such that $\gamma > \frac{1}{1-\kappa-\eps}$ taking $s$ sufficiently small.

By induction on $d$, monotonicity of $f$ (which is easily obtained from the series expansion of $f$), and the initialization $V_0=1$, it follows from \eqref{eq:bound_iter_f} that for sufficiently small $s$ one has:
\begin{equation*}
    V_d -1 \leq (s + C s^2)\sum_{i=0}^{d-1}(\kappa+\eps)^i.
\end{equation*}
Since the right-hand side is uniformly bounded in $d$, the result follows. 
\end{proof}

\section{Consequences for polynomial time partial graph alignment}\label{MPAlign:section:graph_matching}
We now apply the previous results of Sections \ref{MPAlign:section:LR} -- \ref{MPAlign:section:autos_GW} to one-sided partial graph alignment. We will now describe our polynomial-time algorithm and its theoretical guarantees when one-sided detectablity holds in Theorem \ref{MPAlign:theorem:main_result_TREES}  -- in particular under condition \eqref{condition:lambda_s_KL} of Theorem \ref{MPAlign:theorem:suff_cond_KL} or condition \eqref{eq:theorem:suff_cond_auto} of Theorem \ref{MPAlign:theorem:suff_cond_auto}.

\subsection{Intuition, algorithm description}\label{subsection:intuition_algorithm_desription}
In all this part we assume that $(\lambda,s)$ satisfy one of the conditions in Theorem \ref{MPAlign:theorem:main_result_TREES}.

\subsubsection{Extending the tree correlation detection problem} 
Let $(G, H)$ be a pairs of relabeled $\G(n,\lambda/n,s)$ graphs, with underlying alignment $\pi^{\star}$. As done in Chapter \ref{chapter:NTMA}, in order to distinguish matched pairs of nodes $(u,u')$, we consider their neighborhoods $\cN_{d,G}(u)$ and $\cN_{d,H}(u')$ at a given depth $d$: these neighborhoods are close to Galton-Watson trees. In the case where the two vertices are actual matches, i.e. $u' = \pi^{\star}(u)$, we are exactly in the setting of our tree correlation detection problem under $\dPls_{d}$: Point $(v)$ of in Theorem \ref{MPAlign:theorem:main_result_TREES} shows that there exists a threshold $\beta_d$ such that with probability at least $1-\pext(\lambda s)>0$,
$$L_d(u,u') := L_d \left(\cN_{d,G}(u),\cN_{d,H}(u')\right) > \beta_d,$$ when $d \to \infty$. 
Point $(v)$ of Theorem \ref{MPAlign:theorem:main_result_TREES} shows that this threshold $\beta_d$ can be e.g. taken to be $\exp(n^\gamma)$ for some $\gamma \in (0,c \log (\lambda s))$.

At the same time, when nodes $u'$ and $\pi^{\star}(u)$ are distinct and sufficiently far away, we can argue that we are also -- with high probability -- in the setting of the tree correlation detection problem under $\dPl_d$: since $\dEl_{d}\left[L_d\right]=1$, Markov's inequality shows that with high probability when $d \to \infty$,
$$L_d(u,u') \leq \beta_d.$$

\subsubsection{Computation of the likelihood ratios}
As mentioned in Remark \ref{remark:util_rec_algo}, Formula \eqref{eq:MPAlign:LR_rec_2} enables to compute such likelihood ratios efficiently on a graph, giving the exact expression for a \emph{message passing} procedure, assuming that all neighborhoods are locally tree-like at depth $d$. Let us first define \emph{oriented likelihood ratios}: for any $u,v \in V(G)$ and $u',v' \in V(H)$, we write $L_d(u \leftarrow v,u' \leftarrow v')$ for the likelihood ratio at depth $d$ of two trees, the first one (resp. second one) being rooted at $u$ in $G$ (resp. $u'$ in $H$) where the edge $\set{u,v}$ (resp. $\set{u',v'}$), if initially present, has been deleted. In view of \eqref{eq:MPAlign:LR_rec_2} these oriented likelihood ratios satisfy the following recursion:

\begin{multline}\label{eq:rec_oriented_LR}
    L_d(u \leftarrow v,u' \leftarrow v') = \\ \sum_{k=0}^{d_u \wedge d'_{u'} - 1} \psi \left(k, d_u - 1,d'_{u'} - 1\right) \sum_{\substack{\sigma \in \cS\left([k],\cN_{G}(u) \setminus \set{v}\right) \\ \sigma' \in \cS\left([k],\cN_{H}(u') \setminus \set{v'} \right)}} \prod_{\ell=1}^{k} L_{d-1}(\sigma(\ell) \leftarrow u,\sigma'(\ell) \leftarrow u') \, ,
\end{multline} where $d_u := d_{G}(u)$ and $d'_{u'} := d_{H}(u')$. The likelihood ratio at depth $d$ between $u$ and $u'$ is then obtained by computing
\begin{equation}\label{eq:total_oriented_LR}
    L_d(u,u') = \sum_{k=0}^{d_u \wedge d'_{u'}} \psi \left(k, d_u,d'_{u'}\right) \sum_{\substack{\sigma \in \cS\left([k],\cN_{G}(u) \right) \\ \sigma' \in \cS\left([k],\cN_{H}(u') \right)}} \prod_{\ell=1}^{k} L_{d-1}(\sigma(\ell) \leftarrow u,\sigma'(\ell) \leftarrow u') \, .
\end{equation}

A natural idea is then to compute for each pair $(u,u')$ the likelihood ratio $L_d(u,u')$ with $d$ large enough (typically scaled in $\Theta(\log n)$ where $n$ is the number of vertices in $G$ and $H$) and to compare it to $\beta_d$ to decide whether $u$ in $G$ is matched to $u'$ in $H$. 

\subsubsection{A refined \emph{dangling trees trick}} 
However, as previously noted in Chapter \ref{chapter:NTMA}, without additional constraint, this strategy produces many falsely positive matches, tending e.g. to match $u$ with $u'$ if there exists $v$ such that $\set{u,v}$ is an edge of $G$ and $\set{u',\pi^{\star}(v)}$ is an edge of $H$, making the errors  increase and the performance collapse. 

To fix this issue, we use the \emph{dangling trees trick}, already introduced in \cite{Ganassali20a}, and improved here by considering three rather than two dangling trees: instead of just looking at their neighborhoods, we look for the downstream trees from distinct neighbors of $u$ in $G$ and of $u'$ in $H$. The trick is now to match $u$ with $u'$ if and only if there exists three distinct neighbors $v,w,x$ of $u$ in $G$ (resp. $v',w',x'$ of $u'$ in $H$) such that all three of the likelihood ratios $L_{d-1}(v \leftarrow u, v' \leftarrow u')$, $L_{d-1}(w \leftarrow u, w' \leftarrow u')$  and $L_{d-1}(x \leftarrow u, x' \leftarrow u')$ are larger than $\beta$. The proof of Theorem \ref{MPAlign:theorem:no_mismatchs} explains how this trick avoids false positives and why three dangling trees is a good choice.

\subsubsection{Algorithm description}
Our algorithm is as follows:

\begin{algorithm}[H]
\caption{\label{MPAlign:algo_GA} \alg{MPAlign}: Message-passing algorithm for sparse graph alignment}
\SetAlgoLined

\textbf{Input:} Two graphs $G$ and $H$ of size $n$, average degree $\lambda$, depth $d$, threshold parameter $\beta$

\textbf{Output:} A set of pairs $\cM \subset V(G) \times V(H)$.

$\cM \gets \varnothing$

Compute $L_d(u \leftarrow v,u' \leftarrow v')$ for all $\set{u,v} \in E$ and $\set{u',v'} \in E'$ with \eqref{eq:rec_oriented_LR}

\For{$(u,u') \in V(G) \times V(H)$}{
	\If{$\cN_{G}(u,d)$ and $\cN_{H}(u',d)$ contain no cycle, and $\exists \set{v,w,x}  \subset \cN_{G}(u), \exists \set{v',w',x'}  \subset \cN_{G}(u')$ such that $L_{d-1}(v \leftarrow u, v' \leftarrow u') > \beta$, $L_{d-1}(w \leftarrow u, w' \leftarrow u') > \beta$  and $L_{d-1}(x \leftarrow u, x' \leftarrow u')> \beta$}
	{	
	$\cM \gets \cM \cup \left\lbrace (u,u') \right\rbrace $
	}
}
\textbf{return} $\cM$
\end{algorithm}

\begin{remark}
To update the matrix of all likelihood ratios with \eqref{eq:rec_oriented_LR}, we update a matrix of size $O(n^2)$, each entry of which can be computed in time $O\left((d_{\max}!)^2\right)$ -- where $d_{\max}$ is the maximum degree in $G$ and $H$. Under the correlated \ER model,  $d_{\max} = O\left( \frac{\log n}{\log \log n} \right)$ \cite{Bollobas2001}, so that $d_{\max}!$ is polynomial in $n$.  Each iteration is thus polynomial in $n$ and since $d$ is taken order $\log(n)$, \alg{MPAlign} (Algorithm \ref{MPAlign:algo_GA}) runs in polynomial time.
\end{remark}

We now state two results, of the same flavour as Theorems \ref{NTMA:thm:lot_of_matchs} and \ref{NTMA:thm:no_mismatchs} in Chapter \ref{chapter:NTMA} for \alg{NTMA}, which will readily imply Theorem \ref{MPAlign:theorem:main_result_GRAPHS}.
\begin{theorem} \label{MPAlign:theorem:good_matches}
Let $(G,H)$ be drawn under the planted model with correlated $\G(n,\lambda/n,s)$ graphs such that any of the equivalent conditions of Theorem 1 holds.
Let $d = \lfloor c \log n \rfloor$ with $c \log \left(\lambda\left(2-s\right)\right)<1/2$. Let $\cM$ be the output of Alg. \ref{MPAlign:algo_GA}, taking $\beta  = \exp(n^\gamma)$ for some $\gamma \in (0,c \log (\lambda s))$. Then with high probability
\begin{equation}
\label{MPAlign:eq:theorem:good_matches}
\frac{1}{n} \sum_{u=1}^{n} \one_{\lbrace (u,\pi^{\star}(u)) \in\cM \rbrace} \geq \Omega(1).
\end{equation} 
In other words, a non vanishing fraction of nodes is correctly recovered by Algorithm \ref{MPAlign:algo_GA}.
\end{theorem}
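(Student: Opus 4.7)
The plan is to mimic the argument for Theorem \ref{NTMA:thm:lot_of_matchs} in Chapter \ref{chapter:NTMA}, with the LR-based test playing the role of the tree-matching-weight test. Without loss of generality I assume $\pi^\star = \id$, and for each $u \in [n]$ I introduce the event $M_u = \{(u,u) \in \cM\}$. The goal reduces to showing that $\dP(M_u) \geq \alpha$ for some constant $\alpha > 0$ not depending on $n$, and that the indicators $\one_{M_u}$ are pairwise asymptotically uncorrelated, so that Chebyshev's inequality yields the claim.

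To lower bound $\dP(M_u)$, I first condition on the event $C_{\cup, u, d}$ that $\cB_{G \cup H}(u, d)$ contains no cycle. Since $d = \lfloor c \log n \rfloor$ with $c \log(\lambda (2-s)) < 1/2$, Lemma \ref{NTMA:lemma:cycles_ER} applied to $G \cup H \sim \G(n, \lambda(2-s)/n)$ (up to negligible corrections) ensures $\dP(C_{\cup, u, d}) = 1 - o(1)$. On this event, the pair $(\cN_{G}(u, d), \cN_{H}(u,d))$ can be coupled, with success probability $1 - o(1)$ by an adaptation of Lemma \ref{lemma:NTMA:coupling_GW}, with a pair $(T, T')$ drawn from $\dPls_d$ together with its intersection tree $\tau^\star$. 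Under this coupling, each oriented likelihood ratio $L_{d-1}(v \leftarrow u, v' \leftarrow u)$ associated with a pair of children $(v, v') = (v, \pi^\star(v))$ of the root that both lie in $\tau^\star$ coincides in distribution with $L_{d-1}(\tilde T, \tilde T')$ for $(\tilde T, \tilde T') \sim \dPls_{d-1}$, conditional on the respective rooted subtrees surviving.

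The key mechanism is condition $(v)$ of Theorem \ref{MPAlign:theorem:main_result_TREES}: on the event that the intersection tree survives, which has probability $1 - \pext(\lambda s) > 0$, one has $\liminf_{d \to \infty} (\lambda s)^{-d} \log L_d > 0$ almost surely under $\dPls_\infty$. Since $d = \lfloor c \log n \rfloor$ and $(\lambda s)^d = n^{c \log(\lambda s)}$, this gives $\log L_d \geq \eps \, n^{c \log(\lambda s)}$ with probability bounded away from zero, which exceeds $n^\gamma = \log \beta$ for any fixed $\gamma < c \log(\lambda s)$ once $n$ is large. I then exploit the fact that, under $\dPls_d$, the root of $\tau^\star$ has offspring distribution $\Poi(\lambda s)$ with $\lambda s > 1$, so the probability that $\tau^\star$'s root has at least three children whose subtrees survive and each produce a surviving intersection subtree is some positive constant $2\alpha > 0$. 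On this event, by a union-bound argument using the almost-sure behaviour from $(v)$ applied independently to the three dangling subtree pairs (which are conditionally independent under $\dPls_d$ given the root neighbourhood structure), all three oriented likelihood ratios exceed $\beta$ with probability at least $\alpha > 0$, furnishing the required neighbors $v,w,x$ and $v',w',x'$. Combined with the previous step this yields $\dP(M_u) \geq \alpha - o(1)$.

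For the concentration, I bound
\[
\dP\!\left(\frac{1}{n}\sum_{u=1}^{n} \one_{M_u} < \alpha/2\right) \leq \frac{4}{n^2 \alpha^2} \left(n \, \Var(\one_{M_1}) + n(n-1)\, \Cov(\one_{M_1}, \one_{M_2})\right),
\]
and use Lemma \ref{NTMA:lemma:indep_neighborhoods} applied to $G \cup H$ (again valid since $c \log(\lambda(2-s)) < 1/2$) to show that, for $u \neq v$, the joint law of the depth-$d$ neighbourhoods of $u$ and $v$ in $(G,H)$ is $n^{-\Omega(1)}$-close in total variation to its independent coupling. This forces $\Cov(\one_{M_1}, \one_{M_2}) = o(1)$ and the Chebyshev bound becomes $o(1)$, concluding the proof. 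The main obstacle I anticipate is the third step, namely transferring the asymptotic almost-sure lower bound on $\log L_d$ from $(v)$ of Theorem \ref{MPAlign:theorem:main_result_TREES} into a uniform-in-$n$ positive-probability lower bound at the finite depth $d = \lfloor c \log n \rfloor$ and guaranteeing it simultaneously on three independent dangling-tree pairs; this requires a careful use of monotonicity of $L_d$ along the martingale, and the observation that the three dangling subtrees are genuinely independent given the root's immediate augmentation, so that the three events of interest are independent and each has probability at least $1 - \pext(\lambda s) - o(1)$.
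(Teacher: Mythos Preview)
Your proposal is correct and follows essentially the same approach as the paper: couple the joint neighborhood $(\cN_{G,d}(u),\cN_{H,d}(u))$ with $\dPls_d$ via Lemma~\ref{MPAlign:lemma:coupling_GW}, use point~$(v)$ of Theorem~\ref{MPAlign:theorem:main_result_TREES} together with independence of the three dangling subtree pairs to lower-bound $\dP(M_u)$ by $\alpha_3(1-\pext(\lambda s))^3 - o(1)$, and conclude by Chebyshev using Lemma~\ref{MPAlign:lemma:indep_deighborhoods} on $G_\cup$ to control the covariance. One minor slip: where you write ``union-bound argument'' you really mean independence (the three subtree pairs are conditionally i.i.d.\ given the root structure, giving the product $(1-\pext(\lambda s))^3$ rather than a union bound); otherwise your outline matches the paper's proof, and the subtlety you flag about passing from the $\liminf$ in~$(v)$ to a finite-$d$ bound is handled exactly as you suggest, since $d\to\infty$ with $n$ and $n^\gamma = o((\lambda s)^d)$.
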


\begin{theorem}\label{MPAlign:theorem:no_mismatchs}
Let $(G,G') \sim \G(n,\lambda/n,s)$ be two $s-$correlated \ER graphs. Assume that $d = \lfloor c \log n \rfloor$ with $c \log \lambda<1/4$. Let $\cM$ be the output of Alg. \ref{MPAlign:algo_GA}, taking $\beta  = \exp(n^\gamma)$ for some $\gamma \in (0,c \log (\lambda s))$. Then with high probability
\begin{equation}
\label{MPAlign:eq:theorem:no_mismatchs}
\mathrm{err}(n):=\frac{1}{n}\sum_{u=1}^{n} \one_{\lbrace \exists u'  \neq \pi^{\star}(u), \; (u,u') \in\cM \rbrace}=o(1),
\end{equation} 
i.e. only a vanishing fraction of nodes are incorrectly matched by Algorithm \ref{MPAlign:algo_GA}.
\end{theorem}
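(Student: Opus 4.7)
The plan adapts the strategy of Theorem \ref{NTMA:thm:no_mismatchs} but takes advantage of the likelihood ratio $L_{d-1}$ together with the three dangling trees in Algorithm \ref{MPAlign:algo_GA}. Assume without loss of generality that $\pi^{\star}=\id$, and fix $u\in[n]$. I first condition on the high-probability event $C_{\cup,u,2d}$ that the $2d$-neighborhood of $u$ in $G_\cup:=G\cup H$ has no cycle; this follows from Lemma \ref{NTMA:lemma:cycles_ER} applied to $G_\cup$, whose mean degree is $\lambda(2-s)$, under the hypothesis $c\log\lambda<1/4$. On this event, Lemma \ref{NTMA:lemma:control_S} yields $|\cB_{G_\cup}(u,2d)|=O(n^{1/2+o(1)})$, and the maximum degree $d_{\max}$ is at most $(\log n)^{O(1)}$.

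The key observation is that conditionally on the $G_\cup$-tree structure, each edge of $G_\cup$ is independently assigned a color in $\{\text{blue, red, purple}\}$ (corresponding respectively to $G$-only, $H$-only and $G\cap H$), so that edges supported on vertex-disjoint $G_\cup$-subtrees carry independent randomness. Consider any $u'\neq u$ and any pair $(v,v')\in\cN_G(u)\times\cN_H(u')$: the subtrees $T_{v\leftarrow u}$ and $T'_{v'\leftarrow u'}$ are respectively contained in the $G_\cup$-subtrees rooted at $v$ (away from $u$) and at $v'$ (away from $u'$). An inspection of the unique path $P$ joining $u$ and $u'$ in the $G_\cup$-tree shows that these two $G_\cup$-subtrees are vertex-disjoint if and only if $v\neq w^*$ and $v'\neq w'^*$, where $w^*$ (resp.\ $w'^*$) denotes the unique neighbor of $u$ (resp.\ $u'$) lying on $P$, when it exists and belongs to $\cN_G(u)$ (resp.\ $\cN_H(u')$). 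In this ``off-path'' case, the joint law of $(T_{v\leftarrow u},T'_{v'\leftarrow u'})$ coincides up to a standard coupling with $\dPl_{d-1}$, and Markov's inequality applied to the martingale identity $\dEl_{d-1}[L_{d-1}]=1$ gives
\begin{equation*}
    \dPl_{d-1}\bigl(L_{d-1}>\beta\bigr)\;\leq\;\beta^{-1}.
\end{equation*}
Moreover, distinct off-path pairs produce mutually independent likelihood ratios, by the branching property of the $G_\cup$-tree.

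In any triplet of pairs $(v_i,v'_i)_{i=1,2,3}$ examined by \alg{MPAlign}, at most one index $i$ has $v_i=w^*$ and at most one has $v'_i=w'^*$, so at least one pair is off-path; and in the ``far'' regime $\delta_{G_\cup}(u,u')>2d$, all three pairs are trivially off-path and mutually independent. Consequently the probability that a given triplet triggers a false match is at most $\beta^{-3}$ in the far case and $\beta^{-1}$ in the near case $\delta_{G_\cup}(u,u')\leq 2d$. Union-bounding over the $O(d_{\max}^6)$ triplets, then over at most $n$ far vertices and $O(n^{1/2+o(1)})$ near vertices $u'$, we get
\begin{equation*}
    \dP\bigl(\exists u'\neq u:\ (u,u')\in\cM\bigr)\;\leq\;n\,d_{\max}^6\,\beta^{-3}+n^{1/2+o(1)}\,d_{\max}^6\,\beta^{-1}\;=\;o(1),
\end{equation*}
since $\beta=\exp(n^{\gamma})$ with $\gamma>0$ is super-polynomial in $n$. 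Summing over $u$ yields $\dE[\err(n)]=o(1)$, and a final Markov inequality gives $\err(n)=o(1)$ with high probability, which is precisely \eqref{MPAlign:eq:theorem:no_mismatchs}. The main obstacle is the off-path independence statement: justifying that $(T_{v\leftarrow u},T'_{v'\leftarrow u'})\sim\dPl_{d-1}$ requires a careful combinatorial description of which $G_\cup$-subtrees are relevant on each side, combined with the conditional independence of edge colors. Choosing three rather than two dangling trees is what ensures at least one off-path pair in every triplet, and thereby lets us replace the delicate bound on the matching rate $\gamma(\lambda,s,\delta)$ of shifted correlated Galton--Watson trees used for \alg{NTMA} by the much simpler Markov bound above.
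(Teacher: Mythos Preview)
Your high-level strategy matches the paper's: both arguments exploit the three dangling trees to guarantee that among any triplet $(v_i,v'_i)_{i=1,2,3}$ at least one pair yields \emph{disjoint} subtrees (in your language, is ``off-path''), and then bound the probability that $L_{d-1}$ exceeds $\beta$ on that pair. The decisive gap is in the sentence ``the joint law of $(T_{v\leftarrow u},T'_{v'\leftarrow u'})$ coincides up to a standard coupling with $\dPl_{d-1}$''. In the \alg{NTMA} proof the analogous step works because the test statistic $\cW_{d-1}$ is \emph{monotone} in the trees, so stochastic domination of $\Bin(n,\lambda/n)$ by $\Poi(\lambda')$ transfers the tail bound directly. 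The likelihood ratio $L_{d-1}$ is not monotone in $(t,t')$, so stochastic domination is useless; and a total-variation coupling only gives $\dP_{\mathrm{actual}}(L_{d-1}>\beta)\leq \beta^{-1}+O(n^{-\eps})$ with $\eps<1/2$ under the assumption $c\log\lambda<1/4$. The additive $O(n^{-\eps})$ does not survive the union bound over the $n$ candidate vertices $u'$ (nor over the $n^{1/2+o(1)}$ near ones). Your appeal to ``conditional independence of edge colours'' given the $G_\cup$-tree does not help either: conditionally on that tree, $(T,T')$ is a deterministic subsampling, under which $\dE[L_{d-1}]$ is not $1$, so Markov cannot be applied.

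The paper replaces your coupling by an explicit change of measure. Writing $\widetilde{\dP}_{d-1}$ for the true law of a disjoint pair of \ER neighbourhood trees (binomial offspring with vertex depletion) and $M_{d-1}:=\widetilde{\dP}_{d-1}/\dPl_{d-1}$, Lemma~\ref{MPAlign:lemma:control_M2} shows $\dEl_{d-1}[M_{d-1}^2\one_{\cA}]=O(1)$ on the bounded-neighbourhood event~$\cA$. Cauchy--Schwarz then yields
\[
\widetilde{\dP}_{d-1}(L_{d-1}>\beta,\ \cA)\ \leq\ \dEl_{d-1}\bigl[M_{d-1}^2\one_{\cA}\bigr]^{1/2}\,\dPl_{d-1}(L_{d-1}>\beta)^{1/2}\ =\ O(\beta^{-1/2}),
\]
which is super-polynomially small and therefore survives the union bound over \emph{all} $n$ vertices $u'$ in one shot. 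This makes your far/near split unnecessary: a single disjoint pair with bound $O(\beta^{-1/2})$ already handles every $u'\neq u$, and the claimed $\beta^{-3}$ in the far case (which inherits the very same coupling gap) is never used.
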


\begin{remark}
The set $\cM$ returned by Algorithm \ref{MPAlign:algo_GA} is not necessarily an injective mapping. Let $\cM'$ be obtained by removing all pairs $(u,u')$ of $\cM$ such that $i$ or $u$ appears at least twice. Theorems \ref{MPAlign:theorem:good_matches} and \ref{MPAlign:theorem:no_mismatchs} guarantee that $\cM'$ still contains a non-vanishing number of correct matches and a vanishing number of incorrect matches, hence one-sided partial alignment holds. Theorem \ref{MPAlign:theorem:main_result_GRAPHS} easily follows, the proposed local algorithm achieving one-sided partial graph alignment.

A slight adaptation of \alg{MPAlign} (Alg. \ref{MPAlign:algo_GA}), \alg{MPAlign2} (Alg. \ref{MPAlign:algo_MPAlign2}), can be found in Appendix \ref{MPAlign:appendix:numerical}, where some results are also reported. These confirm our theory, as the algorithm returns many good matches and few mismatches. A similar algorithm has been recently studied in \cite{piccioli2021aligning}.
\end{remark}

\subsection{Proof strategy}\label{subsection:proof_strategy_GA}
We start by recalling Lemmas that precise the link between sparse graph alignment and correlation detection in trees, as explained in Section \ref{subsection:intuition_algorithm_desription}. These Lemmas are directly taken from Chapter \ref{chapter:NTMA} (to which we refer for the proofs, see Lemmas \ref{NTMA:lemma:control_S}, \ref{NTMA:lemma:cycles_ER}, \ref{NTMA:lemma:indep_neighborhoods} and \ref{lemma:NTMA:coupling_GW}) and are instrumental in the proofs of Theorems \ref{MPAlign:theorem:good_matches} and \ref{MPAlign:theorem:no_mismatchs}.

\begin{lemma}[Control of the sizes of the neighborhoods]
	\label{MPAlign:lemma:control_S}
	Let $G \sim \G(n,\lambda/n)$, $d = \lfloor c \log n \rfloor$ with $c \log \lambda <1$. For all $\gamma>0$, there is a constant $C=C(\gamma)>0$ such that with probability $1-O\left(n^{-\gamma}\right)$, for all $u \in [n]$, $t \in [d]$:
	\begin{equation}
	\label{eq:lemma:control_S}
	\left| \cS_{G}(u,t) \right| \leq C (\log n) \lambda^t.
	\end{equation}
\end{lemma}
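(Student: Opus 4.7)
The plan is to proceed via a stopping-time argument combined with Bennett's inequality applied level-by-level along the breadth-first exploration from each vertex. Throughout, I work conditionally on the filtration generated by the BFS up to level $t$.

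First, fix a vertex $u$ and define
$$T := \inf\left\{t \leq d : \card{\cS_G(u,t)} \geq K \log n\right\},$$
where $K = K(\gamma)$ is a sufficiently large constant to be chosen. If $T = \infty$ there is nothing to prove, so assume $T < \infty$. Since the previous level has size $<K \log n$, the next level satisfies, conditionally, $|\cS_G(u,T)| \overset{\mathrm{sto}}{\leq} \Bin(n, \lambda K(\log n)/n)$ of mean $\lambda K \log n$. Bennett's inequality then yields
$$\dP\left(\card{\cS_G(u,T)} \geq K'\log n\right) \leq \exp\left[-\lambda K \, h\!\left(\tfrac{K'-\lambda K}{\lambda K}\right) \log n\right],$$
with $h(x)=(1+x)\log(1+x)-x$, and this is $\leq n^{-2-\gamma}$ provided $K'$ is chosen large enough compared to $\lambda K$ and $\gamma$.

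For levels $t \geq T$, the plan is to propagate by induction on $t$ an envelope of the form
$$\card{\cS_G(u,t)} \in \left[K(\lambda/2)^{t-T}(\log n)\prod_{s=T}^{t}\!\bigl(1-\eps(\lambda/2)^{-(s-T)/2}\bigr),\; K'\lambda^{t-T}(\log n)\prod_{s=T}^{t}\!\bigl(1+\eps \lambda^{-(s-T)/2}\bigr)\right],$$
using the fact that, conditionally on the BFS filtration $\cF_{t-1}$, one has
$$\card{\cS_G(u,t+1)} \sim \Bin\!\Bigl(n - {\textstyle\sum_{s \leq t}}\card{\cS_G(u,s)},\; 1-(1-\lambda/n)^{|\cS_G(u,t)|}\Bigr),$$
together with the standard two-sided bound $\lambda x/(2n) \leq 1-(1-\lambda/n)^x \leq \lambda x/n$ valid when $\lambda x/n<1$ (which is guaranteed by $c\log\lambda<1$ and the induction hypothesis). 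At each step, another application of Bennett's inequality, now with relative deviation $\hat\eps := \eps(\lambda/2)^{-(t+1-T)/2}$, gives a bound of order $\exp[-K(\lambda/2)^{t+1-T}\log n \cdot h(\hat\eps)\cdot(1-o(1))]$, and since $h(u) \sim u^2/2$, choosing $\eps$ small and $K$ such that $K\eps > 2+\gamma$ makes each level-update fail with probability at most $n^{-2-\gamma}$.

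The main technical point (and the only subtlety) is to verify that the geometric shrinking of $\hat\eps$ at higher levels is exactly compensated by the geometric growth of the conditional mean, so that the exponent in Bennett's inequality remains of order $\log n$ uniformly in $t \in [T,d]$; the products $\prod(1\pm \eps \lambda^{-(s-T)/2})$ are then bounded by constants. Taking $C$ to be an appropriate multiple of $K'$ and finally applying the union bound over the $n$ choices of $u$ and the $d = O(\log n)$ levels costs only a polylogarithmic factor, leaving a total failure probability of $O(n^{-\gamma})$, which yields the desired uniform control $|\cS_G(u,t)| \leq C(\log n)\lambda^t$.
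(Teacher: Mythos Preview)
Your proposal is correct and follows essentially the same argument as the paper's proof: the stopping-time definition of $T$, the first application of Bennett's inequality to bound $|\cS_G(u,T)|\leq K'\log n$, the inductive two-sided envelope with products $\prod(1\pm\eps(\lambda/2)^{-(s-T)/2})$, the choice $\hat\eps=\eps(\lambda/2)^{-(t+1-T)/2}$, and the final union bound all match the paper exactly. The only cosmetic difference is that you summarize the compensation between the shrinking $\hat\eps$ and the growing conditional mean in words, whereas the paper writes out the product bounds explicitly.
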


\begin{lemma}[Cycles in the neighborhoods in an \ER graph]
\label{MPAlign:lemma:cycles_ER}
Let $G \sim \G(n,\lambda/n)$, $d = \lfloor c \log n \rfloor$ with $c \log \lambda <1/2$. Then there exists $\eps>0$ such that for any vertex $u \in [n]$, one has
\begin{equation}
\label{eq:lemma:cycles_ER}
\mathbb{P}\left(\cN_{G,d}(u) \mbox{ contains a cycle}\right) = O\left( n^{-\eps}\right).
\end{equation}
\end{lemma}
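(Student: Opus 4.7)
The plan is to follow essentially the same strategy as in the proof of Lemma \ref{NTMA:lemma:cycles_ER} in Chapter \ref{chapter:NTMA} (this is its exact analogue), combining a union bound over depths with the control on neighborhood sizes provided by Lemma \ref{MPAlign:lemma:control_S}. The condition $c\log\lambda<1/2$ is precisely what guarantees that the neighborhood at depth $d$ has size $o(\sqrt{n})$ (up to logarithmic factors), so cycles are rare.

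First, I would fix $u\in[n]$ and define the stopping depth
\[
k^\star := \inf\{t\leq d : \cN_{G,t}(u)\text{ contains a cycle}\},
\]
with the convention $k^\star=\infty$ if no cycle appears. Clearly $k^\star\geq 2$, and the event of interest is $\{k^\star<\infty\}$. For each $k\in\{2,\ldots,d\}$, observe that $\{k^\star=k\}$ occurs only if either (a) two vertices of $\cS_{G,k-1}(u)$ are joined by an edge, or (b) some vertex of $\cS_{G,k}(u)$ is connected to at least two vertices of $\cS_{G,k-1}(u)$. Conditioning on the sizes of these spheres and using the independence of edges not yet explored in the BFS exploration gives the bound
\[
\dP(k^\star=k \mid |\cS_{G,k-1}(u)|,\,|\cS_{G,k}(u)|) \;\leq\; |\cS_{G,k-1}(u)|^{2}\,\frac{\lambda}{n} \;+\; |\cS_{G,k}(u)|\,|\cS_{G,k-1}(u)|^{2}\,\frac{\lambda^{2}}{n^{2}}.
\]

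Next, I would work on the high-probability event
\[
\cA := \bigcap_{t\leq d}\bigl\{|\cS_{G,t}(u)|\leq C(\log n)\lambda^{t}\bigr\},
\]
choosing the constant $C=C(\gamma)$ from Lemma \ref{MPAlign:lemma:control_S} large enough so that $\dP(\bar{\cA})=O(n^{-2\eps})$. On $\cA$, the previous bound becomes
\[
\dP(k^\star=k\mid \cA) \;\leq\; C^{2}\,\frac{(\log n)^{2}\lambda^{2k}}{n} \;+\; C^{3}\,\frac{(\log n)^{3}\lambda^{3k}}{n^{2}}.
\]
Choose $\eps>0$ such that $c\log\lambda\leq 1/2-\eps$; then $\lambda^{2d}\leq n^{1-2\eps}$ and $\lambda^{3d}\leq n^{3/2-3\eps}$, so that summing $k$ from $2$ to $d$ gives
\[
\dP(\{k^\star<\infty\}\cap\cA) \;\leq\; d\cdot O\!\left((\log n)^{2}n^{-2\eps}\right) \;+\; d\cdot O\!\left((\log n)^{3}n^{-1-3\eps}\right) \;=\; O(n^{-\eps}).
\]
Combining with $\dP(\bar{\cA})=O(n^{-2\eps})$ yields the desired $\dP(\cN_{G,d}(u)\text{ contains a cycle})=O(n^{-\eps})$.

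There is essentially no real obstacle in this proof: the only delicate point is to make sure that the constant $C$ in Lemma \ref{MPAlign:lemma:control_S} is chosen so that the control on $|\cS_{G,t}(u)|$ holds uniformly in $t\leq d$ with probability $1-O(n^{-2\eps})$, and that the chosen $\eps$ simultaneously makes both $\lambda^{2d}/n$ and $\dP(\bar{\cA})$ decay polynomially; this is a matter of bookkeeping in the exponents rather than a substantive difficulty.
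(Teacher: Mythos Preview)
Your proposal is correct and follows exactly the same approach as the paper's proof of Lemma~\ref{NTMA:lemma:cycles_ER} (to which the paper simply refers for the present lemma): define the first depth $k^\star$ at which a cycle appears, bound $\dP(k^\star=k)$ on the good event $\cA$ via the two cases (edge within $\cS_{G,k-1}(u)$, or vertex in $\cS_{G,k}(u)$ with two neighbors in $\cS_{G,k-1}(u)$), sum over $k$, and absorb the logarithmic factors by choosing $\eps$ with $c\log\lambda\leq 1/2-\eps$. The only cosmetic difference is that the paper sums the geometric series $\sum_k \lambda^{2k}=O(\lambda^{2d})$ directly rather than bounding by $d$ times the largest term, but this changes nothing.
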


\begin{lemma}[Two neighborhoods are typically independent]
\label{MPAlign:lemma:indep_deighborhoods}
Let $G \sim \G(n,\lambda/n)$ with $\lambda >1$, $d = \lfloor c \log n \rfloor$ with $c \log \lambda < 1/2 $. Then there exists $\eps>0$ such that for any fixed nodes $u \neq v$ of $G$, the total variation distance between the joint distribution of the neighborhoods $\mathcal{L} \left(\left(\cS_{G}(u,t),\cS_{G}(v,t)\right)_{t \leq d}\right)$ and the product distribution $\mathcal{L} \left(\left(\cS_{G}(u,t)\right)_{t \leq d}\right) \otimes \mathcal{L} \left(\left(\cS_{G}(v,t)\right)_{t \leq d}\right)$ tends to $0$ as $O\left(n^{-\eps}\right)$ when $n \to \infty$.
\end{lemma}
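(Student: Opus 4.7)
The plan is to construct an explicit coupling between the joint distribution $\mathcal{L}\left(\left(\cS_G(u,t), \cS_G(v,t)\right)_{t \leq d}\right)$ and the product distribution that agrees outside an event of probability $O(n^{-\eps})$. The underlying intuition is that the BFS explorations from $u$ and from $v$ in $G$ use disjoint sets of edges as long as the two explored vertex regions do not collide, in which case they behave exactly as two independent BFS processes in $\G(n,\lambda/n)$.

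Concretely, I would first sample two independent sequences $(\tilde{\cS}(u,t))_{t \leq d}$ and $(\tilde{\cS}(v,t))_{t \leq d}$, each distributed as a single-source BFS in an independent copy of $\G(n,\lambda/n)$. Then I would define the coupled sequences $(\cS(u,t), \cS(v,t))_{t \leq d}$ by induction on $t$: starting from $\cS(u,0) = \{u\}$ and $\cS(v,0) = \{v\}$, at each step $t \geq 1$ draw $\cS(u,t)$ as a uniformly random subset of $[n] \setminus \bigcup_{s \leq t-1} \cS(u,s)$ of size $|\tilde{\cS}(u,t)|$, and independently draw $\cS(v,t)$ as a uniformly random subset of $[n] \setminus \bigcup_{s \leq t-1} \cS(v,s)$ of size $|\tilde{\cS}(v,t)|$. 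Using the classical fact that, in $\G(n,\lambda/n)$, the new sphere at step $t$ is -- conditional on the past and on its own size -- uniformly distributed among subsets of unexplored vertices of that size, a straightforward induction shows that this coupling matches the true joint law as long as $\bigcup_{s \leq t}\cS(u,s) \cap \bigcup_{s \leq t}\cS(v,s) = \varnothing$.

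It then remains to bound the probability of collision. By Lemma \ref{MPAlign:lemma:control_S} applied to both independent realizations, there is a constant $C>0$ such that, with probability $1 - O(n^{-\gamma})$ for any prescribed $\gamma>0$, one has $|\tilde{\cS}(u,t)|, |\tilde{\cS}(v,t)| \leq C (\log n)\lambda^t$ for all $t \leq d$. On this event, conditionally on the sizes, the expected number of collisions between the two independent explorations is
\begin{equation*}
    \dE\left[\left|\bigcup_{t \leq d}\cS(u,t)\cap\bigcup_{t \leq d}\cS(v,t)\right|\right] = O\!\left(\frac{(\log n)^2\lambda^{2d}}{n}\right) = O\!\left((\log n)^2 n^{2c\log\lambda - 1}\right),
\end{equation*}
which, under the assumption $c\log\lambda < 1/2$, is $O(n^{-\eps})$ for some $\eps>0$. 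Combining with Markov's inequality then yields the desired control on the total variation distance. The main technical point is verifying that the coupling is exact on the non-collision event, which amounts to checking that the conditional joint distribution of $(\cS(u,t),\cS(v,t))$ factorizes; this is routine once one exploits that, conditionally on the past, the new layer is generated by independent Bernoulli trials between the current sphere and the unexplored vertices, yielding the required uniform structure of subsets of fixed size.
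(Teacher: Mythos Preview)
Your proposal is correct and follows essentially the same approach as the paper's proof (given in the appendix of Chapter~\ref{chapter:NTMA} for the identical Lemma~\ref{NTMA:lemma:indep_neighborhoods}): the same coupling via independent size-matched uniform layers, the same observation that the coupling is exact until collision, and the same collision bound $O((\log n)^2\lambda^{2d}/n)$ obtained via Lemma~\ref{MPAlign:lemma:control_S} and Markov's inequality. The only cosmetic difference is that the paper writes out the binomial domination of the number of new collisions at each layer explicitly, whereas you summarize this in one line.
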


\begin{lemma}[Coupling neighborhoods with Galton-Watson trees]
\label{MPAlign:lemma:coupling_GW} We have the following couplings:
\begin{itemize}
    \item[$(i)$] Let $G \sim \G(n,\lambda/n)$, $d = \lfloor c \log n \rfloor$ with $c \log \lambda<1/2$. Then there exists $\eps>0$ such that for any fixed node $u$ of $G$, the variation distance between the distribution of $\cN_{G,d}(u)$ and the distribution $\GWl_{d}$ tends to 0 as $O\left( n^{-\eps}\right)$ when $n \to \infty$.
    \item[$(ii)$] For $(G,H)$ two correlated $\G(n,\lambda/n,s)$ graph with planted alignment $\pi^{\star}$, $d = \lfloor c \log n \rfloor$ with $c \log (\lambda s)<1/2$ and $c \log (\lambda (1-s))<1/2$, there exists $\eps>0$ such that for any fixed node $u$ of $G$, the variation distance between the distribution of $(\cN_{G,d}(u),\cN_{H,d}(\pi^{\star}(u)))$ and the distribution $\dPls_{d}$ (as defined in Section \ref{MPAlign:subsection:model_random_trees}) tends to 0 as $O\left( n^{-\eps}\right)$ when $n \to \infty$.
\end{itemize}
\end{lemma}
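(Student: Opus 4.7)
The overall strategy is to construct the neighborhood(s) via breadth-first exploration, and at each step compare the conditional offspring distribution in the (correlated) graph with the Poisson distribution that governs the Galton--Watson model. The total variation distances accumulated step-by-step will be controlled by combining the Stein--Chen bound $\DTV(\Bin(n,\lambda/n),\Poi(\lambda)) \leq \lambda^2/n$ with the classical estimate $\DTV(\Poi(\mu),\Poi(\mu')) \leq |\mu-\mu'|$. We work throughout on the high-probability event $\mathcal{A}$ of Lemma \ref{MPAlign:lemma:control_S} (with $\gamma$ chosen large enough so that $\dP(\overline\mathcal{A})=O(n^{-2\eps})$), on which every layer up to depth $d$ has size at most $C(\log n)\lambda^t$; since $c\log \lambda < 1/2$, this yields $|\cS_{G,t}(u)| \leq (\log n) n^{1/2-\eps}$ uniformly in $t \leq d$.

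For part $(i)$, we mimic the proof of Lemma \ref{lemma:NTMA:coupling_GW} of Chapter \ref{chapter:NTMA}. Let $\cF_{t-1}$ be the $\sigma$-algebra generated by $(|\cS_{G,s}(u)|)_{s \leq t-1}$ and let $W_t$ be a random variable with conditional distribution $\Poi(\lambda |\cS_{G,t-1}(u)|)$ given $\cF_{t-1}$. On $\mathcal{A}$, using $1 - (1-\lambda/n)^x \in [\lambda x/n - \lambda^2 x^2/(2n^2), \lambda x/n]$ and the Stein--Chen bound applied to each new potential neighbor in $[n] \setminus \cN_{G,t-1}(u)$, one obtains
\begin{equation*}
\DTV\big(\cL(|\cS_{G,t}(u)| \mid \cF_{t-1}),\cL(W_t \mid \cF_{t-1})\big) \leq O\big((\log n)^3 n^{-2\eps}\big).
\end{equation*}
Summing these TV distances over $t \leq d$ via the chain rule for couplings gives an aggregated error of $O((\log n)^4 n^{-2\eps}) = O(n^{-\eps})$ for the layer-size process. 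The full tree structure of $\cN_{G,d}(u)$ is then recovered from the layer sizes by specifying, conditionally on $|\cS_{G,t}(u)|$, a uniform random assignment of children to parents — which matches exactly the conditional structure of $\GWl_d$ by exchangeability, since in both models, given the sizes, the identities of the new vertices are uniform among the available labels. Combining with $\dP(\overline \mathcal{A}) = O(n^{-2\eps})$ completes the bound.

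For part $(ii)$, we perform a simultaneous BFS exploration of $\cN_{G,d}(u)$ and $\cN_{H,d}(\pi^\star(u))$, tracking three types of new neighbors at each depth: those connected only in $G$, only in $H$, and in both. Relabeling so that $\pi^\star = \id$ without loss of generality, for a vertex $v \in \cV := \cV_{G,t-1}(u) \cup \cV_{H,t-1}(u)$ and a candidate new vertex $w$ outside of $\cV$, the joint distribution of the two potential edges $\{v,w\}_G$ and $\{v,w\}_H$ is exactly the one given by \eqref{eq:CER_model}. Thus, conditionally on the past exploration $\cF_{t-1}$, the triple of counts (common children, $G$-only children, $H$-only children) of a vertex $v$ is Binomial with parameters $(n - |\cV|, \lambda s/n, \lambda(1-s)/n, \lambda(1-s)/n)$. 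By the same Stein--Chen/Poisson comparison as before, now applied to a trivariate Binomial versus a product of three independent Poissons $\Poi(\lambda s) \otimes \Poi(\lambda(1-s)) \otimes \Poi(\lambda(1-s))$, the TV distance at each step is again $O((\log n)^3 n^{-2\eps})$ on $\mathcal{A}$. This is exactly the joint offspring distribution prescribed by $\dPls_d$: the three Poisson counts correspond respectively to children in the intersection tree $\tau^\star$ and to the two $(\lambda,s)$-augmentations. Propagating errors across $d = O(\log n)$ layers and adding $\dP(\overline \mathcal{A})$ yields the claimed $O(n^{-\eps})$ bound.

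The main technical obstacle is that at each depth step the exploration of $(G,H)$ eats into the shared pool of unexplored vertices, so the Binomial parameters drift from their Poisson counterparts by a quantity proportional to $|\cV|/n$. The event $\mathcal{A}$ keeps this drift at $O((\log n) n^{-1/2-\eps})$ per vertex, which after multiplication by the layer size and summation over depth leaves enough margin for the aggregated TV error to remain $o(1)$ polynomially. The only genuinely new input compared with part $(i)$ — and the source of a small additional care — is verifying that the correlation structure between the $G$- and $H$-edges incident to the \emph{same} candidate vertex $w$ is preserved by the coupling, which is straightforward because \eqref{eq:CER_model} already writes as a categorical law on four outcomes, and its Poisson analogue on $\dN^3$ is obtained by the standard thinning/splitting construction.
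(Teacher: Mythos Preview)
Your proposal is correct and follows essentially the same approach as the paper. The paper defers the proof to Lemma~\ref{lemma:NTMA:coupling_GW} of Chapter~\ref{chapter:NTMA}, which uses exactly the breadth-first exploration combined with the Stein--Chen bound and the $\DTV(\Poi(\mu),\Poi(\mu'))\leq |\mu-\mu'|$ estimate on the high-probability event $\mathcal{A}$; your extension to the full tree structure (rather than just layer sizes) and to the correlated pair via the trivariate Binomial-to-Poisson comparison is the natural and intended adaptation.
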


\subsubsection{Proof of Theorems \ref{MPAlign:theorem:good_matches} and \ref{MPAlign:theorem:no_mismatchs}}
\begin{proof}[Proof of Theorem \ref{MPAlign:theorem:good_matches}] 
First, since $c \log \left(\lambda\left(2-s\right)\right)<1/2$, we also have $c \log \left(\lambda\left(1-s\right)\right)<1/2$ and $c \log \left(\lambda s\right)<1/2$. For $i \in [n]$, point $(ii)$ of Lemma \ref{MPAlign:lemma:coupling_GW} implies that the two neighborhoods $\cN_{G,d}(u)$ and $\cN_{H,d}(\pi^{\star}(u))$ can be coupled with trees drawn under $\dPls_{d}$ as defined in Section \ref{MPAlign:subsection:model_random_trees} with probability $\geq 1-O(n^{- \eps})$. 

Under this coupling, there is a probability $\alpha_3 >0$ that the root in the intersection tree has at least three children, and since we work under the conditions of Theorem \ref{MPAlign:theorem:main_result_TREES} point $(v)$ implies that the three likelihood ratios are greater than $\beta$ with positive probability $(1-\pext(\lambda s))^3>0$. Hence, the probability of $M_u := \left\lbrace (u,\pi^{\star}(u)) \in\cM \right\rbrace$ is at least $(1-o(1)) \alpha_3 (1-\pext(\lambda s))^3  =: \alpha >0$. 

Let $G_{\cup}$ be the true union graph, that is $G_{\cup} := G^{\pi^{\star}} \cup H$ where 
$G^{\pi^{\star}}$ is the relabeling of $G$ according to permutation $\pi^{\star}$. We have $G_{\cup} \sim \G(n,\lambda(2-s)/n)$. For $u \neq v \in [n]$, define $I_{u,v}$ the event on which the two neighborhoods of $u$ and $v$ in $G_{\cup}$ coincide with their independent couplings up to depth $d$. Since $c \log \left(\lambda\left(2-s\right)\right)<1/2$, by Lemma \ref{MPAlign:lemma:indep_deighborhoods}, $\mathbb{P}(I_{u,v})=1-o(1)$. Then for $0<\eps<\alpha$, Markov's inequality yields

\begin{flalign*}
\mathbb{P}\left(\frac{1}{n} \sum_{u=1}^{n} \one_{\lbrace (u,\pi^{\star}(u)) \in\cM \rbrace}<\alpha-\eps\right) & \leq \mathbb{P}\left(\sum_{u=1}^{n} \left(\mathbb{P}(M_u)-\one_{M_u}\right)>\eps n\right)\\
& \leq \frac{1}{n^2 \eps^2} \left(n \mathrm{Var}\left(\one_{M_1}\right)+ n(n-1)\mathrm{Cov}\left(\one_{M_1},\one_{M_2}\right) \right)\\
& \leq \frac{\mathrm{Var}\left(\one_{M_1}\right)}{n \eps^2} + \frac{1-\mathbb{P}\left(I_{1,2}\right)}{ \eps^2} \to 0,
\end{flalign*}
which ends the proof.
\end{proof}

\begin{remark}
Note that in view of the proof here above, the recovered fraction $\Omega(1)$ guaranteed by in Theorem \ref{MPAlign:theorem:good_matches} can be taken as close as wanted to
 $$ \alpha(\lambda s) := (1-\pext(\lambda s))^3 \left(1- \pi_{\lambda s}(0) - \pi_{\lambda s}(1) - \pi_{\lambda s}(2) \right). $$
This fraction is a priori not optimal, and can be interestingly compared with recent results in \cite{ganassali2021impossibility} (Chapter \ref{chapter:impossibility}) showing that no more than a fraction $1-\pext(\lambda s)$ of the nodes can be recovered. 
\end{remark}

\begin{proof}[Proof of Theorem \ref{MPAlign:theorem:no_mismatchs}]
First, we condition on the event $\cA$ that all $d-$neighborhoods in $G$ and $H$ are of size at most $C( \log n)\lambda^d$, which happens with probability $1-o(1)$ by Lemma \ref{MPAlign:lemma:control_S}. Note that by assumption this uniform upper bound is $O((\log n)n^{1/4})$.

In order to control the probability that $u$ is matched with some 'wrong' $u' \neq \pi^\star(u)$ by our algorithm, we follow the same first steps as in the proof of Theorem \ref{NTMA:thm:no_mismatchs} of Chapter \ref{chapter:NTMA}: we will first fix $u$ in $G$ and work on the event $\cE_u$ where $\cN_{G_{\cup},2d}(u)$ has no cycle. Since $c\log(\lambda) <1/4$, this event happens with probability $1-o(1)$ by Lemma \ref{MPAlign:lemma:cycles_ER}. 

Consider then $u'$ in $H$ such that $u' \neq \pi^{\star}(u)$. If $u$ and $u'$ are matched by \alg{MPAlign}, then necessarily $\cN_{G}(u,d)$ and $\cN_{H}(u',d)$ contain no cycle: the $d-$neighborhoods are thus tree-like. For any choice of distinct neighbors $v,w,x$ of $u$ in $G$ (resp. $v',w',x'$ of $u'$ in $H$), we define the corresponding pairs of trees of the form $(T_\ell,T'_\ell)$, where $T_\ell$ (resp. $T'_\ell$) is the tree of depth $d-1$ rooted at $\ell \in \set{v,w,x}$ in $G$ (resp. $\ell \in \set{v',w',x'}$ in $H$) after deletion of edge $\set{u,\ell}$ in $G$ (resp. $\set{u',\ell}$ in $H$). A moment of thought shows that, no matter the choice of $v,w,x$ and $v',w',x'$, on event $\cE_i$, one of these three pairs $(T_\ell,T'_\ell)$ must be made of \emph{two disjoint trees}. 

We now focus on a pair $(T,T')$ of such disjoint trees: these trees of depth $d-1$ can be built recursively by sampling a binomial number of children for each vertex. Since we condition on the fact that the trees are not intersecting, if at some point $k$ vertices have been uncovered, then the number of children to be drawn is exactly of distribution $\Bin\left(n-k , \lambda/n \right)$. With this exact construction, we denote by $\widetilde{\dP}_d$ the distribution of the pair $(T,T')$. Define 

\begin{equation}\label{eq:M_def}
    M_{d-1} := \frac{\widetilde{\dP}_{d-1}(T,T')}{\dPl_{d-1}(T,T')}.
\end{equation}
We have that
\begin{flalign*}
\widetilde{\dP}_{d-1}(L_{d-1}(T,T')> \beta \cap \cA ) & = \dEl_{d-1}\left[M_{d-1} \times \one_{\cA} \times \one_{L_{d-1}(T,T')> \beta} \right] \\
& \leq \dEl_{d-1}[M^2_{d-1} \one_{\cA} ]^{1/2} \beta^{-1/2} \, ,
\end{flalign*} by a successive use of Cauchy-Schwarz and Markov's inequalities, using that $\dEl_{d-1}\left[L_{d-1}(T,T')\right]=1$. We now state the following Lemma, proved in Appendix \ref{MPAlign:appendix:proof_lemma:control_M2}:

\begin{lemma}\label{MPAlign:lemma:control_M2}
With the previous notations, we have
\begin{equation}\label{eq:lemma:control_M2}
    \dEl_{d-1}\left[ M^2_{d-1} \one_{\cA} \right] = O(1).
\end{equation}
\end{lemma}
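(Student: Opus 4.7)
The plan is to obtain an explicit product formula for $M_{d-1}$ and then show that it is uniformly bounded by a constant on the event $\cA$, which trivially gives the desired $O(1)$ bound on the second moment.

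First I would derive the Radon–Nikodym derivative. Let $\mathcal{I}(T,T')$ denote the set of internal vertices (those of depth strictly less than $d-1$) of the disjoint pair $(T,T')$, let $c_v$ be the number of children of an internal vertex $v$, and order the internal vertices $v_1,v_2,\ldots$ in BFS/discovery order. For the step at which we reveal the children of $v_j$, let $k_j$ denote the total number of vertices already revealed in $T\cup T'$. Under $\dPl_{d-1}$, the numbers of children are i.i.d.\ $\Poi(\lambda)$, whereas under $\widetilde{\dP}_{d-1}$ they are conditionally $\Bin(n-k_j,\lambda/n)$ (uniformly placed among the $n-k_j$ not-yet-used labels, so that the combinatorial labelling factors cancel against one another in the ratio). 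Therefore
\[
M_{d-1}(T,T')=\prod_{j}\frac{\binom{n-k_j}{c_j}(\lambda/n)^{c_j}(1-\lambda/n)^{n-k_j-c_j}}{e^{-\lambda}\lambda^{c_j}/c_j!}.
\]

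Next I would bound each factor. Using $\binom{n-k_j}{c_j}\leq n^{c_j}/c_j!$, each factor is at most $e^{\lambda}(1-\lambda/n)^{n-k_j-c_j}$. Since $(1-\lambda/n)^{n}\leq e^{-\lambda}$ and $(1-\lambda/n)^{-m}\leq \exp\bigl(m\lambda/(n-\lambda)\bigr)$ for $m\geq 0$, this is at most $\exp\!\bigl(\lambda(k_j+c_j)/(n-\lambda)\bigr)$. Taking the product over $j$ gives
\[
M_{d-1}(T,T')\leq \exp\!\left(\frac{\lambda}{n-\lambda}\sum_{j}(k_j+c_j)\right)\leq \exp\!\left(\frac{\lambda\,K(K+1)}{n-\lambda}\right),
\]
where $K=|V(T)|+|V(T')|$ is the total size of the two disjoint trees (so in particular $\sum_j c_j\leq K$ and $k_j\leq K$ for every $j$).

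Finally I would invoke the assumptions on $d$ and $\cA$. On the event $\cA$, Lemma~\ref{MPAlign:lemma:control_S} gives $K=O((\log n)\lambda^d)$, and the hypothesis $c\log\lambda<1/4$ yields $K=O((\log n)n^{1/4})$, hence $K^2/n=O((\log n)^2 n^{-1/2})=o(1)$. Consequently $M_{d-1}\one_{\cA}\leq \exp(o(1))$ as $n\to\infty$, so that $M_{d-1}\one_{\cA}$ is bounded by an absolute constant $C$ for $n$ large enough. Therefore
\[
\dEl_{d-1}\bigl[M_{d-1}^2\,\one_{\cA}\bigr]\leq C^{2}\,\dEl_{d-1}[\one_{\cA}]\leq C^{2}=O(1),
\]
which establishes \eqref{eq:lemma:control_M2}. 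The only delicate point in executing this plan is a careful bookkeeping in the derivation of the product formula for $M_{d-1}$ to make sure combinatorial labelling factors from the uniform re-labelling of children in $\dPl_{d-1}$ and the uniform placement of children among $[n]\setminus(\text{revealed})$ in $\widetilde{\dP}_{d-1}$ indeed cancel, so that the ratio reduces to the clean product of Binomial-over-Poisson factors written above.
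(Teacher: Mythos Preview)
Your proposal is correct and follows essentially the same approach as the paper: write $M_{d-1}$ as a product of Binomial-over-Poisson factors indexed by the discovery steps, bound each factor by $\exp\bigl(\lambda(k_j+c_j)/n\bigr)$ (up to harmless constants), and then use the size constraint from $\cA$ together with $c\log\lambda<1/4$ to conclude. The only difference is that the paper, after obtaining the same product bound, goes through a Poisson moment-generating-function computation to control $\dEl_{d-1}[M_{d-1}^2\one_{\cA}]$, whereas you observe directly that the exponent $\lambda K(K+1)/(n-\lambda)$ is already deterministically $o(1)$ on $\cA$, so that $M_{d-1}\one_{\cA}$ itself is uniformly bounded and the second-moment bound is immediate; this is a legitimate shortcut.
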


Together with the previous Lemma, noting that with high probability the maximum degree in $G$ and $H$ is less than $\log n$, union bound gives
\begin{flalign*}
\mathbb{P}\left(\cA \cap \left\{\exists u'  \neq \pi^{\star}(u), \; (u,u') \in\cM \right\}\right) & \leq \dP(\bar{\cE_i})+ o(1) + n \times \log^6 n \times \beta^{-1/2} \\
& = O\left((\log^6 n) \times n \times \exp(-n^{\gamma/2})\right) = o(1).
\end{flalign*}
The proof follows by appealing to Markov's inequality.
\end{proof}

\newpage

\begin{subappendices}
\section{Numerical experiments for \alg{MPAlign2}}\label{MPAlign:appendix:numerical}

In this section, we give some details on a practical implementation of our algorithm. We start by introducing some notations.
Given an edge $\set{u,v}$ of a graph, we denote by $u\to v$ and $v\to u$ the associated directed edges. Now given two graphs $G=(V,E)$ and $H=(V',E')$, we define the matrix $(m^t_{u \to v, u'\to v'})_{ \set{u,v} \in E , \set{u',v'} \in E'}\in \dR_+^{2|E|\times 2|E'|}$ recursively in $t$, as follows:
\begin{equation}
\label{eq:message}
    m^{t+1}_{u \to v, u' \to v'} =\sum_{k=0}^{d_u\wedge d_{u'} -1}\widetilde{\psi}(k,d_u-1,d_{u'}-1)\sum_{\substack{\{\ell_1,\dots \ell_k\}\in \partial u \backslash v \\ \{w_1,\dots w_k\}\in \partial u' \backslash v'}}\sum_{\sigma\in \cS_k}\prod_{a=1}^k m^t_{\ell_a\to u, w_{\sigma(a)}\to u'} \, ,
\end{equation}
where $d_u := d_{G}(u)$, $d'_{u'} := d_{H}(u')$, $\widetilde{\psi}(k,d_1,d_2) = k!\psi(k,d_1,d_2)$, and $\partial u \backslash v$ (resp. $\partial u'  \backslash v'$) is a shorthand notation for $\cN_G(u)\setminus \{v\}$ (resp. $\cN_{H}(u')\setminus \{v'\}$) and by convention $m^0_{u \to v, u' \to v'}=1$.

Denoting $\partial u := \cN_G(u)$ (resp. $\partial u' := \cN_{H}(u')$), for $t\in \dN$ we define the matrix $(m^t_{u,u'}) \in \dR_+^{V\times V'}$ as follows:
\begin{equation}
\label{eq:aggregation}
    m^t_{u,u'} = \sum_{k=0}^{d_u\wedge d_{u'}}\widetilde{\psi}(k,d_u,d_{u'})\sum_{\substack{\{\ell_1,\dots \ell_k\}\in \partial u \\ \{w_1,\dots w_k\}\in \partial u'}}\sum_{\sigma\in \cS_k}\prod_{a=1}^k m^t_{\ell_a\to u, w_{\sigma(a)}\to u'}.
\end{equation}

It is easy to see that if the graphs $G$ and $H$ are tree-like up to depth $t$, then $m^t_{u,u'}$ is exactly the likelihood ratio $L_t(s_u,s'_{u'})$ where $s_u$ (resp. $s'_{u'}$) is the tree neighborhood of $u$ in $G$ (resp. of $u'$ in $H$).

In experiments, we run our algorithms on correlated \ER model with possible cycles, so that the matrix $m^t_{u,u'}$ is interpreted as an approximation of the true likelihood ratio. From such an approximation, we compute two mappings $\pi^t:V\to V'$ as $$\pi^t(u) = \argmax(m^t_{u,\cdot})$$ and $\sigma^t:V'\to V$ as $$\sigma^t(u') = \argmax(m^t_{\cdot, u'})$$ which are candidates for matching vertices from $G$ to $H$ or from $H$ to $G$.
If $t$ is small, then the approximation $m^t_{u,u'}$ will not be accurate as it does not incorporate sufficient information (only at depth $t$ in both graphs). When $t$ is large, cycles will appear in both graphs so that the recursion is not anymore valid. In order to choose an appropriate number of iterations $t$, we adopt the following simple strategy: we compute all the matrices $m^t_{u,u'}$ for all values of $t$ less than a parameter $d$; then from these matrices, we compute the corresponding mappings $\pi^t$ and $\sigma^t$ as described above; we then compute:
\begin{flalign}
\label{eq:edges}
e(t) &:= \text{match-edges}(G,H,\pi^t,\sigma^t) \nonumber \\
 &:= \frac{1}{|E|}\sum_{\set{u,v}\in E} \one_{(\pi^t(u),\pi^t(v))\in E'} + \frac{1}{|E'|}\sum_{\set{u',v'}\in E'} \one_{(\sigma^t(u'),\sigma^t(v'))\in E} \, .
\end{flalign}
Finally, we choose $$t^* =\argmax(e(t)) \, .$$
Note that, we are considering sparse \ER graphs which are typically not connected (the diameter is infinite). We know from \cite{ganassali2021impossibility} (Chapter \ref{chapter:impossibility}) that only the giant components of $G$ and $H$ can possibly be aligned. Hence as a first pre-processing step, we remove all the small connected components from $G$ and $H$ and keep only the largest one. As a result, our algorithm takes as input 2 connected graphs of (possibly) different sizes. The pseudo-code for our algorithm is given below:

\begin{algorithm}[H]
\caption{\label{MPAlign:algo_MPAlign2} \alg{MPAlign2}}
\SetAlgoLined

\textbf{Input:} Two connected graphs $G=(V,E)$ and $H=(V',E')$, parameter $d$ and parameters of the correlated \ER model $\lambda$ (average degree) and $s$

\For{$t \in \{1,\dots, d\}$}{
	compute $m^t_{u\to v, u'\to v'}$ thanks to \eqref{eq:message}
	
	compute $m^t_{u,u'}$ thanks to \eqref{eq:aggregation}

	compute $\pi^t:V\to V'$ as $\pi^t(u) = \argmax(m^t_{u,\cdot})$
	
	compute $\sigma^t:V'\to V$ as $\sigma^t(u') = \argmax(m^t_{\cdot, u'})$
	
	compute $e(t) = \text{match-edges}(G,H,\pi^t,\sigma^t)$ thanks to \eqref{eq:edges}
	
	}
$t^* = \argmax(e(t))$

\textbf{Return} $\pi^{t^*}$, $\sigma^{t^*}$, $m^{t^*}$

\end{algorithm}

Figure \ref{MPAlign:fig:overlap} shows some empirical results for graphs of size $200$ for values $\lambda = 2;2,5;3$ where the overlap is the mean of the overlaps given by $\pi^{t^*}$ and $\sigma^{t^*}$. The maximum number of iterations is fixed to $d=15$. For more numerical experiments on this algorithm, see \cite{piccioli2021aligning}.

\begin{figure}[H]

\includegraphics[width=14cm]{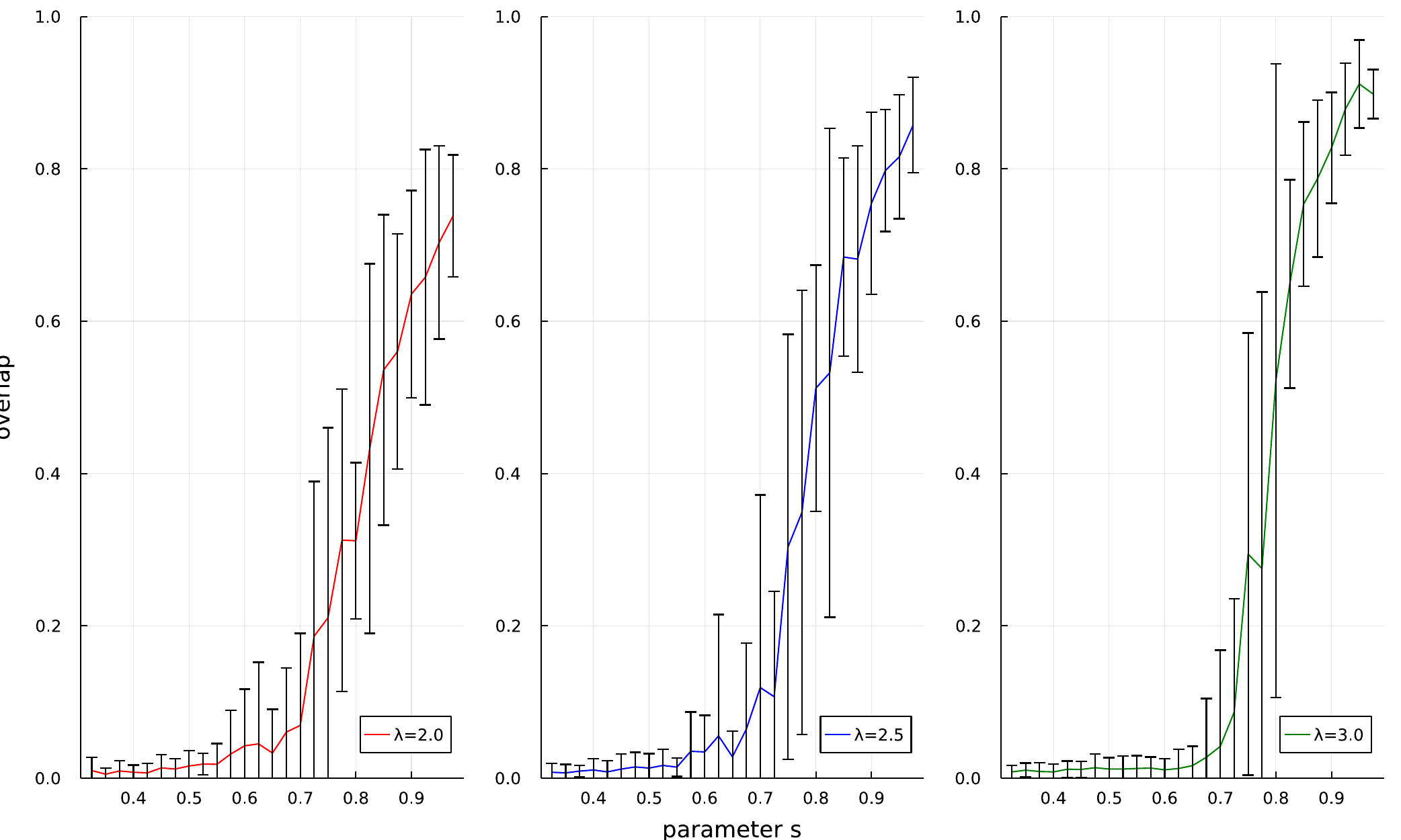}
\centering
\caption{\label{MPAlign:fig:overlap} Overlap as a function of the parameter $s$ for graphs with (initial) size $n=200$ for various values of $\lambda$ (parameter $d=15$). Each point is the average of $10$ simulations. }
\end{figure}

 This choice of $d=15$ is validated by the results presented in Figure \ref{fig:iter}. We plot for each simulation the mean overlap of $\pi^t$ and $\sigma^t$ as a function of $t\leq 15$. We see that for low values of $s$ (on the left $s=0.4$), the overlap behaves randomly. In this scenario, increasing the value of $d$ will probably not help as cycles will deteriorate the performance of the algorithm. For high value of $s$ (on the right $s=0.9$), we see that the overlap starts by increasing and then decreases abruptly to zero, this is due to numerical issues: some messages in $m^t$ are too large for our implementation of the algorithm to be able to deal with them. Finally for values of $s$, where signal is detected (in the middle $s=0.675$), we see that when the signal is detected, the overlap start by increasing until reaching a maximum and then decreases before numerical instability. We also note that our choice of $t^*$ thanks to the number of matched edges can be fairly sub-optimal. We believe that a better understanding of the performance of our algorithm for finite $n$ is an interesting open problem. Indeed, we refer to \cite{piccioli2021aligning} which provides more detailed experimental results on a similar algorithm.

\begin{figure}[H]

\includegraphics[width=14cm]{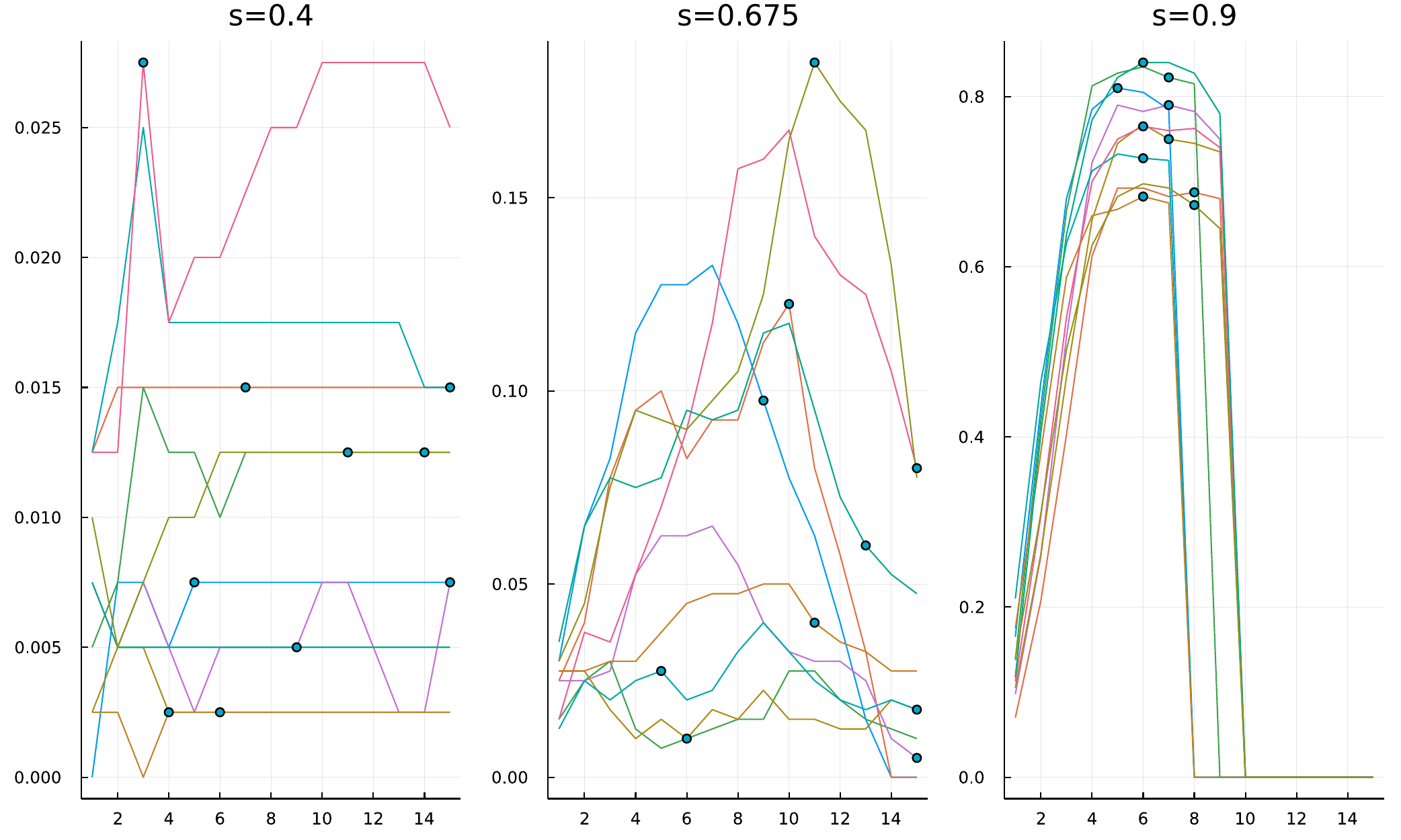}
\centering
\caption{\label{fig:iter} Overlap as a function the number of iterations $t$ for graphs with (initial) size $n=200$ for $\lambda=2.5$ (parameter $d=15$) and various values of $s$. The dotted point on each curve corresponds to $t^*$. Note that the $y$-axis of each plot have different scale. When overlap reaches zero, our algorithm hits infinity.
}
\end{figure}

\section{Additional proofs}\label{MPAlign:appendix:additional_proofs}
\addtocontents{toc}{\protect\setcounter{tocdepth}{0}}
\subsection{Proof of Proposition \ref{proposition:auto_GW}}
\label{appendix:proof:proposition:auto_GW}
\begin{proof}
Throughout the proof, let $X_\mu$ denote a Poisson random variable with parameter $\mu$. 
A node $u \in\cL_{n-2}(\tau^{\star})$ has, independently for each $k\in \dN$, a number $N_k \sim \Poi(r\pi_r(k))$ children who themselves have $k$ children. To each such node, we can associate 
\begin{equation*}
    \prod_{k\in \dN}N_k!
\end{equation*} permutations of its children that will preserve the labeled tree. Likewise, for each node $u \in \cL_{d-1}(\tau^{\star})$, there are $c_u!$ permutations of its children that don't modify the tree, where $c_u := c_{\tau^{\star}}(u)$. Thus by the strong law of large numbers, we have:
\begin{equation}\label{eq:equivalent_logaut1}
\log \card{\Aut(\tau^{\star})} \geq (1+o_{\dP}(1))\left[w r^{n-1}\dE\left[\log(X_r!)\right]+ w r^{n-2}\sum_{k\in \dN}\dE\left[\log(X_{r\pi_r(k)}!)\right]\right].
\end{equation}
Recall the classical estimate for large $\mu$:
\begin{equation}\label{eq:moment_factoriel2}
\dE \log(X_\mu!)=\mu\log(\mu)-\mu+\frac{1}{2}\log(2\pi e \mu)+O\left(\frac{1}{\mu}\right),
\end{equation}
and Stirling's formula gives
\begin{equation}\label{eq:stirling}
    \log(k!)= k \log k - k + \frac{1}{2}\log(2 \pi k) + O\left( \frac{1}{k} \right).
\end{equation}
We now give some estimates of the distribution $\pi_r(k)$ in the following Lemma, which proof is deferred to Appendix \ref{appendix:proof_lemma_equivalent_pi}.
\begin{lemma}\label{lemma_equivalent_pi}
Let $\eps(r)$ be such that $\eps(r) \to 0$ and $\eps(r)\log r \to +\infty$ when $r \to +\infty$. Let 
\begin{equation*}
    I_{r,\eps}:=\left[r-(1-\eps(r))\sqrt{r\log r},r+(1-\eps(r))\sqrt{r\log r}\right].
\end{equation*} Then 
\begin{itemize}
    \item[$(i)$] we have
    \begin{equation}\label{eq:lemma_concentration_poisson}
        \dP\left( X_r \notin I_{r,\eps} \right) = O\left(r^{-1/2} e^{\eps(r)\log r}\right).
    \end{equation}
    \item[$(ii)$] for all $k \in I_{r,\eps}$, letting $x_k=\frac{k-r}{\sqrt{r}}$, we have
\begin{equation}\label{eq:lemma_equivalent_pi}
\pi_r(k)=\frac{1}{\sqrt{2\pi r}}e^{-{x_k^2}/{2}}\left[1+\frac{x_k^3}{6\sqrt{r}}-\frac{x_k}{2\sqrt{r}}+O\left(\frac{x_k^6}{r}\right)\right].
\end{equation}
  \item[$(iii)$] Note that \eqref{eq:lemma_equivalent_pi} implies that for each $k\in I_{r,\eps}$, it holds that $r\pi_r(k) = \Omega\left(e^{\eps(r) \log r (1-o(1))}\right) $, thus diverges to $+\infty$.
\end{itemize}
\end{lemma}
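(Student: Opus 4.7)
The plan is to treat the three statements in sequence, since (ii) essentially contains the content of (i) and (iii) once one is careful.

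\emph{Step 1 — the local expansion (ii).} I would start with (ii), since the tail bound (i) will fall out of it. Taking logarithms and applying Stirling's formula \eqref{eq:stirling} gives
\begin{equation*}
\log \pi_r(k) \;=\; (k-r) \,-\, k\log(k/r) \,-\, \tfrac{1}{2}\log(2\pi k) \,+\, O(1/k),
\end{equation*}
and with $k = r + x_k\sqrt{r}$ one Taylor-expands $\log(1 + x_k/\sqrt{r})$ to fourth order. On the range $k \in I_{r,\eps}$ we have $|x_k|\leq(1-\eps(r))\sqrt{\log r}$, hence $x_k/\sqrt{r} \to 0$ and the expansion converges rapidly. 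Collecting terms by order in $1/\sqrt{r}$, the leading $\sqrt{r}$ terms cancel against $k-r$, the constant terms collapse to $-x_k^2/2$, the $1/\sqrt{r}$ term from $k\log(1+x_k/\sqrt{r})$ yields $x_k^3/(6\sqrt{r})$, and the correction $-\tfrac12\log(2\pi k) = -\tfrac12\log(2\pi r) - x_k/(2\sqrt{r}) + O(x_k^2/r)$ contributes the remaining $-x_k/(2\sqrt{r})$. All higher-order terms are controlled by $O(x_k^4/r)$ inside the logarithm, which on $I_{r,\eps}$ is $o(1)$. Exponentiating and Taylor-expanding $e^u$ with $u = x_k^3/(6\sqrt{r}) - x_k/(2\sqrt{r}) + O(x_k^4/r)$, the quadratic term $u^2$ generates the dominant remainder $x_k^6/(36r)$, absorbed into $O(x_k^6/r)$ as stated.

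\emph{Step 2 — the tail bound (i).} For the concentration statement, the cleanest route is the Chernoff bound for Poisson: $\dP(|X_r - r| \geq t) \leq 2\exp(-r h(t/r))$ with $h(u) = (1+u)\log(1+u) - u$. Applying this with $t = (1-\eps(r))\sqrt{r\log r}$, the ratio $t/r \to 0$ and $r h(t/r) = t^2/(2r) + O(t^3/r^2) = (1-\eps(r))^2 (\log r)/2 + o(1)$. Expanding $(1-\eps)^2 = 1 - 2\eps + \eps^2$ yields a bound of order $r^{-1/2+\eps(r) - \eps(r)^2/2}$, which matches $O(r^{-1/2} e^{\eps(r)\log r})$ after absorbing the lower-order $\eps^2$ correction. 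Alternatively, one may sum $\pi_r(k)$ directly outside $I_{r,\eps}$ using the local formula from Step 1 evaluated at the boundary and a geometric bound on ratios $\pi_r(k+1)/\pi_r(k)$.

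\emph{Step 3 — the lower bound (iii).} This is a direct consequence of (ii): for $k \in I_{r,\eps}$, one has $x_k^2/2 \leq (1-\eps(r))^2(\log r)/2 \leq (\tfrac12 - \eps(r) + o(\eps(r)))\log r$, so
\begin{equation*}
r\pi_r(k) \;\geq\; \frac{\sqrt{r}}{\sqrt{2\pi}} \, e^{-x_k^2/2}\,(1-o(1)) \;\geq\; \Omega\!\left(e^{\eps(r)\log r\,(1-o(1))}\right),
\end{equation*}
using that the bracket in (ii) tends to $1$ uniformly on $I_{r,\eps}$.

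The main technical obstacle will be Step 1: one has to track the Taylor remainder through both the logarithmic expansion and the subsequent exponentiation, and check that the $O(x_k^6/r)$ remainder stated in the lemma indeed dominates all neglected terms uniformly on $I_{r,\eps}$ (in particular that the quadratic contribution from $e^u$ really is the binding remainder and that $O(x_k^4/r)$ inside the exponential does not create a worse error after exponentiation). The tail bound (i) is then essentially a corollary, and (iii) is immediate.
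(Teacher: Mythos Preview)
Your proposal is correct and follows essentially the same route as the paper: Stirling plus a Taylor expansion of $\log(1+x_k/\sqrt r)$ for (ii), a Poisson concentration inequality for (i), and a direct read-off for (iii). The only cosmetic differences are that the paper treats (i) first via the simpler bound $\dP(|X_r-r|\ge x)\le 2\exp(-x^2/(2(r+x)))$ rather than the full Chernoff function $h(u)$, and is slightly less explicit than you about why the $O(x_k^6/r)$ remainder survives exponentiation.
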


Consider the function $\eps(r) := \frac{\log\log r}{4 \log r}$, which satisfies the assumptions of Lemma \ref{lemma_equivalent_pi}. Using expansion \eqref{eq:lemma_equivalent_pi} together with \eqref{eq:moment_factoriel2} gives:
\begin{flalign}\label{eq:logiso_2}
\sum_{k \in I_{r,\eps}} \dE & \left[\log(X_{r\pi_r(k)}!)\right] = \sum_{k\in I_{r,\eps}} r\pi_r(k)\log(r \pi_r(k))-r\pi_r(k)+\frac{1}{2}\log(2\pi e r\pi_r(k))+O\left(\frac{1}{r\pi_r(k)}\right) \nonumber\\
& =\sum_{k\in I_{r,\eps}}r\pi_r(k)\left[\frac{1}{2}\log(r)-\frac{1}{2}\log(2\pi)-\frac{x_k^2}{2}+\frac{x_k^3}{6\sqrt{r}}-\frac{x_k}{2\sqrt{r}}+O\left(\frac{\log^2 r}{r}\right)-1\right] \nonumber\\
 & \quad \quad +\sum_{k\in I_{r,\eps}}\frac{1}{2} \left[\log( 2\pi e)+\frac{1}{2}\log(r)-\frac{1}{2}\log(2\pi)-\frac{x_k^2}{2}+O\left(\frac{\log^{3/2}(r)}{\sqrt{r}}\right)\right]+ O\left(\sqrt{r \log r} \right) \nonumber\\
 & \overset{(a)}{=} \frac{1}{2}r\log(r)-\left(\frac{1}{2}\log(2\pi)+\frac{1}{2}+1\right)r +O(\sqrt{r}\log^{5/4} r ) \nonumber\\
 & \quad \quad + O(\sqrt{r\log r}) + \frac{1}{2}\left( 1-\eps(r)\right)\sqrt{r}\log^{3/2}(r)-\frac{1}{4}\sum_{k\in I_{r,\eps}} x_k^2 \nonumber\\
 & \overset{(b)}{=} \frac{1}{2}r\log(r)-\left(\frac{1}{2}\log(2\pi)+\frac{3}{2}\right)r + \frac{1}{3} \sqrt{r}\log^{3/2}(r) +O(\sqrt{r}\log^{5/4} r).
\end{flalign}
Let us give hereafter all the required details for the above computation.
\begin{itemize}
    \item At step $(a)$, we first used point $(i)$ of Lemma \ref{lemma_equivalent_pi}, which gives that $$r \log r \times \dP\left( X_r \notin I_{r,\eps} \right) = O\left(\sqrt{r} \log^{1/4} r\right) = O\left(\sqrt{r} \log^{5/4} r\right).$$ For the sum of the $x_k^2$, we remark that
    \begin{flalign*}
        \sum_{k\in I_{r,\eps}} r \pi_r(k) \frac{x_k^2}{2} &= \frac{r}{2} \left(1- \dE\left[\left(\frac{X_r-r}{\sqrt{r}}\right)^2 \one_{X_r \notin I_{r,\eps}}\right]\right),
    \end{flalign*} and that the expectation in the right-hand term can be written as follows
    \begin{flalign*}
    \dE&  \left[\left(\frac{X_r-r}{\sqrt{r}}\right)^2 \one_{\left|\frac{X_r-r}{\sqrt{r}}\right| \geq 2 \sqrt{\log r}}\right] + \dE\left[\left(\frac{X_r-r}{\sqrt{r}}\right)^2 \one_{ (1-\eps(r)) \sqrt{\log r} \leq \left|\frac{X_r-r}{\sqrt{r}}\right| \leq 2 \sqrt{\log r}}\right]\\
    & \leq \dE\left[\left(\frac{X_r-r}{\sqrt{r}}\right)^4 \right]^{1/2} \dP\left( \left|\frac{X_r-r}{\sqrt{r}}\right| \geq 2 \sqrt{\log r} \right)^{1/2} + 4\log r \times  \dP\left( X_r \notin I_{r,\eps} \right)\\
    & \leq O\left( r^{-1/2} \right) + O\left( r^{-1/2}  \log^{5/4} r \right).
    \end{flalign*} Hence, $\sum_{k\in I_{r,\eps}} r \pi_r(k) \frac{x_k^2}{2} = \frac{r}{2} - O\left(\sqrt{r} \log^{5/4} r \right)$. Finally, using the fact that $\dE\left[\left(\frac{X_r-r}{\sqrt{r}}\right)^3 \right]$ and $\dE\left[\frac{X_r-r}{\sqrt{r}} \right]$ are $O(1)$, the sums of the $x_k^3$ and $x_k$ easily incorporate into the $O\left(\sqrt{r} \log^{5/4} r \right)$ term.
    
    \item At step $(b)$, we first used the fact that $\eps(r)\sqrt{r}\log^{3/2} = O\left(\sqrt{r} \log^{5/4} r \right)$. The only term requiring more computations is
    \begin{flalign*}
        \sum_{k\in I_{r,\eps}} x_k^2 & = \sum_{k\in I_{r,\eps}} \left(\frac{k-r}{\sqrt{r}}\right)^2 = 2 \times \sum_{\ell = 0}^{(1-\eps(r))\sqrt{r\log r}} \frac{\ell^2}{r} = \frac{2}{3} \sqrt{r} \log^{3/2} r + O\left(\sqrt{r} \log^{5/4} r \right).
    \end{flalign*}
\end{itemize}

Copying \eqref{eq:logiso_2} together with \eqref{eq:moment_factoriel2} in \eqref{eq:equivalent_logaut1} yields:
\begin{flalign*}
\log(\card{\Aut(\tau^{\star})}) & \geq (1+o_{\dP}(1)) w r^{n-1} \left[r\log(r)-r+\frac{1}{2}\log(2\pi e r)+ O\left( \frac{1}{r}\right) \right] \\
& \quad \quad + (1+o_{\dP}(1)) w r^{n-1} \left[\frac{1}{2}\log(r)-\frac{1}{2}\log(2\pi)-\frac{3}{2}+\frac{\log^{3/2} r}{3\sqrt{r}}+O\left(\frac{\log^{5/4} r}{\sqrt{r}}\right)\right]\\
& = (1+o_{\dP}(1)) w r^{n-1}\left[r\log(r)-r+\log(r)-1+\frac{\log^{3/2}(r)}{3\sqrt{r}}+O\left(\frac{\log^{5/4} r}{\sqrt{r}}\right)\right].
\end{flalign*}
Another appeal to the strong law of large numbers entails that 

\begin{flalign*}
    \log\left(\prod_{u \in \cV_{d-1}(\tau^{\star})}e^{-r} r^{c_{\tau^{\star}}(u)} \right) & = (1+o_{\dP}(1))\left|\cV_{d-1}(\tau^{\star}) \right| \dE\left[-r + c_{\tau^{\star}}(\rho(\tau^{\star})) \log r \right]\\
    & = (1+o_{\dP}(1))K\left( -r+r\log(r)\right).
\end{flalign*}

Combined, these last two evaluations yield a lower bound of $\log\left(\frac{\card{\Aut(\tau^{\star})}}{\prod_{u \in \cV_{d-1}(\tau^{\star})}e^{-r} r^{c_{\tau^{\star}}(u)}}\right) $ under the event on which $\tau^{\star}$ survives, of the form
\begin{flalign*}
& (1-o_{\dP}(1))K \left[-r\log(r)+r+\left(1-\frac{1}{r}\right)\left(r\log(r)-r+\log(r)-1+\frac{\log^{3/2}(r)}{3\sqrt{r}}+O\left(\frac{\log^{5/4} r}{\sqrt{r}}\right)\right)\right]\\
& = (1-o_{\dP}(1)) K \left[ \frac{\log^{3/2}(r)}{3\sqrt{r}}+O\left(\frac{\log^{5/4} r}{\sqrt{r}}\right) \right] \, .
\end{flalign*}
\end{proof} 

\subsection{Proof of Lemma \ref{lemma_equivalent_pi}}
\label{appendix:proof_lemma_equivalent_pi}
\begin{proof}
\emph{$(i)$} The result follows directly from the classical Poisson concentration inequality 
\begin{equation*}
    \dP\left( \left|X_r - r \right| \geq x \right) \leq 2 \exp\left( - \frac{x^2}{2(r+x)}\right),
\end{equation*} noting that for $x=(1-\eps(r))\sqrt{r\log r}$,
$\frac{x^2}{2(r+x)} \geq \frac{1}{2} \log r - \eps \log r - o(1).$

\emph{$(ii)$}
When $k$ runs over $I_{r,\eps}$, $x_k$ runs over $\left[ -(1-\eps(r)) \sqrt{\log r}, (1-\eps(r)) \sqrt{\log r}\right]$. Using Stirling's formula \eqref{eq:stirling}, we get
\begin{flalign*}
\log\pi_r(k) & = \log\pi_r(r+x_k \sqrt{r}) = -r + k \log r - \log(k!) \\
& = -r + (r+x_k \sqrt{r}) \log r - (r+x_k \sqrt{r}) \log(r+x_k \sqrt{r}) + r+x_k \sqrt{r} - \frac{1}{2}\log(2 \pi (r+x_k \sqrt{r})) + O\left( \frac{1}{r} \right)\\
& = -r + r\log r + x_k \sqrt{r} \log r - (r+x_k \sqrt{r}) \left[ \log r + \frac{x_k}{r^{1/2}} - \frac{x_k^2}{2r} + \frac{x_k^3}{3r^{3/2}} + O\left( \frac{x_k^4}{ r^{2}}\right)  \right] \\
& \quad \quad \quad \quad + r + x_k \sqrt{r} - \frac{1}{2}\log(2 \pi) - \frac{1}{2}\log(r) - \frac{1}{2} \frac{x_k}{r^{1/2}} + O\left( \frac{x_k^2}{ r}\right)\\
& = -r - x_k \sqrt{r} -x_k^2 + \frac{x_k^2}{2} + \frac{x_k^3}{2 \sqrt{r}} - \frac{x_k^3}{3 \sqrt{r}} + O\left( \frac{x_k^4}{ r}\right) \\
& \quad \quad \quad \quad + r + x_k \sqrt{r} - \frac{1}{2}\log(2 \pi r) - \frac{1}{2} \frac{x_k}{r^{1/2}} + O\left( \frac{x_k^2}{ r}\right)\\
& = - \frac{x_k^2}{2} - \frac{1}{2}\log(2 \pi r) + \frac{x_k^3}{6 \sqrt{r}} - \frac{x_k}{2 \sqrt{r}} + O\left( \frac{x_k^4}{ r}\right).
\end{flalign*} Taking the exponential gives
\begin{flalign*}
\pi_r(k) & = \frac{1}{\sqrt{2\pi r}} e^{-{x_k^2}/{2}} \left[ 1+\frac{x_k^3}{6\sqrt{r}}-\frac{x_k}{2\sqrt{r}}+O\left(\frac{x_k^6}{r}\right)\right].
\end{flalign*}

$(iii)$ follows directly from $(ii)$.
\end{proof}

\subsection{Proof of Lemma \ref{MPAlign:lemma:control_M2}}\label{MPAlign:appendix:proof_lemma:control_M2}
\begin{proof}
We condition on $P$ be the number of recursive steps in the previous construction, which is $O((\log n)n^{1/4})$ under $\cA$. For each $s \in [P]$, we denote by $c_s$ the number of newly sampled children, and $V_s := \sum_{s'=0}^{s-1} c_{s'}$ the number of uncovered vertices before step $s$ (we set $V_0 : =0$). With these notations, it is easily seen than $M_{d-1}$ can be factorized as follows:
\begin{flalign*}
M_{d-1} & = \prod_{s \in [P]} \frac{\dP\left(\Bin\left(n-2-V_s , \lambda/n \right) = c_s \right)}{\pi_{\lambda}(c_s)} \leq \prod_{s \in [P]}\exp\left(  \frac{\lambda}{n}(V_s+2+c_s)\right) \\
& = \exp\left(\frac{2\lambda P}{n} + \frac{\lambda}{n} \sum_{s \in [P]} (P-s)c_s \right).
\end{flalign*}
Under $\dPl_d$, the variables $c_s$ are independent $\Poi(\lambda)$ variables, hence
\begin{flalign*}
 \dEl_{d}\left[ M^2_{d-1} \one_{\cA} \right] & \leq \exp\left(\frac{4\lambda P}{n} + \lambda \sum_{s \in [P]} \left( e^{{2 \lambda}(P-s)/{n}}-1\right) \right) \one_{P = O((\log n)n^{1/4})}\\
 & \leq \exp\left( C' {P^2}/{n} + o({P^2}/{n})\right)\one_{P = O((\log n)n^{1/4})} = O(1).
\end{flalign*}
\end{proof}
\addtocontents{toc}{\protect\setcounter{tocdepth}{2}}
\end{subappendices}

\chapter{\textsc{Addendum: new results for correlation detection in trees}}\label{chapter:addendum}
\renewcommand{\thesection}{\the\value{chapter}.\the\value{section}}
This addendum, which concludes the manuscript, presents new results for correlation in trees from a recent joint work with L. Massoulié and G. Semerjian (paper in preparation). These results are significantly improving on previous work and give a general understanding of the fundamental limits of the problem, as well as some interesting perspectives discussed afterwards in the conclusion.

We do not redefine here the problem of correlation detection in trees, since we already thouroughly did in Chapters \ref{chapter:NTMA} and \ref{chapter:MPAlign}, but we rather introduce some auxiliary defintions that proved useful for the analysis made his the sequel. For related work, we refer to Section \ref{intro:subsection:short_survey} of the introduction.

We however straightaway mention that the results presented in this last part are very much related to the recent study of Mao, Wu, Xu and Yu \cite{Mao21_counting} who studied the correlation detection problem in \ER graphs, and proposed an algorithm based on counting (signed) trees, which is provably distinguish graph correlation efficiently as soon as $s > \sqrt{\alpha}$, where $\alpha$ is the Otter's constant defined below in Proposition \ref{addendum:prop:otter}. The results presented here are different for several reasons: first, we study the problem on trees and consider an optimal test, which thus also meets the informational bounds. Moreover, we give a two-side bound, also showing that $s < \sqrt{\alpha}$ implies impossibility of one-sided detection. 

We believe that this study paves the way for many other works in this field, generalizing to other graph models, analyzing the computational hardness of the problem with different tools, and designing more efficient algorithms for tree correlation detection or graph alignment -- for more insights and details on these research directions, we refer to the conclusion.

\section{Main results}

We start by stating some familiar definitions, in the general context of \emph{unlabeled trees}, that are part of the rationale for results to follow.

\subsection{Definitions: trees}

\begin{definition}[Finite rooted unlabeled trees]\label{addendum:def:rooted_unlabeled_trees}
	We recursively define the set $\cX_d$ of \emph{finite rooted unlabeled trees of depth at most $d \geq 0$}. 
	
	For $d = 0$, $\cX_d$ contains the trivial tree reduced to its root node, denoted by $\bigcdot$. 
	
	For $d \geq 1$, having defined $\cX_0, \ldots, \cX_{d-1}$, we define $\cX_d$ as follows: a finite rooted unlabeled tree $t \in \cX_d$ consists in an integer sequence $\{N_{\tau} \}_{\tau \in \cX_{d-1}}$ with finite support, that is such that 
	$$ \sharp \left\{\tau \in \cX_{d-1}, N_{\tau} \neq 0 \right\} < \infty,$$
	where $N_{\tau}$ is the number of children of the root in $t$ which subtrees are copies of $\tau$.
\end{definition} 
Throughout all the chapter, we will only work with finite trees (with finite degrees), hence the adjective 'finite' will be omitted as a shortcut.  

\begin{remark}
	With the previous definition, equality between two rooted unlabeled trees $t := \{N_{\tau} \}_{\tau \in \cX_{d-1}}$ and $t' := \{N'_{\tau} \}_{\tau \in \cX_{d-1}}$ is defined as $N_{\tau} = N'_{\tau} $ for all $\tau \in \cX_{d-1}$.
\end{remark}

\begin{remark}
	Denoting one-to-one correspondence by $\simeq$, we remark that $\cX_1 \simeq \dN$ (the set of non-negative integers), and that more generally for each $d \geq 1$, 
	$$ \cX_d \; \simeq \; \bigcup_{\ell \geq 1} \; \bigcup_{\substack{\tau_1, \ldots, \tau_\ell \in \cX_{d-1} \\ i \neq j \implies \tau_i \neq  \tau_j}} \dN^{\ell}.$$ 
	Hence, $\cX_d$ is countably infinite, by recursion, for all $d \geq 1$.
\end{remark}

\begin{definition}[Size of a rooted unlabeled tree]
	The \emph{size}, or \emph{number of nodes}, of a tree $t \in \cX_{d}$ is denoted by $\card{t}$ and defined recursively as follows. First, if $d=0$, we set $\card{\bigcdot}=1$. Then, for $d \geq 1$, writing $t = \{N_{\tau} \}_{\tau \in \cX_{d-1}}$, one has
	\begin{equation*}
	\card{t} = 1 + \sum_{\tau \in \cX_{d-1}} N_{\tau} \cdot \card{\tau} \,.
	\end{equation*}
\end{definition}

\begin{definition}[Depth of a rooted unlabeled tree]
	The \emph{depth} of a tree $t \in \cX_{d}$ is denoted by $\depth{t}$ and defined recursively as follows. First, if $d=0$, we set $\depth{\bigcdot}=0$. Then, for $d \geq 1$, writing $t = \{N_{\tau} \}_{\tau \in \cX_{d-1}}$, one has
	\begin{equation*}
	\depth{t} = 1 + \max\left\{ \one_{N_\tau \geq 1} \cdot \depth{\tau}, \; \tau \in \cX_{d-1} \right\} \,.
	\end{equation*}
\end{definition}

\begin{definition}[Child of a rooted unlabeled tree]
	A rooted unlabeled tree $s \in \cX_{d-1}$ is said to be a \emph{child} of a tree $t = \{N_{\tau} \}_{\tau \in \cX_{d-1}}$ if $N_s \geq 1$. Moreover, if $N_s \geq 1$, $t$ is said to have $N_s$ \emph{children of type $s$}. Note that since the tree $t$ is finite, it has a finite number of children, given by $\sum_{\tau} N_{\tau}$.
\end{definition}

\begin{definition}[Subtree of a rooted unlabeled tree]
	Let us define recursively the notion of \emph{subtree}. First, the only subtree of $\bigcdot$ is $\bigcdot$ itself. Then for $d' \leq d$, an element $s \in \cX_{d'}$ is a subtree of $t  \in \cX_d$ if either $s=t$ or if $s$ is a subtree of some child of $t$.
\end{definition}

In the sequel, we are interested in the cardinality of the set of unlabeled trees with given size. 

\begin{definition}[Trees with given size and depth]\label{def:An_Adn}
	For $n \geq 1$, let us define
	\begin{equation}\label{eq:def:An}
	A_n := \card{ \left\{ t \in \bigcup_{d \geq 0} \cX_d, \; \card{t}=n \right\} } \, ,
	\end{equation} that is $A_n$ is the number of (distinct) unlabeled rooted trees of size $n$. For $d \geq 0$, we furthermore define
	\begin{equation}\label{eq:def:Adn}
	A_{d,n} := \card{ \left\{ t \in \cX_d, \; \card{t}=n \right\} } \, ,
	\end{equation} that is $A_{d,n}$ is the number of (distinct) unlabeled rooted trees of size $n$ and depth at most $d$. 
\end{definition}

We now state a celebrated result by Otter \cite{Otter48}, together with a proposition that will be useful in the sequel.

\begin{proposition}[Asymptotic number of unlabeled trees, \cite{Otter48}]\label{addendum:prop:otter}
	One has
	\begin{equation}\label{eq:theorem:otter}
	A_n \underset{n \to \infty}{\sim} \frac{C}{n^{3/2}} \left( \frac{1}{\alpha} \right)^n,
	\end{equation} for some $C>0$, where $\alpha \in (0,1)$ is the Otter constant, numerically $\alpha = 0.3383219...$
\end{proposition}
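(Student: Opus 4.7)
The plan is to establish this classical result via analytic combinatorics by studying the ordinary generating function $T(z) := \sum_{n \geq 1} A_n z^n$. From the recursive description of rooted unlabeled trees (Definition \ref{addendum:def:rooted_unlabeled_trees}) as a root attached to a multiset of subtrees, Pólya's multiset construction yields the functional equation
\begin{equation*}
T(z) = z \prod_{k \geq 1} (1 - z^k)^{-A_k} = z \exp\left(\sum_{k \geq 1} \frac{T(z^k)}{k}\right).
\end{equation*}

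First, I would show that $T$ has a finite strictly positive radius of convergence, which we call $\alpha$. A lower bound $\alpha \geq 1/C > 0$ follows by a straightforward induction on the functional equation giving $A_n \leq C^n$ (alternatively, by encoding rooted trees as Dyck-like paths of length $2n$, one gets $A_n \leq 4^n$). An upper bound $\alpha < 1$ follows from $A_n \to \infty$, which is ensured for instance by the strict growth $A_n \geq A_{n-1}$ for $n \geq 2$ (add one leaf to any rooted tree).

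The heart of the proof is the singular analysis of $T$ at $z = \alpha$. The key observation is that for $k \geq 2$, the function $T(z^k)$ is analytic in the disk $\{|z| < \alpha^{1/k}\}$, and $\alpha^{1/k} > \sqrt{\alpha} > \alpha$, so
\begin{equation*}
H(z) := \exp\left(\sum_{k \geq 2} \frac{T(z^k)}{k}\right)
\end{equation*}
is analytic in a neighborhood of $\alpha$. The functional equation rewrites as $T(z) = z\, H(z) \, e^{T(z)}$, which fits in the framework of the smooth implicit function schema of Meir--Moon and Flajolet--Sedgewick. Setting $\Phi(z,w) := w - z H(z) e^w$, the system $\Phi(z,w) = 0$ and $\partial_w \Phi(z,w) = 0$ pinpoints the singularity: the second equation forces $T(\alpha) = 1$, and combined with the first, $\alpha$ is characterized by $\alpha H(\alpha) = e^{-1}$. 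A second-order Taylor expansion of $\Phi$ at $(\alpha, 1)$ then yields the Puiseux expansion
\begin{equation*}
T(z) = 1 - c\sqrt{1 - z/\alpha} + O(1 - z/\alpha),
\end{equation*}
for an explicit constant $c > 0$ (expressible in terms of $H(\alpha)$ and $H'(\alpha)$), valid in a suitable slit neighborhood of $\alpha$.

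Finally, applying the singularity analysis transfer theorem of Flajolet and Odlyzko, which uses $[z^n] \sqrt{1 - z/\alpha} \sim -\frac{1}{2\sqrt{\pi}}\, n^{-3/2}\, \alpha^{-n}$, yields $A_n \sim \frac{C}{n^{3/2}} \alpha^{-n}$ with $C = c/(2\sqrt{\pi}) > 0$. The main obstacle will be to verify the analytic-continuation hypotheses of the transfer theorem, namely: \emph{(i)} aperiodicity, ensuring that $\alpha$ is the \emph{unique} dominant singularity of $T$ on its circle of convergence, and \emph{(ii)} the existence of an analytic continuation to a $\Delta$-domain. Point \emph{(i)} follows from $\gcd\{n : A_n > 0\} = 1$ (since $A_1 = A_2 = 1 > 0$), which prevents $T$ from exhibiting periodic singularities on $|z| = \alpha$. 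Point \emph{(ii)} follows from the analyticity of $H$ in a disk strictly larger than $\{|z| \leq \alpha\}$, combined with the local invertibility of $\Phi$ away from the singular point $(\alpha, 1)$, allowing us to analytically continue $T$ past the circle $|z| = \alpha$ everywhere except in a neighborhood of $\alpha$ itself.
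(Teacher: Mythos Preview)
The paper does not actually prove this proposition: it is stated with attribution to Otter's 1948 paper and used as a black box. Your proposal is a correct outline of the standard analytic-combinatorics proof (multiset functional equation, smooth implicit-function schema giving a square-root singularity at $z=\alpha$ with $T(\alpha)=1$, then Flajolet--Odlyzko transfer), so there is nothing to compare against here; your sketch would serve well as a self-contained justification if one were needed.
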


\begin{proposition}[Control of the generating function of the $A_{d,n}$]\label{addendum:prop:A_dn}
	For all $d \geq 0$, let
	\begin{equation}\label{eq:prop:A_dn:phid}
	\Phi_d(x) := \sum_{n \geq 1} A_{d,n} x^{n-1} \, .
	\end{equation} We have, for all $x>0$, 
	\begin{equation}\label{eq:prop:A_dn:cv_phi}
	\Phi_d(x) \underset{d \to \infty}{\longrightarrow} \Phi(x) \, ,
	\end{equation} where 
	\begin{equation}\label{eq:prop:A_dn:phi}
	\Phi(x) := \sum_{n \geq 1} A_{n} x^{n-1} \, .
	\end{equation} Moreover, for all $d \geq 0$ and $t \in [0,1)$, there exists $A=A(d,t)$ such that 
	\begin{equation}\label{eq:prop:A_dn}
	\forall x \in [0,t], \, \left| \Phi_d(x) \right| \leq A \, .
	\end{equation}
\end{proposition}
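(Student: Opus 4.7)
The plan is to establish the two claims separately: pointwise convergence follows from monotone convergence, while the uniform bound on compact subsets of $[0,1)$ is obtained by induction on $d$ using a classical functional equation for the generating function $\Phi_d$.

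For the pointwise convergence, the key observation is that any rooted unlabeled tree of size $n$ has depth at most $n-1$, so $A_{d,n} = A_n$ as soon as $d \geq n-1$. In particular the sequence $\{A_{d,n}\}_{d \geq 0}$ is non-decreasing with limit $A_n$. Applied to the non-negative series defining $\Phi_d$, the monotone convergence theorem gives $\Phi_d(x) \nearrow \Phi(x)$ for every $x \geq 0$, with the understanding that the limit is $+\infty$ whenever $x \geq \alpha$ (which is forced by Otter's asymptotics in Proposition \ref{addendum:prop:otter}).

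For the uniform bound, I would proceed by induction on $d \geq 0$, showing that $\Phi_d(t) < \infty$ for every $t \in [0,1)$; since $\Phi_d$ has non-negative coefficients, it is non-decreasing on $[0,1)$ and then $A(d,t) := \Phi_d(t)$ works as a uniform bound on $[0,t]$. The base cases $d = 0$ (for which $\Phi_0 \equiv 1$) and $d = 1$ (for which $A_{1,n}=1$ and $\Phi_1(x) = 1/(1-x)$) are immediate. For the inductive step, one uses the bijection between trees in $\cX_d$ and finite multisets of trees in $\cX_{d-1}$ (the children of the root), which, by standard reasoning on multiset generating functions, gives the functional equation
\begin{equation*}
\Phi_d(x) \;=\; \prod_{\tau \in \cX_{d-1}} \frac{1}{1 - x^{\card{\tau}}} \;=\; \prod_{k \geq 1} \frac{1}{(1-x^k)^{A_{d-1,k}}}.
\end{equation*}
Taking logarithms, expanding $\log(1/(1-u)) = \sum_{m \geq 1} u^m / m$, and swapping the two non-negative summations by Fubini–Tonelli yields
\begin{equation*}
\log \Phi_d(t) \;=\; \sum_{m \geq 1} \frac{t^m}{m} \, \Phi_{d-1}(t^m).
\end{equation*}
Since $\Phi_{d-1}$ is non-decreasing on $[0,1)$, one has $\Phi_{d-1}(t^m) \leq \Phi_{d-1}(t)$ for all $m \geq 1$, and the inductive hypothesis gives $\Phi_{d-1}(t) < \infty$. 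Hence $\log \Phi_d(t) \leq \Phi_{d-1}(t) \log \frac{1}{1-t} < \infty$, closing the induction.

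The only potentially delicate step is the derivation and manipulation of the functional equation: one must argue carefully that the infinite product and the double sum make sense (they do, by non-negativity of all quantities, which also legitimates the use of monotone convergence and Tonelli throughout). No other obstacle is foreseen, as both parts reduce to monotone convergence arguments once the functional equation is in hand.
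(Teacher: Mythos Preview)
Your proof is correct and follows essentially the same approach as the paper: monotone convergence for the pointwise limit, and induction on $d$ via the functional equation $\log\Phi_{d}(x)=\sum_{m\geq 1}\frac{x^m}{m}\Phi_{d-1}(x^m)$ together with the bound $\Phi_{d-1}(x^m)\leq \Phi_{d-1}(t)$ to get $\Phi_d(t)\leq (1-t)^{-\Phi_{d-1}(t)}$. The paper derives the functional equation by a slightly more explicit combinatorial computation (stars-and-bars on the multiplicities $\mu_i$), but the resulting identity and its use are identical to yours.
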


\begin{proof}[Proof of Proposition \ref{addendum:prop:A_dn}]
	Convergence \eqref{eq:prop:A_dn:cv_phi} follows from $A_{d,n} \underset{d \to \infty}{\longrightarrow} A_n$ and monotone convergence theorem. 
	
	We will now establish a recursion property on the $\Phi_d$, adapting the proof of Otter \cite{Otter48} of the case $d=\infty$ to general depth $d$. We can decompose a tree $t$ of depth at most $d+1$ with $n$ vertices according to its subtrees. For $i\geq 1$, we denote by $\mu_i \geq 0$ the number of subtrees of $t$ with $i$ nodes. The $\mu_i$ subtrees are then distributed in the $A_{d,i}$ categories, the only thing that distinguish them being the number in each $A_{d,i}$. For a given $i$, there is $\binom{A_{d,i}+\mu_i-1}{\mu_i}$ such choices. This gives the following recursion formula:
	\begin{equation*}
	A_{d+1,n} = \sum_{\set{\mu_i}_{i \geq 1}} \prod_{i \geq 1} \binom{A_{d,i}+\mu_i-1}{\mu_i} \one_{n = 1+ \sum_{i\geq 1} i \mu_i} \, .
	\end{equation*} Plugging the last equation in the definition of $\Phi_{d+1}$ gives
	\begin{flalign*}
	\Phi_{d+1}(x) & = \sum_{n \geq 1} x^{n-1} \sum_{\set{\mu_i}_{i \geq 1}} \prod_{i \geq 1} \binom{A_{d,i}+\mu_i-1}{\mu_i} \one_{n = 1+ \sum_{i\geq 1} i \mu_i}  \\
	& = \sum_{\set{\mu_i}_{i \geq 1}} \prod_{i \geq 1} \left[ \binom{A_{d,i}+\mu_i-1}{\mu_i} x^{i \mu_i} \right] 
	= \prod_{i \geq 1} \sum_{\mu \geq 0}  \left[ \binom{A_{d,i}+\mu-1}{\mu} x^{i \mu} \right]\\
	& = \prod_{i \geq 1} \frac{1}{(1-x^i)^{A_{d,i}}} = \exp\left(- \sum_{i \geq 1} A_{d,i}\log(1-x^i) \right)
	= \exp\left(\sum_{i,j \geq 1} A_{d,i}\frac{x^{i j}}{j} \right) \\
	& = \exp\left(\sum_{j \geq 1}\frac{x^j}{j} \sum_{i \geq 1}  A_{d,i} (x^j){i-1} \right) = \exp\left(\sum_{j \geq 1} \frac{x^j}{j} \Phi_d(x^{j}) \right) \, .
	\end{flalign*}
	Equation \eqref{eq:prop:A_dn} is then very easy to propagate by recursion with this last formula. For $d=0$, $A_{d,n} = \one_{n=1}$, hence $\Phi_0(x)=1$ and \eqref{eq:prop:A_dn} holds with $A=1$. Assume that \eqref{eq:prop:A_dn} holds at depth $d$ for all $t \in [0,1)$ with constant $A(d,t)$. Then, by the previous computation, for all $t \in [0,1)$ and $x \in [0,t]$, since $x^j \in [0,t]$ for all $j \geq 1$ we have
	\begin{flalign*}
	\Phi_{d+1}(x) & = \exp\left(\sum_{j \geq 1} \frac{x^j}{j} \Phi_d(x^{j}) \right) \leq \exp\left(\sum_{j \geq 1} \frac{x^j}{j} A(d,t) \right) \\
	& = \exp\left(- A(d,t) \log(1-x)  \right) = \left(\frac{1}{1-x}\right)^{A(d,t)} \leq \left(\frac{1}{1-t}\right)^{A(d,t)} =: A(d+1,t)\, .
	\end{flalign*} Thus, \eqref{eq:prop:A_dn} holds at depth $d+1$.
\end{proof}

\subsection{Definitions: random models of trees}

We now define the models and random trees considered in the study. They are the same as models of Chapter \ref{chapter:MPAlign}, but we restate them here with an equivalent\footnote{Note that with this ‘unlabeled' view, there is no need to define a tree augmentation for the correlated model $\dPls_d$ -- as in previous Chapter.} definition that fits the description $t = \{N_{\tau} \}_{\tau \in \cX_{d-1}}$ of elements of $\cX_d$.

\begin{definition}[Galton-Watson trees with Poisson offspring]\label{def:GW_trees}
	Let $\mu>0$. For $d=0$, $\GWmu_d$ is the Dirac mass at the trivial tree $\bigcdot \in \cX_0$. For $d \geq 1$, a tree $t = \{N_{\tau} \}_{\tau \in \cX_{d-1}} \sim \GWmu_d$ is sampled as follows: for all $\tau \in \cX_{d-1}$, $N_{\tau} \sim \Poi(\mu \GWmu_{d-1}(\tau))$ independently from everything else. Note that since the Poisson variables are independent, we have
	\begin{equation*}
	\sum_{\tau \in \cX_{d-1}} N_{\tau} \sim \Poi\left( \mu \sum_{\tau \in \cX_{d-1}} \GWmu_{d-1}(\tau) \right) = \Poi(\mu)
	\end{equation*}
	which is a.s. finite. Hence, we have $t \in \cX_d$ a.s.
	
	Throughout all the study, we will only work with Galton-Watson trees with Poisson offspring, which we will simply refer to as Galton-Watson trees, as a notation shortcut.
\end{definition}

\begin{definition}[Null model $\dPl_d$]\label{def:null_model}
	The null distribution $\dPl_d$ on $\cX_d \times \cX_d$ of parameter $\lambda>0$ is simply defined as the product $\GWl_{d} \otimes \GWl_{d}$: under the null model, the two trees are independent Galton-Watson trees with offspring $\Poi(\lambda)$.
\end{definition}

\begin{definition}[Correlated model $\dPls_d$]\label{def:correlated_model}
	The correlated model $\dPls_d$ on $\cX_d \times \cX_d$ with parameters $\lambda>0$ and $s \in [0,1)$ first verifies that $\dPls_{0}$ is the same as $\dPl_0$. 
	
	For $d \geq 1$, a pair of trees $(t,t') = (\{N_{\tau} \}_{\tau \in \cX_{d-1}}, \{N'_{\tau} \}_{\tau \in \cX_{d-1}}) \sim \dPls_d$ is sampled as follows. 
	\begin{equation}\label{eq:def_model_1}
	N_{\tau} := M_\tau + \sum_{\tau' \in \cX_{d_1}} N_{\tau, \tau'} \quad \mbox{and} \quad
	N'_{\tau} := M'_{\tau} + \sum_{\tau \in \cX_{d-1}} N_{\tau, \tau'} \, ,
	\end{equation} with $M_{\tau}$ and $M'_{\tau}$ i.i.d. $\Poi(\lambda (1-s) \GWmu_{d-1}(\tau))$ and $N_{\tau, \tau'}$ i.i.d. $\Poi(\lambda s \dPls_{d-1}(\tau,\tau'))$ variables.  Note that for all $d$, $\dPls_d = \dPl_d$ if $s=0$.
	
\end{definition}

\subsection{Main new results}
We recall from Chapter \ref{chapter:MPAlign} that the likelihood ratio $L_d$ is defined as
\begin{equation*}
	L_d(t,t') := \frac{\dPls(t,t')}{\dPl(t,t')} \, ,
\end{equation*} as well as the following result
\begin{theorem*}[Chapter \ref{chapter:MPAlign}, Theorem \ref{MPAlign:theorem:main_result_TREES}]
	Let 
	\begin{equation*}
	\KL_d := \KL (\dPls_d\Vert\dPl_d)= \dE_{1,d} \left[ \log(L_d) \right].
	\end{equation*}
	Then there exists a one-sided test for testing $\dPl_d$ versus $\dPls_d$ if and only if $\underset{d \to \infty}{\lim} \KL_d = + \infty$ and $\lambda s >1$.
\end{theorem*}

We now state the main results of this addendum. Let $\alpha$ be the Otter constant introduced in Proposition \ref{addendum:prop:otter}.

\begin{theorem}[Negative result]\label{addendum:thm:negative_result}
	If $s \leq \sqrt{\alpha}$, then for all $\lambda>0$, $\limsup_{d} \KL(\dPls_d \, || \, \dPl_d) < \infty$. Hence, one-sided detection is impossible.
\end{theorem}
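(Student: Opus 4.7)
The plan is to control the second moment of the likelihood ratio, $M_d := \dE_{0,d}[L_d^2]$, since by Jensen's inequality together with the identity $\dE_{1,d}[L_d] = \dE_{0,d}[L_d^2]$ one has $\KL_d = \dE_{1,d}[\log L_d] \leq \log M_d$; consequently, boundedness of $M_d$ uniformly in $d$ implies $\limsup_d \KL_d < \infty$, and the characterization in Theorem \ref{MPAlign:theorem:main_result_TREES} rules out one-sided detection.

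First, I would derive a tractable expression for $M_d$ by conditioning on the common ``intersection tree'' used to build the correlated model. Writing $\dPls_d(t,t') = \sum_{\tau \in \cX_d} h_d(\tau)\,q_d(t | \tau)\,q_d(t' | \tau)$, with $h_d = \GW^{(\lambda s)}_d$ the law of the intersection tree $\tau^\star$ and $q_d(\cdot|\tau)$ the conditional law of one observed tree given $\tau^\star=\tau$, a direct computation gives
\begin{equation*}
M_d \;=\; \sum_{\tau,\tau' \in \cX_d} h_d(\tau)\,h_d(\tau')\,K_d(\tau,\tau')^{2}, \qquad K_d(\tau,\tau') := \sum_{t \in \cX_d} \frac{q_d(t|\tau)\,q_d(t|\tau')}{\GWl_d(t)}.
\end{equation*}

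Second, I would expand $K_d(\tau,\tau')$ recursively in depth, using the fact that $q_d$ is built from independent Poisson augmentations by $\GWl$-subtrees followed by uniform relabeling of the children at each node. Unfolding this recursion in the spirit of Lemma \ref{MPAlign:lemma:LR_developed} and of the computation leading to Lemma \ref{lemma:bounding_v_d} (whose ``cycle/path'' decomposition of injections is the key combinatorial tool), $K_d(\tau,\tau')$ rewrites as a sum over unlabeled trees $\sigma$ together with pairs of tree-preserving embeddings $\sigma \hookrightarrow \tau$, $\sigma \hookrightarrow \tau'$, each contributing a factor scaling like $s^{|\sigma|}$. Opening the square $K_d^{2}$ and reindexing the resulting quadruple sum so that the common embedded unlabeled subtree becomes the outer summation variable produces, after absorbing the Galton--Watson weights $h_d(\tau)h_d(\tau')$ against the extension data via the branching-process identity $\sum_{t}\GW^{(\mu)}(t)=1$ iterated at each node of $\sigma$, an upper bound of the form
\begin{equation*}
M_d \;\leq\; \sum_{\sigma \in \cX_d} B(\sigma)\; s^{\,2\,|\sigma|},
\end{equation*}
where $B(\sigma)$ is a combinatorial factor (involving automorphism counts of $\sigma$ and Poisson partition functions produced by the extensions) which I expect to grow at most polynomially in $|\sigma|$.

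The main obstacle will precisely be this second step: carefully identifying the combinatorial structure that produces the $s^{2|\sigma|}$ weighting and showing that the residual factor $B(\sigma)$ is only of polynomial order in $|\sigma|$. In particular, the appearance of Otter's constant $\alpha$ (rather than a naive labeled-counting exponent, which would merely give a condition of the form $\lambda s^2<1$ as in Theorem \ref{MPAlign:theorem:suff_hard_phase} and would not cover all $\lambda$) hinges on tracking precisely the factorials and automorphism counts introduced by the uniform relabelings inherent in the correlated model: this is exactly what converts a sum over labeled tree structures into a sum over \emph{unlabeled} trees, for which $A_n$ becomes the relevant counting sequence. Once this expansion is established, the conclusion is immediate from Proposition \ref{addendum:prop:otter}: grouping summands according to $n = |\sigma|$ yields $M_d \leq \sum_{n\geq 1} A_n B_n\, s^{\,2 n}$, where $B_n := \sup_{|\sigma|=n} B(\sigma)$, and under the hypothesis $s^2 \leq \alpha$, the asymptotic $A_n \sim C n^{-3/2}\alpha^{-n}$ combined with $(s^2/\alpha)^n\leq 1$ and the polynomial prefactor $n^{-3/2}$ ensures summability (even at the critical value $s=\sqrt{\alpha}$). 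This gives a uniform in $d$ bound on $M_d$, hence $\limsup_d \KL_d<\infty$, and the impossibility of one-sided detection follows from Theorem \ref{MPAlign:theorem:main_result_TREES}.
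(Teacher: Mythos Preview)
Your high-level strategy is exactly the paper's: bound the second moment $M_d=\dEl_d[L_d^2]$ uniformly in $d$, then use Jensen together with $\dEls_d[L_d]=\dEl_d[L_d^2]$ to conclude $\KL_d\leq \log M_d<\infty$, and finally invoke Theorem~\ref{MPAlign:theorem:main_result_TREES}. The divergence is in how $M_d$ is controlled.

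The paper does not attempt a direct combinatorial expansion of $K_d(\tau,\tau')^2$. Instead it proves an \emph{exact} eigendecomposition (Theorem~\ref{addendum:theorem:eigendecomposition}): there exist functions $\fl_{d,\alpha}$, indexed by unlabeled trees $\alpha\in\cX_d$, orthonormal for $\GWl_d$, such that $L_d(t,t')=\sum_{\alpha\in\cX_d} s^{|\alpha|-1}\fl_{d,\alpha}(t)\fl_{d,\alpha}(t')$. Orthonormality immediately gives (Corollary~\ref{corr:cyclic_moments}) the identity $M_d=\sum_{\alpha\in\cX_d} s^{2(|\alpha|-1)}=\sum_{n\geq 1} A_{d,n}\,s^{2(n-1)}$, an expression that is \emph{exact and independent of $\lambda$}. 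Otter's asymptotics then yield convergence for $s\leq\sqrt{\alpha}$ (including the critical case, thanks to the $n^{-3/2}$ prefactor).

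In your proposal, the crucial step is left as an expectation rather than an argument: you write $M_d\leq\sum_\sigma B(\sigma)\,s^{2|\sigma|}$ and ``expect'' $B(\sigma)$ to be at most polynomial in $|\sigma|$. But the paper's identity shows that in fact the correct coefficient is exactly $1$ per unlabeled tree $\sigma$, with no residual $\lambda$-dependence whatsoever; obtaining this from your path/cycle combinatorics would require precisely tracking how the automorphism factors from the uniform relabelings cancel the labeled-tree overcounting, at every depth simultaneously. You correctly identify this as the obstacle, but do not overcome it. Without that control, your bound could a priori carry exponential factors in $|\sigma|$ (or in $\lambda$), which would not give the threshold $\sqrt{\alpha}$ uniformly in $\lambda$. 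The eigendecomposition is what makes this cancellation transparent: the $\fl_{d,\alpha}$ depend on $\lambda$, the eigenvalues $s^{|\alpha|-1}$ do not, and orthonormality absorbs all the combinatorics at once.
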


\begin{theorem}[Positive result]\label{addendum:thm:positive_result}
	If $s > \sqrt{\alpha}$, then there exists $\lambda(s)>0$ such that for all $\lambda \geq \lambda(s)$, $\KL(\dPls_d \, || \, \dPl_d) \underset{d \to \infty}{\longrightarrow} +\infty$. One-sided detection is thus feasible for $\lambda$ large enough, and an optimal one-sided test is the likelihood-ratio test $\cT_{d} := \one_{L_d(t,t')>\beta_d}$ for some appropriate $\beta_d$ (see Theorem \ref{MPAlign:theorem:main_result_TREES}).
\end{theorem}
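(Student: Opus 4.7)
The plan is to establish non-uniform integrability of the martingale $(L_d)_d$ under $\dPl_{\infty}$. By Theorem \ref{MPAlign:theorem:main_result_TREES} (the equivalence $(iv)\Leftrightarrow(ii)$) this will give $\KL_d\to +\infty$, the companion requirement $\lambda s>1$ of condition $(ii)$ being automatic once $\lambda$ is large enough. Concretely, I would construct a real-valued statistic $Z_d(t,t')$ whose law under $\dPls_d$ separates from its law under $\dPl_d$ in the large-$d$ limit; this yields a one-sided test by thresholding, which by Lemma \ref{lemma:one_sided_test} in Chapter \ref{chapter:MPAlign} forces the martingale $L_d$ to lose mass, i.e.\ forces non-uniform integrability.

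For a pair of trees $(t,t')$ and a finite family $\mathcal{F}_d\subset \cX_d$ of unlabeled witness trees, the natural candidate is a weighted covariance of subtree counts,
\begin{equation*}
Z_d(t,t')\,:=\,\sum_{\tau\in\mathcal{F}_d}w_\tau\bigl(N_\tau(t)-\mu_\tau\bigr)\bigl(N_\tau(t')-\mu_\tau\bigr),
\end{equation*}
where $N_\tau(t)$ is the number of occurrences of $\tau$ as a subtree of $t$, $\mu_\tau:=\dEl[N_\tau(t)]$, and $w_\tau>0$ are weights to be tuned. This statistic is centered under $\dPl_d$ because $t$ and $t'$ are then independent. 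Under $\dPls_d$, any fixed embedding of $\tau$ in $t$ has probability of order $s^{\card{\tau}}$ of being inherited from the common intersection structure of the two trees, and therefore of being also present in $t'$; this yields a signal $\dEls[Z_d]\gtrsim \sum_{\tau} w_\tau\,s^{\card{\tau}}\,\mathrm{Var}_{\dPl}(N_\tau)$. The variance under $\dPl_d$ has a clean expression, using independence of $t$ and $t'$,
\begin{equation*}
\mathrm{Var}_{\dPl}(Z_d)\,=\,\sum_{\tau,\tau'\in\mathcal{F}_d}w_\tau w_{\tau'}\,\mathrm{Cov}_{\dPl}\bigl(N_\tau(t),N_{\tau'}(t)\bigr)^{2}.
\end{equation*}

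The critical scale at which signal and noise balance is where Otter's constant enters. With $A_n\sim C\alpha^{-n}/n^{3/2}$ unlabeled trees of size $n$ (Proposition \ref{addendum:prop:otter}) and with weights chosen so that $w_\tau$ absorbs the natural combinatorial multiplicity, the relevant sums over $\tau$ are essentially geometric in $\card{\tau}$ with ratio $s^{2}/\alpha$: the signal is a sum of $(s^{2}/\alpha)^{\card{\tau}/2}$-like terms while the variance factor is the square of a comparable sum. Hence when $s>\sqrt{\alpha}$ the ratio $\dEls[Z_d]/\sqrt{\mathrm{Var}_{\dPl}(Z_d)}$ can be driven to $+\infty$ by letting the witness family cover trees of unbounded size, provided $\lambda$ is large enough to ensure that $N_\tau$ concentrates on $\mu_\tau$ for those trees; this is what produces the threshold $\lambda(s)$ in the statement. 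A Paley-Zygmund type argument then turns the diverging signal-to-noise ratio into a one-sided test, which closes the loop via Theorem \ref{MPAlign:theorem:main_result_TREES}.

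The main obstacle is a sharp control of the cross-covariances $\mathrm{Cov}_{\dPl}(N_\tau,N_{\tau'})$ inside a single Poisson Galton-Watson tree, since two unlabeled trees sharing large substructures may strongly co-occur in the same ambient tree and spoil the variance bound. A careful overlap analysis of subtree embeddings will be needed, and it is likely that, in the spirit of the signed-tree counts of \cite{Mao21_counting}, one has to replace the raw $N_\tau$ by an appropriately signed or Wick-type polynomial of subtree counts in order to cancel these cross terms and make the variance bound tight exactly up to the critical value $s^{2}=\alpha$; that refinement is also what one would expect to make the analysis match quantitatively the impossibility bound of Theorem \ref{addendum:thm:negative_result}.
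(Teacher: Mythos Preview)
Your high-level strategy---build a bilinear statistic in centered tree counts, show the signal-to-noise ratio diverges when $s>\sqrt{\alpha}$, threshold to obtain a one-sided test, then invoke Theorem~\ref{MPAlign:theorem:main_result_TREES}---is sound, and you correctly locate both where Otter's constant enters and the real obstacle (cross-covariances of subtree counts inside a single Galton-Watson tree). But the proposal stops exactly at that obstacle, and the paper's argument is structured quite differently in two ways.

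First, the paper does produce the orthogonal family you anticipate: Theorem~\ref{addendum:theorem:eigendecomposition} diagonalizes the likelihood ratio as $L_d(t,t')=\sum_{\alpha\in\cX_d}s^{|\alpha|-1}\fl_{d,\alpha}(t)\fl_{d,\alpha}(t')$, where the $\fl_{d,\alpha}$ are tree-indexed generalizations of Charlier polynomials, orthonormal under $\GWl_d$. These are precisely your ``Wick-type polynomials''. However, the paper does not use them to build one large separating statistic. It uses them to prove a Gaussian approximation (Proposition~\ref{addendum:prop:gaussian_KL_limit}): as $\lambda\to\infty$, $\KLls_d$ is bounded below by the Gaussian KL, equal to $\tfrac12\log\sum_{n}A_{d,n}s^{2(n-1)}$, which diverges with $d$ exactly when $s>\sqrt{\alpha}$. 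This furnishes, at some fixed $d_0$ and for $\lambda$ large, an event $S=\{L_{d_0}\geq A\}$ with $\dPls_{d_0}(S)\geq c$ and $\dPl_{d_0}(S)\leq\eps$ (Lemma~\ref{lemma:initialize_lower_bound_on_KL}).

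Second, instead of pushing a witness family to trees of unbounded size, the paper bootstraps in depth (Proposition~\ref{prop:bounding_LR_by_induction}): given $S\subset\cX_d^2$, it sets $Z_S:=\sum_{(\tau,\tau')\in S}\widetilde N_\tau\widetilde N'_{\tau'}$, where $N_\tau$ is the number of \emph{children of the root} of type $\tau$ (not embedding counts), and takes $S':=\{Z_S\geq\sigma\}\subset\cX_{d+1}^2$. Because these $N_\tau$ are independent Poisson under $\GWl_{d+1}$, the fourth moment under $\dPl$ and the variance under $\dPls$ are elementary (Lemma~\ref{lemma_bounds_Z}); there is no cross-covariance difficulty at this step at all. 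One obtains $\dPl_{d+1}(S')\leq\tfrac12\dPl_d(S)$ and $\dPls_{d+1}(S')\geq c$, and iterating gives $\dPl_d(S_d)\leq 2^{-(d-d_0)}\eps$ with $\dPls_d(S_d)\geq c$, hence $\KLls_d\to\infty$.

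So your one-shot plan would require carrying out the orthogonalization you flag (and the paper's $\fl_{d,\alpha}$ are the right objects), whereas the paper only uses that machinery for the initialization and replaces the hard variance analysis over large tree families by a depth recursion on a statistic whose moments are trivial. Your route could likely be completed in the spirit of \cite{Mao21_counting}, but the recursive argument is considerably shorter here.
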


\begin{remark}
	These new results establish an almost sharp result for correlation detection in trees. The regime $s \leq \sqrt{\alpha}$ always lies within the impossible phase, and $s > \sqrt{\alpha}$ is in the easy phase for high mean degree $\lambda$.
	
	In view of Theorem \ref{MPAlign:theorem:main_result_GRAPHS} proved in Chapter \ref{chapter:MPAlign}, these results also extend the knowledge on the analysis of \alg{MPAlign}, and doing so on the phase diagram for partial graph alignment, for which a state-of-the-art version is given in Figure \ref{fig:addendum:sota_phase_diagram} below. 
\end{remark}

\begin{figure}[h]
	\centering
	\includegraphics[scale=0.65]{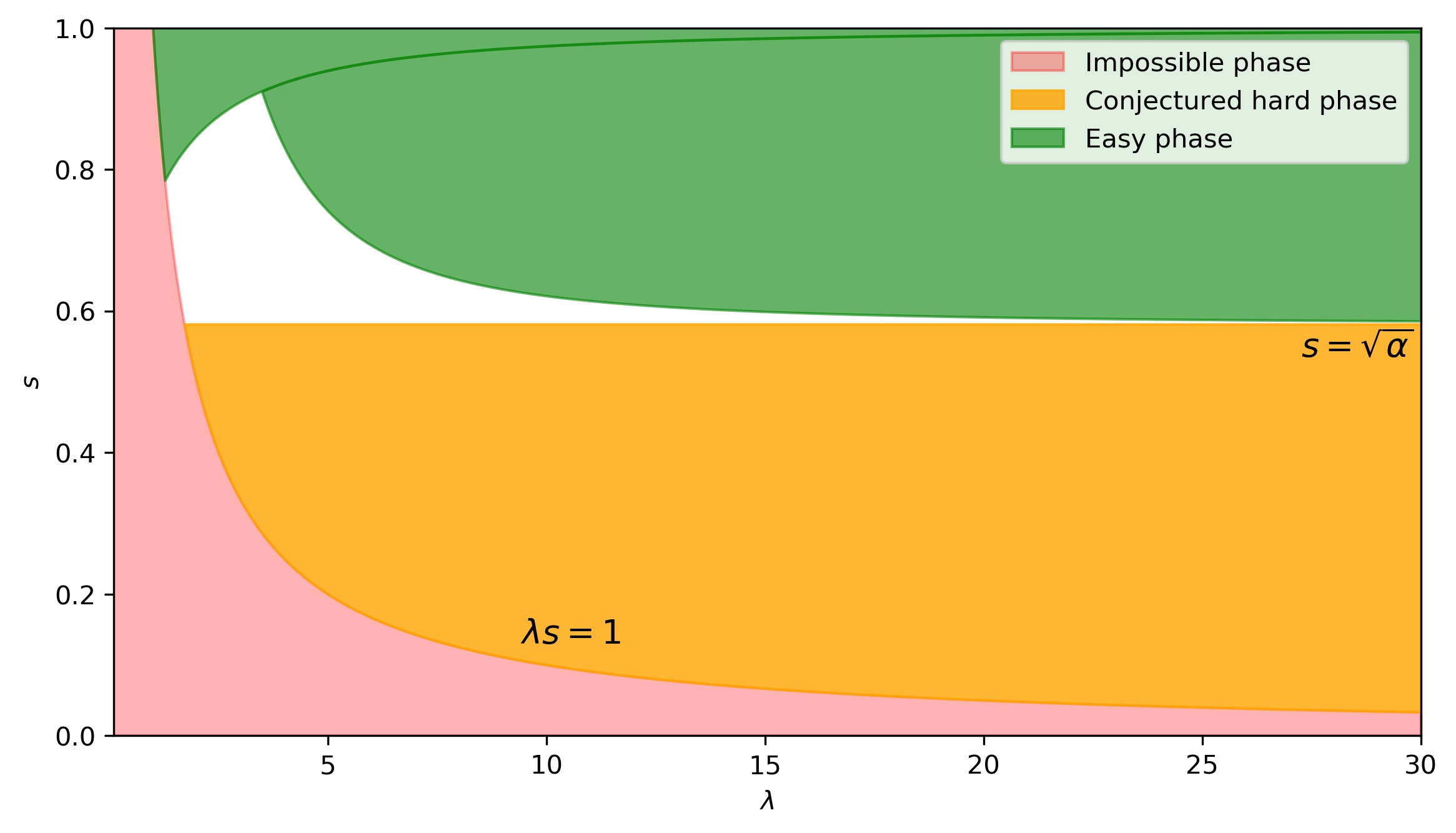}
	\caption{State-of-the art phase diagram for partial graph alignment.}
	\label{fig:addendum:sota_phase_diagram}
\end{figure}

Instrumental to the proofs of these results is the diagonalization of the likelihood ratio $L_d(t,t')$ in an orthogonal basis of eigenvectors (or eigenfunctions). A very noticeable result given in Theorem \ref{addendum:theorem:eigendecomposition} in the following section is that these eigenvectors (resp. eigenvalues) only depend on $\lambda$ (resp. on $s$) (!) 

\section{The impossible phase for $s \leq \sqrt{\alpha}$}

\subsection{Eigendecomposition of the likelihood ratio}

Let us start by stating the master result of this section, which will render the analysis easier and bring important corollaries for our analysis. 

\begin{theorem}[Eigendecomposition of the likelihood ratio]\label{addendum:theorem:eigendecomposition}
	For all $\lambda>0, d \geq 0$, there exists a collection $\{\fl_{d,\alpha} \}_{\alpha \in \cX_d}$ with $\fl_{d,\alpha} : \cX_d \to \dR$, such that for all $s \in [0,1)$, 
	
	\begin{equation}\label{eq:theorem:eigendecomposition}
	\forall t,t' \in \cX_d, \; L_d(t,t') = \sum_{\alpha \in \cX_d} s^{|\alpha| -1} \fl_{d,\alpha}(t) \fl_{d,\alpha}(t'), 
	\end{equation}
	Moreover, the $\fl_{d,\alpha}$ are independent of $s$ and verify the following properties:
	\begin{itemize}
		\item Value at the trivial tree: 
		\begin{equation}\label{eq:theorem:eigendecomposition_trivial_tree}
		\forall t \in \cX_d, \; \fl_{d, \bigcdot}(t) = 1 \, ,
		\end{equation}
		\item Orthogonality: 
		\begin{equation}\label{eq:theorem:eigendecomposition_orthogonality}
		\forall \alpha, \alpha' \in \cX_d, \; \sum_{t \in \cX_d} \GWl_d(t) \fl_{d, \alpha}(t) \fl_{d, \alpha'}(t) = \one_{\alpha = \alpha'} \, .
		\end{equation}
		\begin{equation}\label{eq:theorem:eigendecomposition_orthogonality_bis}
		\forall t,t' \in \cX_d, \; \GWl_d(t) \sum_{\alpha \in \cX_d}  \fl_{d, \alpha}(t) \fl_{d, \alpha'}(t') = \one_{t = t'} \, .
		\end{equation}
		\item Limit of higher-order mixed moments: more generally, for $n \geq 2$, $d \geq 1$ and $\beta^{(1)} = \{\beta^{(1)}_\alpha\}_{\alpha \in \cX_{d-1}}$, $\ldots, \beta^{(n)} = \{\beta^{(n)}_\alpha\}_{\alpha \in \cX_{d-1}} \in \cX_{d}$, one has
		\begin{multline}\label{eq:theorem:eigendecomposition_mixed_products}
		\sum_{t \in \cX_{d}} \GWl_{d}(t) \fl_{d, \beta^{(1)} }(t) \cdots \fl_{d, \beta^{(n)} }(t) \underset{\lambda \to \infty}{\longrightarrow} \prod_{\alpha \in \cX_{d-1}}\sqrt{\prod_{i=1}^n \beta^{(i)}_\alpha!} \left[ x^{\beta^{(1)}_\alpha}_1 \cdots \, x^{\beta^{(n)}_\alpha}_n \right] e^{\sum_{1 \leq i <j \leq n}x_i x_j} \, .
		\end{multline}
	\end{itemize}
\end{theorem}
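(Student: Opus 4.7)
The plan is to exploit two complementary structures: the Markov semi-group property of the kernels $M_d(s)$ established in Proposition \ref{prop:consistency}, and the hierarchical construction of trees as multisets of subtrees that allows an inductive description of the eigenfunctions. Recalling that $L_d(t,t') = M_d(s)(t,t')/\GWl_d(t')$ and that $M_d(s)$ is reversible with respect to $\GWl_d$ (because $\dPls_d$ is symmetric with $\GWl_d$-marginals), the kernel $M_d(s)$ is self-adjoint on $L^2(\GWl_d)$. Since $\{M_d(s)\}_{s \in [0,1]}$ is a commuting family (Proposition \ref{prop:consistency}) of self-adjoint operators, they admit a common $s$-independent orthonormal eigenbasis of $L^2(\GWl_d)$, with eigenvalues $\lambda_\alpha(s)$ satisfying $\lambda_\alpha(s)\lambda_\alpha(s') = \lambda_\alpha(ss')$ together with $\lambda_\alpha(1)=1$ and $\lambda_\alpha(0) \in \{0,1\}$ (as $M_d(0)$ projects onto constants). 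This forces $\lambda_\alpha(s) = s^{r(\alpha)}$ for some integer exponent $r(\alpha) \geq 0$, with $r(\alpha)=0$ exactly for the constant eigenfunction, yielding \eqref{eq:theorem:eigendecomposition_trivial_tree}.

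To identify the index set and the exponent $r(\alpha) = |\alpha|-1$, I would proceed by induction on $d$. The base case $d=0$ is trivial. For the inductive step, I would exploit the fact that a tree $t \in \cX_{d}$ is encoded by its multiset of children $\{N_\tau\}_{\tau \in \cX_{d-1}}$ with $N_\tau \sim \Poi(\lambda \GWl_{d-1}(\tau))$ independently under $\GWl_d$, while under the kernel $M_d(s)$ each child-subtree evolves independently according to $M_{d-1}(s)$. This tensorial Poisson structure, combined with the Mehler-like identity for Charlier polynomials (which diagonalize the $d=1$ case), suggests the ansatz
\[
\fl_{d,\alpha}(t) \;=\; \prod_{\beta \in \cX_{d-1}} \frac{1}{\sqrt{M_\beta!}} \, C_{M_\beta}\!\Bigl(\textstyle\sum_{\tau \in \cX_{d-1}} N_\tau \fl_{d-1,\beta}(\tau);\, \lambda \GWl_{d-1}(\beta)\Bigr),
\]
where $\alpha = \{M_\beta\}_{\beta \in \cX_{d-1}} \in \cX_{d}$ and $C_k(\cdot;\mu)$ are the normalized Charlier polynomials. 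The eigenvalue equation $M_d(s)\fl_{d,\alpha} = s^{|\alpha|-1}\fl_{d,\alpha}$ would then follow from the Charlier eigenvalue identity at depth one applied coordinate-wise, together with the inductive exponent $r(\beta)+1 = |\beta|$, giving $r(\alpha) = \sum_\beta M_\beta |\beta| = |\alpha|-1$.

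The orthonormality relation \eqref{eq:theorem:eigendecomposition_orthogonality} and the completeness relation \eqref{eq:theorem:eigendecomposition_orthogonality_bis} are then immediate consequences of the spectral theorem for the self-adjoint kernel $M_d(s)$ on $L^2(\GWl_d)$. For the mixed-moment limit \eqref{eq:theorem:eigendecomposition_mixed_products}, I would invoke a multivariate central limit theorem: as $\lambda \to \infty$, the family of centered and normalized linear statistics $\sum_\tau N_\tau \fl_{d-1,\beta}(\tau)$, indexed by $\beta \in \cX_{d-1}$, converges jointly to a collection of i.i.d. standard Gaussians (independence follows from the orthonormality of the $\fl_{d-1,\beta}$). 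Since normalized Charlier polynomials $C_k(\cdot; \mu)$ converge as $\mu \to \infty$ to normalized Hermite polynomials $H_k$, the left-hand side of \eqref{eq:theorem:eigendecomposition_mixed_products} becomes a product (over $\beta$) of mixed moments of Hermite polynomials of independent Gaussians, and Wick's/Isserlis' theorem identifies these moments with the coefficient extraction from the generating function $\exp\bigl(\sum_{i<j} x_i x_j\bigr)$, yielding exactly the claimed right-hand side.

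The main obstacle will be twofold. First, rigorously verifying the Charlier ansatz for $\fl_{d,\alpha}$ requires propagating orthonormality and the eigenvalue equation through an \emph{infinite} tensor product indexed by $\cX_{d-1}$; this is delicate because only finitely many $N_\tau$ and $M_\beta$ are nonzero almost surely, but the product ranges a priori over an infinite set. A careful argument exploiting the polynomial growth control of $A_{d-1,n}$ from Proposition \ref{addendum:prop:A_dn} should handle this. Second, establishing the mixed-moment limit \eqref{eq:theorem:eigendecomposition_mixed_products} uniformly in the combinatorial structure of $\beta^{(1)},\dots,\beta^{(n)}$ requires moment bounds of Charlier polynomials under Poisson measures that are sharp enough to survive the $\lambda \to \infty$ limit, and a truncation argument to reduce the infinite-dimensional CLT to finitely many coordinates.
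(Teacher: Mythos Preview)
Your semi-group framing via Proposition~\ref{prop:consistency} is a genuinely different and conceptually attractive starting point: it explains \emph{why} the eigenbasis should be $s$-independent and why eigenvalues should be powers of $s$, whereas the paper simply exhibits this by direct computation. Two caveats on the abstract part, though: you have not argued that $M_d(s)$ has pure point spectrum on the infinite-dimensional space $L^2(\GWl_d)$, so the existence of a complete common eigenbasis is not for free; and the multiplicative functional equation only forces $\lambda_\alpha(s)=s^r$ with \emph{integer} $r$ once you know polynomiality (or analyticity) in $s$, which still requires going back to the explicit recursion for $L_d$.

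The real gap is in your inductive Charlier ansatz. You apply $C_{M_\beta}(\,\cdot\,;\lambda\GWl_{d-1}(\beta))$ to the linear statistic $\sum_\tau N_\tau \fl_{d-1,\beta}(\tau)$, but under $\GWl_d$ this quantity is not Poisson---it is a real-valued linear combination of independent Poissons with coefficients $\fl_{d-1,\beta}(\tau)$ that are in general non-integer and of both signs---so Charlier orthogonality does not apply, and the product over $\beta$ does not inherit orthonormality. (A quick sanity check already fails: the mean of $\sum_\tau N_\tau\fl_{d-1,\beta}(\tau)$ under $\GWl_d$ is $\lambda\sum_\tau\GWl_{d-1}(\tau)\fl_{d-1,\beta}(\tau)=\lambda\,\one_{\beta=\bigcdot}$ by \eqref{eq:theorem:eigendecomposition_orthogonality} at depth $d-1$, not $\lambda\GWl_{d-1}(\beta)$.) The paper instead inserts the depth-$d$ eigendecomposition into the characteristic function of $\dPls_{d+1}$ and reads off the explicit recursion
\[
\fl_{d+1,\beta}(N)=\sqrt{\textstyle\prod_\alpha\beta_\alpha!}\;[x^\beta]\;e^{-\sqrt\lambda\sum_{\alpha,t}x_\alpha\gl_{d,\alpha}(t)}\prod_{t\in\cX_d}\Bigl(1+\textstyle\sum_{\alpha}\frac{x_\alpha}{\sqrt\lambda}\fl_{d,\alpha}(t)\Bigr)^{N_t},
\]
which is a multivariate Charlier-type object in the \emph{tree coordinates} $N_t$ (where the variables really are independent Poissons), not in the dual coordinates $\beta$. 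All of \eqref{eq:theorem:eigendecomposition_trivial_tree}--\eqref{eq:theorem:eigendecomposition_mixed_products} are then checked by direct generating-function manipulations; in particular \eqref{eq:theorem:eigendecomposition_mixed_products} is obtained by expanding the exponent of the $n$-fold product and observing that all correction terms carry a prefactor $\lambda^{1-p/2}$ with $p\ge 3$, which is the explicit counterpart of your Charlier-to-Hermite CLT heuristic and avoids controlling an infinite-dimensional limit.
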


\begin{remark}
	Note that in the above properties, \eqref{eq:theorem:eigendecomposition_orthogonality} implies the following first moment condition: 
	\begin{equation*}
	\forall \alpha \in \cX_d, \; \sum_{t \in \cX_d} \GWl_d(t) \fl_{d, \alpha}(t) = \one_{\alpha = \bigcdot}.
	\end{equation*}
\end{remark}

\begin{remark}
As remarked earlier, the eigenvectors $\fl_{d, \alpha}$ (resp. the eigenvalues) only depend on $\lambda$ and are independent of $s$ (resp. on $s$, independent of $\lambda$).
\end{remark}

\begin{remark}
	The $\fl_{d, \alpha}(\ell)$ for $d=1$, hence indexed by $\alpha, \ell \in \cX_1 \simeq \dN$, are given by 
	\begin{equation*}
	\fl_{1, \alpha}(\ell) := \sqrt{\alpha!} [x^\alpha] e^{-x \sqrt{\lambda}} \left( 1+\frac{x}{\sqrt{\lambda}}\right)^\ell \, ,
	\end{equation*} see equation \eqref{eq:fl1} in the proof. These functions are known as Charlier polynomials, which are orthogonal for the Poisson distribution. Theorem \ref{addendum:theorem:eigendecomposition} provides an extension of these polynomials, on trees of depth $d \geq 2$, that are orthogonal for the $\GWl_d$ distribution, consistent with $\GWl_1 \eqd \Poi(\lambda)$.

	Also, note that equation \eqref{eq:theorem:eigendecomposition} exhibits a duality between trees $t,t'$ in $\cX_d$ and the trees $\alpha \in \cX_d$. This duality turns out to be very helpful for analysis, as shown below, e.g. giving a nice space in which one can prove weak convergence results -- see Section \ref{subsection:gaussian_approx}.
\end{remark}

\begin{proof}[Proof of Theorem \ref{addendum:theorem:eigendecomposition}]
	
	We will prove the decomposition \eqref{eq:theorem:eigendecomposition} as well as the properties \eqref{eq:theorem:eigendecomposition_trivial_tree}, \eqref{eq:theorem:eigendecomposition_orthogonality}, \eqref{eq:theorem:eigendecomposition_mixed_products} by induction on $d$.\\
	
	\proofstep{Step 1: initialization at $d=1$.} We identify $\cX_1$ to $\dN$, and a tree $t$ of depth $d=1$ to its number of children $\ell \in \dN$: in this case, $|t|=\ell+1$. Denote by $\hatdPls_1$ the characteristic function defined on $[0,2 \pi]^2$ by 
	$\hatdPls_1(k,k') := \dE\left[ e^{ik\ell + ik'\ell'}\right]$ where $(\ell,\ell') \sim \dPls_1$. We have that
	\begin{flalign*}
	\hatdPls_1(k,k') & = \exp\left[ \lambda(1-s) (e^{ik} + e^{ik'} - 2) + \lambda s (e^{i(k+k')}-1) \right]\\
	& = e^{\lambda(e^{ik}-1)} e^{\lambda(e^{ik'}-1)} \exp\left[ \lambda s (e^{ik}-1) (e^{ik'}-1) \right]\\
	& = \sum_{\alpha \geq 0} s^{\alpha} \frac{\lambda^\alpha}{\alpha !} (e^{ik}-1)^\alpha e^{\lambda(e^{ik}-1)} (e^{ik'}-1)^\alpha e^{\lambda(e^{ik'}-1)} = \sum_{\alpha \geq 0} s^{\alpha} \hatgl_{1,\alpha}(k) \hatgl_{1,\alpha}(k') \, , 
	\end{flalign*} with 
	\begin{flalign*}
	\hatgl_{1,\alpha}(k) & := e^{- \lambda} \sqrt{\frac{\lambda^\alpha}{\alpha!}} e^{\lambda e^{ik}} (e^{ik}-1)^\alpha
	= e^{- \lambda} \sqrt{\alpha!} e^{\lambda e^{ik}} [x^\alpha] e^{x\sqrt{\lambda}(e^{ik}-1)}
	= e^{- \lambda} \sqrt{\alpha!}  [x^\alpha] e^{-x\sqrt{\lambda}} e^{(\lambda + x\sqrt{\lambda})e^{ik}} \, .
	\end{flalign*} 
	
	We have an easy upper bound of the form $\left| \hatgl_{1,\alpha}(k) \right| \leq \frac{C^\alpha}{\sqrt{\alpha!}}$, independently of $k$, which established normal convergence of the series $\hatdPls_1(k,k')$ in the above. Hence, inverting the Fourier transform, we get
	\begin{flalign*}
	\dPls_1(\ell,\ell') & = \int_{[0,2 \pi]^2} \frac{dk dk'}{(2 \pi)^2} e^{-ik\ell - i k' \ell'} \hatdPls_1(k,k') = \sum_{\alpha \geq 0} s^{|\alpha|-1} \gl_{1,\alpha}(\ell) \gl_{1,\alpha}(\ell') \, ,
	\end{flalign*} with 
	\begin{flalign*}
	\gl_{1,\alpha}(\ell) & := \int_{[0,2 \pi]} \frac{dk}{2 \pi} e^{-ik\ell} \hatgl_{1,\alpha}(k) \\
	& = e^{- \lambda} \sqrt{\alpha!}  [x^\alpha] e^{-x\sqrt{\lambda}} \int_{[0,2 \pi]} \frac{dk}{2 \pi} e^{-ik\ell}   e^{(\lambda + x\sqrt{\lambda})e^{ik}} \\
	& = e^{- \lambda} \sqrt{\alpha!}  [x^\alpha] e^{-x\sqrt{\lambda}} \frac{(\lambda + x\sqrt{\lambda})^\ell}{\ell!} \, .
	\end{flalign*} 
	We hence have that $L_{1}$ satisfies \eqref{eq:theorem:eigendecomposition} with $\fl_{1, \alpha}$ given by 
	\begin{flalign}\label{eq:fl1}
	\fl_{1, \alpha}(\ell) & = \frac{e^\lambda \ell!}{\lambda^\ell} \gl_{1,\alpha}(\ell) = \sqrt{\alpha!} [x^\alpha] e^{-x \sqrt{\lambda}} \left( 1+\frac{x}{\sqrt{\lambda}}\right)^\ell \, .
	\end{flalign} 
	Taking $\alpha = 0$ in \eqref{eq:fl1} gives $\fl_{1,\bigcdot} = 1$ and proves condition \eqref{eq:theorem:eigendecomposition_trivial_tree} at $d=1$. For orthogonality \eqref{eq:theorem:eigendecomposition_orthogonality}, note that for all $\alpha, \alpha' \in \dN$,
	\begin{flalign*}
	& \sum_{\ell \geq 0} \GWl_1(\ell) \fl_{1, \alpha}(\ell) \fl_{1, \alpha'}(\ell) = \sqrt{\alpha! \alpha'!} \sum_{\ell \geq 0} e^{- \lambda} \frac{\lambda^\ell}{\ell!}  [x^\alpha y^{\alpha'}] e^{-x \sqrt{\lambda} -y \sqrt{\lambda}} \left[\left( 1+\frac{x}{\sqrt{\lambda}}\right)\left( 1+\frac{y}{\sqrt{\lambda}}\right) \right]^\ell\\
	& \quad \quad \quad \quad = \sqrt{\alpha! \alpha'!} [x^\alpha y^{\alpha'}] e^{-x \sqrt{\lambda} -y \sqrt{\lambda}}  \exp \left[ - \lambda + \lambda\left( 1+\frac{x}{\sqrt{\lambda}}\right)\left( 1+\frac{y}{\sqrt{\lambda}}\right) \right] \\
	& \quad \quad \quad \quad = \sqrt{\alpha! \alpha'!} [x^\alpha y^{\alpha'}] e^{xy} = \one_{\alpha=\alpha'}\, ,
	\end{flalign*} which establishes \eqref{eq:theorem:eigendecomposition_orthogonality} for $d=1$. Previous computations are made rigorous by noticing that the series in \eqref{eq:fl1} has infinite radius of convergence, and appealing to Fubini's theorem. With the same arguments, writing
	\begin{flalign}\label{eq:fl1_bis}
	\fl_{1, \alpha}(\ell) & = \sqrt{\alpha!} \ell! [x^\alpha y^\ell] e^{-x \sqrt{\lambda} + y(1+x/\sqrt{\lambda})} \nonumber \\
	& = \sqrt{\alpha!} \ell! [x^\alpha y^\ell] e^{y + x (y/\sqrt{\lambda} -\sqrt{\lambda})} \nonumber \\
	& = \frac{\ell!}{\sqrt{\alpha!}}  [x^\ell] e^x \left(\frac{x}{\sqrt{\lambda}} -\sqrt{\lambda}\right)^\alpha  \, .
	\end{flalign} 
	Taking $\alpha = 0$ in \eqref{eq:fl1} gives $\fl_{1,\bigcdot} = 1$ and proves condition \eqref{eq:theorem:eigendecomposition_trivial_tree} at $d=1$. For orthogonality \eqref{eq:theorem:eigendecomposition_orthogonality}, note that for all $\alpha, \alpha' \in \dN$,
	\begin{flalign*}
	\sum_{\alpha \geq 0} \fl_{1, \alpha}(\ell) \fl_{1, \alpha}(\ell') & = \ell! (\ell')! [x^\ell y^{\ell'}] e^{x+y} \sum_{\alpha \geq 0} \frac{1}{\alpha!} \left(\frac{x}{\sqrt{\lambda}} -\sqrt{\lambda}\right)^\alpha \left(\frac{y}{\sqrt{\lambda}} -\sqrt{\lambda}\right)^\alpha\\
	& = \ell! (\ell')! [x^\ell y^{\ell'}] e^{xy/\lambda+ \lambda} = \one_{\ell=\ell'} \ell! e^{\lambda} \lambda^{-\ell} = \frac{\one_{\ell=\ell'}}{\GWl_{1}(\ell)} \, ,
	\end{flalign*} which establishes \eqref{eq:theorem:eigendecomposition_orthogonality_bis} for $d=1$.
	
	More generally, for $n \geq 2$, 
	\begin{flalign}\label{eq:fl2}
	& \sum_{\ell \geq 0} \GWl_{1}(\ell) \fl_{1, \alpha_1 }(\ell) \cdots \fl_{1, \alpha_n }(\ell) = \sqrt{\prod_{i=1}^{n} \alpha_i!} \, \sum_{\ell \geq 0} e^{- \lambda} \frac{\lambda^\ell}{\ell!}  [x_1^{\alpha_1} \, \cdots \, x_n^{\alpha_n}] e^{-\sqrt{\lambda} \sum_{i=1}^{n}x_i}  \prod_{i=1}^{n} \left(1+\frac{x_i}{\sqrt{\lambda}}\right)^\ell \nonumber \\
	& \quad \quad \quad \quad  = \sqrt{\prod_{i=1}^{n} \alpha_i!} \, [x_1^{\alpha_1} \, \cdots \, x_n^{\alpha_n}] \exp \left[- \lambda -\sqrt{\lambda} \sum_{i=1}^{n}x_i  + \lambda \prod_{i=1}^{n} \left(1+\frac{x_i}{\sqrt{\lambda}}\right) \right] \nonumber \\
	& \quad \quad \quad \quad = \sqrt{\prod_{i=1}^{n} \alpha_i!} \, [x_1^{\alpha_1} \, \cdots \, x_n^{\alpha_n}] \exp \left[ \sum_{1 \leq i < j \leq n} x_i x_j + \eps_{\lambda}(x_1,\ldots,x_n) \right] \, ,
	\end{flalign}with
	$$
	\eps_{\lambda}(x_1,\ldots,x_n) := \sum_{p=3}^{n} \lambda^{1-p/2} \sum_{1 \leq i_1 < \ldots < i_p \leq n} x_{i_1}^{\alpha_{i_1}} \cdots x_{i_p}^{\alpha_{i_p}}. 
	$$
	The terms corresponding to $[x_1^{\alpha_1} \, \cdots \, x_n^{\alpha_n}]$ in the expansion of $\exp \left[ \sum_{1 \leq i < j \leq n} x_i x_j + \eps_{\lambda}(x_1,\ldots,x_n) \right]$ 
	to which $\eps_{\lambda}(x_1,\ldots,x_n) $ contributes are in finite number (independently of $\lambda$) and are all of order $O(\lambda^{-1/2})$. Hence, taking $\lambda \to \infty$, property \eqref{eq:theorem:eigendecomposition_mixed_products} is proved for $d=1$ in \eqref{eq:fl2}.
	\\
	
	\proofstep{Step 2: recursion at $d+1$.} Let us take a pair of random trees in $\cX_{d+1}$ sampled from the correlated model given in Definition \ref{def:correlated_model}, with $N,N' \in \dR^{\cX_d}$ their corresponding vector representations. Given $k,y \in \dR^{\cX_d}$ we shall write $k \cdot y := \sum_{\alpha \in \cX_d} k_{\alpha} y_{\alpha}$. The characteristic function of $\dPls_{d+1}$ is defined as $\hatdPls_{d+1}(k,k') := \dE\left[ e^{ik \cdot N + ik' \cdot N'}\right]$ and writes
	\begin{flalign}\label{eq:recursive_eigen_step1}
	& \hatdPls_{d+1}(k,k') = \exp \left[ \lambda(1-s) \sum_{t \in \cX_d} \GWl_{d}(t)(e^{i k_t} + e^{i k'_t} - 2) + \lambda s \sum_{t,t' \in \cX_d} \dPls_{d}(t,t') (e^{i k_t + i k'_{t'}} -1) \right] \nonumber \\
	& \quad \quad = e^{\lambda \sum_{t \in \cX_d} \GWl_{d}(t)(e^{i k_t} -1) + \lambda \sum_{t \in \cX_d} \GWl_{d}(t)(e^{i k'_t} -1)} \exp \left[ \lambda s \sum_{t,t' \in \cX_d} \dPls_{d}(t,t') (e^{i k_t} -1)(e^{i k'_{t'}} -1) \right] \nonumber \\
	& \quad \quad = e^{\lambda \sum_{t \in \cX_d} \GWl_{d}(t)(e^{i k_t} -1) + \lambda \sum_{t \in \cX_d} \GWl_{d}(t)(e^{i k'_t} -1)} \nonumber \\
	& \quad \quad \quad \quad \quad  \times  \sum_{n \geq 0}s^n \frac{\lambda^n}{n!} \underbrace{\left( \sum_{t,t' \in \cX_d} \dPls_{d}(t,t') (e^{i k_t} -1)(e^{i k'_{t'}} -1)\right)^n}_{(i)}
	\, .
	\end{flalign} Let us use the decomposition \eqref{eq:theorem:eigendecomposition} at step $d$ in $(i)$. Denoting $\gl_{d,\alpha}(t) := \fl_{d,\alpha}(t) \GWl_d(t)$, this gives 
	\begin{flalign*}
	(i) & =\left(\sum_{\alpha \in \cX_d} s^{|\alpha| -1} \left[ \sum_{t  \in \cX_d}\gl_{d,\alpha}(t)(e^{i k_t} -1) \right] \left[ \sum_{t  \in \cX_d}\gl_{d,\alpha}(t) (e^{i k'_{t}} -1)\right]\right)^n
	\, \\
	& = \sum_{\beta = (\beta_\alpha)_{\alpha \in \cX_d}} {n!} s^{- n + \sum_{\alpha \in \cX_d}\beta_{\alpha}|\alpha| } \\
	& \quad \quad \quad\quad \quad \times \prod_{\alpha \in \cX_d} \frac{1}{\beta_{\alpha}!} \left[ \sum_{t  \in \cX_d}\gl_{d,\alpha}(t)(e^{i k_t} -1) \right]^{\beta_\alpha} \left[ \sum_{t  \in \cX_d}\gl_{d,\alpha}(t) (e^{i k'_{t}} -1)\right]^{\beta_\alpha} \one_{\sum_{\alpha \in \cX_d}\beta_{\alpha} = n} \,.
	\end{flalign*} Summing $(i)$ for $n \geq 0$ gives an overall sum over all $\beta = (\beta_\alpha)_{\alpha \in \cX_d}$ that is over all $\cX_{d+1}$. Moreover, for $\beta = (\beta_\alpha)_{\alpha \in \cX_d} \in \cX_{d+1}$, one always has $$|\beta| = 1+ \sum_{\alpha \in \cX_d}\beta_{\alpha}|\alpha| \,. $$ 
	Hence, equation \eqref{eq:recursive_eigen_step1} becomes
	\begin{flalign*}
	\hatdPls_{d+1}(k,k') & = \sum_{\substack{\beta \in \cX_{d+1} \\ \beta = (\beta_\alpha)_{\alpha \in \cX_d}}} s^{|\beta|-1} \prod_{\alpha \in \cX_d} \frac{1}{\beta_{\alpha}!} \left( \lambda \left[ \sum_{t  \in \cX_d}\gl_{d,\alpha}(t)(e^{i k_t} -1) \right] \left[ \sum_{t  \in \cX_d}\gl_{d,\alpha}(t) (e^{i k'_{t}} -1)\right]\right)^{\beta_\alpha} \nonumber \\
	& \quad \quad \quad \quad \quad \times  e^{\lambda \sum_{t \in \cX_d} \GWl_{d}(t)(e^{i k_t} -1) + \lambda \sum_{t \in \cX_d} \GWl_{d}(t)(e^{i k'_t} -1)} \nonumber \\
	& = \sum_{\substack{\beta \in \cX_{d+1} \\ \beta = (\beta_\alpha)_{\alpha \in \cX_d}}} s^{|\beta|-1} \hatgl_{d+1,\beta}(k) \hatgl_{d+1,\beta}(k')\,, 
	\end{flalign*} with 
	\begin{flalign*}
	\hatgl_{d+1,\beta}(k) & := e^{\lambda \sum_{t \in \cX_d} \GWl_{d}(t)(e^{i k_t} -1)}  \prod_{\alpha \in \cX_d}  \frac{1}{\sqrt{\beta_{\alpha}!}}  \left[ \sqrt{\lambda} \sum_{t  \in \cX_d}\gl_{d,\alpha}(t)(e^{i k_t} -1)\right]^{\beta_\alpha} \nonumber \\
	& = e^{- \lambda} \sqrt{\prod_\alpha \beta_\alpha!} \, [x^\beta] e^{ \lambda \sum_{t \in \cX_d} \GWl_{d}(t) e^{i k_t} + \sum_{\alpha \in \cX_d} x_{\alpha} \sqrt{\lambda} \sum_{t  \in \cX_d}\gl_{d,\alpha}(t)(e^{i k_t} -1)} \nonumber \\
	& = e^{- \lambda} \sqrt{\prod_\alpha \beta_\alpha!} \, [x^\beta] e^{ -  \sqrt{\lambda} \sum_{\alpha,t \in \cX_d} x_{\alpha} \gl_{d,\alpha}(t) + \sum_{t} e^{i k_t} \left[\lambda \GWl_{d}(t) + \sum_{\alpha} x_\alpha \sqrt{\lambda} \gl_{d,\alpha}(t) \right]} \, ,
	\end{flalign*} where $x = \set{x_\alpha}_{\alpha \in \cX_d}$ is a family of formal variables and $x^{\beta}$ denotes $\prod_\alpha x_{\alpha}^{\beta_\alpha}$ when $\beta = (\beta_\alpha)_{\alpha \in \cX_d}$. Note that since the trees are finite, only a finite number of coordinates $\beta_\alpha$ are non zero, which makes the infinite product problem disappear. The same arguments of normal convergence as in the case $d=1$ apply to justify the integral/sum permutations.
	
	As done in Step 1, we can invert the Fourier transform by integrating over every $k_t$, which gives  
	
	\begin{flalign*}
	\gl_{d+1,\beta}(N) = e^{- \lambda} \sqrt{\prod_\alpha \beta_\alpha!} \, [x^\beta] e^{ -  \sqrt{\lambda} \sum_{\alpha,t \in \cX_d} x_{\alpha} \gl_{d,\alpha}(t)} \prod_{t \in \cX_d} \frac{\left[\lambda \GWl_{d}(t) + \sum_{\alpha} x_\alpha \sqrt{\lambda} \gl_{d,\alpha}(t) \right]^{N_t}}{N_t!} \, .
	\end{flalign*} It is now established that $L_{d+1}(N,N')$ satisfies the decomposition \eqref{eq:theorem:eigendecomposition} with $\fl_{d+1, \beta}$ given by the following recursion
	\begin{equation}\label{eq:recursive_expression_f}
	\fl_{d+1,\beta}(N) := \sqrt{\prod_\alpha \beta_\alpha!} \, [x^\beta] e^{ -  \sqrt{\lambda} \sum_{\alpha,t \in \cX_d} x_{\alpha} \gl_{d,\alpha}(t)} \prod_{t \in \cX_d} \left( 1 + \sum_{\alpha \in \cX_d} \frac{x_\alpha}{\sqrt{\lambda}}  \fl_{d,\alpha}(t) \right)^{N_t} \, .   
	\end{equation}
	Taking $\beta = \bigcdot$ in \eqref{eq:recursive_expression_f}, that is $\beta_\alpha = 0$ for all $\alpha$, gives $\fl_{d+1,\bigcdot} = 1$ and proves condition \eqref{eq:theorem:eigendecomposition_trivial_tree} at $d+1$. 
	
	\proofstep{Step 2.1: recursion for \eqref{eq:theorem:eigendecomposition_orthogonality} at $d+1$.} For any $\beta = \{\beta_\alpha\}_{\alpha \in \cX_{d}}$,$\beta' = \{\beta'_\alpha\}_{\alpha \in \cX_{d}} \in \cX_{d+1}$, recursion \eqref{eq:recursive_expression_f} gives
	\begin{multline*}
	\sum_{N \in \cX_{d+1}} \GWl_{d+1}(N) \fl_{d+1, \beta }(N)  \fl_{d+1, \beta' }(N) = \sqrt{\prod_\alpha \beta_\alpha!} \sqrt{\prod_\alpha \beta'_\alpha!} \\ \times \left[ x^\beta y^{\beta'} \right]e^ { - \lambda - \sqrt{\lambda} \sum_{\alpha,t  \in \cX_d} (x_{\alpha}+y_{\alpha}) \gl_{d,\alpha}(t) + \lambda \sum_{t \in \cX_d} \GWl_d(t) \left(1 + \sum_{\alpha \in \cX_d} \frac{x_{\alpha}}{\sqrt{\lambda}} \fl_{d,\alpha}(t) \right) \left(1 + \sum_{\alpha \in \cX_d} \frac{y_{\alpha}}{\sqrt{\lambda}} \fl_{d,\alpha}(t) \right) } \, .
	\end{multline*} The expression in the exponential in the above factorizes in several terms, that of order $\lambda$ being $-1+1=0$. The term in $\sqrt{\lambda}$ is
	$$ - \sum_{\alpha,t  \in \cX_d} (x_{\alpha}+y_{\alpha})\gl_{d,\alpha}(t) + \sum_{\alpha,t  \in \cX_d} \GWl_d(t) x_{\alpha} \fl_{d,\alpha}(t) + \sum_{\alpha,t  \in \cX_d} \GWl_d(t) y_{\alpha} \fl_{d,\alpha}(t) = 0 \, ,$$ since $\gl_{d,\alpha}(t)=\GWl_d(t) \fl_{d,\alpha}(t)$ by definition. The only remaining term is constant and evaluates to 
	$$ \sum_{\alpha,\alpha'} x_{\alpha} y _{\alpha'} \sum_{t \in \cX_d} \GWl_d(t) \fl_{d,\alpha}(t) \fl_{d,\alpha'}(t) = \sum_{\alpha} x_{\alpha} y _{\alpha} \, ,$$ using the orthogonality property \eqref{eq:theorem:eigendecomposition_orthogonality} at step $d$. Hence,
	\begin{flalign*}
	\sum_{N \in \cX_{d+1}} \GWl_{d+1}(N) \fl_{d+1, \beta }(N)  \fl_{d+1, \beta' }(N) & = \sqrt{\prod_\alpha \beta_\alpha!} \sqrt{\prod_\alpha \beta'_\alpha!} \left[ x^\beta y^{\beta'} \right] e^ {\sum_{\alpha} x_{\alpha} y _{\alpha}} 
	 = \one_{\beta=\beta'} \, ,
	\end{flalign*} which proves \eqref{eq:theorem:eigendecomposition_orthogonality} at $d+1$. Previous computations are made rigorous since the trees are finite, and by noticing that the series in \eqref{eq:recursive_expression_f} has infinite radius of convergence, and appealing to Fubini's theorem. We use the same arguments to make computations rigorous in the rest of the proof.
	
	\proofstep{Step 2.2: recursion for \eqref{eq:theorem:eigendecomposition_orthogonality_bis} at $d+1$.} We are going to transform equation \eqref{eq:recursive_expression_f} as for the step $d=1$, as follows:
	\begin{flalign}\label{eq:recursive_expression_f_bis}
	\fl_{d+1,\beta}(N) & = \sqrt{\prod_\alpha \beta_\alpha!} \prod_t N_t! \, [x^\beta y^N] e^{ -  \sqrt{\lambda} \sum_{\alpha,t \in \cX_d} x_{\alpha} \gl_{d,\alpha}(t) + \sum_{t \in \cX_d} y_t + \sum_{\alpha,t \in \cX_d}  \frac{x_\alpha y_\alpha}{\sqrt{\lambda}}  \fl_{d,\alpha}(t) } \nonumber \\
	& =  \frac{\prod_t N_t! }{\sqrt{\prod_\alpha \beta_\alpha!}} \, [x^N] e^{\sum_t x_t} \prod_{\alpha} \left( \sum_t \fl_{d,\alpha}(t) \left(\frac{x_t}{\sqrt{\lambda}} - \sqrt{\lambda} \GWl_d(t) \right)^{\beta_\alpha} \right) \, .
	\end{flalign} Using \eqref{eq:recursive_expression_f_bis} gives that for all $N,N' \in \cX_{d+1}$,
	\begin{flalign*}
		& \sum_{\beta \in \cX_{d+1}} \fl_{d+1, \beta }(N)\fl_{d+1, \beta }(N') = \prod_{t} N_t! N'_t! \, [x^N y^{N'}] e^{\sum_t (x_t + y_t)}\\
		& \quad \quad  \quad \quad \quad \quad \times e^{\sum_{\alpha,t,t'} \fl_{d,\alpha}(t) \fl_{d,\alpha}(t') \left(\frac{x_t}{\sqrt{\lambda}} - \sqrt{\lambda} \GWl_d(t) \right) \left(\frac{y_{t'}}{\sqrt{\lambda}} - \sqrt{\lambda} \GWl_d(t') \right)} \\
		& \quad \quad = \prod_{t} N_t! N'_t! \, [x^N y^{N'}] e^{\sum_t (x_t + y_t) + \sum_{t} \frac{1}{\GWl_d(t)}  \left(\frac{x_t}{\sqrt{\lambda}} - \sqrt{\lambda} \GWl_d(t) \right)\left(\frac{y_t}{\sqrt{\lambda}} - \sqrt{\lambda} \GWl_d(t) \right)}  \, ,
	\end{flalign*} where we used \eqref{eq:theorem:eigendecomposition_orthogonality_bis} at step $d$ in the last step. This simplifies to 
	\begin{flalign*}
	\sum_{\beta \in \cX_{d+1}} \fl_{d+1, \beta}(N)\fl_{d+1, \beta }(N') & = \prod_{t} N_t! N'_t! \, [x^N y^{N'}] \, e^{\lambda \GWl_{d}(t) + \sum_t \frac{x_t y_t}{\lambda \GWl_d(t)}} \\
	& = \prod_t \one_{N=N'} e^{\lambda \GWl_{d}(t)} (\lambda \GWl_{d}(t))^{-N_t} N_t! = \frac{\one_{N=N'}}{\GWl_{d+1}(N)} \, ,
	\end{flalign*} which proves \eqref{eq:theorem:eigendecomposition_orthogonality_bis} at step $d+1$.
	
	\proofstep{Step 2.3: recursion for \eqref{eq:theorem:eigendecomposition_mixed_products} at $d+1$.} 
	Let us now prove property \eqref{eq:theorem:eigendecomposition_mixed_products}. For any $\beta^{(1)} = \{\beta^{(1)}_\alpha\}_{\alpha \in \cX_{d}}$, $\ldots, \beta^{(n)} = \{\beta^{(n)}_\alpha\}_{\alpha \in \cX_{d}} \in \cX_{d+1}$, recursion \eqref{eq:recursive_expression_f} gives
	\begin{multline*}
	\sum_{N \in \cX_{d+1}} \GWl_{d+1}(N) \fl_{d+1, \beta^{(1)} }(N) \cdots \fl_{d+1, \beta^{(n)} }(N) = \sqrt{\prod_{i=1}^{n} \prod_\alpha \beta_\alpha!} \, \left[ \prod_{i=1}^{n} (x^{(i)})^{\beta^{(i)}} \right] \\ \times \exp \left[ - \lambda - \sqrt{\lambda} \sum_{\alpha,t  \in \cX_d} \sum_{i=1}^{n} x_{\alpha}^{(i)} \gl_{d,\alpha}(t) + \lambda \sum_{t \in \cX_d} \GWl_d(t) \prod_{i=1}^{n} \left(1 + \sum_{\alpha \in \cX_d} \frac{x_{\alpha}^{(i)}}{\sqrt{\lambda}} \fl_{d,\alpha}(t) \right) \right] \, .
	\end{multline*}As in Step 2.1, when expanding the product in the exponential, the zero and first order terms simplify, which yields
	\begin{multline}\label{eq:recursive_mult_step1}
	\sum_{N \in \cX_{d+1}} \GWl_{d+1}(N) \fl_{d+1, \beta^{(1)} }(N) \cdots \fl_{d+1, \beta^{(n)} }(N) = \\ \sqrt{\prod_{i=1}^{n} \prod_\alpha \beta_\alpha!} \, \left[ \prod_{i=1}^{n} (x^{(i)})^{\beta^{(i)}} \right] 
	e^{\sum_{1 \leq i < j \leq n} \sum_{\alpha,\alpha' \in \cX_d} x_{\alpha}^{(i)} x_{\alpha'}^{(j)} \sum_{t \in \cX_d} \GWl_d(t) \fl_{d,\alpha}(t) \fl_{d,\alpha'}(t) + \eps_{\lambda}(x^{(1)}, \ldots, x^{(n)}) } \, ,
	\end{multline} with 
	$$
	\eps_{\lambda}(x^{(1)}, \ldots, x^{(n)}) := \sum_{p=3}^{n} \lambda^{1-p/2} \sum_{1 \leq i_1 < \ldots < i_p \leq n} \sum_{\alpha_1, \ldots, \alpha_p \in \cX_d} x^{i_1}_{\alpha_{i_1}} \cdots x^{i_p}_{\alpha_{i_p}} \sum_{t \in \cX_d} \GWl_d(t) \fl_{d,\alpha_1}(t) \cdots \fl_{d,\alpha_p}(t) \, . 
	$$
	Using orthogonality \eqref{eq:theorem:eigendecomposition_orthogonality} at step $d$, \eqref{eq:recursive_mult_step1} writes
	\begin{multline}\label{eq:recursive_mult_step2}
	\sum_{N \in \cX_{d+1}} \GWl_{d+1}(N) \fl_{d+1, \beta^{(1)} }(N) \cdots \fl_{d+1, \beta^{(n)} }(N) =  \\ \sqrt{\prod_{i=1}^{n} \prod_\alpha \beta_\alpha!} \, \left[ \prod_{i=1}^{n} (x^{(i)})^{\beta^{(i)}} \right] 
	e^{\sum_{1 \leq i < j \leq n} \sum_{\alpha \in \cX_d} x_{\alpha}^{(i)} x_{\alpha}^{(j)}} 
	\times \exp\left[\eps_{\lambda}(x^{(1)}, \ldots, x^{(n)}) \right] \, ,
	\end{multline} 
	Using property \eqref{eq:theorem:eigendecomposition_mixed_products} at step $d$, $\sum_{t \in \cX_d} \GWl_d(t) \fl_{d,\alpha_1}(t) \cdots \fl_{d,\alpha_p}(t)$ 
	has a finite limit when $\lambda \to \infty$. Hence, as in Step $1$, the terms corresponding to $\left[ \prod_{i=1}^{n} (x^{(i)})^{\beta^{(i)}} \right]$ in \eqref{eq:recursive_mult_step2} 
	to which $\eps_{\lambda}(x^{(1)}, \ldots, x^{(n)})$ contributes are in finite number (independent of $\lambda$) and are all of order $O(\lambda^{-1/2})$. 
	
	Taking $\lambda \to \infty$ thus establishes property \eqref{eq:theorem:eigendecomposition_mixed_products} for $d+1$ and completes the proof of Theorem \ref{addendum:theorem:eigendecomposition}.
\end{proof}

\subsection{Computation of cyclic moments, proof of Theorem \ref{addendum:thm:negative_result}}

Theorem \ref{addendum:theorem:eigendecomposition} hereabove has a very natural corollary that enables to compute the cyclic moments of the likelihood ratio.

\begin{corollary}[Cyclic moments]\label{corr:cyclic_moments}
	The \emph{$n-$th cyclic moment} of $L_d$ is defined as follows
	\begin{equation*}\label{eq:def:cyclic_moments}
	\Cls_{d,m} := \dEl_d\left[L_d(T_1,T_2)\cdots L_d(T_{m-1},T_m) L_d(T_m,T_1)\right],
	\end{equation*} where $T_1, \ldots, T_m$ are i.i.d. $\GWl_d$ in the above expectation. One has that
	\begin{equation}\label{eq:corr:cyclic_moments}
	\Cls_{d,m} = \sum_{\alpha \in \cX_d} (s^m)^{\card{\alpha}-1} = \sum_{n \geq 1} A_{d,n}(s^m)^{n-1} = \Phi_d(s^m),
	\end{equation} where $A_{d,n}$, as defined in \eqref{eq:def:Adn}, denotes the number of unlabeled trees with $n$ vertices of depth at most $d$, and $\Phi_d$ is the generating function defined in Proposition \ref{addendum:prop:A_dn}. Note that in particular, the $\Cls_{d,m}$ do not depend on $\lambda$ (!) and by Proposition \ref{addendum:prop:A_dn} they are upper bounded for each $d$ and $s \in [0,1)$ by some constant $A=A(d,s)$. We thus denote $\Cs_{d,m} := \Cls_{d,m}$ in the sequel.
\end{corollary}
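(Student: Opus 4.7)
The plan is to substitute the eigendecomposition \eqref{eq:theorem:eigendecomposition} into each of the $m$ factors of $L_d$ appearing in the cyclic product and to exploit the orthogonality relation \eqref{eq:theorem:eigendecomposition_orthogonality} to collapse the resulting multi-indexed sum. More precisely, for $T_1, \ldots, T_m$ i.i.d. with law $\GWl_d$, and using cyclic indices (so $T_{m+1} := T_1$), we would write
\begin{equation*}
	\prod_{i=1}^{m} L_d(T_i, T_{i+1}) = \prod_{i=1}^{m}\sum_{\alpha_i \in \cX_d} s^{|\alpha_i|-1} \fl_{d,\alpha_i}(T_i) \fl_{d,\alpha_i}(T_{i+1}) \, ,
\end{equation*}
expand the product, and then exchange the (multi-)sum with the expectation.

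Once this exchange is justified, the expectation factorizes over $i$ thanks to the independence of the $T_i$'s: each variable $T_i$ appears in exactly two factors of the expanded product, namely those indexed by $\alpha_{i-1}$ and $\alpha_i$, so that the joint expectation becomes
\begin{equation*}
	\prod_{i=1}^{m} \dEl_d\left[\fl_{d,\alpha_{i-1}}(T_i) \fl_{d,\alpha_i}(T_i)\right] = \prod_{i=1}^{m} \one_{\alpha_{i-1}=\alpha_i}
\end{equation*}
by \eqref{eq:theorem:eigendecomposition_orthogonality}. The indicator functions force $\alpha_1=\alpha_2=\cdots=\alpha_m=:\alpha$, and the full sum collapses to
\begin{equation*}
	\Cls_{d,m} = \sum_{\alpha \in \cX_d} \prod_{i=1}^{m} s^{|\alpha|-1} = \sum_{\alpha \in \cX_d} (s^m)^{|\alpha|-1} \, .
\end{equation*}
Grouping trees $\alpha \in \cX_d$ by their size $n = |\alpha|$ and using Definition \ref{def:An_Adn} and the generating function from Proposition \ref{addendum:prop:A_dn} yields the stated formulae $\sum_{n \geq 1} A_{d,n}(s^m)^{n-1} = \Phi_d(s^m)$. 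In particular, $\lambda$ drops out entirely, as the eigenvalues in \eqref{eq:theorem:eigendecomposition} depend only on $s$.

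The main obstacle in this plan is justifying the exchange of the $m$-fold infinite summation with the expectation, since $\cX_d$ is countably infinite and the eigenfunctions $\fl_{d,\alpha}$ are unbounded in general. The natural route is Fubini--Tonelli, which requires an absolute-summability bound for the integrand after taking absolute values inside. A clean way to obtain this is to proceed by truncation: restrict each sum to indices $\alpha_i$ with $|\alpha_i| \leq K$, so that only finitely many terms are involved and Fubini applies trivially; the truncated computation produces the partial sum $\sum_{n=1}^{K} A_{d,n}(s^m)^{n-1}$, which is monotone non-decreasing in $K$ and bounded above by $\Phi_d(s^m)<\infty$ for $s \in [0,1)$ thanks to the uniform bound \eqref{eq:prop:A_dn} of Proposition \ref{addendum:prop:A_dn}. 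Monotone (or dominated) convergence on both sides, together with the almost sure convergence of the truncated eigendecomposition of $L_d(T_i,T_{i+1})$ to $L_d(T_i,T_{i+1})$ itself, then gives the identity. A side check is that this also ensures that $\Cls_{d,m} = \Cs_{d,m}$ is finite and independent of $\lambda$, matching the boundedness statement in the corollary.
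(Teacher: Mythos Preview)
Your main computation—substituting the eigendecomposition \eqref{eq:theorem:eigendecomposition}, factorizing over the independent $T_i$'s, and collapsing via the orthogonality relation \eqref{eq:theorem:eigendecomposition_orthogonality}—is exactly the paper's proof. The only difference is in how the sum–expectation interchange is justified.

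The paper handles this more directly. By Cauchy--Schwarz and \eqref{eq:theorem:eigendecomposition_orthogonality} (the case $\alpha=\alpha'$ gives unit second moment), one has
\[
\dEl_d\bigl[\,|\fl_{d,\alpha}(T)\,\fl_{d,\alpha'}(T)|\,\bigr] \leq \dEl_d\bigl[\fl_{d,\alpha}(T)^2\bigr]^{1/2}\dEl_d\bigl[\fl_{d,\alpha'}(T)^2\bigr]^{1/2}=1
\]
for every pair $\alpha,\alpha'$. Hence the fully expanded absolute-value sum is bounded by $\sum_{\alpha_1,\ldots,\alpha_m}\prod_i s^{|\alpha_i|-1}=\Phi_d(s)^m<\infty$ for $s\in[0,1)$, and Fubini applies at once with no limiting procedure.

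Your truncation route, by contrast, leaves a genuine gap. You correctly obtain $\dEl_d\bigl[\prod_i L_d^{(K)}(T_i,T_{i+1})\bigr]=\sum_{n=1}^{K}A_{d,n}(s^m)^{n-1}$, and the right-hand side converges monotonically to $\Phi_d(s^m)$. But to identify the limit of the left-hand side with $\Cls_{d,m}$ you need $\dEl_d\bigl[\prod_i L_d^{(K)}\bigr]\to\dEl_d\bigl[\prod_i L_d\bigr]$, i.e.\ an exchange of limit and expectation. Monotone convergence does not apply, because the eigenfunctions $\fl_{d,\alpha}$ are signed and the partial sums $L_d^{(K)}$ need not be monotone in $K$; and you have not exhibited an integrable dominating function for $\prod_i L_d^{(K)}(T_i,T_{i+1})$. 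The Cauchy--Schwarz bound above is precisely what furnishes such a dominator (namely $\prod_i\sum_\alpha s^{|\alpha|-1}|\fl_{d,\alpha}(T_i)||\fl_{d,\alpha}(T_{i+1})|$), and once you have it the truncation detour is no longer needed.
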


\begin{proof}[Proof of Corollary \ref{corr:cyclic_moments}]
	By Theorem \ref{addendum:theorem:eigendecomposition} we have 
	$$ L_d(t,t') = \sum_{\alpha \in \cX_d} s^{\card{\alpha} -1} \fl_{d,\alpha}(t) \fl_{d,\alpha}(t') \, ,  $$
	hence, setting $\alpha_{m+1} = \alpha_1$,
	\begin{flalign*}
	\Cls_{d,m} & = \dEl_d\left[L_d(T_1,T_2)\cdots L_d(T_{m-1},T_m) L_d(T_m,T_1)\right] \\
	& = \sum_{\alpha_1, \ldots, \alpha_m \in \cX_d} s^{\sum_{i=1}^m (|\alpha_i| -1)} \dEl_d \left[\prod_{i=1}^{m} \fl_{d,\alpha_i}(T_i) \fl_{d,\alpha_{i+1}}(T_i)\right] \\
	& = \sum_{\alpha_1, \ldots, \alpha_m \in \cX_d} s^{\sum_{i=1}^m (|\alpha_i| -1)} \prod_{i=1}^{m} \dEl_d \left[\fl_{d,\alpha_i}(T) \fl_{d,\alpha_{i+1}}(T)\right] \\
	& = \sum_{\alpha_1, \ldots, \alpha_m \in \cX_d} s^{\sum_{i=1}^m (|\alpha_i| -1)} \one_{\alpha_1 = \, \ldots \, = \alpha_m}\\
	& = \sum_{\alpha \in \cX_d} (s^m)^{|\alpha|-1} \, .
	\end{flalign*}
	All steps in the above computations are legitimate by Fubini's theorem, since $$\dEl_d \left[\left|\fl_{d,\alpha}(T) \fl_{d,\alpha'}(T)\right|\right] \leq \dEl_d \left[(\fl_{d,\alpha}(T))^2\right]^{1/2} \dEl_d \left[ (\fl_{d,\alpha'}(T))^2\right]^{1/2} = 1 \, ,$$ by property \eqref{eq:theorem:eigendecomposition_orthogonality_bis} of Theorem \ref{addendum:theorem:eigendecomposition}.
\end{proof}

We are now ready to give a proof of Theorem \ref{addendum:thm:negative_result}.

\begin{proof}[Proof of Theorem \ref{addendum:thm:negative_result}]
	According to Corollary \ref{corr:cyclic_moments}, one has 
	\begin{equation}\label{eq:proof:thm:negative_result}
	\dEl_d\left[ L_d (T,T')^2\right] = \Cs_{d,2} = \sum_{n \geq 1} A_{d,n} s^{2(n-1)}.
	\end{equation}
	Moreover, since $A_{d,n} \leq A_n$ (by Definition \ref{def:An_Adn}) and $A_n \underset{n \to \infty}{\sim} \frac{C}{n^{3/2}} \left( \frac{1}{\alpha} \right)^n$ by Proposition \ref{addendum:prop:otter}, the assumption $s \leq \sqrt{\alpha}$ ensures that $\dEl_d\left[ L_d (T,T')^2\right] = \sum_{n \geq 1} A_{d,n} s^{2(n-1)} \leq \sum_{n \geq 1} A_{n} s^{2(n-1)} < \infty$, uniformly in $d$.
	
	Then, applying Jensen's inequality yields
	\begin{equation*}
	\KL(\dPls_d \, || \, \dPl_d) = \dEls_d\left[ \log L_d (T,T')\right] \leq \log \dEls_d\left[ L_d (T,T')\right] = \log \dEl_d\left[ L^2_d (T,T')\right] < \infty,
	\end{equation*} uniformly in $d$, and concludes the proof.
\end{proof}

For the positive result, we need to study the weak convergence of the likelihood ratio when $\lambda \to \infty$, which is the scope of the next Section, concluded by the proof of Theorem \ref{addendum:thm:positive_result}.

\section{The high-degree regime: positive result when $s > \sqrt{\alpha}$ in the gaussian approximation}

In view of definition \ref{def:correlated_model}, we recall that a pair of correlated trees $(t,t')$ of depth at most $d+1$ sampled from $\dPls_{d+1}$ are of the form $t = \set{N_\tau}_{\tau \in \cX_{d}}$ and $t' = \set{N'_\tau}_{\tau \in \cX_{d}}$ with 
\begin{equation}\label{eq:NN'M}
N_{\tau} := M_\tau + \sum_{\tau' \in \cX_{d}} N_{\tau, \tau'} \quad \mbox{and} \quad
N'_{\tau} := M'_{\tau} + \sum_{\tau \in \cX_{d}} N_{\tau, \tau'} \,.
\end{equation} with
\begin{equation*}
M_{\tau}, M'_{\tau} \overset{\mathrm{i.i.d.}}{\sim}\Poi(\lambda (1-s) \GWl_{d}(\tau)) \quad \mbox{and} \quad N_{\tau, \tau'} \sim \Poi(\lambda s \dPls_{d}(\tau,\tau')) \, .
\end{equation*}

\subsection{Gaussian approximation in the high-degree regime}\label{subsection:gaussian_approx}

Let us define $y = (y_\alpha)_{\alpha \in \cX_d}$  and $y' = (y'_{\alpha})_{\alpha \in \cX_d}$ as follows:
\begin{flalign}
y_\alpha := \frac{1}{\sqrt{\lambda}} \sum_{\tau \in \cX_{d}}\fl_{d,\alpha}(\tau) (N_\tau - \lambda \GWl_{d}(\tau)) \label{eq:y_alpha} \\
y'_\alpha := \frac{1}{\sqrt{\lambda}} \sum_{\tau \in \cX_{d}}\fl_{d,\alpha}(\tau) (N'_\tau - \lambda \GWl_{d}(\tau)) \label{eq:y'_alpha} 
\end{flalign}
where the $\fl_{d,\alpha}$ are defined in Theorem \ref{addendum:theorem:eigendecomposition}.
In other words, $y$ (resp. $y'$) is a centered version of $N$ (resp. $N'$), projected onto the basis of eigenvectors.

Let $(z,z') = ((z_\alpha)_{\alpha \in \cX_d},(z'_{\alpha'})_{\alpha' \in \cX_d}))$ be an (infinite-dimensional) centered Gaussian vector defined by its covariance matrix: 
\begin{equation}\label{eq:covariance_z}
\forall \alpha, \alpha' \in \cX_d, \quad \dE[z_{\alpha}z_{\alpha'}] = \dE[z'_{\alpha}z'_{\alpha'}] = \one_{\alpha = \alpha'}, \quad \dE[z_{\alpha}z'_{\alpha'}] = s^{|\alpha|}\one_{\alpha = \alpha'}.
\end{equation}

Let us denote by $\dpls_{d+1}$ the joint distribution of $(y,y')$, and $\gwl_{d+1}$ the marginal distribution of $y$ (or $y'$). Since the transformations $N \to y$ in \eqref{eq:y_alpha} and $N' \to y'$ in \eqref{eq:y'_alpha} are bijective in view of the orthogonality property  \eqref{eq:theorem:eigendecomposition_orthogonality_bis} in Theorem \ref{addendum:theorem:eigendecomposition}, one has

\begin{equation}\label{eq:equality_of_KL_centered}
\KL(\dPls_{d+1} \| \GWl_{d+1} \otimes \GWl_{d+1}) = \KL(\dpls_{d+1} \| \gwl_{d+1} \otimes \gwl_{d+1})
\end{equation}

\begin{lemma}\label{lemma:cv_weak_dual}
	When $\lambda \to \infty$, we have the following convergence in distribution:
	\begin{equation}\label{eq:lemma:cv_weak_dual}
	(y,y') \overset{\mathrm{(d)}}{\longrightarrow} (z,z').
	\end{equation}
\end{lemma}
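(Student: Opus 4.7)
The plan is to prove the stated weak convergence on the countable product space $\dR^{\cX_d}\times \dR^{\cX_d}$ endowed with the product topology, which is equivalent to convergence of all finite-dimensional marginals. Fix therefore an arbitrary finite subset $F\subset \cX_d$ and test parameters $(u_\alpha,u'_\alpha)_{\alpha\in F}$, extended by zero outside $F$. By L\'evy's continuity theorem it suffices to prove pointwise convergence of
\[
\varphi_\lambda(u,u') := \dE\left[\exp\left(i\sum_\alpha u_\alpha y_\alpha + i\sum_\alpha u'_\alpha y'_\alpha\right)\right]
\]
towards the Gaussian target $\exp\bigl(-\tfrac{1}{2}\sum_\alpha u_\alpha^2-\tfrac{1}{2}\sum_\alpha (u'_\alpha)^2-\sum_\alpha s^{|\alpha|}u_\alpha u'_\alpha\bigr)$ dictated by \eqref{eq:covariance_z}. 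Note that for $\alpha\neq \bigcdot$ the defining sum \eqref{eq:y_alpha} involves only finitely many non-zero terms (since $t\in\cX_d$ is a.s.\ finite), and for $\alpha=\bigcdot$ one has $y_{\bigcdot}=\lambda^{-1/2}(\mathrm{Poi}(\lambda)-\lambda)$ by the first-moment identity, so every $y_\alpha$ is a bona fide random variable.

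Exploit the Poisson decomposition \eqref{eq:NN'M}: the families $\{M_\tau\}$, $\{M'_\tau\}$, $\{N_{\tau,\tau'}\}$ are mutually independent with parameters $\lambda(1-s)\phi(\tau)$, $\lambda(1-s)\phi(\tau)$ and $\lambda s\,\psi(\tau,\tau')$ respectively (writing $\phi:=\GWl_d$ and $\psi:=\dPls_d$), hence $\varphi_\lambda$ factorizes over independent Poisson factors. Setting
\[
k_\tau := \frac{1}{\sqrt{\lambda}}\sum_{\alpha\in F}u_\alpha \fl_{d,\alpha}(\tau), \qquad k'_\tau := \frac{1}{\sqrt{\lambda}}\sum_{\alpha\in F}u'_\alpha \fl_{d,\alpha}(\tau),
\]
and using the marginal identities $\sum_{\tau'}\psi(\tau,\tau')=\phi(\tau)=\sum_\tau\psi(\tau,\tau')$ to absorb the linear centering coming from $y_\alpha,y'_\alpha$, one obtains $\log\varphi_\lambda = \mathcal{E}_\lambda$ where each of the three terms of $\mathcal{E}_\lambda$ is of the form $\lambda\mu\,(e^{it}-1-it)$ for a Poisson parameter $\mu\in\{(1-s)\phi(\tau),\,s\,\psi(\tau,\tau')\}$ and argument $t\in\{k_\tau,\,k'_\tau,\,k_\tau+k'_{\tau'}\}$.

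Taylor-expand each bracket as $e^{it}-1-it = -t^2/2 + R(t)$ with $|R(t)|\le |t|^3/6$. The diagonal quadratic pieces collapse, via the orthogonality relation \eqref{eq:theorem:eigendecomposition_orthogonality}, to $-\tfrac{1}{2}\sum_\alpha u_\alpha^2$ and $-\tfrac{1}{2}\sum_\alpha (u'_\alpha)^2$. The cross quadratic piece equals $-s\sum_{\alpha,\alpha'}u_\alpha u'_{\alpha'}\sum_{\tau,\tau'}\psi(\tau,\tau')\fl_{d,\alpha}(\tau)\fl_{d,\alpha'}(\tau')$; substituting $\psi = L_d\cdot(\phi\otimes\phi)$, applying the eigendecomposition \eqref{eq:theorem:eigendecomposition} of $L_d$ and re-using orthogonality reduces it to $-\sum_\alpha s^{|\alpha|}u_\alpha u'_\alpha$. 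Summed together, these quadratic contributions give exactly the target log-characteristic function.

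The main obstacle will be controlling the cubic remainder generated by $R$, whose absolute value after summation is bounded by
\[
\frac{1}{6\sqrt{\lambda}}\Bigl(\dEl_d\bigl[|P(T)|^3\bigr] + \dEl_d\bigl[|P'(T)|^3\bigr] + \dEls_d\bigl[|P(T)+P'(T')|^3\bigr]\Bigr),
\]
with $P:=\sum_{\alpha\in F}u_\alpha \fl_{d,\alpha}$ and analogously $P'$. By Cauchy--Schwarz each absolute third moment is dominated by the corresponding fourth moment, which, expanded along the finite set $F$, is a finite linear combination of quantities of the form $\dEl_d[\fl_{d,\beta^{(1)}}(T)\cdots\fl_{d,\beta^{(4)}}(T)]$ (or their $\dPls_d$-analogues, further reduced to $\GWl_d$-averages of fivefold products of $\fl_{d,\cdot}$ via the eigendecomposition of $L_d$). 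Property \eqref{eq:theorem:eigendecomposition_mixed_products} of Theorem~\ref{addendum:theorem:eigendecomposition} guarantees that all such moments admit finite limits as $\lambda\to\infty$, hence are uniformly bounded in $\lambda$. Consequently the cubic error is $O(\lambda^{-1/2})$, the limit $\varphi_\lambda\to$ characteristic function of $(z,z')$ follows, and L\'evy's theorem concludes.
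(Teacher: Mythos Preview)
Your proof is correct and follows essentially the same route as the paper: reduce to finite-dimensional convergence, compute the log-characteristic function via the Poisson factorization \eqref{eq:NN'M}, identify the quadratic part with the target Gaussian covariance using the orthogonality \eqref{eq:theorem:eigendecomposition_orthogonality} and the eigendecomposition \eqref{eq:theorem:eigendecomposition}, and kill the higher-order terms via property \eqref{eq:theorem:eigendecomposition_mixed_products}. The only cosmetic difference is that the paper argues cumulant-by-cumulant (showing each cumulant of order $\geq 3$ carries an explicit $\lambda^{1-p/2}$ prefactor multiplied by a bounded mixed moment), whereas you bound the aggregate Taylor remainder $|R(t)|\le |t|^3/6$ in one stroke via third absolute moments; both are equivalent and rest on the same ingredient. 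As a minor simplification of your last paragraph, note that the $\dPls_d$-fourth moment $\dEls_d[(P(T)+P'(T'))^4]$ can be controlled directly by $(a+b)^4\le 8(a^4+b^4)$ and the fact that the marginals of $\dPls_d$ are $\GWl_d$, avoiding the detour through fivefold products and the eigendecomposition of $L_d$.
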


\begin{proof}[Proof of Lemma \ref{lemma:cv_weak_dual}]
	With the canonical product sigma-field, convergence in distribution of $(y,y')$ amounts to convergence of all finite-dimensional distributions. Let us denote by $(k,k')$ a pair of real vectors in $\dR^{\cX_d \times \cX_d}$ such that only a finite number of entries are non-zero. We shall write $k \cdot y := \sum_{\alpha \in \cX_d} k_{\alpha} y_{\alpha}$. We will also define the following characteristic functions:
	\begin{equation}
	\hatdpls(k,k') := \dE\left[ e^{ik \cdot y + i k' \cdot y'} \right] \quad \mbox{ and } \quad
	\hatrs(k,k') := \dE\left[ e^{ik \cdot z + i k' \cdot z'} \right] \, .
	\end{equation}
	Proving Lemma \ref{lemma:cv_weak_dual} thus amounts to showing the simple convergence $\hatdpls(k,k') \to \hatrs(k,k')$ when $\lambda \to \infty$. Since the (gaussian) limit distribution is entirely determined by its moments, it suffices to show the convergence of the cumulants \cite{Janson00}. The covariance structure of $(z,z')$ given in \eqref{eq:covariance_z} immediately yields
	\begin{equation}\label{eq:cumulant_zz'}
	\hatrs(k,k') = \exp \left[-\frac{1}{2} \sum_{\alpha \in \cX_d} ((k_\alpha)^2 + (k'_\alpha)^2 + 2 s^{|\alpha|}k_\alpha k'_\alpha) \right] \, .
	\end{equation}
	
	Then, in view of \eqref{eq:NN'M}, \eqref{eq:y_alpha} and \eqref{eq:y'_alpha}, writing $\fl_{d}(\tau) := (\fl_{d,\alpha}(\tau))_{\alpha \in \cX_d}$, one has
	\begin{multline*}
	e^{ik \cdot y + i k' \cdot y'} = \exp\left[ - \sqrt{\lambda} \sum_{\tau \in \cX_d} \GWl_d(\tau) (i k \cdot \fl_{d}(\tau) + ik'  \cdot \fl_{d}(\tau)) \right] \\
	\times \prod_{\tau, \tau' \in \cX_d} \left( \exp \left[\frac{1}{\sqrt{\lambda}}  (ik \cdot \fl_{d}(\tau) + i k' \cdot \fl_{d}(\tau')) \right]\right)^{N_{\tau,\tau'}} \times \prod_{\tau \in \cX_d} \left( \exp \left[\frac{1}{\sqrt{\lambda}}  ik \cdot \fl_{d}(\tau) \right]\right)^{M_{\tau}} \\
	\times \prod_{\tau \in \cX_d} \left( \exp \left[\frac{1}{\sqrt{\lambda}}  ik' \cdot \fl_{d}(\tau) \right]\right)^{M'_{\tau}}.
	\end{multline*} Variables $N_{\tau,\tau'}, M_\tau, M'_\tau$ being independent Poisson variables, taking the expectation gives
	\begin{multline*}
	\hatdpls(k,k')  = \exp\left[ - \sqrt{\lambda} \sum_{\tau \in \cX_d} \GWl_d(\tau) (i k \cdot \fl_{d}(\tau) + ik'  \cdot \fl_{d}(\tau)) \right] \\
	\times \exp\left[ \lambda(1-s) \sum_{\tau \in \cX_d} \GWl_d(\tau) \left( e^{\frac{1}{\sqrt{\lambda}} i k \cdot \fl_{d}(\tau)} + e^{\frac{1}{\sqrt{\lambda}} i k' \cdot \fl_{d}(\tau)} - 2 \right) \right]\\
	\times \exp\left[ \lambda s \sum_{\tau, \tau' \in \cX_d} \dPls_{d}(\tau,\tau') \left( e^{\frac{1}{\sqrt{\lambda}} (i k \cdot \fl_{d}(\tau)+ i k' \cdot \fl_{d}(\tau'))} - 1 \right) \right] \, .
	\end{multline*}
	The cumulants are obtained by expanding the logarithm of the last expression in power series in $k,k'$. Using that $\sum_{\tau' \in \cX_d} \dPls_{d}(\tau,\tau') = \GWl_d(\tau)$, the first-order (linear) terms compensate to $0$, which is coherent with the fact that $\dE[y_\alpha]=\dE[y'_\alpha]=0$. The second-order terms in $\log \hatdpls(k,k')$ evaluate to 
	\begin{flalign*}
	& - \lambda (1-s) \sum_{\tau \in \cX_d} \GWl(\tau) \frac{1}{2 \lambda} \sum_{\alpha,\alpha' \in \cX_d} \fl_{d,\alpha}(\tau) \fl_{d,\alpha'}(\tau) \left( k_\alpha k_{\alpha'} + k'_\alpha k'_{\alpha'} \right)\\
	& - \lambda s \sum_{\tau, \tau' \in \cX_d} \dPls_{d}(\tau,\tau') \\
	& \quad \quad \quad \quad \times \frac{1}{2 \lambda} \sum_{\alpha,\alpha' \in \cX_d} \left( \fl_{d,\alpha}(\tau) \fl_{d,\alpha'}(\tau) k_\alpha k_{\alpha'} + \fl_{d,\alpha}(\tau') \fl_{d,\alpha'}(\tau') k'_\alpha k'_{\alpha'} + 2 \fl_{d,\alpha}(\tau) \fl_{d,\alpha'}(\tau') k_\alpha k'_{\alpha'} \right)
	\, .
	\end{flalign*} Using the orthogonality property of the eigenvectors in Theorem \ref{addendum:theorem:eigendecomposition}, namely that  $$\forall \alpha, \alpha' \in \cX_d, \quad \sum_{\tau \in \cX_d} \GWl_d(\tau) \fl_{d, \alpha}(\tau) \fl_{d, \alpha'}(\tau) = \one_{\alpha = \alpha'} \, ,$$ the previous equation simplifies to
	\begin{equation*}
	- \frac{1}{2} \sum_{\alpha \in \cX_d} \left( (k_\alpha)^2 + (k'_\alpha)^2 \right) 
	- s \sum_{\tau, \tau' \in \cX_d} \dPls_{d}(\tau,\tau') \sum_{\alpha,\alpha' \in \cX_d} \fl_{d,\alpha}(\tau) \fl_{d,\alpha'}(\tau') k_\alpha k'_{\alpha'} 
	\, ,
	\end{equation*} which in turn writes, using $\dPls_{d}(\tau,\tau') = \GWmu_{d}(\tau) \GWmu_{d}(\tau') \sum_{\alpha \in \cX_d} s^{|\alpha| -1} \fl_{d,\alpha}(\tau) \fl_{d,\alpha}(\tau')$
	\begin{flalign*}
	& - \frac{1}{2} \sum_{\alpha \in \cX_d} \left( (k_\alpha)^2 + (k'_\alpha)^2 \right) - s  \sum_{\alpha,\alpha',\alpha'' \in \cX_d} s^{|\alpha''| -1} k_\alpha k'_{\alpha'} \\
	& \quad \quad \quad \quad \quad \quad  \times \left( \sum_{\tau \in \cX_d} \GWmu_{d}(\tau) \fl_{d,\alpha}(\tau) \fl_{d,\alpha''}(\tau)  \right) \left( \sum_{\tau' \in \cX_d} \GWmu_{d}(\tau') \fl_{d,\alpha'}(\tau') \fl_{d,\alpha''}(\tau')  \right)\\
	& = - \frac{1}{2} \sum_{\alpha \in \cX_d}  \left( (k_\alpha)^2 + (k'_\alpha)^2 + 2 s^{|\alpha|} k_\alpha k'_{\alpha'} \right) 
	\, ,
	\end{flalign*} which is exactly the second cumulant of $(z,z')$ in \eqref{eq:cumulant_zz'}. The remaining step is to show that the higher order cumulants tend to $0$ when $\lambda$ gets large. Note that the cumulants of order $\geq 3$ have a factor $1/\sqrt{\lambda}$, but the implicit dependence of the $\fl_{d,\alpha}$ needs to be controlled. The previous computations show that the cumulants depend on terms of the form 
	$$ \sum_{t \in \cX_d} \GWl_d(t) \fl_{d,\alpha_1}(t) \cdots \fl_{d,\alpha_p}(t), $$ which are proved to remain finite when $\lambda \to \infty$ by property \eqref{eq:theorem:eigendecomposition_mixed_products} of Theorem \ref{addendum:theorem:eigendecomposition}. This shows that all the cumulants of order $\geq 3$ tend to $0$ and hence establishes the desired convergence in distribution.
\end{proof}

\subsection{Weak limit of likelihood-ratio, limit of the Kullback-Leibler divergence}
In view of the previous weak convergence established in Lemma \ref{lemma:cv_weak_dual}, we will now prove the following result, which compares the $\KL-$divergence with finite $\lambda$ to the $\KL-$divergence between the limiting Gaussian distributions of Lemma \ref{lemma:cv_weak_dual}. 
\begin{proposition}\label{addendum:prop:gaussian_KL_limit}
	Denoting
	\begin{equation}\label{eq:theorem:def_KL}
	\KLls_d := \KL(\dPls_d || \dPl_d) \, ,
	\end{equation} one has the following:
	\begin{equation}\label{eq:theorem:gaussian_KL_limit} 
	\forall d \geq 1, \; \liminf_{\lambda \to\infty} \KLls_d \geq \KLs_d := - \frac{1}{2}\sum_{\alpha \in \cX_{d-1}} \log(1-s^{2 |\alpha|}) = \frac{1}{2} \log\Cs_{d,2} \, .
	\end{equation}  We recall that
	\begin{equation}\label{eq:rappel:KLs}
	\Cs_{d,2} = \dEls_d[L_d] = \sum_{n \geq 1} A_{d,n}(s^2)^{n-1}  \, .
	\end{equation}
\end{proposition}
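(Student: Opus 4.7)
The plan is to pass the weak convergence of Lemma \ref{lemma:cv_weak_dual} to a liminf bound on the Kullback--Leibler divergence via its joint lower semi-continuity under weak convergence. Since the full law $\dpls_d$ lives on the infinite-dimensional space $\dR^{\cX_{d-1}} \times \dR^{\cX_{d-1}}$, I will first truncate to an arbitrary finite coordinate set $S \subset \cX_{d-1}$ and recover the desired bound by taking the supremum over $S$. Starting from the identity $\KLls_d = \KL(\dpls_d \Vert \gwl_d \otimes \gwl_d)$, which follows from the bijective change of variables $N \leftrightarrow y$ as in \eqref{eq:equality_of_KL_centered} (valid at every depth, not only $d+1$), I would fix a finite $S \subset \cX_{d-1}$, let $\pi_S$ denote the projection $(y,y') \mapsto (y_\alpha, y'_\alpha)_{\alpha \in S}$, and apply the data-processing inequality to obtain
\[
\KLls_d \;\geq\; \KL\bigl((\pi_S)_\sharp \dpls_d \,\big\Vert\, (\pi_S)_\sharp \gwl_d \otimes (\pi_S)_\sharp \gwl_d\bigr).
\]

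By Lemma \ref{lemma:cv_weak_dual}, the pushforward on the left converges weakly as $\lambda \to \infty$ to the Gaussian law $\mathcal{N}_S$ of $(z_\alpha, z'_\alpha)_{\alpha \in S}$, and the right-hand target converges to the product of the marginals of $\mathcal{N}_S$, a non-degenerate Gaussian on $\dR^S \times \dR^S$. In this finite-dimensional setting, the Donsker--Varadhan variational representation $\KL(\mu \Vert \nu) = \sup_{f \in C_b} \{\int f\,d\mu - \log \int e^f\,d\nu\}$ exhibits KL as a supremum of weakly continuous functionals, hence it is jointly lower semi-continuous, and the previous display yields $\liminf_{\lambda \to \infty} \KLls_d \geq \KL(\mathcal{N}_S \Vert \mathcal{N}_S^{(1)} \otimes \mathcal{N}_S^{(2)})$ where $\mathcal{N}_S^{(i)}$ denote the marginals of $\mathcal{N}_S$. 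The covariance structure \eqref{eq:covariance_z} is block-diagonal, so the pairs $(z_\alpha, z'_\alpha)$ are mutually independent across $\alpha$, each a centered bivariate normal with unit variances and correlation $s^{|\alpha|}$; the classical two-dimensional Gaussian computation then gives $\KL(\mathcal{N}_S \Vert \mathcal{N}_S^{(1)} \otimes \mathcal{N}_S^{(2)}) = -\tfrac{1}{2}\sum_{\alpha \in S} \log(1-s^{2|\alpha|})$. Taking the supremum over finite $S \subset \cX_{d-1}$ yields $\liminf_{\lambda \to \infty}\KLls_d \geq \KLs_d$. To obtain the claimed identity $\KLs_d = \tfrac{1}{2}\log \Cs_{d,2}$, I would expand $-\log(1-s^{2|\alpha|}) = \sum_{j \geq 1} s^{2j|\alpha|}/j$, regroup by tree size $n = |\alpha|$ to obtain $\sum_{j \geq 1}(s^{2j}/j)\Phi_{d-1}(s^{2j})$, and use the recursion $\Phi_d(x) = \exp(\sum_{j \geq 1}(x^j/j)\Phi_{d-1}(x^j))$ established inside the proof of Proposition \ref{addendum:prop:A_dn} to recognise this as $\log \Phi_d(s^2) = \log \Cs_{d,2}$.

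The main obstacle is the infinite-dimensional nature of the underlying space, on which joint lower semi-continuity of KL is more delicate; the finite truncation $S$ together with data processing is precisely what reduces the problem to a tractable finite-dimensional liminf. A minor additional point is that the weak convergence of the centred vector under the null model $\gwl_d$ to a standard Gaussian, which underlies the convergence of the target in the display above, is already contained in Lemma \ref{lemma:cv_weak_dual} by inspecting its first-marginal statement. Finally, the exchange of $\liminf_\lambda$ and $\sup_S$ poses no difficulty: for each $S$ the inequality $\liminf_\lambda \KLls_d \geq \KL(\mathcal{N}_S \Vert \mathcal{N}_S^{(1)}\otimes\mathcal{N}_S^{(2)})$ holds, and since the individual terms are positive the $\sup_S$ on the right is an increasing limit recovering the full sum over $\cX_{d-1}$.
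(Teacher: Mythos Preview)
Your proposal is correct and follows the same core approach as the paper: invariance of $\KL$ under the bijection $N\leftrightarrow y$, weak convergence from Lemma~\ref{lemma:cv_weak_dual}, and lower semi-continuity of the KL divergence. The paper invokes lower semi-continuity directly on the full (infinite-dimensional) space, citing a general result, whereas you add the finite truncation $S\subset\cX_{d-1}$ together with data processing before passing to the limit, which makes the finite-dimensional application of lower semi-continuity explicit and is arguably cleaner. For the identity $\KLs_d=\tfrac12\log\Cs_{d,2}$, the paper uses the direct product expansion $\sum_{\beta\in\cX_d}s^{2(|\beta|-1)}=\prod_{\alpha\in\cX_{d-1}}(1-s^{2|\alpha|})^{-1}$ coming from $|\beta|=1+\sum_\alpha\beta_\alpha|\alpha|$, while your route via the $\Phi_d$ recursion is an equally valid alternative.
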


\begin{proof}[Proof of Theorem \ref{addendum:prop:gaussian_KL_limit}]
	Fix $d \geq 1$. In \eqref{eq:equality_of_KL_centered}, we established that $\KLls_d$ is also the $\KL$-divergence $\KL(\dpls_{d} \| \gwl_{d} \otimes \gwl_{d})$ where $\dpls_{d}$ is the distribution of $(y,y')$ defined in Section \ref{subsection:gaussian_approx}. Moreover, Lemma \ref{lemma:cv_weak_dual} establishes that $(y,y')$ converges in distribution a centered gaussian vector $(z,z')$ defined by its covariance matrix:
	\begin{equation}\label{eq:covariance_z_bis}
	\forall \alpha, \alpha' \in \cX_{d-1}, \quad \dE[z_{\alpha}z_{\alpha'}] = \dE[z'_{\alpha}z'_{\alpha'}] = \one_{\alpha = \alpha'}, \quad \dE[z_{\alpha}z'_{\alpha'}] = s^{|\alpha|}\one_{\alpha = \alpha'}.
	\end{equation}
	If we denote by $p^{(s)}_1$ the joint distribution of the gaussian vector $(z,z')$ and $p^{(s)}_0$ the product of the marginals, the KL-divergence $\KL(p^{(s)}_1 || p^{(s)}_0)$ is easily given by $-\frac{1}{2}\log\det\Sigma$, where $\Sigma$ is the covariance matrix of $(z,z')$, which is similar to a matrix with diagonal blocks of the form $\begin{pmatrix} 1 & s^{|\alpha|} \\ s^{|\alpha|} & 1\end{pmatrix}$ for all $\alpha \in \cX_{d-1}$, which gives
	\begin{flalign*}
	\KL(p^{(s)}_1 || p^{(s)}_0) = - \frac{1}{2} \log \prod_{\alpha \in \cX_{d-1}}  \frac{1}{1-s^{2|\alpha|}} \, .
	\end{flalign*} The last term is indeed $\KLs_d$ as defined in \eqref{eq:theorem:gaussian_KL_limit}, since
	\begin{flalign*}
	\dEls_d[L_d]  = \sum_{\beta \in\cX_{d}} s^{2 |\beta|-1}  = \prod_{\alpha \in \cX_{d-1}} \sum_{\beta_\alpha \geq 0} s^{2 {\beta_\alpha} |\alpha|} = \prod_{\alpha \in \cX_{d-1}}  \frac{1}{1-s^{2|\alpha|}} \, .
	\end{flalign*}
	
	The roof is concluded by appealing to the lower semi-continuity property of the $\KL$-divergence (see e.g. \cite{PolyanskiyLecturenotes}, Theorem 3.6), namely that 
	\begin{equation*}
	\liminf_{\lambda \to\infty} \KLls_d \geq \KLs_d \, .
	\end{equation*}
\end{proof}

\subsection{Propagating bounds on the $\KL-$divergence, proof of Theorem \ref{addendum:thm:positive_result}}
The goal of this section is to use the result of Proposition \ref{addendum:prop:gaussian_KL_limit} and the fact that in view of \eqref{eq:rappel:KLs} for $s>\sqrt{\alpha}$ (where $\alpha$ is Otter's constant), $\KLs_d \to +\infty$ with $d$, in order to obtain Theorem \ref{addendum:thm:positive_result}, that is that for fixed $s>\sqrt{\alpha}$, there exists $\lambda=\lambda(s)$ such that:
\begin{equation*}
\lambda\geq \lambda(s)\Rightarrow \lim_{d\to\infty} \KLls_d=+\infty \, .
\end{equation*}

The following Lemma shows that if $s > \sqrt{\alpha}$, for any small (resp. any large but bounded) probability that we fix, there exists a depth $d_0$ and an event $S$ that has this small (resp. large) probability under $\dPl_{d_0}$ (resp. $\dPls_{d_0}$). The proof is deferred to Appendix \ref{addendum:appendix:proof:lemma:initialize_lower_bound_on_KL}.

\begin{lemma}\label{lemma:initialize_lower_bound_on_KL}
	Assume that $s> \sqrt{\alpha}$. Then for any $c\in (0,1)$  such that $\eps \in (0,1)$
	$$
	c< \frac{1}{15} \, ,
	$$
	and  any $\eps\in (0,1)$, there exists $\lambda_1=\lambda_1(s,c,\eps)>0$ and $d_0 = d_0(s,c,\eps)\in\dN$ such that, for all $\lambda\geq \lambda_1$, there exists an event $S =S(s,c,\eps) \subset \cX_{d_0}^2$ for which the following inequalities hold:
	\begin{equation*}
	\dPls_{d_0}( S)\geq c \quad \mbox{and} \quad \dPl_{d_0}(S)\leq \eps \, .
	\end{equation*}
\end{lemma}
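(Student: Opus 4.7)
My plan is to construct the event $S$ by first carrying out the construction in the Gaussian limit provided by Lemma \ref{lemma:cv_weak_dual}, and then pulling it back to tree space via weak convergence, together with a finite-dimensional truncation.

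In the Gaussian limit, the densities $p_1^{(s)}, p_0^{(s)}$ on $\dR^{\cX_{d-1}}\times\dR^{\cX_{d-1}}$ are products of bivariate Gaussian blocks indexed by $\alpha \in \cX_{d-1}$ with correlation $\rho_\alpha=s^{|\alpha|}$, so the log-likelihood ratio decomposes as
\[
\log \tilde L(z,z')=\KLs_d-\tfrac12\sum_{\alpha \in \cX_{d-1}}W_\alpha,\qquad W_\alpha:=\frac{\rho_\alpha^2(z_\alpha^2+(z'_\alpha)^2)-2\rho_\alpha z_\alpha z'_\alpha}{1-\rho_\alpha^2}.
\]
Applying Isserlis' formula block by block shows that under $p_1^{(s)}$ the $W_\alpha$ are independent and centered with $\mathrm{Var}\,W_\alpha=4\rho_\alpha^2$; hence $\mathbb{E}_{p_1^{(s)}}[\log \tilde L]=\KLs_d$ and $\mathrm{Var}_{p_1^{(s)}}[\log \tilde L]=\sum_\alpha\rho_\alpha^2=s^2\,\Cs_{d-1,2}$. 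Using $-\log(1-x)\geq x$ I obtain $\KLs_d\geq \tfrac{s^2}{2}\Cs_{d-1,2}$, and since $s>\sqrt\alpha$ forces $\Cs_{d-1,2}\to\infty$ by Proposition \ref{addendum:prop:otter}, the signal-to-noise ratio $\KLs_d/\sqrt{s^2\Cs_{d-1,2}}$ diverges. Chebyshev on the $p_1^{(s)}$ side and Markov applied to $\mathbb{E}_{p_0^{(s)}}[\tilde L]=1$ on the other then yield, for $\beta:=\KLs_d/2$,
\[
p_1^{(s)}\bigl(\log \tilde L>\beta\bigr)\geq 1-\frac{16}{s^2\Cs_{d-1,2}}\xrightarrow[d\to\infty]{}1,\qquad p_0^{(s)}\bigl(\log \tilde L>\beta\bigr)\leq e^{-\KLs_d/2}\xrightarrow[d\to\infty]{}0.
\]
I will therefore choose $d_0=d_0(s,c,\eps)$ large enough that these probabilities exceed $c+\eta$ and fall below $\eps-\eta$ respectively for some slack $\eta>0$; the numerical threshold $c<1/15$ reflects the explicit constants obtained when combining these estimates with the truncation loss in the step below, and is not intrinsic to the method.

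To bring the construction back to tree space via Lemma \ref{lemma:cv_weak_dual}, which only guarantees weak convergence of finite-dimensional marginals, I truncate: for $K\in\dN$ let $\log \tilde L_K$ denote the partial sum restricted to $\alpha$ with $|\alpha|\leq K$, which depends on finitely many coordinates and is continuous in them. Since $\sum_\alpha\rho_\alpha^2=s^2\Cs_{d_0-1,2}$ is finite for each fixed $d_0$, the tail $\log \tilde L-\log \tilde L_K$ vanishes in $L^2$ under both $p_0^{(s)}$ and $p_1^{(s)}$ as $K\to\infty$, and I pick $K$ large enough that $\tilde S_K:=\{\log \tilde L_K>\beta\}$ still satisfies $p_1^{(s)}(\tilde S_K)\geq c$ and $p_0^{(s)}(\tilde S_K)\leq \eps$. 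Its boundary $\{\log \tilde L_K=\beta\}$ is a smooth hypersurface of Gaussian measure zero, hence $\tilde S_K$ is a continuity set. Setting $S:=\{(t,t')\in\cX_{d_0}^2:(y(t),y'(t'))\in \tilde S_K\}$ where $y,y'$ are defined by \eqref{eq:y_alpha}-\eqref{eq:y'_alpha} (these maps depend on $\lambda$ through $\fl_{d_0,\alpha}$, which is harmless), Lemma \ref{lemma:cv_weak_dual} applied at correlation parameter $s$, respectively at $s=0$ (noting $\dPl_{d_0}=\dPls_{d_0}|_{s=0}$, whose Gaussian limit has independent blocks), yields $\dPls_{d_0}(S)\to p_1^{(s)}(\tilde S_K)\geq c$ and $\dPl_{d_0}(S)\to p_0^{(s)}(\tilde S_K)\leq \eps$ as $\lambda\to\infty$. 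Taking $\lambda_1=\lambda_1(s,c,\eps)$ large enough that both probabilities are within $\eta$ of their Gaussian limits concludes the proof.

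The main obstacle is orchestrating the three nested limits $d_0\to\infty$, $K\to\infty$, $\lambda\to\infty$ so that the probability bounds survive each transition: the $L^2$ truncation loss must be absorbable into the slack $\eta$ fixed after choosing $d_0$ but before choosing $\lambda$, and the continuity-set property of $\tilde S_K$ must hold at the specific threshold $\beta$. The explicit numerical constant $1/15$ very likely emerges from this bookkeeping, balancing the Chebyshev constants against the Markov estimate converting an $L^2$ tail bound into a probability bound for the truncation error.
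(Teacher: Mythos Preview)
Your approach is correct and genuinely different from the paper's. The paper never descends to the Gaussian limit to build $S$; instead it stays in tree space throughout, using the actual likelihood ratio $L_{d_0}$. It first invokes Proposition~\ref{addendum:prop:gaussian_KL_limit} (lower semicontinuity of $\KL$) to guarantee $\KLls_{d_0}\geq \tfrac12\KLs_{d_0}$ for $\lambda$ large, and then uses a tailored integral inequality: writing $\KLls_{d_0}\leq \int_1^\infty u^{-1}\dPls_{d_0}(L_{d_0}\geq u)\,du$ and $\Cs_{d_0,2}=\dEls_{d_0}[L_{d_0}]\geq \int_1^\infty \dPls_{d_0}(L_{d_0}\geq u)\,du$, splitting at levels $A=(\Cs_{d_0,2})^{1/16}$ and $B=16\,\Cs_{d_0,2}/\log\Cs_{d_0,2}$ yields a lower bound on $\dPls_{d_0}(L_{d_0}\geq A)$ that tends to $2/15$ as $\Cs_{d_0,2}\to\infty$. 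The event is simply $S=\{L_{d_0}\geq A\}$, with $\dPl_{d_0}(S)\leq A^{-1}$ by Markov. So the constant $1/15$ does not arise from any truncation bookkeeping of the kind you conjecture; it is an artifact of this specific $A,B$ splitting in the paper's integral argument. Your route, by contrast, gives $p_1^{(s)}(\log\tilde L>\beta)\to 1$, so it actually proves the lemma for \emph{any} $c<1$, not only $c<1/15$ --- a strictly stronger conclusion. The price you pay is the extra finite-dimensional truncation and the continuity-set check to transfer the Gaussian estimate back through Lemma~\ref{lemma:cv_weak_dual}, whereas the paper's argument is a single estimate directly on $L_{d_0}$.
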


Now that we know that this event $S$ exists at a certain initial depth $d_0$, we want to propagate the bounds for arbitrary depth $d \geq d_0$. This is the object of the following Proposition, proved in Appendix \ref{addendum:appendix:proof:prop:bounding_LR_by_induction}.

\begin{proposition}\label{prop:bounding_LR_by_induction}
	For any fixed  $c\in(0,1)$ there exist constants $\eps=\eps(s,c) \in (0,1)$ and $\lambda_0 = \lambda_0(s,c)>0$ such that the following holds. For any $\lambda\geq \lambda_0$, any $d\in\dN$, if there exists an event $S\subset \cX_d^2$ such that
	\begin{equation*}
	\dPl_d(S)\leq \eps \quad \mbox{and} \quad \dPls_d(S)\geq c \, ,
	\end{equation*} then there exists an event $S'\subset \cX_{d+1}^2$ such that 
	\begin{equation*}
	\dPl_{d+1}(S')\leq \frac{1}{2}\dPl_{d}(S) \leq \frac{\eps}{2}  \quad \mbox{and} \quad \dPls_{d+1}(S')\geq c \, .
	\end{equation*}
	
	In fact, using the usual notations $t=\{N_\tau\}_{\tau\in\cX_d}$, $t'=\{N'_\tau\}_{\tau\in\cX_d}$ for elements of $\cX_{d+1}$, and denoting, for all $\tau\in \cX_d$
	\begin{equation*}
	\widetilde{N}_\tau:=N_\tau -\lambda \GWl_d(\tau) \quad \mbox{and} \quad \widetilde{N}'_\tau= N'_\tau-\lambda \GWl_d(\tau)\, ,
	\end{equation*}
	the event $S'$ in the above is defined from $S$ in the following way 
	\begin{equation*}
	S'=\set{Z_S \geq \sigma} \, ,
	\end{equation*} where
	\begin{equation}\label{eq:def_Z}
	Z_S :=\sum_{(\tau,\tau')\in S}\widetilde{N}_\tau \widetilde{N}'_{\tau'} \, ,
	\end{equation} and for some suitable threshold $\sigma = \sigma(S)$.
\end{proposition}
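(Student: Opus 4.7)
The plan begins with a computation of the first and second moments of $Z_S$ under both distributions. Using the Poisson superposition $N_\tau = M_\tau + \sum_{\tau'} N_{\tau, \tau'}$ from Definition~\ref{def:correlated_model}, one has $\Cov(N_\tau, N'_{\tau'}) = 0$ under $\dPl_{d+1}$ and $\Cov(N_\tau, N'_{\tau'}) = \lambda s \dPls_d(\tau, \tau')$ under $\dPls_{d+1}$, giving $\dEl_{d+1}[Z_S] = 0$ and $\dEls_{d+1}[Z_S] = \lambda s q$ (with $q := \dPls_d(S) \geq c$), along with $\Var_{\dPl_{d+1}}(Z_S) = \lambda^2 p$ where $p := \dPl_d(S) \leq \eps$. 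Expanding the fourth-order expectation $\dEls_{d+1}[\widetilde{N}_{\tau_1}\widetilde{N}_{\tau_2}\widetilde{N}'_{\tau'_1}\widetilde{N}'_{\tau'_2}]$ into three pairings plus a fourth-order joint cumulant (originating from the Poissons $N_{\tau,\tau'}$), and using the marginal identity $\sum_{\tau_2,\tau'_2} \dPls_d(\tau_1, \tau'_2)\, \dPls_d(\tau_2, \tau'_1) = \GWl_d(\tau_1) \GWl_d(\tau'_1)$ to bound the ``crossed'' pairing, yields $\Var_{\dPls_{d+1}}(Z_S) \leq 2 \lambda^2 p + \lambda s q$.

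\textbf{Step 2: Alternative side.} Setting $\sigma := \lambda s q /2 = \dEls_{d+1}[Z_S]/2$, Chebyshev's inequality under $\dPls_{d+1}$ yields
$$\dPls_{d+1}(Z_S < \sigma) \leq \frac{4 \Var_{\dPls_{d+1}}(Z_S)}{(\lambda s q)^2} \leq \frac{8 \eps}{s^2 c^2} + \frac{4}{\lambda s c},$$
which can be made at most $1-c$ by choosing $\eps = \eps(s,c)$ small enough and $\lambda \geq \lambda_0(s,c)$; this ensures $\dPls_{d+1}(S') \geq c$.

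\textbf{Step 3: Null side (the hard part).} The required bound $\dPl_{d+1}(S') \leq p/2$ is the delicate step: Chebyshev alone only yields $\dPl_{d+1}(Z_S \geq \sigma) \leq 4p/(s^2 c^2) = O(p)$, which is never smaller than $p/2$ since $sc \leq 1$. The required factor-of-$1/2$ gain must come from sub-exponential concentration of the bilinear form $Z_S = X^\top A Y$, with $X = (\widetilde{N}_\tau)_\tau$, $Y = (\widetilde{N}'_{\tau'})_{\tau'}$ two independent families of independent centered Poisson variables and $A_{\tau, \tau'} = \one_{(\tau, \tau') \in S}$. A Hanson-Wright-type inequality applied after rescaling by the standard deviations (so that the Frobenius norm of the weighted coefficient matrix is $\lambda \sqrt{p}$) would give
$$\dPl_{d+1}(Z_S \geq \sigma) \leq 2 \exp\!\left(-c \min\!\left(\frac{\sigma^2}{\lambda^2 p},\, \frac{\sigma}{\lambda \sqrt{p}}\right)\right) \leq 2 \exp\!\left(-\frac{c\, s^2 c^2}{4 p}\right).$$
Since this decays super-linearly as $p \to 0$, choosing $\eps = \eps(s,c)$ sufficiently small forces the right-hand side to be uniformly at most $p/2$ for all $p \in (0, \eps]$.

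\textbf{Main obstacle.} The rigorous justification of Step 3 is the main technical obstacle: the variances $\lambda \GWl_d(\tau)$ of the $\widetilde{N}_\tau$ are non-uniform and may be arbitrarily small as $\tau$ varies, so the standard Hanson-Wright inequality (usually stated for uniformly sub-Gaussian or sub-exponential variables) does not apply off-the-shelf. A clean workaround is to bypass Hanson-Wright and perform a direct combinatorial computation of the moments $\dEl_{d+1}[Z_S^{2k}]$ via the cumulant expansion (in which only the pairings, together with a single top-order joint cumulant per Poisson component $N_{\tau,\tau'}$, contribute), then apply Markov's inequality with a well-chosen order $k$. Already the fourth-moment estimate $\dEl_{d+1}[Z_S^4] \lesssim (\lambda^2 p)^2$ plus lower-order corrections yields, via Markov, $\dPl_{d+1}(Z_S \geq \sigma) \lesssim p^2/(sc)^4$, which is at most $p/2$ as soon as $\eps \leq (sc)^4/32$, thereby closing the induction.
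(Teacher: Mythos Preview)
Your approach is essentially the same as the paper's: compute the first two moments of $Z_S$ under $\dPls_{d+1}$ and use Chebyshev for the alternative side, compute the fourth moment of $Z_S$ under $\dPl_{d+1}$ and use Markov for the null side. The paper records precisely these bounds in its auxiliary Lemma~\ref{lemma_bounds_Z}: $\dEls_{d+1}[Z]=\lambda s\,\dPls_d(S)$, $\Varls_{d+1}(Z)\leq \dEls_{d+1}[Z]+\lambda^2(1+s^2)\dPl_d(S)$, and $\dEl_{d+1}[Z^4]\leq 36\lambda^4\dPl_d(S)^2+13\lambda^3\dPl_d(S)$. Your Hanson--Wright detour is unnecessary and correctly abandoned; the direct fourth-moment computation is exactly what the paper uses.

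One small imprecision in your Step~3: the ``lower-order correction'' in the fourth moment is $O(\lambda^3 p)$, which after Markov with your threshold $\sigma\sim\lambda sc$ contributes $O\bigl(p/(\lambda(sc)^4)\bigr)$, not $O\bigl(p^2/(sc)^4\bigr)$. This term is \emph{not} dominated by the $p^2$ term as $p\to 0$; it must be made $\leq p/4$ separately by taking $\lambda_0$ large enough (a constraint you already allow from Step~2, but which you do not state). The paper sidesteps this bookkeeping by choosing the threshold adaptively in $p$, namely $\sigma=\max\bigl(4\lambda\,\dPl_d(S)^{1/4},\,3\lambda^{3/4}\bigr)$, which directly forces both fourth-moment terms below $p/4$; it then verifies a posteriori that $\sigma\leq \dEls_{d+1}[Z]/2$ under the stated constraints on $\eps$ and $\lambda_0$. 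Either choice of $\sigma$ closes the induction.
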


Together, Lemma \ref{lemma:initialize_lower_bound_on_KL} and Proposition \ref{prop:bounding_LR_by_induction} yield the proof of Theorem \ref{addendum:thm:positive_result}.

\subsubsection{Proof of Theorem \ref{addendum:thm:positive_result}}
\begin{proof}[Proof of Theorem \ref{addendum:thm:positive_result}]
	Assume that $s > \sqrt{\alpha}$.
	Choose $c\in(0,1)$, $c\leq 1/[16\log(16)+15]$ and let $\eps=\eps(s,c)$, $\lambda_0=\lambda_0(s,c)$ be the corresponding quantities from Proposition \ref{prop:bounding_LR_by_induction}. Now that $c,\eps$ are fixed, we appeal to  Lemma \ref{lemma:initialize_lower_bound_on_KL} to obtain some $\lambda_1=\lambda_1(s,c,\eps)$ and $d_0 = d_0(s,c,\eps) \in\dN$ such that, taking $\lambda\geq \lambda_0 \vee \lambda_1$, there exists some event $S_d\subset\cX_d^2$ such that
	\begin{equation*}
	\dPl_d(S)\leq \eps \quad \mbox{and} \quad \dPls_d(S)\geq c \, .
	\end{equation*} Proposition \ref{prop:bounding_LR_by_induction} then ensures the existence of a sequence of events $S_d\subset \cX_d^2$, $d>d_0$ such that
	\begin{equation*}
	\dPl_d(S_d)\leq 2^{-(d-d_0)}\eps \quad \mbox{and} \quad \dPls_d(S_d) \geq c.
	\end{equation*}
	It follows that, for all $d > d_0$,
	\begin{flalign*}
	\KLls_d &\geq \dPls_d(S_d)\log\left(\frac{\dPls_d(S_d)}{\dPl_d(S_d)}\right)+(1-\dPls_d(S_d))\log\left(\frac{1-\dPls_d(S_d)}{1-\dPl_d(S_d)}\right)
	\\
	& \geq c \log( 2^{d-d_0}/\eps)-h(\dPls_d(S_d)) - (1-\dPls_d(S_d))\log((1-\dPl_d(S_d))) \\
	& \geq c \log( 2^{d-d_0}/\eps)-h(\dPls_d(S_d)) \, ,
	\end{flalign*} where for $x \in [0,1]$, $h$ is defined by $h(x) :=  -x \log(x) - (1-x) \log(1-x)$. Function $h$ is maximal at $x=1/2$ and $h(1/2)=\log(2)$, which gives the final bound $\KLls_d \geq c \log(2) (d-d_0) - c\log(\eps) - \log(2)$. It readily follows that $\lim_{d\to\infty}\KL_{\lambda,d}=+\infty$.
\end{proof}

\begin{subappendices}
\addtocontents{toc}{\protect\setcounter{tocdepth}{0}}
\section{Postponed proofs}
\subsection{Proof of Lemma \ref{lemma:initialize_lower_bound_on_KL}}\label{addendum:appendix:proof:lemma:initialize_lower_bound_on_KL}
\begin{proof}[Proof of Lemma \ref{lemma:initialize_lower_bound_on_KL}]
	Fix $c< 1/15$ and $\eps \in (0,1)$. Since $s> \sqrt{\alpha}$, we have that $\KLs_{d} \to \infty$ when $d \to \infty$, in view of \eqref{eq:rappel:KLs}, the fact that $A_{d,n} \to A_n$ when $d \to \infty$, and Otter's formula \ref{eq:theorem:otter}. For arbitrarily large $C = C(c,\eps)$ to be specified later, we can thus choose $d_0 = d_0(s,c,\eps)$ such that $\KLs_{d_0} = \frac{1}{2}\log(\Cs_{d_0,2}) \geq C$. 
	
	In turn, in view of \eqref{eq:theorem:gaussian_KL_limit}, we can choose $\lambda_1 = \lambda_1(s,c,\eps)$ such that
	$$
	\lambda\geq \lambda_1\Rightarrow \KLls_{d_0}\geq \frac{1}{2}\KLs_{d_0} \geq C/2 \, .
	$$
	
	Write then
	\begin{equation*}
	\frac{1}{4}\log(\Cs_{d_0,2})\leq \KLls_{d_0}\leq \int_1^{\infty}\log(x)\dPls_{d_0}(L_{d_0}\in dx) =  \int_{1}^{\infty}\frac{1}{u}\dPls_{d_0}(L_{d_0}\geq u)du.
	\end{equation*}
	Also, since $\Cs_{d_0,2}= \dEls_{d_0}[L_{d_0}]$,
	\begin{equation*}
	\Cs_{d_0,2}=\int_0^{\infty} x\dPls_{d_0}(L_{d_0}\in dx)\geq \int_1^{\infty} \dPls_{d_0}(L_{d_0}\geq u)du.
	\end{equation*}
	Now, for any $A,B>1$, $A<B$ we then have
	\begin{flalign*}
	\frac{1}{4}\log(\Cs_{d_0,2})&\leq \int_1^{A}\frac{1}{u}du+\dPls_{d_0}(L_{d_0}\geq A)\int_A^B\frac{du}{u}+\int_B^\infty\frac{1}{B}\dPls_{d_0}(L_{d_0}\geq u)du\\
	&\leq \log(A)+\dPls_{d_0}(L_{d_0}\geq A)\log(B/A)+\frac{1}{B}\Cs_{d_0,2} \, .
	\end{flalign*} This yields
	\begin{equation*}
	\dPls_{d_0}(L_{d_0}\geq A)\geq \frac{\frac{1}{4}\log(\Cs_{d_0,2})-\log(A) -\frac{1}{B}\Cs_{d_0,2}}{\log(B/A)} \, .
	\end{equation*}
	Choose $A=(\Cs_{d_0,2})^{1/16}$, $B=16 \Cs_{d_0,2}/\log(\Cs_{d_0,2})$. This yields
	\begin{equation*}
	\dPls_{d_0}(L_{d_0}\geq A)\geq \frac{1}{8}\frac{\log(\Cs_{d_0,2})}{\log(16)+(1-1/16)\log(\Cs_{d_0,2})-\log(\log(\Cs_{d_0,2}))} \, .
	\end{equation*}
	Recall that $d_0$ is taken such that $\frac{1}{2}\log(\Cs_{d_0,2})\geq C$, where $C$ is some constant as large as we want. The right-hand side being equivalent to $c_\infty := \frac{2}{15}$ when $C \to \infty$, and $c_\infty>c$ by definition of $c$. By Markov's inequality we also have $\dPl_{d_0}(L_{d_0}\geq A) \leq A^{-1} \leq \exp(-C/8)$, we can thus choose $C=C(\eps,c)$ such that 
	\begin{equation*}
	\dPls_{d_0}( \set{L_{d_0}\geq A})\geq c \quad \mbox{and} \quad \dPl_{d_0}(\set{L_{d_0}\geq A})\leq \eps \, .
	\end{equation*}
	The claimed result is proved with $S=S(s,c,\eps)=\set{L_{d_0}\geq A}$.
\end{proof}

\subsection{Proof of Proposition \ref{prop:bounding_LR_by_induction}} \label{addendum:appendix:proof:prop:bounding_LR_by_induction}
The proof of Proposition \ref{prop:bounding_LR_by_induction} relies on the following lemma. 
\begin{lemma}\label{lemma_bounds_Z}
	Assume $\lambda\geq 1$. The random variable $Z := Z_S$ defined in \eqref{eq:def_Z} verifies the following:
	\begin{itemize}
		\item[$(i)$] $\dEl_{d+1}[Z] = 0 \, .$
		\item[$(ii)$] $\dEls_{d+1}[Z] = \lambda s \dPls_{d}(S)\, .$
		\item[$(iii)$] $\dEl_{d+1}[Z^4] \leq 36 \lambda^4\dPl_d(S)^2+ 13\lambda^3 \dPl_d(S)\, .$
		\item[$(iv)$] $\Varls_{d+1}[Z] \leq \dEls_{d+1}[Z]+\lambda^2(1+s^2)\dPl_d(S)\, .$
	\end{itemize}
\end{lemma}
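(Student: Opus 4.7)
The plan is to exploit the Poisson superposition structure of both measures. Under the null model $\dPl_{d+1}$, the families $(\widetilde{N}_\tau)_{\tau\in\cX_d}$ and $(\widetilde{N}'_\tau)_{\tau\in\cX_d}$ are mutually independent and each consists of independent centered Poisson variables of rate $\lambda\GWl_d(\tau)$. Under the planted model $\dPls_{d+1}$, the decomposition \eqref{eq:NN'M} lets me write $\widetilde{N}_\tau=\widetilde{M}_\tau+\sum_{\tau''}\widetilde{N}_{\tau,\tau''}$ (and symmetrically for $\widetilde{N}'_\tau$) in terms of three mutually independent families of independent centered Poissons $\widetilde{M}_\tau$, $\widetilde{M}'_\tau$, $\widetilde{N}_{\tau,\tau''}$ of respective rates $\lambda(1-s)\GWl_d(\tau)$, $\lambda(1-s)\GWl_d(\tau)$ and $\lambda s\,\dPls_d(\tau,\tau'')$. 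Throughout, the key computational input is that a centered Poisson of rate $\mu$ has all cumulants $\kappa_k$, $k\geq 2$, equal to $\mu$, so that $\dE[\widetilde P^2]=\dE[\widetilde P^3]=\mu$ and $\dE[\widetilde P^4]=\mu+3\mu^2$; the crude bound $\lambda\GWl_d(\tau)\leq\lambda$ will be used repeatedly.

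Points $(i)$ and $(ii)$ reduce to a single covariance computation. For $(i)$, the independence of the two families under $\dPl_{d+1}$ immediately yields $\dEl_{d+1}[\widetilde{N}_\tau\widetilde{N}'_{\tau'}]=0$ for every pair, hence $\dEl_{d+1}[Z]=0$ by linearity. For $(ii)$, the unique Poisson component common to $N_\tau$ and $N'_{\tau'}$ is $N_{\tau,\tau'}$, so that $\Cov_{\dPls_{d+1}}(\widetilde{N}_\tau,\widetilde{N}'_{\tau'})=\Var(\widetilde{N}_{\tau,\tau'})=\lambda s\,\dPls_d(\tau,\tau')$; summing over $(\tau,\tau')\in S$ produces $\dEls_{d+1}[Z]=\lambda s\,\dPls_d(S)$.

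For $(iii)$, I would expand $\dEl_{d+1}[Z^4]=\sum_{(\tau_i,\tau'_i)\in S^4}\dE\bigl[\prod_i\widetilde{N}_{\tau_i}\bigr]\cdot\dE\bigl[\prod_i\widetilde{N}'_{\tau'_i}\bigr]$ using independence, and apply the moment-cumulant formula to centered independent Poissons. Only the partitions of $\{1,2,3,4\}$ without singletons contribute: the one-block partition (with a fourth-cumulant factor when all four indices coincide) and the three pair-partitions (with two second-cumulant factors when the paired indices coincide). Multiplying the $\tau$-side and $\tau'$-side enumerations produces, on the one hand, nine pairing--pairing contributions, each of which summed over $S^4$ yields $\lambda^4\dPl_d(S)^2$; and on the other hand, cross contributions involving one or two fourth cumulants, each of order at most $\lambda^3\dPl_d(S)$ after the bound $\lambda\GWl_d(\tau)\leq\lambda$. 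Collecting the combinatorial factors produces the constants $36$ and $13$ respectively.

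For $(iv)$, I would compute $\Varls_{d+1}[Z]=\dEls_{d+1}[Z^2]-\dEls_{d+1}[Z]^2$ by the same cumulant expansion, now applied to the four \emph{correlated} centered Poissons $\widetilde{N}_{\tau_1},\widetilde{N}'_{\tau'_1},\widetilde{N}_{\tau_2},\widetilde{N}'_{\tau'_2}$. Four kinds of contributions appear: the pair-partition matching $\tau$'s with $\tau$'s and $\tau'$'s with $\tau'$'s gives $\lambda^2\dPl_d(S)$, since same-type covariances remain diagonal under $\dPls_{d+1}$; the pair-partition matching within the pairs of $Z$ gives exactly $\lambda^2 s^2\,\dPls_d(S)^2$, cancelling $\dEls_{d+1}[Z]^2$; the cross pair-partition across the pairs of $Z$ gives $\lambda^2 s^2\sum_{(\tau_i,\tau'_i)\in S^2}\dPls_d(\tau_1,\tau'_2)\dPls_d(\tau_2,\tau'_1)$, to be controlled by $\lambda^2 s^2\,\dPl_d(S)$ via the marginal inequality $\dPls_d(\tau,\tau')\leq\sqrt{\GWl_d(\tau)\GWl_d(\tau')}$ combined with Cauchy--Schwarz; and the joint fourth cumulant is nonzero only when $\tau_1=\tau_2$ and $\tau'_1=\tau'_2$, the unique shared Poisson building block being $\widetilde{N}_{\tau_1,\tau'_1}$, contributing precisely $\dEls_{d+1}[Z]$. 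Summing the four pieces yields the claimed bound. The main technical obstacles will be the combinatorial bookkeeping of the cumulant expansion in $(iii)$ and the sharp control of the cross-pairing sum in $(iv)$, where a crude AM--GM bound only yields $|S|\,\dPl_d(S)$ and the reduction to $\dPl_d(S)$ requires exploiting the marginal structure of $\dPls_d$ carefully; beyond these two points everything else is accounting.
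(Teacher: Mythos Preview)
Your treatment of $(i)$, $(ii)$ and $(iii)$ is sound and close to the paper's, though your cumulant organisation of $(iii)$ is in fact tidier: it yields the bound $9\lambda^4\dPl_d(S)^2+7\lambda^3\dPl_d(S)$, which is stronger than the $36,13$ stated in the lemma (the paper's case analysis by $|\{\tau_i\}|,|\{\tau'_i\}|$ is less efficient here). So your claim that the combinatorics ``produce the constants $36$ and $13$'' is a harmless overstatement.

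The genuine gap is in $(iv)$, precisely at the step you flag as delicate. Your cross pair-partition term is indeed
\[
\lambda^2 s^2\sum_{(\tau_1,\tau'_1)\in S}\sum_{(\tau_2,\tau'_2)\in S}\dPls_d(\tau_1,\tau'_2)\,\dPls_d(\tau_2,\tau'_1),
\]
and the target is to bound this double sum by $\dPl_d(S)$. Your proposed route, the pointwise inequality $\dPls_d(\tau,\tau')\le\sqrt{\GWl_d(\tau)\GWl_d(\tau')}$ followed by Cauchy--Schwarz, does \emph{not} reach $\dPl_d(S)$: any such symmetric application collapses only to $\bigl(\sum_{(\tau,\tau')\in S}\sqrt{\GWl_d(\tau)\GWl_d(\tau')}\bigr)^2$, which by Cauchy--Schwarz is at best $|S|\,\dPl_d(S)$, i.e.\ exactly the ``crude'' bound you wished to improve upon. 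The pointwise square-root control is too weak because it loses the summability in the unconstrained variable.

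The fix the paper uses is much simpler and does not require any pointwise inequality on $\dPls_d$: just drop the $S$-constraint on the pair $(\tau_2,\tau'_2)$ and sum the marginals,
\[
\sum_{(\tau_1,\tau'_1)\in S}\sum_{\tau_2,\tau'_2\in\cX_d}\dPls_d(\tau_1,\tau'_2)\,\dPls_d(\tau_2,\tau'_1)
=\sum_{(\tau_1,\tau'_1)\in S}\GWl_d(\tau_1)\,\GWl_d(\tau'_1)=\dPl_d(S),
\]
using $\sum_{\theta}\dPls_d(\theta,\tau')=\GWl_d(\tau')$. With this one-line replacement your cumulant argument for $(iv)$ goes through and gives exactly $\Varls_{d+1}[Z]\le \dEls_{d+1}[Z]+\lambda^2(1+s^2)\dPl_d(S)$.
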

\begin{proof}[Proof of Lemma \ref{lemma_bounds_Z}]
	Recall the definition of $Z=Z_S$:
	\begin{equation*}
	Z_S :=\sum_{(\tau,\tau')\in S}\widetilde{N}_\tau \widetilde{N}'_{\tau'} \, ,
	\end{equation*} where
	\begin{equation*}
	\widetilde{N}_\tau=N_\tau -\lambda \GWl_d(\tau) \quad \mbox{and} \quad \widetilde{N}'_\tau= N'_\tau-\lambda \GWl_d(\tau)\, .
	\end{equation*}
	
	\proofstep{Point $(i)$} is immediate because under $\dPl_{d+1}$, for each pair $(\tau,\tau')\in\cX_d^2$,  the random variables $\widetilde{N}_\tau$, $\widetilde{N}'_{\tau'}$ are independent and zero mean.\\
	
	\proofstep{Point $(ii)$.} Recall that under $\dPls_{d+1}$, $N$ and $N'$ are sampled as follows:
	\begin{equation*}
	N_{\tau} = \Delta_\tau + \sum_{\theta' \in \cX_{d}} M_{\tau, \theta'} \quad \mbox{and} \quad
	N'_{\tau} = \Delta'_{\tau} + \sum_{\theta \in \cX_{d}} M_{\theta, \tau'} \, ,
	\end{equation*} with $\Delta_{\tau}$ and $\Delta'_{\tau}$ i.i.d. $\Poi(\lambda (1-s) \GWl_{d}(\tau))$ and $M_{\theta, \theta'}$ i.i.d. $\Poi(\lambda s \dPls_{d}(\theta, \theta'))$ variables. Introduce the notations:
	\begin{equation*}
	\widetilde{\Delta}_\tau:=\Delta_\tau-\lambda (1-s) \GWl_d(\tau), \quad  \widetilde{\Delta}'_{\tau'}:=\Delta'_{\tau'}-\lambda (1-s) \GWl_d(\tau'), \quad \widetilde{M}_{\theta,\theta'}=M_{\theta,\theta'}-\lambda s \dPls_d(\theta,\theta') \, .
	\end{equation*}
	Since the marginals of $\dPls_d$ are given by $\GWl_d$, it holds that 
	\begin{equation*}
	\widetilde{N}_\tau=\widetilde{\Delta}_\tau+\sum_{\theta' \in \cX_{d}}\widetilde{M}_{\tau,\theta'} \quad \mbox{and} \quad \widetilde{N}'_{\tau'}=\widetilde{\Delta}'_{\tau'}+\sum_{\theta \in \cX_{d}} \widetilde{M}_{\theta,\tau'} \, ,
	\end{equation*} which shows that 
	\begin{equation*}
	\dEls_{d+1}[ \widetilde{N}_\tau \widetilde{N}'_{\tau'}]=\Varls_{d+1}( M_{\tau,\tau'})=\lambda s \dPls_d(\tau,\tau') \,.
	\end{equation*} Point $(ii)$  follows.\\
	
	\proofstep{Point $(iii)$.} Write
	\begin{equation*}
	\dEl_{d+1}[Z^4] = \sum_{\substack{(\tau_1,\tau'_1) \in S, (\tau_2,\tau'_2) \in S \\ (\tau_3,\tau'_3) \in S (\tau_4,\tau'_4) \in S}} \dEl_{d+1}\left[ \prod_{i =1}^{4} \widetilde{N}_{\tau_i}\right] \dEl_{d+1}\left[ \prod_{i =1}^{4} \widetilde{N}'_{\tau'_i}\right] \, .
	\end{equation*} The only non-zero terms in the above summation are such that:
	\begin{equation*}
	|\set{\tau_i , i\in[4]}| \in \set{1,2} \quad \mbox{and} \quad |\set{\tau'_i , i\in[4]}| \in \set{1,2} \, .
	\end{equation*} We let $\sigma(u,v)$ denote the summation of terms with $|\set{\tau_i,i\in[4]}|=u$, $|\set{\tau'_i,i\in[4]}|=v$, for $u,v\in \set{1,2}$.
	
	We have $\sigma(1,1)=\sum_{(\tau,\tau')\in S}\dEl_{d+1}[\widetilde{N}^4_{\tau}] \dEl_{d+1} [\widetilde{N}'^4_{\tau'}]$. We use the following 
	\begin{lemma}\label{lemma:4th_moment_centered_poisson}
		If $X \sim \Poi(\mu)$ then $\dE[(X-\mu)^4]= 3 \mu^2 +\mu.$
	\end{lemma}
	Lemma \ref{lemma:4th_moment_centered_poisson} implies, using the fact that $\GWl_d(\tau),\GWl_d(\tau')\leq 1$, that
	\begin{flalign*}
	\sigma(1,1) & = \sum_{(\tau,\tau')\in S}\left[3\lambda^2 \GWl_d(\tau)^2+\lambda \GWl_d(\tau)\right]\left[3\lambda^2 \GWl_d(\tau')^2+\lambda \GWl_d(\tau')\right]\\
	&\leq 9\lambda^4\sum_{(\tau,\tau')\in S} \GWl_d(\tau)^2 \GWl_d(\tau')^2+[6\lambda^3+\lambda^2]\dPl_{d}(S) 
	\leq 9\lambda^4 \dPl_{d}(S)^2+ 7\lambda^3 \dPl_d(S) \,
	\end{flalign*} since $\lambda \geq 1$ which we shall assume, and using the easy bound $\sum_i x^2_i \leq \left( \sum_i x_i \right)^2$ for positive $x_i$.
	
	The term $\sigma(1,2)$ verifies
	\begin{flalign*}
	\sigma(1,2) & \leq  3\sum_{\tau} \dEl_{d+1}[\widetilde{N}^4_\tau]\sum_{\substack{\tau':(\tau,\tau')\in S \\ \theta':(\tau,\theta')\in S}}\dEl_{d+1}[\widetilde{N}'^2_{\tau'}]\dEl_{d+1}[\widetilde{N}'^2_{\theta'}]\\
	&= 3\sum_\tau \left[3\lambda^2 \GWl_d(\tau)^2+\lambda \GWl_d(\tau)\right] \sum_{\substack{\tau':(\tau,\tau')\in S \\ \theta':(\tau,\theta')\in S}}\lambda^2 \GWl_d(\tau') \GWl_d(\theta')\\
	&\leq   9 \lambda^4\sum_\tau \GWl_d(\tau)^2\sum_{\substack{\tau':(\tau,\tau')\in S \\ \theta':(\tau,\theta')\in S}}\GWl_d(\tau')\GWl_d(\theta')+3\lambda^3\sum_{(\tau,\tau')\in S}\GWl_d(\tau)\GWl_d(\tau') \, ,
	\end{flalign*} where we used the fact that $\sum_{\theta':(\tau,\theta')\in S}\GWl_d(\theta')\leq 1$. Note now that 
	\begin{equation*}
	\sum_{\substack{\tau':(\tau,\tau')\in S \\ \theta':(\tau,\theta')\in S}}\GWl_d(\tau')\GWl_d(\theta')\leq \dPl_d(S)^2
	\end{equation*}to conclude that $\sigma(1,2)\leq 9\lambda^4 \dPl_d(S)^2+3\lambda^3 \dPl_d(S)$, and the same bound also holds for $\sigma(2,1)$.\\
	
	Finally, $\sigma(2,2)$ can be bounded as follows. Having fixed $\tau_1$, there must be one index $j\in\{2,3,4\}$ such that $\tau_j=\tau_1$. Consider thus that $j=3$ and $\tau_4=\tau_2$. By symmetry, when accounting only for this case, we just need to multiply our evaluation by 3. This leads to the following bound:
	\begin{flalign*}
	\sigma(2,2)&\leq 3\sum_{\tau_1,\tau_2}\lambda^2\GWl_d(\tau_1)\GWl_d(\tau_2)\sum_{\tau'_i,i\in[4]}\one_{(\tau_1,\tau'_1)\in S,(\tau_2,\tau'_2)\in S,(\tau_1,\tau'_3)\in S,(\tau_2,\tau'_4)\in S}\one_{|\{\tau'_i\}_i|=2}\dE_0\left[\prod_{i=1}^{4} \widetilde{N}'_{\tau'_i}\right]\\
	&\leq 3\sum_{\tau_1,\tau_2}\lambda^2\GWl_d(\tau_1)\GWl_d(\tau_2)\sum_{\tau'_1,\tau'_2}3\lambda^2\GWl_d(\tau'_1)\GWl_d(\tau'_2)\one_{(\tau_1,\tau'_1)\in S,(\tau_2,\tau'_2)\in S} \, .
	\end{flalign*} Indeed, there are three possibilities for the choice of index $j'$ such that $\tau'_j=\tau'_1$, and for each such choice the contribution is upper bounded by the same term. This yields $\sigma(2,2)\leq 9\lambda^4 \dPl_d(S)^2$.
	
	Summing our bounds on $\sigma(u,v)$ for $u,v\in\set{1,2}$ yields $(iii)$. \\
	
	\proofstep{Point $(iv)$.} Write $\dEls_{d+1}(Z^2)$ is the form
	\begin{multline*}
	\dEls_{d+1} \sum_{(\tau,\tau')\in S}\sum_{(\theta,\theta')\in S}\left[\widetilde{\Delta}_\tau+\sum_{u'}\widetilde{M}_{\tau,u'}\right]\left[\widetilde{\Delta}'_{\tau'}+\sum_u \widetilde{M}_{u,\tau'}\right]\left[\widetilde{\Delta}_\theta+\sum_{v'}\widetilde{M}_{\theta,v'}\right]\left[\widetilde{\Delta}'_{\theta'}+\sum_v \widetilde{M}_{v,\theta'}\right] \, .
	\end{multline*} When expanding the product of brackets, the only terms that will yield a non-zero expectation must have the following sequence of degrees in variables $(\widetilde{\Delta},\widetilde{\Delta}',\widetilde{M})$: $(2,2,0)$, $(2,0,2)$, $(0,2,2)$, or $(0,0,4)$. Denote $\sigma(u,v,w)$ the summation of terms corresponding to exponents $(u,v,w)$. We have:
	\begin{flalign*}
	\sigma(2,2,0) =\sum_{(\tau,\tau')\in S}\dEls_d [\widetilde{\Delta}^2_\tau\widetilde{\Delta}'^2_{\tau'}] = \lambda^2(1-s)^2\dPl_d(S) \, .
	\end{flalign*} We next have
	\begin{flalign*}
	\sigma(2,0,2) \sum_{(\tau,\tau')\in S}\dEls_d \left[\widetilde{\Delta}^2_\tau \sum_u\widetilde{M}^2_{u,\tau'}\right] 
	= \lambda^2 s (1-s)\dPl_d(S)\, ,
	\end{flalign*} and the same expression holds for $\sigma(0,2,2)$. We finally evaluate $\sigma(0,0,4)$. It reads
	\begin{flalign*}
	\sigma(0,0,4)=\sum_{\substack{(\tau,\tau')\in S \\ (\theta,\theta')\in S}}\sum_{u,u',v,v'}\dEls_{d+1} \left[ \widetilde{M}_{\tau,u'}\widetilde{M}_{u,\tau'}\widetilde{M}_{\theta, v'}\widetilde{M}_{v,\theta'} \right] \, .
	\end{flalign*} The non-zero terms in this expectation must comprise either the same term at the power 4, or two distinct terms each at power 2. This yields 4 contributions, that we denote by $A,B,C,D$, which satisfy
	\begin{flalign*}
	A & = \sum_{(\tau,\tau')\in S}\dEls_{d+1} [\widetilde{M}_{\tau,\tau'}^4] =  \sum_{(\tau,\tau')\in S}\left[ 3\lambda^2 s^2 \dPls_d(\tau,\tau')^2+ \lambda s \dPls_d(\tau,\tau')\right] \, , \\
	B & = \sum_{(\tau,\tau')\in S}\dEls_{d+1}[\widetilde{M}^2_{\tau,\tau'}]\sum_{\substack{(\theta,\theta')\in S \\ (\theta,\theta') \neq (\tau,\tau')}}\dEls_{d+1} [\widetilde{M}^2_{\theta,\theta'}] = \lambda^2 s^2 \dPls_d(S)^2-\lambda^2 s^2 \sum_{(\tau,\tau')\in S}\dPls_d(\tau,\tau')^2 \, ,
	\end{flalign*} and
	\begin{flalign*}
	C & = \sum_{(\tau,\tau')\in S} \left[\sum_{u'}\dEls_{d+1} [\widetilde{M}^2_{\tau,u'}]\right] \left[\sum_{u: (u,\tau')\ne (\tau,u')} \dEls_{d+1}[\widetilde{M}^2_{u,\tau'}] \right] \\
	& =
	\lambda^2 s^2 \dPl_d(S) -\lambda^2 s^2 \sum_{(\tau,\tau')\in S}\dPls_d(\tau,\tau')^2 \, , \\
	D & =  \sum_{(\tau,\tau')\in S} \sum_{\substack{(\theta,\theta')\in S \\ (\tau,\theta') \neq (\theta,\tau')}} \dEls_{d+1} [\widetilde{M}^2_{\tau,\theta'}] \dEls_{d+1} [\widetilde{M}^2_{\theta,\tau'}] \\
	& \leq 
	\sum_{(\tau,\tau')\in S}\sum_{\theta,\theta'}\dEls_{d+1} [\widetilde{M}^2_{\tau,\theta'}] \dEls_{d+1} [\widetilde{M}^2_{\theta,\tau'}] -\lambda^2 s^2\sum_{(\tau,\tau')\in S}\dPls_d(\tau,\tau')^2
	\\
	& =
	\lambda^2 s^2 \dPl_d(S)-\lambda^2 s^2\sum_{(\tau,\tau')\in S}\dPls_d(\tau,\tau')^2.
	\end{flalign*}
	
	Summing the expressions of $\sigma(2,2,0)$, $\sigma(2,0,2)$, $\sigma(0,2,2)$, $A$, $B$, $C$ and the upper bound of $D$ we obtain
	\begin{flalign*}
	\dEls_{d+1}[Z^2] & \leq \dEls_{d+1}[Z]^2+\dEls_{d+1}[Z]+\lambda^2(1+s^2)\dPl_d(S) \, ,
	\end{flalign*} and upper bound $(iv)$ follows. 
\end{proof}

With the help of Lemma \ref{lemma_bounds_Z}, we are now ready to turn to the proof of Proposition \ref{prop:bounding_LR_by_induction}.
\begin{proof}[Proof of Proposition \ref{prop:bounding_LR_by_induction}] 
	Assuming that $S\subset \cX_d^2$ is such that
	\begin{equation*}
	\dPl_d(S)\leq \eps \quad \mbox{and} \quad \dPls_d(S)\geq c \, ,
	\end{equation*} Our goal is to choose a threshold $\sigma$ such that
	\begin{equation*}
	\dPl_{d+1}(Z\geq \sigma)\leq \frac{1}{2}\dPl_d(S) \leq \frac{\eps}{2}  \quad \mbox{and} \quad \dPls_{d+1}(Z\geq \sigma)\geq c \, .
	\end{equation*}
	
	\proofstep{First point.} Using point $(iii)$ of Lemma \ref{lemma_bounds_Z}, and Markov's inequality we have
	\begin{equation*}
	\dPl_{d+1}(Z\geq \sigma) \leq \frac{1}{\sigma^4}\dEl_{d+1} [Z^4]\leq \frac{1}{\sigma^4} (36 \lambda^4\dPl_{d}(S)^2+15\lambda^3 \dPl_{d}(S)) \, .
	\end{equation*} It thus suffices to choose $\sigma^4=\max\left(144\lambda^4 \dPl_d(S),60 \lambda^3\right)$ to ensure the first property, that is guarantying that $\dPl_{d+1}(Z\geq \sigma)\leq \frac{1}{2}\dPl_d(S)$. We can a fortiori take $\sigma=\max(4 \lambda \dPl_d(S)^{1/4}, 3 \lambda ^{3/4})$.\\
	
	\proofstep{Second point.} By point $(ii)$ of Lemma \ref{lemma_bounds_Z}, since $\dEls_{d+1}[Z]=\lambda s \dPls_d(S)\geq \lambda s c$  we shall have $\dEls_{d+1}[Z]\geq 2\sigma$ provided
	$$
	8 \dPl_d(S)^{1/4}\leq s c \hbox{ and } 6 \lambda ^{-1/4} \leq s c \, 
	$$ or equivalently
	\begin{equation}\label{eq:tmp_epsilon_0_lambda_0}
	\dPl_d(S)\leq \left(\frac{sc}{8}\right)^4\hbox{ and }\lambda \geq \left(\frac{6}{sc}\right)^4.
	\end{equation}
	This provides the conditions on $\lambda_0$ and $\eps$ required in the statement of the proposition, but let us assume that \eqref{eq:tmp_epsilon_0_lambda_0} is satisfied for now. Using $\sigma\leq \dEls_{d+1}[Z]/2$, Chebyshev's inequality as well as the bound $(iv)$ of Lemma \ref{lemma_bounds_Z}:
	\begin{flalign*}
	\dPls_{d+1}(Z\leq \sigma)& \leq \dPls_{d+1}\left( |Z-\dEls_{d+1}[Z]|\geq \frac{1}{2}\dEls_{d+1}[Z]\right)
	\leq 4 \frac{\Varls_{d+1}(Z)}{\dEls_{d+1}[Z]^2} \\
	&\leq 4\frac{ \lambda s \dPls_{d}(S)+2 \lambda^2 \dPl_d(S)}{\lambda^2 s^2 \dPls_{d}(S)^2}
	\leq \frac{4}{\lambda s c}+\frac{8 \dPl_d(S)}{s^2 c^2} \, .
	\end{flalign*}
	In order to ensure that $\dPls_{d+1}(Z < \sigma)\leq 1-c$, it thus suffices to require
	\begin{equation*}
	\frac{4}{\lambda s c}+\frac{8\dPl_d(S)}{s^2 c^2}\leq 1-c \, .
	\end{equation*}
	We can for instance require
	\begin{equation*}
	\lambda\geq \frac{8}{s c (1-c)}, \quad \dPl_d(S)\leq \frac{(1-c)s^2 c^2}{16}\, .
	\end{equation*}
	Combining this requirement with \eqref{eq:tmp_epsilon_0_lambda_0} we have the announced property by requiring
	$$
	\lambda\geq \lambda_0(s,c):=\max\left(\frac{8}{s c (1-c)},\left(\frac{6}{sc}\right)^4\right),\; \dPl_d(S)\leq \eps(s,c):=\min\left( \left(\frac{sc}{8}\right)^4, \frac{(1-c)s^2 c^2}{16}\right).
	$$
\end{proof}

\begin{proof}[Proof of Lemma \ref{lemma:4th_moment_centered_poisson}] Let $X$ be a Poisson random variable with parameter $\mu$. Write
	\begin{flalign*}
	\dE[X^2]& =\mu^2+\mu,
	\\
	\dE[X^3] &=\dE\left[X(X-1)(X-2)+X[X^2-(X-1)(X-2)]\right]
	\\
	&=\mu^3 +\dE\left[ X[3 X -2] \right]\\
	&=\mu^3 +3(\mu^2+\mu)-2\mu\\
	&=\mu^3+3\mu^2+\mu,\\
	\dE[X^4]&=\dE\left[X(X-1)(X-2)(X-3)+ X[X^3-(X-1)(X-2)(X-3)]\right]
	\\
	&=\mu^4+\dE\left[ X[6 X^2-11 X +6]\right]\\
	&=\mu^4+6[\mu^3+3\mu^2+\mu]-11(\mu^2+\mu)+6\mu\\
	&=\mu^4+6 \mu^3+7\mu^2+\mu.
	\end{flalign*} Write next
	\begin{flalign*}
	\dE[(X-\mu)^4] &= \dE\left[X^4-\binom{4}{1}X^3 \mu +\binom{4}{2}X^2 \mu^2 -\binom{4}{3} X \mu^3+\mu^4   \right]\\
	&=[\mu^4+ 6 \mu^3 +7\mu^2+\mu] -4[\mu^4 + 3 \mu^3 +\mu^2]+6[\mu^4+\mu^3]-4\mu^4+\mu^4\\
	&= 3\mu^2+\mu,
	\end{flalign*}
	as announced.
\end{proof}

\addtocontents{toc}{\protect\setcounter{tocdepth}{2}}
\end{subappendices}

\newpage
\chapter*{Conclusion and research directions}\label{conclusion}
\addcontentsline{toc}{chapter}{Conclusion and research directions}
We have described through this dissertation several contributions to graph alignment and to the tree correlation detection problem. We studied the Gaussian and the \ER models, both from the information-theoretic side (Chapters \ref{chapter:gaussian_alignment_IT}, \ref{chapter:impossibility}, \ref{chapter:MPAlign}, \ref{chapter:addendum}) and from the computational side (Chapters \ref{chapter:EIG1}, \ref{chapter:NTMA}, \ref{chapter:MPAlign}, \ref{chapter:addendum}). We proposed several methods and algorithms, sometimes spectral (Chapter \ref{chapter:EIG1}), based on message-passing using tree similarity (Chapter \ref{chapter:NTMA}) or computing likelihood ratios for detecting local correlation (Chapter \ref{chapter:MPAlign}). 

This field of research is young, and it is certain that many work still remains to be done in order to understand this problem in more generalized settings. We hereafter briefly mention some open questions and research directions that we believe are of particular interest.

\subsubsection*{Typical values of QAP and matching weights in the null model}
Recall that the non-planted version of graph alignment of two graphs with adjacency matrices $A$ and $B$ consists in solving the quadratic assignment problem \eqref{eq:QAP}. A question of interest is the value of the objective $$\max_\Pi \langle A, \Pi B \Pi^T\rangle$$ in the large size limit in the null model, e.g. when $A,B$ are independent \ER graphs. Some upper bounds are  obtained in the literature \cite{Wu20} -- to study the detection problem -- but to the best of our knowledge no exact equivalent is known.
	
In Chapter \ref{chapter:NTMA}, a similarity score between trees $t$ are $t'$ is studied: the tests are based on the matching weight, defined as the largest number of leaves at depth $d$ of a common subtree of $t$ and $t'$. Here again, under the null model, where $t$ and $t'$ are e.g. independent Galton-Watson trees, understanding more thoroughly the typical matching weight of $t$ and $t'$ is still open.
	
\subsubsection*{Optimal fraction for partial recovery}
One could be very interested in the optimal overlap -- or, the largest subset $\cC^*$ -- that one can hope to align in the sparse regime. It is shown in Chapter \ref{chapter:impossibility} that -- up to some vanishing fraction of the nodes --  $\cC^*$ is contained in the giant component $\cC_1$ of the intersection graph. In Section \ref{subsection:warmup} we dealt with the exact isomorphism case $s=1$, for which $\cC^*$ is almost -- i.e, up to some vanishing fraction -- the set of all points invariant by any automorphism. We conjecture that this observation could be generalized to the non-isomorphic case $s <1$, namely that $\cC^*$ is almost the set $\cI$ of invariant nodes \emph{in the intersection graph}. 

\subsubsection*{Generalization to other locally tree-like models}
Detection of correlation in trees, introduced and studied in this manuscript, is a fundamental statistical task of intrinsic interest besides its original motivation from  graph alignment. While in this manuscript we focused on \ER graphs and hence Poisson branching trees, more general locally tree-like graphs could be considered, such as the configuration model, giving rise to correlation detection problems on more general branching trees, for which an extension of the \alg{MPAlign} method could very well be obtained.

\subsubsection*{More efficient algorithms}
Efficient methods proposed in the literature have up to now rather high time complexity -- at least $O(n^3)$ most of the time. We are in a position to ask whether some other methods could perform with a better scaling to large graphs. For graph alignment and related inference problems on graphs, graph neural networks seem to bring relevant architectures and obtain competitive results with lower time complexity (see Azizian and Lelarge \cite{Azizian2021}); giving strong theoretical guarantees however still remains thorny and may be the object of future research in this field.

Another class of algorithms that may shed a new light on the problem are the spectral methods on non-backtracking matrices, following the way paved by community detection literature (see e.g. \cite{BLM18, Moore17}). In our context, there is a chance that these non-local methods may exploit more information than local neighborhoods, and may still be able to perform partial alignment even below the threshold $s < \sqrt{\alpha}$ for the correlation detection problem, which would re-localize the conjectured hard phase. 

\subsubsection*{Computational hardness}

Other active branches of research are seeking for insights on computational hardness for inference problems (see \cite{Bubeck18} for a reduction-based approach). Giving more quantitative results on hardness of graph alignment is still open: several ideas are worth being investigated. The low degree method \cite{Kunisky19} suggests that projecting the likelihood ratio on the space of low-degree polynomials gives strong insights on the poly-time feasibility of a detection problem. Let us mention another concept originally introduced in spin-glass theory, the overlap gap property, which is postulated to reveal algorithmic hardness in combinatorial optimization and inference in planted models, and has recently been exhibited for the planted clique problem \cite{Gamarnik2019TheLO}.

\subsubsection*{Extensions to other settings}

The study of graph alignment for \ER graphs is fundamental and exhibits interesting phenomena. However, \ER is far from being realistic for real-life graphs, that contain more geometry and are scale-free. Studying graph alignment in preferential attachment models -- for instance the Barabási–Albert model -- seems a natural direction for future research.

Also, a recent paper by Wang, Wu, Xu, Yolou establishes interesting results for alignment of geometric graphs \cite{Wang22}, and \cite{Racz21} studies the correlated stochastic block model: results from both community detection and graph alignment are merged together and enable to recover the communities upon observing multiple correlated SBMs, even in regimes where one observation would not suffice. These works can foreshadow similar interesting extensions, enhancing any inference problem on graphs with graph alignment -- e.g, planted clique with additional information coming from several correlated observations.

We close these research directions by mentioning a locally tree-like model in which graph alignment appears very challenging: the regular model. In particular, any method based on exploiting the locally tree-like structure -- if no other information such as labels on nodes is known -- will fail. So, we may ask the question: \emph{what are the information-theoretic and computational limits for regular graph alignment?}

\newpage
\addcontentsline{toc}{chapter}{Bibliography}
\bibliographystyle{alpha}
\bibliography{biblio}

\end{document}